\documentclass[letterpaper,12pt,titlepage,oneside]{book}
\usepackage[margin=1in]{geometry}

\usepackage{amsmath}
\usepackage{amssymb}
\usepackage{ifthen}
\usepackage{braket}
\usepackage{mathrsfs} 

\newcommand{\ketbra}[1]{\ket{#1}\bra{#1}}

\usepackage{graphicx}
\usepackage{tabularx}
\usepackage{enumerate}
\usepackage{bbm}
\usepackage{bm}
\usepackage{latexsym}
\usepackage{mathtools}
\usepackage{ytableau} 
\usepackage{caption}
\usepackage{subcaption}
\usepackage{tikz}
\usepackage{tikz-3dplot}
\usepackage{tikz-cd}
\usepackage{mathabx} 
\usepackage{xcolor}

\usepackage[
    backend=biber,
    style=alphabetic,
    doi=false,
    url=false,
    isbn=false
]{biblatex}
\addbibresource{references.bib} 
\usepackage[
    unicode=true,
    colorlinks=true,
    urlcolor=blue!75!black,
    linkcolor=blue!75!black,
    citecolor=green!65!black,
    pdfusetitle,
    bookmarks=true,
    bookmarksnumbered=true,
    bookmarksopen=true
]{hyperref}

\newbibmacro{string+doiurlisbn}[1]{%
  \iffieldundef{doi}{%
    \iffieldundef{url}{%
      \iffieldundef{isbn}{%
        \iffieldundef{issn}{%
          #1%
        }{%
          \href{http://books.google.com/books?vid=ISSN\thefield{issn}}{#1}%
        }%
      }{%
        \href{http://books.google.com/books?vid=ISBN\thefield{isbn}}{#1}%
      }%
    }{%
      \href{\thefield{url}}{#1}%
    }%
  }{%
    \href{http://dx.doi.org/\thefield{doi}}{#1}%
  }%
}

\DeclareFieldFormat{title}{\usebibmacro{string+doiurlisbn}{\mkbibemph{#1}}}
\DeclareFieldFormat[article,incollection]{title}%
    {\usebibmacro{string+doiurlisbn}{\mkbibquote{#1}}}

\numberwithin{equation}{section}

\usepackage{amsthm}
\usepackage{authblk}
\usepackage[capitalise, nameinlink]{cleveref} 
\theoremstyle{plain}
\newtheorem{thm}{Theorem}[section]
\newtheorem{lem}[thm]{Lemma}
\newtheorem{cor}[thm]{Corollary}

\newtheorem{prop}[thm]{Proposition}
\theoremstyle{definition}
\newtheorem{defn}[thm]{Definition}
\newtheorem{prob}[thm]{Problem}
\newtheorem{exam}[thm]{Example}
\newtheorem{rem}[thm]{Remark}

\makeatletter
\if@cref@capitalise
\Crefname{thm}{Theorem}{Theorems}
\Crefname{lem}{Lemma}{Lemmas}
\Crefname{cor}{Corollary}{Corollaries}
\Crefname{conj}{Conjecture}{Conjectures}
\Crefname{defn}{Definition}{Definitions}
\Crefname{prop}{Proposition}{Propositions}
\Crefname{prob}{Problem}{Problems}
\Crefname{exam}{Example}{Examples}
\Crefname{rem}{Remark}{Remarks}
\else
\crefname{thm}{theorem}{theorems}
\crefname{lem}{lemma}{lemma}
\crefname{cor}{corollary}{corollaries}
\crefname{conj}{conjecture}{conjectures}
\crefname{defn}{definition}{definitions}
\crefname{prop}{proposition}{propositions}
\crefname{prob}{problem}{problems}
\crefname{exam}{example}{examples}
\crefname{rem}{remark}{remarks}
\fi
\makeatother

\hbadness=99999
\hfuzz=1000pt

\newcommand\bolden[1]{{\boldmath\bfseries#1}} 
\newcommand{\defnsty}[1]{\textcolor{red!65!black}{\textit{\bolden{#1}}}}

\newcommand{\abs}[1]{\left|#1\right|}
\newcommand{\card}[1]{\left|#1\right|}
\renewcommand{\ln}{\log} 
\newcommand{\norm}[1]{\left\lVert#1\right\rVert}

\newcommand{\diff}{\mathop{}\!\mathrm{d}}
\newcommand{\End}{\mathrm{End}}

\newcommand{\Hom}{\mathrm{Hom}}

\newcommand{\diag}{\mathrm{diag}}

\newcommand{\incstr}[1]{\vcenter{\hbox{\includegraphics[scale=1]{figures_#1}}}} 
\newcommand{\strdia}[1]{\vcenter{\hbox{\includegraphics[scale=0.7]{figures_#1}}}}

\newcommand{\state}{\mathcal S}

\newcommand{\dens}{\mathcal D}

\newcommand{\Exp}{\mathrm{Exp}}
\newcommand{\expect}{\mathbb{E}}
\newcommand{\boundcont}[1]{\mathcal C_{b}(#1)}
\newcommand{\bound}{\mathcal B}

\newcommand{\supp}{\mathrm{supp}}
\newcommand{\proj}{\mathbb{P}} 

\newcommand{\unisym}{\underline{\Sigma}}
\newcommand{\projsym}{\Sigma}
\newcommand{\ident}{\mathbbm{1}}
\newcommand{\tr}{\mathrm{tr}} 
\newcommand{\Tr}{\mathrm{Tr}} 

\newcommand{\grep}{\Phi}
\newcommand{\arep}{\phi}
\newcommand{\fsub}[1]{\Gamma_{#1}} 
\newcommand{\hwsub}[2]{\Pi^{#1}_{#2}} 
\newcommand{\thwsub}[3]{\Pi^{#2,#1}_{#3}} 
\newcommand{\ratsub}[2]{Q^{#1}_{#2}} 
\newcommand{\isosub}[2]{\tilde{\Pi}^{#1}_{#2}} 
\newcommand{\Ad}{\mathrm{Ad}} 
\newcommand{\ad}{\mathrm{ad}} 

\newcommand{\spec}{\mathrm{spec}}
\newcommand{\symsub}{\vee}
\newcommand{\antisymsub}{\wedge}
\newcommand{\fd}{\delta} 
\newcommand{\sym}{\Sigma}
\newcommand{\kl}[2]{D\!\left(#1\!\parallel\! #2\right)} 
\newcommand{\qrl}[2]{S\!\left(#1\!\parallel\! #2\right)} 
\newcommand{\keyl}[2]{K\!\left(#1\!\parallel\! #2\right)} 
\newcommand{\infkeyl}[1]{\Omega(#1)} 
\newcommand{\laplace}{\mathscr{L}} 
\newcommand{\lilspecht}[1]{\omega_{#1}} 
\newcommand{\lilschur}[1]{\pi_{#1}} 
\newcommand{\yf}{\mathbb{Y}} 
\newcommand{\probs}{\mathscr{P}} 
\newcommand{\spectra}{\mathbb{S}} 
\newcommand{\specht}[1]{\mathcal{V}_{#1}} 
\newcommand{\schur}[1]{\mathcal{M}_{#1}} 
\newcommand{\borel}[1]{\Sigma(#1)}

\newcommand{\weak}{\Longrightarrow}

\newcommand{\seq}[1]{(#1)_{n\in \mathbb N}}
\newcommand{\interior}[1]{\mathrm{int}{#1}}
\newcommand{\closure}[1]{\mathrm{cl}{#1}}
\newcommand{\boundary}[1]{\partial{#1}}

\newcommand{\s}{\mathcal}
\newcommand{\lpm}{\mathrm{lpm}} 

\DeclarePairedDelimiter\floor{\lfloor}{\rfloor}

\newcommand{\GL}{\mathrm{GL}} 
\newcommand{\SL}{\mathrm{SL}} 
\newcommand{\SU}{\mathrm{SU}} 
\newcommand{\U}{\mathrm{U}}   
\newcommand{\Mat}{\mathrm{Mat}} 

\newcommand{\capacity}{\mathrm{cap}}
\newcommand{\projcapacity}{\mathrm{cap}}
\newcommand{\momap}{\Omega} 
\newcommand{\wozero}[1]{{#1}_{\times}}

\begin{document}

\pagestyle{empty}
\pagenumbering{roman}

\begin{titlepage}
    \begin{center}
    \vspace*{1.0cm}

    \Huge
    {\bf{An estimation theoretic approach to quantum realizability problems}}

    \vspace*{1.0cm}

    \normalsize
    by \\

    \vspace*{1.0cm}

    \Large
    Thomas C. Fraser \\

    \vspace*{3.0cm}

    \normalsize
    A thesis \\
    presented to the University of Waterloo \\ 
    in fulfillment of the \\
    thesis requirement for the degree of \\
    Doctor of Philosophy \\
    in \\
    Physics \\

    \vspace*{2.0cm}

    Waterloo, Ontario, Canada, 2023 \\

    \vspace*{1.0cm}

    \copyright\ Thomas C. Fraser 2023 \\
    \end{center}
\end{titlepage}

\pagestyle{plain}
\setcounter{page}{2}

\cleardoublepage 

\begin{center}\textbf{Examining Committee Membership}\end{center}
\noindent
The following served on the Examining Committee for this thesis. The decision of the Examining Committee is by majority vote.
\bigskip
  
\noindent
\begin{tabbing}
xxxxxxxxxxxxxxxxx \=  \kill 
External \>  Renato Renner \\ 
Examiner: \> Professor, Dept. of Physics, \\
\> ETH Z\"{u}rich \\
\end{tabbing} 
\bigskip
  
\noindent
\begin{tabbing}
xxxxxxxxxxxxxxxxx \=  \kill 
Supervisor: \> Robert Spekkens \\
\> Research Faculty, \\
\> Perimeter Institute for Theoretical Physics \\
\end{tabbing}
\bigskip

\noindent
\begin{tabbing}
xxxxxxxxxxxxxxxxx \=  \kill 
Co-supervisor: \> Kevin Resch \\
\> Professor, Dept. of Physics \& Astronomy, \\
\> University of Waterloo \\
\end{tabbing}
\bigskip
  
\noindent
\begin{tabbing}
xxxxxxxxxxxxxxxxx \=  \kill 
Internal Member: \> Thomas Jennewein \\
\> Associate Professor, Dept. of Physics \& Astronomy, \\
\> University of Waterloo \\
\end{tabbing}
\bigskip
  
\noindent
\begin{tabbing}
xxxxxxxxxxxxxxxxx \=  \kill 
Internal-External \> William Slofstra \\
Member: \> Associate Professor, Dept. of Pure Mathematics, \\
\> University of Waterloo \\
\end{tabbing}
\bigskip

\noindent
\begin{tabbing}
xxxxxxxxxxxxxxxxx \=  \kill 
Other Member: \> Lucien Hardy \\
\> Research Faculty, \\
\> Perimeter Institute for Theoretical Physics \\
\end{tabbing}
\bigskip

\cleardoublepage

\begin{center}\textbf{Author's Declaration}\end{center}
\noindent
I hereby declare that I am the sole author of this thesis. 
This is a true copy of the thesis, including any required final revisions, as accepted by my examiners.\\
\noindent
I understand that my thesis may be made electronically available to the public.

%
%
%

\cleardoublepage

\begin{center}\textbf{Abstract}\end{center}

This thesis seeks to develop a general method for solving so-called \textit{quantum realizability problems}, which are questions of the following form: under which conditions does there exist a quantum state exhibiting a given collection of properties?
The approach adopted by this thesis is to utilize mathematical techniques previously developed for the related problem of \textit{property estimation} which is concerned with learning or estimating the properties of an \textit{unknown} quantum state.
Our primary result is to recognize a correspondence between (i) property values which are \textit{realized} by some quantum state, and (ii) property values which are \textit{occasionally} produced as estimates of a generic quantum state.

In \cref{chap:invariant_theory}, we review the concepts of stability and norm minimization from geometric invariant theory and non-commutative optimization theory for the purposes of characterizing the flow of a quantum state under the action of a reductive group.
In particular, we discover that most properties of quantum states are related to the gradient of this flow, also known as the \textit{moment map}. 
Afterwards, \cref{chap:estimation_theory} demonstrates how to \textit{estimate} the value of the moment map of a quantum state by performing a covariant quantum measurement on a large number of identical copies of the quantum state.
These measurement schemes for estimating the moment map of a quantum state arise naturally from the decomposition of a large tensor-power representation into its irreducible sub-representations.
Then, in \cref{chap:realizability}, we prove an exact correspondence between the realizability of a moment map value on one hand and the asymptotic likelihood it is produced as an estimate on the other hand.
In particular, by composing these estimation schemes, we derive necessary and sufficient conditions for the existence of a quantum state jointly realizing any finite collection of moment maps.

Finally, in \cref{chap:qmp} we apply these techniques to the \textit{quantum marginals problem} which aims to characterize precisely the relationships between the marginal density operators describing the various subsystems of a composite quantum system.
We make progress toward an analytic solution to the quantum marginals problem by deriving a complete hierarchy of necessary inequality constraints.

\cleardoublepage


\begin{center}\textbf{Acknowledgements}\end{center}

First I must thank my supervisor Robert Spekkens for his sustained support and patience.
Without reservation, Rob generously shared with me his time, knowledge and insight, and for that I am truly indebted.
I am thankful for Rob's courage in gifting me with the freedom to develop my own intuitions, pursue my own research ambitions and thus grow as an independent researcher.
It has been a privilege to work at the Perimeter Institute in Rob's research group.

Over the course of my time as a student of mathematics and physics, I had the tremendous pleasure to meet and share ideas with a bounty of bright individuals.
Foremost amoung them is my colleague and dear friend Jack Davis.
The discussions and adventures I have shared with Jack have been invigorating and memorable.
His influence on my approach to physics and life is immeasurable.
I also must express my graditude to Tom\'{a}\v{s} Gonda for his spiritual and intellectual guidance, his appetite for abstraction and his tea parties.
I am thankful for Elie Wolfe's unwavering encouragement and belief in me.
I am grateful to John Selby for showing me how to think and calculate diagrammatically, and more generally, for teaching me the active role played by notation.

In addition to those already mentioned, I am thankful for my interactions with, in no particular order, Daniel Grimmer, David Schmid, Tobias Fritz, Thomas Galley, Jake Ferguson, Finnian Gray, Denis Rosset, William Slofstra, Benjamin Lovitz, Ding Jia, Jamie Sikora, Nitica Sakharwalde, Robin Lorenz, Pedro Lauand, Marina Ansanelli, Matthew Fox, Isaac Smith, Flaminia Giacomini, Albert Werner, Freek Witteveen, Mark Wilde, Julia Liebert, Lexin Ding, Ravi Kunjwal and Bel\'{e}n Sainz.

I am grateful to Matthias Christandl, Christian Schilling, and Gilad Gour for the privilege of allowing me to visit their respective research groups and receive a wealth of new ideas and fresh perspectives.

At the Perimeter Institute, where I spent some of my time as an undergraduate student and all of my time as a graduate student, I was supported by countlesss individuals including Debbie Guenther, Jamie Foley, and everyone in the Black hole bistro.
Finally, I must thank both the members of my PhD advisory committee for their feedback and guidance and the members of my thesis examining committee for their valuable time and expertise.

\cleardoublepage

\begin{center}\textbf{Dedication}\end{center}

Without the strength of my mother, the genius of my father, or the love and sense of humor of my sister, I would not be here today.
This thesis is dedicated to them.

\cleardoublepage

\renewcommand\contentsname{Table of Contents}
\tableofcontents
\cleardoublepage
\phantomsection    

%

\pagenumbering{arabic}

\chapter{Introduction} 
\label{chap:intro}
When provided with a description of some physical system, often called a model or a state, together with a specified measurement or experiment to be performed upon that system, the problem of predicting the result of that experiment is known as the \textit{forward problem}. 
The \textit{inverse problem}, on the other hand, is to calculate or reconstruct, from the results of the experiment, a description of the physical system that was measured.
In either case, a fundamental challenge to overcome is the universal fact that descriptions of physical systems must, for both practical and fundamental reasons, be considered incomplete.

Within the context of quantum theory, a quantum state is a mathematical object which encodes information about a system that is deemed sufficient to make predictions about the statistical behaviour of any hypothetical experiment.
Nevertheless, there are numerous applications of quantum theory wherein only a fraction of this information is available or actually required.
In these situations, it oftens becomes computationally and conceptually useful to derive or construct an effective theory which is merely concerned with the features or properties of the quantum state that are relavent for the particular context, together with a characterization of the relationships or constraints satisfied by those properties.

The purpose of this thesis is to describe a particular strategy for understanding the relationships between the various properties of a quantum state that is based upon insights from the representation theory of groups for the purposes of tomography and property estimatation.
It will be shown that this technique asymptotically decides whether or not a given collection of property values can be \textit{realized} by any quantum state, and moreover, can be used to approximately determine what \textit{proportion} of quantum states exhibits those property values.
Although this approach is \textit{asymptotic} in nature, meaning it only provides an approximate understanding which becomes exact in the appropriate limiting cases, it is rather \textit{universal} in that it applies to a large class of properties that might be of interest.

To begin, \cref{sec:intro_estimation} considers the subject of quantum tomography which seeks to learn or estimate the properties of an unknown quantum state by performing a collective measurement on many identical copies of that state.
In particular, we emphasize the role played by the representation theory of groups in the construction of quantum measurements whose outcomes correspond to \textit{estimates} of the invariant and covariant properties of the quantum states they are performed on.

In \cref{sec:intro_realizability}, we turn our attention to the relational point of view which aims to understand how the various properties of a quantum system relate to one another.
Our primary focus is on the question of \textit{realizability} which asks: given a finite collection of properties, which values for those properties can be jointly realized by some quantum state?
In \cref{sec:intro_realizability}, we will briefly review a handful of problems in quantum information theory which can be formulated as questions of this form, which we refer to as \textit{quantum realizability problems}. 

The objective of this thesis, in the end, is to develop a method for solving quantum realizability problems by using insights from the theory of quantum tomography.
In \cref{sec:organization}, the overall structure of the thesis is outlined, along with a brief summary of the contents of each chapter.
Finally, \cref{sec:guiding_example} concludes with a demonstration of the central themes of the thesis through the lense of a simple toy example.

\section{Quantum estimation theory}
\label{sec:intro_estimation}

A fundamental task in quantum information theory, referred to as \textit{quantum tomography}, is determination of the state or characteristics of a quantum system by means of repeated experimentation~\cite{paris2004quantum,holevo2011probabilistic}.
Following the foundational papers of \citeauthor{fano1957description}~\cite{fano1957description}, \citeauthor{helstrom1969quantum}~\cite{helstrom1969quantum} and \citeauthor{vogel1989determination}~\cite{vogel1989determination}, the general paradigm is to consider the independent preparation of $n$ identical copies of a quantum state along with a strategically designed measurement procedure whose outcomes can then be used to produce an estimate for either the values of some of its properties~\cite{holevo1978estimation}, or more generally, the identity of the entire quantum state~\cite{dariano2003quantum}.

Generally speaking, there exists a myriad of factors one might wish to optimize for in the context of quantum tomography, including various measures of estimation error~\cite{acharya2019comparative}, the number of copies needed to achieve a certain threshold of accuracy~\cite{massar1995optimal, haah2016sample}, the optimal estimate for fixed finite $n$~\cite{massar1995optimal}, the adaptability of a measurement scheme to previous data~\cite{straupe2016adaptive}, the finiteness of measurement outcomes~\cite{derka1998universal}, and/or the asymptotics as $n$ tends to infinity~\cite{hayashi2005asymptotic,gill2005state,keyl2006quantum}.

Once a measurement protocol has been selected and performed, there are a variety of strategies for converting the obtained measurement data into an estimate for identity of the state that was measured, each of which exhibits its own advantages and disadvantages.
If the performed measurements are sufficiently varied as to form an operator basis for the Hilbert space of the system, then the measurements are said to be \textit{tomographically complete}, and furthermore, it becomes possible to derive an estimate for the identity of the quantum state from the empirical probabilities obtained by a process of \textit{linear inversion} of the Born rule.
One of the major drawbacks of the linear inversion method is that the resulting matrix need not be a valid quantum state; in particular, it may have negative eigenvalues and thus may assign, via the Born rule, negative probabilities to future measurement events.
A particularly popular method which seeks to avoid the problem of negative eigenvalues is known as \textit{maximum likelihood estimation}.
The principle underlying maximum likelihood estimation is simply that the best estimate for the identity of an unknown quantum state should be one which maximizes the probability of the observed measurement data~\cite{hradil1997quantum,hradil20043}.
Although the method of maximum likelihood estimation always produces a positive semidefinite matrix as an estimate, it typically yields matrices which are rank-deficient in the sense that they assign \textit{zero} probability to certain unobserved events; as \citeauthor{blume2010optimal} accurately argues, such a conclusion is theoretically unjustifiable after a finite number of trials~\cite{blume2010optimal}.

To avoid both the problems of negative and zero eigenvalues, one turns their attention to the \textit{Bayesian} methods of quantum state estimation.
The Bayesian approach to quantum state estimation seeks to determine an \textit{a posteriori} belief about the identity of the quantum state based upon (i) statistical data obtained from a macroscopic measurement apparatus, and (ii) an \textit{a priori} belief about the identity of the quantum state being measured~\cite{helstrom1969quantum,holevo2011probabilistic,schack2001quantum,jones1991principles,buvzek1998reconstruction}.
In addition to the avoidance of zero and negative eigenvalues, the Bayesian approach to quantum tomography enables one to make statements about confidence regions~\cite{christandl2012reliable}.
There are at least two issues that emerge when incorporating prior knowledge about the identity of the quantum state in the context of quantum tomography.
The first issue concerns the selection of a prior measure, while the second issue concerns the interpretation of the notion of an ``unknown'' quantum state.

Over the years, a number of principles have been developed for the purposes of determining a prior measure, from Laplace's principle of \textit{indifference} which seeks to identify priors which are considered, in some sense, uniform~\cite{jeffreys1998theory} and later the principle of \textit{invariant priors} which proposes the invariance of a prior under the action of a group of symmetries as a formalization of the notion of uniformity~\cite{jones1991principles,hartigan1964invariant,jeffreys1998theory}.
In the context of quantum theory, there are at least two cases to consider when selecting a prior measure over the space of states.
If the state space is taken to be a homogeneous manifold upon which a compact group acts transitively, such as a finite-dimensional complex projective space equipped with a unitary group action, then the associated Haar measure, up to normalization, serves as the unambiguous invariant prior measure~\cite{jones1991principles,haar1933measure}.
On the other hand, when the state space is taken to include density operators, there exists no obvious symmetry group from which an invariant prior can be derived, and thus the treatment of the state space as a compact metric space permits an alternative notion of uniformity of the prior~\cite{zyczkowski1998volume,bures1969extension}.
Alternatively, by appealing to the so-called purification postulate, one can propose priors over the space of density operators which are induced from priors over their purifications which are invariant with respect to the unambiguous Haar measure~\cite{buvzek1998reconstruction,tarrach1999universality,zyczkowski2001induced}.

Once a suitable prior has been chosen, there still remains an issue of interpretation; from the epistemological point of view that a quantum state is a description of an agent's knowledge or belief about the outcomes of future measurement, the notion of taking many copies of an ``unknown'' quantum state becomes oxymoronic~\cite{caves2002unknown}.
Fortunately, this conceptual issue is satisfactorily resolved by quantum generalizations of de Finetti's theorem from probability theory~\cite{fritz2021finetti}.
While there are numerous de Finetti-type theorems in quantum theory~\cite{hudson1976locally,caves2002unknown,konig2005finetti,christandl2007one,mitchison2007dual,chiribella2010quantum,lancien2017flexible} their unifying characteristic is to build a formal bridge between (i) the operational notion of exchangability or symmetry of an ensemble of states or measurements, and (ii) the algebraic notions of independence and convexity.
More generally, quantum de Finetti theorems serve as the basis for a quantum theory of Bayesian inference~\cite{schack2001quantum}.


In practice, however, the number of measurements required to perform full quantum state tomography becomes unfeasible for large quantum systems~\cite{aaronson2007learnability,aaronson2018shadow,cotler2020quantum}.
In addition, in many contexts, one is merely interested in determining those properties of the quantum system which are functions of local, few-body observables~\cite{cotler2020quantum,bonet2020nearly,zhao2021fermionic}, or in certifying whether or not the unknown quantum state satisfies a particular condition~\cite{montanaro2013survey}.
Consequently, a full reconstruction of the quantum state is often both unfeasible and unnecessary, and thus one seeks alternative measurement schemes which are optimized to produce only the information that is required~\cite{brandao2017quantum,aaronson2018shadow}.

For example, suppose one is not interested in estimating the eigenvalues of an unknown quantum state, but merely its \textit{spectrum} of eigenvalues.
In 2001, \citeauthor{keyl2001estimating} demonstrated how the spectrum of a quantum state could be estimated from a projective measurement of a large number of copies of an unknown quantum state without knowing its corresponding eigenvectors~\cite{keyl2001estimating}.
Moreover, the authors demonstrated, for any given unknown state, the corresponding distribution of estimates satisfies the large deviations principle which quantifies the asymptotic rate of decay of the probabilities of incorrect estimates.
In fact, \citeauthor{keyl2001estimating}'s paper on the topic of spectral estimation was perhaps the earliest and largest influence on the philosophical and technical ideas underlying this thesis.
Beyond the obvious proposal of a projective measurement scheme for estimating the spectrum of a quantum state, \citeauthor{keyl2001estimating}'s result can also be understood as establishing a strong connection between the spectrum of a single quantum state and permutational symmetry on its many copies.\footnotemark{}%
\footnotetext{If the connection between spectra and permutational symmetry seems surprising, notice that the \textit{purity}, $\Tr(\rho^{2})$, of a density matrix, $\rho$, (interpreted as a measure of concentration of a spectrum), is equivalent to a two-copy expectation value, $\Tr(X_{\mathrm{swap}} \rho^{\otimes 2})$, where $X_{\mathrm{swap}}$ is the operator which acts to permute the two copies of the underlying Hilbert space.}%
In recent years, this strong connection between spectra and permutational symmetry has been firmly established as a powerful theoretical tool.
In particular, inequalities constraining the distribution of von Neumann entropies of a multipartite quantum state can be derived from corresponding constraints on the distribution of permutational symmetry~\cite{christandl2006spectra,christandl2018recoupling}.\footnotemark{}%
\footnotetext{This correspondence between quantum entropic inequalities on one hand and representation theoretic inequalities on the other can be seen as a quantum analogue of the seminal work of \citeauthor{chan2002relation} on the correspondence between Shannon inequalities and finite group inequalities~\cite{chan2002relation} (see also \cite{li2007group}).}%

A few years later, \citeauthor{keyl2006quantum} generalized his large deviations approach to spectral estimation to the topic of full quantum state estimation~\cite{keyl2006quantum,odonnell2016efficient}.
Since then, these insights have been generalized further by \citeauthor{botero2021large}~\cite{botero2021large} and \citeauthor{franks2020minimal}~\cite{franks2020minimal} to consider the problem of estimating the \textit{moment map} of an unknown quantum state.
Loosely speaking, given a non-compact continuous Lie group, $G$, and a representation, $(\grep, \s H)$, of that group acting on a Hilbert space $\s H$, the \textit{moment map} evaluated on a quantum state is a measure of how the \textit{norm} of the state changes under the infinitesimal action of the group $G$. 
In particular, the moment map of a quantum state is simply its assignment of expectation values to the Hermitian observables in the Lie algebra of $G$.
From this perspective, the problem of estimating the moment map of an unknown quantum state, with respect to a given representation, is a \textit{generalization} of the problem of full quantum state estimation.
In Refs.~\cite{botero2021large,franks2020minimal}, it was shown how to the moment map of an unknown quantum state, with respect to a fixed representation, could be estimated by performing a covariant measurement on $n$ identical copies of that quantum state (in essentially the same spirit as Refs.~\cite{chiribella2010quantum,holevo1978estimation,marvian2012symmetry}).
In particular, these measurement schemes emerge naturally from considering the $n$th tensor power representation along with its decomposition into its irreducible subrepresentations.

In general, the study of group actions on vector spaces, or more generally algebraic varieties, is the subject of geometric invariant theory~\cite{woodward2010moment,wallach2017geometric,mumford1984stratification,kempf1979length,mumford1994geometric}.
The connection between maximum likelihood estimation and concepts of stability from geometric invariant theory, has been previously developed by \citeauthor{amendola2021invariant} for both Gaussian graphical models~\cite{amendola2021invariant} and discrete probabilistic models~\cite{amendola2021toric}.
Also note that techniques from geometric invariant theory have also been applied to related topics in quantum information theory including multipartite entanglement classification~\cite{walter2013entanglement,wernli2018computing,bryan2018existence}, canonical forms of tensor networks states~\cite{acuaviva2022minimal}, and quantum generalizations of the famous Brascamp-Lieb inequalities~\cite{berta2023quantum, garg2017algorithmic,bennett2008brascamp}.

\section{Quantum realizability problems}
\label{sec:intro_realizability}

A \textit{quantum realizability\footnotemark{} problem} refers to any decision problem which aims to determine whether or not there exists a quantum state which can simultaneously satisfy a given collection of constraints.
\footnotetext{%
Note the particular choice to use the adjectives ``realizable'' and ``unrealizable'' throughout this is merely our convention.
Indeed, other authors have chosen alternative qualifying words, such as admissible/inadmissible, feasible/unfeasible, satisfiable/unsatisfiable, consistent/inconsistent, compatible/incompatible, or representable/unrepresentable.%
}%
Throughout this thesis, we have elected to conceptualize these constraints as describing potential properties a quantum state may or may not possess, and as such, a quantum realizability problem aims to characterize the relationships holding between the properties of quantum states.
Furthermore, different examples of quantum realizability properties can be classified by considering the different collections of properties they pertain to.
For the purposes of concreteness, next we endeavour to describe a small handful of motivating examples of quantum realizability problems.

\textbf{Uncertainty relations:}
First and foremost, there exists a general class of quantum realizability problems which can be understood as a quantum generalization\footnotemark{} of the multivariate moment problem which aims to characterize the relationships between the various statistical moments of multivariate probability distribution~\cite{kleiber2013multivariate,stoyanov2013counterexamples}.
\footnotetext{Here we are not referring to the seemingly related notion of a ``quantum moment problem'' as defined by~\cite{doherty2004complete}, but instead to realizability problems involving properties, and thus constraints, which are potentially polynomial functions of the underlying quantum state.}%
For instance, as a special case, Heisenberg's famous uncertainty relation holding between the variances associated to measurements of position and momentum observables~\cite{wheeler2014quantum} can be understood as a necessary condition for the realizability for given values of variances for position and momentum.
Similarly, \citeauthor{robertson1929uncertainty}'s uncertainty relation~\cite{robertson1929uncertainty}, and later Schr\"{o}dinger's improvement~\cite{angelow1999heisenberg} can be understood as necessary conditions for the realizability of a given collection of uncertainties and expectation values for a pair of observables and their commutators.
Furthermore, if all of the properties under consideration are the variances (or equivalently uncertainties) associated to a given collection of observables, then the associated region of realizable uncertainties is known as the \textit{uncertainty region}~\cite{abbott2016tight,busch2019quantum,zhang2022probability}.

\textbf{Entanglement:}
In the study of quantum entanglement, there are a few decision problems which may be interpreted as examples of quantum realizability problems.
First, consider the problem of deciding whether or not a given bipartite quantum state is separable or entangled, which was shown to be an NP-hard problem by \citeauthor{gurvits2004classical}~\cite{gurvits2004classical}.
Although the separability problem is not an example of a quantum realizability problem, it is related to an instance of a quantum realizability problem known as the symmetric extension problem~\cite{chen2014symmetric}.
Given a positive integer $k$ and bipartite quantum state $\rho_{AB}$, a $(k+1)$-partite quantum state, $\sigma_{AB_1\cdots B_k}$, is said to be a \textit{$k$-symmetric extension} of $\rho_{AB}$ if it is (i) invariant under any permutation of $k$ subsystems labelled by $B$, and (ii) satisfies $\sigma_{AB_1} = \rho_{AB}$.
It can be shown that a bipartite quantum state is separable if and only if it admits of a symmetric extension for all positive integers $k$~\cite{doherty2004complete}.
From this perspective, any technique for verifying the non-existence of a symmetric extension can be used to verify the presence of bipartite entanglement.

A second example of a quantum realizability problem relating to entanglement is concerned with the existence of special quantum states which have the property of being \textit{absolutely maximally entangled}~\cite{huber2017quantum,scott2004multipartite,helwig2013absolutely}.
A pure quantum state of $n$-qudits has the property of being $m$-uniform if all of its $m$-partite reduced states are maximally mixed.
Furthermore, an $m$-uniform state is said to have the property of being absolutely maximally entangled whenever $m = \floor{n/2}$.
For example, the two-qubit Bell-states are absolutely maximally entangled for $n=2$ and $d=2$.
The problem of deciding whether an absolutely maximally entangled state exists, for a given dimension $d$ and number of qudits $n$, is thus an example of a quantum realizability problem.
The existence of absolutely maximally entangled states is known to be directly related to the existence of quantum error correcting codes~\cite{huber2017quantum,scott2004multipartite,yu2021complete} as well as quantum secret-sharing schemes~\cite{helwig2013absolutely}.
Unfortunately, despite recent progress concerning small dimensions and/or small numbers of qudits~\cite{huber2017absolutely,gour2010all}, the existence of absolutely maximally entangled states, in full generality, remains an open problem.

\textbf{Distributed quantum entropies:}
Another example of a quantum realizability problem is concerned with the allocation or distribution of von Neumann entropy throughout composite quantum systems~\cite{pippenger2003inequalities,linden2005new,majenz2018constraints}.
Recall that the von Neumann entropy of a density operator, originally introduced in 1927 by von Neumann~\cite{wehrl1978general, bengtsson2017geometry}, can be interpreted as a kind of quantum analogue to Gibb's entropy from statistical mechanics or Shannon's entropy from communication theory.
Furthermore, Shannon's entropy, using Shannon's noiseless source coding theorem from communications theory, serves as a measure of the fundamental limit to data compressibility and thus as a measure of intrinsic information content~\cite{shannon1948mathematical}.
That there happens to be universal constraints on arrangement of entropies in composite systems, such as positivity, subadditivity, strong subadditivity and weak montonicity~\cite{araki1970entropy}, is well-known~\cite{pippenger2003inequalities,majenz2018constraints,petz2003monotonicity}.
The particular problem of deciding which allocations of von Neumann entropy are realizable by a quantum state is therefore an example of a quantum realizability problem wherein the realizable region, or rather its topological closure, is known to be a convex cone called the \textit{entropy cone}~\cite{pippenger2003inequalities}.
Nevertheless, despite being an active research question~\cite{pippenger2003inequalities,kim2020entropy,hayden2004structure,ruskai2007connecting} with many recent breakthroughs~\cite{linden2005new,cadney2014inequalities,christandl2023quantum}, the joint realizability of a given collection of von Neumann entropies and their inequalities for $n$-partite quantum systems when $n \geq 4$ remains an unresolved problem and a major open problem in quantum information theory.

\textbf{Quantum marginal problems:}
The quantum marginal problem is the quantum analogue of a problem from probability theory, called the \textit{classical} marginal problem. The classical marginal problem aims to characterize the relationships between the various marginal distributions of a multivariate probability measure~\cite{fritz2012entropic,vorob1962consistent,malvestuto1988existence}, and is intimately related to the derivation of entropic inequalities, obstructions in sheaf theory, and causal modelling~\cite{fritz2012entropic,liang2011specker,abramsky2011sheaf,fraser2018causal}.
One of the earliest incarnations of the quantum marginal problem dates back to the late 1950s and early 1960s when, for the purposes of simplifying calculations of atomic and molecular structure, quantum chemists became interested in characterizing the possible reduced density matrices of a system of $N$ interacting fermions~\cite{coulson1960present,coleman1963structure}. 
This version of the problem, known as the $N$-representability problem, has a long history~\cite{coleman2000reduced,coleman2001reduced,lude2013functional,borland1972conditions,ruskai2007connecting,klyachko2009pauli} that continues to evolve~\cite{mazziotti2012structure,mazziotti2012significant,klyachko2006quantum,castillo2021effective}.
The quantum marginal problem aims to determine which collections of marginal quantum states, describing the configurations of differing quantum subsystems, can be understood as arising from some joint quantum state, describing the whole quantum system~\cite{tyc2015quantum,klassen2017existence,huber2017quantum}. 
Variations of the quantum marginal problem arise when additional restrictions are placed on the form of the joint quantum state, e.g., by requiring the joint state to be fermionic~\cite{coleman2000reduced,schilling2013pinning}, bosonic~\cite{wei2010interacting}, Gaussian~\cite{eisert2008gaussian,vlach2015quantum}, separable~\cite{navascues2021entanglement}, or having symmetric eigenvectors~\cite{aloy2020quantum}. 
In general, the quantum marginal problem has been shown to be a QMA-complete problem~\cite{liu2006consistency,liu2007quantum,wei2010interacting,bookatz2012qma}.
Using insights from representation theory and geometric invariant theory~\cite{berenstein2000coadjoint,heckman1982projections}, in the mid 2000s, \citeauthor{klyachko2006general} completely solved the quantum marginal problem for disjoint subsystems~\cite{klyachko2004quantum,klyachko2006general}, which generalized earlier solutions for the case of a small number of low-dimensional subsystems~\cite{higuchi2003one,higuchi2003qutrit,bravyi2003requirements}. 
In particular, it was shown that the space of possible single-body quantum marginals, which depends only on the single-body spectra, forms a convex polytope, and thus is characterized by a finite set of linear inequality constraints.
By comparison, when the quantum marginals pertain to overlapping subsystems, existing results are comparatively more sporadic and typically weaker, being only applicable to low-dimensional systems, small numbers of parties, or only yielding necessary but insufficient constraints~\cite{chen2014symmetric,carlen2013extension,butterley2006compatibility,hall2007compatibility,chen2016detecting,christandl2018recoupling,dartois2020joint}. 
Nevertheless, numerical methods for fully solving the general quantum marginal problem exist in the form of hierarchies of semidefinite programs~\cite{yu2021complete}, from which unrealizability witnesses can be extracted~\cite{hall2007compatibility}.

\textbf{Methods:}
Depending on the algebraic nature of the constraints under consideration, there are a number of different techniques which may be used to solve a given quantum realizability problem.
For instance, semidefinite programming techniques can be readily be used to solve quantum realizability problems that pertain to quantum states described by a finite-dimensional density matrix subject to equality or inequality constraints which are \textit{linear} functions of the candidate density operator~\cite{vandenberghe1996semidefinite}.
Moreover, when the properties under consideration are \textit{polynomial} functions of the underlying quantum state, it remains possible to construct a hierarchy of semidefinite programs problems which can approximately solve realizability problems which converge in some limit~\cite{bhardwaj2021noncommutative, ligthart2021convergent, ligthart2022inflation, navascues2008convergent}.
Furthermore, when the properties under consideration are polynomial functions, techniques from computational algebraic geometry~\cite{cox2013ideals} for performing non-linear quantifier elimination, such as cylindrical algebraic decomposition~\cite{jirstrand1995cylindrical}, can, at least in principle, be used to analytically solve a given quantum realizability problem.

Alternatively, one can seek to characterize the relationships between properties of quantum states by probabilistic means; given a prior probability distribution over the space of quantum states, one can seek to derive the \textit{induced} probability distribution over the space of property values.
This approach has been adopted for the purposes of characterizing the induced distribution of entanglement entropies of a bipartite pure state~\cite{page1993average}, of expectation values of a single observable~\cite{venuti2013probability}, of expectation values of multiple observables~\cite{zhang2022probability,gutkin2013joint}, of the reduced states of a bipartite pure state~\cite{zyczkowski2001induced}, of eigenvalues of the one-body reduced states of a multipartite state~\cite{christandl2014eigenvalue} and of the marginals of a multipartite state~\cite{dartois2020joint}.

The purpose of this thesis is to explore an alternative method for tackling quantum realizability problems based upon the theory of property estimation outlined in \cref{sec:intro_estimation}.
When the dimension of the Hilbert space is known and fixed, this method produces asymptotic conditions which are necessary and sufficient for the realizability of a given collection of properties of quantum states. 
Our primary application of this method is to the quantum marginal problem, where, in \cref{chap:qmp}, we derive necessary and sufficient conditions for the realizability of any finite collection of candidate marginal quantum states.
Although the evaluation of these conditions proves to be computationally challenging in general, it is our hope that, by building a conceptual bridge between property estimation theory and property realizability, future research will produce stronger and more tractable conditions.


\section{Organization}
\label{sec:organization}

The chapters of this thesis are largely intended to be read in chronological order as each chapter builds upon the insights gained from the previous chapter.
The only two exceptions to this pattern are \cref{chap:preliminaries}, which provides some mathematical background, and \cref{chap:qmp}, which constitutes a standalone paper.
\begin{itemize}
    \item (\Cref{chap:preliminaries}) \textbf{Preliminaries:} 
        As this thesis relies heavily on the representation theory of finite-dimensional groups from the perspective of quantum theory and quantum measurements, we have elected to include a preliminary section to review the topics of measure theory, quantum theory, group theory and representation theory.
        Our presentation of representation theory focuses on the highest weight classifications of complex semisimple Lie algebras, compact connected Lie groups and their complexifications with the textbook by \citeauthor{hall2015lie} as the main reference~\cite{hall2015lie}.
        Readers already familiar with these topics who wish to skip this preliminary chapter are encouraged to review a summary of our notational conventions in \cref{sec:notation}.
    \item (\Cref{chap:invariant_theory}) \textbf{Non-commutative optimization:}
        This chapter is concerned with the geometric and algebraic aspects of the orbit of a vector under the action of group representation.
        Here we review the concepts of stability, capacity and norm minimization as well as the Kempf-Ness theorem which provides a deep connection between extremal surfaces of an orbit and the vanishing of the generalized gradient known as the moment map.
        The key result of this chapter is the strong duality theorem (\cref{thm:strong_duality}) due to \citeauthor{franks2020minimal}~\cite{franks2020minimal}, which, in later chapters, becomes the foundation for characterizing the asymptotic probabilities of quantum measurements applied to large ensembles of identical quantum states.
        The contents of this chapter are based partially on (i) the non-algorithmic aspects of the theory of non-commutative optimization due to \citeauthor{burgisser2019towards}~\cite{burgisser2019towards}, and (ii) the proof and interpretation of the strong duality result as a semiclassical limit due to \citeauthor{franks2020minimal}~\cite{franks2020minimal}.
    \item (\Cref{chap:estimation_theory}) \textbf{Estimation theory:}
        In this chapter we turn our attention to the topic of estimating various properties of quantum states.
        Given a fixed group representation, it is shown that its decomposition into irreducible subrepresentations naturally generates a covariant measurement which can be used to extract information about the covariant properties of the states they are applied to.
        In particular, by suitably deforming the strong duality result from \cref{chap:invariant_theory}, it is shown how the moment map of quantum state can be estimated by performing these representation-induced covariant measurements on a large number of independent copies of the state.
        As a special case, we recover the quantum state estimation result due to \citeauthor{keyl2001estimating}~\cite{keyl2001estimating}.
        The contents of this chapter are based heavily on the works of \citeauthor{franks2020minimal}~\cite{franks2020minimal} and \citeauthor{botero2021large}~\cite{botero2021large}.
    \item (\Cref{chap:realizability}) \textbf{Realizability:}
        The purpose of this chapter is to explore the connection between estimating the properties of a quantum state and determining the relationship between them.
        As such, this chapter relies on the moment map estimation result from \cref{chap:estimation_theory}, especially the deformed strong duality result encountered in \cref{sec:deformed_strong_duality}.
        Our observation and guiding intuition is simple; a collection of candidate property values is \textit{realizable} by some quantum state if and only if a random quantum state \textit{occasionally} behaves as if it has those properties.
        From this principle, we recover the well-known result that the set of moment map values which can be realized by some quantum state forms a convex polytope known as the \textit{moment polytope}.
        In addition, we apply this principle to obtain an asymptotic characterization of the jointly realizable region for a finite collection of moment maps, and demonstrate its potential applicability to a few open questions in quantum theory.
        The contents of this chapter represent partial progress toward generalizing the key ideas from \cref{chap:qmp}, and are unpublished.
    \item (\Cref{chap:qmp}) \textbf{Quantum marginal problems:}
        This chapter is based entirely on the contents of my most recent paper, which makes partial progress on the aforementioned quantum marginal problem~\cite{fraser2022sufficient}.
        As such, none of the previous chapters serve as prerequisites.
        Using simple principles of symmetry and operator positivity, we manage to derive a countable family of inequalities, each of which is necessarily satisfied by any realizable collection of quantum marginals. 
        Additionally, we prove the sufficiency of this family of inequalities: every unrealizable collection of quantum marginals will violate at least one inequality belonging to the family.
\end{itemize}

\section{A toy example}
\label{sec:guiding_example}

\tdplotsetmaincoords{80}{20} 
\tikzset{intersection/.style={draw,circle,minimum size=2,inner sep=0pt,outer sep=0pt,fill=black}}

\newcommand{\boundingboxback}{
\begin{scope}[black, thin, dashed] 
    \draw (-1,+1,-1) -- (+1,+1,-1);
    \draw (-1,+1,-1) -- (-1,-1,-1);
    \draw (-1,+1,-1) -- (-1,+1,+1);
\end{scope}
}
\newcommand{\boundingboxlabels}{
\begin{scope}[black]
    \draw (0,-1,-1) node[below]{$x$};
    \draw (+1,0,-1) node[below right]{$y$};
    \draw (-1,-1,0) node[left]{$z$};
\end{scope}
}
\newcommand{\boundingboxfront}{
\begin{scope}[black,thin]
    \draw (-1,-1,-1) -- (+1,-1,-1) -- (+1,-1,+1) -- (-1,-1,+1) -- cycle; 
    \draw (+1,-1,-1) -- (+1,+1,-1) -- (+1,+1,+1) -- (+1,-1,+1) -- cycle; 
    \draw (+1,-1,+1) -- (+1,+1,+1) -- (-1,+1,+1) -- (-1,-1,+1) -- cycle; 
\end{scope}
}
\newcommand{\preimagesingle}[2]{ 
    \begin{scope}[#1]
        \tdplotdrawarc[tdplot_rotated_coords, dashed]{(0,0,#2)}{{sqrt(1-#2*#2)}}{0}{180}{}{};
        \tdplotdrawarc[tdplot_rotated_coords,       ]{(0,0,#2)}{{sqrt(1-#2*#2)}}{180}{360}{}{};
        \path[tdplot_rotated_coords, thin, fill=#1, fill opacity=0.1] (-1,-1,#2) -- (-1,+1,#2) -- (+1,+1,#2) -- (+1,-1,#2) -- cycle;
    \end{scope}
}
\newcommand{\blochsphere}{
    \shade[tdplot_screen_coords, ball color=black!30!white,opacity=0.5] (0,0) circle (1);
}
\newcommand{\blochdisk}{
    \path[fill=black!60!white,fill opacity=0.5] (0,0) circle (1);
}

\newcommand{\ua}{0}
\newcommand{\da}{1}
\newcommand{\la}{-}
\newcommand{\ra}{+}
\newcommand{\ia}{-i}
\newcommand{\oa}{+i}
\newcommand{\purequbits}{\proj_2}
\newcommand{\pmone}{[-1,+1]}

Here we present a toy example of a quantum realizability problem pertaining to a two-level quantum system, otherwise known as a qubit.
Although this toy example admits of a rather straightforward solution, its purpose is to illustrate the variety of different approaches one might take in a more complicated scenario.
Recall that a pure qubit quantum state, $\psi$, can be faithfully described by,
\begin{equation}
    \ket{\psi} = \cos\left(\frac{\theta}{2}\right) \ket 0 + e^{i \phi} \sin\left(\frac{\theta}{2}\right) \ket 1,
\end{equation}
for some $\theta \in [0, \pi)$ and $\phi \in [0, 2 \pi)$, or equivalently by a triple of coordinates, $(x,y,z) \in \mathbb R^{3}$, lying on the surface of the Bloch sphere subject to the constraint $x^2 + y^2 + z^2 = 1$.
Moreover, the triple of coordinates $(x,y,z)$ which describe the state $\psi$ correspond precisely to the triple of expectation values $(\braket{X}_{\psi}, \braket{Y}_{\psi}, \braket{Z}_{\psi})$ where the observables $X$, $Y$ and $Z$ are the familiar Pauli matrices.

Now suppose, for the sake of exploring a toy problem, that one is interested in characterizing the relationship between just two of these observables, say the expectation values of $X$ and $Z$.
For the sake of notational convenience, consider three functions, $e_{X}$, $e_{Z}$, and $e_{XZ}$, of pure states $\psi$ such that
\begin{equation}
    e_{X}(\psi) = \braket{X}_{\psi}, \qquad e_{Z}(\psi) = \braket{Z}_{\psi}, \qquad e_{XZ}(\psi) = (\braket{X}_{\psi}, \braket{Z}_{\psi}).
\end{equation}
When given a state, $\psi$, calculating the $X$ and $Z$ expectation values, $e_{XZ}(\psi)$, is a straightforward task. 
The inverse problem, however, is less straightforward; given a pair of values $(x,z) \in \mathbb R^{2}$, what is the corresponding pure state $\psi$ such that $(x,z) = e_{XZ}(\psi)$?
What makes this inverse problem challenging is simply that sometimes no state exists, in which case $(x,z) \in \mathbb R^{2}$ are said to be \textit{unrealizable}.
Additionally, even if the pair $(x,z) \in \mathbb R^2$ is \textit{realizable} as the $X$ and $Z$ expectation values of some state, the solution might not be unique.\\

\textbf{A geometric approach:}
One strategy for describing the set of all possible $X$ and $Z$ expectation values, known as the \textit{realizable region} is to appeal to the geometry of the Bloch sphere representation of qubit states. 
Since the coordinates, $(x, y, z) \in \mathbb R^{3}$, on the Bloch sphere correspond precisely to the triple of expectation values $(\braket{X}, \braket{Y}, \braket{Z})_{\psi}$, one can readily conclude that the pair $(x, z) \in \mathbb R^2$ is realizable as the pair of expectation values $e_{XZ}(\psi) = (\braket{X}, \braket{Z})_{\psi}$ of some pure qubit state $\psi$ if and only if $(x,z)$ lies inside the unit disk:
\begin{equation}
    \exists \psi : e_{XZ}(\psi) = (\braket{X}, \braket{Z})_{\psi} = (x,z) \quad \Longleftrightarrow \quad x^2 + z^2 \leq 1.
\end{equation}
Geometrically, the function $e_{XZ}$ defined above can be viewed as an orthogonal projection of the Bloch sphere onto the $(x,z)$-plane in $\mathbb R^2$ (see \cref{fig:realizable_XZ}).
\begin{figure}
    \begin{center}
    \begin{tikzpicture}[tdplot_main_coords, scale=1.5]
        \boundingboxback{};
        \boundingboxlabels{};
        \blochsphere{};
        \draw (0,0,0) node{$x^2 + y^2 + z^2 = 1$};
        \boundingboxfront{};
    \end{tikzpicture}
    \qquad
    \begin{tikzpicture}[scale=1.5]
        \draw[black] (-1,-1) rectangle (+1,+1);
        \blochdisk{};
        \draw (0,0) node{$x^2 + z^2 \leq 1$};
        \draw (-1, 0) node[left]{$z$};
        \draw ( 0,-1) node[below]{$x$};
    \end{tikzpicture}
\end{center}
    \caption{The realizable region for the $\braket{X}_{\psi}$ and $\braket{Z}_{\psi}$ expectation values of a pure qubit state arises from the orthogonal projection of the Bloch sphere onto the $(x,z)$-plane. 
    In other words, there exists a pure state with expectation values $(x, z) = (\braket{X}_{\psi}, \braket{Z}_{\psi}) \in \mathbb R^2$ if and only if $x^2 + z^2 \leq 1$.}
    \label{fig:realizable_XZ}
\end{figure}
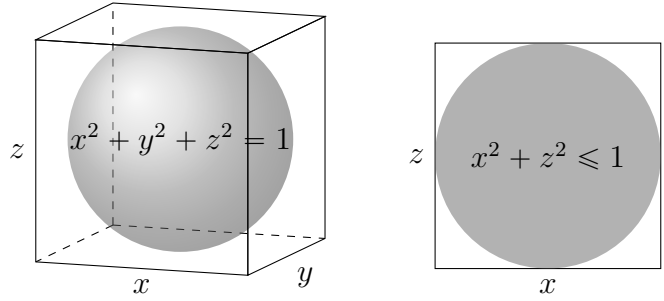
The geometric approach also yields a characterization of the number of distinct solutions.
Consider the set of all states $\psi$ with $X$ expectation value $e_{X}(\psi) = x$, denoted by $e_{X}^{-1}(x)$.
Analogously, consider the sets $e_{Z}^{-1}(x)$ and $e_{XZ}^{-1}(x,z)$.
In particular, there is a useful relationship between these subsets of states:
\begin{equation}
    e^{-1}_{XZ}(x,z) = e^{-1}_{X}(x) \cap e^{-1}_{Z}(z).
\end{equation}
Geometrically, the sets $e_X^{-1}(x)$ and $e_Z^{-1}(z)$ are circular slices of the surface of the Bloch sphere and the number of intersections of these circles determines the number of states such that $e_{XZ}(\psi) = (x,z)$ (see \cref{fig:counting_intersections_XZ}).
\begin{figure}
    \begin{center}
    \subfloat[$(x,z)=(\frac{3}{4},\frac{3}{4})$]{
        \begin{tikzpicture}[tdplot_main_coords, scale=1.5]
            \boundingboxback{};
            \boundingboxlabels{};
            \blochsphere{};

            \pgfmathsetmacro{\valx}{0.75};
            \pgfmathsetmacro{\valz}{0.75};
            \tdplotsetrotatedcoords{0}{90}{0};
            \preimagesingle{blue}{\valx}
            \tdplotsetrotatedcoords{0}{0}{0};
            \preimagesingle{red}{\valz}

            \boundingboxfront{};
        \end{tikzpicture}
    }\quad
    \subfloat[$(x,z)=(\frac{1}{\sqrt{2}},\frac{1}{\sqrt{2}})$]{
        \begin{tikzpicture}[tdplot_main_coords, scale=1.5]
            \boundingboxback{};
            \boundingboxlabels{};
            \blochsphere{};

            \pgfmathsetmacro{\valx}{{sqrt(0.5)}};
            \pgfmathsetmacro{\valz}{{sqrt(0.5)}};
            \tdplotsetrotatedcoords{0}{90}{0};
            \preimagesingle{blue}{\valx}
            \tdplotsetrotatedcoords{0}{0}{0};
            \preimagesingle{red}{\valz}

            \draw (\valx,0,\valz) node[intersection]{};

            \boundingboxfront{};
        \end{tikzpicture}
    }\quad
    \subfloat[$(x,z)=(\frac{1}{4},\frac{1}{4})$]{
        \begin{tikzpicture}[tdplot_main_coords, scale=1.5]
            \boundingboxback{};
            \boundingboxlabels{};
            \blochsphere{};

            \pgfmathsetmacro{\valx}{0.25};
            \pgfmathsetmacro{\valz}{0.25};
            \tdplotsetrotatedcoords{0}{90}{0};
            \preimagesingle{blue}{\valx}
            \tdplotsetrotatedcoords{0}{0}{0};
            \preimagesingle{red}{\valz}

            \draw (\valx,{sqrt(1-\valx*\valx-\valz*\valz)},\valz) node[intersection]{};
            \draw (\valx,{-sqrt(1-\valx*\valx-\valz*\valz)},\valz) node[intersection]{};

            \boundingboxfront{};
        \end{tikzpicture}
    }
    \end{center}
    \caption{The intersection of $e_{X}^{-1}(x)$ (blue) with $e_{Z}^{-1}(z)$ (red) may have $0$, $1$ or $2$ elements depending on the value of $x^2 + z^2$.}
    \label{fig:counting_intersections_XZ}
\end{figure}
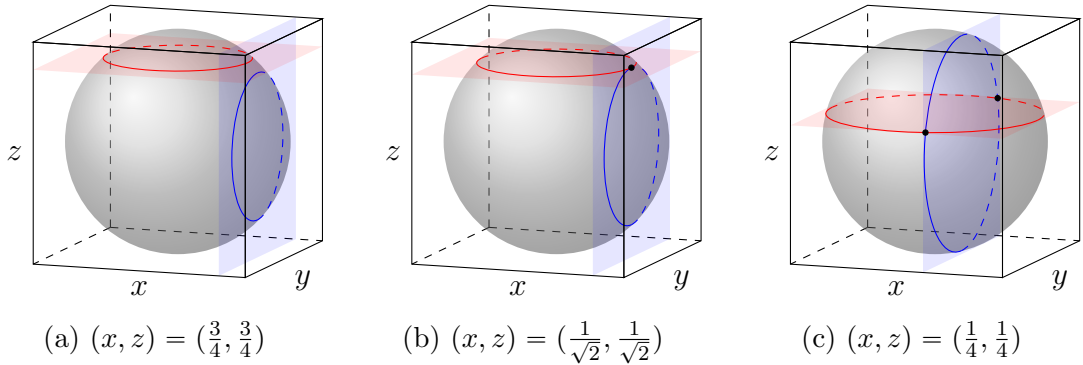
The corresponding algebraic problem is to count the number of distinct real solutions to the equation $x^2 + y^2 + z^2 = 1$ for fixed $x$ and $z$. 
Since $y = \pm \sqrt{1 - x^2 - z^2}$, the value of sign of $1-x^2-z^2$ determines the cardinality of $e^{-1}_{XZ}(x,z)$:
\begin{equation}
    \abs{e^{-1}_{XZ}(x,z)} = \begin{cases} 0 & x^2 + z^2 > 1, \\ 1 & x^2 + z^2 = 1, \\ 2 & x^2 + z^2 < 1. \end{cases}
\end{equation}

\textbf{A measure theory approach:}
Another strategy for assessing the relationships between $X$ and $Z$ expectation values is to \textit{measure} the volume of states whose expectation values belong to some region, $e_{XZ}(\psi) \in \Delta \subset \mathbb R^{2}$. 
Given a probability measure, $\mu$, over the set of states such as the uniform measure over the surface of the Bloch sphere, $\diff \mu = (4\pi)^{-1}\sin \theta \diff \theta \diff \phi$, the percentage of states, $\psi$, such that $e_{XZ}(\psi) \in \Delta$ is given by the \textit{pushforward measure}, $\nu_{XZ} \coloneqq \mu \circ e_{XZ}^{-1}$, of $\mu$ through $e_{XZ}$:
\begin{equation}
    \nu_{XZ}(\Delta) \coloneqq \mu(e^{-1}_{XZ}(\Delta)).
\end{equation}
A direct calculation of the pushforward measure (\cref{fig:push_forward_resolved}) in this context reveals a probability density of the form,
\begin{equation}
    \label{eq:XZ_realizable_density}
    \diff \nu_{XZ}(x,z) \coloneqq \begin{cases} \frac{1}{2\pi\sqrt{1-x^2-z^2}} \diff x \diff z & x^2 + z^2 < 1, \\ 0 & x^2 + y^2 > 1. \end{cases}
\end{equation}
\begin{figure}
    \begin{center}
        \includegraphics[width=0.3\textwidth]{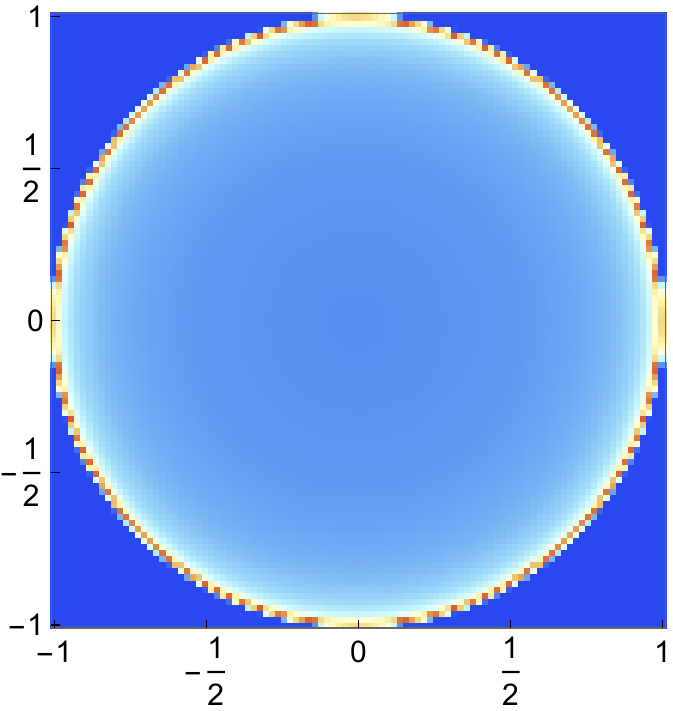}
    \end{center}
    \caption{A visualization of pushforward measure $\nu_{XZ} = \mu \circ e^{-1}_{XZ}$ for uniform prior $\mu$.}
    \label{fig:push_forward_resolved}
\end{figure}
Note that the support of the probability $\nu_{XZ}$ is, perhaps unsurprisingly, equal to realizable region corresponding to the unit disk, $x^{2} + z^{2} \leq 1$.
In other words, if $\Delta \subset \mathbb R^2$ lies entirely outside of the unit disk, then $\nu_{XZ}(\Delta) = 0$.\\

\textbf{An estimation theory approach:}
The estimation-theoretic approach naturally arises from acknowledging that expectation values are inherently statistical.
To introduce the idea, note that an equivalent way to formalize the aforementioned pushforward measure is to reconceptualize the function $e_{XZ}$, which maps each state $\psi$ to its pair of expectation values $e_{XZ}(\psi) = (\braket{X}_{\psi}, \braket{Z}_{\psi})$, as a deterministic probability kernel, denoted by $K_{XZ}$, which maps each state $\psi$ to the point measure $\delta_{e_{XZ}(\psi)}$, concentrated at $e_{XZ}(\psi)$, such that for each region $\Delta \in \mathbb R^2$,
\begin{equation}
    K_{XZ} (\Delta | \psi) \coloneqq \delta_{e_{XZ}(\psi)}(\Delta) = \begin{cases} 1 & e_{XZ}(\psi) \in \Delta, \\ 0 & e_{XZ}(\psi) \not \in \Delta. \end{cases}
\end{equation}
In this manner, the pushforward measure can be re-expressed as integration over $K_{XZ}(\cdot | \psi)$ with respect to the prior measure $\mu$:
\begin{equation}
    \nu_{XZ}(\Delta) = (\mu\circ e^{-1}_{XZ})(\Delta) = \int_{\psi} K_{XZ}(\Delta | \psi) \diff \mu(\psi).
\end{equation}
The core idea of the estimation-theoretic approach is to \textit{approximate} the pushforward measure $\nu_{XZ}$ by approximating the probability kernel $K_{XZ}(\Delta | \psi)$, by performing a sufficiently large collective measurement on $n$ copies of the state $\psi$, i.e.,
\begin{equation}
    K_{XZ}(\Delta | \psi) \approx \braket{\psi^{\otimes n}, E^{XZ}_{n}(\Delta) \psi^{\otimes n}},
\end{equation}
where $E^{XZ}_n$ is a quantum measurement, referred to as an \textit{estimation scheme}, whose outcomes, when applied to $\psi^{\otimes n}$, correspond to estimates for the value of $e_{XZ}(\psi)$.
By doing so, one obtains an approximation of the pushforward measure $\nu_{XZ}$ of the form
\begin{equation}
    \nu^{(n)}_{XZ}(\Delta) \coloneqq \int_{\psi} \diff \mu(\psi) \braket{\psi^{\otimes n}, E^{XZ}_{n}(\Delta) \psi^{\otimes n}}.
\end{equation}
Fortunately, examples of such sequences occur naturally in the context of estimating the expectation value of an unknown quantum state $\psi$. 
Perhaps the most natural way to jointly estimate both the $X$ and $Z$ expectation values of a state $\psi$ is to partition the collection of prepared copies of $\psi$ into two portions of roughly equal size and then, respectively on each portion, separately perform a projective measurement in the eigenbases of the observables $X$ and $Z$ and then let the corresponding empirical mean values serve as estimates for the value of $e_{XZ}(\psi) = (\braket{X}_{\psi}, \braket{Z}_{\psi})$.
The resulting approximation for $\nu_{XZ}(\Delta)$ is depicted in \cref{fig:estimation_resolved}. \\
\begin{figure}
    \begin{center}
        \subfloat[$n=10$]{\includegraphics[width=0.3\textwidth]{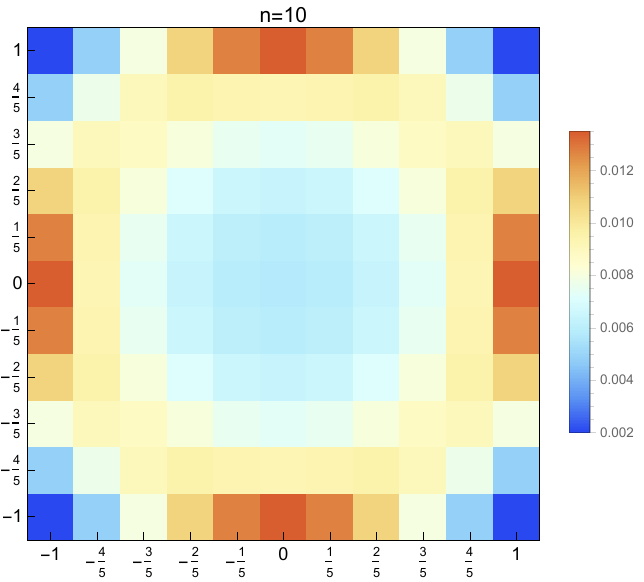}}\quad
        \subfloat[$n=30$]{\includegraphics[width=0.3\textwidth]{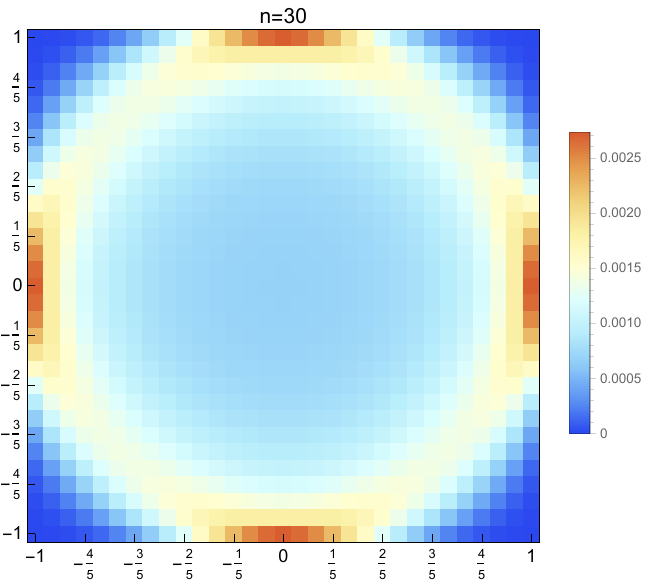}}\quad
        \subfloat[$n=100$]{\includegraphics[width=0.3\textwidth]{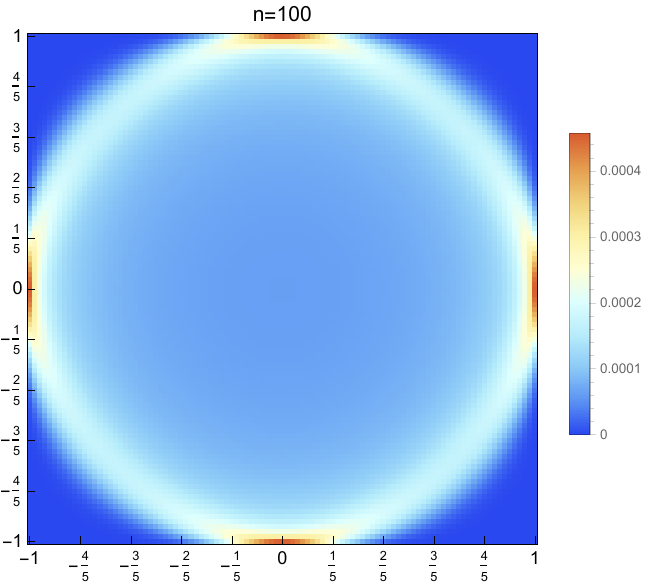}}
    \end{center}
    \caption{A sequence of approximations, $\nu_{XZ}^{(2n)}$, of the pushforward measure $\nu_{XZ}$ in \cref{fig:push_forward_resolved}, obtained from performing the $X$ eigenbasis measurement $n$ times followed by the $Z$ eigenbasis measurement $n$ times on $2n$ copies of a fixed state $\psi$ sampled uniformly.}
    \label{fig:estimation_resolved}
\end{figure}

\textbf{Realizability from occasionality}:
Although the probability measure, $\nu_{XZ}^{(n)}$, based on an estimation-theoretic approach converges to the exact pushforward measure $\nu_{XZ}^{n} = \mu \circ e_{XZ}^{-1}$ in the limit where $n$ tends to infinity, for any finite $n$, there is always a chance to produce a pair of estimates $(x,z)$ which are unrealizable.
In other words, the positivity $\nu_{XZ}^{(n)}(\{(x,z)\}) > 0$ is insufficient evidence to conclude that the pair $(x,z)$ can be realized by some quantum state $\psi$.
Nevertheless, the sequence of measures based on estimation theory, $\nu_{XZ}^{(n)}$, can still be used to distinguish between realizability and unrealizability. 

To see how this might be possible, consider the scenario where $n = 2m$ copies of a state $\psi$ are prepared and the eigenbasis measurements of $X$ and $Z$ are respectively performed on $m$ copies.
Let the number of occurrences of spin-up, spin-down, spin-right and spin-left respectively be denoted by $k_{\uparrow},k_{\downarrow}, k_{\rightarrow}$ and $k_{\leftarrow}$ such that $k_{\uparrow} + k_{\downarrow} = k_{\rightarrow} + k_{\leftarrow} = m$.

On the one hand, the likelihood of only observing spin-up or spin-right outcomes, and thus of producing the \textit{unrealizable} estimate of $e_{XZ}(\psi) \approx (+1, +1)$, has the upper bound
\begin{equation}
    \nu_{XZ}^{(2m)}(\{(+1, +1)\}) = \mathrm{Pr}(k_{\uparrow} = k_{\rightarrow} = m) \leq \left(\frac{3+2\sqrt{2}}{8}\right)^{m}.
\end{equation}
Therefore, the probability that a state $\psi$ would behave in manner which would lead an experimenter to conclude $\psi$ simultaneously satisfies both $e_X(\psi) = \braket{X}_{\psi} = 1$ and $e_Z(\psi) = \braket{Z}_{\psi} = 1$ decays to zero at an \textit{exponential} rate with respect to the number of trials.

On the other hand, the likelihood of observing an equal number of spin-up and spin-down outcomes followed by an equal number of spin-right and spin-left outcomes, and thus of producing the \textit{realizable} estimate of $e_{XZ}(\psi) \approx (0, 0)$, admits of lower-bound
\begin{equation}
    \nu_{XZ}^{(2m)}(\{(0, 0)\}) = \mathrm{Pr}(k_{\uparrow} = k_{\downarrow} = k_{\rightarrow} = k_{\leftarrow} = \frac{m}{2}) \geq \begin{cases} 0 & m \text{ is odd}, \\ \frac{1}{2m} & m \text{ is even}. \end{cases}
\end{equation}
In contrast to the probability $\nu_{XZ}^{(n)}(\{(+1, +1)\})$, the probability $\nu_{XZ}^{(2m)}(\{(0, 0)\})$ of producing the estimate $(0,0)$ does \textit{not} decay to zero at an exponential rate with respect to the number of trials. 
Moreover, the lower-bound above can be interpreted as stating that the estimate $(0,0)$ is \textit{occasionally} produced.\\

\textbf{A guiding principle:}
Although the toy example of a quantum realizability problem presented in the section is somewhat contrived, its main purpose was to discover the following guiding principle which has served as the basis for tackling the less contrived examples:\\

\noindent\rule{\textwidth}{1pt}\\

\textit{The sharp distinction between possibility and impossibility, or equivalently between realizability and unrealizability, can be faithfully captured by the comparatively fuzzy distinction between occasionality and exceptionality.}

\chapter{Preliminaries}
\label{chap:preliminaries}
This thesis aims to explore and develop a few ideas at the intersection of quantum theory, representation theory, and statistics.
In order to support this exploration and development, this chapter endeavours to provide the unfamiliar reader with enough background to understand the results presented in subsequent chapters.
For any reader already familiar with the aforementioned topics, the purpose of this chapter is to establish notational conventions, and moreover, to serve as a reminder of, and reference for, the following topics:

\begin{itemize}
    \item \textbf{Measure theory:} measures \& integration, probability theory \& statistics, and types of convergence (\cref{sec:measure_theory}).
    \item \textbf{Quantum theory:} the concept of a quantum state, observables, and quantum measurements (\cref{sec:quantum_theory}).
    \item \textbf{Group theory:} groups, subgroups, cosets, orbits, and group actions, with a focus on matrix Lie groups (\cref{sec:group_theory}).
    \item \textbf{Representation theory:} finite-dimensional representations, compactness \& unitarity, reducibility \& compositionality, roots \& weights, complexification of Lie algebras, and the theorem of highest weights (\cref{sec:rep_theory}).
\end{itemize}

This chapter concludes with a series of tables in \cref{sec:notation} which summarize most of the commonly used notation throughout this thesis.

\section{Measure theory}
\label{sec:measure_theory}

\subsection{Measures}
The purpose of this section is to provide a basic overview of all of the measure-theoretic terminology used throughout this thesis.

We begin with the usual definition of a $\sigma$-algebra \cite[Definition 1.4.12]{tao2011introduction}.
\begin{defn}
    A \defnsty{$\sigma$-algebra} on a set $X$ is a non-empty collection $\Sigma$ of subsets of $X$ such that the following conditions hold.
    \begin{enumerate}[(i)]
        \item (Empty set) $\emptyset \in \Sigma$.
        \item (Complements) If $A \in \Sigma$, then $X \setminus A \in \Sigma$.
        \item (Countable unions) If $A_1, A_2, \ldots \in \Sigma$, then $A_1 \cup A_2 \cup \cdots \in \Sigma$.
    \end{enumerate}
    The pair $(X, \Sigma)$ is referred to as a \defnsty{measurable space}.
\end{defn}

\begin{defn}
    \label{defn:std_borel_spaces}
    If $X$ is a topological space, or rather $(X, \tau)$ is a topological space with topology $\tau$, the \defnsty{Borel $\sigma$-algebra}, $\Sigma$, is the smallest $\sigma$-algebra on $X$ that contains $\tau$. The pair $(X, \Sigma)$ is referred to as a \defnsty{Borel space}.

    A topological space $(X, \tau)$ for which there exists a metric $d : X \times X \to [0, \infty)$ which (i) induces the topology on $X$, and (ii) makes $X$ both separable and complete as a metric space, is called a \defnsty{Polish space}. The Borel space $(X, \Sigma)$ associated to a Polish space is called a \defnsty{standard Borel space}.
\end{defn}

\begin{defn}
    \label{defn:measure}
    Let $(X, \Sigma)$ be a measurable space. A function $\mu : \Sigma \to [-\infty,\infty]$ is called a \defnsty{non-negative measure} on $X$ if it satisfies three conditions:
    \begin{itemize}
        \item \textit{positivity}: for all $\Delta \in \Sigma$, $\mu(\Delta) \geq 0$,
        \item \textit{nullity}: $\mu(\emptyset) = 0$, and
        \item \textit{countable additivity}: for all countable collections $\{\Delta_{j} \in \Sigma\}_{j \in \mathbb N}$, of pairwise disjoint sets (meaning $i \neq j \implies \Delta_{i} \cap \Delta_{j} = \emptyset$), $\mu$ satisfies
            \begin{equation}
                \mu \left( \bigcup_{j\in \mathbb N} \Delta_{j} \right) =  \sum_{j\in \mathbb N} \mu \left(\Delta_{j} \right).
            \end{equation}
    \end{itemize}
    If additionally, $\mu(X) = 1$, then $\mu$ is called a \textit{probability measure} on $X$.
\end{defn}

\begin{defn}
    Let $(X, \borel{X})$ and $(Y, \borel{Y})$ be measurable spaces. A function $f : X \to Y$ is said to be \defnsty{measurable} if for all $A \in \borel{Y}$, $f^{-1}(A) \in \borel{X}$.
\end{defn}

\begin{rem}
    There are many equivalent ways of expressing the integration of a measurable real-valued function, $f : X \to \mathbb R$, with respect to a probability measure $\mu : \borel{X} \to [0,1]$. For example, all of the following expressions are equivalent:
    \begin{equation}
        \expect_{\mu}[f] = \int_{X} f \diff \mu = \int_{x \in X} f(x) \diff \mu(x) = \int_{x \in X} f(x) \mu(\diff x).
    \end{equation}
\end{rem}

\begin{defn}
    A \defnsty{probability kernel} from $(Y, \borel{Y})$ to $(X, \borel{X})$ is a map $\xi : \borel{X} \times Y \to [0,1]$ such that
    \begin{enumerate}[i)]
        \item for each $y \in Y$, the map $\Delta \mapsto \xi(\Delta, y)$ is a probability measure on $(X, \borel{X})$, and
        \item for each $\Delta \in \borel{X}$, the map $y \mapsto \xi(\Delta, y)$ is $\borel{Y}$-measurable.
    \end{enumerate}
    A probability kernel $\xi : \borel{X} \times Y \to [0,1]$ evaluated at $(\Delta,y) \in \borel{X} \times Y$ may be alternatively written as $\xi(\Delta | y)$ or $\xi^{y}(\Delta)$.
\end{defn}

\begin{rem}
    Associated to any measurable function $f : \state \to X$ is a deterministic Markov kernel $K_f : \borel{X} \times \state \to [0,1]$ defined for $\Delta \in \borel{X}$ as
    \begin{equation}
        K_f(\Delta, \rho) = \begin{cases} 1 & f(\rho) \in \Delta, \\ 0 & f(\rho) \not \in \Delta. \end{cases}
    \end{equation}
\end{rem}

\begin{defn}[The Pushforward of a Measure]
    Let $f : \state \to X$ be a measurable function between measurable spaces $(\state, \borel{\state})$ and $(X, \borel{X})$. 
    For each probability measure $\mu : \borel{\state} \to [0,1]$, there is an induced probability measure on $X$ called the \defnsty{pushforward measure} of $\mu$ by $f$, denoted $f_{\ast} \mu$, and defined for $\Delta \in \borel{X}$ as
    \begin{equation}
        (f_{\ast}\mu)(\Delta) \coloneqq \mu(f^{-1}(\Delta)) = (\mu \circ f^{-1})(\Delta) = \int_{\rho \in \state} K_{f}(\Delta, \rho) \mu(\diff \rho).
    \end{equation}
\end{defn}

\subsection{Weak convergence}

The purpose of this section is to define weak convergence of probability measures and then state, and sometimes prove, a small handful of related results that are used in the main text. Many of these definitions and results are taken directly from \cite[Appendix A]{dupuis2011weak}. Throughout this section, $X$ will denote a Polish space, $\probs(X)$ will be the set of probability measures on the standard Borel space $(X, \borel{X})$, and $\boundcont{X}$ will be the space of bounded, continuous functions from $X$ to $\mathbb R$.

The following pair of definitions can be found in \cite[Appendix A.3]{dupuis2011weak}.
\begin{defn}
    \label{defn:weak_convergence}
    A sequence $\seq{\mu_n}$ in $\probs(X)$ \defnsty{converges weakly} to $\mu$ in $\probs(X)$, denoted $\mu_n \weak \mu$, if for all bounded continuous functions $g \in \boundcont{X}$,
    \begin{equation}
        \lim_{n \to \infty} \int_X g \diff \mu_n = \int_X g \diff \mu.
    \end{equation}
\end{defn}
The notion of weak convergence can also be seen as a convergence with respect to a topology on $\probs(X)$, namely the weak topology.
\begin{defn}
    The \defnsty{weak topology} is the topology on $\probs(X)$ generated by open neighborhoods around each $\gamma \in \probs(X)$ of the form
    \begin{equation}
        \{ \mu \in \probs(X) \mid \abs{\int_X g_i \diff \mu - \int_X g_i \diff \gamma} < \epsilon, i \in \{1, \ldots, k\} \}
    \end{equation}
    where $\epsilon > 0$, $k \in \mathbb N$ and $g_i \in \boundcont{X}$.
\end{defn}
There are a number of conditions that are equivalent to weak convergence; their equivalence is known as the Portmanteau Theorem \cite[Thm. A.3.4]{dupuis2011weak}.
\begin{thm}[Portmanteau Theorem]
    \label{thm:portmanteau}
    Let $\seq{\mu_{n}}$ be a sequence in $\probs(X)$ and let $\mu \in \probs(X)$. The following are equivalent:
    \begin{enumerate}[(i)]
        \item $\mu_n \weak \mu$.
        \item $\lim_{n \to \infty} \int_X g \diff \mu_n = \int_X g \diff \mu$ for all $g$ bounded, \textit{uniformly} continuous functions from $X$ to $\mathbb R$.
        \item $\limsup_{n\to\infty} \mu_{n}(C) \leq \mu(C)$ for all closed $C \in \borel{X}$.
        \item $\liminf_{n\to\infty} \mu_{n}(O) \geq \mu(O)$ for all open $O \in \borel{X}$.
        \item $\lim_{n\to\infty} \mu_{n}(A) = \mu(A)$ for $A \in \borel{X}$ with $\mu(\boundary A) = 0$ where $\boundary A = \closure{A} \setminus \interior{A}$ is the boundary of $A$.
    \end{enumerate}
\end{thm}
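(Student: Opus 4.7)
The plan is to prove the five conditions equivalent by establishing the cyclic chain $(i) \Rightarrow (ii) \Rightarrow (iii) \Leftrightarrow (iv) \Rightarrow (v) \Rightarrow (i)$, leveraging the fact that $X$, being Polish, admits a compatible complete metric $d$ which underwrites the approximation arguments in the interior of the chain.

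The first three steps are relatively routine. The implication $(i) \Rightarrow (ii)$ is immediate from the inclusion of bounded uniformly continuous functions inside $\boundcont{X}$. For $(ii) \Rightarrow (iii)$, I would approximate the indicator of a closed set $C$ from above by the explicit bounded uniformly continuous functions $g_k(x) = \max\bigl(0,\, 1 - k\, d(x, C)\bigr)$, which decrease pointwise to $\indicator_{C}$; combining $\mu_n(C) \leq \int g_k \diff \mu_n$ with $\limsup$ in $n$ via $(ii)$, then sending $k \to \infty$ by monotone convergence, yields the desired inequality. The equivalence $(iii) \Leftrightarrow (iv)$ is just complementation on both the set-theoretic and probabilistic sides, using $\mu(X) = 1 = \mu_n(X)$.

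For $(iii) \wedge (iv) \Rightarrow (v)$, given $A$ with $\mu(\boundary A) = 0$, I would apply $(iv)$ to the open set $\interior A$ and $(iii)$ to the closed set $\closure A$, and squeeze:
\begin{equation}
    \mu(\interior A) \leq \liminf_{n} \mu_n(A) \leq \limsup_{n} \mu_n(A) \leq \mu(\closure A),
\end{equation}
then conclude by noting that $\mu(\closure A) - \mu(\interior A) = \mu(\boundary A) = 0$, so both extremes coincide with $\mu(A)$.

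The main obstacle is the final implication $(v) \Rightarrow (i)$, which requires building $\mu$-adapted continuity sets out of a generic test function. The plan is to take $g \in \boundcont{X}$, assumed without loss of generality to take values in some $[a,b]$, and approximate it uniformly by a step function $g_\varepsilon = \sum_{i=1}^{N} t_{i-1} \indicator_{A_i}$, where $a = t_0 < t_1 < \cdots < t_N = b$ satisfies $t_i - t_{i-1} < \varepsilon$ and $A_i = g^{-1}([t_{i-1}, t_i))$. The crucial geometric observation is that continuity of $g$ forces $\boundary A_i \subseteq g^{-1}(\{t_{i-1}, t_i\})$, and since the level sets $\{g^{-1}(\{t\}) : t \in [a,b]\}$ are pairwise disjoint in a finite measure space, at most countably many of them carry positive $\mu$-mass; hence the partition points $t_i$ can be chosen to avoid this countable exceptional set, ensuring $\mu(\boundary A_i) = 0$ for every $i$. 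Hypothesis $(v)$ then gives $\lim_n \mu_n(A_i) = \mu(A_i)$ for each $i$, and a triangle-inequality estimate comparing $\int g \diff \mu_n$ and $\int g \diff \mu$ through the intermediate $g_\varepsilon$ (whose integrals differ by at most $\varepsilon$) closes the argument upon sending $\varepsilon \to 0$.
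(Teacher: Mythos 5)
Your proof is correct and complete, with one small housekeeping point noted below. Note that the paper itself does not actually prove the Portmanteau theorem; it simply cites it from Dupuis and Ellis, Theorem A.3.4. Your chain $(i)\Rightarrow(ii)\Rightarrow(iii)\Leftrightarrow(iv)\Rightarrow(v)\Rightarrow(i)$ is the classical route, and the crucial observation in the hardest step $(v)\Rightarrow(i)$ — that the pairwise-disjoint level sets $g^{-1}(\{t\})$ can carry positive $\mu$-mass for at most countably many values of $t$, so the partition points $t_i$ can be chosen outside this exceptional set to ensure $\mu(\boundary A_i) = 0$ — is exactly what makes that direction go through; the containment $\boundary A_i \subseteq g^{-1}(\{t_{i-1}\}) \cup g^{-1}(\{t_i\})$, which you derive from $\interior A_i \supseteq g^{-1}\bigl((t_{i-1},t_i)\bigr)$ and $\closure A_i \subseteq g^{-1}\bigl([t_{i-1},t_i]\bigr)$, is the right justification. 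The only minor slip is that with $t_N = b$ and every $A_i = g^{-1}\bigl([t_{i-1},t_i)\bigr)$ half-open, the set $g^{-1}(\{b\})$ is left uncovered; this is cured by either pushing $t_N$ strictly above $\sup g$ or closing the final interval, and neither choice disturbs the boundary estimate or the countability argument.
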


\begin{lem}
    \label{lem:pushforward_continuous}
    Let $h : X \to Y$ be a continuous map between Polish spaces and let $h_{\ast} : \probs(X) \to \probs(Y)$ be the pushforward of probability measures, i.e., for $\mu \in \probs(X)$, and $\Delta \in \borel{Y}$,
    \begin{equation}
        (h_{\ast} \mu)(\Delta) \coloneqq \mu(h^{-1}(\Delta)).
    \end{equation}
    Then $h_{\ast}$ is continuous with respect to the weak topologies on $\probs(X)$ and $\probs(Y)$.
\end{lem}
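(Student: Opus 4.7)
The plan is to verify the continuity of $h_{\ast}$ by reducing it to the defining condition of weak convergence. Since the weak topology on probability measures over a Polish space is metrizable (e.g., by the Prokhorov or L\'{e}vy--Prokhorov metric), it suffices to establish sequential continuity of $h_{\ast}$. That is, assuming an arbitrary sequence $\seq{\mu_n}$ in $\probs(X)$ satisfying $\mu_n \weak \mu$ for some $\mu \in \probs(X)$, I would show that $h_{\ast} \mu_n \weak h_{\ast} \mu$ in $\probs(Y)$.

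The key observation is a change-of-variables identity for the pushforward: for any bounded measurable $g : Y \to \mathbb R$ and any $\nu \in \probs(X)$,
\begin{equation*}
    \int_Y g \diff (h_{\ast} \nu) = \int_X (g \circ h) \diff \nu,
\end{equation*}
which follows by the standard monotone class / approximation-by-simple-functions argument starting from indicator functions $g = \indicator_{\Delta}$ with $\Delta \in \borel{Y}$, where the identity reduces to the very definition $(h_{\ast}\nu)(\Delta) = \nu(h^{-1}(\Delta))$. I would briefly sketch this standard extension but not grind through it.

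With this identity in hand, fix any $g \in \boundcont{Y}$. Because $h$ is continuous and $g$ is bounded and continuous, the composition $g \circ h : X \to \mathbb R$ lies in $\boundcont{X}$. Applying \cref{defn:weak_convergence} to $\mu_n \weak \mu$ against the test function $g \circ h$ yields
\begin{equation*}
    \lim_{n \to \infty} \int_Y g \diff(h_{\ast} \mu_n) = \lim_{n \to \infty} \int_X (g \circ h) \diff \mu_n = \int_X (g \circ h) \diff \mu = \int_Y g \diff (h_{\ast} \mu).
\end{equation*}
Since $g \in \boundcont{Y}$ was arbitrary, this is precisely the statement $h_{\ast} \mu_n \weak h_{\ast} \mu$.

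I do not expect any genuine obstacle here; the only subtle point is the appeal to metrizability of the weak topology to justify passing from sequential continuity to continuity. If one wished to avoid metrizability entirely, one could instead argue directly at the level of sub-basic open sets: a basic open neighborhood of $h_{\ast} \gamma$ in $\probs(Y)$ cut out by test functions $g_1, \ldots, g_k \in \boundcont{Y}$ and tolerance $\epsilon$ pulls back under $h_{\ast}$ to the basic open neighborhood of $\gamma$ in $\probs(X)$ cut out by $g_1 \circ h, \ldots, g_k \circ h \in \boundcont{X}$ with the same tolerance, so $h_{\ast}^{-1}$ of any open set is open. Either route gives the result.
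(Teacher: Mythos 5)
Your proof is correct and follows essentially the same route as the paper: pull back a test function $g \in \boundcont{Y}$ to $g \circ h \in \boundcont{X}$, apply the defining limit of weak convergence, and push forward again. Your explicit appeal to metrizability (or, alternatively, the direct sub-basic-open-set argument) is a careful addition — the paper's proof only establishes sequential continuity and silently treats that as sufficient, which is a small gap you correctly identified and closed.
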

\begin{proof}
    Let $g \in \boundcont{Y}$, and consider a sequence $\seq{\mu_{n}}$ in $\probs(X)$ converging weakly to $\mu$. Then by the continuity of $h$, $g \circ h \in \boundcont{X}$ and thus $\seq{h_{\ast}\mu_{n}}$ converges weakly to $h_{\ast} \mu$:
    \begin{equation}
        \int_{Y} g \diff (h_{\ast} \mu_n) = \int_{X} (g \circ h) \diff \mu_n \weak \int_{X} (g \circ h) \diff \mu = \int_{Y} g \diff (h_{\ast} \mu).
    \end{equation}
\end{proof}

The next series of results are concerned with sequences of probability measures $\seq{\xi_n^y}$ indexed by $y \in Y$ where $Y$ is some other Polish space. 
Under suitable conditions, point-wise weak convergence of $\xi_n^{y}$ for each $y \in Y$ implies weak convergence of $\seq{\int_{Y} \diff \mu \xi_n^y}$ for some probability measure $\mu \in \probs(Y)$. 
Of course, in order the integration to make sense, the maps $y \mapsto \xi_n^{y}(\Delta)$ must be measurable.

The following lemma appears as \cite[Theorem A.5.8]{dupuis2011weak}.
\begin{lem}
    \label{lem:weak_convergence_with_prior}
    Let $\mu : \borel{Y} \to [0,1]$ be a probability measure, let $(\xi_n : \borel{X} \times Y \to [0,1])_{n \in\mathbb N}$ be a sequence of probability kernels and $\xi : \borel{X} \times Y \to [0,1]$ a probability kernel. Assume that, for each $y \in Y$, the sequence $\seq{\xi_n^{y}}$ converges weakly to $\xi^y$. Then the sequence of measures $\xi_{n}(\diff x | y) \otimes \mu(\diff y)$ converges weakly to $\xi(\diff x|y) \otimes \mu(\diff y)$.
    \begin{equation}
        \xi_n(\diff x | y) \otimes \mu(\diff y) \weak \xi(\diff x | y) \otimes \mu(\diff y).
    \end{equation}
\end{lem}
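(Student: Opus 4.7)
The plan is to verify weak convergence directly from Definition~\ref{defn:weak_convergence} by testing against an arbitrary bounded continuous function. Fix $g \in \boundcont{X \times Y}$ and define
$$
f_n(y) := \int_X g(x,y) \, \xi_n(\diff x | y), \qquad f(y) := \int_X g(x,y) \, \xi(\diff x | y).
$$
For each fixed $y$, the slice $x \mapsto g(x,y)$ lies in $\boundcont{X}$, so the pointwise hypothesis $\xi_n^{y} \weak \xi^{y}$ immediately gives $f_n(y) \to f(y)$ for every $y \in Y$, together with the uniform bound $\abs{f_n(y)} \leq \norm{g}_{\infty}$ coming from the fact that each $\xi_n^{y}$ is a probability measure.

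The second step is to integrate against $\mu$ and appeal to dominated convergence. Because $\mu$ is a probability measure on $Y$, the constant $\norm{g}_{\infty}$ is $\mu$-integrable, so dominated convergence yields
$$
\int_Y f_n(y) \, \mu(\diff y) \;\xrightarrow[n \to \infty]{}\; \int_Y f(y) \, \mu(\diff y).
$$
Unwinding the definition of the semidirect product $\xi_n(\diff x | y) \otimes \mu(\diff y)$ on $X \times Y$ shows that the two sides here are precisely $\int_{X\times Y} g \diff(\xi_n \otimes \mu)$ and $\int_{X \times Y} g \diff(\xi \otimes \mu)$, so this is exactly the weak convergence claimed.

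The one ingredient still to pin down is the $\borel{Y}$-measurability of $f_n$ and $f$, which is needed to invoke dominated convergence. I would handle this by a standard functional monotone class argument: by the very definition of a probability kernel, the map $y \mapsto \xi_n(A | y)$ is measurable for each $A \in \borel{X}$, hence $y \mapsto \int h(x) \, \xi_n(\diff x | y)$ is measurable for every simple $h$, and therefore for every bounded Borel $h$ on $X$ by linearity and bounded convergence. The same template, applied first to indicators of measurable rectangles on $X \times Y$ and extended to bounded Borel functions on the product, covers our continuous $g$. This measurability step is the only moderately technical part of the proof; once it is in place, the whole argument is a clean application of dominated convergence on top of the pointwise weak convergence assumption.
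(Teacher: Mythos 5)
Your proof is correct, and it takes a somewhat different route from the paper's. The paper invokes~\cite[Theorem A.3.14]{dupuis2011weak} to reduce the class of test functions to products $g(x,y)=s(x)t(y)$ with $s\in\boundcont{X}$, $t\in\boundcont{Y}$; for such $g$, the inner integral $q_n(y)=t(y)\int_X s\,\diff\xi_n^y$ is $\borel{Y}$-measurable directly from the definition of a probability kernel, so the measurability issue never arises and dominated convergence finishes the argument. You instead test against an arbitrary $g\in\boundcont{X\times Y}$, which makes the pointwise convergence $f_n(y)\to f(y)$ just as immediate but forces you to establish $\borel{Y}$-measurability of $y\mapsto\int_X g(x,y)\,\xi_n(\diff x\mid y)$ separately; your functional monotone class argument (starting from indicators of measurable rectangles and using $\borel{X\times Y}=\borel{X}\otimes\borel{Y}$ for Polish spaces) handles this correctly. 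The trade-off is that the paper's reduction offloads the technical work to a cited lemma, whereas your version is self-contained but needs the monotone class step; both hinge on the same core move, namely pointwise weak convergence of the conditionals plus dominated convergence against $\mu$.
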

\begin{proof}
    Since $Y$ and $X$ are assumed to be Polish spaces, to prove the weak convergence claimed above, by \cite[Theorem A.3.14]{dupuis2011weak}, it is sufficient to prove that the limit in \cref{defn:weak_convergence} holds for bounded, continuous functions $g \in \boundcont{X \times Y}$ of the form $g(x,y) = s(x)t(y)$ for $s \in \boundcont{X}$ and $t \in \boundcont{Y}$.
    \begin{align}
        \lim_{n \to \infty} \int_{X \times Y} s(x)t(y) \xi_n(\diff x | y) \otimes \mu(\diff y) = \lim_{n \to \infty} \int_Y \left( \int_X s(x) \xi_n(\diff x | y) \right) t(y) \mu(\diff y)
    \end{align}
    Define $q_n(y)$ for $y \in Y$ as
    \begin{equation}
        q_n(y) \coloneqq \int_X s(x) \xi_n(\diff x | y),
    \end{equation}
    and note that $q_n(y)$ is bounded point-wise for each $y$ independently of $n$ (since $s(x)$ is bounded and $\xi_n$ is a probability kernel). Therefore, the Lebesgue's dominated convergence theorem applies, and therefore
    \begin{align}
        &\lim_{n \to \infty} \int_{X \times Y} s(x)t(y) \xi_n(\diff x | y) \otimes \mu(\diff y) \\
        &\quad= \int_Y \lim_{n \to \infty} \left( \int_X s(x) \xi_n(\diff x | y) \right) t(y) \mu(\diff y) \\
        &\quad= \int_Y \left( \int_X s(x) \xi(\diff x | y) \right) t(y) \mu(\diff y) \\
        &\quad= \int_{X \times Y} s(x)t(y) \xi(\diff x | y) \otimes \mu(\diff y).
    \end{align}
    Thus the theorem holds.
\end{proof}

\begin{cor}
    \label{cor:weak_convergence_marginal}
    Let everything be defined as in \cref{lem:weak_convergence_with_prior}. For each $n \in\mathbb N$, define the probability measure $\Xi_n : \borel{X} \to [0,1]$ for each $\Delta \in \borel{X}$ as
    \begin{equation}
        \Xi_{n}(\Delta) \coloneqq \int_{Y} \xi_{n}^{y}(\Delta) \mu(\diff y),
    \end{equation}
    and similarly for $\Xi : \borel{X} \to [0,1]$. Then $\seq{\Xi_n}$ converges weakly to $\Xi$, i.e. $\Xi_n \weak \Xi$.
\end{cor}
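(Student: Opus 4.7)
The plan is to reduce the claim to the two preceding results: \cref{lem:weak_convergence_with_prior}, which upgrades pointwise weak convergence of the kernels $\xi_n^y \weak \xi^y$ to weak convergence of the joint measures on $X \times Y$, and \cref{lem:pushforward_continuous}, which says that the pushforward of a weakly convergent sequence of measures through a continuous map is again weakly convergent. The key observation is that $\Xi_n$ is simply the $X$-marginal of the joint measure $\xi_n(\diff x|y) \otimes \mu(\diff y)$, i.e., its pushforward under the continuous projection $\pi_X : X \times Y \to X$.

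More concretely, first I would invoke \cref{lem:weak_convergence_with_prior} to conclude that
\begin{equation}
    \xi_n(\diff x | y) \otimes \mu(\diff y) \weak \xi(\diff x | y) \otimes \mu(\diff y)
\end{equation}
in $\probs(X \times Y)$. Second, I would note that the projection $\pi_X : X \times Y \to X$ defined by $\pi_X(x,y) = x$ is continuous (as a coordinate projection between Polish spaces). Third, I would verify that $\Xi_n = (\pi_X)_\ast \bigl(\xi_n(\diff x | y) \otimes \mu(\diff y)\bigr)$, since for any $\Delta \in \borel{X}$,
\begin{equation}
    (\pi_X)_\ast\bigl(\xi_n \otimes \mu\bigr)(\Delta) = (\xi_n \otimes \mu)\bigl(\Delta \times Y\bigr) = \int_Y \xi_n^y(\Delta)\, \mu(\diff y) = \Xi_n(\Delta),
\end{equation}
with the analogous identity holding for $\Xi$. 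Finally, applying \cref{lem:pushforward_continuous} to the sequence of joint measures and the continuous map $\pi_X$ yields $\Xi_n \weak \Xi$, as desired.

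There is no serious obstacle here: the content of the corollary is entirely extracted from \cref{lem:weak_convergence_with_prior}, and the only additional ingredient is the trivial fact that marginalization is a pushforward along a continuous projection. If one wished to avoid citing \cref{lem:pushforward_continuous}, one could instead give a one-line direct proof by taking $g \in \boundcont{X}$, lifting it to $\tilde g(x,y) = g(x) \in \boundcont{X \times Y}$, and applying the definition of weak convergence of the joint measures to $\tilde g$; this yields $\int_X g\, \diff \Xi_n \to \int_X g\, \diff \Xi$ directly from Fubini.
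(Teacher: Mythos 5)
Your proposal is correct and is the natural route: the paper leaves this corollary without a written proof precisely because it follows immediately from \cref{lem:weak_convergence_with_prior} by observing that $\Xi_n$ is the $X$-marginal, i.e.\ the pushforward of the joint measure along the continuous projection $\pi_X$, and then invoking \cref{lem:pushforward_continuous}. Your alternative one-liner (lifting $g \in \boundcont{X}$ to $\tilde g(x,y) = g(x)$ and applying the definition of weak convergence to the joint measure) is equally valid and arguably cleaner.
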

\begin{cor}
    \label{cor:convergence_to_pushforward}
    Let everything be defined as in \cref{lem:weak_convergence_with_prior} and \cref{cor:weak_convergence_marginal}. Assume there exists a measurable function $f : Y \to X$ such that, for each $y \in Y$, $\seq{\xi_n^y}$ converges weakly to $\delta_{f(y)}$, i.e. $\xi_n^{y} \weak \delta_{f(y)}$. Then $\seq{\Xi_n}$ converges weakly to the pushforward measure $f_\ast \mu$, i.e.
    \begin{equation}
        \Xi_n \weak f_{\ast}\mu
    \end{equation}
\end{cor}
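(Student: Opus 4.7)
The plan is to deduce this corollary directly from \cref{cor:weak_convergence_marginal} by choosing the limit kernel $\xi : \borel{X} \times Y \to [0,1]$ to be the deterministic kernel $\xi^{y} \coloneqq \delta_{f(y)}$ associated with the measurable map $f$. The hypothesis $\xi_n^{y} \weak \delta_{f(y)}$ for each $y \in Y$ is then precisely the point-wise weak convergence assumption needed to invoke \cref{cor:weak_convergence_marginal}.

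First I would verify that $\xi^{y} \coloneqq \delta_{f(y)}$ really defines a probability kernel, which is the only nontrivial thing to check. For each $y \in Y$, $\delta_{f(y)}$ is a probability measure on $(X, \borel{X})$, so condition (i) of the definition of a probability kernel holds. For condition (ii), one observes that for any $\Delta \in \borel{X}$,
\begin{equation}
    y \mapsto \delta_{f(y)}(\Delta) = \indicator[f(y) \in \Delta] = \indicator_{f^{-1}(\Delta)}(y),
\end{equation}
and since $f$ is measurable, $f^{-1}(\Delta) \in \borel{Y}$, so this indicator function is $\borel{Y}$-measurable. Hence $\xi$ is a probability kernel and the hypotheses of \cref{cor:weak_convergence_marginal} are satisfied.

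Applying \cref{cor:weak_convergence_marginal} yields $\Xi_n \weak \Xi$, where
\begin{equation}
    \Xi(\Delta) = \int_{Y} \delta_{f(y)}(\Delta)\, \mu(\diff y) = \int_{Y} \indicator_{f^{-1}(\Delta)}(y)\, \mu(\diff y) = \mu\bigl(f^{-1}(\Delta)\bigr) = (f_{\ast}\mu)(\Delta).
\end{equation}
Thus $\Xi = f_{\ast}\mu$, which gives $\Xi_n \weak f_{\ast}\mu$ as desired.

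There is essentially no obstacle here beyond bookkeeping: the entire content of the corollary is the recognition that the pushforward measure can be expressed as the $\mu$-average of point masses $\delta_{f(y)}$, and that the previous corollary, applied to this particular choice of limit kernel, gives exactly the claimed weak convergence. Consequently the proof should be only a few lines, consisting of the measurability check followed by the application of \cref{cor:weak_convergence_marginal} and the identification of the integral of Dirac measures with the pushforward.
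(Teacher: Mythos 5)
Your proof is correct and follows exactly the same route as the paper: verify that $y \mapsto \delta_{f(y)}$ is a probability kernel, invoke \cref{cor:weak_convergence_marginal}, and identify the resulting limit measure with the pushforward $f_{\ast}\mu$. You add a slightly more explicit justification of the measurability step via the indicator function $\indicator_{f^{-1}(\Delta)}$, but the structure and content of the argument are the same as the paper's.
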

\begin{proof}
    First note that $\delta_{f(\cdot)}(\cdot) : \borel{X} \times Y \to [0,\infty]$ is indeed a probability kernel because $y \mapsto \delta_{f(y)}(\Delta)$ is $\borel{Y}$-measurable for each fixed $\Delta \in \borel{X}$. Then the corollary follows from an application of \cref{cor:weak_convergence_marginal}:
    \begin{equation}
        \Xi_n(\diff x) = \int_Y \xi_n(\diff x | y) \mu(\diff y) \weak \int_Y \delta_{f(y)}(\diff x) \mu(\diff y) = \mu( f^{-1}(\diff x) ) = (f_{\ast} \mu) (\diff x).
    \end{equation}
\end{proof}

We now turn our attention to the evaluation of a sequence of probability measures $\seq{\mu_n : \borel{X} \to [0,1]}$ along sequences of Borel sets $\seq{\Delta_{n}}$ that, in some sense, converge, or zoom into, to a single point $x \in X$.
\begin{defn}
    \label{defn:cantor_sequence}
    Let $X$ be a metric space.
    A sequence $\seq{\Delta_n \subseteq X}$ of subsets in $X$ is said to be \defnsty{nested} if
    \begin{equation}
        \forall n \in \mathbb N : \Delta_{n+1} \subseteq \Delta_{n}.
    \end{equation}
    A \defnsty{Cantor sequence} is a nested sequence of non-empty, closed subsets which satisfy
    \begin{equation}
        \lim_{n\to\infty} \mathrm{diam}(\Delta_n) = 0,
    \end{equation}
    where $\mathrm{diam}(S)$ is the \defnsty{diameter} of $S \subseteq X$:
    \begin{equation}
        \mathrm{diam}(S) = \sup\{d(x,x') \mid x, x' \in S \}.
    \end{equation}
\end{defn}
We refer to such sequences as Cantor sequences because of the following well-known theorem, known as Cantor's intersection theorem for complete metric spaces
\begin{thm}[Cantor's Intersection Theorem]
    \label{thm:cantor_intersection}
    Let $\seq{\Delta_n \subseteq X}$ be a Cantor sequence (see \cref{defn:cantor_sequence}) in complete metric space $X$. 
    Then the intersection over all $\Delta_n$ is non-empty and contains exactly one point:
    \begin{equation}
        \bigcap_{n\in\mathbb N} \Delta_n = \{ x \}
    \end{equation}
    for some $x \in X$.
\end{thm}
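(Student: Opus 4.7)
The plan is to establish uniqueness and existence of the intersection point separately, with the three hypotheses of a Cantor sequence (non-emptiness, closedness, vanishing diameter) together with completeness of $X$ each playing a distinct role.

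For uniqueness, suppose $x, y \in \bigcap_{n \in \mathbb N} \Delta_n$. Then for every $n$, both points lie in $\Delta_n$, so $d(x,y) \leq \mathrm{diam}(\Delta_n)$. Letting $n \to \infty$ and invoking $\mathrm{diam}(\Delta_n) \to 0$ forces $d(x,y) = 0$, and hence $x = y$. This argument uses only the shrinking-diameter and nesting conditions.

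For existence, I would use countable choice to pick a representative $x_n \in \Delta_n$ for each $n$, which is possible since each $\Delta_n$ is non-empty. The nesting hypothesis $\Delta_{n+1} \subseteq \Delta_n$ ensures that whenever $m \geq N$ we have $x_m \in \Delta_N$; consequently, for $m, n \geq N$,
\begin{equation}
    d(x_m, x_n) \leq \mathrm{diam}(\Delta_N),
\end{equation}
and since $\mathrm{diam}(\Delta_N) \to 0$ as $N \to \infty$, the sequence $\seq{x_n}$ is Cauchy. Completeness of $X$ then yields a limit $x \in X$.

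It remains to verify that $x$ belongs to every $\Delta_n$. Fix $N \in \mathbb N$; the tail $(x_n)_{n \geq N}$ lies entirely within $\Delta_N$ and converges to $x$, so $x$ is a limit point of $\Delta_N$. Because $\Delta_N$ is closed, $x \in \Delta_N$, and since $N$ was arbitrary, $x \in \bigcap_{n \in \mathbb N} \Delta_n$. I do not anticipate any substantial obstacle: the argument is a clean assembly of the defining ingredients, and the only subtle point worth flagging is that the closedness of each $\Delta_N$ is essential to conclude $x \in \Delta_N$ from $x$ being a limit of a sequence in $\Delta_N$, without which the intersection could easily be empty even in a complete metric space.
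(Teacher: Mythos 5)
Your proof is correct and is the standard textbook argument: uniqueness from the vanishing diameters, existence by selecting representatives to form a Cauchy sequence, then using completeness to get a limit and closedness of each $\Delta_N$ to place that limit in every set. The paper does not supply its own proof of this theorem---it simply cites Rudin---and your argument is precisely what that reference gives, so there is nothing to compare beyond noting your write-up correctly identifies the role of each hypothesis.
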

\begin{proof}
    See \cite[Chap. 2]{rudin1953principles}.
\end{proof}
In light of \cref{thm:cantor_intersection}, a Cantor sequence $\seq{\Delta_{n}}$ such that $\bigcap_{n\in\mathbb N}^{\infty} \Delta_n = \{ x \}$ will also be referred to a Cantor sequence converging to $x$. 
Also note that a Cantor sequence $\seq{\Delta_n}$ converging to $x$ also forms a neighborhood basis for $x$, in the sense that every neighborhood $N$ of $x$ eventually contains $\Delta_n$ for sufficiently large $n$.

\begin{lem}[Typicality]
    \label{lem:zooming_into_correct_value}
    Let $\seq{\mu_{n}}$ be a sequence of probability measures converging weakly to the Dirac measure, $\delta_{x}$, concentrated at $x \in X$.
    Then there exists a Cantor sequence $\seq{\Delta_n}$ converging to $x$ such that
    \begin{equation}
        \lim_{n \to \infty} \mu_n(\Delta_n) = 1.
    \end{equation}
\end{lem}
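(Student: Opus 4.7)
The plan is to construct the sequence $(\Delta_n)$ as a staircase of closed balls of shrinking radius around $x$, synchronized with the weak convergence $\mu_n \weak \delta_x$ via the Portmanteau theorem.

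First, since $X$ is Polish, I would fix a complete metric $d$ inducing the topology and, for each $k \in \mathbb N$, consider the open ball $O_k \coloneqq \{y \in X \mid d(y,x) < 1/k\}$ together with its closure $B_k \coloneqq \closure{O_k}$. Each $O_k$ is open and contains $x$, so $\delta_x(O_k) = 1$, and by the open-set clause of the Portmanteau theorem (\cref{thm:portmanteau}, part iv),
\begin{equation}
    \liminf_{n \to \infty} \mu_n(O_k) \geq \delta_x(O_k) = 1.
\end{equation}
Since $O_k \subseteq B_k$, this upgrades to $\lim_{n \to \infty} \mu_n(B_k) = 1$ for every fixed $k$.

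Next, I would recursively choose a strictly increasing sequence of indices $n_1 < n_2 < n_3 < \cdots$ such that $\mu_n(B_k) \geq 1 - 1/k$ for all $n \geq n_k$; the existence of $n_k$ at each step is guaranteed by the pointwise limit above. Then define the candidate sequence by setting $\Delta_n \coloneqq B_k$, where $k = k(n)$ is the largest integer with $n_k \leq n$ (and, say, $\Delta_n \coloneqq B_1$ for $n < n_1$, which merely affects finitely many terms). By construction $k(n)$ is non-decreasing and tends to $\infty$, so $\Delta_{n+1} \subseteq \Delta_n$ (nested), each $\Delta_n = B_k$ is closed and contains $x$ (hence non-empty), and $\mathrm{diam}(\Delta_n) \leq 2/k(n) \to 0$. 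Thus $(\Delta_n)$ is a Cantor sequence in the sense of \cref{defn:cantor_sequence}, and by \cref{thm:cantor_intersection} its intersection is a single point; since every $B_k$ contains $x$, that point is $x$, so $(\Delta_n)$ converges to $x$.

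Finally, for each $n \geq n_1$, writing $k = k(n)$,
\begin{equation}
    \mu_n(\Delta_n) = \mu_n(B_k) \geq 1 - \tfrac{1}{k(n)} \xrightarrow[n\to\infty]{} 1,
\end{equation}
which yields the desired conclusion. The argument has no real obstacle beyond coordinating the two limits: the weak convergence alone gives $\mu_n(B_k) \to 1$ only for each fixed $k$, and the mild subtlety is to diagonalize this through the choice of $(n_k)$ so that one can simultaneously let $k \to \infty$ while still retaining $\mu_n(\Delta_n) \to 1$. Using closed balls of radius $1/k$ makes the nestedness, closedness, and diameter conditions automatic.
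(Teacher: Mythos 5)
Your proof is correct and follows the same diagonalization strategy as the paper: use weak convergence to establish that, for each fixed ball radius $1/k$, $\mu_n(B_k) \to 1$, and then synchronize the shrinking radius with $n$ so that both limits can be taken at once. The only real difference is bookkeeping---by constructing a strictly increasing sequence $(n_k)$ and defining $k(n)$ as a finite maximum, you automatically sidestep the degenerate case (handled separately in the paper via inner regularity) in which the set defining $q(n)$ might fail to have a minimum, and you invoke the open-set clause (iv) of the Portmanteau theorem directly, which is marginally cleaner than the paper's citation of the closed-set clause (iii).
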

\begin{proof}
    For $\epsilon > 0$, define
    \begin{equation}
        B_{\epsilon}(x) = \{ x' \in X \mid d(x', x) < \epsilon \}.
    \end{equation}
    Since $(\mu_n)_{n\in\mathbb N}$ converges weakly to the point measure $\delta_{x}$ at $x$, by condition (iii) of \cref{thm:portmanteau}, there exists a finite $N_{\epsilon} \in \mathbb N$ such that,
    \begin{equation}
        \label{eq:above_one_minus_epsilon}
        \forall n \geq N_{\epsilon} : \mu_{n}(B_{\epsilon}(x)) \geq \delta_{x}(B_{\epsilon}(x)) - \epsilon = 1 - \epsilon.
    \end{equation}
    Without loss of generality, we can assume that $N_{\epsilon}$ is the minimum $n\in\mathbb N$ such that \cref{eq:above_one_minus_epsilon} holds. Note that $\epsilon \geq \epsilon'$ implies $B_{\epsilon}(x) \supseteq B_{\epsilon'}(x)$ and thus $N_{\epsilon} \leq N_{\epsilon'}$. Next define a sequence $q(n)$ by
    \begin{equation}
        q(n) = \min\{ m^{-1} \mid m\in\mathbb N, n \geq N_{m^{-1}} \}.
    \end{equation}
    If for any finite $n' \in\mathbb N$, $q(n')$ does not exist (i.e., the minimum above does not exist), then one can already conclude that the lemma holds because $\mu_n(\{x\}) = 1$ for all $n \geq n'$ (because measures on Polish spaces are necessarily inner regular). Therefore, we may proceed with the case where the minimum above always exists.

    For each $n$, let $\Delta_n$ be the closure of the smallest open ball around $x$ with radius $\frac{1}{m}$ for some $m \in\mathbb N$ that satisfies, for all $n \in\mathbb N$, the equation
    \begin{equation}
        \mu_n(\Delta_n) \geq \mu_n(B_{x}(q(n))) = \mu_{n}(B_{x}(m^{-1})) \geq 1-\frac{1}{m} = 1 - q(n).
    \end{equation}
    Critically, since $q(n)$ is non-increasing with respect to increasing $n$, we have $\Delta_{n+1} \subseteq \Delta_{n}$ and furthermore $\lim_{n \to \infty} q(n) = 0$. As a consequence, \cref{thm:cantor_intersection} holds and thus $\bigcap_{n\in \mathbb N}\Delta_n = \{x\}$ and
    \begin{equation}
        \lim_{n \to \infty} \mu_n(\Delta_n) = 1
    \end{equation}
    hold simultaneously.
\end{proof}

\subsection{Moments \& cumulants}
\label{sec:moments_cumulants}
\begin{rem}
    Let $R$ be an $\mathbb R$-valued random variable.
    The following functions of $t \in \mathbb R$ encode statistical properties of $R$.
    These are
    \begin{enumerate}[i)]
        \item the \textit{moment} generating function
            $M(t) \coloneqq \mathbb E[\exp(t R)]$,
        \item the \textit{(first) cumulant} generating function
            $K(t) \coloneqq \log \mathbb E[\exp(t R)]$,
        \item the \textit{characteristic} function
            $\varphi(t) \coloneqq \mathbb E[\exp(it R)]$,
        \item and the \textit{(second) cumulant} generating function
            $H(t) \coloneqq \log \mathbb E[\exp(it R)]$.
    \end{enumerate}
    Evidently these expressions are related by formal substitutions of the form $t \mapsto \pm it$ or alternatively $R \mapsto \pm iR$.
\end{rem}
\begin{defn}
    Let $M(t)$ be the moment generating function of a real-valued random variable $R$:
    \begin{equation}
        M(t) = \mathbb E[\exp(tR)].
    \end{equation}
    The \defnsty{$n$th moment} of the random variable $R$, denoted $\mu_n$, is the derivative of $M$ evaluated at $t = 0$:
    \begin{equation}
        \mu_n \coloneqq \mathbb E[R^n] = (\partial_{t}^{n}M)(0).
    \end{equation}
    The cumulant generating function $K$ is simply the logarithm of the moment generating function.
    The \defnsty{$n$th cumulant} of the random variable $R$, denoted $\kappa_n$, is analogously the derivative of $K$ evaluated at $t = 0$:
    \begin{equation}
        \kappa_n \coloneqq (\partial_{t}^{n}K)(0).
    \end{equation}
\end{defn}
\begin{rem}
    The first three derivatives of the cumulant generating function $K$ in relation to the derivatives of the moment generating function are
    \begin{align}
        \begin{split}
            \partial_{t} K &= \frac{\partial_t M}{M}, \qquad
            \partial_{t}^{2} K = \frac{M \partial_t^{2} M - (\partial_t M)^2}{M^2}, \\
            \partial_{t}^{3} K &= \frac{M^{2}\partial_t^{3} M - 3 M (\partial_t^{2} M)(\partial_t M)+ 2 (\partial_{t}M)^{3}}{M^3}.
        \end{split}
    \end{align}
    Evaluating everything at $t = 0$ (where $\mu_0 = 1$ and $\kappa_0 = 0$) reveals
    \begin{align}
        \kappa_1 &= \mu_1 = \mathbb E[R], \qquad
        \kappa_2 = \mu_2 - \mu_1^2 = \mathbb E[R^2] - \mathbb E[R]^2, \\
        \kappa_3 &= \mu_3 - 3 \mu_2\mu_1 + 2\mu_1^3 = \mathbb E[R^3] - 3 \mathbb E[R^2] \mathbb E[R] + 2\mathbb E[R]^3,  \\
        \begin{split}
            \kappa_4 &= \mu_4 - 3 \mu_2^2 - 4\mu_1\mu_3 +12 \mu_1^2 \mu_2 - 6 \mu_1^4, \\
            &= \mathbb E[R^4] - 3 \mathbb E[R^2]^2 - 4\mathbb E[R]\mathbb E[R^3] +12 \mathbb E[R]^{2}\mathbb E[R^2] - 6 \mathbb E[R]^2.
        \end{split}
    \end{align}
\end{rem}

\subsection{Rate functions}
The subject of large deviation theory aims to describe the asymptotic behaviour of sequences of probability measures, $\seq{\mu_n : \borel{X} \to [0,1]}$ over a shared measurable space $(X, \borel{X})$ which, for sufficiently large $n$, assigns vanishingly small probability to certain events which might be considered as a large deviation from the expected event.
In particular, the probabilities associated to these large deviations decay to zero as $n$ tends to infinity at an \textit{exponential} rate.
As the magnitude of this rate of decay depends on the degree of deviation, large deviation theory introduces the concept of a rate \textit{function} which assigns, to each $x \in X$, the rate $I(x) \in [0, \infty]$.

Unless otherwise stated, the sample space $X$ will always be considered to be a Polish space (recall \cref{defn:std_borel_spaces}) and thus the associated Borel measurable space, $(X, \borel{X})$, is a \textit{standard Borel space}.
\begin{defn}[Rate Functions]
    \label{defn:rate_function}
    A function $I : X \to [0, \infty]$ is called a \defnsty{rate function} if for all values of $c \in [0, \infty)$, the \defnsty{lower level set}, $L_{I}(c)$, defined by
    \begin{equation}
        L_{I}(c) \coloneqq \{ x : I(x) \leq c \} \subseteq X,
    \end{equation}
    is a compact subset of $X$.
\end{defn}
Note that some references only require that the lower level sets of a rate function be closed, or equivalently (by \cref{lem:lower_level_sets_equiv}), that a rate function is lower-semicontinuous. 
In these references, a rate function with compact lower level sets is given the adjective \textit{good} \cite{dembo2010large} or less commonly \textit{regular} \cite{lynch1987large}. 
In this paper, all rate functions will be shown to have compact lower level sets for the ease of exposition, thus following the convention of \cite{dupuis2011weak}. 
In cases where the space $X$ is itself compact (which is the case for many property spaces of finite-dimensional quantum states), every closed subset is also compact and thus the distinction becomes unnecessary to make.

In general, the compactness of the lower-level sets of a rate function ensures that the infimum of a rate function, $\inf_{x \in \Delta} I(x)$, when taken over closed subsets $\Delta \subseteq X$, is always attained by some element $x \in \Delta$.

\subsection{Principles of large deviations}

Just as there are multiple equivalent ways to formulate the notion of weak convergence of sequences probability measures using the Portmanteau theorem (\cref{thm:portmanteau}), as noted by \citeauthor{dupuis2011weak}, there are multiple ways to formulate the central notions of large deviation theory~\cite{dupuis2011weak,lynch1987large,dembo2010large,den2008large}.
For a physicist-friendly introduction to the subject of large deviation theory and the foundational role it plays in statistical mechanics, see~\cite{touchette2011basic,touchette2011basic}.

The first notion of large deviation theory is the idea of a sequence of probability measures being tightly concentrated over compact subsets.
\begin{defn}
    \label{defn:exp_tight}
    Let $(X, \borel{X})$ be a standard Borel space.
    A sequence of probability measures $\seq{\mu_n : \borel{X} \to [0,1]}$ is said to be \defnsty{exponentially tight} if for all $r < \infty$ there exists a compact subset $K_{r} \in \borel{X}$ such that 
    \begin{equation}
        \lim_{n\to \infty} \frac{1}{n} \ln \mu_n( X \setminus K_{r}) < - r.
    \end{equation}
\end{defn}
An exponentially tight sequence of probability measures has the property that for any rate $r > 0$, you can find a compact subset $K_{r} \in \borel{X}$ and a sufficiently large $n \in \mathbb N$ such that
\begin{equation}
    \mu_n(K_{r}) > 1 - \exp(- nr / 2).
\end{equation}

A stronger requirement\footnote{That satisfaction of the large deviation principle implies exponential tightness follows from our assumptions that (i) $X$ is a Polish space and (ii) rate functions have compact lower-level sets~\cite[Pg. 8]{dembo2010large}.} than exponential tightness is the notion of the large deviation principle~\cite{dembo2010large}.
\begin{defn}[Large Deviation Principle]
    \label{defn:large_deviation_principle}
    Let $(X, \borel{X})$ be a standard Borel space.
    A sequence of probability measures $\seq{\mu_n : \borel{X} \to [0,1]}$ satisfies the \defnsty{large deviation principle (LDP)} with rate function $I : X \to [0, \infty]$ if it satisfies
    \begin{enumerate}[(i)]
        \item the \defnsty{LDP upper bound}: for each closed subset $C \in \borel{X}$,
            \begin{equation}
                \limsup_{n \to \infty} \frac{1}{n} \ln \mu_{n}(C) \leq -\inf_{x \in C}I(x), \quad \text{and}
            \end{equation}
        \item the \defnsty{LDP lower bound}: for each open subset $O \in \borel{X}$,
            \begin{equation}
                \liminf_{n \to \infty} \frac{1}{n} \ln \mu_{n}(O) \geq -\inf_{x \in O}I(x).
            \end{equation}
    \end{enumerate}
\end{defn}

Before continuing, it is worth noticing that if a sequence of probability measures $\seq{\mu_n : \borel{X} \to [0,1]}$ satisfies the Large deviation principle upper bound for some rate function $I : X \to [0, \infty]$, then letting $C = X$, we conclude that the minimum of $I$ over all $X$ is exactly zero as $\mu_{n}$ (because $\mu_n$ is normalized and thus $\ln \mu_{n}(X) = 0$).
Our first result is concerned with the special case where the rate functions vanish at a single point in $X$.
\begin{lem}
    \label{lem:weak_convergence_to_dirac_measure}
    Let $(X, \borel{X})$ be a standard Borel space and let $\seq{\mu_n : \borel{X} \to [0,1]}$ be a sequence of probability measures satisfying the large deviation principle upper bound with rate function $I : X \to [0,\infty]$. 
    Assume that the rate function vanishes at a single point $x \in X$. 
    Then the sequence $\seq{\mu_n}$ converges weakly to the Dirac measure, $\delta_x$, concentrated at $x$.
\end{lem}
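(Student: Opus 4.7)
The plan is to invoke the equivalence between weak convergence and condition (iii) of the Portmanteau theorem (\cref{thm:portmanteau}): it suffices to show that $\limsup_{n\to\infty} \mu_n(C) \leq \delta_x(C)$ for every closed subset $C \in \borel{X}$. The argument then splits into two cases according to whether $x$ belongs to $C$.

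If $x \in C$, then $\delta_x(C) = 1$ and the desired inequality $\limsup_{n\to\infty} \mu_n(C) \leq 1$ is automatic from the fact that each $\mu_n$ is a probability measure. The interesting case is therefore when $x \notin C$, in which case $\delta_x(C) = 0$ and I need to show $\mu_n(C) \to 0$. The natural tool is the LDP upper bound already assumed in the hypothesis, which gives
\begin{equation}
    \limsup_{n \to \infty} \frac{1}{n} \ln \mu_{n}(C) \leq -\inf_{y \in C} I(y).
\end{equation}
If I can establish that $\inf_{y \in C} I(y) > 0$ strictly, then $\mu_n(C)$ decays exponentially to zero, and in particular $\limsup_{n\to\infty} \mu_n(C) = 0 = \delta_x(C)$, completing the Portmanteau verification.

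The main obstacle, therefore, is ruling out that $\inf_{y \in C} I(y) = 0$ despite $I$ vanishing only at the point $x \notin C$. This is where the compactness of the lower level sets of a rate function (\cref{defn:rate_function}) becomes essential. My plan is to argue by contradiction: suppose there exists a sequence $\seq{y_k}$ in $C$ with $I(y_k) \to 0$. For all sufficiently large $k$, the values $I(y_k)$ are bounded by $1$, so the sequence eventually lies in the compact lower level set $L_{I}(1) \subseteq X$. Extracting a convergent subsequence $y_{k_j} \to y^{*}$, the limit point $y^{*}$ must lie in $C$ since $C$ is closed. Lower semicontinuity of $I$ (which follows from the closedness of its lower level sets) then gives $I(y^{*}) \leq \liminf_{j\to\infty} I(y_{k_j}) = 0$, forcing $y^{*} = x$ by the hypothesis that $I$ vanishes only at $x$. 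This contradicts $y^{*} \in C$ and $x \notin C$.

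Putting the pieces together, the infimum of $I$ over any closed set avoiding $x$ is strictly positive, which, combined with the LDP upper bound, yields exponential decay of $\mu_n(C)$ and hence weak convergence $\mu_n \weak \delta_x$ by Portmanteau. The step that carries the real content is the compactness-plus-lower-semicontinuity argument that converts the pointwise vanishing condition on $I$ into a uniform positive lower bound on closed subsets, and it is crucial here that we are working with rate functions whose level sets are compact rather than merely closed.
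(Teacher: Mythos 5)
Your proof is correct and follows essentially the same strategy as the paper's: reduce to condition~(iii) of the Portmanteau theorem, split on whether $x \in C$, and for $x \notin C$ combine the LDP upper bound with positivity of $\inf_{y \in C} I(y)$. The one place where you are slightly more careful than the paper is the step establishing $\inf_{y \in C} I(y) > 0$. The paper invokes ``the extreme value theorem'' directly, which as literally stated applies to continuous functions on compact sets, whereas here $C$ is merely closed and $I$ merely lower semicontinuous; your sequential-compactness argument (pass to the compact level set $L_I(1)$, extract a convergent subsequence, use lower semicontinuity of $I$ inherited from closedness of its level sets, and derive a contradiction with $y^* \in C$ but $y^* = x \notin C$) spells out exactly why that infimum is attained and is strictly positive. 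Both routes are sound; yours just makes explicit the role of compact level sets that the paper's reference to the extreme value theorem leaves implicit.
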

\begin{proof}
    Note that by the Portmanteau theorem \cref{thm:portmanteau}, $\mu_n \weak \delta_x$ is equivalent to the claim that for all closed subsets $C \subseteq X$,
    \begin{equation}
        \limsup_{n \to \infty} \mu_n(C) \leq \delta_x(C).
    \end{equation}
    If $x \in C$, then $\delta_x(C) = 1$ and thus the above bound would hold simply because $\mu_n$ is a probability measure. 
    What remains to prove, therefore, is the case where $x \not \in C$. 
    When $x \not \in C$, $\delta_x(C) = 0$ and thus the limit above becomes equivalent to the following condition:
    \begin{equation}
        \label{eq:limit_zero_closed}
        \lim_{n \to \infty} \mu_n(C) = 0.
    \end{equation}
    To prove this limit holds, note that the sequence $\seq{\mu_n}$ satisfies the large deviation principle upper bound with rate function $I : X \to [0, \infty]$. Therefore, for all closed subsets $C \subseteq X$,
    \begin{equation}
        \limsup_{n \to \infty} \frac{1}{n} \ln \mu_n(C) \leq - I(C), \quad \text{where} \quad I(C) \coloneqq \inf_{x \in C}I(x).
    \end{equation}
    Consequently, for any $\epsilon > 0$, there exists an $N \in \mathbb N$ such that for all $n \geq N$,
    \begin{equation}
        \frac{1}{n} \ln \mu_n(C) \leq - I(C) + \epsilon.
    \end{equation}
    Since $C$ is closed, and $I$ has compact lower level sets, by the extreme value theorem, $I(C) = \inf_{x \in C} I(x)$ is attained by some $x^{\ast} \in C$. Since $I$ only vanishes at $x$ and $x \not \in C$, we can conclude that $I(C) > 0$ is strictly positive. Taking $\epsilon = I(C)/2 > 0$ yields for all $n \geq N$,
    \begin{equation}
        \mu_n(C) \leq \exp(- nI(C)/2).
    \end{equation}
    Therefore, taking the limit as $n \to \infty$ yields \cref{eq:limit_zero_closed} and therefore the lemma holds.
\end{proof}

Once it has been established that a sequence of measures $\seq{\mu_n : \borel{X} \to [0,1]}$ satisfies the large deviation principle with rate function $I : X \to [0,\infty]$, the contraction principle enables one to prove the large deviation principle holds sequences obtained through the pushforward of a continuous function~\cite[Theorem 4.2.1]{dembo2010large}.
\begin{prop}[Contraction principle]
    Let $X$ and $Y$ be Hausdorff topological spaces and $g : X \to Y$ a continuous function. 
    Consider a rate function $I : X \to [0, \infty]$ and define $I' : Y \to [0, \infty]$ for each $y \in Y$ with $g^{-1}(y)$ non-empty by
    \begin{equation}
        I'(y) = \inf\{ I(x) \mid x \in X, y = g(x) \}.
    \end{equation}
    and otherwise $I'(y) = \infty$. Then $I'$ is a rate function.
    Moreover, if $\seq{\mu_n}$ satisfies the LDP with rate function $I$, then $\seq{\mu_n \circ g^{-1}}$ satisfies the LDP with rate function $I'$.
\end{prop}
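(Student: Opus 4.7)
The plan is to prove the two assertions in sequence: first verify that $I'$ is a rate function, then establish the LDP upper and lower bounds for $\seq{\mu_n \circ g^{-1}}$ by pulling the relevant closed/open sets back through $g$.

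For the first part, I would argue that for each $c \in [0, \infty)$, the lower level set satisfies $L_{I'}(c) = g(L_I(c))$, from which compactness of $L_{I'}(c)$ follows immediately because continuous images of compact sets in Hausdorff spaces are compact. The inclusion $g(L_I(c)) \subseteq L_{I'}(c)$ is immediate from the defining infimum. For the reverse inclusion, given $y \in L_{I'}(c)$, pick a sequence (or net, since $X$ is only assumed Hausdorff) $x_\alpha \in g^{-1}(y)$ with $I(x_\alpha) \to I'(y) \leq c$. Eventually $x_\alpha \in L_I(c+1)$, which is compact, so a subnet converges to some $x \in X$. Continuity of $g$ gives $g(x) = y$, and lower semicontinuity of $I$ (which follows from the fact that its lower level sets are compact hence closed in the Hausdorff space $X$) gives $I(x) \leq \liminf I(x_\alpha) = I'(y) \leq c$, so $x \in L_I(c)$ and $y \in g(L_I(c))$. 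This argument simultaneously shows the infimum in the definition of $I'$ is actually attained whenever $g^{-1}(y)$ meets some $L_I(c)$.

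For the second part, the key observation is the identity
\begin{equation}
    \inf_{x \in g^{-1}(S)} I(x) = \inf_{y \in S}\, \inf_{x \in g^{-1}(y)} I(x) = \inf_{y \in S} I'(y),
\end{equation}
valid for any $S \subseteq Y$. Given a closed $C \subseteq Y$, continuity of $g$ implies $g^{-1}(C)$ is closed, so the LDP upper bound for $\seq{\mu_n}$ applied to $g^{-1}(C)$ together with the identity above yields the LDP upper bound for $\seq{\mu_n \circ g^{-1}}$ on $C$. Similarly, for open $O \subseteq Y$, $g^{-1}(O)$ is open, and the LDP lower bound for $\seq{\mu_n}$ applied to $g^{-1}(O)$ delivers the corresponding lower bound.

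Most of the proof is routine bookkeeping; the only subtlety I anticipate is verifying compactness of $L_{I'}(c)$ under our convention that rate functions have compact (not merely closed) lower level sets. The delicate point is justifying passage to a convergent subnet of the minimizing sequence $x_\alpha$ and invoking lower semicontinuity of $I$ to conclude that the limiting $x$ still satisfies $I(x) \leq c$. In the special case where $X$ is first countable (e.g., metrizable, which is our usual setting via Polish spaces), one can work with honest sequences rather than nets and the argument becomes entirely elementary.
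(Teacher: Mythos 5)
Your proof is correct, and since the paper gives no proof of its own (deferring to \cite[Theorem 4.2.1]{dembo2010large}), the natural comparison is to that reference, which your argument essentially reproduces: establish $L_{I'}(c) = g(L_I(c))$ via a minimizing net together with compactness and lower semicontinuity of $I$, then pull closed and open sets back through $g$ to transfer the LDP bounds. One small imprecision worth fixing: continuous images of compact sets are compact in any topological space (the Hausdorff hypothesis on $Y$ is not needed for that step; it is needed to conclude $g(x) = y$ from the limit of the subnet, and the Hausdorff hypothesis on $X$ is needed for compact level sets of $I$ to be closed, giving lower semicontinuity).
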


\section{Quantum theory}
\label{sec:quantum_theory}

\subsection{Quantum states}
\label{sec:quantum_states}

While the state of a system in quantum theory can, depending on the particular application, be modeled by a variety of different mathematical objects, the notion of a \textit{Hilbert space}, which is a complete inner product space, plays a central role. 
Throughout this thesis, the only kind of Hilbert space that will be considered will be a complex, finite-dimensional Hilbert space.

\begin{defn}
    \label{defn:complex_fin_dim_Hilb_space}
    A \defnsty{complex finite-dimensional Hilbert space}, denoted $\s H$, is a complex vector space, $\s H$, of dimension $d = \dim(\s H) < \infty$, equipped with a sesquilinear inner product $\braket{\cdot, \cdot}: \s H \times \s H \to \mathbb C$ which induces the norm $\norm{\cdot} : \s H \to \mathbb R_{\geq 0}$ defined by
    \begin{equation}
        \norm{v} \coloneq \sqrt{\braket{v, v}}.
    \end{equation}
\end{defn}

\begin{exam}
    \label{exam:euclidean}
    The canonical example of a complex $d$-dimensional Hilbert space is the set $\mathbb C^{d}$ of all tuples $(z_1, \ldots, z_d)$ of $d$ complex numbers, together with the standard basis $\{ e_1, \ldots, e_d \}$ where $e_j$ is the tuple of all zeros expect for a one in the $j$th position,
    \begin{equation}
        e_j \coloneqq (0, \stackrel{j-1}{\ldots}, 0, 1, 0, \stackrel{d-j}{\ldots}, 0).
    \end{equation}
    The standard inner product, $(\cdot, \cdot)$, is defined by 
    \begin{equation}
        (v, w) = \sum_{j=1}^{d} v_j^{*} w_j
    \end{equation}
    where $z^{*}$ denotes the complex conjugation of the complex number $z$.
\end{exam}

\begin{defn}
    A basis $\{e_1, \ldots, e_d\} \subset \s H$ of a $d$-dimensional complex Hilbert space is \defnsty{orthogonal} if 
    \begin{equation}
        \forall i, j : \braket{e_i, e_j} = 0,
    \end{equation}
    and \defnsty{orthonormal} if, in addition to being orthogonal, it satisfies for all $i$, $\norm{e_i} = \sqrt{\braket{e_i,e_i}} = 1$.
\end{defn}

\begin{defn}
    \label{defn:endomorphisms}
    Given a complex finite-dimensional Hilbert space, $\s H$, the set of linear mappings from $\s H$ to itself will be denoted by $\End(\s H)$.
    The \defnsty{$*$-involution} or \defnsty{adjoint} on $\End(\s H)$ is a linear map sending $X \in \End(\s H)$ to the unique linear map $X^{*} \in \End(\s H)$ satisfying for all $v,w \in \s H$ the equation
    \begin{equation}
        \braket{X^{*} v, w} = \braket{v, X w}.
    \end{equation}
    This map is an involution in the sense that $(X^{*})^{*} = X$ for all $X \in \End(\s H)$.
\end{defn}
\begin{exam}
    Given an orthonormal basis $\{e_1, \ldots, e_d\}$ for $\s H$, define the \defnsty{trace} operation as the linear map 
    \begin{equation}
        \Tr : \End(\s H) \to \mathbb C,
    \end{equation}
    defined for all $X \in \End(\s H)$ by
    \begin{equation}
        \Tr(X) \coloneqq \sum_{j=1}^{d} \braket{e_j, X e_j}.
    \end{equation}
    The \defnsty{Hilbert-Schmidt inner product} is an inner product on $\End(\s H)$ defined by
    \begin{equation}
        \braket{X, Y}_{\mathrm{HS}} \coloneqq \Tr(X^{*} Y) = \sum_{j=1}^{d} \braket{e_j, X^{*} Y e_j} = \sum_{j=1}^{d} \braket{X e_j, Y e_j}.
    \end{equation}
    Equipping $\End(\s H)$ with the Hilbert-Schmidt inner product makes it into a complex $d^{2}$-dimensional Hilbert space.
\end{exam}

\begin{defn}
    \label{defn:unit_sphere}
    Let $d \in \mathbb N$ be a positive integer and let $\s H$ be a complex $d$-dimensional Hilbert space with norm $\norm{\cdot}$.
    The \defnsty{unit sphere} for $\s H$, denoted $S^{2d-1}$, is the $(2d-1)$ dimensional manifold of vectors $v \in\s H$ with unit norm:
    \begin{equation}
        S^{2d-1} = \{ v \in \s H \mid \norm{v} = 1 \}.
    \end{equation}
\end{defn}

\begin{defn}
    \label{defn:projective_space}
    Let $\s H$ be a complex finite-dimensional Hilbert space.
    A one-dimensional subspace of $\s H$ is called a \defnsty{ray}.
    The \defnsty{projective space} associated to $\s H$, denoted by $\proj \s H$, is the set of rays in $\s H$:
    \begin{equation}
        \proj \s H \coloneqq \{ \psi \subseteq \s H \mid \dim(\psi) = 1 \}.
    \end{equation}
    Let the set of non-zero vectors in $\s H$ be denoted by
    \begin{equation}
        \wozero{\s H} = \s H \setminus \{0\}.
    \end{equation}
    Given a non-zero vector $v \in \wozero{\s H}$, the unique ray spanned by complex scalar multiples of $v$ is written as
    \begin{equation}
        [v] \coloneqq \mathbb C v = \{ z v \in \s H \mid z \in \mathbb C \} \in \proj \s H.
    \end{equation}
\end{defn}
\begin{rem}
    If the dimension of the complex Hilbert space $\s H$ is $d = \dim(\s H) \in \mathbb N$, the projective space associated to $\s H$, $\proj \s H \cong \proj \mathbb C^d$, is also sometimes denoted by $\mathbb C P^{d-1}$.
\end{rem}

In this thesis, a \textit{pure} quantum state will refer to a one-dimensional subspace $\psi \subseteq \s H$ of a complex finite-dimensional Hilbert space $\s H$.
In this manner, the set of all pure quantum states is identified with the projective space $\proj \s H$.
Equivalently, one could consider pure quantum states to be equivalence classes of unit vectors $v \in \s H$ related to each other through multiplication by a complex phase $e^{i \theta}$ for some $\theta \in [0, 2 \pi)$.
Another equivalent way to view pure quantum states is through the injective map,
\begin{equation}
    P : \proj \s H \to \End(\s H)
\end{equation}
from $\proj \s H$ to $\End(\s H)$ sending each ray $\psi \in \proj \s H$ to the unique operator $P_{\psi} \in \End(\s H)$ that is Hermitian ($P_{\psi}^{*} = P_{\psi}$), is idempotent/projective ($P_{\psi}^{2} = P_{\psi}$), and has $\psi$ as its eigenspace with eigenvalue one.

Alternatively, one can define $P_{\psi}$ by choosing from the ray $\psi$ any non-zero element $v \in \wozero{\psi}$ (recall that $\wozero{\psi} = \psi \setminus \{0\}$) as a fudicial or representative element of the ray. 
Using this representative element, together with an arbitrary operator $X \in \End(\s H)$, we have
\begin{equation}
    \label{eq:ray_element}
    \Tr(P_{\psi} X) = \frac{\braket{v, Xv}}{\braket{v,v}}.
\end{equation}
Traditionally, one picks the fudicial element $v \in \wozero{\psi}$ to have unit norm, i.e. $\braket{v, v} = 1$, so that the above formula simplifies further to $\Tr(P_{\psi} X) = \braket{v, Xv}$.
The fundamental relationship between rays, $\psi \in \proj \s H$, and representation elements, $v \in \wozero{\psi}$, captured by \cref{eq:ray_element}, will be used heavily in \cref{chap:estimation_theory}.

Furthermore, if a linear operator $X \in \End(\s H)$ is \textit{invertible}, then its kernel is empty and therefore for any ray $\psi \in \proj \s H$,
\begin{equation}
    \mathrm{ker}(X) \cap \psi \neq 0.
\end{equation}
In this way, every invertible linear operator, $X \in \End(X)$, lifts to an action on the projective space, $\proj \s H$, such that $X\cdot \psi \in \proj \s H$ is defined as the ray $X\cdot \psi = [X v]$ where $v$ is any non-zero vector $v \in \wozero{\psi}$.

A more general class of quantum states that go beyond pure quantum states are those which can be described by density operators.
\begin{defn}
    Given a complex finite-dimensional Hilbert space, a \defnsty{density operator} is an operator $\rho \in \End(\s H)$ such that $\rho$ is
    \begin{itemize}
        \item \textit{positive semidefinite}: $\rho^{*} = \rho$ and $\rho \geq 0$, i.e.,
            \begin{equation}
                \forall v \in \s H : \braket{v, \rho v} \geq 0,
            \end{equation}
        \item and \textit{normalized}: meaning $\Tr(\rho) = 1$.
    \end{itemize}
    The set of all density operators on $\s H$ will be expressed as $\state$, $\state(\s H)$ or $\state_{\s H}$, depending on the context.
\end{defn}
Evidently, a rank-one projector operator $P_{\psi}$ onto a ray $\psi \in \proj \s H$, which models \textit{pure} quantum states, is an example of a density operator. Density operators which are not projective correspond to \textit{mixed} quantum states.

Occasionally, we will make use of a more abstract notion of a quantum state as a linear, positive and normalized map from a $C^{*}$-algebra to $\mathbb C$.
\begin{defn}
    A \defnsty{Banach algebra} $\s A$ is an associative algebra that is complete as a metric space induced by the norm $\norm{\cdot} : A \to \mathbb R_{\geq 0}$ which is required to be submultiplicative:
    \begin{equation}
        \forall X, Y \in \s A : \norm{XY} \leq \norm{X} \norm{Y}.
    \end{equation}
    A \defnsty{$C^{*}$ algebra} is a Banach algebra $\s A$ equipped with an map $* : \s A \to \s A$ such that
    \begin{enumerate}[i)]
        \item for all $X \in \s A$, $(X^{*})^{*} = X$,
        \item for all $X, Y \in \s A$, $(X + Y)^{*} = X^{*} + Y^{*}$ and $(XY)^{*} = Y^{*} X^{*}$,
        \item for all $z \in \mathbb C$ and $X \in \s A$, $(\lambda X)^{*} = \lambda^{*} X^{*}$, and
        \item for all $X \in \s A$, $\norm{X^{*} X} = \norm{X}^2$.
    \end{enumerate}
\end{defn}
\begin{rem}
    \label{rem:unital_algebras}
    A $C^{*}$ algebra $\s A$ is said to be \defnsty{unital} if the underlying algebra has a unit element $1_{\s A} \in \s A$ such that
    \begin{equation}
        \forall X \in \s A : 1_{\s A} X = X 1_{\s A} = X.
    \end{equation}
    Henceforth it will be implicitly assumed that all $C^{*}$ algebras are unital.
\end{rem}
\begin{exam}
    \label{exam:concrete_C_star_alg}
    A particularly concrete example of a $C^{*}$-algebra is the algebra $\Mat_{d}(\mathbb C)$ of matrices on a complex $d$-dimensional normed vector space $\mathbb C^{d}$ where the involution $X \mapsto X^{*}$ is the conjugate transpose of the matrix $X \in \Mat_{d}(\mathbb C)$.
    To make $\Mat_{d}(\mathbb C)$ into a $C^{*}$-algebra, the norm $\norm{\cdot} : \Mat_{d}(\mathbb C)$ can be any submultiplicative norm. 
    For instance, the \defnsty{Frobenius norm} (also known as the \textit{Hilbert Schmidt} norm),
    \begin{equation}
        \norm{X}_{\mathrm F} \coloneqq \sqrt{\tr(X^{*} X)},
    \end{equation}
    and the \defnsty{operator norm}
    \begin{equation}
        \norm{X}_{\mathrm {op}} \coloneqq \sup\{\norm{Xv} \mid v \in V, \norm{v} = 1\},
    \end{equation}
    are two examples of submultiplicative norms.
    These two norms are related by the inequality $\norm{X}_{\mathrm{op}} \leq \norm{X}_{\mathrm F}$ which holds for any $X \in \Mat_{d}(\mathbb C)$.
    Furthermore, if $\s H$ is a $d$-dimensional complex Hilbert space, then after fixing an orthonormal basis we have an isomorphism $\End(\s H) \simeq \Mat_{d}(\mathbb C)$ as $C^{*}$-algebras.
\end{exam}
\begin{defn}
    \label{defn:C_star_state}
    A \defnsty{state} of a $C^*$-algebra $\s A$ is a map $\varphi : \s A \to \mathbb C$ which is 
    \begin{enumerate}[i)]
        \item \textit{linear}: $\varphi(X + Y) = \varphi(X) + \varphi(Y)$, for all $X, Y \in A$,
        \item \textit{positive} meaning $\varphi(X^*X) \geq 0$, for all $X \in A$, and
        \item \textit{normalized} meaning $\varphi(1_{\s A}) = 1$ where $1_{\s A}$ is the unit element of $\s A$ (see \cref{rem:unital_algebras}).
    \end{enumerate}
\end{defn}

\begin{exam}
    \label{exam:pure_mixed_states}
    The states of a $C^{*}$-algebra (\cref{defn:C_star_state}) generalize the aforementioned notions of pure (or mixed) states of a finite-dimensional Hilbert space.
    To see this explicitly, let $\s A = \End(\s H)$ be the $C^{*}$-algebra of operators onto a complex finite-dimensional Hilbert space $\s H$ outlined in \cref{exam:concrete_C_star_alg}.
    Associated to any ray $\psi \in \proj\s H$ the exists a state $\varphi_{\psi} : \End(\s H) \to \mathbb C$ defined for $X \in \End(\s H)$ by
    \begin{equation}
        \varphi_{\psi} (X) = \Tr(P_{\psi} X) = \frac{\braket{v, X v}}{\braket{v, v}}.
    \end{equation}
    More generally, associated to any density operator $\rho \in \End(\s H)$ there exists a state $\varphi_{\rho} : \End(\s H) \to \mathbb C$ defined for all $X \in \End(\s H)$ by
    \begin{equation}
        \varphi_{\rho} (X) = \Tr(\rho X).
    \end{equation}
    In either case, if $\varphi : \End(\s H) \to \mathbb C$ is a state, the following inequalities hold for all $X \in \End(\s H)$:
    \begin{equation}
        \label{eq:state_op_bound}
        \abs{\varphi(X)} \leq \norm{X}_{\mathrm{op}} \leq \norm{X}_{\mathrm{F}}.
    \end{equation}
\end{exam}
\begin{lem}
    \label{lem:positive_gram_matrix}
    Let $\varphi : \s A \to \mathbb C$ be positive linear map on a $C^{*}$-algebra $\s A$.
    Furthermore, let $(X_1, \ldots, X_k) \subset A$ be a $k$-tuple of elements of $\s A$.
    Define the \defnsty{Gram matrix} $G \in \Mat_{k}(\mathbb C)$ to be the $k\times k$ complex matrix with entries $G_{i,j} \in \mathbb C$ defined by
    \begin{equation}
        G_{i,j} \coloneqq \varphi(X_i^{*} X_j).
    \end{equation}
    Then $G \geq 0$ is a positive semidefinite matrix, i.e., for all $v = (v_1, \ldots, v_k) \in \mathbb C^{k}$,
    \begin{equation}
        v^{*} G v = \sum_{i,j} v_{i}^{*}\varphi(X_i^{*} X_j)v_{j} = \varphi(Y^{*}Y)\geq 0.
    \end{equation}
    where $Y \coloneqq \sum_{j=1}^{k} v_j X_j \in \s A$.
\end{lem}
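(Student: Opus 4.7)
The plan is to verify directly that the Gram matrix $G$ defined by $G_{i,j} = \varphi(X_i^{*} X_j)$ is positive semidefinite by exhibiting, for each vector $v = (v_1, \ldots, v_k) \in \mathbb C^{k}$, a single element $Y \in \s A$ such that $v^{*} G v = \varphi(Y^{*} Y)$. Since the statement of the lemma essentially names this element $Y \coloneqq \sum_{j=1}^{k} v_j X_j$, the entire argument is a short algebraic manipulation followed by a single appeal to positivity of $\varphi$.

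First I would fix an arbitrary $v \in \mathbb C^{k}$ and expand $v^{*} G v$ as the double sum $\sum_{i,j} v_i^{*} \varphi(X_i^{*} X_j) v_j$. Using linearity of $\varphi$ (in fact, $\mathbb C$-linearity, which allows pulling the scalars $v_i^{*}$ and $v_j$ inside $\varphi$), this rewrites as
\begin{equation}
    v^{*} G v = \varphi\!\left(\sum_{i,j} v_i^{*} X_i^{*} X_j v_j\right) = \varphi\!\left(\Big(\sum_{i} v_i X_i\Big)^{\!*}\Big(\sum_{j} v_j X_j\Big)\right),
\end{equation}
where the second equality uses the compatibility of the involution $*$ with the $\mathbb C$-algebra structure, namely $(zX)^{*} = z^{*} X^{*}$ and $(X+Y)^{*} = X^{*} + Y^{*}$, both of which are part of the definition of a $C^{*}$-algebra.

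Setting $Y \coloneqq \sum_{j=1}^{k} v_j X_j \in \s A$, the right-hand side is exactly $\varphi(Y^{*} Y)$, which is non-negative by the assumed positivity of $\varphi$. Since $v$ was arbitrary, this establishes $G \geq 0$.

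There is no real obstacle here: the only subtle point is ensuring that the conjugation of scalars and the involution of algebra elements are handled consistently, i.e., that $v_i^{*} X_i^{*} = (v_i X_i)^{*}$, which is immediate from the $C^{*}$-algebra axioms. No deeper structural property of $\s A$ (such as the $C^{*}$-identity $\norm{X^{*} X} = \norm{X}^{2}$) is needed; the result uses only the $*$-algebra structure together with positivity of $\varphi$.
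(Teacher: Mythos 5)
Your proof is correct and is exactly the argument the paper intends: the lemma statement already names $Y = \sum_j v_j X_j$ and displays the identity $v^* G v = \varphi(Y^*Y)$, leaving only the linear algebra you spell out (linearity of $\varphi$, antilinearity of $*$, and positivity of $\varphi$). The paper provides no separate proof environment, so your filled-in version matches its implicit argument.
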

\begin{cor}
    \label{cor:cs_ineq_states}
    Let $\varphi : \s A \to \mathbb C$ be a positive functional of a $C^*$-algebra $\s A$.
    Then for all $X,Y \in A$,
    \begin{equation}
        \abs{\varphi(X^*Y)}^2 \leq \varphi(X^* X) \varphi(Y^* Y).
    \end{equation}
\end{cor}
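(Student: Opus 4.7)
The plan is to deduce the inequality as the $k=2$ instance of \cref{lem:positive_gram_matrix}. First I would apply that lemma to the pair $(X_1, X_2) = (X, Y)$, which produces the $2 \times 2$ Gram matrix
\[
G = \begin{pmatrix} \varphi(X^{*}X) & \varphi(X^{*}Y) \\ \varphi(Y^{*}X) & \varphi(Y^{*}Y) \end{pmatrix}
\]
satisfying $v^{*}Gv \geq 0$ for every $v \in \mathbb C^{2}$. Because any complex matrix whose quadratic form $v \mapsto v^{*}Gv$ takes only real values is necessarily Hermitian, this implies $\varphi(Y^{*}X) = \overline{\varphi(X^{*}Y)}$, and the diagonal entries of $G$ are non-negative reals.

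The second step is to extract the Cauchy-Schwarz bound from the positive semidefiniteness of $G$. Since every $2 \times 2$ Hermitian positive semidefinite matrix has non-negative determinant, expanding $\det(G) \geq 0$ directly yields
\[
\varphi(X^{*}X)\varphi(Y^{*}Y) - |\varphi(X^{*}Y)|^{2} \geq 0,
\]
which is the desired inequality. Alternatively, one could substitute the specific vector $v = (\varphi(X^{*}Y), -\varphi(X^{*}X))$ into $v^{*}Gv \geq 0$ and simplify to reach the same conclusion. I do not anticipate any real obstacle: all of the analytic content has already been absorbed into \cref{lem:positive_gram_matrix}, and the corollary is simply the classical derivation of Cauchy-Schwarz as the two-dimensional determinantal shadow of Gram-matrix positivity.
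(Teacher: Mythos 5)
Your argument is correct, but it is not the route the paper takes as its primary proof. The paper's main argument sets $Z = X + \lambda Y$, expands $\varphi(Z^{*}Z) \geq 0$, and then chooses $\lambda$ so that $\lambda\,\varphi(X^{*}Y)$ becomes a real multiple $r\,\abs{\varphi(X^{*}Y)}$; this turns the positivity of $\varphi(Z^{*}Z)$ into the non-negativity of a real quadratic in $r$, whose discriminant must therefore be $\leq 0$, and that discriminant inequality is exactly the claim. Your determinant-of-Gram-matrix argument is what the paper tacks on as a one-line alternative at the very end of its proof (``Alternatively, the lemma follows by noticing that the difference \ldots is the determinant of the positive semidefinite matrix defined in \cref{lem:positive_gram_matrix} for the case where $k=2$''). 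What your route buys is economy: all the work is already in \cref{lem:positive_gram_matrix}, and you only need the standard facts that a complex matrix with real quadratic form is Hermitian and that a $2\times 2$ Hermitian PSD matrix has non-negative determinant. What the paper's discriminant route buys is self-containment and the avoidance of any appeal to Hermiticity or determinants; it re-derives Cauchy--Schwarz directly from the positivity axiom, mirroring the classical inner-product-space proof. Both are sound.
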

\begin{proof}
    Let $Z = X + \lambda Y$ for some $\lambda \in \mathbb C$ then
    \begin{equation}
        \varphi(Z^* Z) = \varphi((X + \lambda Y)^*(X + \lambda Y)) = \varphi(X^*X) + \abs{\lambda}^2\varphi(Y^* Y) + 2 \mathrm{Re}\{\lambda \varphi(X^* Y)\}.
    \end{equation}
    Without loss of generality, for any real $r \in \mathbb R$, one can pick $\lambda$ such that $\lambda \varphi(X^* Y) = r\abs{\varphi(X^* Y)} \in \mathbb R$.
    Then
    \begin{equation}
        0 \leq \varphi(Z^* Z) = \varphi(X^*X) + r^2\varphi(Y^* Y) + 2 r \abs{\varphi(X^* Y)}.
    \end{equation}
    Since the right-hand side is a quadratic polynomial in $r \in \mathbb R$, the lower-bound above implies its discriminant must be non-positive
    \begin{equation}
        (2 \abs{\varphi(X^* Y)})^2 - 4 \varphi(Y^* Y) \varphi(X^*X) \leq 0,
    \end{equation}
    which proves the lemma. 
    Alternatively, the lemma follows by noticing that the difference $\varphi(X^* X) \varphi(Y^* Y) - \abs{\varphi(X^*Y)}^2$ is the determinant of the positive semidefinite matrix defined in \cref{lem:positive_gram_matrix} for the case where $k = 2$ and $(X_1, X_2) = (X, Y)$.
\end{proof}

\subsection{Measurements}
\label{sec:quantum_measurements}

The most general form of quantum measurement that will considered in this thesis is that of a positive-operator-valued measure, often abbreviated as POVM.

\begin{defn}
    \label{defn:povm}
    Let $(X, \Sigma)$ be a measurable space, and let $\End(\s H)$ be the set of (necessarily bounded) operators acting on a finite-dimensional complex Hilbert space $\s H$.
    Then a function $E : \Sigma \to \bound(\s H)$ is called a \defnsty{positive-operator-valued measure} or \defnsty{POVM} if it satisfies
    \begin{itemize}
        \item \textit{positivity}: for all $\Delta \in \Sigma$, the operator $E(\Delta)$ is positive semidefinite,
        \begin{equation}
            \forall \Delta \in \Sigma : E(\Delta) \geq 0,
        \end{equation}
        \item \textit{nullity}: $E(\emptyset) = 0_{\s H}$ where $0_{\s H}$ is the zero operator on $\s H$,
        \item \textit{countable additivity}: for all countable collections $\{\Delta_{j} \in \Sigma\}_{j \in \mathbb N}$, of pairwise disjoint sets, $E$ satisfies
            \begin{equation}
                E \left( \bigcup_{j\in \mathbb N} \Delta_{j} \right) =  \sum_{j\in \mathbb N} E \left(\Delta_{j} \right),
            \end{equation}
        \item \textit{normalization}: $E(X) = \ident_{\s H}$ where $\ident_{\s H}$ is the identity operator on $\s H$. 
    \end{itemize}
\end{defn}
\begin{rem}
    Note that a positive-operator-valued measure $E : \Sigma(X) \to \bound(\s H)$ can be viewed as a kind of state-dependent probability measure on $X$.
    More precisely, for each state $\rho \in \s S(\s H)$, the map $\xi_{\rho} : \Sigma(X) \to [0,1]$ defined for $\Delta \in \Sigma(X)$ by
    \begin{equation}
        \xi_{\rho}(\Delta) = \Tr(\rho E(\Delta))
    \end{equation}
    is a probability measure on $X$ in the sense of \cref{defn:measure}.
\end{rem}

\begin{exam}
    \label{exam:observable_measurement}
    Associated to every Hermitian operator $A$ acting on a complex finite-dimensional complex Hilbert space $\s H$ is a canonical positive-operator-valued measure over the real numbers $\mathbb R$ (equipped with the standard Borel $\sigma$-algebra on $\mathbb R$), denoted by $E_{A} : \borel{\mathbb R} \to \bound(\s H)$.
    Let $\spec(A) \subset \mathbb R$ be the spectrum of $A$ (i.e., the set of its eigenvalues), and let
    \begin{equation}
        A = \sum_{a \in \spec(A)} a P_{a}
    \end{equation}
    be the spectral decomposition $A$, where $P_a$ is the orthogonal projection operator onto eigenspace of $A$ with eigenvalue $a$.
    Now for each interval $\Delta \subset \mathbb R$ in $\bound(\s H)$, $E_{A}(\Delta)$ is defined by
    \begin{equation}
        E_{A}(\Delta) = \sum_{a \in \Delta \cap \spec(A)} P_{a}.
    \end{equation} 
    Indeed, for any quantum state $\rho \in \s S(\s H)$, the function sending $a \in \spec(A)$ to $p(a | \rho) \in [0,1]$, defined by
    \begin{equation}
        p(a|\rho) = \Tr(\rho P_a),
    \end{equation}
    is a discrete probability distribution over the spectrum of $A$. 
\end{exam}

\begin{exam}
    \label{exam:density_povm_construct}
    In most instances, a POVM $E : \borel{X} \to \bound(\s H)$ over a standard measurable space $(X, \borel{X})$ can be constructed from a probability measure $\nu$ on $X$ together with a function $g : X \to \bound(\s H)$, sometimes called the \defnsty{positive-operator density}.
    Specifically, for $\Delta \in \borel{X}$, the value of $E(\Delta)$ is expressed as
    \begin{equation}
        E(\Delta) \coloneqq \int_{\Delta} g \diff \nu (x) = \int_{x \in \Delta} g(x) \diff \nu (x).
    \end{equation}
    Of course, in order for this construction to be well-defined, the function $g : X \to \bound(\s H)$ needs to satisfy a number of conditions.
    First and foremost, $g$ must be measurable (so that the integration above can be performed) and normalized such that $E(X) = \ident_{\s H}$.
    Moreover, to ensure the positivity condition holds, the function $g : X \to \bound(\s H)$ must, at the very least, be $\nu$-almost everywhere positive meaning there exists a sufficiently large (possibly-empty) subset $N \in \borel{X}$ with zero measure, $\nu(N) = 0$, such that for all $x \in X \setminus N$, $g(x)$ is a positive-semidefinite operator.
\end{exam}

\section{Group theory}
\label{sec:group_theory}

This section and the subsequent section (\cref{sec:rep_theory}) reviews the essential ingredients of group theory and representation theory respectively.
The conceptual advantages offered by acknowledging the import of representation theory into the development of quantum theory can be dated back to 1930 and \citeauthor{weyl1950theory}'s textbook on the subject~\cite{weyl1950theory} (later translated).
Any reader interested in approaching the subject of quantum theory from the perspective of group representation theory is recommended to read the accessible yet comprehensive introductory textbook by \citeauthor{woit2017quantum}~\cite{woit2017quantum}.

\subsection{Groups}

\begin{defn}
    A \defnsty{group} is a non-empty set $G$ equipped with a binary operation, $\cdot : G \times G \to G$, called the \defnsty{group multiplication} which satisfies three conditions:
    \begin{enumerate}[i)]
        \item \textit{associativity}: for all $a, b, c \in G$, $(a\cdot b) \cdot c = a \cdot (b \cdot c)$,
        \item \textit{identity element}: there exists $e \in G$ such that $e \cdot g = g = g \cdot e$ for all $g \in G$, and
        \item \textit{invertibility}: for all $g \in G$, there exists a $g^{-1} \in G$ such that $g \cdot g^{-1} = g^{-1} \cdot g = e$.
    \end{enumerate}
    The cardinality of $G$, denoted $\card{G}$, is also called the \defnsty{order} of the group.
\end{defn}

\begin{defn}
    A \defnsty{group homomorphism} from $G$ to $G'$ is a function $\alpha : G \to G'$ such that for all $g_1, g_2 \in G$,
    \begin{equation}
        \alpha(g_1 \cdot g_2) = \alpha(g_1) \cdot \alpha(g_2).
    \end{equation}
    If a group homomorphism, $\alpha : G \to G'$, is invertible and its inverse $\alpha^{-1} : G' \to G$ is also a group homomorphism from $G'$ to $G$, then $\alpha$ is called a \defnsty{group isomorphism} and $G$ and $G'$ are said to be \defnsty{isomorphic}.
\end{defn}

\begin{defn}
    A group $H$ is said to be a \defnsty{subgroup} of a group $G$, and written as $H \subseteq G$, if there exists an injective group homomorphism $\alpha : H \to G$ from $H$ to $G$. 
    Typically, $H$ is identified as a subset in $G$ through its image under $\alpha$ (in which case $\alpha(h) = h$).
    A \defnsty{left coset} of a subgroup $H$ in $G$ is a set of the form
    \begin{equation}
        g H \coloneqq \{ g \cdot \alpha(h) \mid h \in H \},
    \end{equation}
    for some $g \in G$.
    Similarly, a \defnsty{right coset} of a subgroup $H$ in $G$ is a set of the form
    \begin{equation}
        H g \coloneqq \{ \alpha(h) \cdot g \mid h \in H \},
    \end{equation}
    for some $g \in G$.
\end{defn}

\begin{defn}
    Let $X$ be a set.
    The \defnsty{symmetric group on $X$}, denoted by $S_{X}$, is the set of all bijective functions mapping $X$ to itself with group multiplication given by function composition, i.e. $f \cdot g \coloneqq f \circ g$.
\end{defn}

\begin{rem}
    Consider two sets, $X$ and $Y$, and their respective symmetric groups, $S_{X}$ and $S_{Y}$.
    Note that $S_{X}$ and $S_{Y}$ are isomorphic as groups if and only if $X$ and $Y$ are isomorphic as sets meaning they have the same cardinality, $\card{X} = \card{Y}$.
    The \defnsty{symmetric group over $n$ symbols}, denoted by $S_{n}$, can be understood (up to isomorphism) as the symmetric group over any set of cardinality $n$, e.g., the first $n$ positive integers $[n] = \{1, \ldots, n\}$.
\end{rem}

\begin{defn}
    Let $G$ be a group, $X$ be a set and $S_{X}$ the symmetric group on $X$.
    A \defnsty{group action} of $G$ on $X$ is a group homomorphism $\alpha : G \to S_{X}$ from $G$ to $S_{X}$.
    The group action of $G$ of an element $x \in X$ is typically abbreviated, for $g \in G$, by
    \begin{equation}
        g \cdot x \coloneqq \alpha(g) (x).
    \end{equation}
    The \defnsty{orbit} of an element $x \in X$ under the action of $G$, denoted by $G \cdot x$ (or sometimes $\alpha(G)(x)$), is the subset
    \begin{equation}
        G \cdot x \coloneqq \{ \alpha(g)(x) \in X \mid g \in G \}.
    \end{equation}
    An element $x \in X$ is said to be \defnsty{invariant} if its orbit is a singleton, i.e., $G \cdot x = \{x \}$.
\end{defn}

\begin{exam}
    The set of all real numbers, $\mathbb R$, equipped with the binary operation of addition, $+ : \mathbb R \times \mathbb R \to \mathbb R$, forms a group, called the \textit{real additive group} where the identity element is zero and the inverse of $r \in \mathbb R$ is the negation $-r \in \mathbb R$.
    The subset of all integers $\mathbb Z = \{\ldots, -2, -1, 0, +1, +2, \ldots\} \subset \mathbb R$ is a subgroup of the real additive group $\mathbb R$.
    The set of non-zero real numbers, denoted by $\wozero{\mathbb R} = \mathbb R \setminus \{0\}$, equipped with the binary operation of multiplication, $\cdot : \mathbb R \times \mathbb R \to \mathbb R$, forms a group, called the \textit{real multiplicative group} where the identity element is one and the inverse of $r \in \wozero{\mathbb R}$ is $\frac{1}{r} \in \wozero{\mathbb R}$.
    The finite subset $\mathbb Z_2 = \{-1, +1\} \subset \wozero{\mathbb R}$ is a subgroup of $\wozero{\mathbb R}$.
    In an analogous fashion, the \textit{complex additive group} is denoted $\mathbb C$ and the \textit{complex multiplicative group} is denoted by $\wozero{\mathbb C}$.
\end{exam}

\subsection{Lie groups}

There are many excellent textbook references for the subject of Lie group theory, e.g.~\cite{weyl1946classical,simon1996representations,onishchik2012lie,cvitanovic2008group,procesi2007lie,lee2001structure}.
Our primary reference for the theory of matrix Lie groups and their representations is the textbook by \citeauthor{hall2015lie}~\cite{hall2015lie}.
\begin{exam}
    Let $\mathbb F$ be the field of complex, real or rational numbers, i.e. $\mathbb F \in \{ \mathbb C, \mathbb R, \mathbb Q\}$.
    The set of $d \times d$ matrices with entries in $\mathbb F$ is denoted by $\Mat_{d}(\mathbb F)$.
    The \textit{general linear group over $\mathbb F$}, denoted by $\GL(d, \mathbb F) \subset \Mat_{d}(\mathbb F)$, is the group of \textit{invertible} $d \times d$ matrices equipped with the binary operation of matrix multiplication, expressed using juxtaposition (i.e. $(g_1, g_2) \mapsto g_1 g_2$), and identity element given by the identity matrix, denoted by either $e$ or $I$. 
    Throughout this thesis, our main examples will be the \textit{real} general linear group, $\GL(d, \mathbb R)$, and the \textit{complex} general linear group, $\GL(d, \mathbb C)$.
\end{exam}

The set $\Mat_{d}(\mathbb F)$ of $d \times d$ matrices with entries in $\mathbb F$ will always be considered as a topological space with respect to the standard topology on $\Mat_{d}(\mathbb F)$ under the identification $\Mat_{d}(\mathbb F) \cong \mathbb F^{d \times d}$.
The subset of invertible matrices, $\GL(d, \mathbb F)$, and all of its subgroups will always inherit its topology from the standard topology on $\Mat_{d}(\mathbb F)$. 
The following definitions are concerned with topological properties of $\GL(d, \mathbb C)$ and its subgroups.
\begin{defn}
    A \defnsty{linear group} is a subgroup $G$ of $\GL(d, \mathbb F)$ with binary operation given by matrix multiplication.
\end{defn}

The following definition of a matrix Lie group is \cite[Defn. 1.4]{hall2015lie}.
\begin{defn}
    A \defnsty{matrix Lie group} $G$ is a linear group that is a subgroup of $\GL(d, \mathbb C)$, meaning every sequence $\{ g_k \in G \mid k \in \mathbb N\}$ of elements of $G$ that converges to an element of $\Mat_{d}(\mathbb C)$ either converges to an matrix in $G$ or to a non-invertible matrix in $\Mat_{d}(\mathbb C) \setminus \GL(d, \mathbb C)$.
\end{defn}
\begin{exam}
    The real general linear group $\GL(d, \mathbb R)$ and complex linear group $\GL(d, \mathbb C)$ are both matrix Lie groups viewed as subgroups of $\GL(d, \mathbb C)$.
    Furthermore, the subgroups $\SL(d, \mathbb R)$ and $\SL(d, \mathbb C)$ of matrices with determinant one, known as the \textit{special} linear groups, are matrix Lie groups.
\end{exam}

\begin{exam}
    \label{exam:unitary_orthogonal_groups}
    The \defnsty{conjugate transpose} is an operation on complex matrices $X \in \Mat_{d}(\mathbb C)$ sending the complex matrix $X$ with entries $X_{ij} \in \mathbb C$ to the complex matrix $X^{*}$ with entries $(X^{*})_{ij} = X_{ji}^{*}$.
    When restricted to the subgroup of real matrices $\Mat_{d}(\mathbb R)$, this operation is known as the \textit{transpose} of a matrix and is denoted by $X^{T}$ and has entries $(X^{T})_{ij} = X_{ji}$.
    A complex matrix $U \in \Mat_{d}(\mathbb C)$ is \textit{unitary} if $U^{*} = U^{-1}$.
    Similarly, a real matrix $O \in \Mat_{d}(\mathbb R)$ is \textit{orthogonal} if $O^{*} = O^{-1}$.
    The subset of all unitary matrices in $\Mat_{d}(\mathbb C)$ is a matrix Lie group called the \textit{unitary group} $\U(d) \subset \GL(d, \mathbb C)$ and the subset of all orthogonal matrices in $\Mat_{d}(\mathbb R)$ is a matrix Lie group called the \textit{orthogonal group} $\mathrm{O}(d) \subset \GL(d, \mathbb R)$.
    The \textit{special unitary group} is $\SU(d) = \U(d) \cap \SL(d, \mathbb C)$ and the \textit{special orthogonal group} is $\mathrm{SO}(d) = \mathrm{O}(d) \cap \SL(d, \mathbb R)$.
\end{exam}

The following definition combines \cite[Defn. 1.8]{hall2015lie} and the first part of \cite[Defn. 1.9]{hall2015lie}.
\begin{defn}
    A matrix Lie group $G \subseteq \GL(d, \mathbb C)$ is said to be \defnsty{compact} if it is compact as a subset of $\Mat_{d}(\mathbb C) \cong \mathbb C^{d \times d}$. 
    A matrix Lie group $G \subseteq \GL(d, \mathbb C)$ is said to be \defnsty{connected} if for any pair $(g_1, g_2) \in G$, there exists a \textit{path} from $g_1$ to $g_2$, which is a continuous function mapping $t \in [0,1]$ to $g(t) \in G$ such that $g(0) = g_1$ and $g(1) = g_2$.
\end{defn}
By the Heine-Borel theorem, a matrix Lie group is compact if and only if it is i) closed as a subset of $\Mat_{d}(\mathbb C)$, meaning convergent sequences $\{g_k \in G \mid k \in \mathbb N\}$ of matrices converge to an element of $G$, and ii) bounded, meaning there exists a finite $C \in \mathbb R_{\geq 0}$ such that $\abs{g_{ij}} < C$ holds for all $i, j \in [d]$ and $g \in G$~\cite{hall2015lie}.
Also note that for general topological spaces, there is a distinction between the notions of connected and path-connected, but these notions are ultimately equivalent for matrix Lie groups~\cite[Pg. 17]{hall2015lie}.

\begin{exam}
    The groups $\SU(d), \U(d), \mathrm{SO}(d),$ and $\mathrm{O}(d)$ (for all $d \in \mathbb N$) are compact while the groups $\GL(d, \mathbb C)$ and $\SL(d, \mathbb C)$ (for $d>1$) are non-compact. The groups $\SU(d)$, $\U(d)$, $\mathrm{SO}(d)$, $\GL(d, \mathbb C)$ and $\SL(d, \mathbb C)$ (for all $d \in \mathbb N$) are connected while $\mathrm{O}(d)$ is disconnected.
\end{exam}

The advantage of restrictive our focus to matrix Lie groups, instead of the more general notion of non-matrix Lie groups, is simply that the exponential map $\exp : \Mat_{d}(\mathbb C) \to \Mat_{d}(\mathbb C)$ acting on complex matrices can be defined in a straight-forward manner and thus the Lie algebra of a matrix Lie group is also straight-forward to define.
\begin{defn}
    The \defnsty{exponential} of a matrix $X \in \Mat_{d}(\mathbb C)$, denoted by $e^{X}$ or $\exp(X)$ is the matrix defined by
    \begin{equation}
        \exp(X) \coloneqq \sum_{n=0}^{\infty} \frac{X^{n}}{n!}.
    \end{equation}
\end{defn}
\begin{rem}
    A \defnsty{one-parameter subgroup} of $\GL(d, \mathbb C)$ (\cite[Defn. 2.13]{hall2015lie}) is a continuous group homomorphism $A : \mathbb R \to \GL(d, \mathbb C)$, meaning $A(t)$ is a continuous function of $t$, $A(0) = I$ is the identity matrix and $A(t_1 + t_2) = A(t_1) A(t_2)$ for all $t_{1}, t_{2} \in \mathbb R$. 
    For any complex matrix $X \in \Mat_{d}(\mathbb C)$, the function $A(t) = \exp(t X)$ defines a one-parameter subgroup $\GL(d, \mathbb C)$.
    Moreover, by \cite[Thm. 2.14]{hall2015lie} every one-parameter subgroup of $\GL(d, \mathbb C)$ is of this form for a \textit{unique} complex matrix $X \in \Mat_{d}(\mathbb C)$.
\end{rem}

\subsection{Lie algebras}

\begin{defn}
    \label{defn:lie_algebra}
    A \defnsty{Lie algebra} is a vector space $\mathfrak g$ over some field $\mathbb F$ (e.g., $\mathbb C$ or $\mathbb R$) equipped with a binary operation, $[\cdot, \cdot] : \mathfrak g \times \mathfrak g \to \mathfrak g$, called the \defnsty{Lie bracket} which satisfies:
    \begin{itemize}
        \item \textit{bilinearity}: for all $a, b \in \mathbb F$ and $X, Y, Z \in \mathfrak g$,
            \begin{align}
                \begin{split}
                    [a X + b Y, Z] &= a [X,Z] + b [Y, Z], \\
                    [Z, a X + b Y] &= a [Z,X] + b [Z, Y],
                \end{split}
            \end{align}
        \item \textit{alternativity}: for all $X \in \mathfrak g$,
            \begin{align}
                [X, X] = 0,
            \end{align}
        \item and the \textit{Jacobi identity}: for all $X, Y, Z \in \mathfrak g$,
            \begin{equation}
                \label{eq:jacobi}
                [X, [Y,Z] + [Y, [Z, X]] + [Z, [X, Y]] = 0.
            \end{equation}
    \end{itemize}
\end{defn}

All of the Lie algebras in this thesis will be Lie algebras associated to a matrix Lie group. 
\begin{exam}
    Let $G \subseteq \GL(d, \mathbb C)$ be a matrix Lie group. 
    The Lie algebra associated to $G$, denoted by $\mathfrak g$, is the subset of all matrices $X \in \Mat_{d}(\mathbb C)$ such that the image of the one-parameter subgroup $A(t) = \exp(t X)$ is contained within $G$.
    In this case, the Lie bracket between the matrices $X, Y \in \mathfrak g$ is given by the commutator of matrices,
    \begin{equation}
        [X, Y] = XY - YX.
    \end{equation}
\end{exam}

\begin{exam}
    \label{exam:HXY_basis}
    Perhaps the simplest non-trivial example of a Lie algebra is the Lie algebra $\mathfrak{sl}(2, \mathbb C)$ of $2\times 2$ complex matrices with trace zero. 
    A standard basis for $\mathfrak{sl}(2, \mathbb C)$ are the matrices
    \begin{equation}
        H = 
        \begin{pmatrix}
            1 & 0 \\
            0 & -1
        \end{pmatrix},
        \qquad
        X = 
        \begin{pmatrix}
            0 & 1 \\
            0 & 0
        \end{pmatrix},
        \qquad
        Y = 
        \begin{pmatrix}
            0 & 0 \\
            1 & 0
        \end{pmatrix},
    \end{equation}
    with commutation relations
    \begin{equation}
        [H, X] = 2 X,
        \qquad
        [H, Y] = 2 Y,
        \qquad
        [X, Y] = H.
    \end{equation}
    The Lie algebra $\mathfrak{sl}(2, \mathbb C)$ is the Lie algebra associated to the Lie group $\SL(2, \mathbb C)$ of invertible $2 \times 2$ complex matrices with unit determinant.
\end{exam}

\begin{defn}
    \label{defn:lie_alg_homo}
    Let $\mathfrak h$ and $\mathfrak g$ be Lie algebras over the same field $\mathbb F$.
    A \defnsty{Lie algebra homomorphism} from $\mathfrak h$ to $\mathfrak g$ is a linear map $L : \mathfrak h \to \mathfrak g$ such that for all $X, Y \in \mathfrak h$,
    \begin{equation}
        L([X, Y]) = [L(X), L(Y)].
    \end{equation}
    If $L$ is additionally injective, meaning $\mathrm{ker}(L) = 0$, then $\mathfrak h$ is a \defnsty{Lie subalgebra} of $\mathfrak g$.
    Often Lie subalgebras are vector subspaces, i.e., $\mathfrak h \subseteq \mathfrak g$.
\end{defn}
\begin{exam}
    Let $\mathfrak g$ be a Lie algebra with Lie bracket $[\cdot, \cdot]$.
    The \defnsty{center} of $\mathfrak g$, denoted by $\mathfrak z(\mathfrak g)$, is the subalgebra of elements $X \in \mathfrak g$ which commute with all of $\mathfrak g$, i.e.,
    \begin{equation}
        \mathfrak z(\mathfrak g) \coloneqq \{ X \in \mathfrak g \mid \forall Y \in \mathfrak g, [X, Y] = 0 \}.
    \end{equation}
    If the Lie bracket vanishes everywhere, then $\mathfrak g = \mathfrak z(\mathfrak g)$ and $\mathfrak g$ is said to be an \defnsty{abelian} Lie algebra. The maximal abelian Lie subalgebra $\mathfrak h$ of $\mathfrak g$ is called the \defnsty{Cartan subalgebra} of $\mathfrak g$.
\end{exam}
\begin{rem}
    \label{rem:basis_identification_linear_groups}
    If $\s V$ is a $d$-dimensional inner product space over the field $\mathbb F$ with inner product $\braket{\cdot, \cdot}$, let $\GL(\s V) \subset \End(\s V)$ denote the group of invertible linear maps acting on $\GL(\s V)$. 
    Given an orthonormal basis $\{e_1, \ldots, e_d\}$ on $\s V$, we freely identify the group of invertible linear operators on $\s V$, $\GL(\s V)$, with the corresponding group of invertible matrices, $\GL(d, \mathbb F)$, by defining the matrix $M_g \in \GL(d, \mathbb F)$ for $g \in \GL(\s V)$ via
    \begin{equation}
        (M_{g})_{i,j} = \braket{e_{i}, g e_{j}}.
    \end{equation}
    In turn, the Lie algebra of $\GL(\s V)$, denoted by $\mathfrak{gl}(\s V)$, becomes identified with the set of all matrices $\mathfrak{gl}(d, \mathbb F) = \Mat_{d}(\mathbb F)$ with Lie bracket given by the matrix commutator.
    In a similar fashion, if the underlying field is the field of complex numbers, $\mathbb F = \mathbb C$, let $\U(\s V) \simeq \U(d)$ be the group of unitary maps acting on $\s V$ and if working over the real numbers, $\mathbb F = \mathbb R$, let $\mathrm{O}(\s V) \simeq \mathrm{O}(d)$ be the group of orthogonal maps acting on $\s V$.
\end{rem}

\section{Representation theory}
\label{sec:rep_theory}

\subsection{Group representations}

The subject of representation theory, as a whole, is far too vast to provide a comprehensive introduction here.
Instead, the primary focus of this section will be to introduce the reader to the representation theory of a rather special class of groups called reductive groups.
The landmark result covered here will be the theorem of highest weights which provides helps to both classify and construct the representations of connected compact Lie groups.
To begin we consider the definition of a representation and the notion of reducibility.

\begin{defn}
    \label{defn:representation}
    A \defnsty{representation}, $\grep$, of a group $G$ on a vector space $\s V$ over the field $\mathbb F$ is a group homomorphism, $\grep : G \to \GL(\s V)$, from the group $G$ to the group of invertible linear operators on $\s V$, $\GL(\s V)$, meaning
    \begin{equation}
        \forall g_1, g_2 \in G : \grep(g_1 \cdot g_2) = \grep(g_1) \grep(g_2).
    \end{equation}
\end{defn}

\begin{exam}
    \label{exam:tensor_permutation_rep}
    Let $n \in \mathbb N$ be a positive integer and let $S_n$ be the symmetric group of permutations of order $\card{S_n} = n!$.
    For each vector space, $\s V$, the \defnsty{tensor-permutation representation} of $S_n$ on $\s V^{\otimes n}$, denoted by
    \begin{equation}
        T_{n} : S_n \to \GL(\s V^{\otimes n}),
    \end{equation}
    is defined for each $\sigma \in S_n$ and $n$-tuple of vectors $(v_1, \ldots, v_n) \in \s V^{\times n}$, by
    \begin{equation}
        T_{n}(\sigma) (v_{1} \otimes \cdots \otimes v_{n}) = v_{\sigma(1)} \otimes \cdots \otimes v_{\sigma(n)},
    \end{equation}
    and extended linearly for all vectors in $\s V^{\otimes n}$.
\end{exam}

\begin{exam}
    \label{exam:dual_representation}
    Let $\grep : G \to \GL(\s V)$ be a representation of a group $G$ on a vector space $\s V$ over the field $\mathbb F$ and let $\s V^{*}$ be the vector space dual to $\s V$ consisting of $\mathbb F$-valued linear functions on $\s V$.
    The \defnsty{dual representation} of $G$ on $\s V^{*}$, denoted by $\grep^{*} : G \to \GL(\s V^{*})$, is defined for all $v \in \s V$, $g \in G$, and $f \in \s V^{*}$ by
    \begin{equation}
        [\grep^{*}(g) (f)](v) \coloneqq f( \grep(g^{-1}) v).
    \end{equation}
\end{exam}

\begin{exam}
    Let $G \subseteq \GL(n, \mathbb C)$ be a matrix Lie group with Lie algebra $\mathfrak g$. 
    The \defnsty{adjoint representation} of $G$ is the representation $\Ad : G \to \GL(\mathfrak g)$ defined for all $X \in \mathfrak g$ by
    \begin{equation}
        \Ad(g)(X) = \partial_{t=0} (g e^{t X} g^{-1}) = g X g^{-1}.
    \end{equation}
\end{exam}

\begin{defn}
    \label{defn:invariant_subspace}
    Let $\grep : G \to \GL(\s V)$ be a representation of $G$.
    A subspace $\s W \subseteq \s V$ is \defnsty{invariant} if
    \begin{equation}
        \forall w \in \s W, \forall g \in G : \grep(g)w \in \s W.
    \end{equation}
    The \defnsty{trivial} invariant subspaces of $\s V$ are the zero subspace $0 \subseteq \s V$ and the vector space $\s V$ itself.
    A representation is said to be \defnsty{irreducible} if the only invariant subspaces of $\s V$ are the trivial ones.
\end{defn}

\begin{defn}
    \label{defn:homo_rep}
    Let $\grep : G \to \GL(\s V)$ and $\grep' : G \to \GL(\s V')$ be representations of the group $G$.
    A linear operator $L : \s V \to \s V'$ is said to be a \defnsty{homomorphism} from $\grep$ to $\grep'$ (also called an \textit{intertwining} or \textit{equivariant} map) if
    \begin{equation}
        \label{eq:homo_rep}
        \forall g \in G : L \grep(g) = \grep'(g) L.
    \end{equation}
    If $L$ is additionally invertible, then it is called an \defnsty{isomorphism} as its inverse, $L^{-1} : \s V' \to \s V$, is a homomorphism from $\grep'$ to $\grep$.
\end{defn}

Perhaps the most significant and widely applicable results in representation theory is Schur's lemma~\cite{hall2015lie,fulton2013representation}.
\begin{lem}[Schur's lemma]
    \label{lem:schurs_lemma}
    Let $\grep : G \to \GL(\s V)$ and $\grep' : G \to \GL(\s V')$ be \textit{irreducible} representations of the group $G$ over complex vector spaces and let $L : \s V \to \s V'$ be an homomorphism from $\grep$ to $\grep'$.
    Then either $L$ is an isomorphism, or $L$ is zero.
    Furthermore, if $\s V = \s V'$, then $L$ is proportional to the identity,
    \begin{equation}
        L = \alpha \ident_{\s V}.
    \end{equation}
    for some $\alpha \in \mathbb C$.
\end{lem}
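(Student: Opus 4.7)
The plan is to prove the two assertions by exploiting the definition of irreducibility together with the algebraic closure of $\mathbb{C}$, splitting the argument into the ``isomorphism-or-zero'' dichotomy and the subsequent ``scalar multiple of identity'' refinement.

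For the first assertion, I would first observe that the kernel $\ker(L) \subseteq \s V$ and the image $\mathrm{im}(L) \subseteq \s V'$ are invariant subspaces under $\grep$ and $\grep'$ respectively. The invariance of $\ker(L)$ follows from the intertwining relation \cref{eq:homo_rep}: for any $v \in \ker(L)$ and $g \in G$, one has $L \grep(g) v = \grep'(g) L v = 0$, so $\grep(g) v \in \ker(L)$. The invariance of $\mathrm{im}(L)$ is similar: any $w = Lv \in \mathrm{im}(L)$ satisfies $\grep'(g) w = \grep'(g) L v = L \grep(g) v \in \mathrm{im}(L)$. Now, by the irreducibility of $\grep$, either $\ker(L) = \s V$, in which case $L = 0$, or $\ker(L) = 0$, so that $L$ is injective. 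In the latter case, $\mathrm{im}(L) \neq 0$, so by the irreducibility of $\grep'$ we conclude $\mathrm{im}(L) = \s V'$ and $L$ is bijective. Finally, applying $L^{-1}$ on both sides of \cref{eq:homo_rep} shows that $L^{-1}$ is itself an intertwiner, hence $L$ is a representation isomorphism.

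For the second assertion, assume $\s V = \s V'$, so that $L$ is an endomorphism of the underlying (finite-dimensional complex) vector space. Since $\mathbb{C}$ is algebraically closed, the characteristic polynomial of $L$ has at least one root, yielding an eigenvalue $\alpha \in \mathbb{C}$ with non-trivial eigenspace $\ker(L - \alpha \ident_{\s V}) \neq 0$. The key point is that $L - \alpha \ident_{\s V}$ is itself an intertwiner: because $\ident_{\s V}$ commutes with every $\grep(g)$, one has
\begin{equation}
    (L - \alpha \ident_{\s V}) \grep(g) = L \grep(g) - \alpha \grep(g) = \grep(g) L - \alpha \grep(g) = \grep(g)(L - \alpha \ident_{\s V}).
\end{equation}
Applying the first assertion to this new intertwiner, its non-zero kernel rules out the isomorphism alternative, forcing $L - \alpha \ident_{\s V} = 0$, i.e., $L = \alpha \ident_{\s V}$.

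There is no substantive obstacle in this classical argument; the only items that require care are (i) the bookkeeping verification that $\ker(L)$ and $\mathrm{im}(L)$ are genuinely invariant, which rests essentially on the intertwining identity, and (ii) the implicit reliance on finite-dimensionality and algebraic closure of $\mathbb{C}$ in the second part to guarantee the existence of an eigenvalue. Both hypotheses are consistent with the standing conventions of \cref{defn:complex_fin_dim_Hilb_space} and the preceding discussion.
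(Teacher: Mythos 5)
Your proof is correct and follows the same classical route as the paper, but it is actually more complete in one respect. The paper's proof only verifies that $\ker(L)$ is an invariant subspace of $\s V$, and then asserts that $\ker(L) = 0$ implies $L$ is invertible. Injectivity of $L$ alone does not give surjectivity when $\s V$ and $\s V'$ may have different dimensions; one must also use, as you do, that $\mathrm{im}(L) \subseteq \s V'$ is a $\grep'$-invariant subspace, which is nonzero when $L \neq 0$ and hence equals $\s V'$ by irreducibility of $\grep'$. Your explicit verification of this second half of the dichotomy closes a small gap in the paper's exposition. For the second assertion both you and the paper proceed identically: pick an eigenvalue $\alpha$ (available because the space is finite-dimensional over $\mathbb C$), note that $L - \alpha\ident_{\s V}$ is an intertwiner with nontrivial kernel, and conclude from the first part that it vanishes.
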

\begin{proof}
    As $L$ is a homomorphism from $\grep$ to $\grep'$, its kernel, 
    \begin{equation}
        \mathrm{ker}(L) = \{ v \in \s V \mid L v = 0 \in \s V' \} \subseteq \s V,
    \end{equation}
    by \cref{defn:homo_rep}, is an invariant subspace subspace of $\s V$ (because $v \in \mathrm{ker}(L)$ implies $L \grep(g) v = \grep'(g) L v = 0$ and thus $\grep(g) v \in \mathrm{ker}(L)$).
    Since $\grep$ is irreducible, $\mathrm{ker}(L)$ is either zero, in which case $L$ is invertible and thus an isomorphism, or all of $\s V$, in which case $L$ is zero.

    Finally, if $\s V = \s V'$, then for any eigenvalue $\alpha \in \mathbb C$ of $L$, the homomorphism $L - \alpha \ident$ has non-empty kernel which means $L - \alpha \ident$ must be zero and thus $L = \alpha \ident$ as claimed.
\end{proof}

\subsection{Lie algebra representations}
It will also be useful to consider the representation theory of Lie algebras, which parallels \cref{defn:representation}.
\begin{defn}
    \label{defn:lie_alg_rep}
    Let $\mathfrak g$ be a Lie algebra with Lie bracket $[\cdot, \cdot] : \mathfrak g \times \mathfrak g \to \mathfrak g$, and let $\mathfrak{gl}(\s V)$ be the Lie algebra of linear maps on a vector space $\s V$.
    A \defnsty{representation of a Lie algebra} $\mathfrak g$ on $\s V$ is a Lie algebra homomorphism, $L : \mathfrak g \to \mathfrak{gl}(\s V)$ (\cref{defn:lie_alg_homo}), meaning
    \begin{equation}
        \forall X, Y \in \mathfrak g : L([X,Y]) = [L(X), L(Y)].
    \end{equation}
\end{defn}

\begin{lem}
    \label{lem:induced_lie_alg_rep}
    Let $G$ be a Lie group with Lie algebra $\mathfrak g$ and let $\grep : G \to \GL(\s V)$ be a representation of $G$. Define $\arep : \mathfrak g \to \mathfrak{gl}(\s V)$ for all $X \in \mathfrak g$ and $v \in \s V$ by
    \begin{equation}
        \arep(X)v = \partial_{t = 0} \grep(\exp(t X))v.
    \end{equation}
    Then $\arep : \mathfrak g \to \mathfrak{gl}(V)$ is a representation of the Lie algebra $\mathfrak g$ called the \defnsty{induced representation}.
\end{lem}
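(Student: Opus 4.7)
The plan is to first verify that $\arep$ is a well-defined linear map from $\mathfrak g$ into $\mathfrak{gl}(\s V)$, and then to establish the key bracket identity $\arep([X,Y]) = [\arep(X), \arep(Y)]$.

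First I would note that since $\grep : G \to \GL(\s V)$ is a representation of a matrix Lie group (hence smooth, by the standard fact that continuous homomorphisms between matrix Lie groups are automatically smooth), the curve $t \mapsto \grep(\exp(t X))v$ is differentiable at $t = 0$ for every $v \in \s V$. This yields a bounded linear operator $\arep(X) \in \mathfrak{gl}(\s V)$. Linearity in $X$ follows from the Lie product formula $\exp(t(aX+bY)) = \exp(taX)\exp(tbY) + O(t^2)$, the fact that $\grep$ is a group homomorphism, and differentiation at $t = 0$.

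The core step is the bracket identity. My plan is to establish it via the intermediate relation
\begin{equation}
    \arep(\Ad(g)X) = \grep(g)\,\arep(X)\,\grep(g)^{-1}, \qquad g \in G,\ X \in \mathfrak g.
\end{equation}
This follows from the matrix identity $g \exp(tX) g^{-1} = \exp(t\,\Ad(g) X)$, by applying $\grep$ to both sides, using that $\grep$ is a group homomorphism, and then differentiating in $t$ at $t = 0$.

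Next I would specialize $g = \exp(sY)$ and differentiate in $s$ at $s = 0$. On the left-hand side, the chain rule combined with the already-established linearity of $\arep$ and the standard identity $\partial_{s=0}\Ad(\exp(sY))X = [Y,X]$ produces $\arep([Y,X])$. On the right-hand side, the Leibniz rule together with $\partial_{s=0}\grep(\exp(sY))^{\pm 1} = \pm \arep(Y)$ produces $\arep(Y)\arep(X) - \arep(X)\arep(Y) = [\arep(Y),\arep(X)]$. Equating and antisymmetrizing gives the desired $\arep([X,Y]) = [\arep(X),\arep(Y)]$.

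The main technical obstacle is justifying the smoothness and the interchange of the $s$ and $t$ derivatives in the two-parameter map $(s,t) \mapsto \grep(\exp(sY)\exp(tX)\exp(-sY))$. Once one grants that $\grep$ is smooth (guaranteed for matrix Lie groups), this is routine, and all the symbolic manipulations above are legitimate.
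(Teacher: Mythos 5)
The paper states this lemma without proof, referring implicitly to standard references such as Hall's textbook~\cite{hall2015lie}. Your argument is correct and is essentially the standard proof: establish the equivariance relation $\arep(\Ad(g)X) = \grep(g)\arep(X)\grep(g)^{-1}$ from $g\exp(tX)g^{-1} = \exp(t\Ad(g)X)$, then differentiate along $g = \exp(sY)$ at $s=0$ to obtain $\arep([Y,X]) = [\arep(Y),\arep(X)]$ (note that the antisymmetrization step at the end is superfluous, since this identity is already the homomorphism property after relabelling $X \leftrightarrow Y$).
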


\begin{defn}
    \label{defn:adjoint_lie_alg_rep}
    The \defnsty{adjoint representation} of a Lie algebra $\mathfrak g$ on itself,
    \begin{equation}
        \ad : \mathfrak g \to \mathfrak{gl}(\mathfrak g),
    \end{equation}
    is the representation of $\mathfrak g$ defined, for $X,Y \in \mathfrak g$ by
    \begin{equation}
        \ad(X) Y = [X, Y].
    \end{equation}
    where $[\cdot, \cdot]$ is the Lie bracket for $\mathfrak g$.
\end{defn}
That the adjoint representation of $\mathfrak g$ satisfies the definition of a Lie algebra representation, \cref{defn:lie_alg_rep}, follows directly from the Jacobi identity.
Furthermore, if $\mathfrak g$ is the Lie algebra of some Lie group $G$, then the Lie algebra representation of $\mathfrak g$ induced by the adjoint representation of the Lie group $G$ on $\mathfrak g$ coincides with the above definition.
For matrix Lie groups, $G$, this correspondence is especially easy to verify.
\begin{exam}
    \label{exam:adjoint_representation}
    Let $\Ad : G \to \GL(\mathfrak g)$ be the adjoint representation of the matrix Lie group $G$ with Lie algebra $\mathfrak g$.
    The induced representation, $\ad : \mathfrak g \to \mathfrak{gl}(\mathfrak g)$, of $\mathfrak g$ on itself is therefore
    \begin{equation}
        \ad(X)Y = \partial_{t=0} \Ad(\exp(tX)) Y = \partial_{t=0} (e^{tX} Y e^{-tX}) = X Y - Y X = [X, Y].
    \end{equation}
\end{exam}

An important property of Lie groups $G$ that are connected as topological groups is that the group $G$ and its Lie algebra $\mathfrak g$ share invariant subspaces.
\begin{lem}
    Let $\grep : G \to \GL(\s V)$ be a finite dimensional representation of a connected Lie group $G$ and let $\arep : \mathfrak g \to \mathfrak{gl}(\s V)$ be the induced representation of its Lie algebra. A subspace $\s U \subseteq \s V$ is a $\grep$-invariant if and only if it is $\arep$-invariant.
\end{lem}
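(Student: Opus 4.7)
The plan is to prove both directions using the exponential map, with the reverse direction requiring the connectedness hypothesis.

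For the forward direction ($\grep$-invariant implies $\arep$-invariant), suppose $\s U \subseteq \s V$ is $\grep$-invariant. Fix $X \in \mathfrak g$ and $u \in \s U$. Since $\grep(\exp(tX))u \in \s U$ for every $t \in \mathbb R$, and $\s U$ is a finite-dimensional (hence closed) linear subspace of $\s V$, the derivative
\begin{equation}
    \arep(X)u = \partial_{t=0}\,\grep(\exp(tX))u = \lim_{t \to 0} \frac{\grep(\exp(tX))u - u}{t}
\end{equation}
also lies in $\s U$. Hence $\arep(X)\s U \subseteq \s U$ for every $X \in \mathfrak g$.

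For the reverse direction, suppose $\s U$ is $\arep$-invariant. The first step is to show $\grep(\exp(X))\s U \subseteq \s U$ for every $X \in \mathfrak g$. For matrix Lie groups, the induced representation satisfies $\grep(\exp(X)) = \exp(\arep(X))$, so for $u \in \s U$,
\begin{equation}
    \grep(\exp(X))u = \sum_{n=0}^{\infty} \frac{\arep(X)^n u}{n!}.
\end{equation}
Each term $\arep(X)^n u$ lies in $\s U$ by iterated application of $\arep$-invariance, and since $\s U$ is closed (being finite-dimensional), the limit of the partial sums lies in $\s U$ as well. Therefore $\grep(\exp(X))\s U \subseteq \s U$ for every $X \in \mathfrak g$.

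The final step invokes connectedness: a standard result asserts that in a connected Lie group $G$, every element is a finite product of exponentials, i.e.\ $G$ is generated as a group by $\exp(\mathfrak g)$. Writing an arbitrary $g \in G$ as $g = \exp(X_1) \cdots \exp(X_k)$ with $X_i \in \mathfrak g$, one obtains
\begin{equation}
    \grep(g)\s U = \grep(\exp(X_1)) \cdots \grep(\exp(X_k))\s U \subseteq \s U
\end{equation}
by repeated application of the previous step. Hence $\s U$ is $\grep$-invariant. The main obstacle is the appeal to the generation-by-exponentials lemma for connected Lie groups; this is precisely where the connectedness hypothesis is consumed, and without it one could only conclude invariance under the identity component of $G$.
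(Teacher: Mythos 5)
Your proof is correct and follows essentially the same approach as the paper: both directions use the exponential map, with the forward direction differentiating $\grep(\exp(tX))u$ at $t=0$ and the reverse direction expanding $\grep(\exp(X)) = \exp(\arep(X))$ as a power series and then using the fact that a connected Lie group is generated by $\exp(\mathfrak g)$. Your version merely spells out the closedness of the finite-dimensional subspace a bit more explicitly; the argument is otherwise the same.
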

\begin{proof}
    If $\s U$ is $\grep$-invariant, then for all $t \in \mathbb R$ and $X \in \mathfrak g$, $\grep(\exp(tX))v \in \s U$ and thus $\arep(X)v = \partial_{t = 0} \grep(\exp(t X))v \in \s U$.
    On the other hand, if $\s U$ is $\arep$-invariant, then for any $X \in \mathfrak g$,
    \begin{equation}
        \grep(\exp(X))v = \exp(\arep(X))v = \sum_{k=0}^{\infty} \frac{1}{k!}(\arep(X))^{k} v \in \s U,
    \end{equation}
    Since $X \in \mathfrak g$ was arbitrary and a connected Lie group $G$ is generated by elements of the form $\exp(X)$ (\cite[Cor. 37]{hall2015lie}), the above equation implies that $\s U$ is also $\grep$ invariant.
\end{proof}

Our next example is concerned with deconstructing the irreducible representations of $\mathfrak{sl}(2, \mathbb C)$ and is a well-known construction in the representation theory of Lie algebras.
Our presentation closely mirrors~\cite[Sec. 4.2]{hall2015lie}.
\begin{exam}
    \label{exam:reps_sl2C}
    Recall the Lie algebra, $\mathfrak{sl}(2, \mathbb C)$, of $2\times 2$ complex, traceless matrices.
    Now suppose $\arep : \mathfrak{sl}(2, \mathbb C) \to \mathfrak{gl}(d, \mathbb C)$ is an irreducible $d$-dimensional representation of $\mathfrak{sl}(2, \mathbb C)$ and let $H, X, Y \in \mathfrak{sl}(2, \mathbb C)$ be the basis for $\mathfrak{sl}(2, \mathbb C)$ defined in \cref{exam:HXY_basis}.
    Now let $\mu \in \mathbb C$ be an eigenvector of $\arep(H)$ with eigenvector $v$,
    \begin{equation}
        \arep(H) v = \mu v.
    \end{equation}
    Using the commutation relation $[H, X] = 2X$ and the assumption that $\arep$ preserves the Lie bracket, we conclude that $\arep(X) v$ must either be the zero vector or an eigenvector of $\arep(H)$ with eigenvalue $\mu + 2$ because
    \begin{equation}
        \label{eq:raising_eigenvector}
        \arep(H) \arep(X) v = \arep(X) \arep(H) v + \arep([H, X]) v = (\mu + 2) \arep(X) v.
    \end{equation}
    Since the representation is assumed finite, $\arep(H)$ can only have finitely many eigenvalues and thus repeated application of $\arep(X)$ to $v$ must eventually produce a zero vector, i.e., there exists a positive integer $r$ such that
    \begin{equation}
        u \coloneqq \arep(X)^{r} v \neq  0, \qquad \arep(X)^{r + 1} v = 0.
    \end{equation}
    Therefore, $u$ is a non-zero eigenvector of $\arep(H)$ with eigenvalue $\lambda \coloneqq \mu + 2 r$.

    In a similar fashion to \cref{eq:raising_eigenvector}, as $[H, Y] = -2 Y$, we conclude that $\arep(Y) u_0$ is either zero or an eigenvector of $\arep(H)$ with eigenvalue $u_0 - 2$. 
    Again, by the finite-dimensionality of the representation, we conclude there exists another positive integer $\ell$ such that
    \begin{equation}
        \arep(Y)^{\ell} u \neq  0, \qquad \arep(Y)^{\ell + 1} u = 0.
    \end{equation}
    For each $j \in \{0, \ldots, \ell\}$, let $w_{j}$ be the non-zero eigenvector of $\arep(H)$, 
    \begin{equation}
        w_{j} \coloneqq \arep(Y)^{j} u,
    \end{equation}
    with eigenvalue $\lambda - 2j$, i.e., 
    \begin{equation}
        \arep(H) w_j = (\lambda - 2j) w_j.
    \end{equation}
    Evidently, these eigenvectors span an $(\ell + 1)$-dimensional subspace that is invariant under the action of $\mathfrak{sl}(2,\mathbb C)$ via the representation $\arep : \mathfrak{sl}(2,\mathbb C) \to \mathfrak{gl}(d, \mathbb C)$.
    Under the assumption that $\arep$ is irreducible, we conclude $d = \ell + 1$.

    Finally, we can characterize the action of $\arep(X)$ on $w_j$ for $j > 0$ using induction and the commutation relations $[X,Y] = H$ to obtain
    \begin{equation}
        \arep(X) w_{j} = j (\lambda - (j-1)) w_{j-1}.
    \end{equation}
    By applying $\arep(Y)$ to both sides of this expression, and using the fact that $\arep(Y)w_{\ell} = 0$, one obtains
    \begin{equation}
        0 = (\ell+1) (\lambda-\ell) w_{\ell},
    \end{equation}
    and thus concludes that the eigenvalue $\lambda$ must equal the integer $\ell$.
\end{exam}

A key tool for understanding the anatomy of a Lie algebra representation, $\arep : \mathfrak g \to \mathfrak{gl}(\s V)$ is to examine its restriction onto a subalgebra which is easier to study.
Specifically, given any abelian subalgebra $\mathfrak h \subseteq \mathfrak g$, the fact that $[X,Y] = 0$ for all $X,Y \in \mathfrak h$ implies that the operators $\arep(X)$ and $\arep(Y)$ acting on $\s V$ must commute.
\begin{defn}
    \label{defn:weights_alg}
    Let $\arep : \mathfrak g \to \mathfrak{gl}(\s V)$ be a Lie algebra representation and fix $\mathfrak h \subseteq \mathfrak g$ a maximal abelian subalgebra of $\mathfrak g$.
    A \defnsty{weight} of the representation is a linear function, $\mu \in \mathfrak h^{*}$, such that there exists a non-zero vector, $v \in \s V$, called a \defnsty{weight vector}, satisfying
    \begin{equation}
        \forall H \in \mathfrak h : \arep(H) v = \mu(H) v.
    \end{equation}
    The subspace of all weight vectors for a given weight $\mu \in \mathfrak h^{*}$, denoted by $\s V_{\mu} \subseteq \s V$, is called the \defnsty{weight space} and its dimension is the \defnsty{multiplicity} of $\mu$.
\end{defn}
\begin{exam}
    \label{exam:roots_alg}
    Perhaps the most important representation of a Lie algebra $\mathfrak g$ is the adjoint representation, $\ad : \mathfrak g \to \mathfrak{gl}(\mathfrak g)$, of $\mathfrak g$ on itself.
    Once a maximal abelian subalgebra $\mathfrak h \subseteq \mathfrak g$ is fixed, the weights, weight vectors, and weight spaces associated to the adjoint representation are called roots, root vectors, and root spaces, respectively
    A \defnsty{root} is a linear function $\alpha \in \mathfrak h^{*}$ such that there exists a non-zero \defnsty{root vector} $X \in \mathfrak g$ satisfying
    \begin{equation}
        \forall H \in \mathfrak h : \ad(H) X = [H, X] = \alpha(H) X.
    \end{equation}
    The \defnsty{root space} of all root vectors for the root $\alpha$ is denoted by $\mathfrak g_{\alpha} \subseteq \mathfrak g$.
\end{exam}

\subsection{Composition}
\label{sec:composing_representations}

Before discussing how to construct a representation from scratch, it will be helpful to understand a few of the operations that can be used to compose representations to produce new ones.
Note that vector spaces in this section are considered to be over the same field, e.g., the complex numbers $\mathbb C$.

The first two operations combine two representations of the \textit{same} group, albeit on possibly different vector spaces.
\begin{defn}
    \label{defn:internal_direct_sum_product_rep}
    Let $\grep_1 : G \to \GL(\s V_1)$ and $\grep_2 : G \to \GL(\s V_2)$ be representations of the group $G$. 
    The \defnsty{internal direct sum} of $\grep_1$ and $\grep_2$ is a representation of the form
    \begin{equation}
        \grep_1 \oplus \grep_2 : G \to \GL(\s V_1 \oplus \s V_2)
    \end{equation}
    and is defined, for all $g \in G$, by
    \begin{equation}
        (\grep_1 \oplus \grep_2)(g) = \grep_1(g) \oplus \grep_2(g).
    \end{equation}
\end{defn}
\begin{defn}
    \label{defn:internal_tensor_product_rep}
    Let $\grep_1 : G \to \GL(\s V_1)$ and $\grep_2 : G \to \GL(\s V_2)$ be representations of the group $G$.
    The \defnsty{internal tensor product} of $\grep_1$ and $\grep_2$ is a representation of the form
    \begin{equation}
        \grep_1 \otimes \grep_2 : G \to \GL(\s V_1 \otimes \s V_2)
    \end{equation}
    and is defined, for all $g \in G$, by
    \begin{equation}
        (\grep_1 \otimes \grep_2)(g) = \grep_1(g) \otimes \grep_2(g).
    \end{equation}
\end{defn}

The next two operations combine two representations of potentially \textit{distinct} groups.
\begin{defn}
    \label{defn:external_direct_sum_product_rep}
    Let $\grep_1 : G_1 \to \GL(\s V_1)$ and $\grep_2 : G_2 \to \GL(\s V_2)$ be representations of the groups $G_1$ and $G_2$.
    The \defnsty{external direct sum} of $\grep_1$ and $\grep_2$ is a representation of the form
    \begin{equation}
        \grep_1 \boxplus \grep_2 : G_1 \times G_2 \to \GL(\s V_1 \oplus \s V_2)
    \end{equation}
    and is defined, for all $g_1 \in G$ and $g_2 \in G$, by
    \begin{equation}
        (\grep_1 \boxplus \grep_2)(g_1, g_2) = \grep_1(g_1) \oplus \grep_2(g_2).
    \end{equation}
\end{defn}

\begin{defn}
    \label{defn:external_tensor_product_rep}
    Let $\grep_1 : G_1 \to \GL(\s V_1)$ and $\grep_2 : G_2 \to \GL(\s V_2)$ be representations of the groups $G_1$ and $G_2$. 
    The \defnsty{external tensor product} of $\grep_1$ and $\grep_2$ is a representation of the form
    \begin{equation}
        \grep_1 \boxtimes \grep_2 : G_1 \times G_2 \to \GL(\s V_1 \otimes \s V_2)
    \end{equation}
    and is defined, for all $g_1 \in G_1$ and $g_2 \in G_2$, by
    \begin{equation}
        (\grep_1 \boxtimes \grep_2)(g_1, g_2) = \grep_1(g_1) \otimes \grep_2(g_2).
    \end{equation}
\end{defn}

\begin{rem}
    \label{rem:int_ext_copy}
    Note there is seldom an opportunity to confuse the internal and external constructions when $G_1$ and $G_2$ are distinct groups as only the external construction is defined in this case. 
    If however, the two groups are identical, i.e. $G_1 = G_2 = G$, then the internal and external constructions can be related using the diagonal (or copy) embedding:
    \begin{equation}
        \Delta : G \to G \times G :: g \mapsto (g, g).
    \end{equation}
    In such cases, the internal and external constructions of the representations $\grep_1 : G \to \GL(\s V_1)$ and $\grep_2 : G \to \GL(\s V_2)$ are related by the formulas
    \begin{align}
        \grep_1 \oplus \grep_2 &= (\grep_1 \boxplus \grep_2) \circ \Delta, \\
        \grep_1 \otimes \grep_2 &= (\grep_1 \boxtimes \grep_2) \circ \Delta.
    \end{align}
    For the sake of brevity, the \textit{tensor product} of two representations will always refer to the \textit{internal} tensor product.
    Similarly, the \textit{direct sum} of two representations will always refer to the \textit{internal} direct sum.
\end{rem}

Under certain circumstances, there is an opportunity to combine representations acting on the \textit{same} vector space.
\begin{defn}
    \label{defn:commuting_product_rep}
    Let $\grep_1 : G_1 \to \GL(\s V)$ and $\grep_2 : G_2 \to \GL(\s V)$ be representations of $G_1$ and $G_2$ on a common vector space $\s V$. The representations $\grep_1$ and $\grep_2$ are said to be \defnsty{mutually commuting} representations on $\s V$ if
    \begin{equation}
        \forall g_1 \in G_1, g_2 \in G_2 : [ \grep_1(g_1), \grep_2(g_2) ] = 0.
    \end{equation}
    If $\grep_1$ and $\grep_2$ are mutually commuting, then one can construct the representation
    \begin{equation}
        \grep_{1,2} : G_1 \times G_2 \to \GL(\s V)
    \end{equation}
    defined, for all $g_1 \in G_1$ and $g_2 \in G_2$, by
    \begin{equation}
        (\grep_{1,2})(g_1, g_2) = \grep_1(g_1)\grep_2(g_2) = \grep_1(g_1)\grep_2(g_2).
    \end{equation}
\end{defn}
\begin{exam}
    \label{exam:nth_tensor_power}
    Given a representation $\grep : G \to \GL(\s V)$ and a positive integer $n \in \mathbb N$, one can construct a representation of $G$ on $\s V^{\otimes n}$ called the \defnsty{$n$th tensor power representation} of $\grep$, denoted $\grep^{\otimes n} : G \to \GL(\s V^{\otimes n})$, defined as the internal tensor product of $n$ copies of $\grep$. 
    A related representation is obtained by first noting that $\grep^{\otimes n}(g) = \grep(g)^{\otimes n}$ commutes with the permutation of the $n$ tensor factors in $\s V^{\otimes n}$.
    Then let $\iota_{n} : \mathrm{Sym}^{n}(\s V) \xhookrightarrow{} \s V^{\otimes n}$ be the $\GL(\s V)$-equivariant isometry from the symmetric subspace of degree $n$, $\mathrm{Sym}^{n}(\s V)$, to $\s V^{\otimes n}$. 
    Finally one can define the \defnsty{$n$th symmetric power representation}, denoted $\grep^{\vee n} : G \to \GL(\mathrm{Sym}^{n}(\s V))$, for all $g \in G$ by
    \begin{equation}
        \grep^{\vee n}(g) \coloneqq \iota_{n}^{*} (\grep(g))^{\otimes n} \iota_{n}.
    \end{equation}
    Analogously, if $v \in \s V$ and $n \in \mathbb N$, define $v^{\vee n} \in \mathrm{Sym}^{n}\s V$ by $v^{\vee n} \coloneqq \iota_{n}^{*} v^{\otimes n}$.
\end{exam}

\subsection{Decomposition}
\label{sec:reducibility}

\begin{defn}
    A representation $\grep : G \to \GL(\s V)$ is said to be \defnsty{completely reducible} if it is isomorphic to a finite direct sum of irreducible representations, i.e., there exists a finite set of irreducible representations $\grep_{i} : G \to \GL(\s V_{i})$ indexed by $i \in [r]$ such that $\s V$ decomposes as a vector space into a finite direct sum,
    \begin{equation}
        \s V = \s V_{1} \oplus \s V_{2} \cdots \oplus \s V_{r},
    \end{equation}
    and furthermore $\grep(g)$ decomposes, for each $g$, as an operator
    \begin{equation}
        \grep(g) = \grep_{1}(g) \oplus \cdots \oplus \grep_{r}(g).
    \end{equation}
    The group $G$ itself is said to be \defnsty{completely reducible} if every finite-dimensional representation of $G$ is completely reducible.
\end{defn}

The following is \cite[Prop. 4.27]{hall2015lie}.
\begin{lem}
    \label{lem:unitary_reps_completely_reducible}
    Let $\grep : K \to \U(\s V)$ be a finite-dimensional unitary representation of $K$ on a Hilbert space $\s V$ with inner product $\braket{\cdot, \cdot}$. Then $\grep$ is completely reducible.
\end{lem}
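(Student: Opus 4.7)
The plan is to proceed by induction on $\dim(\s V)$, exploiting the fact that unitarity makes orthogonal complements of invariant subspaces themselves invariant. The base case $\dim(\s V) = 0$ is trivial (the empty direct sum), and if $\grep$ is already irreducible we are done.

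The inductive step is the heart of the argument. Suppose $\grep$ is not irreducible; then by \cref{defn:invariant_subspace} there exists a non-trivial invariant subspace $\s W \subsetneq \s V$ with $0 < \dim(\s W) < \dim(\s V)$. Consider the orthogonal complement
\begin{equation}
    \s W^{\perp} = \{ v \in \s V \mid \forall w \in \s W : \braket{w, v} = 0 \},
\end{equation}
so that $\s V = \s W \oplus \s W^{\perp}$ as a Hilbert space. I would then show that $\s W^{\perp}$ is also $\grep$-invariant: for $v \in \s W^{\perp}$, $g \in K$, and any $w \in \s W$, using unitarity of $\grep(g)$ and the fact that $\grep(g^{-1})w \in \s W$ (by invariance of $\s W$),
\begin{equation}
    \braket{w, \grep(g) v} = \braket{\grep(g)^{*} w, v} = \braket{\grep(g^{-1}) w, v} = 0,
\end{equation}
so $\grep(g)v \in \s W^{\perp}$. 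Hence $\grep$ restricts to unitary subrepresentations $\grep_{\s W} : K \to \U(\s W)$ and $\grep_{\s W^{\perp}} : K \to \U(\s W^{\perp})$, and $\grep(g) = \grep_{\s W}(g) \oplus \grep_{\s W^{\perp}}(g)$ for every $g \in K$.

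Since $\dim(\s W), \dim(\s W^{\perp}) < \dim(\s V)$, the induction hypothesis applies to both, yielding decompositions of each into finite direct sums of irreducible subrepresentations. Concatenating these two decompositions produces the desired decomposition of $\grep$ into a finite direct sum of irreducibles. The only step requiring any care is the invariance of $\s W^{\perp}$, which is where unitarity is essential — this same argument would fail for a general (non-unitary) representation, and indeed completeness of reducibility can fail in that generality. Everything else is a routine induction, and no additional machinery beyond \cref{defn:invariant_subspace}, \cref{defn:homo_rep}, and the orthogonal decomposition of a finite-dimensional Hilbert space is required.
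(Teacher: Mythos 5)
Your proof is correct and takes essentially the same approach as the paper: both show that the orthogonal complement of an invariant subspace is itself invariant (using unitarity of $\grep$), and then split recursively; you phrase this as induction on $\dim(\s V)$ while the paper phrases it as a terminating recursion, a purely cosmetic difference.
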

\begin{proof}
    First, we prove that any reducible, unitary representation $\grep : K \to \s U(\s V)$ of $K$ on the Hilbert space $\s V$ \textit{splits} into the orthogonal direct sum of unitary representations.
    Let $\s W \subset \s V$ be a proper invariant subspace of $\s V$.
    The orthogonal complement $\s W^{\perp}$ of $\s W$ (evaluated with respect to the inner product $\braket{\cdot, \cdot}$ on $\s V$), leads to the decomposition $\s V = \s W \oplus \s W^{\perp}$. 
    That $\s W^{\perp}$ is also an invariant subspace follows from the unitarity of $\grep(k)$ for all $k \in K$.
    The invariance of $\s W$ means $\grep(k)w \in \s W$ for all $w \in \s W$ and $k \in K$ (including $k^{-1}$).
    Therefore, for generic $w' \in \s W^{\perp}$ and $w \in \s W$,
    \begin{equation}
        0 = \braket{w', \grep(k^{-1}) w} = \braket{w', \grep(k)^{*} w} = \braket{\grep(k)w',  w},
    \end{equation}
    which means $\grep(k)w' \in \s W^{\perp}$ and thus $\s W^{\perp}$ is also invariant.

    Second, if the representation on $\s W$ (or $\s W^{\perp}$) is reducible, then the above argument can be applied to recursively split the subrepresentation on $\s W$ (or $\s W^{\perp}$).
    Since $\s V$ is finite-dimensional and one-dimensional representations are irreducible by definition, this recursive process much terminate, producing a orthogonal direct-sum decomposition of $\s V$ into irreducible invariant subspaces.
\end{proof}

The following is \cite[Thm 4.28]{hall2015lie} as is sometimes known as \textit{Weyl's unitary trick}.
\begin{thm}
    \label{thm:compact_completely_reducible}
    If $K$ is compact matrix Lie group, then every finite-dimensional representation of $K$ is completely reducible.
\end{thm}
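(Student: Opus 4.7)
The plan is to apply Weyl's classical averaging trick: produce a $K$-invariant inner product on $\s V$ so that $\grep$ becomes \emph{unitary} with respect to this inner product, and then invoke \cref{lem:unitary_reps_completely_reducible} to conclude complete reducibility.

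First, since $\s V$ is a finite-dimensional complex vector space, I can choose an arbitrary inner product $\braket{\cdot,\cdot}_0$ on $\s V$ (for example, by picking any basis and declaring it orthonormal). Next, the essential input is the existence of a normalized left-invariant (equivalently two-sided invariant, as $K$ is compact) Haar measure $\diff k$ on $K$ with $\int_{K} \diff k = 1$. Using this, I define a new pairing
\begin{equation}
    \braket{v, w}_{K} \coloneqq \int_{K} \braket{\grep(k) v, \grep(k) w}_{0} \diff k, \qquad v, w \in \s V.
\end{equation}
The integrand is a continuous function of $k$ (since $\grep$ and $\braket{\cdot,\cdot}_{0}$ are continuous), and $K$ is compact, so the integral is well-defined and finite. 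I would then verify routinely that $\braket{\cdot,\cdot}_{K}$ is sesquilinear, Hermitian, and positive definite: positive-definiteness follows because the integrand $\norm{\grep(k) v}_{0}^{2}$ is a non-negative continuous function of $k$, and if $v \neq 0$ then $\grep(e) v = v$ has positive norm, so continuity forces the integral to be strictly positive.

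The crucial property is $K$-invariance. For any $k' \in K$, a substitution $k \mapsto k k'$, together with the right-invariance of Haar measure on a compact group, yields
\begin{equation}
    \braket{\grep(k') v, \grep(k') w}_{K} = \int_{K} \braket{\grep(k k') v, \grep(k k') w}_{0} \diff k = \int_{K} \braket{\grep(k) v, \grep(k) w}_{0} \diff k = \braket{v, w}_{K}.
\end{equation}
Hence $\grep(k')$ preserves $\braket{\cdot,\cdot}_{K}$ for every $k' \in K$, so $\grep$ factors through the unitary group $\U(\s V, \braket{\cdot,\cdot}_{K})$.

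With this invariant inner product in hand, $\grep : K \to \U(\s V)$ is a finite-dimensional unitary representation, and \cref{lem:unitary_reps_completely_reducible} immediately implies complete reducibility. The only genuine obstacle is ensuring the averaging step is legitimate, which reduces to the nontrivial fact that every compact Hausdorff topological group admits a bi-invariant normalized Haar measure; I would either cite this as a standard result (e.g., Haar's theorem) or, in the matrix Lie group setting assumed here, invoke the standard construction via a left-invariant volume form on $K$ normalized to unit total mass.
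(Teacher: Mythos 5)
Your proof is correct and takes essentially the same route as the paper: average an arbitrary inner product over $K$ using the normalized Haar measure to produce a $K$-invariant inner product, observe that $\grep$ is then unitary, and invoke \cref{lem:unitary_reps_completely_reducible}. Your write-up is slightly more explicit about verifying positive-definiteness and the change-of-variables step, but the argument is the same Weyl unitary trick the paper uses.
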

\begin{proof}
    Let $\grep : K \to \GL(\s V)$ be a representation of $K$ on a finite-dimensional Hilbert space $\s V$ with inner product $\braket{\cdot, \cdot}$.
    If $\grep$ is a unitary representation, i.e. $\grep(k)^{*} = \grep(k^{-1})$ (with respect to $\braket{\cdot, \cdot}$) for all $k \in K$, then \cref{lem:unitary_reps_completely_reducible} implies $\grep$ is completely reducible.
    Thus the only remaining case to consider is when $\grep$ is not a unitary representation.
    The trick here is to show that whenever $\grep(k)$ is not unitary with respect to the inner product $\braket{\cdot, \cdot}$, one can define a new inner product, $\braket{\cdot, \cdot}_{K}$ on $\s V$ such that $\grep(k)$ is unitary.
    This new inner product is obtained by \textit{twirling} the old inner product, namely for $v, w \in \s V$,
    \begin{equation}
        \braket{v, w}_{G} \coloneqq \int_{G} \braket{\grep(k)v, \grep(k)w} \diff \mu(k),
    \end{equation}
    where $\diff \mu$ is the unique up to scaling left and right $K$-invariant Haar measure on $K$ (compactness of $K$ ensures this existence and invariance of this measure, as well as the convergence of the integral).
    That $\grep(k)$ is a unitary operator with respect to $\braket{\cdot, \cdot}_{K}$ follows from the invariance of the Haar measure $\diff \mu$ which implies $\braket{\grep(k)v, \grep(k)w}_{K} = \braket{v, w}_{K}$ for all $v, w \in \s V$ which means $\grep(k)^{*} \grep(k) = \ident_{\s V}$ is the identity on $\s V$ and thus $\grep(k)^{-1} = \grep(k^{-1}) = \grep(k)^{*}$ is unitary for all $g \in G$.
    By \cref{lem:unitary_reps_completely_reducible}, again, $\grep$ is completely reducible.
\end{proof}

\subsection{Complexification \& semisimplicity}
\label{sec:complexification}

At this stage, we turn our attention to a particular class of Lie algebras called \textit{semisimple Lie algebras} which are obtained from the Lie algebras of compact Lie groups by a process known as \textit{complexification}.
Our interest in semisimple Lie algebras stems from the fact that, much like \cref{thm:compact_completely_reducible}, their finite-dimensional representations are completely reducible~\cite[Thm. 10.9]{hall2015lie}.

Semisimple Lie algebras include many of the familiar examples of including $\mathfrak{sl}(d, \mathbb C)$, $\mathfrak{so}(d, \mathbb C)$, and $\mathfrak{sp}(d, \mathbb C)$.
Our main goal will be to understand the decomposition of reductive Lie algebras and establish some consistent notation along the way.

To begin, we consider the complexification of a real Lie algebras, $\mathfrak k$, which is formed by extending the field of real numbers, $\mathbb R$, to the field of complex numbers, $\mathbb C$, in the usual way.
\begin{defn}
    Let $\mathfrak k$ be a  Lie algebra with Lie bracket $[\cdot, \cdot]$.
    The \defnsty{complexification} of $\mathfrak k$, expressed as $\mathfrak k_{\mathbb C}$ or $\mathfrak k \oplus i \mathfrak k$, is the Lie algebra of formal linear combinations,
    \begin{equation}
        X = X_1 + i X_2,
    \end{equation}
    where $X_1, X_2 \in \mathfrak k$ and where $iX \coloneqq - X_2 + i X_1$.
    The Lie bracket is extended linearly for $X, Y \in \mathfrak k_{\mathbb C}$ to
    \begin{equation}
        [X, Y] = ([X_1, Y_1] - [X_2, Y_2]) + i ([X_1, Y_2] + [X_2, Y_1]).
    \end{equation}
\end{defn}
\begin{exam}
    \label{exam:complexification}
    The complexification of the Lie algebra of real $d \times d$ matrices is isomorphic to the Lie algebra of complex $d \times d$ matrices,
    \begin{equation}
        \mathfrak{gl}(d,\mathbb R)_{\mathbb C} \cong \mathfrak{gl}(d, \mathbb C).
    \end{equation}
    Similarly, the complexification of the Lie algebra of skew-Hermitian $d \times d$ matrices (which is a real Lie algebra) is also isomorphic to the Lie algebra of complex $d\times d$ matrices, 
    \begin{equation}
        \mathfrak{u}(d)_{\mathbb C} \cong \mathfrak{gl}(d, \mathbb C).
    \end{equation}
    This second example follows because every $d \times d$ complex matrix $M \in \mathfrak{gl}(d, \mathbb C)$ is expressible as
    \begin{equation}
        M = \left(\frac{M-M^{*}}{2}\right) + i \left(\frac{M+M^{*}}{2i}\right)
    \end{equation}
    where $M^{*}$ is the conjugate transpose of the matrix $M$ and where \textit{both} the first and second terms in parentheses are skew-Hermitian matrices, and thus elements of $\mathfrak{u}(d)_{\mathbb C}$.
    For more examples of complexifications, see \cite[Eq. 3.17]{hall2015lie}.
\end{exam}

\begin{rem}
    \label{rem:complexification_convention}
    When working with Lie algebras and their complexifications, it can quickly become quite challenging to keep track of all of the different relationships between the various subspaces.
    It is therefore essential to establish some consistent notational conventions.
    Our notational conventions are fairly standard and strongly mirrors those of Ref.~\cite{botero2021large}.
    As a starting point we consider a real Lie algebra $\mathfrak k$ and its complexification 
    \begin{equation}
        \mathfrak g = \mathfrak k_{\mathbb C} = \mathfrak k \oplus i \mathfrak k.
    \end{equation}
    Acting on the complexified Lie algebra $\mathfrak g = \mathfrak k_{\mathbb C}$, there exists a map, called the \defnsty{local Cartan involution},
    \begin{equation}
        \theta : \mathfrak g \to \mathfrak g,
    \end{equation}
    which sends each $X = X_1 + i X_2 \in \mathfrak g$ (where $X_1, X_2 \in \mathfrak k$) to
    \begin{equation}
        \theta(X) = \theta(X_1 + i X_2) = X_1 - i X_2. 
    \end{equation}
    In this way, $\mathfrak k$ can be identified with the $+1$ eigenspace of $\theta$, while $i\mathfrak k$ is identified with its $-1$ eigenspace.
    In addition, we can define a \defnsty{local $*$-involution}, $* : \mathfrak g \to \mathfrak g$, on $\mathfrak g$ as the map sending $X \in \mathfrak g$ to
    \begin{equation}
        \label{eq:star_involution}
        X^{*} = -\theta(X) = -X_1 + i X_2.
    \end{equation}
    In this way, if $\mathfrak k = \mathfrak u(d)$ is the Lie algebra of $d \times d$ skew-Hermitian matrices, then $i\mathfrak k = i \mathfrak u(d)$ is the Lie algebra of $d \times d$ Hermitian matrices (with negated Lie bracket), and then the $*$-involution coincides with the complex conjugate of a complex matrix in the sense that
    \begin{align}
        \begin{split}
            X \in \mathfrak k &\iff X^{*} = - \theta(X) = - X, \\
            X \in i\mathfrak k &\iff X^{*} = - \theta(X) = + X.
        \end{split}
    \end{align}
    Finally, it is sometimes useful to give the factor $i\mathfrak k$ its own dedicated symbol, namely
    \begin{equation}
        \mathfrak p \coloneqq i \mathfrak k,
    \end{equation}
    and thus we have $\mathfrak g = \mathfrak k_{\mathbb C} = \mathfrak k \oplus i\mathfrak k = \mathfrak k \oplus \mathfrak p$.
\end{rem}
\begin{rem}
    In addition to the notational conventions from \cref{rem:complexification_convention} for dealing with the complexification, $\mathfrak g = \mathfrak k_{\mathbb C} = \mathfrak k \oplus i\mathfrak k$, of a real Lie algebra $\mathfrak k$, it will also be important, for the sake of consistency, to establish a notational convention for dealing with the subalgebras of $\mathfrak k$ and their complexification. In particular, the maximal abelian subalgebra (or Cartan subalgebra) of $\mathfrak k$, will always be denoted by $\mathfrak t$, in which case we have the relationship
    \begin{equation}
        \mathfrak t \subseteq \mathfrak k.
    \end{equation}
    The complexification of $\mathfrak t$, will be denoted by 
    \begin{equation}
        \mathfrak h = \mathfrak t_{\mathbb C} = \mathfrak t \oplus i \mathfrak t.
    \end{equation}
    Evidently, just as $\mathfrak t$ is a subalgebra of $\mathfrak k$, the complexification of $\mathfrak t$ is a subalgebra of the complexification of $\mathfrak k$, 
    \begin{equation}
        \mathfrak h = \mathfrak t \oplus i \mathfrak t \subseteq \mathfrak k \oplus i \mathfrak k = \mathfrak g.
    \end{equation}
    Furthermore, $\mathfrak h$ is the maximal abelian subalgebra of $\mathfrak g$~\cite[Prop. 7.11]{hall2015lie}.
    Finally, in addition to sometimes letting $\mathfrak p \coloneqq i\mathfrak k$, we sometimes let $\mathfrak a \coloneqq i \mathfrak t$, in which case $\mathfrak a \subseteq \mathfrak p$.
\end{rem}
\begin{exam}
    \label{exam:guiding_lie_alg_example}
    An key example to illustrate all of these notational conventions is to start with the Lie algebra $\mathfrak k = \mathfrak u(d)$.
    In this case, we have
    \begin{itemize}
        \item $\mathfrak k = \mathfrak u(d)$ (skew-Hermitian $d\times d$ matrices),
        \item $\mathfrak p = i\mathfrak k = i\mathfrak u(d)$ (Hermitian $d\times d$ matrices),
        \item $\mathfrak g = \mathfrak k \oplus \mathfrak p = \mathfrak {gl}(d, \mathbb C)$ ($d\times d$ complex matrices),
        \item $\mathfrak t = \mathfrak u(1)^{\times d} \cong i\mathbb R^{d}$ ($d\times d$ diagonal matrices with imaginary entries),
        \item $\mathfrak a = i\mathfrak t = i\mathfrak u(1)^{\times d} \cong \mathbb R^{d}$ ($d\times d$ diagonal matrices with real entries), and
        \item $\mathfrak h = \mathfrak t \oplus \mathfrak a \cong \mathbb C^{d}$ ($d\times d$ diagonal matrices with complex entries).
    \end{itemize}
\end{exam}

\begin{defn}
    A Lie algebra, $\mathfrak g$, is called \defnsty{reductive} if
    \begin{equation}
        \mathfrak g = \mathfrak k_{\mathbb C}
    \end{equation}
    where $\mathfrak k$ is the Lie algebra of a compact matrix Lie group $K$.
    If $\mathfrak g$ is reductive and has trivial center, i.e., $\mathfrak z(\mathfrak g) = 0$, then it is \defnsty{semisimple}.
\end{defn}
\begin{exam}
    Note that every complex reductive Lie algebra $\mathfrak g$ can be decomposed as a direct sum
    \begin{equation}
        \label{eq:reductive_semisimple_decomp}
        \mathfrak g = \mathfrak z(\mathfrak g) \oplus \mathfrak s,
    \end{equation}
    where $\mathfrak z(\mathfrak g)$ is its center and $\mathfrak s$ is semisimple~\cite[Prop. 7.6]{hall2015lie}.
    For example, the Lie algebra of $d\times d$ complex matrices, $\mathfrak{gl}(d, \mathbb C)$, is reductive as it is the complexification of $\mathfrak{gl}(d, \mathbb R)$ or $\mathfrak u(d)$, but it is not semisimple because its center,
    \begin{equation}
        \mathfrak{z}(\mathfrak{gl}(d, \mathbb C)) = \{ z \ident_{d} \mid z \in \mathbb C \} \cong \mathbb C,
    \end{equation}
    consists of the set of scalar $d\times d$ matrices.
    In this case, \cref{eq:reductive_semisimple_decomp} becomes
    \begin{equation}
        \mathfrak{gl}(d, \mathbb R) \cong \mathbb C \oplus \mathfrak{sl}(d, \mathbb C),
    \end{equation}
    where semisimple factor, $\mathfrak{sl}(d, \mathbb C)$, consists of all traceless $d \times d$ complex matrices.
\end{exam}

Although finite-dimensional representations of semisimple Lie algebras can be shown to be completely reducible~\cite[Thm. 10.9]{hall2015lie}, the same is not true of finite-dimensional representations of reductive Lie algebras~\cite[Ex. 10.9.1]{hall2015lie}\footnote{A standard example is the Lie algebra representation $\arep : \mathbb C \to \mathfrak{gl}(2, \mathbb C)$ sending $z \in \mathbb C$ to $\arep(z) = \begin{pmatrix} 0 & z \\ 0 & 0 \end{pmatrix}$ which contains non-trivial invariant subspaces but is not expressible as the direct sum of smaller representations of $\mathbb C$.}.
Given that finite-dimensional representations of semisimple Lie algebras are completely reducible~\cite[Thm. 10.9]{hall2015lie}, it becomes worth classifying their finite-dimensional irreducible representations.
This classification is provided by the theorem of highest weights (\cref{thm:hwt_alg}) and depends on a handful of supporting concepts and terminology to be covered over the next few pages.

The crucial reason for taking $\mathfrak k$ in the definition of a reductive Lie algebra to be the Lie algebra of a compact Lie group $K$ is to import Weyl's unitary trick mentioned in the proof of \cref{thm:compact_completely_reducible} for the purposes of establishing the following result~\cite[Prop. 7.4]{hall2015lie}.
\begin{lem}
    \label{lem:inner_prod_alg}
    Let $\mathfrak g = \mathfrak k_{\mathbb C}$ be a reductive Lie algebra.
    Then there exists an inner product $\braket{\cdot, \cdot} : \mathfrak g \times \mathfrak g \to \mathbb C$ on $\mathfrak g$ which is (i) real-valued when restricted to $\mathfrak k \subseteq \mathfrak g$, and (ii) satisfies for all $X,Y,Z \in \mathfrak g$,
    \begin{equation}
        \braket{\ad(X)(Y), Z} = \braket{Y, \ad(X^{*})(Z)},
    \end{equation}
    where $\ad : \mathfrak g \to \mathfrak{gl}(\mathfrak g)$ is the adjoint representation of $\mathfrak g$ on itself, and $X^{*}$ is defined as in \cref{eq:star_involution}.
\end{lem}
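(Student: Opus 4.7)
The plan is to execute Weyl's unitary trick, but applied to the adjoint representation itself, and then extend the resulting inner product from the real form $\mathfrak k$ to its complexification $\mathfrak g = \mathfrak k_{\mathbb C}$ by sesquilinearity. Since $\mathfrak k$ is the Lie algebra of a compact matrix Lie group $K$, the adjoint representation $\Ad : K \to \GL(\mathfrak k)$ is a finite-dimensional representation of a compact group, which is exactly the setting in which Weyl's trick (the key step in the proof of \cref{thm:compact_completely_reducible}) applies.

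First I would pick any real inner product $(\cdot,\cdot) : \mathfrak k \times \mathfrak k \to \mathbb R$ on the real vector space $\mathfrak k$, and then twirl it with the normalized Haar measure $\mu$ on $K$ to produce
\begin{equation}
    \braket{X, Y}_{\mathfrak k} \coloneqq \int_{K} \bigl(\Ad(k)X,\, \Ad(k)Y\bigr) \diff \mu(k),
\end{equation}
which is still a real inner product on $\mathfrak k$ but now also $\Ad$-invariant. Differentiating the identity $\braket{\Ad(e^{tW})X, \Ad(e^{tW})Y}_{\mathfrak k} = \braket{X,Y}_{\mathfrak k}$ at $t=0$ and using $\partial_{t=0}\Ad(e^{tW}) = \ad(W)$ (\cref{exam:adjoint_representation}) yields the infinitesimal statement
\begin{equation}
    \braket{\ad(W)X, Y}_{\mathfrak k} + \braket{X, \ad(W)Y}_{\mathfrak k} = 0 \qquad \text{for all } W \in \mathfrak k.
\end{equation}
In other words, $\ad(W)$ is skew-symmetric on the real inner product space $(\mathfrak k, \braket{\cdot,\cdot}_{\mathfrak k})$ whenever $W \in \mathfrak k$.

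Next I would extend $\braket{\cdot,\cdot}_{\mathfrak k}$ to a Hermitian inner product on $\mathfrak g = \mathfrak k \oplus i\mathfrak k$ in the unique sesquilinear way: writing $X = X_1 + i X_2$, $Y = Y_1 + i Y_2$ with $X_j, Y_j \in \mathfrak k$, set
\begin{equation}
    \braket{X, Y} \coloneqq \bigl(\braket{X_1, Y_1}_{\mathfrak k} + \braket{X_2, Y_2}_{\mathfrak k}\bigr) + i\bigl(\braket{X_1, Y_2}_{\mathfrak k} - \braket{X_2, Y_1}_{\mathfrak k}\bigr).
\end{equation}
Positive definiteness and sesquilinearity are immediate from the corresponding properties on $\mathfrak k$, and property (i) of the lemma is built in by construction, since taking $X_2 = Y_2 = 0$ gives $\braket{X,Y} = \braket{X_1,Y_1}_{\mathfrak k} \in \mathbb R$.

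For property (ii), I would first $\mathbb C$-linearly extend each $\ad(W)$, $W \in \mathfrak k$, to an operator on $\mathfrak g$; the skew-symmetry established above upgrades to skew-Hermiticity $\ad(W)^{*} = -\ad(W)$ with respect to $\braket{\cdot,\cdot}$. Then for a general $X = X_1 + iX_2 \in \mathfrak g$, using bilinearity of the Lie bracket and conjugate-linearity of the adjoint in the scalar,
\begin{equation}
    \ad(X)^{*} = \ad(X_1)^{*} - i\,\ad(X_2)^{*} = -\ad(X_1) + i\,\ad(X_2) = \ad(-X_1 + iX_2) = \ad(X^{*}),
\end{equation}
where the final equality invokes the local $*$-involution of \cref{eq:star_involution}. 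Unpacking the defining identity of the adjoint, this is exactly the claim $\braket{\ad(X)Y, Z} = \braket{Y, \ad(X^{*})Z}$.

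The main obstacle, and the only subtle step, is making sure the bookkeeping between real and complex structures is airtight: in particular, that the extension of $\ad(W)$ from $\mathfrak k$ to $\mathfrak g$ is $\mathbb C$-linear (not merely $\mathbb R$-linear) and hence that skew-symmetry of $\ad(W)$ on the real space $\mathfrak k$ truly promotes to skew-Hermiticity on the complex inner product space $\mathfrak g$. Once this is correctly tracked, the remaining calculations are formal consequences of the Haar-averaging trick, which is why the compactness assumption baked into the definition of a reductive Lie algebra is doing all the real work.
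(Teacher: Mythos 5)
Your proof is correct and is essentially the standard argument found in the cited reference (Hall, Prop.~7.4), which the paper invokes without reproducing: twirl an arbitrary real inner product on $\mathfrak k$ by the Haar measure of $K$ to obtain $\Ad(K)$-invariance, differentiate to obtain skew-symmetry of $\ad(W)$ for $W \in \mathfrak k$, extend sesquilinearly to $\mathfrak g = \mathfrak k \oplus i\mathfrak k$, and then use $\mathbb C$-linearity of $X \mapsto \ad(X)$ together with conjugate-linearity of the Hilbert-space adjoint to obtain $\ad(X)^{*} = -\ad(X_1) + i\ad(X_2) = \ad(X^{*})$. The bookkeeping you flag as the only subtlety is indeed the crux, and you handle it correctly: since $[W, X_1 + iX_2] = [W,X_1] + i[W,X_2]$, the operator $\ad(W)$ is automatically $\mathbb C$-linear on $\mathfrak g$ and preserves the decomposition $\mathfrak k \oplus i\mathfrak k$, so skew-symmetry on the real form promotes to skew-Hermiticity for the sesquilinear extension. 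No gaps.
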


Recall, at this stage, that the roots of a Lie algebra are the weights (simultaneous eigenvalues) of its adjoint representation (recall \cref{defn:weights_alg} and \cref{exam:roots_alg}).
In the context of reductive Lie algebras, roots must interact with the complex structure in a particular way.
\begin{lem}
    \label{lem:root_complex_structure}
    Let $\alpha \in \mathfrak h^{*}$ be a root for a reductive Lie algebra $\mathfrak g = \mathfrak k \oplus i \mathfrak k$ relative to a maximal abelian subalgebra $\mathfrak h = \mathfrak t \oplus i \mathfrak t$.
    If $H = H_1 + i H_2 \in \mathfrak h$ for $H_1, H_2 \in \mathfrak t$, then
    \begin{equation}
        \alpha(H)^{*} = - \alpha(H_1) + i \alpha(H_2).
    \end{equation}
\end{lem}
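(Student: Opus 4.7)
The plan is to reduce the claim to a single observation: for $H \in \mathfrak t$, the complex number $\alpha(H)$ is purely imaginary. Once this is established, the rest follows by complex linearity of $\alpha \in \mathfrak h^{*}$ and a routine computation.

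First, I would invoke \cref{lem:inner_prod_alg} to obtain an inner product $\braket{\cdot,\cdot}$ on $\mathfrak g$ satisfying $\braket{\ad(X)Y,Z} = \braket{Y, \ad(X^{*})Z}$ for all $X,Y,Z \in \mathfrak g$. For any $H \in \mathfrak t \subseteq \mathfrak k$, the convention \cref{eq:star_involution} gives $H^{*} = -\theta(H) = -H$. Therefore
\begin{equation}
    \braket{\ad(H)Y, Z} = \braket{Y, \ad(-H)Z} = -\braket{Y, \ad(H)Z},
\end{equation}
which shows that $\ad(H)$ is skew-Hermitian with respect to $\braket{\cdot,\cdot}$. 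Skew-Hermitian operators on a complex inner product space have purely imaginary eigenvalues, so every weight of the adjoint representation of $\mathfrak h$ evaluated on any $H \in \mathfrak t$ is purely imaginary. In particular, $\alpha(H_1), \alpha(H_2) \in i\mathbb R$.

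With this in hand, I would write $\alpha(H_1) = ia$ and $\alpha(H_2) = ib$ for real numbers $a,b \in \mathbb R$. By the complex linearity of $\alpha$,
\begin{equation}
    \alpha(H) = \alpha(H_1) + i\,\alpha(H_2) = ia + i(ib) = -b + ia.
\end{equation}
Complex conjugation yields $\alpha(H)^{*} = -b - ia$. On the other hand,
\begin{equation}
    -\alpha(H_1) + i\,\alpha(H_2) = -ia + i(ib) = -b - ia,
\end{equation}
matching $\alpha(H)^{*}$, which proves the identity.

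The only substantive step here is the purely-imaginary-eigenvalue claim for $\ad(H)$ with $H \in \mathfrak t$. This is where compactness of the underlying group $K$ is used implicitly, via \cref{lem:inner_prod_alg}, to produce the adjoint-invariant Hermitian form. Everything else is bookkeeping with the conventions established in \cref{rem:complexification_convention}.
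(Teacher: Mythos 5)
Your proof is correct and takes essentially the same approach as the paper's: both use the adjoint-compatible inner product from \cref{lem:inner_prod_alg} together with $H^{*} = -H$ for $H \in \mathfrak t$ to conclude $\alpha(H) \in i\mathbb R$, then finish by complex linearity. The paper unpacks the skew-Hermitian $\Rightarrow$ imaginary-eigenvalue step by hand on the root vector, whereas you cite it as a known fact, but this is a presentational difference only.
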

\begin{proof}
    Let $X \in \mathfrak g$ be a non-zero eigenvector of the adjoint representation of $H \in \mathfrak t \subseteq \mathfrak k$ with eigenvalue $\alpha \in \mathbb C$, meaning 
    \begin{equation}
        \ad(H)(X) = [H, X] = \alpha X.
    \end{equation}
    Since the inner product in \cref{lem:inner_prod_alg} is anti-linear in the first argument and linear in the second argument, and $H \in \mathfrak t \subseteq \mathfrak k$ satisfies $H^{*} = - H$, one concludes
    \begin{align}
        \begin{split}
            \alpha^{*}\braket{X, X}
            &= \braket{\alpha X, X}, \\
            &= \braket{\ad(H)(X), X}, \\
            &= \braket{X, \ad(H^{*})(X)}, \\
            &= -\braket{X, \ad(H)(X)}, \\
            &= -\alpha \braket{X, X}.
        \end{split}
    \end{align}
    Therefore, for every $H \in \mathfrak t$, its adjoint representation, $\ad(H)$, has purely imaginary eigenvalues, $\alpha^{*} = - \alpha$.
    Since roots are just simultaneous eigenvectors of the adjoint representation of $\mathfrak h$, one concludes, for all $H \in \mathfrak t$, that $\alpha(H) \in i \mathbb R$ is imaginary. Therefore,
    \begin{align}
        \alpha(H)^{*} &= \alpha(H_1 + i H_2)^{*} = (\alpha(H_1) + i \alpha(H_2))^{*}, \\
        &= \alpha(H_1)^{*} - i \alpha(H_2)^{*} = - \alpha(H_1) + i \alpha(H_2).
    \end{align}
\end{proof}
\begin{rem}
    \label{rem:real_root}
    In light of \cref{lem:root_complex_structure}, a root of a reductive Lie algebra, while having the type of a complex-linear function $\alpha : \mathfrak h \to \mathbb C$ (i.e. $\alpha \in \mathfrak h^{*}$), is essentially determined by its restriction to $\mathfrak t$ or $i\mathfrak t$. Specifically, every root can be identified with a real-linear function on $i\mathfrak t$, i.e. $\mathfrak h : i \mathfrak t \to \mathbb R$.
\end{rem}

Understanding the algebraic structure of the roots and associated root spaces yields the most significant result in the representation theory of semisimple Lie algebras.
\begin{thm}
    \label{thm:root_space_decomp}
    Let $\mathfrak g = \mathfrak k_{\mathbb C}$ be a complex semisimple Lie algebra with fixed maximal abelian subalgebra $\mathfrak h$.
    Let $R \subseteq \mathfrak h^{*}$ denote the set of all non-zero roots of $\mathfrak g$. 
    Then $\mathfrak g$ can be decomposed as the direct sum
    \begin{equation}
        \mathfrak g = \mathfrak h \oplus \bigoplus_{\alpha \in R} \mathfrak g_{\alpha},
    \end{equation}
    where $\mathfrak g_{\alpha} \subseteq \mathfrak g$ is the root space with associated to $\alpha \in \mathfrak h^{*}$.\\
    Furthermore,
    \begin{itemize}
        \item for all $\alpha, \beta \in \mathfrak h^{*}$, $[g_{\alpha}, g_{\beta}] \subseteq g_{\alpha + \beta}$, meaning
            \begin{equation}
                X \in g_{\alpha}, Y \in g_{\beta} \implies [X,Y] \in g_{\alpha, \beta},
            \end{equation}
        \item if $\alpha \in R$ is a root with root vector $X \in g_{\alpha}$, then $-\alpha \in \mathfrak h^{*}$ is also a root with root vector $X^{*} \in g_{-\alpha}$, and moreover $\alpha$ and $-\alpha$ are the \textit{only} non-zero roots proportional to $\alpha$,
        \item the roots span all of $\mathfrak h^{*}$,
            \begin{equation}
                \mathrm{span}(R) = \mathfrak h^{*},
            \end{equation}
        \item for each root $\alpha \in R$, the root space, $\mathfrak g_{\alpha}$, is one-dimensional,
            \begin{equation}
                \dim(\mathfrak g_{\alpha}) = 1.
            \end{equation}
    \end{itemize}
\end{thm}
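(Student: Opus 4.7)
The plan is to exploit the unitarity of the adjoint action of $\mathfrak t$ under the inner product from \cref{lem:inner_prod_alg}, and then to leverage the $\mathfrak{sl}(2,\mathbb C)$ representation theory built in \cref{exam:reps_sl2C}. I first observe that for each $H \in \mathfrak t$, the operator $\ad(H)$ is skew-self-adjoint with respect to the inner product, since $H^* = -H$ gives $\langle \ad(H)Y, Z\rangle = \langle Y, \ad(-H)Z\rangle$. The family $\{\ad(H) \mid H \in \mathfrak t\}$ therefore consists of commuting, simultaneously diagonalizable operators with purely imaginary eigenvalues, which by complex-linear extension yields a simultaneous eigenspace decomposition of $\mathfrak g$ under $\ad(\mathfrak h)$. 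The zero-weight space contains $\mathfrak h$ and cannot strictly contain it, for otherwise one could enlarge $\mathfrak h$ to a bigger abelian subalgebra (using that $\ad(X)$ is still diagonalizable in the zero-weight block), contradicting maximality. This produces the direct sum decomposition $\mathfrak g = \mathfrak h \oplus \bigoplus_{\alpha \in R}\mathfrak g_\alpha$.

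Next, the bracket relation $[\mathfrak g_\alpha,\mathfrak g_\beta]\subseteq \mathfrak g_{\alpha+\beta}$ follows from the Jacobi identity: for $X \in \mathfrak g_\alpha$, $Y \in \mathfrak g_\beta$, and $H \in \mathfrak h$,
\begin{equation}
    [H,[X,Y]] = [[H,X],Y] + [X,[H,Y]] = (\alpha(H)+\beta(H))[X,Y].
\end{equation}
For the conjugate root statement, I would take $*$ of the eigenvalue equation $[H,X]=\alpha(H)X$ for $H\in \mathfrak t$, use that $[H,X]^* = -[H^*,X^*] = [H,X^*]$ because $H^*=-H$, and combine with $\alpha(H)^*=-\alpha(H)$ from \cref{lem:root_complex_structure} to conclude $[H,X^*] = -\alpha(H)X^*$, so $X^*\in \mathfrak g_{-\alpha}$. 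The spanning statement follows from semisimplicity: if some $H\in\mathfrak h$ annihilated every root then $\ad(H)$ would vanish on every root space, and thus on all of $\mathfrak g$, placing $H$ in the center $\mathfrak z(\mathfrak g)=0$.

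The harder part will be the one-dimensionality of each $\mathfrak g_\alpha$ together with the statement that no non-zero scalar multiple of $\alpha$ other than $\pm\alpha$ is a root. The plan is to pick $0 \neq X \in \mathfrak g_\alpha$ and set $Y = X^* \in \mathfrak g_{-\alpha}$, so that $H_\alpha := [X,Y] \in \mathfrak h$ by the previous bullet. A direct computation using the inner product shows $H_\alpha \neq 0$ and that after normalization the triple $(H_\alpha, X, Y)$ satisfies the $\mathfrak{sl}(2,\mathbb C)$ commutation relations of \cref{exam:HXY_basis}. I would then consider the subspace
\begin{equation}
    \s V = \mathbb C H_\alpha \oplus \bigoplus_{c \in \mathbb C\setminus\{0\}} \mathfrak g_{c\alpha},
\end{equation}
which is invariant under this $\mathfrak{sl}(2,\mathbb C)$-action (using $[\mathfrak g_{c\alpha}, \mathfrak g_{\pm\alpha}] \subseteq \mathfrak g_{(c\pm 1)\alpha}$), and decompose it into irreducibles via \cref{exam:reps_sl2C}. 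On each irreducible summand the eigenvalues of $\ad(H_\alpha)$ must be integers symmetric around zero, which constrains the admissible multiples $c$. A careful counting of the dimensions of weight spaces against the string-of-weights structure forces $\s V = \mathbb C H_\alpha \oplus \mathfrak g_\alpha \oplus \mathfrak g_{-\alpha}$ with $\dim \mathfrak g_\alpha = 1$, ruling out half-integer multiples via the $\ker(\ad(H_\alpha)-2)$ analysis inside the reducible decomposition. This $\mathfrak{sl}(2,\mathbb C)$-argument is the technical heart of the proof.
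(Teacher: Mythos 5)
Your sketch is a genuine proof, whereas the paper does not actually prove \cref{thm:root_space_decomp}: its ``proof'' is a single-sentence citation to several results in Hall's textbook. So rather than comparing two arguments, what you have done is reconstruct (in outline) the argument that the cited references contain. The structure you propose is the standard one and it is sound: skew-self-adjointness of $\ad(H)$ for $H \in \mathfrak t$ under the inner product from \cref{lem:inner_prod_alg} gives simultaneous diagonalizability and hence the root-space decomposition; the Jacobi identity gives $[\mathfrak g_\alpha,\mathfrak g_\beta]\subseteq\mathfrak g_{\alpha+\beta}$; conjugate-linearity of $*$ together with $\alpha(H)^*=-\alpha(H)$ on $\mathfrak t$ gives $X^*\in\mathfrak g_{-\alpha}$; separating points via semisimplicity gives spanning; and the $\mathfrak{sl}(2,\mathbb C)$-string argument gives one-dimensionality and the $\pm\alpha$ statement.

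Two technical points in your sketch deserve more care, since glossing them is where proofs of this theorem typically go wrong. First, to identify the zero-weight space with $\mathfrak h$ it is not enough to invoke maximality of $\mathfrak h$ as an abelian subalgebra; you need that the zero-weight space (the centralizer of $\mathfrak h$) consists of $\ad$-semisimple elements, which is precisely where the compact-real-form structure (elements of $\mathfrak t$ give skew-self-adjoint $\ad(H)$, hence $\mathfrak h$ consists of semisimple elements) enters. Second, the invariance of $\s V = \mathbb C H_\alpha \oplus\bigoplus_{c\neq 0}\mathfrak g_{c\alpha}$ under your $\mathfrak{sl}(2,\mathbb C)$-triple fails at the boundary case unless you first show $[\mathfrak g_\alpha,\mathfrak g_{-\alpha}]\subseteq\mathbb C H_\alpha$: for $X'\in\mathfrak g_\alpha$ and $Y'\in\mathfrak g_{-\alpha}$ the inner-product identity
\begin{equation}
    \braket{[X',Y'],\,H} = \braket{Y',\,\ad((X')^*)\,H} = \alpha(H)\,\braket{Y',\,(X')^*}
\end{equation}
shows $[X',Y']$ is proportional to the coroot, and this is needed before the module decomposition of \cref{exam:reps_sl2C} can be applied. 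Once $\s V$ is genuinely an $\mathfrak{sl}(2,\mathbb C)$-module, your weight-counting plan (a unique odd-dimensional summand forced by $\dim(\s V\cap\ker\ad(H_\alpha))=1$, then excluding half-integer multiples) carries through; it is the ``technical heart'' exactly as you say, and the argument is correct modulo these fill-ins.
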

\begin{proof}
    The statement of the above theorem is a combination of a number of results~\cite[Prop. 7.16-7.18, Thm. 7.19, Thm 7.23]{hall2015lie}.
\end{proof}

\begin{thm}
    \label{thm:sl2_inside}
    Let $\mathfrak g = \mathfrak k_{\mathbb C}$ be a reductive Lie algebra.
    Then there exists a subalgebra $\mathfrak s^{\alpha} \subseteq \mathfrak g$ spanned by $H_{\alpha} \in \mathfrak h$, $X_{\alpha} \in \mathfrak g_{\alpha}$, and $Y_{\alpha} = X_{\alpha}^{*} \in \mathfrak g_{-\alpha}$ which satisfies
    \begin{equation}
        [H_{\alpha}, X_{\alpha}] = 2 X_{\alpha},
        \qquad
        [H_{\alpha}, Y_{\alpha}] = 2 Y_{\alpha},
        \qquad
        [X_{\alpha}, Y_{\alpha}] = H_{\alpha},
    \end{equation}
    and thus $\mathfrak s^{\alpha}$ is isomorphic to $\mathrm{sl}(2, \mathbb C)$ from~\cref{exam:HXY_basis}.
    Moreover, $H_{\alpha} \in \mathfrak h$ is uniquely determined by $\alpha$ and called the \defnsty{coroot} of $\alpha$.
\end{thm}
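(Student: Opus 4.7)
The plan is to manufacture the triple $(H_\alpha, X_\alpha, Y_\alpha)$ from any single non-zero element of $\mathfrak g_\alpha$ and then rescale to normalize the brackets. By \cref{thm:root_space_decomp} the root space $\mathfrak g_\alpha$ is one-dimensional, so I pick any non-zero $X_\alpha \in \mathfrak g_\alpha$ and set $Y_\alpha := X_\alpha^{*}$, which the same theorem places in $\mathfrak g_{-\alpha}$. Since $[\mathfrak g_\alpha, \mathfrak g_{-\alpha}] \subseteq \mathfrak g_0 = \mathfrak h$, the commutator $H' := [X_\alpha, Y_\alpha]$ automatically lies in the Cartan subalgebra $\mathfrak h$.

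The central step is to show $\alpha(H') \neq 0$, without which no rescaling can normalize the brackets. For this I would invoke the invariant inner product $\braket{\cdot,\cdot}$ from \cref{lem:inner_prod_alg}. For any $H \in \mathfrak h$, the invariance identity together with $X_\alpha^{*} = Y_\alpha$ and the root equation $[Y_\alpha, H] = \alpha(H) Y_\alpha$ give
\begin{equation*}
\braket{H', H} = \braket{\ad(X_\alpha)(Y_\alpha), H} = \braket{Y_\alpha, \ad(Y_\alpha)(H)} = \braket{Y_\alpha, [Y_\alpha, H]} = \alpha(H)\braket{Y_\alpha, Y_\alpha}.
\end{equation*}
Positive-definiteness of $\braket{\cdot,\cdot}$, which issues from the compact-group twirling construction behind \cref{lem:inner_prod_alg}, ensures $\braket{Y_\alpha, Y_\alpha} > 0$, and since $\alpha$ is not the zero functional there exists some $H$ with $\alpha(H) \neq 0$, hence $H' \neq 0$. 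In fact the displayed identity exhibits $H'$ as the Riesz dual (up to the positive scalar $\braket{Y_\alpha,Y_\alpha}$) of $\alpha$ under the restriction of $\braket{\cdot,\cdot}$ to $\mathfrak h$, so $\alpha(H') = \braket{Y_\alpha,Y_\alpha}\braket{\alpha,\alpha}_{\mathfrak h^{*}} > 0$.

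Granted this, define the coroot $H_\alpha := \tfrac{2}{\alpha(H')} H'$, so that $\alpha(H_\alpha) = 2$ by construction, and rescale $X_\alpha \mapsto c X_\alpha$ with $c := \sqrt{2/\alpha(H')}$, redefining $Y_\alpha$ as the new $X_\alpha^{*}$; then $[X_\alpha, Y_\alpha] = |c|^{2} H' = H_\alpha$ as required. The remaining brackets are now immediate from the root equations themselves: $[H_\alpha, X_\alpha] = \alpha(H_\alpha) X_\alpha = 2 X_\alpha$ and $[H_\alpha, Y_\alpha] = -\alpha(H_\alpha) Y_\alpha = -2 Y_\alpha$, reproducing the $\mathfrak{sl}(2,\mathbb C)$ relations of \cref{exam:HXY_basis}. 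Uniqueness of $H_\alpha$, in the sense of depending only on $\alpha$, follows because a rescaling $X_\alpha \mapsto t X_\alpha$ sends both $H'$ and $\alpha(H')$ to $|t|^{2}$ times themselves, leaving their quotient invariant.

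The main obstacle I expect is not any isolated step but the bookkeeping across the real form $\mathfrak k$, its complexification $\mathfrak g$, and the associated $*$-involution: one must verify that the form supplied by \cref{lem:inner_prod_alg} is a genuine positive-definite Hermitian inner product, that $\ad(X^{*})$ serves as the appropriate adjoint of $\ad(X)$ in the relevant sense, and that complex-linear versus real-linear evaluations of $\alpha$ (see \cref{rem:real_root}) are tracked consistently throughout. Once those conventions are pinned down, the argument above is essentially mechanical.
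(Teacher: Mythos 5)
Your argument follows the paper's proof essentially verbatim: take a non-zero $X_\alpha \in \mathfrak g_\alpha$, set $Y_\alpha = X_\alpha^*$, observe $H' := [X_\alpha, Y_\alpha] \in \mathfrak g_0 = \mathfrak h$, use the adjoint-invariance identity of \cref{lem:inner_prod_alg} to show $\braket{H', H} = \alpha(H)\norm{X_\alpha^*}^2$ and hence $\alpha(H') > 0$, then rescale by $\sqrt{2/\alpha(H')}$. The only cosmetic difference is that you phrase the positivity of $\alpha(H')$ via the Riesz dual and $\braket{\alpha,\alpha}_{\mathfrak h^*}$, while the paper obtains it by setting $H = H'$ in the same identity; these are the same fact.

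One substantive point in your favor: you correctly derive $[H_\alpha, Y_\alpha] = -\alpha(H_\alpha)Y_\alpha = -2Y_\alpha$. The statement of the theorem (and \cref{exam:HXY_basis}) instead writes $[H_\alpha, Y_\alpha] = 2Y_\alpha$; direct computation with the $2\times 2$ matrices of \cref{exam:HXY_basis} gives $HY - YH = -2Y$, so the paper's stated relation carries a sign typo which its proof (which does not explicitly verify the brackets) does not catch. Your derivation produces the genuine $\mathfrak{sl}(2,\mathbb C)$ relations.
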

\begin{proof}
    Let $\braket{\cdot, \cdot}$ be the inner product on $\mathfrak g$ provided by \cref{lem:inner_prod_alg}
    Then let $X \in g_{\alpha}$ be a non-zero root vector for the root $\alpha \in \mathfrak h^{*}$, meaning  
    \begin{equation}
        \forall H \in \mathfrak h : \ad(H)(X) = [H,X] = \alpha(H) X.
    \end{equation} 
    By \cref{thm:root_space_decomp}, $X^{*} \in g_{-\alpha}$ is a root vector for the root $-\alpha \in \mathfrak h^{*}$.
    Therefore, by \cref{lem:inner_prod_alg}, for all $H \in \mathfrak h$,
    \begin{equation}
        \braket{[X,X^*], H} = \braket{X^*, [X^*, H]} = - \braket{X^{*}, [H, X^{*}]} = \alpha(H) \braket{X^{*}, X^{*}}.
    \end{equation}
    As $X^{*}$ is non-zero, so is $\braket{X^{*}, X^{*}}$, which means $\alpha$ is proportional to $[X,X^{*}]$ in the sense that
    \begin{equation}
        \forall H \in \mathfrak h : \alpha(H) = \frac{\braket{[X,X^*], H}}{\braket{X^{*}, X^{*}}}.
    \end{equation}
    Again using \cref{thm:root_space_decomp}, $[X, X^{*}] \in g_{0} = \mathfrak h$ and therefore 
    \begin{equation}
        \alpha([X,X^*]) = \frac{\norm{[X,X^*]}^{2}}{\norm{X^{*}}^{2}} > 0
    \end{equation}
    is well-defined, real and positive.
    The first claim of the proof follows from defining
    \begin{equation}
        H_{\alpha} = \frac{2}{\alpha([X,X^*])}[X,X^*],
        \qquad
        X_{\alpha} = \sqrt{\frac{2}{\alpha([X,X^*])}}X,
        \qquad
        Y_{\alpha} = \sqrt{\frac{2}{\alpha([X,X^*])}}X^*,
    \end{equation}
    which satisfy the claimed commutation relations.
    To prove that $H_{\alpha}$ is independent of the choice of non-zero root vector $X \in g_{\alpha}$ initially chosen, it suffices to note that, by \cref{thm:root_space_decomp}, $\dim(g_{\alpha}) = 1$, and thus the only other choice would be a scalar multiple of $X$ which leaves the definition of $H_{\alpha}$ unaffected as $\alpha$ is proportional to $H_{\alpha}$,
    \begin{equation}
        \alpha(H) \propto \braket{H_{\alpha}, H},
    \end{equation}
    with proportionality constant $\norm{[X,X^*]}^{2}/(2\norm{X^{*}}^{4})$ which is independent of scalar multiples of $X$.
\end{proof}
\begin{cor}
    \label{cor:integral_rep}
    Let $\mathfrak g = \mathfrak k_{\mathbb C}$ be a finite-dimensional reductive Lie algebra, let $\alpha, \beta \in \mathfrak h^{*}$ be roots of $\mathfrak g$ and let $H_{\beta} \in \mathfrak h$ be the coroot of $\beta$ as defined in \cref{thm:sl2_inside}.
    Then $\alpha(H_{\beta})$ is an integer,
    \begin{equation}
        \alpha(H_{\beta}) \in \mathbb Z.
    \end{equation}
\end{cor}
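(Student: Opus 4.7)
The plan is to view $\mathfrak g$ as a finite-dimensional representation of the subalgebra $\mathfrak s^{\beta} \subseteq \mathfrak g$ via the adjoint representation $\ad : \mathfrak g \to \mathfrak{gl}(\mathfrak g)$, and then apply the analysis of finite-dimensional $\mathfrak{sl}(2, \mathbb C)$-representations from \cref{exam:reps_sl2C}. By \cref{thm:sl2_inside}, $\mathfrak s^{\beta}$ is spanned by $H_{\beta}, X_{\beta}, Y_{\beta}$ satisfying the standard $\mathfrak{sl}(2, \mathbb C)$ commutation relations, so $\ad \restriction_{\mathfrak s^{\beta}} : \mathfrak s^{\beta} \to \mathfrak{gl}(\mathfrak g)$ is a finite-dimensional Lie algebra representation of $\mathfrak{sl}(2, \mathbb C)$.

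First, I would fix a non-zero root vector $X_{\alpha} \in \mathfrak g_{\alpha}$ (which is one-dimensional by \cref{thm:root_space_decomp}) and observe that
\begin{equation}
    \ad(H_{\beta})(X_{\alpha}) = [H_{\beta}, X_{\alpha}] = \alpha(H_{\beta}) X_{\alpha},
\end{equation}
so $X_{\alpha}$ is an eigenvector of $\ad(H_{\beta})$ with eigenvalue $\alpha(H_{\beta}) \in \mathbb C$. The goal is then to show that this eigenvalue is an integer.

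Next, following the argument laid out in \cref{exam:reps_sl2C} (now applied to the operator $\ad(H_{\beta})$ acting on $\mathfrak g$ rather than on an irreducible representation), I would use the commutation relation $[H_{\beta}, X_{\beta}] = 2 X_{\beta}$ to conclude that $\ad(X_{\beta})$ raises the $\ad(H_{\beta})$-eigenvalue by $2$: if $v$ is a non-zero $\ad(H_{\beta})$-eigenvector with eigenvalue $\mu$, then $\ad(X_{\beta})(v)$ is either zero or an eigenvector with eigenvalue $\mu + 2$. Since $\mathfrak g$ is finite-dimensional, iterating $\ad(X_{\beta})$ on $X_{\alpha}$ must terminate at some non-zero ``top'' vector $u = \ad(X_{\beta})^{r}(X_{\alpha})$ satisfying $\ad(X_{\beta})(u) = 0$ with eigenvalue $\lambda = \alpha(H_{\beta}) + 2 r$. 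Similarly, using $[H_{\beta}, Y_{\beta}] = -2 Y_{\beta}$, iterating $\ad(Y_{\beta})$ on $u$ produces a chain of eigenvectors $w_{j} = \ad(Y_{\beta})^{j}(u)$ of $\ad(H_{\beta})$ with eigenvalues $\lambda - 2 j$, which terminates at some finite $\ell$ with $\ad(Y_{\beta})(w_{\ell}) = 0$.

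The main obstacle, and the crucial step, is to verify the same identity shown in \cref{exam:reps_sl2C},
\begin{equation}
    0 = \ad(X_{\beta})(w_{\ell+1}) = (\ell + 1)(\lambda - \ell) w_{\ell},
\end{equation}
which is an inductive calculation using only the commutation relations in $\mathfrak s^{\beta}$ and does not require the ambient representation to be irreducible (just that the chain $w_0, w_1, \ldots, w_\ell$ terminates, which is automatic from finite-dimensionality). Since $w_{\ell} \neq 0$, this forces $\lambda = \ell \in \mathbb Z_{\geq 0}$, and consequently
\begin{equation}
    \alpha(H_{\beta}) = \lambda - 2 r = \ell - 2 r \in \mathbb Z,
\end{equation}
which completes the proof.
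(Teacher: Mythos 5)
Your proof is correct and takes the same high-level approach as the paper: view $\mathfrak g$ as a finite-dimensional $\mathfrak s^{\beta} \cong \mathfrak{sl}(2, \mathbb C)$-module via the adjoint representation and apply $\mathfrak{sl}(2)$ weight-chain theory to the $\ad(H_\beta)$-eigenvector $X_\alpha$. You fill in a detail the paper glosses over in its terse citation of \cref{exam:reps_sl2C}: that example is stated only for \emph{irreducible} representations, while the restriction of $\ad$ to $\mathfrak s^{\beta}$ is generally reducible, and you correctly note that the inductive identity $\ad(X_\beta)(w_{\ell+1}) = (\ell+1)(\lambda-\ell)w_\ell$ forcing $\lambda = \ell$ uses only termination of the chain (finite-dimensionality), not irreducibility — an alternative fix would be to invoke complete reducibility of $\mathfrak{sl}(2, \mathbb C)$-modules and decompose first. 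You also use the correct relation $[H_\beta, Y_\beta] = -2Y_\beta$; note that \cref{exam:HXY_basis} and \cref{thm:sl2_inside} in the paper contain a sign typo ($[H,Y] = 2Y$) which your argument silently and correctly overrides.
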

\begin{proof}
    Given that the coroot $H_{\beta} \in \mathfrak h$ as defined by \cref{thm:sl2_inside}, belongs to subalgebra of $\mathfrak g$ isomorphic to $\mathfrak{sl}(2, \mathbb C)$, we can apply \cref{exam:reps_sl2C} to see that the eigenvalues of $\arep(H_{\beta})$ for any finite-dimensional representation must be integers. 
    Finally, given that $\mathfrak g$ is assumed finite-dimensional, and that $\alpha(H_{\beta})$ is an eigenvalue of $\ad(H_{\beta})$, it must be that $\alpha(H_{\beta})$ is an integer.
\end{proof}

\begin{rem}
    The results of \cref{thm:root_space_decomp},\cref{cor:integral_rep} and \cite[Thm. 7.26]{hall2015lie} imply that the roots of any finite-dimensional semisimple complex Lie algebra constitute an abstract \textit{root system} (see \cref{sec:roots_systems}).
    Using the inner product provided by \cref{lem:inner_prod_alg}, it is possible to identify $\mathfrak h$ with its dual vector space $\mathfrak h^{*}$ and therefore every root $\alpha \in \mathfrak h^{*}$ with an unique element $H_{\alpha}' \in \mathfrak h$ such that 
    \begin{equation}
        \alpha(H) = \braket{H'_{\alpha}, H},
    \end{equation}
    holds for every $H \in \mathfrak h'$.
    By \cref{thm:sl2_inside} we see that the element $H_{\alpha}' \in H$ would be \textit{proportional} to the coroot $H_{\alpha} \in H$.
    The convention adopted in this thesis is to consider roots as functionals, i.e. $\alpha \in \mathfrak h^{*}$, in which case the enveloping vector space for the abstract root systems considered in \cref{sec:roots_systems} should be taken to be $\s E = \mathfrak h^{*}$ with inner product, $\braket{\cdot, \cdot}'$ on $\mathfrak h'$, \textit{dual} to the inner product provided by \cref{lem:inner_prod_alg}, i.e., for all $\alpha, \beta \in \mathfrak h^{*}$ let
    \begin{equation}
        \braket{\alpha, \beta}' \coloneqq \braket{H'_{\alpha}, H'_{\beta}}.
    \end{equation}
    Of course, in the setting of finite-dimensional inner product spaces, which convention one chooses is purely a matter of preference.
\end{rem}
\subsection{Roots systems}
\label{sec:roots_systems}

\begin{defn}
    \label{defn:root_system}
    A \defnsty{root system} $R$ is a finite set of non-zero vectors, called \defnsty{roots}, of a finite dimensional real inner product space $(\s E, \braket{\cdot, \cdot})$ such that
    \begin{enumerate}[i)]
        \item the roots span $\s E$, i.e. $\mathrm{span}(R) = \s E$,
        \item if $\alpha \in R$, then $s \alpha \in R$ if and only if $s \in \{-1,+1\}$,
        \item if $\alpha, \beta \in R$, then $c_{\alpha}(\beta)$ defined by
            \begin{equation}
                \label{eq:cartan_integer}
                c_{\alpha}(\beta) \coloneqq 2\frac{\braket{\alpha, \beta}}{\braket{\alpha, \alpha}},
            \end{equation}
            in an integer (called a \defnsty{Cartan integer}), and
        \item if $\alpha, \beta \in R$, then $r_{\alpha}(\beta) \in R$ where
            \begin{equation}
                \label{eq:root_reflect}
                r_{\alpha}(\beta) \coloneqq \beta - c_{\alpha}(\beta) \alpha.
            \end{equation}
    \end{enumerate}
    The \defnsty{rank of a root system} $R$ is the dimension of $\s E$.
\end{defn}

\begin{rem}
    \label{rem:weyl_group}
    For any root $\alpha \in R$ of a root system $R$, \cref{eq:root_reflect} defines a \textit{reflection} $r_{\alpha} : R \to R$ sending each root $\beta \in R$ to the root $r_{\alpha}(\beta) \in R$ obtained by reflecting $\beta$ through the hyperplane $\s E_{\alpha} \subset \s E$ (of dimension $\dim(\s E)-1$) orthogonal to $\alpha$ defined by
    \begin{equation}
        \label{eq:root_hyperplane}
        \s E_{\alpha} \coloneqq \{ v \in \s E \mid \braket{\alpha, v} = 0 \}.
    \end{equation}
    The collection of all such reflections $W = \{r_{\alpha} \mid \alpha \in R\}$ generates a finite subgroup of the orthogonal group on $\s E$ known as the \defnsty{Weyl group} of the root system $R$.
\end{rem}

\begin{defn}
    \label{defn:weyl_chambers}
    Let $R \subset \s E$ be a root system of the real inner product space $(\s E, \braket{\cdot, \cdot})$ and let $S : \s E \to \{-1, 0, +1\}^{R}$ be the function assigning to each element $\mu \in \s E$ the function, $S_{\mu} : R \to \{-1,0,+1\}$, mapping each root $\alpha \in R$ to the sign of $\braket{\mu, \alpha}$:
    \begin{equation}
        S_{\mu}(\alpha) = \mathrm{sign}(\braket{\mu, \alpha}) =
        \begin{cases}
            -1 & \text{if }\braket{\mu, \alpha} < 0, \\
            \hspace{0.75em}0 & \text{if }\braket{\mu, \alpha} = 0, \\
            +1 & \text{if }\braket{\mu, \alpha} > 0.
        \end{cases}
    \end{equation}
    Since there are only finitely many $\{-1, 0, +1\}$-valued functions on $R$, the map $\mu \mapsto S_{\mu}$ partitions $\s E$ into finitely many disjoint regions.
    Those regions where $S_{\mu}$ is strictly non-zero for all roots,
    \begin{equation}
        \forall \alpha \in R : S_{\mu}(\alpha) = \mathrm{sign}(\braket{\mu, \alpha}) \neq 0,
    \end{equation}
    are called the \defnsty{Weyl chambers}.
    In other words, the Weyl chambers are the open connected components of the set-wise difference $\s E \setminus \bigcup_{\alpha \in R} \s E_{\alpha}$ where $\s E_{\alpha}$ is the hyperplane orthogonal to the root $\alpha$ as defined by \cref{eq:root_hyperplane}.

    The Weyl group, $W$, acts transitively on the Weyl chambers~\cite[Prop. 8.23]{hall2015lie}.
\end{defn}

The following remark demonstrates that for every root system $R$ one can always pick a subset of roots $\Delta \subset R$ called a \textit{base} for $R$ that has a number of desirable properties~\cite[Thm. 8.16]{hall2015lie}. 
Choosing a base for a root system, as we shall see, is equivalent to choosing a particular Weyl chamber to be the \textit{positive} one.
\begin{rem}
    \label{rem:base_root_sys}
    Let $R \subset \s E$ be a root system in $\s E$ and fix a hyperplane $\s W \subset \s E$ of dimension $\dim(\s E) - 1$, called the \defnsty{separating hyperplane}, such that no root $\alpha \in R$ is contained within $\s W$. 
    That such a hyperplane actually exists follows from choosing a Weyl chamber, denoted by $C$, to be considered as the \defnsty{positive Weyl chamber}, and then choosing a vector $\omega \in C$ belonging to the positive Weyl chamber to serve as the normal vector of $\s W$, i.e. $\s W$ is defined by $\omega \in C$ by
    \begin{equation}
        \s W = \{ v \in \s E \mid \braket{\omega, v} = 0 \}.
    \end{equation}
    As $\omega$ belongs to the interior of a Weyl chamber, it is not contained in any of the hyperplanes $\s E_{\alpha}$ defined by \cref{eq:root_hyperplane}~\cite[Prop. 8.14]{hall2015lie}.
    The separating hyperplane $\s W$ derives its name because every root $\alpha \in R$ belongs exactly one of the two connected components in $\s E \setminus \s W$.
    In fact, given a hyperplane $\s W$ with normal $\omega$, a root $\alpha \in R$ is said to be a \defnsty{positive root} if $\braket{\alpha,\omega} > 0$ or a \defnsty{negative root} if $\braket{\alpha, \omega} < 0$.
    The set of positive roots in $R$ is denoted $R_{+}$ and the set of negative roots is denoted $R_{-}$ such that $R$ is the disjoint union $R = R_{+} \sqcup R_{-}$.
    While this bipartition of the root system into positive and negative roots evidently depends on the initial choice of positive Weyl chamber, it is independent of the choice of separating hyperplane.

    With respect to this decomposition of the root system, the subset $\Delta \subset R_+ \subset R$ of positive roots that cannot be written as a sum of two or more positive roots is called the \defnsty{base} of the root system.
    Using this construction, one can additionally show that i) the elements of the base $\Delta$ are linearly independent and ii) every positive (respectively negative) root is expressible as a non-negative (respectively non-positive) integer linear combination of elements in $\Delta$~\cite[Thms. 8.16]{hall2015lie}.
    These final two conditions are sometimes taken as an axiomatic definition for a \textit{base of a root system}, but it can be shown that every subset $\Delta \subset R$ satisfying these conditions arises under the construction presented here for some (non-unique) choice of separating hyperplane $\s W$ and normal $\omega$~\cite[Thms. 8.16 \& 8.17]{hall2015lie}.
\end{rem}

The next two definitions are concerned with elements $\mu \in \s E$ of the enveloping vector space that are not necessarily roots.
\begin{defn}
    Let $R \subset \s E$ be a root system with inner product $\braket{\cdot, \cdot}$.
    For each root, $\alpha \in R$, its \defnsty{coroot} is defined by
    \begin{equation}
        H_{\alpha} \coloneqq 2 \frac{\alpha}{\braket{\alpha, \alpha}}.
    \end{equation}
    An element $\mu \in \s E$ is called an \defnsty{algebraically integral element} if for all roots $\alpha \in R$,
    \begin{equation}
        \braket{\mu, H_{\alpha}} = 2\frac{\braket{\mu, \alpha}}{\braket{\alpha, \alpha}} \in \mathbb Z.
    \end{equation}
\end{defn}

\begin{defn}
    Let $R \subset \s E$ be a root system with inner product $\braket{\cdot, \cdot}$ and let $\Delta \subset R$ be a base for $R$ (\cref{rem:base_root_sys}).
    An element $\mu \in \s E$ is called \defnsty{dominant} if
    \begin{equation}
        \forall \alpha \in \Delta : \braket{\mu, \alpha} \geq 0,
    \end{equation}
    and \defnsty{strictly dominant} if,
    \begin{equation}
        \forall \alpha \in \Delta : \braket{\mu, \alpha} > 0.
    \end{equation}
\end{defn}
\begin{rem}
    Relative to a given choice of base, $\Delta$, for a root system, $R \subset \s E$, it is not too difficult to see that an element $\mu \in \s E$ is strictly dominant if and only if it belongs to the positive Weyl chamber (\cref{rem:base_root_sys}) and dominant if and only if it belongs to the closure of the positive Weyl chamber.
\end{rem}

\begin{defn}
    Let $R \subset \s E$ be a root system with bases $\Delta = \{\alpha_1, \ldots, \alpha_r\} \subset R$.
    An element $\mu \in \s E$ is \defnsty{higher} than an element $\nu \in \s E$, expressed as
    \begin{equation}
        \mu \geq \nu,
    \end{equation}
    if there exists non-negative integers $\{c_1, \ldots, c_r\}$ such that
    \begin{equation}
        \mu - \nu = c_1 \alpha_1 + \cdots + c_r \alpha_r.
    \end{equation}
    The relation, $\geq$, defines a partial ordering on $\s E$.
\end{defn}

\begin{exam}
    Let $n \in \mathbb N$ and let $\{e_1, e_2, \ldots, e_{n}, e_{n+1}\}$ be the standard orthonormal basis for $\mathbb R^{n+1}$ and let $\s E \simeq \mathbb R^{n}$ be the subspace of vectors $v$ with components $v_j = \braket{v, e_j}$ summing to zero.
    Then the $A_n$ root system is the root system consisting of all vectors of the form $\alpha_{ij} = e_i - e_j$ for some $i \neq j \in [n+1]$.
    The Cartan integer is given by
    \begin{equation}
        c_{\alpha_{ij}}(\alpha_{kl}) = 2 \frac{\braket{\alpha_{ij}, \alpha_{kl}}}{\braket{\alpha_{ij}, \alpha_{ij}}} = \braket{\alpha_{ij}, \alpha_{kl}} = \delta_{ik} + \delta_{jl} - \delta_{il} - \delta_{jk}.
    \end{equation}
    The base $\Delta$ for the $A_n$ root system is typically taken to be the of roots $\alpha_{ij}$ with $j = i + 1$:
    \begin{equation}
        \Delta = \{\alpha_{12}, \alpha_{23}, \ldots, \alpha_{n-1,n}, \alpha_{n,n+1} \}.
    \end{equation}
    With respect to this standard base for $A_n$, the positive roots are those roots $\alpha_{ij}$ with $i > j$ and the negative roots are those with $i < j$.
    The Weyl group $W$ for the $A_n$ is given by the faithful representation of the symmetric group $S_{n}$ which acts on the roots, $\alpha_{ij}$, for $\pi \in S_n$ by $\alpha_{ij} \mapsto \alpha_{\pi(i)\pi(j)}$.
    This representation of $S_n$ is also known as the \textit{standard representation} of $S_{n}$.
    The root system $A_n$ is also the root system associated to the Lie algebra $\mathfrak{sl}(n+1, \mathbb C)$.
\end{exam}

\subsection{Highest weights}
\label{sec:highest_weights}

The finite-dimensional irreducible representations of complex semisimple Lie algebras are determined, up to isomorphism, by their highest weights.
The following theorem is one part of the theorem of highest weights for complex semisimple Lie algebras.
\begin{thm}
    \label{thm:hwt_alg}
    Let $\mathfrak g$ be a complex semisimple Lie algebra, let $\mathfrak h \subseteq \mathfrak g$ be a fixed maximal abelian subalgebra and let $R$ be the root system of $\mathfrak g$ (relative to $\mathfrak h$), and let $\Delta \subset R$ be a base for $R$ with positive roots $R_+$ and negative roots $R_-$.
    If $\arep : \mathfrak g \to \mathfrak{gl}(\s V)$ is an irreducible representation of $\mathfrak g$ on a finite-dimensional complex vector space $\s V$, then exists a weight $\lambda \in \mathfrak h^{*}$ for $\arep$ with weight vector $v_{\lambda} \in \s V_{\lambda} \subset \s V$, meaning 
    \begin{equation}
        \forall H \in \mathfrak h : \arep (H) v_{\lambda} = \lambda(H) v_{\lambda},
    \end{equation}
    such that $\lambda$ is a \defnsty{highest weight}, meaning for all positive roots $\alpha \in R_+$ and corresponding root vectors $X \in g_{\alpha}$,
    \begin{equation}
        \label{eq:pos_roots_annhilate}
        \arep(X) v_{\lambda} = 0.
    \end{equation}
    Moreover, the highest weight $\lambda \in \mathfrak h^{*}$ is a dominant, algebraically integral element relative to the root system $R$ and base $\Delta$.
    Furthermore, if $\arep' : \mathfrak g \to \mathfrak{gl}(\s V')$ is another finite-dimensional irreducible representation of $\mathfrak g$ with highest weight $\lambda$, then $\arep'$ and $\arep$ are isomorphic representations of $\mathfrak g$.
\end{thm}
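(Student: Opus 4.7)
The plan is to prove existence, properties, and uniqueness of the highest weight in three stages, leveraging the root space decomposition (\cref{thm:root_space_decomp}), the embedded $\mathfrak{sl}(2,\mathbb C)$ subalgebras (\cref{thm:sl2_inside}), and the representation theory of $\mathfrak{sl}(2,\mathbb C)$ from \cref{exam:reps_sl2C}.

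First I would establish a weight space decomposition of $\s V$. Since $\mathfrak g = \mathfrak k_{\mathbb C}$ for $\mathfrak k$ the Lie algebra of a compact group, Weyl's unitary trick (via \cref{thm:compact_completely_reducible}) lets me endow $\s V$ with an inner product making $\arep(H)$ skew-Hermitian for $H \in \mathfrak t$, hence diagonalizable. Since $\mathfrak h = \mathfrak t \oplus i\mathfrak t$ is abelian, the operators $\{\arep(H)\}_{H \in \mathfrak h}$ commute and admit a simultaneous eigenspace decomposition $\s V = \bigoplus_{\mu} \s V_{\mu}$ indexed by the (finitely many) weights $\mu \in \mathfrak h^{*}$. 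Fixing the base $\Delta$ and its induced partial order, the finiteness of the weight set guarantees a maximal weight $\lambda$; I pick any non-zero $v_{\lambda} \in \s V_{\lambda}$.

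Next I would verify the annihilation property (\cref{eq:pos_roots_annhilate}) and deduce dominance and integrality. Using the commutation $[\arep(H), \arep(X)] = \arep([H,X]) = \alpha(H)\arep(X)$ for $X \in \mathfrak g_{\alpha}$, a direct calculation gives $\arep(H)\arep(X)v_{\lambda} = (\lambda + \alpha)(H)\arep(X)v_{\lambda}$, so $\arep(X)v_{\lambda}$ lies in $\s V_{\lambda + \alpha}$. For any positive root $\alpha \in R_+$, maximality of $\lambda$ forces $\s V_{\lambda + \alpha} = 0$ and hence $\arep(X)v_{\lambda} = 0$. To obtain dominance and integrality, I restrict $\arep$ to each $\mathfrak{sl}(2,\mathbb C)$-triple $\{H_{\alpha}, X_{\alpha}, Y_{\alpha}\}$ from \cref{thm:sl2_inside} for $\alpha \in \Delta$: the vector $v_{\lambda}$ is annihilated by $\arep(X_{\alpha})$ and is an eigenvector of $\arep(H_{\alpha})$ with eigenvalue $\lambda(H_{\alpha})$. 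By the classification in \cref{exam:reps_sl2C}, this eigenvalue must be a non-negative integer, proving both dominance and (after extending to negative roots via $H_{-\alpha} = -H_{\alpha}$) algebraic integrality.

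The main obstacle is the uniqueness claim. My approach is to show that the cyclic subspace $\s W = \arep(U(\mathfrak g))\,v_{\lambda}$ (the orbit of $v_{\lambda}$ under the universal enveloping algebra) equals $\s V$ and has a constrained structure depending only on $\lambda$. Using the Poincar\'e--Birkhoff--Witt ordering together with \cref{eq:pos_roots_annhilate} and the eigenvalue equations on $\mathfrak h$, every element of $\s W$ is expressible as a linear combination of monomials of the form $\arep(Y_{\beta_1})\cdots \arep(Y_{\beta_k})v_{\lambda}$ in negative root vectors, with weights of the form $\lambda - \sum_i \beta_i$ for $\beta_i \in R_+$. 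Irreducibility of $\arep$ forces $\s W = \s V$, and the weight $\lambda$ appears with multiplicity one. Then given a second irreducible representation $\arep'$ on $\s V'$ with highest weight $\lambda$ and highest vector $v'_{\lambda}$, I consider the diagonal subrepresentation of $\arep \oplus \arep'$ generated by $(v_{\lambda}, v'_{\lambda})$. This cyclic subspace projects non-trivially onto both summands; by Schur's lemma (\cref{lem:schurs_lemma}) applied to each projection, the two representations are isomorphic. The delicate step here is ruling out that the cyclic subspace coincides with one of the summands, which follows from the fact that the projection onto each factor is a non-zero intertwiner and hence an isomorphism onto an irreducible subrepresentation.
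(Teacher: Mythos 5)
Your proposal follows essentially the same route as the paper's proof: existence of a highest weight from finiteness of the weight set, dominance and integrality from the embedded $\mathfrak{sl}(2,\mathbb C)$-triples, and uniqueness by generating a common subrepresentation of $\arep \oplus \arep'$ from the diagonal vector $v_\lambda \oplus v'_\lambda$ and applying Schur's lemma. The main differences are that you fill in more detail — the weight space decomposition via Weyl's unitary trick, and a PBW/cyclic-module argument showing the highest weight space of a cyclic module has dimension one — where the paper is terse and defers to \cite[Prop.~6.15]{hall2015lie}; this extra detail is useful, since the multiplicity-one observation about the cyclic submodule $\s U = \arep(U(\mathfrak g))(v_\lambda \oplus v'_\lambda)$ is exactly what rules out $\s U = \s V \oplus \s V'$, which is the delicate case your final sentence gestures at but does not quite name.
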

\begin{proof}
    Let $\lambda \in \mathfrak h^*$ be any weight for $\arep$ with weight vector $v \in \s V$.
    If $w = \arep(X) v \neq 0$ was non-zero for some root vector $X \in \mathfrak g_{\alpha}$ with root $\alpha \in R$, it would constitute a weight vector with weight $\alpha + \lambda \in \mathfrak h^{*}$ because for all $H \in \mathfrak h$,
    \begin{equation}
        \arep(H) \arep(X) v = \arep([H,X]) v + \arep(X) \arep(H) v = \alpha(H) \arep(X) v - \lambda(H) \arep(X) v,
    \end{equation}
    and thus
    \begin{equation}
        \arep(H) w = (\alpha+\lambda)(H) w.
    \end{equation}
    By the finite-dimensionality of $\s V$, can only be finitely many weights $\lambda \in \mathfrak h^{*}$ for the representation $\arep$ and therefore there must exist a weight that is highest in the sense that for all \textit{positive} roots $\alpha \in R_+$, \cref{eq:pos_roots_annhilate} holds.

    That $\lambda \in \mathfrak h^*$ is a dominant, algebraically integral element with respect to the root system $R$ with base $\Delta$ follows from first recalling \cref{cor:integral_rep} (which guarantees $\lambda$ is an algebraically integral element) and then second noting that if $\lambda$ is not dominant, e.g. $\lambda(H_{\alpha}) < 0$ for some root $\alpha \in \Delta$ in the base $\Delta$, then $\arep(X) v \neq 0$ is necessarily a weight vector with weight $\lambda + \alpha$, contradicting \cref{eq:pos_roots_annhilate}.

    That irreducible representations with the same highest weight are isomorphic follows from noting that if $v_{\lambda} \in \s V$ and $v_{\lambda}' \in \s V'$ are highest weight vectors with the same weight $\lambda \in \mathfrak h'$, then $v_{\lambda} \oplus v_{\lambda}'$ is a highest weight vector with weight $\lambda$ for the direct sum representation $\phi \oplus \phi'$ of $\mathfrak g$ on $\s V \oplus \s V'$.
    By restricting $\phi \oplus \phi'$ to the invariant subspace $\s U \subseteq \s V \oplus \s V'$ containing $v_{\lambda} \oplus v_{\lambda}'$ and applying a version of Schur's lemma for Lie algebras, we see that both $\s V$ and $\s V'$ are isomorphic to the restriction of $\phi \oplus \phi'$ onto $\s U$ and thus are isomorphic to each other~\cite[Prop 6.15]{hall2015lie}.
\end{proof}

\begin{rem}
    \label{rem:weights_complex_structure}
    Although weights, $\lambda : \mathfrak h \to \mathbb C$, assign complex numbers, $\lambda(H)$, to elements $H \in \mathfrak h$, of the complexified Lie algebra $\mathfrak h = \mathfrak t \oplus i \mathfrak t$, they are uniquely determined by their assignment of purely imaginary numbers to $\mathfrak t$, or equivalently, by their assignment of real numbers to $i\mathfrak t$, i.e.
    \begin{align}
        H \in \mathfrak t &\implies \lambda(H) \in i \mathbb R, \\
        H \in i\mathfrak t &\implies \lambda(H) \in \mathbb R.
    \end{align}
    This observation is analogous to the result of \cref{lem:root_complex_structure} for roots. 
    In this manner, weights can be identified by their restriction to either $\mathfrak t$ or $i\mathfrak t$.
    In the latter case, where $\lambda$ is identified with its restriction to $i\mathfrak t$, denoted by $\lambda \in i \mathfrak t \to \mathbb R$, the weight $\lambda$ is called a \textit{real weight} and written as $\lambda \in (i \mathfrak t)^{*}$.
\end{rem}

It can also be shown that a converse to \cref{thm:hwt_alg} also holds and states that for every dominant algebraically integral element $\lambda \in \mathfrak h^{*}$ is the highest weight of some finite-dimensional irreducible representation~\cite[Thm. 9.5]{hall2015lie}.
\Cref{thm:hwt_alg} and its converse, when taken together, states that the isomorphism classes of finite-dimensional irreducible representations of complex semisimple Lie algebras are in bijection with the set of dominant algebraically integral elements $\lambda \in \mathfrak h^{*}$.

Not only does the theorem of highest weights provide a complete classification of the finite-dimensional irreducible representations for semisimple Lie algebras, it can be extended to provide a complete classification of the finite-dimensional irreducible representations of compact connected Lie groups and their complexifications.
The reason for this correspondence lies with the observation that if $K$ is a compact connected Lie group with Lie algebra $\mathfrak k$, the exponential map $\exp : \mathfrak k \to K$, which assigns to each element $X \in \mathfrak k$ of the Lie algebra an element $\exp(X) \in K$ of the Lie group, is \textit{surjective}~\cite[Cor. 11.10]{hall2015lie}, i.e.
\begin{equation}
    \label{eq:surjective}
    K = \exp(\mathfrak k) \coloneqq \{\exp(X) \in K \mid X \in \mathfrak k\}.
\end{equation}
If $K$ is not compact\footnote{For a counterexample in the case of non-compact $K$, consider that $\begin{pmatrix} -1 & 1 \\ 0 & -1 \end{pmatrix} \in \SL(2, \mathbb C)$ is not the exponential of any matrix $X \in \mathfrak{sl}(2, \mathbb C)$.} or connected, then this is no longer true.

To begin, let $T$ to be an abelian Lie group and $\grep : T \to \GL(\s V)$ a representation of $T$ on a finite-dimensional complex vector space $\s V$.
As $T$ is abelian, $[\grep(t),\grep(t')] = 0$ for all $t,t' \in T$, and therefore Schur's lemma dictates that every irreducible representation of $T$ is one-dimensional whose images are subgroups of the complex multiplicative group $\wozero{\mathbb C}$, e.g. $\grep : T \to \wozero{\mathbb C}$.
If the group $T$ is both abelian and \textit{compact}, then an irreducible representation $\grep$ must map $T$ into the circle group $\U(1) \subset \wozero{\mathbb C}$.
If $\mathfrak t$ is the Lie algebra of $T$, then every irreducible representation $\grep : T \to \U(1)$ of $T$ satisfies for all $A \in \mathfrak t$,
\begin{equation}
    \label{eq:toral_irrep}
    \grep( \exp(A) ) = e^{\lambda(A)}
\end{equation}
where $\lambda : \mathfrak t \to i \mathbb R$ is a purely imaginary linear function on $\mathfrak t$.
Alternatively, if $H \in i \mathfrak t$, then the irreducible representation $\grep : T \to \U(1)$ must be of the form
\begin{equation}
    \grep( \exp(i H) ) = e^{i \lambda(H)}
\end{equation}
where $\lambda : i \mathfrak t \to \mathbb R$ is a real linear function on $i \mathfrak t$.

Of course, if $H \in i\mathfrak t$ is such that $\exp(iH)$ is the identity element in $T$, then $\lambda : i \mathfrak t \to \mathbb R$ must map $H$ to an integer multiple of $2\pi$ in order to ensure that $\grep$ is indeed a representation.

Alternatively, if one considers those elements $H' \in i \mathfrak t$ such that $\exp(2\pi i H')$ is the identity in $T$, then for the same reason $\lambda : i \mathfrak t \to \mathbb R$ must map $H'$ to an integer $\lambda(H') \in \mathbb Z$.
\begin{defn}
    Let $T$ be a compact connected abelian Lie group, called a \defnsty{torus}, with Lie algebra $\mathfrak t$ and identity element $e \in T$.
    Let $\Gamma \subset i \mathfrak t$ be defined by
    \begin{equation}
        \Gamma \coloneqq \{ H \in i \mathfrak t : \exp(2 \pi i H) = e\}.
    \end{equation}
    A linear function $\lambda : i \mathfrak t \to \mathbb R$ is said to be \defnsty{analytically integral} if for all $H \in \Gamma \subset i \mathfrak t$,
    \begin{equation}
        \lambda(H) \in \mathbb Z.
    \end{equation}
\end{defn}
In this manner, every irreducible representation of a torus $T$ can be identified with an analytically integral element $\lambda : i \mathfrak t \to \mathbb R$.
\begin{lem}
    Let $\grep : T \to \U(1)$ be an irreducible representation of a torus $T$.
    Then there exists an analytically integral linear function $\lambda : i \mathfrak t \to \mathbb R$ satisfying for all $H \in i \mathfrak t$,
    \begin{equation}
        \grep( \exp(i H) ) = e^{i \lambda(H)}.
    \end{equation}
    Moreover, $\lambda : i \mathfrak t \to \mathbb R$ is a weight for the Lie algebra representation, $\phi : \mathfrak t \to i\mathbb R$, induced by $\grep : T \to \U(1)$ and $i \lambda(H) = \phi(i H)$.
\end{lem}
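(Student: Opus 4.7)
The plan is to construct $\lambda$ directly from the induced Lie algebra representation of $\grep$, then verify the three claims (exponential identity, analytic integrality, and weight interpretation) by exploiting two facts already established in the excerpt: the exponential map $\exp : \mathfrak t \to T$ is surjective for compact connected $T$ (see \cref{eq:surjective}), and every Lie group representation induces a Lie algebra representation satisfying $\grep(\exp A) = \exp(\phi(A))$ (\cref{lem:induced_lie_alg_rep}).

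First, I would define $\lambda$. Since $\grep : T \to \U(1)$ is a Lie group homomorphism, it induces a Lie algebra homomorphism $\phi : \mathfrak t \to \mathfrak u(1) = i\mathbb R$. For each $H \in i\mathfrak t$, note that $iH \in \mathfrak t$ and therefore $\phi(iH) \in i\mathbb R$, so the formula
\begin{equation}
    \lambda(H) \coloneqq -i \phi(iH)
\end{equation}
yields a well-defined real number. The map $\lambda : i\mathfrak t \to \mathbb R$ so defined is $\mathbb R$-linear because $\phi$ is, and by construction satisfies $i\lambda(H) = \phi(iH)$, matching the final clause of the lemma.

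Next I would verify the exponential identity and analytic integrality. For any $H \in i\mathfrak t$, since $iH \in \mathfrak t$ and $\grep(\exp A) = \exp(\phi(A))$ for all $A \in \mathfrak t$, one has
\begin{equation}
    \grep(\exp(iH)) = \exp(\phi(iH)) = \exp(i\lambda(H)) = e^{i\lambda(H)}.
\end{equation}
For analytic integrality, take any $H \in \Gamma$, i.e.\ $\exp(2\pi i H) = e$. Applying the identity above to $2\pi H \in i\mathfrak t$ gives
\begin{equation}
    1 = \grep(e) = \grep(\exp(2\pi i H)) = e^{2\pi i \lambda(H)},
\end{equation}
which forces $\lambda(H) \in \mathbb Z$, as required.

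Finally, the weight interpretation is essentially built into the definition: a weight of the one-dimensional representation $\phi$, viewed on the Cartan subalgebra $\mathfrak t$, is by \cref{defn:weights_alg} the simultaneous eigenvalue of $\phi$, and under the identification between real weights $i\mathfrak t \to \mathbb R$ and their complex extensions $\mathfrak h \to \mathbb C$ described in \cref{rem:weights_complex_structure}, $\lambda$ is precisely the real weight attached to $\phi$ via $i\lambda(H) = \phi(iH)$. The only real obstacle is notational: one must keep careful track of the factors of $i$ crossing between $\mathfrak t$ and $i\mathfrak t$ so that the definition of $\lambda$ lands in $\mathbb R$ and the exponential identity comes out with the correct sign. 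Everything substantive is a direct consequence of compactness and connectedness of $T$ via surjectivity of the exponential map.
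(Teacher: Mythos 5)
Your proposal is correct, and it takes essentially the same route as the paper: the paper's "proof" of this lemma is given as inline discussion in the paragraphs immediately preceding the definition of analytic integrality, where it passes to the induced Lie algebra homomorphism $\phi : \mathfrak t \to i\mathbb R$, extracts $\lambda$ by stripping the factor of $i$, and then uses surjectivity of $\exp$ to conclude the exponential identity and integrality condition. Your write-up simply formalizes that discussion with the explicit definition $\lambda(H) = -i\phi(iH)$ and checks each clause in order, which is exactly the content the lemma is meant to package.
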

\begin{rem}
    Every torus $T$ is isomorphic to the standard torus $\U(1)^{r}$ for some $r \in \mathbb N$ called the \textit{rank} of $T$.
    The Lie algebra $\mathfrak t$ of a torus $T$ of rank $r$ is isomorphic to $\mathfrak t \cong i\mathbb R^{r}$, and thus $i \mathfrak t \cong \mathbb R^{r}$.
    As every element of $\U(1)^{r}$ can be expressed as
    \begin{equation}
        (e^{i \theta_1}, e^{i\theta_2}, \ldots, e^{i\theta_r})
    \end{equation}
    for some $(\theta_1, \ldots, \theta_r) \in [0, 2\pi)^{r} \subset \mathbb R^{r} = i \mathfrak t$, every irreducible representation of $\U(1)^{r}$ is of the form
    \begin{equation}
        \grep(e^{i \theta_1}, e^{i\theta_2}, \ldots, e^{i\theta_r}) = e^{ i (\lambda_1 \theta_1 + m_2 \theta_2 + \cdots + m_r \theta_r)}
    \end{equation}
    for some $(m_1, \ldots, m_r) \in \mathbb Z^{r}$. 
    In this way, the analytically integral weight, $\lambda : \mathbb R^{r} \to \mathbb R$, is the function 
    \begin{equation}
        \lambda(\theta_1, \theta_2, \cdots, \theta_r) = m_1 \theta_1 + m_2 \theta_2 + \cdots + m_r \theta_r.
    \end{equation}
\end{rem}

\begin{rem}
    A torus $T$ is called a \defnsty{maximal torus} of a compact connected Lie group $K$ if it is a subgroup $T \subseteq K$ and if it is not a proper subgroup of any other torus in $K$.
    If $T$ is a maximal torus of $K$, then the Lie algebra of $T$, $\mathfrak t$, is a maximal abelian subalgebra of the Lie algebra of $K$, $\mathfrak k$.
\end{rem}

Using \cref{eq:surjective}, it can be shown that the highest weight representation theorem for semisimple Lie algebras, $\mathfrak g$, (\cref{thm:hwt_alg}) extends to a highest weight representation theorem for compact connected Lie groups, $K$~\cite[Thm. 12.6]{hall2015lie}.
\begin{thm}
    \label{thm:hwt_grp}
    Let $K$ be a connected, compact matrix Lie group and $T$ a fixed maximal torus in $K$. Then
    \begin{enumerate}
        \item every irreducible representation of $K$ has a highest weight,
        \item two irreducible representations of $K$ with the same highest weight are isomorphic,
        \item the highest weight of each irreducible representation is dominant and analytically integral, and
        \item for every dominant, analytically integral element, $\lambda$, there exists and irreducible representation with highest weight $\lambda$.
    \end{enumerate}
\end{thm}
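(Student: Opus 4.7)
The plan is to reduce each of the four claims to the corresponding statements at the Lie algebra level (\cref{thm:hwt_alg}) by passing through the complexified Lie algebra $\mathfrak g = \mathfrak k_{\mathbb C}$, and then to transport back up to the group level using the surjectivity of the exponential map $\exp : \mathfrak k \to K$, which holds because $K$ is compact and connected. Throughout, I would fix a maximal torus $T \subseteq K$ so that its Lie algebra $\mathfrak t \subseteq \mathfrak k$ complexifies to a maximal abelian subalgebra $\mathfrak h = \mathfrak t_{\mathbb C} \subseteq \mathfrak g$.

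For claims (1) and the dominance part of (3), I would start with an irreducible representation $\grep : K \to \GL(\s V)$, pass to the induced Lie algebra representation $\arep : \mathfrak k \to \mathfrak{gl}(\s V)$ via \cref{lem:induced_lie_alg_rep}, and extend $\mathbb C$-linearly to a representation $\arep_{\mathbb C} : \mathfrak g \to \mathfrak{gl}(\s V)$. Because $K$ is connected, invariant subspaces for $\grep$ coincide with invariant subspaces for $\arep$ (a fact noted earlier in the preliminaries), and then they agree with invariant subspaces for $\arep_{\mathbb C}$ by $\mathbb C$-linearity; hence $\arep_{\mathbb C}$ is irreducible. Applying \cref{thm:hwt_alg} to the semisimple factor of $\mathfrak g$ (after splitting off the center using the decomposition $\mathfrak g = \mathfrak z(\mathfrak g) \oplus \mathfrak s$, on which Schur's lemma forces $\arep_{\mathbb C}$ to act by scalars) produces a highest weight $\lambda \in \mathfrak h^{*}$ that is dominant and algebraically integral. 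For (2), if $\grep$ and $\grep'$ are two irreducibles of $K$ with the same highest weight, the induced $\mathfrak g$-representations are isomorphic by \cref{thm:hwt_alg}, and an intertwiner $L$ satisfying $L \arep_{\mathbb C}(X) = \arep'_{\mathbb C}(X) L$ for all $X \in \mathfrak g$ lifts to an intertwiner of $\grep$ and $\grep'$ because every $k \in K$ can be written as $\exp(Y)$ for some $Y \in \mathfrak k$, and then $\grep(\exp Y) = \exp(\arep(Y))$.

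The main content beyond \cref{thm:hwt_alg} is the upgrade from algebraic to analytic integrality and the converse existence statement (4). The upgrade is the easier of the two: if $v_\lambda \in \s V$ is a highest weight vector for $\grep$ and $H \in i\mathfrak t$ satisfies $\exp(2\pi i H) = e$, then
\begin{equation}
    v_\lambda = \grep(e) v_\lambda = \grep(\exp(2 \pi i H)) v_\lambda = e^{2\pi i \lambda(H)} v_\lambda,
\end{equation}
which forces $\lambda(H) \in \mathbb Z$, completing (3). The truly hard step is (4). Given a dominant analytically integral $\lambda$, the converse to \cref{thm:hwt_alg} supplies an irreducible representation $\arep_\lambda : \mathfrak g \to \mathfrak{gl}(\s V_\lambda)$, and the remaining issue is to \emph{integrate} $\arep_\lambda$ into a well-defined group representation $\grep_\lambda : K \to \GL(\s V_\lambda)$ rather than merely a representation of the universal cover of $K$. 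One wants to set $\grep_\lambda(\exp X) \coloneqq \exp(\arep_\lambda(X))$, and the obstruction is that $\exp$ need not be injective: whenever $\exp(X_1) = \exp(X_2)$ one must have $\exp(\arep_\lambda(X_1)) = \exp(\arep_\lambda(X_2))$. Analytic integrality of $\lambda$ is precisely the condition that ensures consistency on the maximal torus, since it makes the scalar $e^{2\pi i \mu(H)}$ equal one for every $H \in \Gamma$ and every weight $\mu$ of $\arep_\lambda$ (each such $\mu$ differs from $\lambda$ by an integer combination of roots, hence is itself analytically integral). Extending from $T$ to all of $K$ then uses the Baker--Campbell--Hausdorff formula together with connectedness, which guarantees $K$ is generated by a neighborhood of the identity on which $\exp$ is a diffeomorphism, so local consistency propagates to global consistency. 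This integration step is the principal obstacle and the reason why analytic integrality, rather than the weaker algebraic integrality, is the correct condition at the group level.
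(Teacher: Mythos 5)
The paper does not actually supply a proof of \cref{thm:hwt_grp}; it defers entirely to \cite[Sec.~12.5]{hall2015lie}. Your sketch is therefore filling in content the paper leaves implicit, and it follows the standard route: reduce to the complexified Lie algebra via \cref{thm:hwt_alg}, use the surjectivity of $\exp : \mathfrak k \to K$ (from compactness and connectedness) to lift intertwiners and verify irreducibility, and then upgrade algebraic integrality to analytic integrality by evaluating $\grep$ on the lattice $\Gamma$. Parts (1)--(3) and the intertwiner argument for (2) are sound. The computation $\grep(\exp(2\pi i H)) v_\lambda = e^{2\pi i \lambda(H)} v_\lambda$ with $\exp(2\pi i H) = e$ forcing $\lambda(H) \in \mathbb Z$ is the right proof of analytic integrality.

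The gap is in part (4), at the step where you pass from consistency on $T$ to consistency on all of $K$. You write that the Baker--Campbell--Hausdorff formula together with connectedness ``guarantees $K$ is generated by a neighborhood of the identity on which $\exp$ is a diffeomorphism, so local consistency propagates to global consistency.'' This is precisely the point where the argument cannot be waved through: every element of $K$ is a product of elements near the identity, but two such products can represent the same group element while the corresponding products in $\GL(\s V_\lambda)$ disagree, and the failure of well-definedness is measured exactly by a monodromy homomorphism on $\pi_1(K)$. The counterexample is the familiar one: $K = \mathrm{SO}(3)$ with $\lambda$ a half-integer weight of $\mathfrak{su}(2) \cong \mathfrak{so}(3)$ --- the Lie algebra representation is perfectly good, $\mathrm{SO}(3)$ is connected, BCH holds near the identity, yet the putative $\grep_\lambda$ is not well-defined on $\mathrm{SO}(3)$. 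What saves the argument when $\lambda$ \emph{is} analytically integral is not local-to-global propagation but the structure fact that the covering kernel $\ker(\tilde K \to K)$ lies inside the preimage of the maximal torus (equivalently, that $\pi_1(T) \to \pi_1(K)$ is surjective for $K$ compact connected). This concentrates the monodromy obstruction on $T$, where your torus-consistency observation (that analytic integrality of $\lambda$ propagates through the root lattice to all weights $\mu$ of $\arep_\lambda$) kills it. So the correct completion is: integrate $\arep_\lambda$ to the simply connected cover $\tilde K$, note that $\ker(\tilde K \to K) \subseteq \tilde T$, and use analytic integrality to check that the representation of $\tilde K$ is trivial on this kernel, hence descends to $K$. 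Without invoking the concentration of $\pi_1(K)$ in the torus, the ``local to global'' step is unjustified.
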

\begin{proof}
    See \cite[Sec. 12.5]{hall2015lie}.
\end{proof}
\begin{rem}
    The discrepancy between the condition of being an \textit{algebraically integral} weight (appearing in \cref{thm:hwt_alg}) and the stronger condition of being an \textit{analytically integral} weight (appearing in \cref{thm:hwt_grp}) is a consequence of the fact that there can exist representations of semisimple Lie algebras, $\mathfrak g = \mathfrak k_{\mathbb C}$, which are not induced by a representation of compact connected Lie group with Lie algebra $\mathfrak k$.
    For instance (\cite[Ex. 12.11]{hall2015lie}), consider $K = \mathrm{SO}(3)$ and $\mathfrak k = \mathfrak{so}(3)$, with basis
    \begin{equation}
        F_1 = \begin{pmatrix} 
            0 & 0 & 0 \\
            0 & 0 & -1 \\
            0 & 1 & 0
        \end{pmatrix},
        \qquad
        F_2 = \begin{pmatrix} 
            0 & 0 & 1 \\
            0 & 0 & 0 \\
            -1 & 0 & 0
        \end{pmatrix},
        \qquad
        F_3 = \begin{pmatrix} 
            0 & -1 & 0 \\
            1 & 0 & 0 \\
            0 & 0 & 0
        \end{pmatrix}
    \end{equation}
    and maximal abelian subalgebra $\mathfrak t$ spanned by the asymmetric matrix $F_3$.
    Then the only positive root is fixed by $\alpha(F_3) = i$ (recall that $\alpha$ is purely imaginary on $\mathfrak t$ by \cref{lem:root_complex_structure}) with root vector $X = F_1 - i F_2 \in \mathfrak{so}(3)_{\mathbb C}$ because
    \begin{equation}
        [F_3, X] = [F_3, F_1] - i [F_3, F_2] = F_2 + i F_1 = i X = \alpha(F_3) X.
    \end{equation}
    Since the associated coroot is 
    \begin{equation}
        H_{\alpha} = \frac{2[X,X*]}{\alpha([X,X^*])} = 2 i F_3 \in i \mathfrak t,
    \end{equation}
    a weight $\lambda : i \mathfrak t \to \mathbb R$ is algebraically integral ($\lambda(H_{\alpha}) = 2 \lambda(i F_3) \in \mathbb Z$) if and only if $\lambda(i F_3) \in \mathbb R$ is an integer or half integer:
    \begin{equation}
        2 \lambda(i F_3) \in \mathbb Z.
    \end{equation}
    On the other hand, as $\exp(2 \pi i \theta F_3) = \ident_3$ is the identity in $\mathrm{SO}(3)$ if and only if $\theta \in \mathbb Z$, a weight $\lambda \in i \mathfrak t \to \mathbb R$ is analytically integral if and only if $\lambda(i F_3)$ is an integer:
    \begin{equation}
        \lambda(i F_3) \in \mathbb Z.
    \end{equation}
\end{rem}

To conclude this section, we introduce a definition of a \textit{complex reductive group} which will be used throughout the thesis.
Although the exact definition of a reductive group can be quite varied depending on the mathematical setting~\cite{lee2001structure,milne2014algebraic,knapp2001representation}, our definition is well-suited for out purposes.
Just as we considered the complexification of a Lie algebra in \cref{sec:complexification}, there is a related notion of complexification for a compact Lie groups which relies on a universal construction.
The \textit{complexification} of a compact Lie group $K$, denoted $G = K_{\mathbb C}$, is a complex Lie group equipped with an inclusion map $K \xhookrightarrow{} G$ such that every smooth Lie group homomorphism from $K$ to a complex Lie group $Q$ lifts uniquely to a holomorphic homomorphism $G \to Q$. 
The key idea here is simply that representations of $K$, $\grep : K \to \GL(\s V)$, on complex vector spaces $\s V$, gives rise to a representation $\grep : G \to \GL(\s V)$ of the larger group $G$.
\begin{defn}
    \label{defn:complex_reductive_group}
    A \defnsty{complex reductive group} is a group $G$ that is the complexification, $G = K_{\mathbb C}$, of a compact connected Lie group $K$. 
    If $\mathfrak k$ is the Lie algebra of $K$, then the Lie algebra of $G$ is the complexification $\mathfrak g = \mathfrak k_{\mathbb C} = \mathfrak k \oplus i \mathfrak k$.
\end{defn}

Since $K$ is a compact and connected Lie group we have $K = \exp(\mathfrak k)$. Similarly we can let $\mathfrak p = i \mathfrak k$ and define a subset $P \coloneqq \exp(i\mathfrak)$ of $G$.
Then the notion of a local Cartan involution, $\theta : \mathfrak g \to \mathfrak g$, on a complex reductive Lie algebra, $\mathfrak g$ can be generalized to an involutive group isomorphism, $\Theta : G \to G$, on complex reductive group $G$ called the \defnsty{global Cartan involution}. 
This isomorphism $\Theta$ fixes $K \subset G$ and take inverses on $P \subset G$.
As a quick summary,
\begin{align}
    \begin{split}
        A \in \mathfrak k &\implies \theta(A) = + A, \\
        H \in \mathfrak p = i \mathfrak k &\implies \theta(H) = - H, \\
    \end{split}
    \begin{split}
        k \in K = \exp(\mathfrak k) &\implies \Theta(k) = k, \\
        p \in P = \exp(i \mathfrak k) &\implies \Theta(p) = p^{-1}. \\
    \end{split}
\end{align}
In general, for any element $g \in G$ of a complex reductive group, we define $g^{*} \in G$ by
\begin{equation}
    g^{*} \coloneqq \Theta(g)^{-1}.
\end{equation}
This notation is consistent with the previously defined $*$-involution operation on $X \in \mathfrak g$ by $X^{*} = -\theta(X)$ in the sense that $\exp(X^{*}) = \exp(X)^{*}$.
Furthermore, the universal property of the complexification of $K$ means that any finite-dimensional unitary representation $\grep : K \to \U(\s H)$ of $K$ extends uniquely to a representation (also denoted by $\grep$) $\grep : G \to \GL(\s H)$ such that $\grep(g^{*}) = \grep(g)^{*}$ where the latter $*$-operation is simply the complex conjugate relative to the inner product on $\s H$.

\begin{exam}
    Let $d \in \mathbb N$ be a fixed finite dimension.
    For example if $K = \SU(d)$ is the group of unitary matrices with determinant one, then the complexification of $K = \SU(d)$ is $G = \SL(d, \mathbb C)$, the group of complex matrices with determinant one.
    The Lie algebra of $K = \SU(d)$ is $\mathfrak k = \mathfrak{su}(d)$ the algebra of skew-Hermitian traceless matrices, while $i \mathfrak k \simeq i\mathfrak{su}(d)$ is the algebra of Hermitian traceless matrices and $P = \exp(i\mathfrak{su}(d))$ the set of positive definite matrices and $\mathfrak g = \mathfrak{sl}(d, \mathbb C)$ the Lie algebra of traceless matrices.
    In this example, the $*$-involution on either $X \in \mathfrak{sl}(d, \mathbb C)$ or $g \in \SL(d, \mathbb C)$ is complex conjugation of matrices.
\end{exam}

\begin{rem}
    \label{rem:polar_decomp}
    In general, the group multiplication in $G$ gives rise to a diffeomorphism (but not group isomorphism) $G \simeq K \times P$ called the \defnsty{Cartan decomposition}.
    Specifically, as $P \coloneqq \exp(\mathfrak p) = \exp(i\mathfrak k)$, there exists an invertible map $K \times \mathfrak p \to G$ given by
    \begin{equation}
        (k, X) \mapsto k \cdot \exp(X).
    \end{equation}
    In particular, we have the following relationships between $k, X$ and $g$:
    \begin{align}
        \begin{split}
            g &= k \cdot \exp(X), \\
            g^{-1} &= \exp(-X) \cdot k^{-1}, \\
            \Theta(g) &= k \cdot \exp(-X), \\
            g^{*} &= \exp(X)\cdot k^{-1}, \\
        \end{split}
    \end{align}
    Therefore, $X$ and $k$ are uniquely determined by $g$ since
    \begin{equation}
        g^{*} \cdot g = \exp(2X) \implies \exp(X) = \sqrt{g^{*} g},
    \end{equation}
    and thus 
    \begin{equation}
        k = \Theta(g) \sqrt{g^{*} g}.
    \end{equation}
    When $K = \U(1)$ and thus $G = \wozero{\mathbb C}$, this diffeomorphism states that every non-zero complex number $z \in \wozero{\mathbb C}$ can be written in polar form, $z = e^{i\theta}\abs{z}$ where $e^{i\theta} \in \U(1)$ and $\abs{z} > 1$ is a positive real number.
    More generally, if $K = \U(d)$ and thus $G = \GL(d, \mathbb C)$, this diffeomorphism states that every $d \times d$ invertible complex matrix $M \in \GL(d, \mathbb C)$ can be written uniquely in the form $M = U P$ where $U$ is a unitary matrix and $P$ is a positive definite matrix.
\end{rem}

A more detailed decomposition of elements in a complex reductive group is offered by the \textit{Iwasawa decomposition} of $G$.
Consider a generic complex reductive group, $G = K_{\mathbb C}$, which is the complexification of a connected compact Lie group $K$.
A \defnsty{Borel subgroup} $B \subseteq G$ is a maximal solvable subgroup (which are all conjugate to each other by some elements of $K$). 
Once a particular Borel subgroup $B \subseteq G$ is chosen, its intersection with $K$,
\begin{equation}
    T = B \cap K,
\end{equation}
constitutes a maximal torus in $K$.
Moreover, the commutator subgroup of $B$,
\begin{equation}
    N = [B, B],
\end{equation}
is a maximal unipotent subgroup of $B$ and moreover $T$ normalizes $N$ inside $B$.
The Lie algebras of $B, N$ and $T$ are denoted by $\mathfrak b, \mathfrak n$ and $\mathfrak t$.
Moreover, let $\mathfrak a \coloneqq i\mathfrak t$ and define $A$ as the image under the exponential map of $\mathfrak a$ in $G$.

\begin{prop}
    \label{prop:iwasawa}
    Let $G = K_{\mathbb C}$ be a complex reductive group with fixed Borel subgroup $B \subseteq K$.
    Then every group element $g \in G$ can be \textit{uniquely} expressed as 
    \begin{equation}
        \label{eq:iwasawa}
        g = k_g \cdot a_g \cdot n_g,
    \end{equation}
    for unique $k_g \in K$, $a_g \in A$ and $n_g \in N$.
    This decomposition of $g \in G$ is known as the \defnsty{Iwasawa decomposition}~\cite{iwasawa1949some}.
\end{prop}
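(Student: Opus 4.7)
The plan is to prove the Iwasawa decomposition in two stages: first establish the corresponding splitting at the Lie algebra level, then lift it to the group via the smooth multiplication map
\begin{equation}
    \mu : K \times A \times N \to G, \qquad (k,a,n) \mapsto k \cdot a \cdot n.
\end{equation}

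At the infinitesimal level, I would use the root space decomposition to write $\mathfrak g = \mathfrak h \oplus \bigoplus_{\alpha \in R_{+}} \mathfrak g_{\alpha} \oplus \bigoplus_{\alpha \in R_{+}} \mathfrak g_{-\alpha}$, where $R_+$ is the system of positive roots determined by $B$, and identify $\mathfrak a = i\mathfrak t$ and $\mathfrak n = \bigoplus_{\alpha \in R_+} \mathfrak g_{\alpha}$ as the Lie algebras of $A$ and $N$ respectively. The key claim is the real vector space decomposition $\mathfrak g = \mathfrak k \oplus \mathfrak a \oplus \mathfrak n$. Directness follows from a short calculation with the $*$-involution: it acts as $-1$ on $\mathfrak k$, as $+1$ on $\mathfrak a \subset \mathfrak p$, and sends $\mathfrak g_{\alpha}$ into $\mathfrak g_{-\alpha}$ (the last using that roots are purely imaginary on $\mathfrak t$, as previously established). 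If $X + H + Y = 0$ with $X \in \mathfrak k$, $H \in \mathfrak a$, $Y \in \mathfrak n$, then applying $*$ yields $-X + H + Y^{*} = 0$; adding gives $2 H + Y + Y^{*} = 0$, and projecting onto the disjoint complex root space summands $\mathfrak h$, $\bigoplus_{R_+} \mathfrak g_{\alpha}$, $\bigoplus_{R_+} \mathfrak g_{-\alpha}$ forces $H = 0$ and $Y = 0$, and hence $X = 0$. A real dimension count $\dim_{\mathbb R} \mathfrak k + \dim_{\mathbb R} \mathfrak a + \dim_{\mathbb R} \mathfrak n = \dim_{\mathbb C} \mathfrak g + \dim_{\mathbb C} \mathfrak h + 2\card{R_+} = 2 \dim_{\mathbb C} \mathfrak g = \dim_{\mathbb R} \mathfrak g$ confirms that the sum exhausts $\mathfrak g$.

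To lift to the group, the differential of $\mu$ at $(e,e,e)$ is precisely the isomorphism $(X, H, Y) \mapsto X + H + Y$ produced above, so $\mu$ is a local diffeomorphism at the identity; composing with left $K$-translations and right $AN$-translations (noting $AN$ is a subgroup because $A$ normalizes $N$) extends this to every point, so the image of $\mu$ is open in $G$. For injectivity, reduce to showing $K \cap AN = \{e\}$: if $k_1 a_1 n_1 = k_2 a_2 n_2$ then $k_2^{-1} k_1 = (a_2 n_2)(a_1 n_1)^{-1} \in K \cap AN$. Fix a faithful finite-dimensional representation $\grep : G \to \GL(\s V)$ obtained by extending a unitary representation of $K$, and pick a basis adapted to $B$ so that $\grep(K)$ acts by unitaries, $\grep(A)$ by positive diagonals, and $\grep(N)$ by unipotent upper triangular matrices. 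Any element of $K \cap AN$ then corresponds to a matrix that is simultaneously unitary and upper triangular with positive diagonal, hence the identity; faithfulness gives $K \cap AN = \{e\}$, and combining with uniqueness of the $A \times N \to AN$ factorization (since $A \cap N = \{e\}$) yields injectivity of $\mu$.

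Surjectivity is the main obstacle. The cleanest route is a QR-style argument inside the same faithful representation: given $g \in G$, apply Gram--Schmidt to the columns of $\grep(g)$ to obtain $\grep(g) = U R$ with $U$ unitary and $R$ upper triangular with positive diagonal, then verify that both factors actually come from $\grep(K)$ and $\grep(AN)$ respectively rather than sitting only in the ambient $\GL(\s V)$. This verification is the technically delicate step, as it requires the representation to be chosen compatibly with the real structure $G = K_{\mathbb C}$ and with the Borel subgroup stabilizing the flag spanned by the highest weight vectors; alternatively, one can combine openness of the image of $\mu$ with a properness argument to show the image is closed, and then invoke connectedness of $G$ to conclude that $\mu$ is surjective.
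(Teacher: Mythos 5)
The paper does not actually prove this proposition; it states it and cites Iwasawa's original paper, so there is no in-text argument to compare against. Your proof is the standard one, and it is essentially correct: the Lie-algebra-level splitting $\mathfrak g = \mathfrak k \oplus \mathfrak a \oplus \mathfrak n$ via the $*$-involution and root space decomposition is carefully argued (the observation that $*$ interchanges $\mathfrak g_{\alpha}$ and $\mathfrak g_{-\alpha}$, then forcing $H = 0$ and $Y = 0$ from $2H + Y + Y^{*} = 0$ by orthogonality of root spaces, is exactly the right move), the dimension count is right, the local-diffeomorphism argument and its extension via the factorization through $K \times AN$ is correct, and the reduction of injectivity to $K \cap AN = \{e\}$ via a faithful representation with $K$ unitary, $A$ positive diagonal, and $N$ unipotent upper triangular is the standard clean route.

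The one place you left genuine work undone is surjectivity, and you correctly identified it as the delicate step. Your preferred route (QR inside a faithful representation) does indeed carry the burden of showing the unitary and triangular factors land back in $\grep(K)$ and $\grep(AN)$; this is doable but not short. The alternative you sketch is both cleaner and closes faster than you suggest: you have already shown $\mu$ is an injective local diffeomorphism, so $KAN$ is open in $G$; it is also closed, because $K$ is compact and $AN$ is a closed subgroup, and the product of a compact subset with a closed subset in a topological group is closed (no extra properness machinery is needed beyond this standard fact). Since $G = K_{\mathbb C}$ is connected when $K$ is, a nonempty open-and-closed subset is all of $G$. Spelling out that one lemma is all that remains to make the proof complete.
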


\begin{exam}
    For the special case of $K = \U(d)$ and $G = \GL(d, \mathbb C)$, one can choose the Borel subgroup $B$ to be the subgroup of upper triangular matrices and then $T$ the subgroup of diagonal matrices. 
    The maximal unipotent subgroup $N$ of $B$ is then the upper triangular matrices with ones along the main diagonal and the Iwasawa decomposition is essentially the QR decomposition of an invertible matrix where the diagonal part of the upper triangular matrix is factored out.
    When $d=3$ the Iwasawa decomposition can be expressed as
    \begin{equation}
        \underbrace{\begin{pmatrix}
            g_{11} & g_{12} & g_{13} \\
            g_{21} & g_{22} & g_{23} \\
            g_{31} & g_{32} & g_{33} \\
        \end{pmatrix}}_{g}
        =
        \underbrace{\begin{pmatrix}
            k_{11} & k_{12} & k_{13} \\
            k_{21} & k_{22} & k_{23} \\
            k_{31} & k_{32} & k_{33} \\
        \end{pmatrix}}_{k}
        \underbrace{\begin{pmatrix}
            a_1 & 0 & 0 \\
            0 & a_2 & 0 \\
            0 & 0 & a_3 \\
        \end{pmatrix}}_{a}
        \underbrace{\begin{pmatrix}
            1 & n_{12} & n_{13} \\
            0 & 1 & n_{23} \\
            0 & 0 & 1 \\
        \end{pmatrix}}_{n},
    \end{equation}
    where $g \in \GL(3, \mathbb C)$ is any invertible $3\times 3$ complex matrix, $k \in \U(3)$ is a unitary $3 \times 3$ matrix, the diagonal of $a$ is strictly positive $(a_1, a_2, a_3) \in \mathbb R_{> 0}^{3}$, and $(n_{12}, n_{13}, n_{23}) \in \mathbb C^{3}$.
\end{exam}

\clearpage
\section{Glossary of Notation}
\label{sec:notation}
Throughout this thesis, all logarithms are taken to be base $e$.

\begin{table}[htb]
    \caption{Notation relating to Hilbert spaces.}
    \begin{tabularx}{\textwidth}{p{0.12\textwidth}X}
        \hline
        \hline
        $\s H$ & a complex inner product space \\
        $\langle\cdot,\cdot\rangle$ & the (sesquilinear) inner product associated to $\s H$ \\
        $\norm{\cdot}$ & the norm induced by the $\langle\cdot,\cdot\rangle$ \\
        $\wozero{\s H}$ & the non-zero elements of $\s H$ \\
        $\s H^{*}$ & the dual vector space of linear functions on $\s H$ \\
        $\End(\s H)$ & all (linear) operators acting on $\s H$ \\
        $\s B(\s H)$ & bounded operators on $\s H$ \\
        $\s B_{\geq 0}(\s H)$ & positive semi-definite operators on $\s H$ \\
        $\GL(\s H)$ & invertible linear operators on $\s H$  \\
        $\U(\s H)$ & unitary operators on $\s H$  \\
        $\mathbb P \s H$ & the projective space associated to $\s H$ \\
        $\psi$ & a generic ray in the projective space $\mathbb P \s H$ \\
        $P_{\psi}$ & the rank-one projection operator onto the ray $\psi$ \\
        $\rho$ & a generic density operator \\
        $\varphi$ & the functional form of a state on $\s H$ ($\varphi : \bound(\s H) \to \mathbb C$) \\
        $\s S(\s H)$ & the set of all density operators on $\s H$ \\
        \hline
        \hline
    \end{tabularx}
\end{table}

\begin{table}[htb]
    \caption{Notation relating to probability measures.}
    \begin{tabularx}{\textwidth}{p{0.12\textwidth}X}
        \hline
        \hline
        $X$ & a topological space (typically a Polish space) \\
        $\borel{X}$ & the Borel $\sigma$-algebra on $X$ \\
        $\Delta$ & a generic element in the Borel sigma algebra $\borel{X}$ \\
        $(X,\borel{X})$ & a standard Borel space for $X$ \\
        $\mu$ & a probability measure $\mu : \borel{X} \to [0,1]$ on $X$ \\
        $\probs(X)$ & the set of all probability measures on $X$ \\
        $f_{*}\mu$ & the push-forward of a measure $\mu$ through a measurable function $f$ \\
        $I$ & a rate function for $X$ ($I : X \to [0,\infty]$) \\
        \hline
        \hline
    \end{tabularx}
\end{table}

\begin{table}[htb]
    \caption{Notation relating to Lie groups, Lie algebras and representations.}
    \begin{tabularx}{\textwidth}{p{0.12\textwidth}X}
        \hline
        \hline
        $K$ & a compact, sometimes connected, Lie group  \\
        $k$ & a generic group element $k \in K$ \\
        $\mathfrak k$ & the Lie algebra of the Lie group $K$ \\
        $G$ & the complexification of $K$ ($G = K_{\mathbb C}$) \\
        $g$ & a generic group element $g \in G$ \\
        $\mathfrak g$ & the Lie algebra of the complexification of $K$ ($\mathfrak g = \mathfrak k \oplus i \mathfrak k$) \\
        $[\cdot,\cdot]$ & the Lie bracket of a Lie algebra $\mathfrak g$ \\
        $\Theta$ & the global Cartan involution on $G$ ($\Theta : G \to G$) \\
        $\theta$ & the local Cartan involution on $\mathfrak g$ ($\theta : \mathfrak g \to \mathfrak g$) \\
        $X^{*}$ & the star involution on $\mathfrak g$ ($X^{*} = -\theta(X)$) \\
        $g^{*}$ & the star involution on $G$ ($g^{*} = \Theta(g)^{-1}$) \\
        \hline
        $\grep$ & a group representation of $G$ on $\s H$ ($\grep : G \to \GL(\s H)$) \\
        $\arep$ & a Lie algebra representation of $\mathfrak g$ on $\s H$ ($\arep : \mathfrak k \to \mathfrak{gl}(\s H)$) \\
        $\hwsub{\lambda}{\grep}$ & subspace of highest weight vectors with weight $\lambda$ for $\grep$ \\
        $\hwsub{\lambda,k}{\grep}$ & the projection operator $\hwsub{\lambda}{\grep}$ conjugated by $k$ \\
        $\isosub{\lambda}{\grep}$ & the isotypic subspace of highest weight $\lambda$ for $\grep$ \\
        $\capacity_{\grep}$ & the capacity map $\capacity_{\grep} : \s H \to [0,\infty]$ for $\grep$  \\
        $\momap_{\grep}$ & the moment map $\momap_{\grep} : \proj \s H \to i \mathfrak k^{*}$ for $\grep$  \\
        \hline
        \hline
    \end{tabularx}
\end{table}

\begin{table}[htb]
    \caption{Commonly used symbols for the complexification of Lie algebras.}
    \begin{tabularx}{\textwidth}{p{0.12\textwidth}p{0.12\textwidth}p{0.2\textwidth}X}
        \hline
        \hline
        Group & Algebra & Relationship(s) & Comment \\
        \hline
        $K$ & $\mathfrak k$ &  & a compact Lie group \\
        $G$ & $\mathfrak g$ & $\mathfrak g \coloneqq \mathfrak k \oplus i \mathfrak k$ & the complexification of $K$ \\
        $T$ & $\mathfrak t$ & $\mathfrak t \subseteq \mathfrak k \subseteq \mathfrak g$ & the maximal torus of $K$ \\
        $H$ & $\mathfrak h$ & $\mathfrak h \coloneqq \mathfrak t \oplus i \mathfrak t$ & $\mathfrak h$ a Cartan subalgebra of $\mathfrak g$ \\
        $P$ & $\mathfrak p$ & $\mathfrak p \coloneqq i\mathfrak k$ & \\
        $B$ & $\mathfrak b$ & $T = B \cap K$ & a maximal solvable (Borel) subgroup of $G$ \\
        $N$ & $\mathfrak n$ & $N = [B, B] \subseteq B$ & the maximal unipotent subgroup \\
        $A$ & $\mathfrak a$ & $\mathfrak a \coloneqq i \mathfrak t$ & \\
        \hline
        \hline
    \end{tabularx}
\end{table}

\chapter{Non-commutative optimization}
\label{chap:invariant_theory}
\section{Introduction}
\label{sec:bounding_asymptotic_likelihoods}

In \cref{chap:estimation_theory}, we will be interested in subject of quantum tomography which aims to estimate the properties of an unknown quantum system empirically by performing measurements on a large number of identical copies that system.
Consequently, it will be useful to develop mathematical techniques to study probabilities of the form,
\begin{equation}
    p_n(x \mid \rho) = \Tr(\rho^{\otimes n} E^{x}_{n}),
\end{equation}
where $n$ is a large positive integer, $\rho$ is a density operator acting on a Hilbert space $\s H$ describing an unknown quantum state, and $E^{x}_{n}$ a positive semidefinite operator acting on $\s H^{\otimes n}$ describing the event of obtaining the estimate $x$ upon performing a collective measurement on $n$ identical copies of $\rho$. The purpose of this chapter, therefore, is to develop these mathematical methods using tools from invariant theory to study the large $n$ limit of probabilities of the above form. 

Before doing so, here we aim to provide a sketch for how the representation theory of non-compact groups will be used in this context.
Consider a non-compact group $G$ together with a non-unitary representation, $\grep : G \to \GL(\s H)$, of $G$ on the Hilbert space $\s H$.
Now let a group element, $g \in G$, act on the density operator, $\rho$, by conjugation, sending $\rho$ to the operator $g \cdot \rho$, defined by
\begin{equation}
    g \cdot \rho \coloneqq \grep(g) \rho \grep(g)^{*},
\end{equation}
where $\grep(g)^{*}$ is the complex conjugate of the operator $\grep(g)$.
Similarly, let $g \in G$ act on the measurement effect $E_{n}^{x}$ via
\begin{equation}
    g \cdot E_n^{x} \coloneqq \grep^{\otimes n}(g)^{*} E_n^{x} \grep^{\otimes n}(g).
\end{equation}
If, in addition, there happens to exists a function, $\chi_x : G \to (0, \infty)$, independent of $n$, such that
\begin{equation}
    g \cdot E_n^{x} \leq (\chi_x(g^{-1}))^{n} \ident^{\otimes n},
\end{equation}
then a quick calculation reveals an upper bound on $p_n(x \mid \rho)$ which holds for all $g \in G$:
\begin{align}
    p_n(x \mid \rho) 
    = \Tr(\rho^{\otimes n} E^{x}_{n})
    = \Tr( (g\cdot \rho)^{\otimes n} (g^{-1} \cdot E_n^{x}) )
    \leq [\chi_x(g)\Tr(g\cdot \rho)]^{n}.
\end{align}
Therefore, the likelihood of producing the estimate $x$ decays at an exponential rate with respect to increasing $n$,
\begin{equation}
    p_n(x \mid \rho) \leq \exp( - n I_\rho(x)),
\end{equation}
where the rate $I_{\rho}(x) \in [0, \infty]$ is obtained by optimizing over all $g \in G$:
\begin{equation}
    I_{\rho}(x) = - \ln \inf_{g \in G} \chi_x(g)\Tr(g\cdot \rho).
\end{equation}
From the perspective of property tomography, therefore, it becomes desirable to find a sequence of measurements such that a quantum state, $\rho$, has vanishing rate, $I_{\rho}(x) = 0$, if and only if, the state $\rho$ has property $x$.
Finding examples of such measurements schemes will be the subject of \cref{chap:estimation_theory}.

In this chapter, we will investigate a general class of optimization problems over a non-compact, non-commutative group $G$, similar to the one defined above.
Our focus here will be on the general theory of non-commutative optimization from the perspective of geometric invariant theory; for instance the Kempf-Ness theorem provides a correspondence between extremal values and the vanishing of a generalized gradient, called the moment map~\cref{thm:kempf_ness_theorem}.
Algorithmic implementations and complexity theoretic aspects of non-commutative optimization theory can be found elsewhere~\cite{burgisser2019towards}.
The final result of this chapter, covered in \cref{sec:strong_duality}, is the \textit{strong duality} result of Ref.~\cite{franks2020minimal}, which serves as the foundation for many of the results in subsequent chapters.

\section{Invariants \& norm minimization}

\subsection{Group orbits \& stability}
\label{sec:orbits_stability}

The purpose of this section is to introduce the concept of group orbits associated to group representations and to study their topological closures.
In particular, various notions of stability, borrowed from the subject of geometric invariant theory are directly related to whether or not the orbit of a vector under the action of a group is closed.

\begin{defn}
    \label{defn:group_orbits}
    Let $\s H$ be a complex finite-dimensional Hilbert space and let $\grep : G \to \GL(\s H)$ be a representation of a reductive group $G$ such that the inner product $\braket{\cdot, \cdot}$ on $\s H$ is invariant under the action of the maximal compact subgroup $K \subseteq G$.
    Let $v \in \s H$ be a vector in $\s H$.
    The \defnsty{orbit of $v$}, $G \cdot v$, is the set
    \begin{equation}
        G \cdot v = \grep(G)v \coloneqq \{ \grep(g)v \in \s H \mid g \in G\}.
    \end{equation}
    The \defnsty{closure} of the orbit $G \cdot v$, denoted by $\overline{G \cdot v}$ or $\overline{\grep(G)v}$, is the topological closure of $G \cdot v$ with respect to the norm-induced topology on $\s H$.
\end{defn}
\begin{exam}
    As representations are linear group actions, the orbit of the origin $0 \in \s H$ is the singleton set $\grep(G)0 \coloneqq \{ 0 \}$.
    This orbit will be called the \defnsty{trivial orbit}, while all other orbits are considered \defnsty{non-trivial orbits}.
\end{exam}

A particularly useful tool for verifying that two points belong to distinct orbits are invariant polynomials.
We begin by recalling that the \defnsty{ring of complex polynomials}, denoted $\mathbb C[\s V]$, over a finite-dimensional complex vector space $\s V$, can be defined either in a coordinate-full or coordinate-free manner.
The coordinate-free definition for $\mathbb C[\s V]$ proceeds by considering the commutative ring of functions from $\s V$ to $\mathbb C$ generated by elements of the dual space $\s V^{\ast}$ of $\mathbb C$-linear functions on $\s V$.
From this perspective, $\mathbb C[\s V]$ is also sometimes called the symmetric algebra on $\s V$.
Alternatively, a coordinate-full definition proceeds by identifying some basis $\{e_1, \ldots, e_d\}$ for $\s V$ with the indeterminate variables $\{x_1, \ldots, x_d\}$ such that the polynomials in $\mathbb C[\s V]$ can be identified with the set of complex polynomials in $d$ variables, denoted $\mathbb C[x_1, \ldots, x_d]$.
In either case, any representation of a group $G$ on $\s V$ induces an action of $G$ on $\mathbb C[\s V]$ and subsequently gives rise to the notion of an invariant polynomial. 

\begin{defn}
    \label{defn:invariant_polynomials}
    Let $\s V$ be a finite-dimensional complex vector space and $\mathbb C[\s V]$ the complex polynomial ring on $\s V$.
    Given a representation $\grep : G \to \GL(\s V)$ of a group $G$ on $\s V$, there is a natural action of the group $G$ on $\mathbb C[\s V]$, denoted here by $\grep_*$, and defined for $p \in \mathbb C[\s V]$ and $g \in G$ by
    \begin{equation}
        (\grep_{*}(g)(p))(v) \coloneqq p(\grep(g^{-1}) v).
    \end{equation}
    A polynomial $p \in \mathbb C[\s V]$ is said to be a \defnsty{invariant} (or sometimes $G$-invariant) polynomial if for all $g \in G$, $\grep_{*}(g)(p) = p$.
    The subset of all invariant polynomials forms a subring of $\mathbb C[\s V]$, called the \defnsty{ring of invariant polynomials} and is denoted by $\mathbb C[\s V]^\grep$.
\end{defn}

\begin{rem}
    Recall that in this setting, every group $G$ acts \textit{linearly} on the vector space $\s V$ through the representation $\grep : G \to \GL(\s H)$.
    Consequently, the degree of a homogeneous polynomial $p \in \mathbb C[\s V]$ is always preserved by the action of $G$ meaning,
    \begin{equation}
        \forall g \in G : \deg(\grep_{*}(g) p ) = \deg(p).
    \end{equation}
    Therefore, the ring of $G$-invariant polynomials, $\mathbb C[\s V]^\grep$, naturally forms a \textit{graded ring} in the sense that
    \begin{equation}
        \mathbb C[\s V]^\grep = \bigoplus_{n=0}^{\infty}\mathbb C[\s V]^\grep_n
    \end{equation}
    where each summand, $\mathbb C[\s V]^\grep_n$, is the vector space of homogeneous degree-$n$ $G$-invariant polynomials, and furthermore for every $n,m \in \mathbb N$, the product of a degree-$n$ and degree-$m$ $G$-invariant polynomial is a degree-$(n+m)$ $G$-invariant polynomial:
    \begin{equation}
        \mathbb C[\s V]^\grep_n \otimes \mathbb C[\s V]^\grep_m \subseteq \mathbb C[\s V]^\grep_{n+m}.
    \end{equation}
\end{rem}

\begin{rem}
    \label{rem:proto_link_algebra_geo}
    The definition of a $G$-invariant polynomial $p \in \mathbb C[\s V]^\grep$ immediately implies that $p$ takes constant values when evaluated on orbits because $p(\grep(g)v) = p(v)$ for all $g \in G$. 
    Moreover, since polynomials are continuous functions, they are also constant when evaluated on the closures of orbits, i.e., for all $w \in \overline{\grep(g)v}$, we have $p(w) = p(v)$ also.
    The contrapositive version of this observation is quite useful; if $v,w \in \s V$ are vectors such that $p(v) \neq p(w)$ for some $G$-invariant polynomial $p$, one concludes that $v$ and $w$ must belong to distinct orbit closures, i.e., the intersection of their orbit closures is empty: $\overline{\grep(g)v} \cap \overline{\grep(g)w} = \emptyset$.
\end{rem}

\begin{exam}
    Perhaps the simplest non-trivial example of an invariant polynomial arises under the action of the finite multiplicative group $\mathbb Z_{2} = \{-1, +1\}$ on the one-dimensional complex vector space $\s V \cong \mathbb C$ where $-1 \in \mathbb Z_{2}$ acts on $x \in \mathbb C$ by reflection $x \mapsto - x$.
    In this example, the polynomial ring $\mathbb C[\s V] \simeq \mathbb C[x]$ is univariate and consists of polynomials of the form $c_0 + c_1 x + c_2 x^2 + \cdots$.
    Under the action of $\mathbb Z_{2}$ by $x \mapsto - x$, the invariant polynomials $\mathbb C[x]^{\mathbb Z^{2}}$ are precisely those containing only even-degree terms, i.e. $c_0 + c_2 x^2 + \cdots$.
\end{exam}

\begin{exam}
    \label{exam:prototype_orbits}
    For another example involving a continuous group, consider the two-dimensional vector space $\s V = \mathbb C^2$ and them multiplicative group of non-zero complex numbers $G = \wozero{\mathbb C}$ acting on $(v_1, v_2) \in \mathbb C^{2}$ by sending $z \in \wozero{\mathbb C}$ to the linear transformation $\grep(z) \in \End(\mathbb C^{2})$ defined by
    \begin{equation}
        (v_1, v_2) \mapsto (z v_1, z^{-1} v_2).
    \end{equation}
    The orbits under the above action fall into one of three qualitatively distinct categories (depicted in \cref{fig:orbit_scaling_example}) which can be determined by considering a representative element $(v_1, v_2)$ of the orbit:
    \begin{enumerate}[i)]
        \item If $(v_1, v_2) = (0,0)$, then the orbit is simply the trivial orbit $\grep(\wozero{\mathbb C}) (0, 0) = \{ (0,0) \}$.
        \item If $v_1 = a \neq 0$ but $v_2 = 0$, then the orbit consists of the $v_2 = 0$ axis in $\mathbb C^{2}$ where the origin has been removed $\grep(\wozero{\mathbb C}) (a, 0) = \{ (z,0) \mid z \in \wozero{\mathbb C} \} \simeq \wozero{\mathbb C}$ (or analogously if $v_1 = 0$ and $v_2 = b \neq 0$).
        \item Or if $(v_1, v_2) = (a,b)$ are both non-zero, then the orbit consists of all points such that $v_1 v_2 = ab$, i.e. $\grep(\wozero{\mathbb C}) (a, b) = \{ (v_1, v_2) \mid v_1 v_2 = ab \}$.
    \end{enumerate}
    Moreover, the ring of invariant polynomials, $\mathbb C[v_1, v_2]^{\wozero{\mathbb C}}$, is generated from a single invariant polynomial of homogeneous degree two, namely $v_1 v_2$.
    Indeed the third type of orbit above is completely characterized as the solution set to the polynomial equation $v_1 v_2 = c$ where $c \in \mathbb C$ is a non-zero complex constant.
    By comparison, it can be shown that the orbits of type i) or ii) cannot be expressed as the solution set of invariant polynomials in $\mathbb C[v_1, v_2]^{\wozero{\mathbb C}}$.
\end{exam}
\begin{figure}
    \centering
    \includegraphics{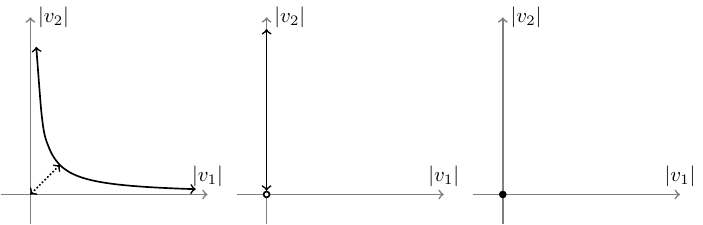}
    \caption{Three distinct types of group orbits under the group action sending the pair of complex numbers $(v_1, v_2)$ to $z \cdot (v_1, v_2) = (z v_1, z^{-1} v_2)$, visualized using the moduli of the complex coordinates, $(|v_1|, |v_2|)$.}
    \label{fig:orbit_scaling_example}
\end{figure}

In the previous example (\cref{exam:prototype_orbits}), it was shown that the orbits of a group acting on a vector space cannot always be characterized as the solutions to a family of invariant polynomials. 
In general, any non-zero group orbit whose topological closure contains the origin of the vector space cannot be separated from the zero orbit using invariant polynomials invariant (or more generally invariant continuous functions).
Therefore, distinguishing between different group orbits is highly sensitive to whether or not the topological closure of the group orbit contains the origin.
This observation gives rise to the notion of stability that will be used subsequently.
\begin{defn}
    \label{defn:stability}
    Here $\s V$ will be a complex finite-dimensional Hilbert space, and $\grep : G \to \GL(\s V)$ a representation of a reductive group $G$ acting on $\s V$. Let $v \in \s V$ be a fixed vector.
    Then $v$ is said to be
    \begin{enumerate}[i)]
        \item \defnsty{unstable} if $0 \in \overline{\grep(G)v}$,
        \item \defnsty{semistable} if $0 \not \in \overline{\grep(G)v}$,
        \item \defnsty{polystable} if $v$ is semistable and $\grep(G)v$ is closed, i.e. $\grep(G)v = \overline{\grep(G)v}$,
        \item and \defnsty{stable} if $v$ is polystable and additionally the stabilizer of $v$, $G_{v}$, has finite order.
    \end{enumerate}
\end{defn}
\begin{rem}
    Linguistically, one might expect the notion of a \textit{stable} vector to be any vector that is \textit{not} unstable.
    However, this is not the case; the logical opposite of an unstable vector is the notion of a \textit{semistable} vector.
\end{rem}
\begin{exam}
    Returning to \cref{exam:prototype_orbits} and consulting \cref{fig:orbit_scaling_example}, one can see that with respect to the operation $(v_1, v_2) \mapsto (z v_1, z^{-1} v_2)$ for $z \in \wozero{\mathbb C}$, vectors $(v_1,v_2)$ lying on either the $v_1$ or $v_2$ axes are unstable because their orbit closures contain the origin. 
    By comparison, if $(v_1,v_2)$ does not lie on either $v_1$ or $v_2$ axes, then $(v_1, v_2)$ is in fact semistable because their orbit closure excludes the origin.
    Moreover, in this example, the semistable points are additionally polystable (because their orbits are topologically closed) and stable (because their stabilizer is the trivial group and has order one which is finite).
\end{exam}
These examples bring us to a fundamental result of geometric invariant theory which can be seen as a strengthening of the link between the intersections of orbit closures and invariant polynomials previously mentioned in \cref{rem:proto_link_algebra_geo}.
\begin{prop}
    \label{prop:mumford}
    Let $G$ be a complex reductive group, let $\grep : G \to \GL(\s V)$ be a rational representation of $G$ on a complex finite-dimensional vector space $\s V$, and fix $v,w \in \s V$. 
    Then $\overline{\grep(G)v}\cap\overline{\grep(G)w} \neq \emptyset$ if and only if $p(v) = p(w)$ for all invariant polynomials $p \in \mathbb C[\s V]^{\grep}$.
\end{prop}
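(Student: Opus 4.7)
The plan is to separate the two directions and address them asymmetrically. The "only if" direction is essentially the observation already recorded in \cref{rem:proto_link_algebra_geo}: every $p \in \mathbb C[\s V]^{\grep}$ is $G$-invariant hence constant on each orbit, and since $p$ is continuous it is also constant on each orbit closure. Thus if some $u \in \overline{\grep(G)v}\cap\overline{\grep(G)w}$ exists, then $p(v) = p(u) = p(w)$ for every invariant polynomial.

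For the nontrivial "if" direction I will argue the contrapositive: assuming the orbit closures are disjoint, I will construct an invariant polynomial separating $v$ and $w$. Set $A \coloneqq \overline{\grep(G)v}$ and $B \coloneqq \overline{\grep(G)w}$; these are $G$-stable closed algebraic subvarieties of $\s V$ which, by hypothesis, satisfy $A \cap B = \emptyset$. By Hilbert's Nullstellensatz applied to the coordinate ring $\mathbb C[\s V]$, the ideals $I(A), I(B) \subseteq \mathbb C[\s V]$ of polynomials vanishing on $A$ and $B$ respectively are comaximal, i.e.\ $I(A) + I(B) = \mathbb C[\s V]$. In particular there exist $q_A \in I(A)$ and $q_B \in I(B)$ with $q_A + q_B = 1$, so that $q_A|_A \equiv 0$ and $q_A|_B \equiv 1$. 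The polynomial $q_A$ is not invariant in general, which is where reductivity enters.

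The key step is to average $q_A$ against the maximal compact subgroup $K \subseteq G$ using the normalized Haar measure $\mu$ on $K$, producing the Reynolds operator
\begin{equation}
    R : \mathbb C[\s V] \to \mathbb C[\s V]^{\grep}, \qquad R(q)(u) \coloneqq \int_K q(\grep(k)u) \diff \mu(k).
\end{equation}
Three points must be verified: (i) $R(q)$ is well-defined and lies in $\mathbb C[\s V]$, which follows because $K$ acts on each graded component $\mathbb C[\s V]_d$ as a finite-dimensional representation so the integral defines a polynomial of the same degree; (ii) $R(q)$ is $K$-invariant by the left-invariance of $\mu$, and hence $G$-invariant because $K$ is Zariski-dense in $G = K_{\mathbb C}$ (so a polynomial fixed by $K$ is automatically fixed by the whole complex reductive group by the identity theorem for holomorphic functions); (iii) $R$ preserves constant values on $G$-stable sets, because for $u \in A$ the $G$-invariance of $A$ yields $\grep(k)u \in A$ for all $k \in K$, whence $q_A(\grep(k)u) = 0$ and thus $R(q_A)(u) = 0$, and similarly $R(q_A)|_B \equiv 1$. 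Evaluating at $v \in A$ and $w \in B$ gives $R(q_A)(v) = 0 \neq 1 = R(q_A)(w)$, an invariant polynomial that distinguishes $v$ from $w$, contradicting the hypothesis.

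I expect the main obstacle to be the verification that the averaging procedure actually lands in the algebraic ring of polynomials (rather than merely in smooth $K$-invariant functions) and that $K$-invariance upgrades to $G$-invariance. Both of these rely critically on the assumption that $\grep$ is a rational representation and that $G$ is the complexification of a compact connected $K$, which were built into the definition of a complex reductive group in \cref{defn:complex_reductive_group}. Modulo the standard invocation of the Nullstellensatz and the construction of the Reynolds operator, the argument is then essentially the Mumford separation principle, and the full technical details can be found in the standard references on geometric invariant theory.
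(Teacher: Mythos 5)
The paper states \cref{prop:mumford} without proof — it is cited as a standard fact from geometric invariant theory (Mumford's closed-orbit separation lemma) — so there is no paper proof to compare against. Your argument is the textbook proof of this result, and it is essentially correct: the ``only if'' direction is the continuity remark already in \cref{rem:proto_link_algebra_geo}, and the ``if'' direction combines the Nullstellensatz with a Reynolds operator. The three verifications you make about the Reynolds operator (well-definedness on each graded piece, $K$-invariance upgrading to $G$-invariance via Zariski density of $K$ in $G = K_{\mathbb C}$, and preservation of constancy on $G$-stable sets) are all the right steps and are argued correctly.

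The one place where you pass over a nontrivial fact in silence is the very first sentence of the ``if'' direction. The closures $\overline{\grep(G)v}$ and $\overline{\grep(G)w}$ are, per \cref{defn:group_orbits}, closures in the \emph{norm} (Euclidean) topology, yet you immediately treat $A$ and $B$ as closed \emph{algebraic} subvarieties so that the Nullstellensatz applies to their vanishing ideals. These two notions of closure do coincide here, but that requires an argument: a $G$-orbit under a rational action is a constructible set (Chevalley), and for a constructible subset of a complex affine variety the Euclidean closure equals the Zariski closure. Without that identification, $I(A)$ and $I(B)$ need not have $A$ and $B$ as their zero loci, and the comaximality conclusion $I(A) + I(B) = \mathbb C[\s V]$ does not follow from disjointness of the Euclidean closures alone. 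This is the only genuine gap; once it is filled the rest of the proof goes through as you wrote it.

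A minor terminological point: your appeal to the ``identity theorem for holomorphic functions'' in step (ii) works because $K$ is a totally real submanifold of $G$ of maximal dimension, but the cleaner statement in this algebraic setting is simply that $K$ is Zariski-dense in $G$ and that $g \mapsto p(\grep(g^{-1})u)$ is a regular function on $G$ (using that $\grep$ is a \emph{rational} representation, as the hypothesis requires). You do mention Zariski density parenthetically, which is the right invocation — the holomorphic-continuation language is a slightly heavier hammer than needed.
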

If one of the two vectors in \cref{prop:mumford} is taken to be the zero vector, one obtains the following corollary which relates the semistability of a vector (\cref{defn:stability}) to the values it takes on homogeneous invariant polynomials. 
\begin{cor}
    Let $\grep : G \to \GL(\s V)$ be as in \cref{prop:mumford} and fix $v \in \s V$. 
    Then $0 \in \overline{\grep(G)v}$ if and only if $p(v) = p(0)$ for all invariant polynomials $p \in \mathbb C[\s V]$.
    Equivalently, $v$ is unstable if and only if $p(v) = 0$ for all homogeneous invariant polynomials $p \in \bigoplus_{n=1}^{\infty} \mathbb C[\s V]^{\grep}_n$.
\end{cor}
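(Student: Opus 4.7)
The plan is to deduce the corollary directly from \cref{prop:mumford} by specializing $w = 0$, and then to repackage the resulting statement in terms of homogeneous invariants using the grading of $\mathbb C[\s V]^{\grep}$.

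First, observe that because $\grep$ is a linear representation, $\grep(g) 0 = 0$ for every $g \in G$, so the orbit of the origin is the singleton $\grep(G) 0 = \{0\}$ and its closure is $\overline{\grep(G) 0} = \{0\}$. Taking $w = 0$ in \cref{prop:mumford} therefore gives
\begin{equation}
    \overline{\grep(G) v} \cap \{0\} \neq \emptyset \quad \Longleftrightarrow \quad \forall p \in \mathbb C[\s V]^{\grep} : p(v) = p(0),
\end{equation}
and the left-hand side is precisely the statement that $0 \in \overline{\grep(G) v}$. This proves the first equivalence.

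Next I would convert the condition ``$p(v) = p(0)$ for all $p \in \mathbb C[\s V]^{\grep}$'' into the homogeneous form. The key observation is that because $\grep$ acts linearly on $\s V$, the induced action $\grep_{\ast}$ on $\mathbb C[\s V]$ preserves the degree of each monomial. Hence if $p = \sum_{n \geq 0} p_n$ is the decomposition of an arbitrary invariant polynomial into its homogeneous components $p_n \in \mathbb C[\s V]_n$, each $p_n$ is itself $G$-invariant, and the ring of invariants inherits the grading $\mathbb C[\s V]^{\grep} = \bigoplus_{n \geq 0} \mathbb C[\s V]^{\grep}_n$. Since any homogeneous polynomial of positive degree vanishes at the origin and any degree-zero polynomial is a constant (so already agrees with its value at the origin), the condition ``$p(v) = p(0)$ for all invariant $p$'' is equivalent to ``$p(v) = 0$ for all homogeneous invariant $p$ of positive degree''. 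Combined with the definition of an unstable vector (\cref{defn:stability}) as one with $0 \in \overline{\grep(G) v}$, this yields the second formulation.

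I do not anticipate a genuine obstacle here: everything reduces to a direct application of \cref{prop:mumford} plus the elementary observation that linear group actions preserve polynomial degree. The only subtlety worth flagging explicitly in the write-up is the need to justify that each homogeneous component of an invariant polynomial is itself invariant, which follows because the $G$-action and the grading on $\mathbb C[\s V]$ commute. Once that is noted, the corollary is immediate.
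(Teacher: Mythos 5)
Your proof is correct and matches the paper's intended argument exactly: the paper itself introduces the corollary with the one-line remark ``If one of the two vectors in \cref{prop:mumford} is taken to be the zero vector, one obtains the following corollary,'' which is precisely your specialization $w=0$, and the conversion to homogeneous invariants rests on the same grading observation the paper records in the remark preceding \cref{prop:mumford} (linear actions preserve degree, so each homogeneous component of an invariant is invariant). Nothing further is needed.
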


\begin{defn}
    \label{defn:null_cone}
    The \defnsty{null cone}, denoted by $\s N$, is the set of all vectors $v \in \s V$ which cannot be separated from the origin by invariant polynomials, i.e.
    \begin{equation}
        \s N \coloneqq \{v \in \s H \mid 0 \in \overline{\grep(G) v} \}.
    \end{equation}
\end{defn}

\subsection{Invariant and fixed subspaces}
\label{sec:invariant_and_fixed_subspaces}

Recall, from \cref{defn:invariant_subspace}, that a subspace $\s W \subseteq \s V$ is invariant under the action of a group representation $\grep : G \to \GL(\s V)$ if for all group elements $g \in G$, the linear operation, $\grep(g)$, maps every vector $w \in \s W$ inside $\s W$ to another vector $\grep(g) w \in \s W$ inside $\s W$.
There is a stronger notion of invariance for subspaces that requires each \textit{vector} $w \in \s W$ itself to be invariant.
This stronger notion of invariance will be important in \cref{sec:occasionality}.
Note that in some references, e.g. \cite{franks2020minimal}, this subspace of invariant vectors is referred to as \textit{the} invariant subspace.
In this thesis, we instead prefer to use the terminology of the \textit{fixed subspace}.
\begin{defn}
    Let $\grep : G \to \GL(\s H)$ be a representation of a group $G$ on a complex finite-dimensional Hilbert space $\s H$.
    The \defnsty{fixed subspace} is the subspace $\s H^\grep \subseteq \s H$ consisting of all vectors on which $\grep$ acts trivially:
    \begin{equation}
        \s H^\grep \coloneqq \{ v \in \s H \mid \forall g \in G, \grep(g)v = v \}.
    \end{equation}
\end{defn}
The fixed subspace $\s H^\grep$ can also be identified with the multiplicity space for the trivial subrepresentation of $\grep$, meaning 
\begin{equation}
    \s H^\grep \simeq \Hom_{G}(1_{G},\grep)
\end{equation}
where the symbol, $1_{G}$, here denotes the trivial representation of $G$ sending $g \in G$ to the multiplicative unit $1_{G}(g) = 1 \in \mathbb C$.
\begin{defn}
    \label{defn:fixed_subspace_n}
    Let $\grep : G \to \GL(\s H)$ be a representation of a group $G$ on a complex finite-dimensional Hilbert space $\s H$.
    The \defnsty{fixed subspace of degree $n$} is fixed subspace $(\s H^{\otimes n})^\grep \subseteq \s H^{\otimes n}$ of the $n$th tensor power representation $\grep^{\otimes n}$:
    \begin{equation}
        (\s H^{\otimes n})^\grep \coloneqq \{ w \in \s H^{\otimes n} \mid \forall g \in G, \grep^{\otimes n}(g) w = w \}.
    \end{equation}
    The projection operator onto the fixed subspace of degree $n$, $(\s H^{\otimes n})^\grep \subseteq \s H^{\otimes n}$, will always be denoted by $\fsub{\grep^{\otimes n}}$.
\end{defn}

\begin{rem}
    \label{rem:fixed_graded}
    For any positive integers $n, m \in \mathbb N$, there exists a map sending $w_{n} \in (\s H^{\otimes n})^\grep$ and $w_{m} \in (\s H^{\otimes m})^\grep$ to the vector 
    \begin{equation}
        w_{n} \otimes w_{m} \in (\s H^{\otimes (n+m)})^\grep,
    \end{equation}
    and through this map, $(\s H^{\otimes n})^\grep \otimes (\s H^{\otimes m})^\grep$ can be viewed as a subspace of $(\s H^{\otimes (n+m)})^\grep$:
    \begin{equation}
        (\s H^{\otimes n})^\grep \otimes (\s H^{\otimes m})^\grep \subseteq (\s H^{\otimes (n+m)})^\grep.
    \end{equation}
    If $\fsub{\grep^{\otimes n}} \in \End(\s H^{\otimes n})$ is the projection operator onto the subspace $(\s H^{\otimes n})^\grep \subset \s H^{\otimes n}$, then we obtain the operator inequality
    \begin{equation}
        \fsub{\grep^{\otimes n}} \otimes \fsub{\grep^{\otimes m}} \leq \fsub{\grep^{\otimes (n+m)}}.
    \end{equation}
\end{rem}

\begin{lem}
    \label{lem:projection_fixed_subspace}
    Let $G$ be a complex reductive group which is the complexification of a compact, connected Lie group $K$, i.e. $G = K_{\mathbb C}$.
    Let $\s H$ be a complex finite-dimensional Hilbert space and $\grep : G \to \GL(\s H)$ a representation of $G$ on $\s H$.
    Let $\fsub{\grep} \in \End(\s H)$ denote the projection operator onto the subspace $\s H^\grep \subseteq \s H$ fixed by $G$.
    Then $\fsub{\grep}$ can be expressed as an integral over the compact group $K$,
    \begin{equation}
        \label{eq:fixed_projector}
        \fsub{\grep} = \int_{k \in K} \grep(k) \diff \mu (k),
    \end{equation}
    where $\mu : \borel{K} \to [0,1]$ denotes the normalized $K$-invariant Haar measure.
\end{lem}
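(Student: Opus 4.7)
The plan is to define the operator
\begin{equation}
    P \coloneqq \int_{k \in K} \grep(k) \diff \mu(k) \in \End(\s H),
\end{equation}
and prove directly that $P = \fsub{\grep}$. Well-definedness is immediate: $\grep$ is continuous, $K$ is compact, and $\mu$ is a normalized finite measure, so the integral (taken in any basis, or equivalently as a Bochner integral in the finite-dimensional normed space $\End(\s H)$) converges.

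First I would verify that $P$ acts as the identity on $\s H^\grep$. If $v \in \s H^\grep$, then in particular $\grep(k)v = v$ for every $k \in K \subseteq G$, so $Pv = \int_K v \diff \mu(k) = v$ by normalization of $\mu$. Next I would show $\mathrm{im}(P) \subseteq \s H^K$, where $\s H^K$ denotes the $K$-fixed subspace: for any $k_0 \in K$, left-invariance of $\mu$ gives
\begin{equation}
    \grep(k_0) P = \int_K \grep(k_0 k) \diff \mu(k) = \int_K \grep(k) \diff \mu(k) = P,
\end{equation}
so $\grep(k_0)(Pv) = Pv$ for all $v \in \s H$. Combined with the previous step, this shows $P$ is idempotent ($P^2 v = P(Pv) = Pv$ since $Pv$ lies in the fixed subspace of $K$, which $P$ fixes) and hence a projection whose image is exactly $\s H^K$.

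The main step, and the one where the hypotheses on $K$ are essential, is to show $\s H^K = \s H^\grep$. The inclusion $\s H^\grep \subseteq \s H^K$ is immediate from $K \subseteq G$. For the reverse, suppose $v \in \s H^K$. Let $\arep : \mathfrak g \to \mathfrak{gl}(\s H)$ be the induced Lie algebra representation (\cref{lem:induced_lie_alg_rep}). Since $K$ is connected, $K = \exp(\mathfrak k)$ by \cref{eq:surjective}, and $\grep(\exp(tX))v = v$ for all $X \in \mathfrak k$ and $t \in \mathbb R$ yields $\arep(X) v = \partial_{t=0} \grep(\exp(tX))v = 0$. By $\mathbb C$-linearity of $\arep$ and the decomposition $\mathfrak g = \mathfrak k \oplus i \mathfrak k$, this extends to $\arep(Y) v = 0$ for every $Y \in \mathfrak g$. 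Because $K$ is connected, so too is its complexification $G = K_\mathbb{C}$, and hence $G$ is generated by $\exp(\mathfrak g)$; consequently $\grep(g) v = v$ for every $g \in G$, i.e.\ $v \in \s H^\grep$.

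Combining the three steps, $P$ is the projection onto $\s H^\grep$, which is exactly the claim. The only nontrivial obstacle is the identification $\s H^K = \s H^\grep$; this is where compactness (to make the averaging integral converge and produce $K$-fixed vectors) and connectedness (to lift $K$-invariance to $\mathfrak g$-invariance and then to full $G$-invariance via the exponential map) are both used. Everything else is a routine application of invariance of the Haar measure.
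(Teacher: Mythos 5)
Your proof is correct and takes essentially the same route as the paper's: average over $K$ to get a projector onto the $K$-fixed subspace, then identify $\s H^K = \s H^\grep$ by differentiating $\grep(\exp(tX))v = v$ and extending linearly from $\mathfrak k$ to $\mathfrak g = \mathfrak k \oplus i\mathfrak k$. The only cosmetic difference is that you establish the projector property by a direct idempotency computation, whereas the paper invokes Schur's lemma for that step.
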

\begin{proof}
    From the $K$-invariance of the Haar measure $\diff u$ and Schur's lemma, one can conclude that the right-hand side of \cref{eq:fixed_projector} is equal to the projector operator onto the subspace fixed by the restriction of $\grep$ onto $K$, denoted by $\grep|_{K}$.
    To prove \cref{lem:projection_fixed_subspace}, it suffices to show that the subspaces fixed by $G$ (via $\grep$) and $K$ (via $\grep|_{K}$) indeed coincide.
    This follows by noticing that when $v$ is fixed by $K$, for all $X$ in the Lie algebra $\mathfrak k$ of $K$, $\arep(X)v = 0$. 
    Moreover, since $X \in \mathfrak k$ is arbitrary and $G$ is the complexification of $K$, $\arep(X+iY)v = \arep(X)v + i \arep(Y)v = 0$ for all $X,Y \in \mathfrak k$ and thus $\arep(Z)v = 0$ holds for all $Z \in \mathfrak g = \mathfrak k \oplus i \mathfrak k$ and thus $v$ is also $\grep$-invariant. 
\end{proof}
\begin{cor}
    Let $\grep : G \to \GL(\s H)$ be as in \cref{lem:projection_fixed_subspace}. Then the projection operator onto the fixed subspace of degree $n$, $(\s H^{\otimes n})^\grep$ may be expressed as
    \begin{equation}
        \label{eq:stable_projector_degree_n}
        \fsub{\grep^{\otimes n}} = \int_{k \in K} \grep^{\otimes n}(k)\diff \mu(k).
    \end{equation}
\end{cor}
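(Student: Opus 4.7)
The plan is to observe that this corollary is an immediate application of the preceding \cref{lem:projection_fixed_subspace}. The key point is that the $n$-th tensor power representation $\grep^{\otimes n} : G \to \GL(\s H^{\otimes n})$ is itself a bona fide representation of the complex reductive group $G$ on the complex finite-dimensional Hilbert space $\s H^{\otimes n}$. Therefore, the hypotheses of \cref{lem:projection_fixed_subspace} are satisfied with the substitutions $\s H \rightsquigarrow \s H^{\otimes n}$ and $\grep \rightsquigarrow \grep^{\otimes n}$.

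First, I would verify (or simply invoke) that $\grep^{\otimes n}$, as constructed from the internal tensor product in \cref{exam:nth_tensor_power}, is a well-defined representation of $G$. Then, by definition (\cref{defn:fixed_subspace_n}), the fixed subspace of degree $n$ is precisely the fixed subspace of the representation $\grep^{\otimes n}$, namely
\begin{equation}
    (\s H^{\otimes n})^\grep = \{ w \in \s H^{\otimes n} \mid \forall g \in G, \grep^{\otimes n}(g) w = w\} = (\s H^{\otimes n})^{\grep^{\otimes n}}.
\end{equation}
The projection operator $\fsub{\grep^{\otimes n}}$ onto this subspace coincides with the object produced by \cref{lem:projection_fixed_subspace} applied to the representation $\grep^{\otimes n}$.

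Applying \cref{lem:projection_fixed_subspace} directly yields
\begin{equation}
    \fsub{\grep^{\otimes n}} = \int_{k \in K} \grep^{\otimes n}(k)\diff \mu(k),
\end{equation}
which is the claimed formula. Since this is a straightforward instantiation of the lemma, there is no genuine obstacle; the only subtlety worth flagging is that the compact subgroup $K \subseteq G$ used for the Haar integral is the \emph{same} compact group associated to $G$ (not some enlarged group attached to $\s H^{\otimes n}$), and that the argument in the lemma's proof — namely, that $K$-invariance and $G$-invariance coincide because $G = K_{\mathbb C}$ and $\mathfrak g = \mathfrak k \oplus i \mathfrak k$ — applies unchanged to the induced Lie algebra representation $\arep^{\otimes n}$ on $\s H^{\otimes n}$. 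Thus the corollary requires no new ideas beyond recognizing the tensor power as a representation in its own right.
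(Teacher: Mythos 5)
Your proposal is correct and matches the paper's proof exactly: both apply \cref{lem:projection_fixed_subspace} to the $n$-th tensor power representation $\grep^{\otimes n}$ on $\s H^{\otimes n}$. The additional remarks you make (that $\grep^{\otimes n}$ is a bona fide representation, that the compact group $K$ is unchanged) are correct and merely spell out what the paper leaves implicit.
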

\begin{proof}
    The claim follows by applying \cref{lem:projection_fixed_subspace} to the $n$th tensor power representation $\grep^{\otimes n} : G \to \GL(\s H^{\otimes n})$.
\end{proof}

\begin{rem}
    \label{rem:homo_poly}
    Note that the notion of an invariant polynomial of degree $n$ (encountered in \cref{defn:invariant_polynomials}) is intimately related to the notion of an element of the fixed subspace of degree $n$.
    Recall that a homogeneous degree $n$ polynomial $p \in \mathbb C[\s H]_{n}$ is characterized by the fact that $p(x v) = x^n p(v)$ for all $x \in \mathbb C$.
    Furthermore every homogeneous degree $n$ polynomial, $p \in \mathbb C[\s H]_{n}$, can be identified by its coefficients with respect to a basis of monomials (of degree $n$), and thus it can be uniquely associated with an element of the $n$th symmetric subspace $\mathrm{Sym}^{n}(\s H)$ in the sense that for each $p \in \mathbb C[\s H]_{n}$ there exists a $w \in \mathrm{Sym}^{n}(\s H)$ such that $p(v) = \langle w, v^{\otimes n}\rangle$ holds for all $v \in \s H$. 
\end{rem}
\begin{rem}
    For any representation $\grep : G \to \GL(\s H)$, the representation $\phi^{\otimes n} : G \to \GL(\s H)$ naturally commutes with the permutation of the tensor factors in $\s H^{\otimes n}$. In view of \cref{rem:homo_poly}, the space of homogeneous degree $n$, $G$-invariant polynomials, $\mathbb C[\s H]_n^\grep$, can therefore be identified with the space of symmetric, $G$-invariant vectors in $\s H^{\otimes n}$. In particular, we have the following relationship:
    \begin{equation}
        \mathbb C[\s H]_{n}^\grep \simeq (\s H^{\otimes n})^{\grep} \cap \mathrm{Sym}^{n}(\s H).
    \end{equation}
\end{rem}

\subsection{Capacities \& moment maps}
\label{sec:capacities_moment_maps}

While \cref{sec:orbits_stability} was primarily concerned with whether the orbit of a vector $v \in \s H$ under the action of a group representation $\grep : G \to \GL(\s H)$ contains the origin (or gets arbitrarily close to the origin), the purpose of this section is to introduce concepts which characterize these notions of stability more quantitatively.

Specifically, since the Hilbert space $\s H$ is equipped with a positive-definite norm $\norm{\cdot} : \s H \to \mathbb R_{\geq 0}$, the magnitude of $\norm{\grep(g)v}$ as $g$ varies through the group $G$ serves as a measure of the distance between $\grep(g)v$ and the origin $0 \in \s H$.
By minimizing over the whole group $G$ we obtain a measure of distance between the orbit $\grep(G) v$ and the origin called the \textit{capacity of $v$}.
\begin{defn}
    \label{defn:capacity}
    Let $\grep : G \to \GL(\s V)$ and let $v \in \s V$. The \defnsty{capacity of the vector $v$} is defined as
    \begin{equation}
        \capacity(v) \coloneqq \inf_{g \in G} \norm{\grep(g) v} = \min_{w \in \overline{\grep(G)v}} \norm{w}.
    \end{equation}
\end{defn}
\begin{exam}
    Again returning to \cref{exam:prototype_orbits} and consulting \cref{fig:orbit_scaling_example}, one can see that the capacity of a vector $(v_1, v_2) \in \mathbb C^{2}$ under the group acting $(v_1, v_2) \mapsto (z v_1, z^{-1} v_2)$ for $z \in \wozero{\mathbb C}$, defined as
    \begin{equation}
        \capacity((v_1,v_2)) = \inf_{z\in\wozero{\mathbb C}} \norm{(zv_1, z^{-1}v_2)} = \inf_{z\in\wozero{\mathbb C}} \sqrt{\abs{zv_1}^{2}+ \abs{z^{-1}v_2}^2},
    \end{equation}
    depends on which of the three types of orbits the vector $(v_1, v_2)$ belongs.
    First, the capacity of the zero vector is evidently zero, $\capacity((0,0)) = 0$, as the norm of the zero vector is always zero.
    Second, the capacity of any vector $(v_1, v_2)$ lying on, say, the $v_1=0$ axis has zero capacity because in the limit as $\abs{z}$ tends to infinity, the value of $\abs{z^{-1} v_2}$ tends to zero and thus for all $v_2 \in \mathbb C$, $\capacity((0,v_2)) = 0$.
    Finally, the capacity of a vector $(v_1, v_2)$ which satisfies both $v_1 \neq 0$ and $v_2 \neq 0$ has a positive capacity equal to
    \begin{equation}
        \capacity((v_1,v_2)) = \inf_{x\in\mathbb R_{>0}} \sqrt{x\abs{v_1}^{2}+ x^{-1}\abs{v_2}^{2}} = \sqrt{2\abs{v_1}\abs{v_2}},
    \end{equation}
    and this value is attained when $x = \abs{z} = \abs{v_2}/\abs{v_1}$ (see \cref{fig:orbit_scaling_example}).
\end{exam}

\begin{rem}
    In general, the value of the capacity of a vector $v$ is evidently related to the stability properties of $v$. 
    Indeed one can readily verify that $\capacity(v) = 0$ if and only if $v$ is unstable, or equivalently, $\capacity(v) > 0$ if and only if $v$ is semistable.
    The specific magnitude of the capacity when $\capacity(v) > 0$ is therefore unimportant for the purposes of assessing stability.
    While the definition given above is consistent with references~\cite{burgisser2019towards, franks2020minimal}, some authors, e.g. \cite{amendola2021invariant}, prefer to define capacity as $\inf_{g \in G} \norm{\grep(g) v}^{2}$.
    The specific value of the capacity, beyond whether or not it vanishes, will become important in \cref{sec:estimating_moment_maps}.
\end{rem}
\begin{rem}
    \label{rem:proj_cap}
    Notice that by linearity of the representation $\grep : G \to \GL(\s V)$, the capacity of any vector $v \in \s V$ satisfies the following scaling property:
    \begin{equation}
        \forall z \in \mathbb C : \abs{z}\capacity(v) = \capacity(zv).
    \end{equation}
    Since this property holds for all groups $G$, it becomes reasonable to define the capacity for the ray $\psi \in \proj \s H$, termed the \defnsty{projective capacity} of $\psi$, by
    \begin{equation}
        \projcapacity(\psi) \coloneqq \inf_{g \in G} \Tr(P_{\psi} \grep(g^{*} g))^{\frac{1}{2}}.
    \end{equation}
    The capacity of a non-zero vector, $v \in \wozero{\s V}$, and the capacity of the ray $[v] \in \proj \s H$ containing $v$ are thus related by
    \begin{align}
        \begin{split}
            \projcapacity([v]) 
            &= \inf_{g \in G} \Tr(P_{[v]} \grep(g^{*} g))^{\frac{1}{2}}, \\
            &= \norm{v}^{-1} \inf_{g \in G} \sqrt{\braket{v, \grep(g^{*} g)v}}, \\
            &= \norm{v}^{-1} \inf_{g \in G} \sqrt{\braket{\grep(g)v, \grep(g)v}}, \\
            &= \norm{v}^{-1} \inf_{g \in G} \norm{\grep(g) v}, \\
            &= \norm{v}^{-1} \capacity(v).
        \end{split}
    \end{align}
    Although the notion of projective capacity is  well-defined and useful, the capacity of a vector, $\capacity(v)$, has the conceptually useful property that the infimum of $\norm{\grep(g) v}$ over $g \in G$ always holds for some vector $w \in \overline{\grep(G)v}$ in the orbit closure of $v$ so that $\capacity(v) = \norm{w}$, whereas the closure of the corresponding projective orbit is not well defined when $\projcapacity(\psi) = 0$.
\end{rem}
\begin{rem}
    For any vector $v \in \s H$, the capacity $\capacity(v)$ is lower-bounded by zero (as the norm is non-negative) and upper bounded by the norm $\norm{v}$ (because the identity element $e \in G$ yields $\norm{\grep(e) v} = \norm{v}$):
    \begin{equation}
        0 \leq \capacity(v) \leq \norm{v}.
    \end{equation}
    If $v$ happens to be such that $\capacity(v) = \norm{v}$, then by definition, $v$ attains the minimal norm in its orbit, $\grep(G)v$, and thus the norm of $\grep(g)v$ is always equal to or greater than the norm of $v$ itself, i.e. $\norm{\grep(g) v} \geq \norm{v}$ for all $g \in G$.
\end{rem}
\begin{rem}
    \label{rem:norm_changing_subset}
    Consider the case where the group $G$ is the complexification, $G = K_{\mathbb C}$, of a connected compact Lie group $K$ (see \cref{sec:rep_theory}), and additionally where the norm $\norm{\cdot} : \s H \to \mathbb R_{\geq 0}$ on the representation space $\s H$ is $K$-invariant in the sense that $\norm{\grep(k) v} = \norm{v}$ for all $k \in K$.
    The $K$-invariance of the norm means that the function sending a group element $g \in G$ to the norm $\norm{\grep(g) v}$ is constant over left cosets $Kg \in K \backslash G$ where
    \begin{equation}
        Kg \coloneqq \{ kg \in G \mid k \in K \},
    \end{equation}
    simply because for all elements $h \in Kg$ of the left coset $Kg$ satisfies
    \begin{equation}
        \norm{\grep(h) v} = \norm{\grep(k) \grep(g) v} = \norm{\grep(g) v}.
    \end{equation}
    Therefore, in computing the capacity of a vector $v$, one need not optimize $\norm{\grep(g)v}$ as $g$ varies over the whole group, but instead, one only needs to optimize $\norm{\grep(g)v}$ as $g$ varies over distinct representatives from the set of all left cosets, denoted by $K \backslash G$. 
    Furthermore, recall from \cref{sec:highest_weights} that the Cartan decomposition $G \simeq K \times P$ of $G$ means the set of left cosets $K \backslash G$ is diffeomorphic to the subset $P \subseteq G$ in the sense that the left coset $Kg$ is identified with the value of $g^{*} g \in P$.
    Therefore, capacity of a vector $v$ can also be understood as the optimum of the function sending $p \in P$ to $\sqrt{\braket{v, \grep(p) v}}$.
\end{rem}
\begin{rem}
    The problem of computing the capacity of a vector $v \in \s H$ can be conceptualized as an optimization problem over the group $G$ where the objective function whose output is being optimized is the function $g \mapsto \norm{\grep(g) v}$ or equivalently, the function $g \mapsto \log \norm{\grep(g) v}$, also known as the \textit{Kempf-Ness function}.
\end{rem}
\begin{defn}
    \label{defn:kempf_ness_function}
    Let $\grep : G \to \GL(\s H)$ be a representation of a complex reductive group $G$ on a complex finite-dimensional Hilbert space $\s H$. 
    Let $v \in \wozero{\s H}$ be non-zero. The \defnsty{Kempf-Ness function} $F_v : G \to \mathbb R$ is defined for $g \in G$ by
    \begin{equation}
        F_v(g) = \log \norm{\grep(g) v}.
    \end{equation}
\end{defn}
\begin{rem}
    \label{rem:geodesic_convex_kempf_ness}
    In order to characterize the extremal points of the Kempf-Ness function $F_v : G \to \mathbb R$, it will be useful to consider the derivative of the function $f_{v} : \mathbb R \to \mathbb R$ for $Z \in \mathfrak g$ defined for $t \in \mathbb R$ as follows, where $Z \in \mathfrak g$ is an element of the Lie algebra of $G$, and
    \begin{equation}
        f(t) \coloneqq F_{v}(\exp(tZ)) = \log \norm{\grep(\exp(tZ)) v} = \log \norm{\exp(t \arep (Z))v}.
    \end{equation}
    When $G$ is the complexification of a compact Lie group $K$, the Lie algebra of $G$ is $\mathfrak g = \mathfrak k \oplus i \mathfrak k$ where $\mathfrak k$ is the Lie algebra of $K$.
    If, in addition, the norm $\norm{\cdot}$ is $K$-invariant, then $f_{v}$ is a constant function in the direction $X \in \mathfrak k$ corresponding to the Lie algebra of compact Lie group $K$ (this statement is the equivalent of \cref{rem:norm_changing_subset} from the Lie algebraic perspective). 
    Therefore, the Kempf-Ness function only changes along curves $t \mapsto \exp(t X)$ where $X$ belongs to the subspace $i \mathfrak k \subset g$ (i.e., when $X$ satisfies $X^{*} = X$).
    In this setting, one obtains
    \begin{equation}
        f(t) \coloneqq F_{v}(\exp(tX)) = \log \norm{\exp(t \arep (X))v} = \log \norm{v_{t}}.
    \end{equation}
    where $v_t = \exp(t \arep (X))v$.
    Letting $u_t$ denote the unit vector $v_t/\norm{v_t}$, the first two derivatives of $f(t)$ are \cite[Eq. (3.13)]{burgisser2019towards}
    \begin{align}
        f'(t) &= \langle u_t, \arep(X) u_t\rangle, \quad \text{and} \\
        f''(t) &= 2 (\langle \arep(X)u_t, \arep(X) u_t\rangle - \langle u_t,\arep(X)u_t\rangle^{2}) = \norm{ u_t \wedge\arep(X)  u_t }^{2},
    \end{align}
    where $u_t \wedge \arep(X) u_t$ is the antisymmetric vector defined by
    \begin{equation}
        u_t \otimes\arep(X) u_t -\arep(X) u_t \otimes u_t.
    \end{equation}
    There are two important conclusions to be drawn from the above calculation.
    First, the non-negativity of the second derivative of $f$ demonstrates that $f$ is a \textit{convex} function.
    In the language of \cite{burgisser2019towards}, this proves that the Kempf-Ness function, $F_v$, is geodesically convex.
    Second, the first derivative of $f$ evaluated at $t = 0$, $f'(0)$, when viewed as being functionally dependent on the value of $X \in i \mathfrak k$, gives rise to the notion of the moment map, defined below.
\end{rem}

\begin{defn}
    \label{defn:moment_map}
    Let $\grep : G \to \GL(\s H)$ be the representation of reductive group $G$ with maximal compact subgroup $K$ and Lie algebra $\mathfrak g = \mathfrak k \oplus i \mathfrak k$ and $\arep : \mathfrak g \to \mathfrak{gl}(\s H)$ the induced Lie algebra representation of $\mathfrak g$.
    The \defnsty{moment map} associated to the representation $\grep$ is the function
    \begin{equation}
        \momap : \proj \s H \to i\mathfrak k^{*},
    \end{equation}
    which assigns to each ray $\psi \in \proj \s H$ the linear function $\momap(\psi) : i \mathfrak k \to \mathbb R$ defined by
    \begin{equation}
        \forall X \in i \mathfrak k, \quad \momap(\psi)(X) = \Tr(P_{\psi}\arep (X)) = \frac{\braket{v, \arep(X) v}}{\braket{v, v}},
    \end{equation}
    where $v \in \wozero{\psi}$ is any non-zero representative vector in $\psi$.
\end{defn}

In this way, the moment map $\momap : \proj \s H \to i\mathfrak k^{*}$ can be interpreted as a kind of non-commutative gradient of the Kempf-Ness function.
Evidently, the minimum value of the Kempf-Ness function, which is related to the capacity of the vector $v$, and the gradient of the Kempf-Ness function, which is captured by the moment map evaluated on the ray containing $v$, are intimately related.
The following result, known as the Kempf-Ness theorem, solidifies this connection and follows directly from observations made in \cref{rem:geodesic_convex_kempf_ness}.
\begin{thm}[Kempf-Ness]
    \label{thm:kempf_ness_theorem}
    Let $G = K_{\mathbb C}$ be the complexification of a compact connected Lie group $K$, let $\grep : G \to \GL(\s H)$ be a rational representation of $G$ on a complex finite-dimensional Hilbert space $\s H$ with $K$-invariant inner product, and let $\momap : \proj \s H \to i\mathfrak k^{*}$ be the moment map associated to the representation $\grep : G \to \GL(\s H)$.
    Then, for all non-zero vectors $v \in \wozero{\s H}$,
    \begin{equation}
        \forall g \in G : \norm{\grep(g) v} \geq \norm{v} \quad \Longleftrightarrow \quad \momap([v]) = 0.
    \end{equation}
    Equivalently, $\capacity(v) = \norm{v}$ if and only if $\momap([v]) = 0$.

    Additionally, if $\momap([v]) = 0$ and $w \in \grep(G)v$ has the same norm as $v$ ($\norm{w} = \norm{v}$), then $v$ and $w$ necessarily belong to the same $K$-orbit, $w \in \grep(K) v$.
\end{thm}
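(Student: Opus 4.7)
The plan is to reduce everything to the one-variable convex function $f_X(t) = F_v(\exp(tX)) = \log\norm{\grep(\exp(tX))v}$ analysed in \cref{rem:geodesic_convex_kempf_ness}. Specifically, by the Cartan decomposition $G \simeq K \times P$ (\cref{rem:polar_decomp}) and the assumed $K$-invariance of the inner product, $F_v(kg) = F_v(g)$ for all $k \in K$, so it suffices to study $F_v$ along the $P = \exp(i\mathfrak k)$ factor. Every $p \in P$ lies on a one-parameter subgroup $t \mapsto \exp(tX)$ with $X \in i\mathfrak k$, and along any such curve $f_X$ is convex with
\begin{equation}
    f_X'(0) = \frac{\braket{v, \arep(X) v}}{\braket{v,v}} = \momap([v])(X), \qquad f_X''(t) = \norm{u_t \wedge \arep(X) u_t}^{2} \geq 0,
\end{equation}
where $u_t = \grep(\exp(tX))v / \norm{\grep(\exp(tX))v}$.

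First I would prove the ``$\Leftarrow$'' direction. Suppose $\momap([v]) = 0$. Then for every $X \in i\mathfrak k$, $f_X'(0) = \momap([v])(X) = 0$, and by convexity of $f_X$ the point $t = 0$ is a global minimum, so $\norm{\grep(\exp(X))v} \geq \norm{v}$. Combined with $K$-invariance, this yields $\norm{\grep(g)v} \geq \norm{v}$ for every $g \in G$, equivalently $\capacity(v) = \norm{v}$. The converse ``$\Rightarrow$'' direction is equally short: if $e$ minimises $F_v$ on $G$, then $t = 0$ minimises each $f_X$ and $f_X'(0) = 0$ for every $X \in i\mathfrak k$, which is precisely $\momap([v]) = 0$.

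For the uniqueness statement, suppose $\momap([v]) = 0$ and $w = \grep(g)v$ with $\norm{w} = \norm{v}$. Cartan-decompose $g = k \cdot \exp(X)$ for $k \in K$ and $X \in i\mathfrak k$; by $K$-invariance of the norm it is enough to show $\grep(\exp(X))v = v$, because then $w = \grep(k)v \in \grep(K)v$. Applying the convex function $f_X$, we have $f_X(0) = \log\norm{v} = \log\norm{w} = f_X(1)$ together with $f_X'(0) = 0$ and global convexity. The two equalities force $f_X$ to be constant on $[0,1]$ (convexity pins $f_X(t) \leq f_X(0)$ on the interval, while $t=0$ being a minimum forces $f_X(t) \geq f_X(0)$), so $f_X''(t) \equiv 0$ there. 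This means $u_t \wedge \arep(X) u_t = 0$, i.e.\ $u_t$ is an eigenvector of the Hermitian operator $\arep(X)$ for every $t$; combined with $f_X'(t) = \braket{u_t,\arep(X)u_t} = 0$, the eigenvalue must vanish, giving $\arep(X) v = 0$ and therefore $\grep(\exp(tX))v = v$ for all $t$, as required.

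The main conceptual obstacle is the uniqueness clause: the bare convexity of $f_X$ is not strict, so one cannot immediately conclude $X = 0$ from having two points at the minimum. The workaround, which I would foreground in writing the proof carefully, is to exploit the explicit formula $f_X''(t) = \norm{u_t \wedge \arep(X)u_t}^{2}$ to translate ``$f_X$ is affine on $[0,1]$'' into ``$v$ lies in the zero-eigenspace of $\arep(X)$'', thus stabilizing $v$ under the whole one-parameter subgroup rather than merely under $\exp(X)$. Everything else is packaging together the Cartan decomposition and $K$-invariance already established in \cref{sec:complexification}.
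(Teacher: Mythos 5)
Your proof is correct and follows exactly the route the paper intends: the paper itself simply asserts that \cref{thm:kempf_ness_theorem} ``follows directly from observations made in \cref{rem:geodesic_convex_kempf_ness},'' leaving the details to the reader, and you have supplied precisely those details using the Cartan decomposition, $K$-invariance of the norm, and the formulas $f_X'(0) = \momap([v])(X)$ and $f_X''(t) = \norm{u_t \wedge \arep(X) u_t}^2$ from that remark. In particular, your treatment of the uniqueness clause — converting ``$f_X$ affine on $[0,1]$'' into ``$v$ is a zero-eigenvector of $\arep(X)$'' via $f_X'' \equiv 0$ combined with $f_X' \equiv 0$ — is exactly the argument the paper elides, and it is sound.
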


\begin{prop}
    \label{prop:torus_convex_hull_weights}
    Let $r \in \mathbb N$ and let $K = \U(1)^{\times r}$ be the $r$-dimensional torus.
    The complexification of $K$ is therefore $G = \wozero{\mathbb C}^{r}$.
    Let $\grep : G \to \GL(\s V)$ be a representation of $G$ where $\s V$ admits of the decomposition $\s V = \bigoplus_{\lambda \in \Lambda} \s V_{\lambda}$ where $\Lambda = \mathbb Z^{r}$ be a finite subset $r$-tuples of integers and where $\s V_{\lambda}$ is the multiplicity space for the irreducible representation of $\wozero{\mathbb C}^{r}$ with weight $\lambda$.
    Let $v \in \s V$ be a vector with weight-space decomposition $v = \sum_{\lambda} v_{\lambda}$ (where $v_{\lambda} \in \s V_{\lambda}$) and define the support of $v$ to be
    \begin{equation}
        \Lambda_v \coloneqq \{ \lambda \in \mathbb Z^{r} \mid v_{\lambda} \neq 0 \}.
    \end{equation}
    Then the capacity of the vector $v$ vanishes if and only if the convex hull of $\Lambda_v \subset \mathbb Z^{r}$, viewed as a subset of $\mathbb R^{r}$, \textit{excludes} zero.
\end{prop}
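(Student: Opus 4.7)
The plan is to reduce the optimization defining $\capacity(v)$ to an optimization over $\mathfrak{a} = i\mathfrak{t} \cong \mathbb{R}^{r}$ using the $K$-invariance of the inner product, then exploit the explicit diagonal action of the torus on its weight decomposition. Specifically, by \cref{rem:norm_changing_subset}, the Kempf-Ness function is constant on $K$-cosets, so it suffices to optimize over $P = \exp(\mathfrak{a})$. Writing $z = \exp(t) \in P$ with $t = (t_1,\ldots,t_r) \in \mathbb{R}^{r}$ and decomposing $v = \sum_{\lambda \in \Lambda_v} v_{\lambda}$ into its weight components (which are orthogonal when the inner product is $K$-invariant), one obtains
\begin{equation}
    \|\grep(\exp(t))\, v\|^{2} \;=\; \sum_{\lambda \in \Lambda_v} e^{2\langle t, \lambda\rangle}\, \|v_{\lambda}\|^{2},
\end{equation}
so that $\capacity(v)^{2} = \inf_{t \in \mathbb{R}^{r}} \sum_{\lambda \in \Lambda_v} e^{2\langle t,\lambda\rangle}\|v_{\lambda}\|^{2}$. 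From here the problem becomes a purely convex-analytic statement about sums of real exponentials indexed by the finite set $\Lambda_v \subset \mathbb{R}^{r}$.

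For the forward direction, I would argue the contrapositive: assume $0 \notin \mathrm{conv}(\Lambda_v)$. By the separating hyperplane theorem applied to the closed convex set $\mathrm{conv}(\Lambda_v)$ and the point $0$, there exists $t^{*} \in \mathbb{R}^{r}$ such that $\langle t^{*}, \lambda\rangle < 0$ for every $\lambda \in \Lambda_v$ (this uses finiteness of $\Lambda_v$ to ensure the strict inequality holds uniformly). Rescaling $t^{*}$ by $s \to +\infty$ drives every exponent $2s\langle t^{*},\lambda\rangle$ to $-\infty$, so the sum above tends to $0$, forcing $\capacity(v) = 0$.

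For the converse, suppose $0 \in \mathrm{conv}(\Lambda_v)$, so there exist coefficients $c_{\lambda} \geq 0$ with $\sum_{\lambda} c_{\lambda} = 1$ and $\sum_{\lambda} c_{\lambda}\lambda = 0$. Restricting attention to the nonempty subset $S = \{\lambda \in \Lambda_v : c_{\lambda} > 0\}$, the weighted AM--GM inequality applied to the positive quantities $\|v_{\lambda}\|^{2} e^{2\langle t,\lambda\rangle}/c_{\lambda}$ yields
\begin{equation}
    \sum_{\lambda \in \Lambda_v} \|v_{\lambda}\|^{2} e^{2\langle t,\lambda\rangle} \;\geq\; \sum_{\lambda \in S} c_{\lambda} \cdot \frac{\|v_{\lambda}\|^{2}}{c_{\lambda}} e^{2\langle t,\lambda\rangle} \;\geq\; \prod_{\lambda \in S} \Bigl(\frac{\|v_{\lambda}\|^{2}}{c_{\lambda}}\Bigr)^{\! c_{\lambda}} \cdot e^{2\langle t,\, \sum_{\lambda} c_{\lambda}\lambda\rangle}.
\end{equation}
Since $\sum_{\lambda} c_{\lambda}\lambda = 0$, the exponential factor equals $1$, and the prefactor is a strictly positive constant independent of $t$. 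Taking the infimum over $t$ therefore gives $\capacity(v)^{2} > 0$.

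The main obstacle is not any single technical step but ensuring the convex-geometric reduction is carried out cleanly: one must verify that the orthogonal weight-space decomposition is compatible with the chosen $K$-invariant inner product (so that the cross terms genuinely vanish), and one must handle the boundary case in which $0$ lies on the boundary of $\mathrm{conv}(\Lambda_v)$ rather than its interior --- which is why the AM--GM step is applied only on the support $S$ of the convex representation, rather than on all of $\Lambda_v$. With these care points addressed, the equivalence follows, and can optionally be cross-checked against \cref{thm:kempf_ness_theorem} by observing that the gradient condition $\momap([v]) = 0$ for a torus reduces to $\sum_{\lambda} \|v_{\lambda}\|^{2} \lambda = 0$ at the norm-minimizer, which lies in $\mathrm{conv}(\Lambda_v)$.
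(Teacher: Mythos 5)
Your proof is correct and follows essentially the same route as the paper: reduce to optimizing over $\exp(i\mathfrak t) \cong \mathbb R^r$ via $K$-invariance, expand in orthogonal weight components to get $\capacity(v)^{2} = \inf_{t}\sum_{\lambda\in\Lambda_v} e^{2\langle t,\lambda\rangle}\|v_{\lambda}\|^{2}$, then separate the two directions using the separating-hyperplane theorem and the existence of a convex representation of $0$. The one place you differ is the converse: where you invoke the weighted AM--GM inequality to produce a strictly positive, $t$-independent lower bound, the paper argues more simply that if $\sum c_{\lambda}\lambda = 0$ then for every $t$ there is some $\lambda$ in the support of $(c_{\lambda})$ with $\langle t,\lambda\rangle\ge 0$, so the sum is bounded below by $\min_{\lambda\in S}\|v_{\lambda}\|^{2}>0$ directly. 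Both arguments are elementary and correct; AM--GM gives a slightly more explicit constant but the paper's pigeonhole-style observation is shorter and sidesteps the need to verify the AM--GM hypotheses.
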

\begin{proof}
    In this setting, the Lie algebra of $K = \U(1)^{\times r}$ is $\mathfrak k = (i \mathbb R)^{r}$ and thus $i \mathfrak k = \mathbb R^{r}$.
    Now a vector $v \in \s V$ has weight $\lambda \in \mathbb Z^{r}$ if for all $(z_1, \ldots, z_r) \in \wozero{\mathbb C}^{r}$,
    \begin{equation}
        \label{eq:rank_r_torus_rep}
        \grep(z_1, \ldots, z_r)v = z_1^{\lambda_1}\cdots z_r^{\lambda} v.
    \end{equation}
    The subspace consisting of all vectors in $\s V$ with weight $\lambda$ is the (possibly empty) weight space $\s V_{\lambda}$.
    Using the decomposition of $v$ into distinct weight spaces, the capacity squared of $v$ can be expressed as
    \begin{equation}
        \label{eq:capacity_squared_torus}
        \capacity^{2}(v) 
        = \inf_{z \in \wozero{\mathbb C}^{r}} \norm{\grep(z) v}^{2} 
        = \inf_{z \in \wozero{\mathbb C}^{r}} \sum_{\lambda \in \Lambda_v} \norm{z_1^{\lambda}\cdots z_r^{\lambda} v_{\lambda}}^{2} 
        = \inf_{x \in \mathbb R^{r}} \sum_{\lambda \in \Lambda_v} e^{\lambda \cdot x} \norm{v_{\lambda}}^{2}.
    \end{equation}
    where $x = (x_1, \ldots, x_r) \in \mathbb R^{r}$ has $i$th component $x_i = 2 \log \abs{z_i}$ and where 
    \begin{equation}
        \lambda \cdot x = \braket{\lambda, x}_{\mathbb R^{r}} = \sum_{i=1}^{r} w_{i} x_{i} = \sum_{i=1}^{r} 2w_{i} \log \abs{z_i}.
    \end{equation}
    Notice that the summand, $e^{\lambda \cdot x}\norm{v_{\lambda}}^{2}$, in \cref{eq:capacity_squared_torus} is strictly positive, $\norm{v_{\lambda}}^{2} > 0$, and therefore $\capacity(v) = 0$ if and only if the value of $\lambda \cdot x$ can be made arbitrarily negative by some $x \in \mathbb R^{r}$ simultaneously for all $\lambda \in \Lambda_v$.
    In other words, $\capacity(v) = 0$ if for all $L \in \mathbb R_{\geq 0}$, there exists an $x \in \mathbb R^{r}$ such that $\lambda \cdot x < - L$ holds for $\lambda \in \Lambda_v$.
    This condition, in turn, can be seen to be equivalent to the condition that the convex hull of $\Lambda_v$ in $\mathbb R^{r}$ does \textit{not} contain the origin.
    This is because if there exists a $c \in [0,1]^{\abs{\Omega(v)}}$ such that $\sum_{\lambda \in \Lambda_v} c_{\lambda} \lambda = 0 \in \mathbb Z^{r}$ and $\sum_{\lambda \in \Lambda_v} c_{\lambda} = 1$ (that is, the origin is in the convex hull of $\Lambda_v$), then one concludes for any $x \in \mathbb R^{r}$, that $\lambda \cdot x \geq 0$ holds for some $\lambda \in \Lambda_v$ because 
    \begin{equation}
        0 = \sum_{\lambda \in \Lambda_v} (c_{\lambda} \lambda) \cdot x = \sum_{\lambda \in \Lambda_v} c_{\lambda} (\lambda \cdot x)
    \end{equation}
    On the other hand, if there exists a hyperplane $h \in \mathbb R^{r}$ separating $\Lambda_v$ from the origin, i.e. $\lambda \cdot h > 0$ holds for all $\lambda \in \Lambda_v$, then taking $x = -s h$ for $s > 0$ arbitrarily large yields for all $\lambda \in \Lambda_v$,
    \begin{equation}
        \lim_{s \to \infty} \lambda \cdot x = \lim_{s \to \infty} -s \lambda \cdot h = - \infty.
    \end{equation}
    In summary, we have $\capacity(v) = 0$ if and only if the convex hull of $\Lambda_v \subset \mathbb Z^{r}$ (viewed as a subset of $\mathbb R^{r}$) excludes zero.
\end{proof}
\begin{rem}
    \label{rem:convex_combo_as_moment_map}
    Regarding the moment map of the representation of the $r$-dimensional torus, $\U(1)^{\times r}$, defined by \cref{eq:rank_r_torus_rep}, one has for all vectors $v \in \s V$ and associated rays $[v] \in \proj \s V$ and $x \in i \mathfrak k = \mathbb R^{r}$ the following relationship:
    \begin{equation}
        \momap([v])(x) = \sum_{\lambda \in \Lambda_v}\frac{\norm{v_{\lambda}}^{2}}{\norm{v}^{2}}\lambda \cdot x.
    \end{equation}
    In other words, the moment map evaluated on the ray $[v]$, $\momap([v])$, can be identified with an element in the convex hull of the weights $\Lambda_v$ supporting $v$. 
    Moreover, the particular convex weighting which identifies $\momap([v])$ is given by the coefficients $\norm{v_{\lambda}}^{2}/\norm{v}^{2}$.
\end{rem}
\begin{rem}
    Another foundational result in geometric invariant theory is that the image of the moment map $\momap: \proj \s H \to i\mathfrak k^{*}$ evaluated on the orbit closure $\overline{\grep(G)v}$ always intersects the closed fundamental Weyl chamber, $i\mathfrak t^{*}_+$, (viewed as a subset of $i\mathfrak k^{*}$ by using an inner product on $i\mathfrak k$ which is invariant under the adjoint action $\Ad$ of $K$ on $i\mathfrak k$) forms a convex polytope known as the \textit{moment polytope} of $v$, denoted by $\Delta_v \coloneqq \momap(\overline{\grep(G)v}) \cap i\mathfrak t^{*}_+$.
    In \cref{sec:moment_polytope}, we will return to the topic of moment polytopes.
\end{rem}

The purpose of the remainder of this section is to highlight the various compositional aspects and symmetries of moment maps and capacities associated to a representation $\grep : G \to \GL(\s H)$ of a complex reductive group $G = K_{\mathbb C}$ with maximal compact subgroup $K$.
To begin, notice that as the capacity of a vector, $v \in \s H$, is defined as an optimization over all of $G$, the action of $G$ on $v$ does not modify the capacity of $v$, i.e., $\capacity(\grep(g) v) = \capacity(v)$ for all $g \in G$.
Unlike the invariance of the capacity map, the moment map, $\momap([v])$, of the ray containing $v$ does vary with the action of $G$.
\begin{lem}
    \label{lem:momap_equivariance}
    The moment map $\momap : \proj \s H \to i\mathfrak k^{*}$ associated to the representation $\grep : G \to \GL(\s H)$ (where $G = K_{\mathbb C}$ and $K$ is a compact Lie group) is $K$-equivariant, meaning for all $\psi \in \proj\s H$, and $k \in K$ the moment map satisfies 
    \begin{equation}
        \Ad^{*}(k)(\momap(\psi)) = \momap(\grep(k) \cdot \psi)
    \end{equation}
    where $\Ad^{*} : G \to \GL(\mathfrak g^{*})$ is the dual of the adjoint representation.
\end{lem}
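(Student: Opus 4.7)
The plan is to verify the equivariance statement by directly unpacking both sides of the desired equation through their definitions and using the standard interplay between the group representation $\grep$ and its induced Lie algebra representation $\arep$. Concretely, the dual adjoint representation is defined so that for $\mu \in i\mathfrak k^{*}$, $k \in K$, and $X \in i\mathfrak k$,
\begin{equation}
    (\Ad^{*}(k) \mu)(X) = \mu(\Ad(k^{-1}) X),
\end{equation}
so it suffices to show that for every $\psi \in \proj \s H$, $k \in K$, and $X \in i\mathfrak k$,
\begin{equation}
    \momap(\psi)(\Ad(k^{-1}) X) = \momap(\grep(k)\cdot \psi)(X).
\end{equation}

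First, I would recall the standard fact (a consequence of differentiating the identity $\grep(k) \grep(e^{tX}) \grep(k)^{-1} = \grep(k e^{tX} k^{-1}) = \grep(\exp(t\Ad(k) X))$ at $t=0$) that the induced representation $\arep$ intertwines the adjoint action on $\mathfrak g$ with conjugation on $\End(\s H)$, namely
\begin{equation}
    \arep(\Ad(k) X) = \grep(k) \arep(X) \grep(k)^{-1}.
\end{equation}
Substituting $k \mapsto k^{-1}$, using \cref{defn:moment_map} on the left-hand side, and applying cyclicity of the trace yields
\begin{equation}
    \momap(\psi)(\Ad(k^{-1}) X) = \Tr\bigl(P_{\psi} \grep(k)^{-1} \arep(X) \grep(k)\bigr) = \Tr\bigl(\grep(k) P_{\psi} \grep(k)^{-1} \arep(X)\bigr).
\end{equation}

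Second, I would observe that since $k \in K$ and the inner product on $\s H$ is $K$-invariant, $\grep(k)$ is unitary, and therefore $\grep(k) P_{\psi} \grep(k)^{-1}$ is precisely the rank-one orthogonal projection onto the ray $\grep(k) \cdot \psi \in \proj \s H$; this is the defining property of the induced action of $\GL(\s H)$ on $\proj \s H$ mentioned after \cref{defn:projective_space}. Hence
\begin{equation}
    \Tr\bigl(\grep(k) P_{\psi} \grep(k)^{-1} \arep(X)\bigr) = \Tr\bigl(P_{\grep(k)\cdot\psi} \arep(X)\bigr) = \momap(\grep(k)\cdot \psi)(X),
\end{equation}
which establishes the claimed equivariance for arbitrary $X \in i\mathfrak k$.

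There is no real obstacle here: the only ingredients are the defining relation $\arep \circ \Ad(k) = \Ad(\grep(k)) \circ \arep$, the unitarity of $\grep(k)$ on $K$ (which guarantees the projector-to-projector passage), and the cyclicity of the trace. The subtlest bookkeeping step is choosing the sign convention on the dual adjoint action so that the identity comes out with $\grep(k) \cdot \psi$ rather than $\grep(k^{-1}) \cdot \psi$, but the convention stated in the lemma matches the standard coadjoint action and is what the computation above produces.
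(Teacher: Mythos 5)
Your proof is correct and follows essentially the same route as the paper's: both unpack $\Ad^{*}$ through its definition, invoke the intertwining identity $\arep(\Ad(k)X) = \grep(k)\arep(X)\grep(k)^{-1}$, and pass the unitary conjugation from $\arep(X)$ onto the projector $P_{\psi}$ (via cyclicity of the trace) to recognize $P_{\grep(k)\cdot\psi}$. The only cosmetic difference is that the paper writes the chain out in one display; the substance is identical.
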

\begin{proof}
    Recall from \cref{sec:rep_theory} that the adjoint representation $\Ad : G \to \GL(\mathfrak g)$ satisfies $\Ad(g)(X) = g X g^{-1}$ and is furthermore represented by $\arep(\Ad(g)(X)) = \arep(g X g^{-1}) = \grep(g) \arep(X) \grep(g^{-1})$.
    Since $k \in K$ is represented as a unitary, $\grep(k^{-1}) = \grep(k)^{*}$, we have for generic $X \in \mathfrak i \mathfrak k$, the following chain of equivalences
    \begin{align}
        \begin{split}
            \Ad^{*}(k)(\momap(\psi))(X) 
            &= \momap(\psi)(\Ad(k^{-1})(X)),\\
            &= \Tr(P_{\psi} \arep(\Ad(k^{-1})(X)) ), \\
            &= \Tr(P_{\psi} \arep(k^{-1}X k) ), \\
            &= \Tr(P_{\psi} \grep(k^{-1})\arep(X) \grep (k) ), \\
            &= \Tr(P_{\grep(k) \cdot \psi}\arep(X)), \\
            &= \momap(\grep(k) \cdot \psi)(X).
        \end{split}
    \end{align}
\end{proof}

\subsection{Composition of capacities \& moment maps}

\begin{rem}
    The following collection of results are concerned with the moment map or capacity map of tensor-products of multiple distinct representations, as defined in \cref{sec:composing_representations}.
    In order to distinguish between the moment maps or capacity map associated to distinct representations, a subscript will be added wherever appropriate to avoid confusion.
    Specifically, the moment map and capacity map associated to the representation $\grep : G \to \GL(\s H)$ will be respectively written as
    \begin{align}
        \begin{split}
            \momap_{\grep} &: \proj \s H \to i\mathfrak k^{*}, \\
            \capacity_{\grep} &: \s H \to \mathbb R_{\geq 0}
        \end{split}
    \end{align}
    In summary, it will be shown that moment maps are additive along tensor products while capacity maps are supermultiplicative.
\end{rem}

\begin{lem}
    \label{lem:cap_sup_mult}
    Let $\grep_1 : G \to \GL(\s V_1)$ and $\grep_2 : G \to \GL(\s V_2)$ be representations and let $v_1 \otimes v_2 \in \s V_1 \otimes \s V_2$. Then
    \begin{equation}
        \capacity_{\grep_1 \otimes \grep_2}(v_1 \otimes v_2) \geq \capacity_{\grep_1}(v_1)\capacity_{\grep_2}(v_2) = \capacity_{\grep_1 \boxtimes \grep_2}(v_1 \otimes v_2).
    \end{equation}
\end{lem}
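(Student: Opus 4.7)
The plan is to exploit the structural relationship between the internal and external tensor product representations noted in \cref{rem:int_ext_copy}, together with the multiplicativity of the norm on a tensor product of Hilbert spaces. Recall that for $v_1 \in \s V_1$ and $v_2 \in \s V_2$, the induced inner product on $\s V_1 \otimes \s V_2$ satisfies $\norm{v_1 \otimes v_2} = \norm{v_1}\norm{v_2}$.

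First I would address the equality on the right-hand side. By \cref{defn:external_tensor_product_rep}, the external tensor product representation acts as $(\grep_1 \boxtimes \grep_2)(g_1,g_2) = \grep_1(g_1) \otimes \grep_2(g_2)$ on $\s V_1 \otimes \s V_2$. Applying \cref{defn:capacity} and the multiplicativity of the norm gives
\begin{equation}
    \capacity_{\grep_1 \boxtimes \grep_2}(v_1 \otimes v_2) = \inf_{(g_1,g_2) \in G \times G} \norm{\grep_1(g_1) v_1} \, \norm{\grep_2(g_2) v_2}.
\end{equation}
Since the two factors depend on independent variables and are non-negative, the infimum factorizes into the product of two separate infima, yielding $\capacity_{\grep_1}(v_1) \capacity_{\grep_2}(v_2)$ by definition.

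Next I would establish the inequality. Using $\grep_1 \otimes \grep_2 = (\grep_1 \boxtimes \grep_2) \circ \Delta$ where $\Delta : G \to G \times G$ is the diagonal embedding $g \mapsto (g,g)$ (see \cref{rem:int_ext_copy}), one obtains
\begin{equation}
    \capacity_{\grep_1 \otimes \grep_2}(v_1 \otimes v_2) = \inf_{g \in G} \norm{\grep_1(g) v_1} \, \norm{\grep_2(g) v_2}.
\end{equation}
Since the diagonal $\Delta(G)$ is a subset of $G \times G$, the infimum over this restricted domain is bounded below by the infimum over the full domain, which is precisely $\capacity_{\grep_1 \boxtimes \grep_2}(v_1 \otimes v_2)$. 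Combining this with the factorization above yields the claimed chain of (in)equalities.

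This is essentially a bookkeeping argument, and I do not anticipate any technical obstacle: the only subtlety is being careful that the norm on $\s V_1 \otimes \s V_2$ is indeed the tensor product of the norms on the factors (which is the canonical choice when viewing $\s V_1 \otimes \s V_2$ as a Hilbert space), and that supermultiplicativity, rather than equality, is the best one can say in general because when the two representations are of the same group, the diagonal action couples the two factors, potentially preventing the optima on each factor from being attained simultaneously.
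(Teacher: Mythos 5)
Your proof is correct and follows the same route as the paper: both use the norm multiplicativity on the tensor-product Hilbert space, the diagonal embedding $\Delta : G \to G \times G$ relating internal and external tensor products, and the observation that restricting the infimum to the diagonal can only increase its value. Your version is slightly more explicit in separating the equality and the inequality, but the substance is identical.
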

\begin{proof}
    The proof follows from the injectivity of the copying map $\Delta : G \to G \times G$ sending $g$ to $\Delta(g) = (g,g)$ which connects the internal and external tensor products of $\grep_1$ and $\grep_2$ first encountered in \cref{rem:int_ext_copy}.
    \begin{align}
        &\capacity_{\grep_1 \otimes \grep_2}(v_1 \otimes v_2)
        =\inf_{g\in G}\norm{(\grep_1(g) \otimes \grep_2(g))(v_1 \otimes v_2)}\\
        &\quad\geq\inf_{\substack{g_1\in G\\ g_2 \in G}}\norm{\grep_1(g_1)v_1}\norm{\grep_2(g_2)v_2}
        =\capacity_{\grep_1}(v_1)\capacity_{\grep_2}(v_2).
    \end{align}
\end{proof}
\begin{lem}
    \label{lem:moment_map_ext}
    Let $\grep_1 : G_1 \to \GL(\s V_1)$ and $\grep_2 : G_2 \to \GL(\s V_2)$ be representations and let $v_1 \otimes v_2 \in \s V_1 \otimes \s V_2$. Then the moment map for the external tensor product representation $\grep_1 \boxtimes \grep_2$ is of the form $\momap_{\grep_1 \boxtimes \grep_2} : \proj(\s V_1 \otimes \s V_2) \to (i\mathfrak k_1 \oplus i \mathfrak k_2)^{*}$ where
    \begin{equation}
        \momap_{\grep_1 \boxtimes \grep_2}([v_1 \otimes v_2]) = \momap_{\grep_1}([v_1])\oplus \momap_{\grep_2}([v_2]).
    \end{equation}
\end{lem}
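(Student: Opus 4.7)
The plan is to carry out a direct computation from the definition of the moment map, exploiting the additive splitting of the Lie algebra of a product group. Since the Lie algebra of $G_1 \times G_2$ is $\mathfrak g_1 \oplus \mathfrak g_2$ and its maximal compact is $K_1 \times K_2$, the moment map associated to $\grep_1 \boxtimes \grep_2$ indeed takes values in $(i\mathfrak k_1 \oplus i\mathfrak k_2)^{*}$, which we canonically identify with $(i\mathfrak k_1)^{*} \oplus (i\mathfrak k_2)^{*}$; this justifies the signature stated in the lemma and makes the direct-sum symbol on the right-hand side well-typed.

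The first concrete step is to identify the induced Lie algebra representation. Applying \cref{lem:induced_lie_alg_rep} to the curve $t \mapsto \grep_1(\exp(tX_1)) \otimes \grep_2(\exp(tX_2))$ at $t=0$ and using the Leibniz rule gives
\begin{equation}
    (\arep_1 \boxtimes \arep_2)(X_1, X_2) = \arep_1(X_1) \otimes \ident_{\s V_2} + \ident_{\s V_1} \otimes \arep_2(X_2),
\end{equation}
for every $(X_1, X_2) \in \mathfrak g_1 \oplus \mathfrak g_2$. In particular, restricting to $(X_1, X_2) \in i\mathfrak k_1 \oplus i\mathfrak k_2$ describes the infinitesimal generators whose pairings with $[v_1 \otimes v_2]$ compute the moment map via \cref{defn:moment_map}.

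The second step is to pick $v_1 \otimes v_2$ as a representative of the ray $[v_1 \otimes v_2]$, which is well-defined precisely when both $v_1$ and $v_2$ are non-zero (otherwise the tensor vanishes and there is no ray). The inner product on the tensor product factors as $\braket{v_1 \otimes v_2, v_1 \otimes v_2} = \braket{v_1, v_1}\braket{v_2, v_2}$, and because the two summands of $(\arep_1 \boxtimes \arep_2)(X_1, X_2)$ act on disjoint tensor factors, one obtains
\begin{align}
    \momap_{\grep_1 \boxtimes \grep_2}([v_1 \otimes v_2])(X_1, X_2)
    &= \frac{\braket{v_1, \arep_1(X_1) v_1}\braket{v_2, v_2} + \braket{v_1, v_1}\braket{v_2, \arep_2(X_2) v_2}}{\braket{v_1, v_1}\braket{v_2, v_2}} \\
    &= \momap_{\grep_1}([v_1])(X_1) + \momap_{\grep_2}([v_2])(X_2),
\end{align}
which is exactly the evaluation of $\momap_{\grep_1}([v_1]) \oplus \momap_{\grep_2}([v_2])$ at $(X_1, X_2)$.

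I do not anticipate a genuine obstacle: the result is forced by the multiplicativity of the inner product on tensor products together with the Leibniz-rule splitting of the tensor-product Lie algebra action. The only bookkeeping care needed is the canonical identification between $(i\mathfrak k_1 \oplus i\mathfrak k_2)^{*}$ and $(i\mathfrak k_1)^{*} \oplus (i\mathfrak k_2)^{*}$, and the non-degeneracy assumption $v_1, v_2 \neq 0$ implicit in writing $[v_1 \otimes v_2]$.
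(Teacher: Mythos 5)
Your proof is correct and follows essentially the same route as the paper: identify the Lie algebra of $G_1 \times G_2$ as $\mathfrak g_1 \oplus \mathfrak g_2$, derive the Leibniz-rule form of the induced Lie algebra representation $(\arep_1 \boxtimes \arep_2)(X_1, X_2) = \arep_1(X_1) \otimes \ident + \ident \otimes \arep_2(X_2)$, and substitute into the definition of the moment map. You simply spell out the final inner-product computation and the $(i\mathfrak k_1 \oplus i\mathfrak k_2)^* \cong (i\mathfrak k_1)^* \oplus (i\mathfrak k_2)^*$ identification that the paper leaves implicit.
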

\begin{proof}
    If $\mathfrak g_1$ and $\mathfrak g_2$ are the Lie algebras of $G_1$ and $G_2$, then the Lie algebra of $G_1 \times G_2$ is $\mathfrak g_1 \oplus \mathfrak g_2$. Moreover, if $\arep_1$ and $\arep_2$ are the respective induced Lie algebra representations, then $\grep_1 \boxtimes \grep_2$ induces the Lie algebra representation $\arep_1 \boxtimes \arep_2 : \mathfrak g_1 \oplus \mathfrak g_2 \to \mathfrak{gl}(\s V_1 \otimes \s V_2)$ which for $X_1 \oplus X_2 \in \mathfrak g_1 \oplus \mathfrak g_2$ is defined by
    \begin{equation}
        (\arep_1 \boxtimes \arep_2)(X_1 \oplus X_2) = \arep_1(X_1) \otimes I_{\s V_2} + I_{\s V_1} \otimes \arep_2(X_2).
    \end{equation}
    Applying this result to the definition of the moment map in \cref{defn:moment_map} when $v = v_1 \otimes v_2$ yields the claim.
\end{proof}
\begin{lem}
    \label{lem:moment_map_int}
    Let $\grep_1 : G \to \GL(\s V_1)$ and $\grep_2 : G \to \GL(\s V_2)$ be representations and let $v_1 \otimes v_2 \in \s V_1 \otimes \s V_2$. Then the moment map for the internal tensor product representation $\grep_1 \otimes \grep_2 : G \to \GL(\s V_1 \otimes \s V_2)$ is of the form $\momap_{\grep_1 \otimes \grep_2} : \proj (\s V_1 \otimes \s V_2) \to i\mathfrak k^{*}$ where
    \begin{equation}
        \momap_{\grep_1 \otimes \grep_2}([v_1 \otimes v_2]) = \momap_{\grep_1}([v_1])+\momap_{\grep_2}([v_2]).
    \end{equation}
\end{lem}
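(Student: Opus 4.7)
The plan is to reduce the internal tensor product to the external tensor product via the diagonal embedding and then invoke the preceding Lemma on external tensor products, or equivalently, to compute the moment map directly from its definition using the Leibniz-type structure of the induced Lie algebra representation. Since the computation is short, the direct route is probably cleanest.

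First I would pass from the group representation to its induced Lie algebra representation. Recall from Remark~\ref{rem:int_ext_copy} that $\grep_1 \otimes \grep_2 = (\grep_1 \boxtimes \grep_2) \circ \Delta$, where $\Delta : G \to G \times G$ is the diagonal embedding $g \mapsto (g,g)$. Differentiating at the identity yields the Lie algebra diagonal $X \mapsto (X, X)$, so composing with the formula for $\arep_1 \boxtimes \arep_2$ recorded in the proof of Lemma~\ref{lem:moment_map_ext} gives, for every $X \in \mathfrak{g}$,
\begin{equation}
    (\arep_1 \otimes \arep_2)(X) = \arep_1(X) \otimes I_{\s V_2} + I_{\s V_1} \otimes \arep_2(X).
\end{equation}

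Next I would substitute a representative vector $v_1 \otimes v_2 \in \wozero{(\s V_1 \otimes \s V_2)}$ of the ray $[v_1 \otimes v_2]$ into the definition of the moment map (Definition~\ref{defn:moment_map}), restricting the argument to $X \in i\mathfrak k$. Using multiplicativity of the inner product across the tensor product, namely $\braket{v_1 \otimes v_2, v_1 \otimes v_2} = \braket{v_1,v_1}\braket{v_2,v_2}$, the numerator splits into two terms:
\begin{align}
    \braket{v_1 \otimes v_2, (\arep_1 \otimes \arep_2)(X)(v_1 \otimes v_2)}
    &= \braket{v_1, \arep_1(X) v_1}\braket{v_2, v_2} \\
    &\quad{}+ \braket{v_1, v_1}\braket{v_2, \arep_2(X) v_2}.
\end{align}
Dividing by $\braket{v_1,v_1}\braket{v_2,v_2}$ and recognising each quotient as the appropriate factor moment map evaluated on the corresponding ray yields precisely
\begin{equation}
    \momap_{\grep_1 \otimes \grep_2}([v_1 \otimes v_2])(X) = \momap_{\grep_1}([v_1])(X) + \momap_{\grep_2}([v_2])(X),
\end{equation}
and since $X \in i\mathfrak k$ was arbitrary, the equality of linear functionals in $i\mathfrak k^{*}$ follows.

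There is no substantive obstacle here; the only subtlety is making sure the argument is independent of the chosen representative vector in the ray, which is automatic since both sides are ratios homogeneous of degree zero in each of $v_1$ and $v_2$. A shorter alternative would be to invoke Lemma~\ref{lem:moment_map_ext} and observe that the diagonal embedding $\Delta$ induces, on the dual of the Lie algebra, the restriction map $(i\mathfrak k \oplus i\mathfrak k)^{*} \to i\mathfrak k^{*}$ sending $\mu_1 \oplus \mu_2$ to $\mu_1 + \mu_2$, so that applying this restriction to $\momap_{\grep_1 \boxtimes \grep_2}([v_1 \otimes v_2]) = \momap_{\grep_1}([v_1]) \oplus \momap_{\grep_2}([v_2])$ immediately produces the claimed identity.
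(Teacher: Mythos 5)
Your proof is correct and follows the same route as the paper: derive the Leibniz-type formula $(\arep_1 \otimes \arep_2)(X) = \arep_1(X) \otimes I + I \otimes \arep_2(X)$ for the induced Lie algebra representation, then substitute $v_1 \otimes v_2$ into Definition~\ref{defn:moment_map} and split the inner product. Your closing remark about the alternative via $\Delta$ and Lemma~\ref{lem:moment_map_ext} is a nice observation but is not needed; the direct computation is exactly what the paper does.
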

\begin{proof}
    If $\mathfrak g$ is the Lie algebra of $G$, then the representation of $\mathfrak g$ induced by the internal tensor product representation $\grep_1 \otimes \grep_2$ of $G$ on $\s V_1 \otimes \s V_2$ is simply $\arep_1 \otimes \arep_2 : \mathfrak g \to \mathfrak{gl}(\s V_1 \otimes \s V_2)$ defined for $X \in \mathfrak g$ by
    \begin{equation}
        (\arep_1 \otimes \arep_2)(X) = \arep_1(X) \otimes I_{V_2} + I_{V_1} \otimes \arep_2(X).
    \end{equation}
    Applying this to the definition of the moment map in \cref{defn:moment_map} when $v = v_1 \otimes v_2$ yields the claim.
\end{proof}

\section{Occasionality \& semistability}

\subsection{Cumulants of quantum observables}

This section considers the moment generating function associated to the random variable formed by a quantum state and quantum observable pair.
The main result, \cref{cor:cumulant_ineqs}, is the derivation of an error-bound on the second order expansion of this moment generating function.

\begin{lem}
    \label{lem:first_three_cumulants}
    Let $X$ be a self-adjoint operator, $X^{*} = X$, acting on a complex finite-dimensional Hilbert space $\s H$.
    Let $\End(\s H)$ be a $C^*$-algebra of linear maps on $\s H$ and let $\varphi : \End(\s H) \to \mathbb C$ be a state.
    Let $R$ be the discrete random variable with distribution $\mathrm{Prob}(R = x) = \varphi(P_x)$ where $P_x \in \End(\s H)$ projects onto the eigenspace of $X$ with eigenvalue $x \in \mathbb R$ and let $M : \mathbb R \to \mathbb R$ be its moment generating function:
    \begin{equation}
        M(t) = \mathbb E(\exp(tR)) = {\sum}_{x} e^{t x} \varphi(P_x) = \varphi(\exp(t X))
    \end{equation}
    Whenever $t \in \mathbb R$ is such that $\varphi(\exp(t X)) \neq 0$ let $\varphi_t$ denote the state sending any $Y \in \End(\s H)$ to
    \begin{equation}
        \varphi_t(Y) \coloneqq \frac{\varphi(e^{t \frac{X}{2}} Y e^{t \frac{X}{2}})}{\varphi(e^{tX})}.
    \end{equation}
    Then the first three derivatives of the cumulant generating function $K(t) = \log M(t) = \log \varphi (\exp(tX))$ are:
    \begin{align}
        K^{(1)} &= \varphi_t(X), \\
        K^{(2)} &= \varphi_t(X^2) - \varphi_t(X)^2 = \varphi_t((X - \varphi_t(X))^{2}), \\
        K^{(3)} &= \varphi_t(X^3) - 3\varphi_t(X^2)\varphi_t(X) + 2 \varphi_t(X)^3 = \varphi_t((X - \varphi_t(X))^{3}). \\
    \end{align}
\end{lem}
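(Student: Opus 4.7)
The plan is to compute the derivatives of $M(t) = \varphi(\exp(tX))$ directly and then recognize the resulting expressions as $\varphi_t$-expectations. The key simplifying observation is that $X$ commutes with $e^{tX}$, so the symmetrized form $e^{tX/2} X^n e^{tX/2}$ is equal to $X^n e^{tX}$. Hence for every $n \in \mathbb N$,
\begin{equation}
    M^{(n)}(t) = \varphi(X^n e^{tX}) = \varphi(e^{tX/2} X^n e^{tX/2}),
\end{equation}
so that whenever $M(t) \neq 0$ one has the clean identification $M^{(n)}(t)/M(t) = \varphi_t(X^n)$.

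First I would compute $K'(t) = M'(t)/M(t)$, which is immediately $\varphi_t(X)$. Next, applying the quotient rule, $K''(t) = M''(t)/M(t) - (M'(t)/M(t))^2 = \varphi_t(X^2) - \varphi_t(X)^2$. Expanding $(X-\varphi_t(X))^2 = X^2 - 2\varphi_t(X) X + \varphi_t(X)^2 \cdot 1_{\s A}$ and applying $\varphi_t$ (which is linear and normalized) gives the stated centered form. This handles the first two lines.

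For the third derivative, the cleanest route is to establish a single auxiliary identity, valid for any $Y \in \End(\s H)$:
\begin{equation}
    \frac{d}{dt}\varphi_t(Y) = \varphi_t(YX) - \varphi_t(Y)\varphi_t(X),
\end{equation}
which follows from the quotient rule applied to $\varphi_t(Y) = \varphi(Y e^{tX})/\varphi(e^{tX})$, again using commutativity of $Y e^{tX}$ with $X$ on the right. Applying this identity to $Y = X^2$ gives $\frac{d}{dt}\varphi_t(X^2) = \varphi_t(X^3) - \varphi_t(X^2)\varphi_t(X)$, and applying it together with the chain rule to $\varphi_t(X)^2$ gives $2\varphi_t(X)\bigl(\varphi_t(X^2)-\varphi_t(X)^2\bigr)$. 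Subtracting yields $K'''(t) = \varphi_t(X^3) - 3\varphi_t(X^2)\varphi_t(X) + 2\varphi_t(X)^3$, and expanding $(X-\varphi_t(X))^3$ and applying $\varphi_t$ verifies the centered form.

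There is no genuine obstacle here beyond bookkeeping; the entire argument rests on the commutativity $[X, e^{tX}] = 0$, which is what allows both the symmetrization of $M^{(n)}(t)$ and the clean derivative rule for $\varphi_t$. The only point worth being careful about is that the formulas are only asserted when $\varphi(e^{tX}) \neq 0$, so $\varphi_t$ is actually well-defined; self-adjointness of $X$ together with positivity of $\varphi$ guarantees $\varphi(e^{tX}) > 0$ for all real $t$ since $e^{tX} = (e^{tX/2})^*(e^{tX/2})$.
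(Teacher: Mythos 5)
Your proposal is correct and takes essentially the same route as the paper: compute derivatives of $M$, divide by $M$, and use the commutativity $[X,e^{tX}]=0$ to recognize $\varphi(X^p e^{tX})/\varphi(e^{tX})$ as $\varphi_t(X^p)$. Your auxiliary product rule $\partial_t\varphi_t(Y)=\varphi_t(YX)-\varphi_t(Y)\varphi_t(X)$ is a slightly cleaner way to organize the third-derivative bookkeeping than the paper's direct quotient-rule expansion, but one small caveat: as stated it is not valid for \emph{all} $Y\in\End(\s H)$; differentiating $\varphi_t(Y)=\varphi(e^{tX/2}Ye^{tX/2})/\varphi(e^{tX})$ for general $Y$ yields the symmetrized form $\varphi_t\!\left(\tfrac{XY+YX}{2}\right)-\varphi_t(Y)\varphi_t(X)$, which only collapses to your formula when $Y$ commutes with $X$. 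Since you only ever apply the identity to powers of $X$, the argument goes through unharmed, but the ``for any $Y$'' claim should be dropped.
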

\begin{proof}
    Throughout this proof, we make use of the assumption that $M(t) = \varphi(e^{tX}) \neq 0$ for all $t$ in a neighborhood of zero so that $K(t) = \log M(t)$ is well-defined.
    \begin{align}
        K^{(1)}(t) &= \frac{\varphi(Xe^{tX})}{\varphi(e^{tX})} \\
        K^{(2)}(t) &= \frac{\varphi(X^2e^{tX})\varphi(e^{tX}) - \varphi(Xe^{tX})^2}{\varphi(e^{tX})^2} \\
        \begin{split}
        K^{(3)}(t) 
        &= \frac{(\varphi(X^3e^{tX})\varphi(e^{tX}) - \varphi(X^2e^{tX})\varphi(Xe^{tX}))\varphi(e^{tX})^2}{\varphi(e^{tX})^4} \\
        &\qquad - \frac{2(\varphi(X^2e^{tX})\varphi(e^{tX}) - \varphi(Xe^{tX})^2)\varphi(e^{tX})\varphi(Xe^{tX})}{\varphi(e^{tX})^4}
        \end{split} \\
        &= \frac{\varphi(X^3e^{tX})\varphi(e^{tX})^2 - 3\varphi(X^2e^{tX})\varphi(Xe^{tX})\varphi(e^{tX}) + 2\varphi(Xe^{tX})^{3}}{\varphi(e^{tX})^3}
    \end{align}
    The claim follows from noting that $X^{p}e^{tX} = e^{t\frac{X}{2}}X^{p}e^{t\frac{X}{2}}$ holds for any exponent $p \in \mathbb N$ because $[e^{tX}, X] = 0$ and therefore 
    \begin{equation}
        \frac{\varphi(X^pe^{tX})}{\varphi(e^{tX})} = \varphi_{t}(X^{p}).
    \end{equation}
\end{proof}
\begin{cor}
    \label{cor:cumulant_ineqs}
    Let everything be as defined by \cref{lem:first_three_cumulants}.
    The following inequalities hold for any $t \in \mathbb R$ where $\varphi(e^{tX}) \neq 0$ so that $K(t)$ is well-defined.
    \begin{equation}
        0 \leq \abs{K^{(3)}(t)} \leq 2 \norm{X}_{\mathrm{op}} K^{(2)}(t), \qquad 0 \leq K^{(2)}(t) \leq 4 \norm{X}_{\mathrm{op}}^{2}.
    \end{equation}
\end{cor}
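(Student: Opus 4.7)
The plan is to reduce everything to operator inequalities involving the self-adjoint operator $Y \coloneqq X - \varphi_t(X)\cdot \ident_{\s H}$, which is the natural ``centered'' version of $X$ with respect to the twisted state $\varphi_t$. From \cref{lem:first_three_cumulants} we have the identifications
\begin{equation}
    K^{(2)}(t) = \varphi_t(Y^2), \qquad K^{(3)}(t) = \varphi_t(Y^3),
\end{equation}
so all four inequalities in the corollary can be extracted from purely algebraic properties of $Y$ together with the positivity and normalization of the state $\varphi_t$ (which is itself a state by construction whenever $\varphi(e^{tX}) \neq 0$).

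First I would establish the two-sided bound on the norm of $Y$. Since $\varphi_t$ is a state on $\End(\s H)$, \cref{eq:state_op_bound} yields $\abs{\varphi_t(X)} \leq \norm{X}_{\mathrm{op}}$, and hence by the triangle inequality
\begin{equation}
    \norm{Y}_{\mathrm{op}} \leq \norm{X}_{\mathrm{op}} + \abs{\varphi_t(X)} \leq 2 \norm{X}_{\mathrm{op}}.
\end{equation}
The two bounds on $K^{(2)}(t)$ then follow immediately: positivity from $Y^2 = Y^*Y \geq 0$ combined with positivity of $\varphi_t$, and the upper bound from $\varphi_t(Y^2) \leq \norm{Y^2}_{\mathrm{op}} = \norm{Y}_{\mathrm{op}}^2 \leq 4\norm{X}_{\mathrm{op}}^2$.

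The bound on $\abs{K^{(3)}(t)}$ is the only non-routine step, and the key observation is that $Y$, $Y^2$, and $\norm{Y}_{\mathrm{op}}\ident_{\s H} \pm Y$ all commute pairwise. Because $\norm{Y}_{\mathrm{op}}\ident_{\s H} \pm Y \geq 0$ (from the spectral theorem applied to the self-adjoint $Y$) and $Y^2 \geq 0$, and because a product of two commuting positive operators is positive, one obtains the operator inequalities
\begin{equation}
    -\norm{Y}_{\mathrm{op}} Y^2 \leq Y^3 \leq \norm{Y}_{\mathrm{op}} Y^2.
\end{equation}
Applying the state $\varphi_t$ and using its monotonicity on self-adjoint operators gives $\abs{\varphi_t(Y^3)} \leq \norm{Y}_{\mathrm{op}} \varphi_t(Y^2)$, which together with $\norm{Y}_{\mathrm{op}} \leq 2\norm{X}_{\mathrm{op}}$ produces the claimed inequality $\abs{K^{(3)}(t)} \leq 2 \norm{X}_{\mathrm{op}} K^{(2)}(t)$. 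The main (minor) obstacle is noticing that the commutativity of $Y$ with $Y^2$ is essential here, since in general the product of two positive operators fails to be positive; once this is in hand the argument is essentially a one-line application of functional calculus.
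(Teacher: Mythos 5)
Your proof is correct, and it actually works where the paper's stated proof does not. The paper centers $X$ to $S_t = X - \varphi_t(X)$ (your $Y$) and obtains $\norm{S_t}_{\mathrm{op}} \leq 2\norm{X}_{\mathrm{op}}$ and the two-sided bound on $K^{(2)}$ exactly as you do. For $K^{(3)}$, however, the paper asserts the intermediate inequality $\abs{\varphi_t(S_t^3)} \leq \varphi_t(S_t^2)^{3/2}$, citing (self-referentially, evidently a typo for \cref{cor:cs_ineq_states}) a Cauchy--Schwarz argument; but this inequality is false in general. For instance $S_t = \mathrm{diag}(3,-1)$ together with $\varphi_t(\cdot) = \Tr(\mathrm{diag}(\tfrac14,\tfrac34)\,\cdot\,)$ gives a centered self-adjoint $S_t$ with $\varphi_t(S_t^3) = 6$ but $\varphi_t(S_t^2)^{3/2} = 3\sqrt 3 < 6$. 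What Cauchy--Schwarz actually delivers is $\abs{\varphi_t(S_t^3)}^2 \leq \varphi_t(S_t^2)\varphi_t(S_t^4)$, after which one must still bound $\varphi_t(S_t^4) \leq \norm{S_t}^2_{\mathrm{op}}\,\varphi_t(S_t^2)$ to reach the stated estimate --- presumably the intended argument. Your route is different and cleaner: the operator inequality $-\norm{Y}_{\mathrm{op}} Y^2 \leq Y^3 \leq \norm{Y}_{\mathrm{op}} Y^2$ (commuting positive factors, as you note) followed by positivity of $\varphi_t$ gives $\abs{K^{(3)}(t)} \leq \norm{Y}_{\mathrm{op}} K^{(2)}(t) \leq 2\norm{X}_{\mathrm{op}} K^{(2)}(t)$ in one step, with no appeal to Cauchy--Schwarz at all.
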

\begin{proof}
    Let $S_t = X - \varphi_t(X)$ be the self-adjoint operator, $S_t^* = S_t$, obtained by shifting $X$ about its mean $\varphi_t(X)$.
    By subadditivity of the operator norm, we have
    \begin{equation}
        \norm{S_t}_{\mathrm{op}} \leq \norm{X}_{\mathrm{op}} + \varphi_t(X) \norm{1}_{\mathrm{op}} \leq 2 \norm{X}_{\mathrm{op}}.
    \end{equation}
    Furthermore, the first few moments of $S_t$ with respect to $\varphi_t$ are
    \begin{equation}
        \varphi_t(S_t) = 0, \qquad \varphi_t(S_t^2) = K^{(2)}, \qquad \varphi_t(S_t^3) = K^{(3)}.
    \end{equation}
    Then by \cref{eq:state_op_bound} and positivity of $\varphi_t$,
    \begin{equation}
        0 \leq \varphi_t(S_t^2) \leq \norm{S_t^2}_{\mathrm{op}} \leq 4 \norm{X}_{\mathrm{op}}^{2}
    \end{equation}
    Furthermore by \cref{cor:cumulant_ineqs},
    \begin{equation}
        \abs{\varphi_t(S_t^3)} \leq \varphi_t(S_t^2)^{\frac{3}{2}} \leq 2 \norm{X}_{\mathrm{op}} K^{(2)}.
    \end{equation}
\end{proof}

\begin{rem}
    Before stating and proving the next result, let $\epsilon > 0$ be small and let $X \in \mathrm{End}(\s H)$ be self-adjoint.
    For any state, $\varphi : \mathrm{End}(\s H) \to \mathbb C$, one can make the following approximation:
    \begin{equation}
        \varphi(\exp(\epsilon X)) \approx 1 + \epsilon \varphi(X) + \frac{\epsilon^2}{2} \varphi(X^2).
    \end{equation}
    If additionally $\varphi(X) = 0$, one obtains the approximation $\varphi(\exp(\epsilon X)) \approx 1 + \frac{\epsilon^2}{2} \varphi(X^2)$ and therefore up to terms of order $\epsilon^{3}$, we have
    \begin{equation}
        \log\varphi(\exp(\epsilon X)) \approx \frac{1}{2}\epsilon^2\varphi(X^2).
    \end{equation}
    The following result quantifies the error introduced by this approximation.
\end{rem}

\begin{lem}
    \label{lem:CLT}
    Let $X^{*} = X$ be a self-adjoint operator and let $\varphi$ be a state.
    Then the moment generating function satisfies
    \begin{equation}
        \varphi(\exp(t X)) = \exp\left[\varphi(X) t + (\varphi(X^2) - \varphi(X)^2)\frac{t^2}{2} + r(t)\right]
    \end{equation}
    where $r : \mathbb R \to \mathbb C$ is a remainder term that satisfies for all $t \in \mathbb R$,
    \begin{equation}
        \abs{r(t)} \leq \frac{4}{3} \norm{X}_{\mathrm{op}}^{3} t^3.
    \end{equation}
    Moreover when $\varphi(X) = 0$,
    \begin{equation}
        \lim_{n \to \infty} \varphi(\exp\left(\frac{X}{\sqrt{n}}\right))^{n} = \exp\left[\frac{\varphi(X^2)}{2}\right]
    \end{equation}
\end{lem}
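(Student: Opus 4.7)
The plan is to prove the first identity by applying Taylor's theorem with remainder to the cumulant generating function $K(t) = \log \varphi(\exp(tX))$ already analyzed in \cref{lem:first_three_cumulants} and \cref{cor:cumulant_ineqs}. Since $X$ is self-adjoint, $\exp(tX)$ is positive definite for every $t \in \mathbb R$, so positivity of the state $\varphi$ guarantees $\varphi(\exp(tX)) > 0$ and thus $K(t)$ is smooth and well-defined on all of $\mathbb R$. Evaluating the formulas of \cref{lem:first_three_cumulants} at $t=0$ yields $K(0)=0$, $K^{(1)}(0) = \varphi(X)$, and $K^{(2)}(0) = \varphi(X^2) - \varphi(X)^2$, which are precisely the linear and quadratic coefficients in the desired expression.

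Next, I would define the remainder
\begin{equation}
    r(t) \coloneqq K(t) - \varphi(X) t - \bigl(\varphi(X^2) - \varphi(X)^2\bigr)\frac{t^2}{2},
\end{equation}
and use the integral form of Taylor's theorem to write $r(t) = \int_0^t K^{(3)}(s) \frac{(t-s)^2}{2}\diff s$. The bound on $K^{(3)}$ from \cref{cor:cumulant_ineqs}, combined with the uniform bound $K^{(2)}(s) \leq 4\norm{X}_{\mathrm{op}}^2$, gives $\abs{K^{(3)}(s)} \leq 8 \norm{X}_{\mathrm{op}}^3$. Integrating yields $\abs{r(t)} \leq 8 \norm{X}_{\mathrm{op}}^{3} \cdot \abs{t}^3/6 = \tfrac{4}{3}\norm{X}_{\mathrm{op}}^{3} \abs{t}^3$, matching the claimed bound (interpreting $t^3$ on the right-hand side as $|t|^3$).

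For the second statement, under the hypothesis $\varphi(X) = 0$, one has $\varphi\bigl(\exp(X/\sqrt{n})\bigr)^n = \exp\bigl[n K(1/\sqrt n)\bigr]$. Applying the expansion just established at $t = 1/\sqrt n$ gives
\begin{equation}
    n K(1/\sqrt n) = \frac{\varphi(X^2)}{2} + n\, r(1/\sqrt n),
\end{equation}
and the remainder bound forces $\abs{n\, r(1/\sqrt n)} \leq \tfrac{4}{3} \norm{X}_{\mathrm{op}}^{3} n^{-1/2} \to 0$. Taking $n \to \infty$ and exponentiating (which is continuous) yields the claimed limit $\exp[\varphi(X^2)/2]$. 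The only point that requires any care is invoking the correct form of Taylor's theorem so that the cubic error constant is actually $1/6$ rather than $1/2$; with that choice the factor $8/6 = 4/3$ from \cref{cor:cumulant_ineqs} drops out exactly, so there is no real obstacle beyond bookkeeping.
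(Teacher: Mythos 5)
Your proof is correct and follows essentially the same approach as the paper: Taylor-expand the cumulant generating function $K(t)=\log\varphi(e^{tX})$ to second order, bound the remainder via the third-cumulant estimate in \cref{cor:cumulant_ineqs}, and then substitute $t=1/\sqrt{n}$ for the limit. The only cosmetic differences are that you use the integral form of the Taylor remainder where the paper uses the Lagrange (mean-value) form, and you parametrize the second part directly by $n$ where the paper introduces an auxiliary $\lambda$ and then sets $t=1$; you are also slightly more careful about well-definedness of $K$ and about interpreting $t^3$ as $|t|^3$ for negative $t$, both of which the paper leaves implicit.
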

\begin{proof}
    The proof relies on taking a Taylor series of the cumulant generating function $K(t) = \log \varphi(e^{tX})$ for $t > 0$ about $t = 0$ to second order
    \begin{equation}
        K(t) = \kappa_1 t + \frac{\kappa_2}{2!} t^2 + r(t),
    \end{equation}
    where (i) the coefficient $\kappa_n$ is the cumulant of degree $n$, i.e., the $n$th derivative of $K(t)$ evaluated at $t = 0$, and (ii) the remainder term $r(t)$ is (for a fixed $t > 0$) of the form
    \begin{equation}
        r(t) = \frac{K^{(3)}(c)}{3!} t^3.
    \end{equation}
    for some $c \in [0,t]$. 
    Furthermore using \cref{cor:cumulant_ineqs}, the remainder term may be bounded uniformly with respect to $t$ by
    \begin{equation}
        \label{eq:remainder_bound}
        \abs{r(t)} \leq \frac{1}{6}\abs{K^{(3)}(c)} t^3 \leq \frac{4}{3} \norm{X}_{\mathrm{op}}^{3} t^3.
    \end{equation}
    Now using \cref{lem:first_three_cumulants} together with the fact that $\varphi_t$ evaluated at $t = 0$ is merely $\varphi$ yields
    \begin{equation}
        \kappa_1 = \varphi(X), \qquad \kappa_2 = \varphi((X - \varphi(X))^2).
    \end{equation}
    Therefore if $\varphi(X) = 0$, $\kappa_1 = 0$ and $\kappa_2 = \varphi(X^2)$ which means $K(t)$ is to leading order quadratic in $t$:
    \begin{equation}
        K(t) = \varphi(X^2)\frac{t^2}{2} + r(t).
    \end{equation}
    By appropriately scaling the cumulant generating function
    \begin{equation}
        K(t) \mapsto \lambda^2 K(\lambda^{-1} t) 
    \end{equation}
    for some large $\lambda > 0$, one can suppress the contribution of the remainder term while leaving the second order term unaffected in the sense that
    \begin{equation}
        \lambda^2 K(\lambda^{-1} t) = \varphi(X^2)\frac{t^2}{2} + \lambda^{2}r(\lambda^{-1}t),
    \end{equation}
    where by \cref{eq:remainder_bound},
    \begin{equation}
        \label{eq:remainder_scaled_bound}
        \abs{\lambda^{2}r(\lambda^{-1}t)} \leq \lambda^{-1} \frac{4}{3}\norm{H}_{\mathrm{op}}^{3} t^3.
    \end{equation}
    Setting $t = 1$ and taking the limit as $\lambda \to \infty$ produces
    \begin{equation}
        \lim_{\lambda \to \infty} \lambda^2 K(\lambda^{-1}) = \frac{\varphi(X^2)}{2}
    \end{equation}
    as claimed.
\end{proof}

\subsection{Typical, occasional \& exceptional behaviours}
\label{sec:typical_occasional_exceptional}

In the forthcoming sections, namely \cref{sec:occasionality} and \cref{sec:strong_duality}, we will consider the asymptotics of sequences of probabilities $\{p_n \in [0,1] \mid n \in \mathbb N\}$. 
In particular, we will be interested in the sequence of probabilities that arises from applying a quantum state $\varphi : \End(\s H) \to \mathbb C$ to the sequence of fixed subspaces (\cref{sec:invariant_and_fixed_subspaces}) of increasing degree $n$, $\{(\s H^{\otimes n})^\grep \mid n \in \mathbb N\}$ associated to the representation $\grep : G \to \GL(\s H)$ of a group $G$.
Specifically, the sequence of probabilities will have the form
\begin{equation}
    \label{eq:prob_seq_target}
    p_n = \varphi^{\otimes n}(\fsub{\grep^{\otimes n}})
\end{equation}
where $\fsub{\grep^{\otimes n}} \in \End(\s H^{\otimes n})$ is the projection operator on the fixed subspace $(\s H^{\otimes n})^\grep$.
It will be shown that the qualitative behaviour of this sequence of probabilities in the limit of large $n$ encodes information about the relationship between the state $\varphi$ and the group $G$.
The purpose of this section is to define three types of asymptotic behaviours which we refer to as \textit{typical}, \textit{occasional} and \textit{exceptional}.

\begin{defn}
    A sequence $\{p_n \in [0,1] \mid n \in \mathbb N\}$ of probability assignments is said to describe
    \begin{enumerate}[i)]
        \item a \defnsty{typical} behaviour if
            \begin{equation}
                \lim_{n\to\infty} p_n = 1,
            \end{equation}
        \item an \defnsty{occasional} behaviour if there exists constants $\alpha > 0, \beta > 0$ such that
            \begin{equation}
                \label{eq:occasional}
                \limsup_{n\to\infty} n^{\alpha} p_n \geq \beta,
            \end{equation}
        \item an \defnsty{exceptional} behaviour if there exists a constant $\gamma \in [0, 1)$ such that
            \begin{equation}
                \label{eq:exceptional}
                \limsup_{n\to\infty} p_n^{\frac{1}{n}} \leq \gamma.
            \end{equation}
    \end{enumerate}
\end{defn}
\begin{rem}
    A \textit{typical} behaviour is essentially any sequence of events with probabilities $\{p_n \in [0,1] \mid n \in \mathbb N\}$ that one can be arbitrarily certain will eventually occur because for any small $\epsilon > 0$, we have $p_n > 1 - \epsilon$ for sufficiently large $N$.
\end{rem}
\begin{rem}
    While the notion of an \textit{occasional} behaviour is more subtle, it faithfully captures the intuitive idea that a fair coin occasionally yields and equal number of heads and tails.
    Indeed, the probability of flipping a fair coin $n$ times and obtaining an equal number of heads and tails is equal to
    \begin{equation}
        p_n = \begin{cases}
            \frac{1}{2^{n}}\binom{n}{\frac{n}{2}} & n \text{ even} \\
            0 & n \text{ odd}.
        \end{cases}
    \end{equation}
    Although obtaining an equal number of heads and tails is \textit{atypical} in the sense that $p_n \to 0$ as $n \to \infty$, one expects to obtain an equal number of heads and tails \textit{occasionally} because one can show (for $n$ even) that $p_n \geq (2n)^{-1/2}$ and therefore \cref{eq:occasional} holds for $\alpha = \frac{1}{2}$ and $\beta = \frac{1}{\sqrt{2}}$.
\end{rem}
\begin{rem}
    The terminology of an \textit{exceptional} behaviour is justified because as $n \to \infty$, the probability $p_n$ decays to zero exponentially fast.
    To see this, let $\gamma \in [0, 1)$ be as in \cref{eq:exceptional} and let $\epsilon > 0$ be such that $\gamma + \epsilon < 1$. Then for sufficiently large $N \in \mathbb N$, the limit in \cref{eq:exceptional} implies $\sup_{n \geq N} p_n^{\frac{1}{n}} \leq \gamma + \epsilon$. 
    This implies that $p_n \leq (\gamma+\epsilon)^{n}$ which means $p_n$ eventually decays to zero at a rate that is at least exponential in $n$ with exponent $\log(\gamma+\epsilon) < 0$. 
    In other words, there exists an $r \in (0,1)$ such that $p_n \leq r^{n}$ holds for sufficiently large $n \geq N$.
\end{rem}

When the sequence of probabilities under investigation arises from the application of a tensor power state $\varphi^{\otimes n}$ to the projection operator $\fsub{\grep^{\otimes n}}$ of a fixed subspace $(\s H^{\otimes n})^\grep \subset \s H^{\otimes n}$ as in \cref{eq:prob_seq_target}, one can use the operator inequality $\fsub{\grep^{\otimes n}} \otimes \fsub{\grep^{\otimes m}} \leq \fsub{\grep^{\otimes (n+m)}}$ (see \cref{rem:fixed_graded}) to show that the sequence $\{ p_{n} \mid n\in \mathbb N\}$ is \textit{super-multiplicative} in the sense that
\begin{equation}
    p_{n+m} = \varphi^{\otimes (n+m)} ( \fsub{\grep^{\otimes (n+m)}}) \geq \varphi^{\otimes (n+m)} ( \fsub{\grep^{\otimes n}} \otimes \fsub{\grep^{\otimes m}}) = \varphi^{\otimes n}(\fsub{\grep^{\otimes n}})\varphi^{\otimes n}(\fsub{\grep^{\otimes n}}) = p_n p_m.
\end{equation}
As it turns out, this property is useful for establishing limits such as those appearing in \cref{eq:exceptional}.
First we recall a powerful lemma for subadditive sequences of real numbers known as Fekete's subadditivity lemma~\cite[Lem. 1.2.1]{steele1997probability}.
\begin{lem}
    For every sequence $\{a_n\}_{n\in \mathbb N}$ satisfying
    \begin{equation}
        a_{n+m} \leq a_n + a_m,
    \end{equation}
    the following limit exists and satisfies
    \begin{equation}
        \lim_{n\to\infty} \frac{a_n}{n} = \inf_{n \in \mathbb N} \frac{a_n}{n}.
    \end{equation}
\end{lem}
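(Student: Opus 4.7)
The plan is to let $L \coloneqq \inf_{n \in \mathbb N} a_n/n \in [-\infty, +\infty)$ and to prove separately that $\liminf_{n \to \infty} a_n/n \geq L$ and $\limsup_{n\to\infty} a_n/n \leq L$, which together imply both the existence of the limit and the claimed value. The first inequality is immediate from the definition of the infimum, since every term of the sequence $a_n/n$ is at least $L$. The real content of the proof is therefore contained in the upper bound on the limit superior.

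For this upper bound, I would fix any real number $T > L$ and, by the definition of the infimum, select some index $m \in \mathbb N$ with $a_m/m < T$. Given any $n \geq m$, Euclidean division provides a unique decomposition $n = qm + r$ with $q \geq 1$ and $r \in \{0, 1, \ldots, m-1\}$. Iterating the subadditivity hypothesis $q-1$ times (and, in the case $r \geq 1$, once more) then yields
\begin{equation*}
a_n \leq q\,a_m + a_r \qquad (r \geq 1), \qquad a_n \leq q\, a_m \qquad (r = 0).
\end{equation*}
In either case, dividing by $n$ gives
\begin{equation*}
\frac{a_n}{n} \leq \frac{qm}{n}\cdot \frac{a_m}{m} + \frac{C}{n},
\end{equation*}
where $C \coloneqq \max\{0, a_1, \ldots, a_{m-1}\}$ is a constant independent of $n$.

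As $n \to \infty$ with $m$ held fixed, the coefficient $qm/n = 1 - r/n$ tends to $1$ and $C/n \to 0$, so $\limsup_{n\to\infty} a_n/n \leq a_m/m < T$. Since $T > L$ was arbitrary, this gives $\limsup_{n\to\infty} a_n/n \leq L$. Combined with the trivial lower bound $\liminf_{n\to\infty} a_n/n \geq L$, the claimed limit exists and equals $L$. There is no genuine obstacle here: the only essential idea is the Euclidean-division trick, and the argument handles the case $L = -\infty$ without modification, since the threshold $T$ may be chosen arbitrarily negative.
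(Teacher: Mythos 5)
Your proof is correct and follows essentially the same route as the paper: both are the standard Euclidean-division proof of Fekete's lemma, decomposing $n = qm + r$ with a fixed $m$ realizing (approximately) the infimum and a bounded remainder $r$. Your version is slightly more careful in separating the $\liminf$ and $\limsup$ bounds and in bounding the remainder contribution by a fixed constant $C$, whereas the paper writes $\lim_{n\to\infty} a_n/n$ on the left side of an inequality before existence has been established; but this is a matter of polish, not substance.
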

\begin{proof}
    If the sequence ever reaches $a_m = -\infty$ for finite $m$, then by subadditivity, both sides of the above equation are $-\infty$.
    Henceforth assume $a_n > - \infty$ for all $n$.
    Now for any $n = mk + r$, subadditivity implies $a_{n} \leq ma_{k} + a_{r}$ and thus
    \begin{equation}
        \frac{a_{n}}{n} \leq \frac{ma_{k}}{n} + \frac{a_{r}}{n} = \left(1-\frac{r}{n}\right) \frac{a_{k}}{k} + \frac{a_{r}}{n}.
    \end{equation}
    Moreover for fixed $k$, one may assume for all $n \geq k$ that $n = km + r$ holds for some $r$ satisfying $0 \leq r < k$.
    For any $\epsilon > 0$ pick a $k$ such that $\frac{a_{k}}{k}$ is within $\epsilon$ of $L = \inf_{n \in \mathbb N} \frac{a_n}{n}$:
    \begin{equation}
        \frac{a_{k}}{k} < L + \epsilon.
    \end{equation}
    Since $k$ is now fixed and $r < k$,
    \begin{equation}
        \lim_{n \to \infty} \frac{a_{n}}{n} < \lim_{n \to \infty} \left[\left(1-\frac{r}{n}\right)(L + \epsilon) + \frac{a_{r}}{n}\right] = L+ \epsilon.
    \end{equation}
    Since this holds for any $\epsilon > 0$, the lemma holds by taking $\epsilon \rightarrow 0$.
\end{proof}
\begin{rem}
    Note that every super-multiplicative sequence $\{b_n\}_{n\in\mathbb N}$, meaning $b_{n+m} \geq b_n b_m$, gives rise to a subadditive sequence with $a_n = - \log b_n$ (assuming $b_n > 0$ for all $n$). 
     This implies an analogue of Fekete's lemma for super-multiplicative sequences.
\end{rem}
\begin{cor}
    For every sequence $\{b_n\}_{n\in \mathbb N}$ satisfying
    \begin{equation}
        b_{n+m} \geq b_nb_m,
    \end{equation}
    the following limit exists and satisfies
    \begin{equation}
        \limsup_{n\to\infty} b_n^{\frac{1}{n}} = \sup_{n \in \mathbb N} b_n^{\frac{1}{n}}.
    \end{equation}
\end{cor}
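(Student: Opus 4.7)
The plan is to reduce the corollary to the subadditive version of Fekete's lemma established immediately above by taking negative logarithms. Under the assumption noted in the preceding remark that $b_n > 0$ for every $n \in \mathbb N$, define the new sequence
\[
    a_n \coloneqq -\log b_n \in \mathbb R.
\]
Then the hypothesis $b_{n+m} \geq b_n b_m$ translates term-by-term into $a_{n+m} \leq a_n + a_m$, so $\{a_n\}$ is a subadditive sequence of real numbers.

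Applying Fekete's lemma to $\{a_n\}$ yields the existence of the limit $\lim_{n \to \infty} a_n/n$ and the identity $\lim_{n\to\infty} a_n/n = \inf_{n\in \mathbb N} a_n/n$. Since the map $x \mapsto e^{-x}$ is continuous and strictly decreasing on $\mathbb R \cup \{\pm\infty\}$, it exchanges infima with suprema and preserves limits, so exponentiating gives
\[
    \lim_{n\to\infty} b_n^{1/n} \;=\; \exp\!\left(-\lim_{n\to\infty} \tfrac{a_n}{n}\right) \;=\; \exp\!\left(-\inf_{n\in\mathbb N} \tfrac{a_n}{n}\right) \;=\; \sup_{n\in\mathbb N} b_n^{1/n}.
\]
Because the ordinary limit exists, the $\limsup$ coincides with it, delivering the claimed equality $\limsup_{n\to\infty} b_n^{1/n} = \sup_{n\in\mathbb N} b_n^{1/n}$.

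There is essentially no serious obstacle, but one should briefly acknowledge the edge case where $b_n = 0$ for some $n$, in which $a_n = +\infty$. The proof of Fekete's lemma given in the excerpt explicitly treats the $-\infty$ case, and the $+\infty$ case is handled analogously: indices $n$ with $b_n = 0$ contribute $0$ to both $b_n^{1/n}$ on the left and to the supremum on the right, so they neither alter $\sup_n b_n^{1/n}$ nor the $\limsup$, and the identity reduces to the statement for the subsequence where $b_n > 0$ (which, if empty, makes both sides vanish trivially). Thus the proof is essentially a one-line exponentiation of the already-proved subadditive lemma, with care only needed for the degenerate case.
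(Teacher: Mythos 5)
Your proof is correct when one assumes $b_n > 0$ for every $n$, but that assumption is stronger than the corollary permits, and the general case is exactly where the interesting content lives. The corollary is stated with $\limsup$ rather than $\lim$, and the remark that immediately follows it in the paper emphasizes that the $\limsup$ \emph{cannot} be replaced with $\lim$: it is possible for $b_{n_j} = 0$ along a subsequence $(n_j)$ lying outside every sub-semigroup $\{mk : m \in \mathbb N\}$ of $\mathbb N$. Your assertion ``Because the ordinary limit exists, the $\limsup$ coincides with it'' is therefore false in general. A concrete counterexample to the limit existing: take $b_n = 1$ for $n$ even and $b_n = 0$ for $n$ odd. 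This sequence is super-multiplicative, yet $b_n^{1/n}$ oscillates between $1$ and $0$, so the limit does not exist, while $\limsup_n b_n^{1/n} = 1 = \sup_n b_n^{1/n}$ as the corollary claims.

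Your edge-case paragraph does not repair this. First, the ``$+\infty$ case of Fekete'' is \emph{not} analogous to the $-\infty$ case: subadditivity $a_{n+m} \leq a_n + a_m$ propagates $a_m = -\infty$ to all $a_{n+m}$, but it places no upper constraint when $a_m = +\infty$, so the value $+\infty$ can appear at scattered indices without contaminating the rest of the sequence. Second, the claim that ``the identity reduces to the statement for the subsequence where $b_n > 0$'' is not legitimate as stated, because the set of indices with $b_n > 0$ need not be a sub-semigroup of $\mathbb N$, so Fekete's lemma does not directly apply to that subsequence. The paper handles the general case by a slightly different reduction: for each fixed $k$ with $b_k > 0$, super-multiplicativity gives $b_{mk} \geq b_k^m > 0$ for all $m$, so the sequence $m \mapsto -\log b_{mk}$ \emph{is} a genuine (finite-valued) subadditive sequence; Fekete then yields $\lim_m b_{mk}^{1/(mk)} = \sup_m b_{mk}^{1/(mk)} \geq b_k^{1/k}$, hence $\limsup_n b_n^{1/n} \geq b_k^{1/k}$, and taking the supremum over all such $k$ gives the nontrivial direction $\limsup_n b_n^{1/n} \geq \sup_n b_n^{1/n}$. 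That sub-semigroup step is the missing idea; without it, your argument does not cover the cases that motivated writing $\limsup$ in the first place.
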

\begin{proof}
    By supermultiplicativity, if $b_k > 0$ for some $k$, then $b_{mk} \geq b_k^{m} > 0$ for all $m$. Therefore $a_m \mapsto - \log b_{mk}$ is a well-defined subadditive sequence which implies $\lim_{m\to\infty} \frac{a_m}{m} = \inf_{m} \frac{a_m}{m}$. Therefore for all $k$ such that $b_k > 0$,
    \begin{equation}
        \lim_{m\to\infty} b_{mk}^{\frac{1}{mk}} = \sup_{m \in \mathbb N} b_{mk}^{\frac{1}{mk}} \geq b_{k}^{\frac{1}{k}}.
    \end{equation}
    This implies that
    \begin{equation}
        \limsup_{n\to\infty} b_n^{\frac{1}{n}} \geq \sup_{n \in \mathbb N} b_n^{\frac{1}{n}},
    \end{equation}
    and the reverse inequality holds trivially.
\end{proof}
\begin{rem}
    Note that one cannot necessarily replace $\limsup$ with $\lim$ in the above corollary because it remains possible to find a subsequence $\{n_j\}_{j \in \mathbb N}$ contained outside of all sub-semigroups $\{mk | k \in \mathbb N\}$ of $\mathbb N$ such that $b_{n_j} = 0$ holds for all $j$. See the footnote in \cite{franks2020minimal} for further explanation of this subtlety.
\end{rem}

\subsection{The occasionality theorem}
\label{sec:occasionality}

The purpose of this section is to develop and prove an important relationship between the constant subspaces and semistable vectors of a representation $\grep : G \to \GL(\s H)$ and its tensor powers $\grep^{\otimes n} : G \to \GL(\s H)$ (\cref{thm:occasionality}).
The proof presented here is taken directly from \citeauthor{franks2020minimal}~\cite{franks2020minimal}.
Our only deviation from \cite{franks2020minimal} is to emphasize the role of cumulants (\cref{sec:moments_cumulants}), as noticed by \cite[Rem. 3.16]{burgisser2019towards}.
As this result serves as the foundation for many other results in this thesis, the proof is represented here purely for completeness.

Before doing so, it will be necessary to distinguish between subgroups of $G$ which stabilizer a given non-zero vector $v \in \wozero{\s H}$ from the subgroups of $G$ which stabilize a given ray $\psi \in \proj \s H$.
To clarify this subtle distinction, we consider the following example.
\begin{exam}
    Let $\s H_+$ and $\s H_-$ be two orthogonal and complementary subspaces of a complex finite-dimensional Hilbert space 
    \begin{equation}
        \s H = \s H_+ \oplus \s H_-,
    \end{equation}
    and let $P_\pm$ be the orthogonal projection operator onto $\s H_{\pm}$. 
    Moreover, let $X = P_+ - P_- \in \End(\s H)$ be the operator which reflects vectors through the subspace $\s H_+ \subset \s H$, i.e., $X = \ident_{\s H} - 2 P_-$.
    Then consider any vector $v \in \wozero{\s H}$ with non-zero component in both $\s H_+$ and $\s H_-$. 
    Let $\grep : \wozero{\mathbb C} \to \GL(\s H)$ the representation of $\wozero{\mathbb C}$ defined by $\grep(e^{z}) = \exp(z X)$.
    The action of $\wozero{\mathbb C}$ on $v = v_+ + v_-$ thus
    \begin{equation}
        \grep(e^{z}) v = e^{+z}v_+ + e^{-z}v_-,
    \end{equation}
    and therefore the stabilizer of $v$ is the trivial subgroup $\{1\} \subset \wozero{\mathbb C}$ while the stabilizer of the ray containing $v$ is the subgroup $\mathbb Z_2 \simeq \{-1, +1\} \subset \wozero{\mathbb C}$.
\end{exam}

The following result, found in \cite[Lem. 2.2]{mumford1984stratification}, demonstrates the connection between geometric and algebraic notions of stability.
\begin{lem}
    \label{lem:finite_projective_quotient}
    Let $\psi \in \proj \s H$ be a ray and let $v \in \wozero{\psi}$ be non-zero so that $\mathbb C v = \psi$.
    Let $\grep : G \to \GL(\s H)$ be a representation of a reductive group $G$ and let $G_{\psi}$ denote the subgroup of $G$ which leaves the subspace $\psi$ invariant and $G_v$ the subgroup of $G_{\psi}$ which leaves the vector $v$ invariant.
    Then either $G_{\psi}/G_{v} \simeq \wozero{\mathbb C}$ or $G_{\psi}/G_{v}$ is finite (and thus $\dim G_{\psi} = \dim G_{v}$).
    Moreover if $G_{\psi}/G_{v} \simeq \wozero{\mathbb C}$ then $v$ is necessarily unstable.
\end{lem}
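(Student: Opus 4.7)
The plan is to construct an explicit character on $G_{\psi}$ whose kernel is $G_{v}$, and then argue that the image of this character inside $\wozero{\mathbb C}$ is highly constrained. For each $g \in G_{\psi}$, the fact that $\grep(g)$ stabilizes the line $\psi = \mathbb C v$ means there is a unique scalar $\chi(g) \in \wozero{\mathbb C}$ satisfying
\begin{equation}
    \grep(g) v = \chi(g)\, v.
\end{equation}
A one-line computation shows $\chi(g_{1} g_{2}) v = \grep(g_{1})\grep(g_{2}) v = \chi(g_{1})\chi(g_{2}) v$, so $\chi : G_{\psi} \to \wozero{\mathbb C}$ is a group homomorphism. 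By definition, $g \in G_{v}$ if and only if $\chi(g) = 1$, so $\ker(\chi) = G_{v}$, and the first isomorphism theorem gives $G_{\psi}/G_{v} \simeq \mathrm{im}(\chi) \subseteq \wozero{\mathbb C}$.

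Next I would argue that $\mathrm{im}(\chi)$ can only be either all of $\wozero{\mathbb C}$ or finite. Since $G$ is complex reductive and $\grep$ is rational (see \cref{defn:complex_reductive_group}), the stabilizers $G_{\psi}$ and $G_{v}$ are Zariski-closed algebraic subgroups of $G$, and $\chi$ is a regular (algebraic) character. The image of a homomorphism of algebraic groups is a closed algebraic subgroup of the target. Since $\wozero{\mathbb C}$ is a one-dimensional algebraic group, its closed algebraic subgroups are either $\wozero{\mathbb C}$ itself or zero-dimensional, and zero-dimensional closed subgroups of the algebraic group $\wozero{\mathbb C}$ are necessarily finite (in fact, the groups of $m$th roots of unity for some $m$). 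This dichotomy transfers to $G_{\psi}/G_{v}$ and, in the finite case, forces $\dim G_{\psi} = \dim G_{v}$, giving the first assertion.

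For the second assertion, suppose $G_{\psi}/G_{v} \simeq \wozero{\mathbb C}$, so $\chi$ is surjective. Then for every $z \in \wozero{\mathbb C}$ there exists $g_{z} \in G_{\psi}$ with $\grep(g_{z}) v = z v$. Choosing a sequence $z_{n} \in \wozero{\mathbb C}$ with $z_{n} \to 0$ (for example $z_{n} = 1/n$) yields a sequence $\grep(g_{z_{n}}) v = z_{n} v \in \grep(G) v$ that converges to $0$ in $\s H$. Hence $0 \in \overline{\grep(G) v}$, which is precisely the defining condition for $v$ to be unstable (\cref{defn:stability}).

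The only nontrivial step is the classification of closed subgroups of the one-dimensional multiplicative group, but this is a standard algebraic-group fact rather than a genuine obstacle; everything else is an elementary application of the first isomorphism theorem together with the definitions of stability and capacity.
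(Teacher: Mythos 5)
Your proof is correct and takes essentially the same route as the paper's: you build the character $\chi : G_{\psi} \to \wozero{\mathbb C}$, identify $G_{\psi}/G_{v}$ with its image, invoke the classification of algebraic subgroups of $\wozero{\mathbb C}$ into the full group or a finite subgroup, and deduce instability from the ability to shrink $v$. The one small presentational difference is in the last step: the paper passes to the Lie algebra and uses a one-parameter subgroup $\exp(tX)$ with $t\to\pm\infty$, whereas you argue more directly via surjectivity of $\chi$ and a sequence $z_{n}\to 0$; both are fine, yours is marginally leaner.
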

\begin{proof}
    The quotient $G_{\psi} / G_{v}$ of $G_{\psi}$ by $G_{v}$ considered by the lemma arises when considering the sequence
    \begin{equation}
        \label{eq:stabilizing_exact_sequence}
        \begin{tikzcd}
            G_{v} \arrow[r] &
            G_{\psi} \arrow[r] &
            \wozero{\mathbb C}
        \end{tikzcd}
    \end{equation}
    where the map $G_{v} \rightarrow G_{\psi}$ is simply a subgroup inclusion (whose image is the subgroup $G_{v} \subseteq G_{\psi}$) while $G_{\psi} \rightarrow \wozero{\mathbb C}$ is the map sending $g \in G_{\psi}$ to the unique non-zero complex number $c_g$ satisfying $\grep(g) v = c_g v$.
    Evidently, the kernel of the latter map $G_{\psi} \rightarrow \wozero{\mathbb C}$, i.e., those $g \in G_{\psi}$ such that $\grep(g) v = v$ is equal to $G_{v}$, and thus the above sequence is an example of an \textit{exact} sequence.
    In any case, the quotient $G_{\psi} / G_{v}$ is precisely the image of the action of $G_{\psi}$ on $v$ and is necessarily an algebraic subgroup of $\wozero{\mathbb C}$ (as $G$ itself is reductive). 
    Being an algebraic subgroup of $\wozero{\mathbb C}$, $G_{\psi} / G_{v}$ must either be isomorphism to i) the entire group $\wozero{\mathbb C}$ or ii) a finite subgroup of $\wozero{\mathbb C}$. 
    If $G_{\psi}/ G_{v} \simeq \wozero{\mathbb C}$, then there exists an element $X$ in the Lie algebra of $G_{\psi}$ such that $\grep(\exp(tX)) v = e^{tx}v$ for some non-zero $x \in \wozero{\mathbb R}$. 
    Therefore, in the limit of either large or small $t \in \mathbb R$, 
    $\grep(\exp(tX)) v = e^{tx} v \rightarrow 0$,
    and thus the vector $v$ must be unstable.
\end{proof}
A very useful corollary of \cref{lem:finite_projective_quotient}, which applies to semistable vectors, is the following result.
\begin{cor}
    \label{cor:semistable_finite_unitary}
    Let $\psi \in \proj \s H$ and let $v \in \wozero{\psi}$ be a unit vector that is semistable with respect to a reductive group representation $\grep : G \to \GL(\s H)$. Then there exists a positive integer $m \in \mathbb N$ such that for all unitary group elements $u \in G$ (meaning $u^{*} = u^{-1}$),
    \begin{equation}
        \braket{v, \grep(u) v}^{m} = \Tr(P_{\psi}\grep(u))^{m} = 1 \iff u \in G_{\psi}.
    \end{equation}
\end{cor}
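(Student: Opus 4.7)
The plan is to prove the two directions of the biconditional separately, choosing $m$ once and for all to be the order of the finite quotient group $G_{\psi}/G_{v}$.

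\textbf{Forward direction ($u \in G_{\psi}$ implies the equation holds).} Since $v$ is assumed semistable, \cref{lem:finite_projective_quotient} forces the quotient $G_{\psi}/G_{v}$ to be finite (the alternative $G_{\psi}/G_{v} \simeq \wozero{\mathbb C}$ is incompatible with semistability). Let $m \coloneqq \abs{G_{\psi}/G_{v}}$, and consider the character $c : G_{\psi} \to \wozero{\mathbb C}$ defined by $\grep(g) v = c(g) v$ for all $g \in G_{\psi}$, whose kernel is $G_{v}$. The image $c(G_{\psi}) \cong G_{\psi}/G_{v}$ is then a finite subgroup of $\wozero{\mathbb C}$ of order $m$, and hence consists of $m$-th roots of unity. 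Therefore for every unitary $u \in G_{\psi}$, we have $\braket{v, \grep(u) v} = c(u)$ with $c(u)^{m} = 1$, as desired.

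\textbf{Reverse direction (the equation holds implies $u \in G_{\psi}$).} Suppose $u \in G$ is unitary and $\braket{v, \grep(u) v}^{m} = 1$. Taking moduli yields $\abs{\braket{v, \grep(u) v}} = 1$. Because $u$ is unitary (i.e., $u \in K$) and the inner product on $\s H$ is $K$-invariant, $\grep(u) v$ is also a unit vector. The equality case of the Cauchy--Schwarz inequality then forces $\grep(u) v$ to be a scalar multiple of $v$, which is precisely the statement that $u$ stabilizes the ray $\psi = \mathbb C v$, i.e., $u \in G_{\psi}$.

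\textbf{Anticipated obstacle.} The only nontrivial ingredient is the existence of a \emph{single} $m$ that works uniformly for all unitary $u$; this is supplied by invoking \cref{lem:finite_projective_quotient}, which is where the hypothesis of semistability enters essentially. Without semistability one could only obtain $\abs{c(u)} = 1$ for unitary $u$ with no finite-order control, so the crux is simply recognizing that semistability rules out the continuous $\wozero{\mathbb C}$-case of the dichotomy. Once $m$ is fixed as the order of the finite image of $c$, the remainder of the argument reduces to an application of Cauchy--Schwarz in its equality case.
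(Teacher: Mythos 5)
Your proof is correct and takes essentially the same approach as the paper: both choose $m = \abs{G_{\psi}/G_{v}}$ via \cref{lem:finite_projective_quotient}, both prove the "$u \in G_{\psi}$" direction by noting the image of the character $g \mapsto c(g)$ is a finite subgroup of $\wozero{\mathbb C}$ consisting of $m$-th roots of unity, and both prove the converse using unitarity plus the Cauchy--Schwarz equality case to conclude colinearity. The only (inessential) difference is that the paper observes the forward implication actually holds for non-unitary $g \in G_{\psi}$ as well, which you do not record but the statement does not require.
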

\begin{proof}
    By \cref{lem:finite_projective_quotient} and the semistability of $v$, $G_{\psi} / G_{v}$ is a finite subgroup of $\wozero{\mathbb C}$.
    Let the order of this finite abelian group be $m \in \mathbb N$.
    The reverse direction ($\Longleftarrow$) holds even if $u$ is not unitary because if $g \in G_{\psi}$, then $g G_{v} \in G_{\psi} / G_{v}$ and therefore $g^{m} G_{v} = eG_{v} \simeq G_{v}$ is the identity coset because the order of $g G_{v}$ must divide $m$.
    Therefore,
    \begin{align}
        \begin{split}
            1 &= \Tr(P_{\psi}\grep(e)) = \Tr(P_{\psi}\grep(eG_{v})), \\
            &= \Tr(P_{\psi}\grep(g^{m}G_{v})) = \Tr(P_{\psi}\grep(g)^m), \\
            &= \Tr(P_{\psi}\grep(g))^m
        \end{split}
    \end{align}
    where the last equality holds because $g \in G_{\psi}$.
    The forward direction ($\Longrightarrow$) holds because $\Tr(P_{\psi}\grep(u))^{m} = 1$ implies $\Tr(P_{\psi}\grep(u))$ is an $m$th root of unity and thus a element of $\U(1)$.
    Therefore $\grep(u) v$ (which is necessarily a unit vector by unitarity of $u$) is colinear with $v$ and thus $u$ stabilizes $\mathbb C v = \psi$.
\end{proof}
The main result of this section, which we are now in a position to prove, may be found in \cite[Prop. 3.1]{franks2020minimal}.
\begin{thm}
    \label{thm:occasionality}
    Let $\grep : G \to \GL(\s H)$ and let $\psi \in \proj \s H$ have vanishing moment map with respect to $\grep$, i.e. $\momap(\psi) = 0$.
    Then there exists positive integers $m,c \in \mathbb N$ such that
    \begin{equation}
        \lim_{k \to \infty} (mk)^{c/2}\Tr(P_{\psi}^{\otimes mk}\fsub{\grep^{\otimes mk}}) = \sqrt{\frac{(2\pi)^{c}}{\det(Q_{\psi})}} > 0
    \end{equation}
    where $\fsub{\grep^{\otimes n}}$ is the projection operator onto the fixed subspace of degree $n$ (\cref{defn:fixed_subspace_n}), and $Q_{\psi}$ is the quadratic form defined by $X \mapsto \Tr(\arep(X)P_{\psi}\arep(X)^{*})$.
\end{thm}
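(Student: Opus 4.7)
The plan is to rewrite the trace as a Haar integral over the compact subgroup $K$ and analyze it by Laplace's method, using the vanishing of the moment map to produce a Gaussian limit. Fix a unit vector $v \in \wozero{\psi}$. Applying \cref{lem:projection_fixed_subspace} to $\grep^{\otimes n}$ and evaluating $\Tr(P_\psi^{\otimes n}\grep^{\otimes n}(u)) = \braket{v, \grep(u) v}^n$ for $u \in K$ reduces the quantity of interest to
$$\Tr(P_\psi^{\otimes n}\fsub{\grep^{\otimes n}}) = \int_K \braket{v, \grep(u)v}^n\, d\mu(u).$$
Unitarity of $\grep$ on $K$ gives $\abs{\braket{v, \grep(u) v}} \le 1$ with equality exactly on $K_\psi := K \cap G_\psi$. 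By Kempf--Ness (\cref{thm:kempf_ness_theorem}), $\momap(\psi) = 0$ makes $v$ polystable, so \cref{cor:semistable_finite_unitary} produces a positive integer $m$ with $\braket{v, \grep(u)v}^m = 1$ exactly when $u \in K_\psi$. Setting $n = mk$, the integrand is identically $1$ on $K_\psi$ and has modulus strictly less than $1$ elsewhere.

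Next I build the Gaussian model near $K_\psi$. Polystability combined with \cref{lem:finite_projective_quotient} forces $\mathfrak{g}_\psi = \mathfrak{g}_v$, hence $\mathfrak{k}_\psi = \mathfrak{k}_v$, so $\arep(X)v = 0$ for every $X \in \mathfrak{k}_\psi$. Consequently $Q_\psi(X) = \norm{\arep(X)v}^2$ vanishes on $\mathfrak{k}_\psi$ and is positive definite on any $K_\psi$-invariant complement $\mathfrak{m} \subset \mathfrak{k}$, whose dimension is our $c$. Parametrize a tubular neighbourhood of $K_\psi$ by $(h, X) \mapsto \exp(X)\,h$ with $h \in K_\psi$ and $X$ near $0$ in $\mathfrak{m}$. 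Because $\grep(h) v = c_h v$ with $c_h^m = 1$, the integrand on this slice becomes $\braket{v, \grep(\exp(X)) v}^{mk}$. The hypothesis $\momap(\psi) = 0$ implies $\braket{v, \arep(X) v} = -i\,\momap(\psi)(iX) = 0$, so a Taylor expansion in $X$ yields
$$\braket{v, \grep(\exp(X)) v} = 1 - \tfrac{1}{2}Q_\psi(X) + O(\norm{X}^3),$$
with the cubic remainder controlled uniformly via the cumulant estimates of \cref{cor:cumulant_ineqs}.

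Now I execute the Laplace evaluation. Outside any fixed tubular neighbourhood of $K_\psi$, compactness of $K$ forces $\abs{\braket{v, \grep(u) v}}$ to be uniformly bounded away from $1$, so those contributions decay exponentially in $mk$ and are killed by the polynomial prefactor $(mk)^{c/2}$. Inside the neighbourhood, I rescale $X = Y/\sqrt{mk}$: the $c$-dimensional Jacobian contributes $(mk)^{-c/2}$, and the pointwise limit of the integrand is $\exp(-\tfrac{1}{2}Q_\psi(Y))$, reproducing the central-limit estimate of \cref{lem:CLT}. Dominated convergence, together with Gaussian integration over $\mathfrak{m}$ and the Haar-volume contribution from $K_\psi$ (absorbed into the stated normalization of $\det Q_\psi$), yields the claimed limit
$$(mk)^{c/2}\,\Tr(P_\psi^{\otimes mk}\fsub{\grep^{\otimes mk}}) \;\longrightarrow\; \sqrt{\frac{(2\pi)^c}{\det Q_\psi}} \;>\; 0.$$

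The hardest part will be the uniform control of the higher-order remainder on a tubular neighbourhood that must shrink in $\mathfrak{m}$ like $1/\sqrt{mk}$ (so the Gaussian rescaling produces a finite limit) while simultaneously ruling out contributions from the complement of this neighbourhood. Both estimates ultimately rest on the geodesic convexity of the Kempf--Ness function from \cref{rem:geodesic_convex_kempf_ness}: local convexity around each point of $K_\psi$ provides a quadratic upper envelope for $\abs{\braket{v, \grep(u)v}}$ matching the leading Gaussian, and compactness of $K$ upgrades the strict inequality $\abs{\braket{v, \grep(u)v}} < 1$ on the complement to a uniform exponential gap.
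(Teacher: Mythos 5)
Your proposal follows essentially the same route as the paper's proof: rewrite the trace as a Haar integral via \cref{lem:projection_fixed_subspace}, use Kempf--Ness together with \cref{cor:semistable_finite_unitary} to extract the integer $m$, perform a Gaussian/Laplace analysis near $K_\psi$ via the cumulant expansion of \cref{lem:CLT}, and dispatch the complement by a uniform exponential bound from compactness of $K$. The one structural difference is cosmetic: the paper first pushes the integral down to the normalized left-invariant measure on $K/K_\psi$ and then applies $\Exp : \mathfrak{k}_\psi^\perp \to K/K_\psi$, whereas you work directly on $K$ with a tubular parametrization $(h,X) \mapsto \exp(X)h$ of a neighbourhood of $K_\psi$ and observe that the Haar integral over $K_\psi$ gives a factor of $1$ because $c_h^{mk} = 1$. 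These are equivalent by disintegration of Haar measure, and both lead to $c = \dim(\mathfrak k) - \dim(\mathfrak k_\psi)$.

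One remark in your closing paragraph is, however, incorrect and worth flagging. You claim that the uniform quadratic upper envelope for $\abs{\braket{v,\grep(u)v}}$ on the tubular neighbourhood ``ultimately rests on the geodesic convexity of the Kempf--Ness function.'' This cannot be right: the Kempf--Ness function $g \mapsto \log\norm{\grep(g)v}$ is \emph{constant} along $K$ (the representation is unitary there), so its geodesic convexity --- which lives along the non-compact $i\mathfrak{k}$ directions --- gives no information about the behaviour of $u \mapsto \braket{v,\grep(u)v}$ on $K$. Indeed, that function peaks at $1$ on $K_\psi$ and is locally \emph{concave} in modulus, not convex. The uniform control of the cubic remainder you need is exactly what \cref{cor:cumulant_ineqs} and \cref{lem:CLT} supply --- you cited them correctly earlier, so your proof is salvaged, but the attribution to Kempf--Ness convexity should be dropped. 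The exponential gap on the complement is also not a Kempf--Ness statement; it is just compactness of $K \setminus U$ plus continuity of $u \mapsto \abs{\braket{v,\grep(u)v}}$ and the characterization of where the modulus equals $1$.
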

\begin{proof}
    Our proof follows almost exactly the proof in \cite{franks2020minimal}.
    Begin by letting $v \in \s H$ be any unit vector contained in the ray $\psi \in \proj \s H$. Then the pure state $P_{\psi}$ is of the form $\Tr(P_{\psi} X ) = \langle v, X v \rangle$.
    Applying \cref{eq:stable_projector_degree_n} produces
    \begin{equation}
        \label{eq:fixed_power_integral}
        \Tr(P_{\psi}^{\otimes n}\fsub{\grep^{\otimes n}}) = \int_{u \in K} \diff u \Tr(P_{\psi}\grep(u))^{n} = \int_{u \in K} \diff u\langle v, \grep(u) v \rangle^n.
    \end{equation}
    Now consider the function $u \mapsto \Tr(P_{\psi}\grep(u)) = \braket{v, \grep(u) v}$ appearing in the integrand above.
    As every $u \in K$ is represented by a unitary matrix $\grep(u)$, the Cauchy-Schwarz inequality yields the following bounds on the magnitude of $\Tr(P_{\psi}\grep(u))$:
    \begin{equation}
        0 \leq \abs{\Tr(P_{\psi}\grep(u))} \leq 1.
    \end{equation}
    The primary idea of this proof is to note that when $u \in K$ satisfies $\abs{\Tr(P_{\psi}\grep(u))} < 1$ and $n \to \infty$, the integrand, $\Tr(P_{\psi}\grep(u))$, converges to zero:
    \begin{equation}
        \label{eq:interior_disk_convergence}
        \abs{\Tr(P_{\psi}\grep(u))} < 1 \implies \lim_{n\to\infty} \Tr(P_{\psi}\grep(u))^{n} = 0.
    \end{equation}
    On the other hand, when $\abs{\Tr(P_{\psi}\grep(u))} = 1$ one concludes that $\Tr(P_{\psi}\grep(u)) = e^{i\theta}$ holds for some $\theta \in [0, 2\pi]$ that is implicitly dependent on $u$.
    Fortunately, by the Kempf-Ness theorem (\cref{thm:kempf_ness_theorem}) the vanishing of the moment map for $\psi$ implies $v \in \wozero{\psi}$ is semistable which, by \cref{cor:semistable_finite_unitary}, implies the existence of a positive integer $m$ (equal to the order of the finite group $K_{\psi} / K_{v} \subset \U(1)$) such that $\Tr(P_{\psi}\grep(u))^{mk} = 1$ holds if and only if $u$ belongs to the identity coset $e K_{\psi} \in K / K_{\psi}$.
    This observation enables one to partially integrate \cref{eq:fixed_power_integral} to obtain
    \begin{equation}
        \Tr(P_{\psi}^{\otimes mk}\fsub{\grep^{\otimes mk}}) = \int_{\bar u \in K/K_{\psi}} \diff \bar u \Tr(P_{\psi}\grep(\bar u))^{mk},
    \end{equation}
    where $\diff \bar u$ is the unique normalized left-$K$-invariant measure on the right cosets $K/K_{\psi}$.

    By \cref{eq:interior_disk_convergence}, the only portion of $K/K_{\psi}$ which may contribute to the integral in the limit of large $k$ is therefore those cosets, $\bar u \in K / K_{\psi}$, which are close to the identity.
    In order to define this collection of cosets, which will be denoted by $U \subset K / K_{\psi}$ in \cref{eq:near_identity_coset} below, first let $\mathfrak k_{\psi}$ denote the Lie algebra of the stabilizer $K_{\psi}$ and identify the orthogonal complement of $\mathfrak k_{\psi}^{\perp}$ of $\mathfrak k_{\psi} \subseteq \mathfrak k$ (taken with respect to an inner product on $\mathfrak k$ which is invariant under the adjoint representation of $K$) with the tangent space of $K / K_{\psi}$ at the identity coset using the projection map $K \mapsto K / K_{\psi}$ and its surjective differential~\cite{franks2020minimal}. 
    Let $\epsilon > 0$ and let $B_{\epsilon}(0) \subset \mathfrak k_{\psi}^{\perp}$ denote the open ball of radius $\epsilon$ around the origin in $\mathfrak k_{\psi}^{\perp}$.
    Let $\Exp : \mathfrak k_{\psi}^{\perp} \to K/K_{\psi}$ denote the exponential map derived by first restricting the original exponential map $\exp : \mathfrak k \to K$ for $K$ to the subspace $\mathfrak k_{\psi}^{\perp}$ and then second projecting through the quotient $K \mapsto K/K_{\psi}$.
    For $\epsilon$ sufficiently small, the restriction of $\Exp$ to $B_{\epsilon}(0)$ is a local diffeomorphism whose image is the open subset $U \subset K / K_{\psi}$, i.e.
    \begin{equation}
        \label{eq:near_identity_coset}
        U \coloneqq \Exp(B_{\epsilon}(0)) = \{ \Exp(X) \in K / K_{\psi} \mid X \in B_{\epsilon}(0) \subset \mathfrak k_{\psi}^{\perp} \}.
    \end{equation}
    Let $J : \mathfrak k_{\psi}^{\perp} \to \mathbb R$ denote the Jacobian of $\Exp : \mathfrak k_{\psi}^{\perp} \to K/K_{\psi}$, which is a smooth function of $X \in \mathfrak k_{\psi}^{\perp}$. 
    At the origin, $0 \in \mathfrak k_{\psi}^{\perp}$, the differential of $\Exp$ an isometry and thus its determinant is $J(0) = 1$ and therefore for sufficiently small $\epsilon$, $J(X) < 2$ holds for all $A \in B_{\epsilon}(0)$.
    Altogether, one obtains
    \begin{equation}
        \label{eq:integral_close_to_ident}
        \int_{\bar u \in U} \diff \bar u \Tr(P_{\psi}\grep(\bar u))^{mk} = \int_{A \in B_{\epsilon}(0)} \diff A J(A) \Tr(P_{\psi}\exp(\arep(A)))^{mk}.
    \end{equation}

    At this stage note that $A \in \mathfrak k_{\psi}^{\perp} \subset \mathfrak k$ is necessarily a skew-Hermitian operator ($A^{*} = - A$). 
    Consequently, the integrand above, namely $\Tr(P_{\psi}\exp(\arep(A)))$, is simply the characteristic function for the random variable with distribution $\Tr(P_{\psi}P_a)$ where $P_a$ projects onto the eigenspace of $A$ with eigenvalue $a \in i \mathbb R$.
    Since the moment map of $\psi$ vanishes, $\Tr(P_{\psi}\arep(X)) = 0$ for all $X \in i\mathfrak k$ and therefore \cref{lem:CLT} becomes applicable.
    In order to apply \cref{lem:CLT}, one first needs to apply a change the variables $A = \frac{A'}{\sqrt{mk}}$ and then take a limit as $k \to \infty$.
    As $k \to \infty$, we have $J(\frac{A'}{\sqrt{mk}}) \to 1$ and $B_{\sqrt{mk} \epsilon}(0) \to \mathfrak k_{\psi}^{\perp}$ so therefore,
    \begin{align}
        &\lim_{k\to\infty} (mk)^{\dim(\mathfrak k_{\psi}^{\perp})/2} \int_{A \in B_{\epsilon}(0)} \diff A J(A) \Tr(P_{\psi}\exp(\arep(A)))^{mk} \\
        &\qquad = \lim_{k\to\infty} \int_{A' \in B_{\sqrt{mk} \epsilon}(0)} \diff A' J(\frac{A'}{\sqrt{mk}})\Tr(P_{\psi}\exp(\arep(\frac{A'}{\sqrt{mk}})))^{mk}, \\
        &\qquad = \int_{A' \in \mathfrak k_{\psi}^{\perp}} \diff A' \lim_{k\to\infty} \Tr(P_{\psi}\exp(\arep(\frac{A'}{\sqrt{mk}})))^{mk}.
    \end{align}
    The exchange between the integral and limit is permitted by Lebesgue's dominated convergence theorem because the integrands above are bounded in absolute value (as $J(A) < 2$ for sufficiently small $\epsilon$ and $\exp(\arep(A)) \in \U(\s H)$ is unitary).    
    Now applying \cref{lem:CLT},
    \begin{equation}
        \lim_{k\to \infty} \Tr\left(P_{\psi}\exp(\arep(\frac{A}{\sqrt{mk}}))\right)^{mk} = \exp\left(- \frac{\Tr(P_{\psi}\arep(A)^2)}{2}\right),
    \end{equation}
    and therefore the integral over $U \subset \mathfrak k_{\psi}^{\perp}$ is approximately Gaussian and thus
    \begin{align}
        \begin{split}
            &\lim_{k\to \infty}(mk)^{\dim(\mathfrak k_{\psi}^{\perp})/2} \int_{\bar u \in U} \diff \bar u \Tr(P_{\psi}\grep(\bar u))^{mk} \\
            &\qquad = \int_{A' \in \mathfrak k_{\psi}^{\perp}} \diff A' \exp\left(- \frac{\Tr(P_{\psi}\arep(A')^2)}{2}\right)
            = \sqrt{\frac{(2\grep)^{\dim(\mathfrak k_{\psi}^{\perp})}}{\det(Q_{\psi})}},
        \end{split}
    \end{align}
    where $Q_{\psi}$ is the non-degenerate quadratic form sending $X \in i\mathfrak k_{\psi}^{\perp}$ to 
    \begin{equation}
        \Tr(P_{\psi}\arep(X)^{2}) = \norm{\arep(X) v}^{2} > 0.
    \end{equation}
    Since $Q_{\psi}$ is positive definite, we have $\det(Q_{\psi}) > 0$.

    As previously mentioned, the integral over the complement of $U$ in $\mathfrak k_{\psi}^{\perp}$ does not affect the above limit because
    the maximum $C \coloneqq \sup_{A \not \in U} \abs{\Tr(P_{\psi}\exp(\arep(A)))}$ is obtained by some $A^{*} \not \in U$ (because the complement of $U$ is compact and $A \mapsto \Tr(P_{\psi}\exp(\arep(A)))$ continuous) and therefore the integral
    \begin{equation}
        \label{eq:integral_far_from_ident}
        \int_{\bar u \not \in U} \diff \bar u \Tr(P_{\psi}\grep(\bar u))^{mk} \leq 2 \int_{A \in B_{\epsilon}(0)} \diff A \abs{\Tr(P_{\psi}\exp(\arep(A)))}^{mk} \leq 2 \mathrm{vol}(B_{\epsilon}(0))C^{mk},
    \end{equation}
    converges to zero at an exponential rate with increasing $k$ and therefore the polynomial prefactor $(mk)^{\dim(\mathfrak k_{\psi}^{\perp})/2}$ does not affect the limiting value.
\end{proof}

\begin{cor}
    \label{cor:useful_occasionality}
    Let everything be as in \cref{thm:occasionality}.
    Then there exists a constant, $L_{\grep} > 0$, independent of $\psi$, such that for arbitrarily large $n \in \mathbb N$,
    \begin{equation}
        \Tr(P_{\psi}^{\otimes n}\fsub{\grep^{\otimes n}}) \geq \frac{L_{\grep}}{n^{\dim(\mathfrak k)/2}}.
    \end{equation}
\end{cor}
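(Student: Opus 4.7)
The plan is to extract from \cref{thm:occasionality} a constant that is uniform in $\psi$ by controlling the $\psi$-dependent quantities $c_\psi \coloneqq \dim(\mathfrak k_\psi^\perp)$, $\det(Q_\psi)$, and the period $m_\psi$. The theorem provides positive integers $m_\psi, c_\psi$ (both depending on $\psi$) for which
\begin{equation}
    \lim_{k\to\infty}(m_\psi k)^{c_\psi/2}\Tr(P_{\psi}^{\otimes m_\psi k}\fsub{\grep^{\otimes m_\psi k}}) = \sqrt{(2\pi)^{c_\psi}/\det(Q_\psi)} > 0,
\end{equation}
so along the subsequence $n = m_\psi k$ the corollary's bound is known to hold with some $\psi$-dependent constant. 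The task is to make that constant uniform, which is enough because the statement only asks for the bound to hold for arbitrarily large $n$ rather than for all sufficiently large $n$.

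First I would obtain a uniform upper bound on $Q_\psi$. Since $\arep : \mathfrak k \to \mathfrak{gl}(\s H)$ is a linear map between finite-dimensional spaces, there exists a constant $C > 0$ depending only on $\arep$ such that $\norm{\arep(X)}_{\mathrm{op}} \leq C \norm{X}$ for every $X \in \mathfrak k$. Writing $Q_\psi(X) = \norm{\arep(X) v}^2$ for $v$ any unit representative of $\psi$, we see $Q_\psi(X) \leq C^2 \norm{X}^2$, so every eigenvalue of $Q_\psi$ restricted to $\mathfrak k_\psi^\perp$ is bounded above by $C^2$, and hence $\det(Q_\psi) \leq C^{2 c_\psi}$. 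Consequently,
\begin{equation}
    \sqrt{(2\pi)^{c_\psi}/\det(Q_\psi)} \geq (2\pi/C^2)^{c_\psi/2} \geq \beta \coloneqq \min\{1, (2\pi/C^2)^{\dim(\mathfrak k)/2}\},
\end{equation}
which provides a positive lower bound on the limit that does not depend on $\psi$.

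Next, by the definition of the limit, for each $\psi$ there exists some $k_\psi \in \mathbb N$ (possibly depending on $\psi$) for which all $k \geq k_\psi$ satisfy $(m_\psi k)^{c_\psi/2}\Tr(P_\psi^{\otimes m_\psi k}\fsub{\grep^{\otimes m_\psi k}}) \geq \beta/2$. Because $c_\psi \leq \dim(\mathfrak k)$ and $m_\psi k \geq 1$ imply $(m_\psi k)^{c_\psi/2} \leq (m_\psi k)^{\dim(\mathfrak k)/2}$, this rearranges to
\begin{equation}
    \Tr(P_\psi^{\otimes m_\psi k}\fsub{\grep^{\otimes m_\psi k}}) \geq \frac{\beta/2}{(m_\psi k)^{c_\psi/2}} \geq \frac{\beta/2}{(m_\psi k)^{\dim(\mathfrak k)/2}}.
\end{equation}
Setting $L_\grep \coloneqq \beta/2$ (which depends only on $\arep$ and $\dim(\mathfrak k)$) and noting that $n = m_\psi k$ ranges over arbitrarily large positive integers as $k \to \infty$, the corollary follows.

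The main obstacle is the $\psi$-independence of the implicit constant produced by the theorem: both the exponent $c_\psi$ and the determinant $\det(Q_\psi)$ a priori vary with $\psi$, and $Q_\psi$ acts on a $\psi$-dependent subspace, which obstructs a direct comparison of determinants across different $\psi$. The resolution, which is the key idea of the argument, is that both the exponent and the operator norm of $Q_\psi$ are controlled purely by $\arep$ and $\mathfrak k$; the seemingly delicate mismatch between the $c_\psi/2$ power of $n$ appearing in the theorem and the $\dim(\mathfrak k)/2$ power requested by the corollary is absorbed by observing that inflating the exponent in the denominator only weakens the bound, so passing from $c_\psi$ to $\dim(\mathfrak k)$ is free.
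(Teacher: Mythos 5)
Your proof is correct and follows essentially the same strategy as the paper: bound $c_\psi = \dim(\mathfrak k_\psi^\perp)$ above by $\dim(\mathfrak k)$, and obtain a $\psi$-independent lower bound on the Gaussian constant $\sqrt{(2\pi)^{c_\psi}/\det(Q_\psi)}$ from \cref{thm:occasionality}. The one place where you diverge is in how the determinant is controlled: the paper observes $Q_\psi(X) = \Tr(\arep(X)P_\psi\arep(X)^*) \leq \Tr(\arep(X)\arep(X)^*) \eqqcolon Q_\grep(X)$ and then writes $\det(Q_\psi) \leq \det(Q_\grep)$, whereas you bound each eigenvalue of $Q_\psi$ restricted to $\mathfrak k_\psi^\perp$ by the operator-norm constant $C^2$ of the linear map $\arep$. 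Your route is in fact a bit more careful than the paper's on one point: $Q_\psi$ lives on the $\psi$-dependent subspace $\mathfrak k_\psi^\perp$ of varying dimension $c_\psi$, so the comparison $\det(Q_\psi) \leq \det(Q_\grep)$ as stated in the paper has a dimension mismatch (the restriction of $Q_\grep$ to $\mathfrak k_\psi^\perp$ still depends on $\psi$). Your eigenvalue-by-eigenvalue argument sidesteps this and keeps transparent track of the exponent $c_\psi$ throughout, which pays off when you absorb the $(2\pi)^{c_\psi}$ factor and the $c_\psi/2 \leq \dim(\mathfrak k)/2$ exponent swap into the single constant $\beta$. Both arguments establish the corollary; yours is the more airtight write-up.
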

\begin{proof}
    The proof uses the fact that the constant $c = \dim(\mathfrak k_{\psi}^{\perp})$ in the proof of \cref{thm:occasionality} can be bounded by
    \begin{equation}
        0 \leq \dim(\mathfrak k_{\psi}^{\perp}) \leq \dim(\mathfrak k).
    \end{equation}
    Furthermore, since $Q_{\grep}(X) \coloneqq \Tr(\arep(X)\arep(X)^{*}) \geq \Tr(\arep(X) P_{\psi} \arep(X)^{*}) = Q_{\psi}(X)$, we have $\det(Q_{\grep}) \geq \det(Q_{\psi})$. 
    Setting $L_{\grep} = \frac{1}{2}Q_{\grep}^{-1/2}$ then proves the claim.
\end{proof}

The following corollary \cref{thm:occasionality} hides many of the constants appearing in \cref{thm:occasionality} which will be unnecessary when applied later in \cref{sec:strong_duality}.
\begin{cor}
    \label{cor:refined_occasionality}
    Let $\grep : G \to \GL(\s H)$ and let $\psi \in \proj \s H$ have vanishing moment map, i.e. $\momap(\psi) = 0$. Then,
    \begin{equation}
        \limsup_{n \to \infty} \Tr(P_{\psi}^{\otimes n}\fsub{\grep^{\otimes n}})^{\frac{1}{n}} = 1 
    \end{equation}
    where $\fsub{\grep^{\otimes n}}$ is the projection operator onto the fixed subspace of degree $n$ (\cref{defn:fixed_subspace_n}).
\end{cor}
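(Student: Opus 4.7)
The plan is to establish the equality by combining a trivial upper bound with a lower bound extracted from \cref{thm:occasionality} along an appropriate subsequence.

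First, I would observe that the upper bound $\limsup_{n \to \infty} \Tr(P_{\psi}^{\otimes n}\fsub{\grep^{\otimes n}})^{\frac{1}{n}} \leq 1$ is immediate. Indeed, $P_{\psi}^{\otimes n}$ is a rank-one projector and $\fsub{\grep^{\otimes n}}$ is a projection operator, so $\Tr(P_{\psi}^{\otimes n}\fsub{\grep^{\otimes n}})$ is the probability that $\psi^{\otimes n}$ lies in the fixed subspace $(\s H^{\otimes n})^{\grep}$; in particular it lies in $[0,1]$. Taking $n$th roots and passing to the $\limsup$ yields the upper bound.

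For the matching lower bound, the plan is to apply \cref{thm:occasionality} directly. Since $\momap(\psi) = 0$ by hypothesis, the theorem provides positive integers $m, c \in \mathbb N$ together with a positive constant $C \coloneqq \sqrt{(2\pi)^{c}/\det(Q_{\psi})} > 0$ such that
\begin{equation}
    \lim_{k \to \infty} (mk)^{c/2}\,\Tr(P_{\psi}^{\otimes mk}\fsub{\grep^{\otimes mk}}) = C.
\end{equation}
Consequently, along the subsequence $n_k = mk$, we have $\Tr(P_{\psi}^{\otimes n_k}\fsub{\grep^{\otimes n_k}}) = C n_k^{-c/2}(1+o(1))$ as $k \to \infty$. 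Taking $n_k$th roots gives
\begin{equation}
    \Tr(P_{\psi}^{\otimes n_k}\fsub{\grep^{\otimes n_k}})^{1/n_k} = \bigl(C(1+o(1))\bigr)^{1/n_k} \cdot n_k^{-c/(2n_k)} \longrightarrow 1,
\end{equation}
since both factors tend to $1$ as $k \to \infty$ (the first because $C$ is a fixed positive constant and $1+o(1)$ stays in a compact subset of $(0,\infty)$, the second because $\log(n_k)/n_k \to 0$). Therefore $\limsup_{n \to \infty} \Tr(P_{\psi}^{\otimes n}\fsub{\grep^{\otimes n}})^{1/n} \geq 1$, and combined with the upper bound this yields the claimed equality.

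There is no real obstacle here: the main content was already done in the delicate stationary-phase argument of \cref{thm:occasionality}, and this corollary merely repackages that polynomial-rate lower bound as a statement about exponential rates, which washes out the polynomial prefactor $(mk)^{c/2}$. The only minor point worth being explicit about is restricting attention to the arithmetic subsequence $n = mk$; the $\limsup$ (rather than $\lim$) in the statement is precisely what accommodates the possibility that $\Tr(P_{\psi}^{\otimes n}\fsub{\grep^{\otimes n}})$ might vanish for $n$ outside the sub-semigroup $m\mathbb N$, as noted in the discussion following Fekete's lemma.
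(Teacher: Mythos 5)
Your proof is correct and follows essentially the same route as the paper: the trivial upper bound from $\Tr(P_{\psi}^{\otimes n}\fsub{\grep^{\otimes n}}) \in [0,1]$, the lower bound from \cref{thm:occasionality} along the arithmetic subsequence $n = mk$, and the observation that taking $n$th roots washes out the polynomial prefactor $(mk)^{c/2}$. The only cosmetic difference is that the paper phrases the subsequence estimate with an explicit $\epsilon$-argument where you write $C(1+o(1))$; the content is the same.
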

\begin{proof}
    Since the moment map of $\psi$ vanishes, we can apply \cref{thm:occasionality} to obtain positive integers $m,c\in \mathbb N$ such that the following limit, denoted by $L$, exists and is strictly positive. 
    \begin{equation}
        L \coloneqq \lim_{k \to \infty} (mk)^{c/2}\Tr(P_{\psi}^{\otimes mk}\fsub{\grep^{\otimes mk}}) > 0.
    \end{equation}
    Therefore, for any $\epsilon \in (0, L)$, there exists a sufficiently large $K \in \mathbb N$, such that for all $k \geq K$,
    \begin{equation}
        (mk)^{c/2}\Tr(P_{\psi}^{\otimes mk}\fsub{\grep^{\otimes mk}}) \geq L' \coloneqq L - \epsilon > 0,
    \end{equation}
    which implies for all $k \geq K$,
    \begin{equation}
        1 \geq \Tr(P_{\psi}^{\otimes mk}\fsub{\grep^{\otimes mk}}) \geq \frac{L'}{(mk)^{c/2}}.
    \end{equation}
    Taking the $(mk)$-th root of both sides therefore yields for all $k \geq K$,
    \begin{equation}
        1 \geq \Tr(P_{\psi}^{\otimes mk}\fsub{\grep^{\otimes mk}})^{\frac{1}{mk}} \geq \left(\frac{L'}{(mk)^{c/2}}\right)^{\frac{1}{mk}}.
    \end{equation}
    Therefore,
    \begin{equation}
        1 \geq \limsup_{n\to\infty} \Tr(P_{\psi}^{\otimes n}\fsub{\grep^{\otimes n}})^{\frac{1}{n}} \geq \limsup_{n\to\infty} \left(\frac{L'}{n^{c/2}}\right)^{\frac{1}{n}} = 1,
    \end{equation}
    which proves the claim.
\end{proof}

\subsection{Strong duality}
\label{sec:strong_duality}

The following is \cite[Lem. 2.2]{franks2020minimal}.
\begin{lem}
    \label{lem:cap_lower_bound}
    Let $v \in \s V$ and $\grep : G \to \GL(\s V)$ a representation of $G$. Then for all $n \in \mathbb N$, 
    \begin{equation}
        \capacity(v) \geq \norm{\fsub{\grep^{\otimes n}} v^{\otimes n}}^{\frac{1}{n}}.
    \end{equation}
\end{lem}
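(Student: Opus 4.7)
The plan is to exploit the fact that $\fsub{\grep^{\otimes n}}$ is the orthogonal projector onto a $G$-invariant subspace on which $G$ acts trivially. First I would establish the identity
\begin{equation}
    \fsub{\grep^{\otimes n}} \grep^{\otimes n}(g) = \fsub{\grep^{\otimes n}}, \qquad \forall g \in G.
\end{equation}
This can be seen as follows: any vector in the image of $\fsub{\grep^{\otimes n}}$ lies in $(\s V^{\otimes n})^{\grep}$ and is therefore fixed by $\grep^{\otimes n}(g)$, which immediately gives $\grep^{\otimes n}(g)\fsub{\grep^{\otimes n}} = \fsub{\grep^{\otimes n}}$ for all $g \in G$. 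Taking adjoints (using that $\fsub{\grep^{\otimes n}}$ is self-adjoint because of the integral representation \eqref{eq:stable_projector_degree_n} and the unitarity of $\grep(k)$ for $k\in K$) yields $\fsub{\grep^{\otimes n}}\grep^{\otimes n}(g^{*}) = \fsub{\grep^{\otimes n}}$; since $g \mapsto g^{*}$ is a bijection of $G$, the claimed identity follows.

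Next I would apply this identity to $v^{\otimes n}$. Using $\grep^{\otimes n}(g)v^{\otimes n} = (\grep(g)v)^{\otimes n}$, we obtain for every $g \in G$,
\begin{equation}
    \fsub{\grep^{\otimes n}} v^{\otimes n}
    = \fsub{\grep^{\otimes n}} \grep^{\otimes n}(g) v^{\otimes n}
    = \fsub{\grep^{\otimes n}} (\grep(g) v)^{\otimes n}.
\end{equation}
Since $\fsub{\grep^{\otimes n}}$ is an orthogonal projector its operator norm is bounded by one, so
\begin{equation}
    \norm{\fsub{\grep^{\otimes n}} v^{\otimes n}}
    = \norm{\fsub{\grep^{\otimes n}} (\grep(g) v)^{\otimes n}}
    \leq \norm{(\grep(g) v)^{\otimes n}}
    = \norm{\grep(g) v}^{n}.
\end{equation}

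Finally, taking the infimum over $g \in G$ on the right-hand side gives $\norm{\fsub{\grep^{\otimes n}} v^{\otimes n}} \leq \capacity(v)^{n}$, and extracting the $n$-th root yields the lemma. There is no substantive obstacle here; the proof is essentially a one-line exploitation of the $G$-invariance of the fixed subspace combined with the contractivity of orthogonal projectors.
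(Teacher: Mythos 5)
Your proof is correct and takes essentially the same route as the paper: both exploit the identity $\fsub{\grep^{\otimes n}} \grep^{\otimes n}(g) = \fsub{\grep^{\otimes n}}$ together with contractivity of the orthogonal projector and then take the infimum over $g \in G$. The only difference is that you spell out the derivation of that identity (via the fixed-subspace characterization, self-adjointness of the projector, and the $*$-involution on $G$), whereas the paper invokes it implicitly in a single equality.
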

\begin{proof}
    For every $n \in \mathbb N$ and $g \in G$,
    \begin{equation}
        \norm{\grep(g)v} = \norm{\grep(g)^{\otimes n}v^{\otimes n}}^{\frac{1}{n}} \geq \norm{\fsub{\grep^{\otimes n}}\grep(g)^{\otimes n}v^{\otimes n}}^{\frac{1}{n}} = \norm{\fsub{\grep^{\otimes n}}v^{\otimes n}}^{\frac{1}{n}}.
    \end{equation}
    The claim follows because $\capacity(v)$ is defined as the largest lower bound on $\norm{\grep(g)v}$.
\end{proof}
Combining the lower bound on capacity offered by \cref{lem:cap_lower_bound}, together with the lower bound on the norm of $\fsub{\grep^{\otimes n}}v^{\otimes n}$ offered by \cref{thm:occasionality} (or more accurately \cref{cor:refined_occasionality}), one can prove \cite[Thm. 1.1]{franks2020minimal}, which is stated below.
\begin{thm}
    \label{thm:strong_duality}
    Let $\grep : G \to \GL(\s V)$ be a representation of a complex reductive group $G$. 
    Then for all vectors $v \in \s V$,
    \begin{equation}
        \label{eq:capacity_semiclassical_limit}
        \capacity(v) = \inf_{g \in G} \norm{\grep(g) v} = \limsup_{n\to\infty} \norm{\fsub{\grep^{\otimes n}} v^{\otimes n}}^{\frac{1}{n}}.
    \end{equation}
\end{thm}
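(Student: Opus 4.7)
The first equality holds by definition of $\capacity$, so the substance of the theorem is the second equality, which I would prove by establishing the two inequalities separately. The easy direction, $\capacity(v) \ge \limsup_{n\to\infty} \norm{\fsub{\grep^{\otimes n}} v^{\otimes n}}^{1/n}$, is an immediate consequence of Lemma~\ref{lem:cap_lower_bound}: since $\capacity(v) \ge \norm{\fsub{\grep^{\otimes n}} v^{\otimes n}}^{1/n}$ holds for every individual $n$, the same bound survives taking $\limsup$.

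For the reverse inequality, the case $\capacity(v)=0$ is vacuous, so assume $v$ is semistable. The first step of my plan is to observe that the quantity $\norm{\fsub{\grep^{\otimes n}} v^{\otimes n}}$ is constant on the orbit closure $\overline{\grep(G) v}$. Because every $w \in (\s H^{\otimes n})^\grep$ is fixed by the full group action, and in particular by $\grep^{\otimes n}(g^{*}) = (\grep^{\otimes n}(g))^{*}$, one obtains $\langle w, \grep^{\otimes n}(g) v^{\otimes n}\rangle = \langle w, v^{\otimes n}\rangle$, whence $\fsub{\grep^{\otimes n}} \grep^{\otimes n}(g) v^{\otimes n} = \fsub{\grep^{\otimes n}} v^{\otimes n}$; continuity of the polynomial map $u \mapsto \norm{\fsub{\grep^{\otimes n}} u^{\otimes n}}$ then extends this equality from the orbit to the closure.

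The second step is to reduce to a norm-minimizing representative. By Definition~\ref{defn:capacity}, $\capacity(v) = \min_{w \in \overline{\grep(G)v}} \norm{w}$, and the minimum is attained by some $w^{\star} \in \overline{\grep(G)v}$ with $\norm{w^{\star}} = \capacity(v) > 0$. Because $w^{\star}$ achieves the minimal norm throughout its orbit in particular, the Kempf--Ness theorem (Theorem~\ref{thm:kempf_ness_theorem}) guarantees that $\momap([w^{\star}]) = 0$. Applying Corollary~\ref{cor:refined_occasionality} to the ray $[w^{\star}]$ yields $\limsup_{n\to\infty} \Tr(P_{[w^{\star}]}^{\otimes n} \fsub{\grep^{\otimes n}})^{1/n} = 1$, and the identity
\begin{equation}
    \Tr\bigl(P_{[w^{\star}]}^{\otimes n} \fsub{\grep^{\otimes n}}\bigr) = \frac{\norm{\fsub{\grep^{\otimes n}} (w^{\star})^{\otimes n}}^{2}}{\norm{w^{\star}}^{2n}}
\end{equation}
rearranges this into $\limsup_{n\to\infty} \norm{\fsub{\grep^{\otimes n}} (w^{\star})^{\otimes n}}^{1/n} = \norm{w^{\star}} = \capacity(v)$. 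Combining with the orbit-closure constancy from the first step then transfers the conclusion to $v$ itself.

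The main obstacle I anticipate is the reduction to the polystable representative $w^{\star}$: one must simultaneously use that a norm-minimizer actually exists in the orbit closure (a standard compactness argument, but one that tacitly relies on semistability to exclude the origin) and that this minimizer can be promoted to a vector whose ray has vanishing moment map via Kempf--Ness. Once that bridge is in place, the occasionality theorem from Section~\ref{sec:occasionality} supplies the remaining polynomial-versus-exponential asymptotic, and the orbit invariance of $\norm{\fsub{\grep^{\otimes n}} (\cdot)^{\otimes n}}$ does the rest.
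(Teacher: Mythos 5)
Your proposal is correct and follows essentially the same route as the paper's own proof: the easy direction via Lemma~\ref{lem:cap_lower_bound}, constancy of $\norm{\fsub{\grep^{\otimes n}}(\cdot)^{\otimes n}}$ on orbit closures, reduction to a norm-minimizer $w^{\star}$ whose moment map vanishes by Kempf--Ness, and Corollary~\ref{cor:refined_occasionality} to conclude. The one small bonus in your version is that you spell out the orbit-closure constancy directly through the $*$-involution identity $\grep^{\otimes n}(g)^{*} = \grep^{\otimes n}(g^{*})$, whereas the paper merely cites an earlier lemma and continuity.
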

\begin{proof}
    That $\capacity(v) \geq \limsup_{n\to\infty} \norm{\fsub{\grep^{\otimes n}} v^{\otimes n}}^{\frac{1}{n}}$ holds for any vector $v$ follows directly from \cref{lem:cap_lower_bound}.
    The reverse inequality can be deduced by first recognizing that the map $v \mapsto \norm{\fsub{\grep^{\otimes n}} v^{\otimes n}}$ is $G$-invariant and continuous (see the proof of \cref{lem:projection_fixed_subspace}) and thus constant on $G$-orbit closures.
    Moreover, if $w \in \overline{\grep(g)v}$ is in the $G$-orbit closure of $v$, the $G$-orbit closure of $w$ satisfies $\overline{\grep(g) w} \subseteq \overline{\grep(g) v}$ and thus $\capacity(v) \leq \capacity(w)$.
    In other words, to prove the reverse inequality, it suffices to prove $\capacity(w) = \limsup_{n\to\infty} \norm{\fsub{\grep^{\otimes n}} w^{\otimes n}}^{\frac{1}{n}}$ for some $w \in \overline{\grep(g)v}$.

    If $v$ is unstable, then zero belongs to its orbit closure, $0 \in \overline{\grep(g)v}$, and then \cref{lem:cap_lower_bound} holds because the capacity of the zero vector is trivially zero, $\capacity(0) = 0$.
    On the other hand, if $v$ is semistable, then there exists a non-zero vector $w \in \wozero{\s V}$ in the closure of the $G$-orbit of $v$ such that i) the ray $[w] \in \proj \s V$ has vanishing moment map, i.e. $\momap([w]) = 0$, and ii) the vector $w$ has norm attaining the capacity of $v$, i.e. $\capacity(v) = \capacity(w) = \norm{w}$.
    Applying \cref{cor:refined_occasionality} to the ray $[w] \in \proj \s V$ then yields 
    \begin{equation}
        \limsup_{n\to\infty} \Tr(P_{[w]}^{\otimes n}\fsub{\grep^{\otimes n}})^{\frac{1}{n}} = \limsup_{n\to\infty} \left(\frac{\norm{\fsub{\grep^{\otimes n}} w^{\otimes n}}^{2}}{\norm{w}^{2}}\right)^{\frac{1}{n}} = 1,
    \end{equation}
    Therefore, we have
    \begin{equation}
        \capacity^2(w) = \norm{w}^{2} = \limsup_{n\to\infty} \norm{\fsub{\grep^{\otimes n}} w^{\otimes n}}^{\frac{2}{n}}.
    \end{equation}
\end{proof}
The following result, which provides the converse to \cref{cor:refined_occasionality}, weakens the statement of \cref{thm:strong_duality} to consider the special case when the moment map vanishes and the capacity is maximized.
\begin{cor}
    \label{cor:momap_vanishing_test}
    Let $\grep : G \to \GL(\s V)$ be a representation of a complex reductive group $G$ with associated moment map $\momap : \proj \s H \to i\mathfrak k^{*}$.
    Let $\psi \in \proj \s H$, then
    \begin{align}
        \momap(\psi) = 0    &\quad \Longrightarrow \quad \limsup_{n\to\infty} \Tr(P_{\psi}^{\otimes n} \fsub{\grep^{\otimes n}})^{\frac{1}{n}} = 1, \\
        \momap(\psi) \neq 0 &\quad \Longrightarrow \quad \limsup_{n\to\infty} \Tr(P_{\psi}^{\otimes n} \fsub{\grep^{\otimes n}})^{\frac{1}{n}} < 1.
    \end{align}
\end{cor}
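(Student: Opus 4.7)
The plan is to unify both implications by expressing $\Tr(P_{\psi}^{\otimes n}\fsub{\grep^{\otimes n}})$ in terms of the capacity map, and then invoking the Kempf-Ness theorem to convert the moment map condition into a capacity condition. First I would pick a unit representative $v \in \wozero{\psi}$ so that $P_{\psi} = \ketbra{v}$ and $\norm{v} = 1$. Since $\fsub{\grep^{\otimes n}}$ is an orthogonal projection operator, it is idempotent and self-adjoint, so
\begin{equation}
    \Tr(P_{\psi}^{\otimes n}\fsub{\grep^{\otimes n}}) = \braket{v^{\otimes n}, \fsub{\grep^{\otimes n}} v^{\otimes n}} = \norm{\fsub{\grep^{\otimes n}} v^{\otimes n}}^{2}.
\end{equation}

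The key step is then to apply the strong duality result (\cref{thm:strong_duality}) to the vector $v$, which gives
\begin{equation}
    \capacity(v) = \limsup_{n\to\infty} \norm{\fsub{\grep^{\otimes n}} v^{\otimes n}}^{\frac{1}{n}}.
\end{equation}
Combining with the identity above yields
\begin{equation}
    \limsup_{n\to\infty} \Tr(P_{\psi}^{\otimes n}\fsub{\grep^{\otimes n}})^{\frac{1}{n}} = \capacity(v)^{2}.
\end{equation}

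Now I would invoke the Kempf-Ness theorem (\cref{thm:kempf_ness_theorem}) to translate between the vanishing of the moment map and the value of the capacity. On the one hand, if $\momap(\psi) = 0$, the theorem gives $\norm{\grep(g) v} \geq \norm{v} = 1$ for all $g \in G$, so $\capacity(v) = 1$ and the first implication follows (this also recovers \cref{cor:refined_occasionality}). On the other hand, if $\momap(\psi) \neq 0$, then there must exist some $g \in G$ with $\norm{\grep(g) v} < \norm{v} = 1$, which forces $\capacity(v) < 1$ and hence $\capacity(v)^{2} < 1$, giving the second implication.

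The main obstacle is essentially bookkeeping rather than any deep argument: the work has already been done by \cref{thm:strong_duality} (which itself packages the occasionality theorem and the Kempf-Ness theorem together). The only subtlety to check is the interpretation of $\Tr(P_{\psi}^{\otimes n}\fsub{\grep^{\otimes n}})$ as the squared norm of a projected vector, which relies on the choice of a unit representative and the fact that this quantity depends only on the ray $\psi$, not on the chosen representative $v \in \wozero{\psi}$, since rescaling $v$ by a complex phase leaves both $P_{\psi}^{\otimes n}$ and the norm unchanged.
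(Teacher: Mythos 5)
Your proposal is correct and follows essentially the same approach as the paper's proof, which also cites exactly the Kempf-Ness theorem, the strong duality theorem, and the identity $\Tr(P_{\psi}^{\otimes n}\fsub{\grep^{\otimes n}}) = \norm{\fsub{\grep^{\otimes n}} v^{\otimes n}}^{2}/\norm{v}^2$. Your version just spells out the small bookkeeping step of squaring the limsup, and the observation that $\momap(\psi)\neq 0$ forces strict inequality $\capacity(v)<\norm{v}$ because $\capacity(v)\leq\norm{v}$ always holds.
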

\begin{proof}
    The result follows from i) the correspondence between a vanishing moment map and maximum capacity provided by the Kempf-Ness theorem~\cref{thm:kempf_ness_theorem}, ii) the strong-duality theorem, \cref{thm:strong_duality}, obtained above, and iii) the fact that 
    \begin{equation}
        \Tr(P_{\psi}^{\otimes n} \fsub{\grep^{\otimes n}}) = \frac{\norm{\fsub{\grep^{\otimes n}} v}^2}{\norm{v}^{2}}
    \end{equation}
    for any non-zero $v \in \wozero{\psi} \in \s H$.
\end{proof}
One way to interpret the statement of \cref{cor:momap_vanishing_test} is as an asymptotic test for whether or not the moment map of a given ray vanishes. 
Specifically, \cref{cor:momap_vanishing_test} demonstrates that $\momap(\psi) \neq 0$ if and only if the sequence of probabilities, $\Tr(P_{\psi}^{\otimes n} \fsub{\grep^{\otimes n}}) \in [0,1]$, approaches zero at an \textit{exponential} rate with increasing $n$.

\chapter{Estimation theory}
\label{chap:estimation_theory}
\section{Estimating a probability distribution}
\label{sec:sanov}

Consider a classic problem of estimation theory, namely, the problem of estimating the probability distribution for a random process yielding a finite number of distinct outcomes.
Suppose the number of outcomes is $d \in \mathbb N$ and the outcomes are labelled by the elements of the set $[d] = \{1, \ldots, d\}$.
As is standard, the set of all probability distributions with $d$ outcomes can be identified with the $(d-1)$-dimensional simplex, $\Delta_{d}$, consisting of tuples of $d$ non-negative real numbers summing to one:
\begin{equation}
    \Delta_{d} \coloneqq \{ (p_1, \ldots, p_d) \in \mathbb R_{\geq 0}^{d} \mid p_1 + \cdots + p_d = 1\}.
\end{equation}
Assume that the outcomes of the process are produced independently on repeated trials.
Furthermore, assume that the individual outcomes are produced according a known distribution, given by
\begin{equation}
    q = (q_1, \ldots, q_d) \in \Delta_{d}.
\end{equation}
Under these assumptions, the probability of producing a sequence of type $\lambda = (\lambda_1, \ldots, \lambda_d)$, where $\lambda_j \in \mathbb N$ denotes the occurrences of outcome $j$, is given by the multinomial distribution,
\begin{equation}
    m_q(\lambda) = \frac{n!}{\lambda_1! \cdots \lambda_d!} q_1^{\lambda_1} \cdots q_d^{\lambda_d}.
\end{equation}
where $n = \lambda_1 + \cdots + \lambda_d$ is the total number of outcomes.
If instead the distribution $q$ is not known, but a sequence of type $\lambda$ has been observed, then a somewhat reasonable estimate for $q$ is the distribution $p$ which maximizes the value of $m_p(\lambda)$.
A direct calculation reveals that the so-called maximum likelihood estimate, $p$, is equal to $\lambda / n$, i.e., the relative frequencies of each outcome,
\begin{equation}
    \label{eq:relative_freq}
    p = (p_1, \ldots, p_d) = \left(\frac{\lambda_1}{n}, \ldots, \frac{\lambda_d}{n}\right).
\end{equation}
In practice, when the length of the sequence is small, the maximum likelihood estimate $p$ tends to be a poor estimate for the value of $q$.
Of course, as $n$ becomes larger, the quality of the estimate $p$ gets better.

One particularly elegant way to quantify the relationship between source distribution $q$ and the estimate $p$ as a function of $n$ is a classical result due to Sanov~\cite{sanov1961probability,dembo2010large}, which states
\begin{equation}
    \label{eq:multinomial_bounds}
    (n+1)^{-d} \exp(- n \kl{p}{q}) \leq m_q(\lambda) \leq \exp(-n \kl{p}{q}),
\end{equation}
where $\kl{p}{q}$ is a non-negative quantity known as the relative entropy or Kullback-Liebler divergence~\cite{kullback1997information, baez2014bayesian}, and is defined as
\begin{equation}
    \label{eq:early_KL}
    \kl{p}{q} = \sum_{j=1}^{d} p_j (\log p_j - \log q_j)
\end{equation}
whenever $q_j = 0$ implies $p_j = 0$, and $\kl{p}{q} = \infty$ otherwise.
Note that the relative entropy, $\kl{p}{q}$, only vanishes when $p = q$, and therefore two observations follow.
First, the probability of producing an estimate $p$ which deviates from the correct distribution $q$ decays at an exponential rate with increasing $n$, as quantified by $\kl{p}{q} > 0$.
Second, the probability of producing an estimate which is exactly equal to the correct distribution, $p = q$, and thus $\kl{p}{q} = 0$, can be lower bounded by the reciprocal of a factor that is polynomial in the value of $n$, i.e., $(n+1)^{d}$.

In \cref{chap:invariant_theory}, we already witnessed a similar distinction between the exponential decay of probabilities on one hand (\cref{lem:cap_lower_bound}), and the polynomial lower bound on probabilities on the other (\cref{thm:occasionality}).
The purpose of this chapter is expand upon this connection in the context of estimating properties of quantum processes.
To begin, we wish to demonstrate how to recover \cref{eq:multinomial_bounds} using the techniques from \cref{chap:invariant_theory}.

In quantum theory, the most straight-forward way to model a random process with $d$ distinct outcomes is to consider a $d$-dimensional vector space $\mathbb C^{d}$, with standard orthonormal basis $\{e_1, \ldots, e_d\}$, along with a fixed vector $v \in \mathbb C^{d}$ with unit norm $\norm{v} = 1$.
For each $j \in [d]$, let $P_j = e_j e_j*$ denote the orthogonal projection operator onto the subspace spanned by $e_j$.
Then, the coefficients in the decomposition of the vector $v$ into the standard orthonormal basis generate a probability distribution, given by
\begin{equation}
    \label{eq:q_new}
    q_v \coloneqq ( \norm{P_1 v}^{2}, \ldots, \norm{P_d v}^{2} ) = ( \abs{v_1}^2, \ldots, \abs{v_d}^{2}) \in \Delta_d.
\end{equation}
Note that fixing an orthonormal basis is equivalent to fixing a representation, $\grep : \U(1)^{d} \to \U(d)$, of the $d$-dimensional torus, $\U(1)^{d}$, acting on $\mathbb C^{d}$, such that the matrix form of the representation is equal to
\begin{equation}
    \label{eq:diag_torus_rep}
    \grep(e^{i \theta_1}, \ldots, e^{i \theta_d}) = \mathrm{diag}(e^{i \theta_1}, \ldots, e^{i \theta_d}) =
    \begin{pmatrix}
        e^{i\theta_1} & 0 &\cdots & 0 \\
        0 & e^{i\theta_2} & \cdots & 0 \\
        \vdots & \vdots & \ddots & \vdots \\
        0 & 0 & \cdots & e^{i \theta_d} \\
    \end{pmatrix}.
\end{equation}
Since $\U(1)^{d}$ is an abelian group, all of its irreducible representations are one-dimensional; indeed, for each $j \in [d]$, the subspace spanned by $e_j$ is an invariant subspace of the representation in \cref{eq:diag_torus_rep}.
In general, the irreducible representations of $\U(1)^{d}$ are indexed by highest weights, which for $\U(1)^{d}$, can be identified by $d$-tuples of non-negative integers $\lambda = (\lambda_1, \ldots, \lambda_d) \in \mathbb N_{\geq 0}^{d}$, where $\grep_{\lambda}$ has the form
\begin{equation}
    \grep_{\lambda}(e^{i \theta_1}, \ldots, e^{i \theta_d}) = e^{i (\lambda_1 \theta_1 + \cdots + \lambda_d \theta_d)}.
\end{equation}
Now consider the $n$th tensor power, $\grep^{\otimes n} : \U(1)^{d} \to \U(d^{n})$, of the representation from \cref{eq:diag_torus_rep}.
The multiplicity of irreducible representation $\grep_{\lambda}$, indexed by $\lambda = (\lambda_1, \ldots, \lambda_d)$, appearing inside the representation $\grep^{\otimes n}$ then corresponds to the multinomial coefficient,
\begin{equation}
    \Tr(\hwsub{\lambda}{\grep^{\otimes n}}) = \frac{n!}{\lambda_1 ! \cdots \lambda_d !}.
\end{equation}
Furthermore, if $\hwsub{\lambda}{\grep^{\otimes n}}$ denotes the projection operator onto the subspace of weight $\lambda$, then the probability distribution associated to the decomposition of $v^{\otimes n}$ into the various weight spaces is equal to the multinomial distribution associated to distribution $q_v$,
\begin{equation}
    \label{eq:multinomial_norm}
    m_{q_v}(\lambda) = \norm{\hwsub{\lambda}{\grep^{\otimes n}} v^{\otimes n}}^{2} = \frac{n!}{\lambda_1 ! \cdots \lambda_d !} \abs{v_1}^{2\lambda_1} \cdots \abs{v_d}^{2\lambda_d}.
\end{equation}
Our strategy for deriving the bounds on $m_{q_v}(\lambda)$ from \cref{eq:multinomial_bounds} is to consider the action of the complexified torus on $v \in \mathbb C^d$.
Since the Lie algebra of $K = \U(1)^{d}$ consists of $d$ purely imaginary numbers $\mathfrak k = (i\theta_1, \ldots, i\theta_d) \in i \mathbb R^{d}$, the complexified Lie algebra consists of all complex numbers, $\mathfrak k \oplus i \mathfrak k = \mathbb C^d$, and the corresponding complexified Lie group is the \textit{complex} torus, $G = \wozero{\mathbb C}^{d}$.
The representation in \cref{eq:diag_torus_rep} then lifts to a representation for $\wozero{\mathbb C}^{d}$ of the form
\begin{equation}
    \grep(e^{x_1 + i \theta_1}, \ldots, e^{x_d + i \theta_d}) = \mathrm{diag}(e^{x_1 + i \theta_1}, \ldots, e^{x_d + i \theta_d}).
\end{equation}
Now consider an element $x = (x_1, \ldots, x_d) \in \mathbb R^{d} \cong i \mathfrak k$, along with the group element $g \in \mathbb R_{>0}^{d} \subseteq \wozero{\mathbb C}^{d}$ with completely real exponents,
\begin{equation}
    g = e^{x} \coloneqq (e^{x_1}, \ldots, e^{x_d}).
\end{equation}
In this setting, the moment map evaluated on the ray $[v] \in \proj(\mathbb C^{d})$ satisfies
\begin{equation}
    \momap([v])(x) = \braket{v, \arep(x)v} = \sum_{j=1}^{d} x_j \abs{v_j}^{2},
\end{equation}
In other words, the moment map, $\momap([v])$, is essentially equivalent to the probability distribution $q_j = \abs{v_j}^{2}$ defined by \cref{eq:q_new}.
Furthermore, when $g = (e^{x_1}, \ldots, e^{x_d})$ acts on $v$, we obtain a new vector, $v_x \in \mathbb C^d$, of the form
\begin{equation}
    v_x \coloneqq \grep(e^{x}) v = ( e^{x_1}v_1, \ldots, e^{x_d} v_d)
\end{equation}
with norm squared
\begin{equation}
    \norm{v_x}^2 = \norm{\grep(e^{x}) v}^{2} = \sum_{j=1}^{d} e^{2x_j}\abs{v_j}^{2}.
\end{equation}
Using the fact that $\grep(g^{-1})\grep(g) = \ident$ for all $g$, one obtains
\begin{equation}
    \label{eq:adjusted_likelihood1}
    \norm{\hwsub{\lambda}{\grep^{\otimes n}} v^{\otimes n}} = \norm{\hwsub{\lambda}{\grep^{\otimes n}} \grep(e^{-x})^{\otimes n} \grep(e^{x})^{\otimes n} v^{\otimes n}} = e^{-\braket{\lambda, x}} \norm{\hwsub{\lambda}{\grep^{\otimes n}}v_x^{\otimes n}},
\end{equation}
where the prefactor $e^{-\braket{\lambda, x}}$ arises from the fact that $\hwsub{\lambda}{\grep^{\otimes n}}$ is a weight space of weight $\lambda$ and thus $\hwsub{\lambda}{\grep^{\otimes n}}\grep(e^{-x})^{\otimes n} = e^{- \braket{\lambda, x}} \hwsub{\lambda}{\grep^{\otimes n}}$.
Letting $u_x \coloneqq v_x / \norm{v_x}$ be the normalization of $v_x$ (note that $v = u_0$), \cref{eq:adjusted_likelihood1} may be expressed as
\begin{equation}
    \label{eq:adjusted_likelihood2}
    \norm{\hwsub{\lambda}{\grep^{\otimes n}} v^{\otimes n}} = e^{-\braket{\lambda, x}} \norm{v_x}^{n} \norm{\hwsub{\lambda}{\grep^{\otimes n}}u_x^{\otimes n}}.
\end{equation}
If $p = (p_1, \ldots, p_d)$ is now defined according to \cref{eq:relative_freq}, so that $n p = \lambda$, we conclude
\begin{equation}
    \label{eq:adjusted_likelihood3}
    \norm{\hwsub{\lambda}{\grep^{\otimes n}} v^{\otimes n}} = (e^{-\braket{p, x}} \norm{v_x})^{n} \norm{\hwsub{\lambda}{\grep^{\otimes n}}u_x^{\otimes n}}.
\end{equation}
From \cref{eq:adjusted_likelihood3} we can derive bounds on the likelihood $\norm{\hwsub{\lambda}{\grep^{\otimes n}} v^{\otimes n}}^2$ from bounds on $\norm{\hwsub{\lambda}{\grep^{\otimes n}}u_x^{\otimes n}}^2$.
To obtain the upper bound in \cref{eq:multinomial_bounds} it suffices to note that $\hwsub{\lambda}{\grep^{\otimes n}}$ is a projection operator and $u_x$ is a unit norm vector so that
\begin{equation}
    \norm{\hwsub{\lambda}{\grep^{\otimes n}}u_x^{\otimes n}} \leq 1.
\end{equation}
Then, by optimizing over all $x \in \mathbb R^{d}$, we obtain the upper bound
\begin{equation}
    \norm{\hwsub{\lambda}{\grep^{\otimes n}} v^{\otimes n}}^{2} \leq \exp(- n I_{v}(p)),
\end{equation}
where
\begin{equation}
    \label{eq:KL_as_optimize}
    I_{v}(p) = - \log \inf_{x \in \mathbb R^{d}} e^{-2\braket{p, x}} \norm{v_x}^{2} = -\log \inf_{x \in \mathbb R^{d}} \left(e^{-2\braket{p, x}} {\sum}_{j} e^{2x_j} \abs{v_j}^2\right).
\end{equation}
Note that the optimization problem in \cref{eq:KL_as_optimize} has a trivial solution of $I_{v}(p) = \infty$ whenever there exists an index $j$ such that $p_j > 0$ while $v_j = 0$.
Otherwise, the minimum is attained when $x = y \in \mathbb R^{d}$ where 
\begin{equation}
    e^{2 y_j} = \frac{p_j}{\abs{v_j}^2},
\end{equation}
in which case
\begin{equation}
    I_{v}(p) = \sum_{j=1}^{d} p_j (\log p_j - \log \abs{v_j}^{2}).
\end{equation}
Indeed, the function $I_{v}(p)$, as defined by \cref{eq:KL_as_optimize}, is equal to the relative entropy previously defined by \cref{eq:early_KL}.

In order to derive the lower-bound in \cref{eq:multinomial_bounds}, it suffices to consider the case where $I_{v}(p) < \infty$, since otherwise the lower-bound in \cref{eq:multinomial_bounds} becomes trivial.
In this case, we have $x = y$ and therefore $\norm{v_y}^{2} = \sum_{j=1}p_j = 1$ which means the distribution associated to the unit vector, $u_y = v_y$, is simply $p_j = \abs{(u_y)_j}^2 = \lambda_j/n$.
In this case the prefactor in \cref{eq:adjusted_likelihood3} equals
\begin{equation}
    (e^{-\braket{p, y}} \norm{v_y})^{n} = \exp(-n I_{v}(p)).
\end{equation}
Furthermore, as the ratio of $\abs{(u_y)_j}^{2}$ to $\lambda_j$ is independent of $n$, we can also conclude that $\lambda$ maximizes the function $\mu \mapsto \norm{\hwsub{\mu}{\grep^{\otimes n}} u_y^{\otimes n}}^{2}$~\cite{dembo2010large}.
This follows because \cref{eq:multinomial_norm} implies\footnote{Here we use the inequality $n!/m! \geq m^{n-m}$ which holds for all $n,m \in \mathbb N$.}
\begin{equation}
    \label{eq:maximum_weight_space}
    \frac{\norm{\hwsub{\lambda}{\grep^{\otimes n}} u_y^{\otimes n}}^{2}}{\norm{\hwsub{\mu}{\grep^{\otimes n}} u_y^{\otimes n}}^2} = \prod_{j=1}^{d} \frac{\mu_j!}{\lambda_j!}p_j^{\lambda_j - \mu_j} \geq \prod_{j=1}^{d} \left(\frac{p_j}{\lambda_j}\right)^{\lambda_j - \mu_j} = 1.
\end{equation}
Since the number of distinct weights in the decomposition of $\grep^{\otimes n}$ is at most $(n+1)^{d}$, \cref{eq:maximum_weight_space} implies the likelihood $\norm{\hwsub{\lambda}{\grep^{\otimes n}} u_y^{\otimes n}}$ admits of the lower-bound
\begin{equation}
    \norm{\hwsub{\lambda}{\grep^{\otimes n}} u_y^{\otimes n}} \geq \frac{1}{(n+1)^{d}}.
\end{equation}
Altogether, \cref{eq:adjusted_likelihood3} becomes
\begin{equation}
    \label{eq:adjusted_likelihood4}
    \norm{\hwsub{\lambda}{\grep^{\otimes n}} v^{\otimes n}} = (e^{-\braket{p, y}} \norm{v_y})^{n} \norm{\hwsub{\lambda}{\grep^{\otimes n}}u_y^{\otimes n}} \geq \frac{1}{(n+1)^d} \exp(-n I_v(p)).
\end{equation}

The purpose of this chapter is to generalize the above technique to consider starting with representations of non-commutative groups.
By doing so, we will come to view the estimation of quantum states and their properties as a natural generalization of the estimation of a probability distribution.

\section{Quantum measurements \& representations}
\label{sec:measurement_reps}

In this section we review a number of familiar constructions of covariant positive operator valued measures from the perspective of representation theory~\cite{chiribella2004covariant,holevo2011probabilistic}.

\subsection{Irreducible measurements}
\label{sec:irrep_measurements}

The purpose of this section is to describe a variety of positive-operator valued measures which naturally arise in the context of irreducible representations of groups.

Next, we review the standard construction of a covariant POVM arising from an irreducible unitary representation.
\begin{lem}
    \label{lem:covariant_povm}
    Let $\grep : K \to \U(\s H)$ be an irreducible unitary representation of a compact group $K$ on a $d$-dimensional Hilbert space $\s H$.
    Let $\mu : \borel{K} \to [0,1]$ be the unique $K$-invariant normalized Haar measure on $K$.
    Then the function $E_{\psi} : \borel{K} \to \bound(\s H)$ defined for $\Delta \in \borel{K}$ by
    \begin{equation}
        E_{\psi}(\Delta) \coloneqq d \int_{k \in \Delta} \grep(k) P_{\psi} \grep(k^{-1}) \diff \mu(k),
    \end{equation}
    is a positive-operator-valued measure over $K$ acting on $\s H$.
\end{lem}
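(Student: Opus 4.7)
The plan is to verify directly the four defining properties of a positive-operator-valued measure given in the earlier Definition~\ref{defn:povm}: positivity, nullity, countable additivity, and normalization. Three of these are essentially immediate, so the bulk of the work will be the normalization condition $E_\psi(K) = \ident_{\s H}$, which is where irreducibility and Schur's lemma enter.

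First, I would dispatch positivity and nullity. For each $k \in K$, the operator $\grep(k) P_\psi \grep(k^{-1}) = \grep(k) P_\psi \grep(k)^{*}$ is a rank-one orthogonal projection and in particular positive semidefinite, so integrating over any $\Delta \in \borel{K}$ against the non-negative measure $\mu$ produces a positive semidefinite operator, giving condition (i). Nullity (ii) is immediate since integration over the empty set vanishes. Countable additivity (iii) follows at once from the countable additivity of $\mu$ together with linearity of the operator-valued integral: for pairwise disjoint $\{\Delta_j\}_{j \in \mathbb N}$,
\begin{equation}
    E_\psi\Bigl(\bigcup_{j} \Delta_j\Bigr) = d \int_{\bigcup_{j} \Delta_j} \grep(k) P_\psi \grep(k^{-1}) \diff \mu(k) = \sum_{j} E_\psi(\Delta_j).
\end{equation}

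The heart of the argument, and the only non-routine step, is the normalization (iv): $E_\psi(K) = \ident_{\s H}$. Here I would set
\begin{equation}
    T \coloneqq \int_{K} \grep(k) P_\psi \grep(k^{-1}) \diff \mu(k),
\end{equation}
and show first that $T$ commutes with $\grep(k')$ for every $k' \in K$, and second that $\Tr(T) = 1$. Commutativity is a standard twirling argument: using the left-invariance of the Haar measure $\mu$ (and unitarity of $\grep$),
\begin{equation}
    \grep(k') T \grep(k')^{-1} = \int_{K} \grep(k' k) P_\psi \grep(k' k)^{-1} \diff \mu(k) = \int_{K} \grep(k) P_\psi \grep(k)^{-1} \diff \mu(k) = T.
\end{equation}
Since $\grep$ is an irreducible representation of $K$ on the finite-dimensional complex space $\s H$ and $T$ intertwines $\grep$ with itself, Schur's lemma (\cref{lem:schurs_lemma}) forces $T = \alpha \ident_{\s H}$ for some $\alpha \in \mathbb C$. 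Taking traces inside the integral and using cyclicity together with $\Tr(P_\psi) = 1$ and $\mu(K) = 1$ yields $\Tr(T) = 1$, hence $\alpha d = 1$, hence $T = \frac{1}{d}\ident_{\s H}$. Multiplying by $d$ gives $E_\psi(K) = \ident_{\s H}$ as required.

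The only mildly subtle point is ensuring the operator-valued integral is well-defined so that the twirling identity and the exchange of trace and integral are legitimate; this is harmless because $K$ is compact, $\grep$ is continuous, and $\s H$ is finite-dimensional, so the integrand is a bounded continuous $\bound(\s H)$-valued function on a compact space. After these four conditions are checked, $E_\psi$ satisfies Definition~\ref{defn:povm} and the lemma follows.
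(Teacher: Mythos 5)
Your proposal is correct and follows essentially the same route as the paper: positivity from positivity of the integrand, and normalization via a twirling argument plus Schur's lemma, with the trace computation $\Tr(P_{k\cdot\psi}) = 1$ fixing the constant. You are simply more explicit than the paper about the commutativity of $T$ with $\grep(K)$ and about nullity and countable additivity, which the paper leaves implicit.
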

\begin{proof}
    The construction used here follows \cref{exam:density_povm_construct}.
    That $E_{\psi}(\Delta)$ is positive semidefinite for all $\Delta \in \borel{K}$ follows from $P_{\psi}$ being projective and thus positive semidefinite and $\grep(k)^{*} = \grep(k^{-1})$ being unitary.
    That $E_{\psi}$ is normalized follows from noting that $E_{\psi}(K) : \s H \to \s H$ is a $K$-covariant operator acting on an irreducible representation space and therefore, by Schur's lemma, must be proportional to the identity on $\s H$.
    As the integrand satisfies
    \begin{equation}
        \Tr(P_{k \cdot \psi}) = \Tr(\grep(k) P_{\psi} \grep(k^{-1})) = \Tr(P_{\psi}) = \dim(\psi) = 1,
    \end{equation}
    and $\Tr(\ident_{\s H}) = \dim(\s H) = d$, the prefactor of $d = \dim(\s H)$ in the definition of $p_{\psi}$ ensures that $E_{\psi}(K) = \ident_{\s H}$.
\end{proof}
\begin{exam}
    Let $K = C_d$ be the cyclic group on $d$ symbols, e.g., $[d] = \{1, \ldots, d\}$, generated by the permutation $\pi$ sending $i \in [d]$ to 
    \begin{equation}
        \pi(i) = (i+1) \text{ mod } d.
    \end{equation}
    The normalized Haar measure over $C_d$ has density $\abs{C_d}^{-1} = d^{-1}$.
    Let $\grep : C_d \to \U(d)$ be the unitary representation of $C_d$ where $\grep(\pi)$ acts on the orthonormal basis $\{e_1, \ldots, e_d\}$ for $\mathbb C^{d}$ by sending the basis vector $e_{i}$ to $\grep(\pi) e_i = e_{\pi(i)}$.
    For each index $i$, let $\phi_i \in \proj \mathbb C^{d}$ be the one-dimensional subspace spanned by $e_i$. 
    Then for each positive integer $m \in \mathbb N$ and each element $\pi^{m} \in C_{d}$ in the cyclic group, the POVM defined in \cref{lem:covariant_povm} (assuming $\psi \coloneqq \phi_1$) satisfies
    \begin{equation}
        E_{\phi_1}(\{ \pi^{m} \}) = P_{\phi_{(m+1) \text{ mod } d}},
    \end{equation}
    and thus corresponds to the standard projective measurement associated to the aforementioned orthonormal basis.
\end{exam}

When the ray $\psi \in \s H$ considered in \cref{lem:covariant_povm} exhibits non-trivial symmetries, the POVM $E_{\psi}$ over the group $K$ can be related to a POVM, $\tilde E_{\psi}$, over the \textit{orbit} of $\psi$ under the action of $K$.

\begin{rem}
    \label{rem:highest_weight_orbits}
    Let $\grep_{\lambda} : K \to \U(\s H_{\lambda})$ be an irreducible unitary representation of a compact connected Lie group $K$ with highest weight $\lambda$ relative to a fixed maximal torus $T$ in $K$.
    Let $v_{\lambda} \in \s H_{\lambda}$ be a vector of highest weight and let $\psi_{\lambda} \coloneq [v_{\lambda}] \in \proj \s H_{\lambda}$ be the highest weight ray. 
    The orbit of the highest weight vector, $v_{\lambda} \in \s H_{\lambda}$, is
    \begin{equation}
        K \cdot v_{\lambda} \coloneqq \{ k \cdot v_{\lambda} \coloneqq \grep_{\lambda}(k) v_{\lambda} \in \s H_{\lambda} \mid k \in K \} \subseteq \s H_{\lambda}.
    \end{equation}
    The orbit of the highest weight ray, $\psi_{\lambda} \in \proj \s H_{\lambda}$ containing $v_{\lambda}$, is\footnote{Rays belonging to the orbit of the highest weight ray correspond to generalized \textit{coherent} states in the sense of \citeauthor{perelomov1972coherent}~\cite{perelomov1972coherent} (or more precisely, in the sense of \citeauthor{klyachko2002coherent}~\cite{klyachko2002coherent}).}
    \begin{equation}
        K \cdot \psi_{\lambda} \coloneqq \{ k \cdot \psi_{\lambda} \coloneqq [\grep_{\lambda}(k) v_{\lambda}] \in \proj \s H_{\lambda} \mid k \in K \} \subseteq \proj \s H_{\lambda}.
    \end{equation}
    Finally let $\omega \in i\mathfrak k^{*}$ be generic.
    The group $K$ acts on $i \mathfrak k^{*}$ through the dual of the adjoint representation of $K$ on $\mathfrak k$ such that for all $X \in i\mathfrak k$,
    \begin{equation}
        (k \cdot \omega)(X) \coloneqq [\Ad^{*}(k)(\omega)](X) = \omega(\Ad(k^{-1})(X)) = \omega(k^{-1} X k) \in \mathbb R.
    \end{equation}
    The orbit of $\omega \in i \mathfrak k^{*}$, called the \defnsty{coadjoint orbit} of $\omega$, is therefore
    \begin{equation}
        \label{eq:coadjoint_action}
        K \cdot \omega \coloneqq \{ k \cdot \omega \coloneqq \Ad^{*}(k)(\omega) \mid k \in K \} \subseteq i \mathfrak k^{*}.
    \end{equation}
\end{rem}

\begin{exam}
    Let everything be defined as in \cref{lem:covariant_povm}.
    Let $K_{\psi}$ be the stabilizer subgroup of $K$ with respect to $\psi$, i.e.,
    \begin{equation}
        K_{\psi} \coloneqq \{ k \in K \mid k \cdot \psi = \psi \}.
    \end{equation}
    As the orbit $K \cdot \psi$ of the ray $\psi$ can be identified with the set of left-cosets $K \backslash K_{\psi}$, one can consider the function $s_{\psi} : K \to K\cdot \psi \cong K \backslash K_{\psi}$ assigning each $k \in K$ to the point $k \cdot \psi$ in the orbit or $k K_{\psi} \in K \backslash K_{\psi}$ in the coset space,
    \begin{equation}
        s_{\psi}(k) = k \cdot \psi \cong k K_{\psi}.
    \end{equation}
    Then the pushforward of the POVM defined in \cref{lem:covariant_povm} through $s_{\psi}$ is the POVM
    \begin{equation}
        \tilde E_{\psi} : \borel{K \cdot \psi} \to \bound(\s H),
    \end{equation}
    taking values in the orbit of $K \cdot \psi$ of $\psi$ in $\proj \s H$, and satisfies or all $\Delta' \in \borel{K \cdot \psi}$,
    \begin{equation}
        \tilde E_{\psi}(\Delta') \coloneqq d \int_{\phi \in \Delta'} \diff ((s_{\psi})_*\mu)(\phi) P_{\phi} =  d \int_{k \in s_{\psi}^{-1}(\Delta')} \diff \mu(k) P_{k \cdot \psi}.
    \end{equation}
    Moreover, as the orbit $K \cdot \psi$ is a closed subset of the projective space $\proj \s H$, we can alternatively view $\tilde E_{\psi}$ as a POVM over all of $\proj \s H$ with support only on $K \cdot \psi \subseteq \proj \s H$.
\end{exam}
\begin{exam}
    \label{exam:hw_covariant_povm}
    Suppose $\grep_{\lambda} : K \to \U(\s H_{\lambda})$ is an irreducible unitary representation of a compact connected Lie group $K$ with highest weight $\lambda$ and $d_{\lambda} \coloneqq \dim(\s H_{\lambda})$.
    Let $\psi_{\lambda} \in \proj \s H_{\lambda}$ be the highest weight ray.
    The standard POVM, $E_{\psi_{\lambda}}$, from \cref{lem:covariant_povm} will, in this case, be abbreviated simply by $E_{\lambda} : \borel{K} \to \bound(\s H_{\lambda})$, where
    \begin{equation}
        \diff E_{\lambda}(k) = d_{\lambda} \grep_{\lambda}(k) P_{\psi_{\lambda}} \grep_{\lambda}(k^{-1}) \diff \mu(k).
    \end{equation}
    Furthermore, if $\momap_{\lambda} : \proj \s H_{\lambda} \to i \mathfrak k^{*}$ is the moment map associated to $\grep_{\lambda}$, we will consider the POVM 
    \begin{equation}
        F_{\lambda} : \borel{i \mathfrak k^{*}} \to \bound(\s H_{\lambda}),
    \end{equation}
    defined for $\Delta' \in \borel{i \mathfrak k^{*}}$ as
    \begin{equation}
        F_{\lambda}(\Delta') = E_{\lambda}((\momap_{\lambda}\circ s_{\psi_{\lambda}})^{-1}(\Delta'))).
    \end{equation}
    In other words, $F_{\lambda}$ is the pushforward of $E_{\lambda}$ through the map $k \mapsto \momap_\lambda(k \cdot \psi_{\lambda})$.
    If $g : i \mathfrak k^{*} \to \mathbb R$ is a measurable function, then for all $\rho \in \state(\s H_{\lambda})$,
    \begin{equation}
        \int_{\omega \in i\mathfrak k^{*}} g(\omega) \Tr(\diff F_{\lambda}(\omega) \rho) = d_{\lambda} \int_{k \in K} g(\momap_{\lambda}(k \cdot \psi_{\lambda})) \Tr(P_{k \cdot \psi_{\lambda}}\rho) \diff \mu(k)
    \end{equation}
    where $\momap_{\lambda}(k \cdot \psi_{\lambda}) = \Ad^{*}(k)(\lambda)$.
\end{exam}

\subsection{Completely reducible measurements}
\label{sec:reducible_measurements}

The construction of the covariant POVM provided in \cref{exam:hw_covariant_povm} can be generalized to the case of a non-irreducible representation, $\grep : K \to \U(\s H)$, provided the representation is completely reducible.
As finite-dimensional representations of compact groups are completely reducible by \cref{thm:compact_completely_reducible}, one can apply the covariant POVM provided by \cref{lem:covariant_povm} to each of the irreducible components in the decomposition of $\grep$.
When the compact Lie group $K$ is additionally connected, the components of this decomposition can be indexed by dominant, analytically integral highest weights by~\cref{thm:hwt_grp}.
But first, it will be helpful to define some notation for describing the inclusion of an irreducible representation inside a completely reducible representation.
\begin{defn}
    \label{defn:isotypic_subspace_iso}
    Let $\grep : K \to \U(\s H)$ be a unitary representation of a compact, connected Lie group $K$ on a finite-dimensional complex Hilbert space $\s H$.
    Let $\mathfrak k$ be the Lie algebra of $K$ and $\mathfrak t$ the Lie algebra of a fixed maximal torus $T$ in $K$.
    Furthermore, let the decomposition of $\s H$ into its irreducible invariant subspaces be given by
    \begin{equation}
        \label{eq:irrep_decomp_isotypic}
        \s H \cong {\bigoplus}_{\lambda \in \Lambda_+} \s H_{\lambda} \otimes \s M_{\lambda}^{\s H}
    \end{equation}
    where the sum is taken over dominant, analytically integral weights $\Lambda_+ \subset (i \mathfrak t)^{*}$, where $\s H_{\lambda}$ supports an irreducible unitary representation, $\grep_{\lambda} : K \to \U(\s H_{\lambda})$, with highest weight $\lambda$ and highest weight ray $\psi_{\lambda} \in \s H_{\lambda}$, and where
    \begin{equation}
        \s M_{\lambda}^{\grep} \coloneqq \mathrm{Hom}_{K} ( \s H_{\lambda}, \s H)
    \end{equation}
    is the multiplicity space of isomorphic copies of $\s H_{\lambda}$ inside $\s H$ with dimension equal to the multiplicity of $\lambda$.
    The summand $\s H_{\lambda} \otimes \s M_{\lambda}^{\grep}$ in \cref{eq:irrep_decomp_isotypic} above is referred to as the \defnsty{isotypic subspace} for $\lambda \in \Lambda_+$ in $\s H_{\lambda}$.
\end{defn}
\begin{defn}
    \label{defn:isotypic_subspace_inclusion_channel}
    Let everything be defined as in \cref{defn:isotypic_subspace_iso}.
    The linear map describing the inclusion of the isotypic subspace in $\s H$ will be denoted by
    \begin{equation}
        \iota^{\lambda}_{\grep} : \s H_{\lambda} \otimes \s M_{\lambda}^{\grep} \to \s H.
    \end{equation}
    The linear map $\iota^{\lambda}_{\grep}$ is: i) an isometry, meaning
    \begin{equation}
        (\iota^{\lambda}_{\grep})^{*} (\iota^{\lambda}_{\grep})  = \ident_{\s H_{\lambda}} \otimes \ident_{\s M_{\lambda}^{\grep}},
    \end{equation}
    and thus $\isosub{\lambda}{\grep} \coloneqq (\iota^{\lambda}_{\grep})(\iota^{\lambda}_{\grep})^{*}$ is a projection operator onto the isotypic subspace, and ii) $K$-covariant, meaning for all $k \in K$,
    \begin{equation}
        (\iota^{\lambda}_{\grep})^{*} \grep(k) \iota^{\lambda}_{\grep} = \grep_{\lambda}(k) \otimes \ident_{\s M_{\lambda}^{\grep}}.
    \end{equation}
    The \defnsty{multiplicity channel} for the highest weight $\lambda$ in the representation $\grep$ is the quantum channel
    \begin{equation}
        \s I^{\lambda}_{\grep} : \bound(\s H_{\lambda}) \to \bound(\s H)
    \end{equation}
    sending each operator $X \in \bound(\s H_{\lambda})$ to the operator $\s I^{\lambda}_{\grep} ( X ) \in \bound(\s H_{\lambda})$ defined by
    \begin{equation}
        \s I^{\lambda}_{\grep} [ X ] \coloneqq (\iota^{\lambda}_{\grep})(X \otimes \ident_{\s M_{\lambda}^{\grep}}) (\iota^{\lambda}_{\grep})^{*}.
    \end{equation}
    If $\lambda$ is \textit{not} a highest weight of the representation, then $\dim(\s M_{\lambda}^{\grep}) = 0$ and therefore the channel $\s I^{\lambda}_{\grep}$ is not well-defined.
    The multiplicity channel inherits the $K$-covariance property from $\iota^{\lambda}_{\grep}$ in the sense that
    \begin{equation}
        \grep(k) \s I^{\lambda}_{\grep} [ X ] \grep(k^{-1}) = \s I^{\lambda}_{\grep} [\grep_{\lambda}(k)  X \grep_{\lambda}(k^{-1})].
    \end{equation}
\end{defn}
\begin{defn}
    Let everything be defined as in \cref{defn:isotypic_subspace_iso} and \cref{defn:isotypic_subspace_inclusion_channel}.
    Let $\psi_{\lambda} \in \s H_{\lambda}$ be the unique highest weight ray for the irreducible representation $\s H_{\lambda}$.
    The projection operator onto the \defnsty{subspace of highest weight vectors} for the representation $\grep : K \to \U(\s H)$ is defined as
    \begin{equation}
        \hwsub{\lambda}{\grep} = \s I^{\lambda}_{\grep}[P_{\psi_{\lambda}}] = (\iota^{\lambda}_{\grep}) (P_{\psi_{\lambda}} \otimes \ident_{\s M_{\lambda}^{\grep}})(\iota^{\lambda}_{\grep})^{*}.
    \end{equation}
    Additionally, the projection operator onto the subspace of highest weight vectors rotated by the action of $k \in K$ is
    \begin{equation}
        \thwsub{k}{\lambda}{\grep} = \grep(k)\hwsub{\lambda}{\grep}\grep(k^{-1}) = \s I^{\lambda}_{\grep}[P_{k \cdot \psi_{\lambda}}].
    \end{equation}

\end{defn}
\begin{exam}
    For each highest weight, $\lambda \in \Lambda_+$, and irreducible representation $\grep_{\lambda} : K \to \U(\s H_{\lambda})$ with dimension $d_{\lambda} = \dim(\s H_{\lambda})$, let the highest weight ray be $\psi_{\lambda} \in \s H_{\lambda}$.
    Consider the $\bound(\s H_{\lambda})$-valued POVM from \cref{exam:hw_covariant_povm} of the form $E_{\lambda} : \borel{K} \to \bound(\s H_{\lambda})$,
    \begin{equation}
        \diff E_{\lambda}(k) = d_{\lambda} P_{k \cdot \psi_\lambda} \diff \mu(k).
    \end{equation}
    Using the channel $\s I_{\lambda}^{\grep} : \bound(\s H_{\lambda}) \to \bound(\s H)$, the POVM $E_{\lambda}$ acting on $\s H_{\lambda}$ can be \textit{lifted} to the POVM $\s I_{\lambda}^{\grep} \circ E_{\lambda}$ on $\s H$.
    By summing these lifted POVMs over all dominant, analytically integral elements $\lambda \in \Lambda_+$, one obtains a POVM over $K \times \Lambda_+$ of the form
    \begin{equation}
        E_{\grep} : \borel{ K \times \Lambda_+} \to \bound(\s H)
    \end{equation}
    where for $\Delta_K \in \borel{K}$ and $\Delta_{\Lambda_+} \in \borel{\Lambda_+}$, we have
    \begin{equation}
        E_{\grep}( \Delta_K \times \Delta_{\Lambda_+} ) \coloneqq \sum_{\lambda \in \Delta_{\Lambda_+}} \s I^{\lambda}_{\grep}[ E_{\lambda}(\Delta_K) ].
    \end{equation}
    Expanding everything out (including the channels $\s I^{\lambda}_{\grep}$), we obtain the expression
    \begin{equation}
        \label{eq:grprep_discrete_continuous_povm}
        E_{\grep} ( \Delta_K \times \Delta_{\Lambda_+} ) = \sum_{\lambda \in \Delta_{\Lambda_+}}d_{\lambda} \int_{k \in \Delta_{K}} \diff \mu(k) \thwsub{k}{\lambda}{\grep},
    \end{equation}
    Furthermore, the marginal of $E_{\grep}$ obtained by integrating over $K$ simplifies to the projective measurement
    \begin{equation}
        \label{eq:grprep_discrete_povm}
        E_{\grep} ( K \times \Delta_{\Lambda_+} ) = \sum_{\lambda \in \Delta_{\Lambda_+}} \isosub{\lambda}{\grep},
    \end{equation}
    where $\isosub{\lambda}{\grep}$ is projection operator onto the isotypic subspace $\s H_{\lambda} \otimes \s M_{\lambda}^{\s H}$,
    \begin{equation}
        \isosub{\lambda}{\grep} = \s I^{\lambda}_{\grep}(\ident_{\s H_{\lambda}}) = (\iota^{\lambda}_{\grep}) (\iota^{\lambda}_{\grep})^{*}.
    \end{equation}
    Alternatively, if one instead considers the POVM from \cref{exam:hw_covariant_povm} of the form $F_{\lambda} : \borel{i \mathfrak k^{*}} \to \bound(\s H_{\lambda})$, summing over $\lambda \in \Lambda_+$ yields the POVM
    \begin{equation}
        F_{\grep} : \borel{i \mathfrak k^{*}} \to \bound(\s H),
    \end{equation}
    which is defined for all $\Delta \in \borel{i \mathfrak k^{*}}$ by
    \begin{equation}
        \label{eq:momap_proto_estimation}
        F_{\grep}(\Delta) \coloneqq \sum_{\lambda \in \Delta_{\Lambda_+}} \s I^{\lambda}_{\grep}[ F_{\lambda}(\Delta) ].
    \end{equation}
    In other words, for all measurable functions $g : i \mathfrak k^{*} \to \mathbb R$ and $\rho \in \state(\s H)$, $F_{\grep}$ has the form
    \begin{align}
        \label{eq:momap_proto_estimation_integral}
        \int_{\omega \in i\mathfrak k^{*}} g(\omega) \Tr(\diff F_{\grep}(\omega) \rho) 
        &= \sum_{\lambda \in \Lambda_+} d_{\lambda} \int_{k \in K} g(\momap_{\lambda}(k \cdot \psi_{\lambda})) \Tr(\thwsub{k}{\lambda}{\grep} \rho) \diff \mu.
    \end{align}
\end{exam}

\section{Deformed strong duality}
\label{sec:deformed_strong_duality}

The purpose of this section is develop a variety of variations of the strong duality result from \cref{sec:strong_duality} which will be useful in \cref{sec:estimating_moment_maps}.

\subsection{The inversion trick}
\label{sec:prototype}

Before proceeding, recall from \cref{sec:strong_duality}, specifically \cref{cor:momap_vanishing_test}, that the large $n$ asymptotics of the probability $\Tr(P_{\psi}^{\otimes n} \fsub{\grep^{\otimes n}})$, where $\fsub{\grep^{\otimes n}}$ is the projection operator onto the subspace of invariant-vectors in $\s H^{\otimes n}$, is directly related to whether or not the moment map $\momap_{\grep}(\psi)$ of $\psi$ vanishes. 
Specifically, if $\momap_\grep(\psi) \neq 0$, then the probability $\Tr(P_{\psi}^{\otimes n} \fsub{\grep^{\otimes n}})$ decays at an exponential rate with increasing $n$, and thus one expects the event associated to $\fsub{\grep^{\otimes n}}$ rarely occurs for large $n$, if ever.
Otherwise, if $\momap_\grep(\psi) = 0$, then the probability $\Tr(P_{\psi}^{\otimes n} \fsub{\grep^{\otimes n}})$ does \textit{not} decay at an exponential rate; in fact, by \cref{thm:occasionality} one expects that the event associated to $\fsub{\grep^{\otimes n}}$ occasionally occurs (in the sense of \cref{sec:typical_occasional_exceptional}).

If one is instead interested in the precise \textit{value} of the moment map, $\momap_\grep(\psi) \in i\mathfrak k^*$, a natural question arises: does there exist a method, analogous to \cref{cor:momap_vanishing_test}, for testing whether or not the moment map equals a specific value, say $\omega = \momap_\grep(\psi)$?
The answer to this question, at least for certain values of $x \in i\mathfrak k^{*}$, is fortunately yes.

Roughly speaking, the way to generalize \cref{cor:momap_vanishing_test} is to first find or construct a \textit{known} quantum state, $\psi' \in \proj \s W$, belonging to a different Hilbert space $\s W$ along with a different representation $\grep' : G \to \GL(\s W)$ of $G$ such that the moment map of $\psi'$ with respect to the representation $\grep'$ is equal to the \textit{negation} of $\omega$, i.e.,
\begin{equation}
    \momap_{\grep'}(\psi') = - \omega.
\end{equation}
Since moment maps are additive across tensor products (\cref{lem:moment_map_int}), we can conclude that the moment map of the unknown state $\psi$ equals $\omega$ if and only if the moment map for the tensor product of $\psi$ and $\psi'$ vanishes,
\begin{equation}
    \momap_\grep(\psi) = \omega \quad \iff \quad \momap_{\grep \otimes \grep'}(\psi \otimes \psi') = \momap_\grep(\psi) + \momap_{\grep'}(\psi') = 0.
\end{equation}
Consequently, it becomes possible to apply \cref{cor:momap_vanishing_test} to the tensor product representation and obtain an asymptotic test for whether or not $\momap_\grep(\psi) = \omega$.
These observations bring us to the following extension of \cref{thm:strong_duality}, which we refer to henceforth as the \textit{inversion trick}.
\begin{prop}
    \label{cor:neg_ref}
    Let $\grep : G \to \GL(\s V)$ and $\grep' : G \to \GL(\s V')$ be representations of a complex reductive group $G$.
    Furthermore, let $v' \in V'$ be a non-zero vector such that 
    \begin{equation}
        \momap_{\grep'}([v']) = - \omega \in i\mathfrak k^*.
    \end{equation}
    Let $E_\omega^{n}$ be defined by
    \begin{equation}
        E_\omega^{n} \coloneqq \Tr_{\s V'^{\otimes n}} \left( \fsub{(\grep \otimes \grep')^{\otimes n}} P_{[v']}^{\otimes n} \right).
    \end{equation}
    where $\fsub{(\grep \otimes \grep')^{\otimes n}}$ is the projection operator onto the subspace of vectors in $(\s V \otimes \s V')^{\otimes n}$ invariant under the action of $G$.
    Then for all $\psi \in \proj \s H$,
    \begin{align}
        \begin{split}
            \momap_{\grep}(\psi) = \omega    &\quad \Longrightarrow \quad \limsup_{n\to\infty} \Tr(P_{\psi}^{\otimes n} E_\omega^{n})^{\frac{1}{n}} = 1, \\
            \momap_{\grep}(\psi) \neq \omega &\quad \Longrightarrow \quad \limsup_{n\to\infty} \Tr(P_{\psi}^{\otimes n} E_\omega^{n})^{\frac{1}{n}} < 1.
        \end{split}
    \end{align}
\end{prop}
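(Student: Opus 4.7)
The plan is to reduce the statement directly to \cref{cor:momap_vanishing_test} by exploiting the additivity of moment maps under internal tensor products (\cref{lem:moment_map_int}). First, I would rewrite the probability $\Tr(P_\psi^{\otimes n} E_\omega^n)$ by substituting the definition of $E_\omega^n$ and using the basic property that partial trace is adjoint to tensoring with the identity. After reordering the tensor factors so that all $n$ copies of $\s V$ appear before the $n$ copies of $\s V'$, this should give
\begin{equation}
    \Tr(P_\psi^{\otimes n} E_\omega^n) = \Tr\!\left( (P_\psi \otimes P_{[v']})^{\otimes n}\, \fsub{(\grep \otimes \grep')^{\otimes n}} \right) = \Tr\!\left( P_{\psi \otimes [v']}^{\otimes n}\, \fsub{(\grep \otimes \grep')^{\otimes n}} \right),
\end{equation}
where the last equality uses that rank-one projectors tensor to rank-one projectors on the tensor product ray. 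This is now exactly the type of expression controlled by \cref{cor:momap_vanishing_test}, but applied to the internal tensor product representation $\grep \otimes \grep' : G \to \GL(\s V \otimes \s V')$ and the ray $\psi \otimes [v'] \in \proj(\s V \otimes \s V')$.

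Next, I would invoke \cref{lem:moment_map_int} to compute the moment map of this product ray:
\begin{equation}
    \momap_{\grep \otimes \grep'}(\psi \otimes [v']) = \momap_\grep(\psi) + \momap_{\grep'}([v']) = \momap_\grep(\psi) - \omega.
\end{equation}
Hence $\momap_{\grep \otimes \grep'}(\psi \otimes [v']) = 0$ if and only if $\momap_\grep(\psi) = \omega$. Applying \cref{cor:momap_vanishing_test} to $\grep \otimes \grep'$ and $\psi \otimes [v']$ then yields at once that
\begin{equation}
    \limsup_{n \to \infty} \Tr(P_\psi^{\otimes n} E_\omega^n)^{\frac{1}{n}} = \limsup_{n \to \infty} \Tr\!\left( P_{\psi \otimes [v']}^{\otimes n} \fsub{(\grep \otimes \grep')^{\otimes n}} \right)^{\frac{1}{n}}
\end{equation}
equals $1$ when $\momap_\grep(\psi) = \omega$ and is strictly less than $1$ otherwise, which is the claimed dichotomy.

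The only nontrivial step is the opening bookkeeping identity, which requires carefully matching the ordering of tensor factors so that the partial trace over the $\s V'^{\otimes n}$ legs of $\fsub{(\grep \otimes \grep')^{\otimes n}}$ produces the desired pure-state overlap. I do not anticipate a genuine obstacle here; the whole argument is essentially a two-line composition of the additivity of the moment map with the previously established asymptotic vanishing criterion. The conceptual content of the proposition is entirely in the construction of the auxiliary vector $v'$ with prescribed moment map value $-\omega$, which is assumed rather than produced.
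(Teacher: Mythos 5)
Your proof is correct and takes essentially the same route as the paper. The only cosmetic difference is that the paper unpacks the argument by invoking \cref{thm:strong_duality} together with the Kempf-Ness theorem \cref{thm:kempf_ness_theorem} directly, whereas you invoke their packaged consequence \cref{cor:momap_vanishing_test}; since that corollary is proved from exactly those two ingredients, the mathematical content is identical, and your appeal to \cref{lem:moment_map_int} for the additivity of moment maps mirrors the (implicit) use of the same fact in the paper's computation $\momap_{\grep\otimes\grep'}(\psi\otimes\psi')=\momap_\grep(\psi)+\momap_{\grep'}(\psi')$.
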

\begin{proof}
    Let $v \in \s V$ and $v' \in \s V'$ be unit vectors such that $[v] = \psi \in \proj \s V$ and $[v'] = \psi' \in \proj \s H$.
    An application of \cref{thm:strong_duality} to $\grep \otimes \grep' : G \to \GL(\s V \otimes \s W)$ yields
    \begin{equation}
        \limsup_{n \to \infty} \norm{\fsub{(\grep \otimes \grep')^{\otimes n}}(v^{\otimes n} \otimes v'^{\otimes n})}^{\frac{1}{n}} = \capacity_{\grep \otimes \grep'}(v \otimes v') = \inf_{g \in G} \norm{\grep(g) v}\norm{\grep'(g) v'},
    \end{equation}
    and therefore, because $v$ and $v'$ are assumed unit vectors,
    \begin{equation}
        \limsup_{n \to \infty} \Tr(P_{\psi}^{\otimes n} E_\nu^{n})^{\frac{1}{n}}
        = \capacity_{\grep \otimes \grep'}^{2}(v \otimes v').
    \end{equation}
    By the Kempf-Ness theorem \cref{thm:kempf_ness_theorem}, the capacity $\capacity_{\grep \otimes \grep'}(v \otimes v')$ is maximized and equal to one if and only if $\momap_{\grep \otimes \grep'} ( \psi \otimes \psi' ) = 0$ or equivalently,
    \begin{equation}
        \momap_{\grep}(\psi) = - \momap_{\grep'}(\psi') = \nu,
    \end{equation}
    which proves the claim.
\end{proof}

\subsection{Deforming moment maps}
\label{sec:deformation_cap_momap}

The purpose this section is to introduce a more sophisticated variation of the inversion trick from \cref{sec:prototype}, which we call the \textit{deformation} trick.
The deformation trick has both a geometric part, which modifies the moment map by an affine translation, a coadjoint evolution, and a rescaling by a positive integer, and an invariant part relating invariant subspaces of one representation to the fixed subspaces of another.
This trick is also sometimes called the \textit{shifting trick}~\cite{brion1987image,mumford1984stratification}. 

In any case, in order to apply these tricks, it will be useful to have a source of reference vectors whose moment maps are well understood.
\begin{lem}
    \label{lem:momap_weight_vectors}
    Let $\grep : G \to \GL(\s V)$ be a representation of a complex reductive group $G$ with Lie algebra $\mathfrak g = \mathfrak k_{\mathbb C}$ and let $v_{\lambda} \in \s V$ be a weight vector for $\grep$ with weight $\lambda \in i\mathfrak k^{*}$\footnote{Viewed as an element of the subspace $i\mathfrak t^{*} \subset i\mathfrak k^{*}$ in the sense of \cref{rem:weights_complex_structure}.}.
    Then the moment map evaluated on the ray $[v_{\lambda}] \in \proj \s V$ generated by $v_{\lambda}$ can be identified with the weight $\lambda$,
    \begin{equation}
        \momap_{\grep}([v_{\lambda}]) = \lambda.
    \end{equation}
\end{lem}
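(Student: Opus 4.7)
The plan is to verify the claimed equality of linear functionals on $i\mathfrak k$ by decomposing $i\mathfrak k$ into the Cartan part $i\mathfrak t$ and its orthogonal complement (with respect to a $K$-invariant inner product on $i\mathfrak k$ of the type supplied by \cref{lem:inner_prod_alg}), and checking both sides separately on each summand. Recall from \cref{rem:weights_complex_structure} that a weight $\lambda$ is naturally a real-linear functional on $i\mathfrak t$, and the identification $\momap_\grep([v_\lambda]) = \lambda$ should be read as asserting that the functional $\momap_\grep([v_\lambda]) \in i\mathfrak k^*$ restricts to $\lambda$ on $i\mathfrak t$ and vanishes on the orthogonal complement of $i\mathfrak t$ in $i\mathfrak k$.

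First I would handle the Cartan direction. Without loss of generality assume $v_\lambda$ is a unit vector, so that $\momap_\grep([v_\lambda])(X) = \braket{v_\lambda, \arep(X) v_\lambda}$ for all $X \in i\mathfrak k$. By definition of a weight vector, for every $H \in i\mathfrak t$ one has $\arep(H) v_\lambda = \lambda(H) v_\lambda$, so immediately
\begin{equation}
    \momap_\grep([v_\lambda])(H) = \braket{v_\lambda, \lambda(H) v_\lambda} = \lambda(H),
\end{equation}
since $\lambda(H) \in \mathbb R$ for $H \in i\mathfrak t$ (again by \cref{rem:weights_complex_structure}).

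Next I would show that $\momap_\grep([v_\lambda])$ vanishes on the orthogonal complement of $i\mathfrak t$ in $i\mathfrak k$. The key idea is to invoke the root space decomposition (\cref{thm:root_space_decomp}) of $\mathfrak g = \mathfrak h \oplus \bigoplus_{\alpha \in R} \mathfrak g_\alpha$, together with the fact that $\mathfrak g_\alpha^* = \mathfrak g_{-\alpha}$. Any $X \in i\mathfrak k$ orthogonal to $i\mathfrak t$ can be written as a real-linear combination of elements of the form $X_\alpha + X_\alpha^*$ and $i(X_\alpha - X_\alpha^*)$ with $X_\alpha \in \mathfrak g_\alpha$ and $\alpha \in R$ a nonzero root; indeed these elements lie in $\mathfrak k$ so that $i$ times them lies in $i\mathfrak k$, and one checks that they span precisely the complement of $i\mathfrak t$. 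For any such $X_\alpha \in \mathfrak g_\alpha$ with $\alpha \neq 0$, the vector $\arep(X_\alpha) v_\lambda$ is either zero or a weight vector for the weight $\lambda + \alpha$, since for all $H \in \mathfrak h$ one has
\begin{equation}
    \arep(H)\arep(X_\alpha) v_\lambda = \arep([H,X_\alpha]) v_\lambda + \arep(X_\alpha)\arep(H) v_\lambda = (\alpha(H) + \lambda(H))\arep(X_\alpha) v_\lambda.
\end{equation}
Because weight vectors with distinct weights are eigenvectors of the commuting family of Hermitian operators $\{\arep(iH) : H \in \mathfrak t\}$ with distinct eigenvalues, they are mutually orthogonal; hence $\braket{v_\lambda, \arep(X_\alpha) v_\lambda} = 0$, and the same argument applies to $X_\alpha^*$. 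Linearity then gives $\momap_\grep([v_\lambda])(X) = 0$ for every such $X$, completing the identification.

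The computation here is essentially routine once the right decomposition is set up; the only subtle point, which I would make explicit at the start, is the convention (\cref{rem:weights_complex_structure} and the discussion following \cref{lem:root_complex_structure}) under which a weight on $\mathfrak h$ is to be regarded as an element of $i\mathfrak k^*$, so that the statement $\momap_\grep([v_\lambda]) = \lambda$ is a well-posed equality of functionals on $i\mathfrak k$ rather than on $\mathfrak h$.
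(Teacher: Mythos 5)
Your proof is correct and follows essentially the same route as the paper: compute $\momap_\grep([v_\lambda])$ directly on $i\mathfrak t$ using the weight relation, then use the root-space decomposition together with the fact that $\arep(X_\alpha) v_\lambda$ lands in a different weight space (hence is orthogonal to $v_\lambda$) to kill the remaining directions. The one small refinement you make — writing elements of the orthogonal complement of $i\mathfrak t$ inside $i\mathfrak k$ as real combinations of $X_\alpha + X_\alpha^*$ and $i(X_\alpha - X_\alpha^*)$ before evaluating — is actually slightly more careful than the paper, which evaluates the moment-map formula directly on root vectors $X_\alpha \in \mathfrak g_\alpha$ (which do not lie in $i\mathfrak k$) and implicitly extends by linearity; both are fine, but yours avoids the extension.
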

\begin{proof}
    Recall the root-space decomposition of a complex semisimple Lie algebra $\mathfrak g = \mathfrak k_{\mathbb C}$ takes the form of an orthogonal, direct-sum decomposition,
    \begin{equation}
        \mathfrak g = \mathfrak h \oplus \bigoplus_{\alpha \in R} \mathfrak g_{\alpha}.
    \end{equation}
    where $R$ is the root system of roots and $\mathfrak g_{\alpha}$ is the root space with root $\alpha \in \mathfrak h^{*}$ and $\mathfrak h = \mathfrak g_0 = \mathfrak t_{\mathbb C}$ is the Cartan subalgebra.
    If $v_{\lambda} \in \wozero{\s V}$ is a weight vector with respect to $\arep : \mathfrak g \to \mathfrak{gl}(\s V)$ of weight $\lambda \in \mathfrak h^{*}$, then $\arep(H) v_{\lambda} = \lambda(H) v_{\lambda}$ for all $H \in \mathfrak h$.
    Therefore, $\momap([v_{\lambda}])(H) = \lambda(H)$ for all $H \in \mathfrak h$ which covers all $H \in i \mathfrak t \subset \mathfrak h$.
    Now consider $\momap([v_{\lambda}])(X)$ for $X \not \in \mathfrak h$, where
    \begin{equation}
        \momap([v_{\lambda}])(X) = \frac{\braket{v_{\lambda}, \arep(X)v_{\lambda}}}{\braket{v_{\lambda},v_{\lambda}}}.
    \end{equation}
    If $X \in \mathfrak g_{\alpha}$ is a root vector with non-zero root $\alpha \in R$, then $\arep(X) v_{\lambda}$ is either zero or a weight vector with weight $\lambda + \alpha$ because for all $H \in \mathfrak h$,
    \begin{equation}
        \arep(H)\arep(X) v_{\lambda} = \arep([H,X])v_{\lambda} + \arep(X)\arep(H)v_{\lambda} = (\alpha(H) + \lambda(H))\arep(X)v_{\lambda}.
    \end{equation}
    If $\arep(X) v_{\lambda}$ is zero, then $\momap([v_{\lambda}])(X)$ is obviously zero.
    If $\arep(X) v_{\lambda}$ is non-zero , then it is a weight vector of weight $\alpha + \lambda$ yet $\momap([v_{\lambda}])(X)$ is also zero because weight spaces of distinct weights are orthogonal to each other and $\arep(X) v_{\lambda} \in \s V_{\alpha + \lambda}$, while $v_{\lambda} \in \s V_{\lambda}$.
    In summary, the moment map $\momap([v_{\lambda}]) \in (i\mathfrak k)^{*}$ satisfies $\momap([v_{\lambda}])(H) = \lambda(H)$ for $H \in i \mathfrak t$ and $\momap([v_{\lambda}])(X) = 0$ for $X$ orthogonal to $i \mathfrak t$.
    Therefore, $\momap([v_{\lambda}])$ can be identified with the weight $\lambda \in i\mathfrak t \subseteq i\mathfrak k$ as claimed above. 
\end{proof}

Unfortunately, while \cref{lem:momap_weight_vectors} provides reference vectors $v$ which have known moment maps, it does not exhaust all of the possible elements of $i\mathfrak k^{*}$ that one might wish to find a reference vector for in the context of the \cref{cor:neg_ref}.
This is because there are only a finite number of weights in any given finite dimensional representation, and thus the set of all weights of a representation does not exhaust all of the possible elements in $i\mathfrak k^{*}$. 
Fortunately, using the various symmetries of moment maps, i.e. \cref{lem:moment_map_int} and \cref{lem:momap_equivariance}, it becomes possible to relate every point $\nu \in i\mathfrak k^{*}$ which is a rational multiple of some element in the coadjoint orbit of a dominant analytically integral element to the moment map of a fixed vector in a natural way.

\begin{defn}
    \label{defn:rational_dense}
    Let $\mathfrak g = \mathfrak k \oplus i\mathfrak k$ be the Lie algebra of a complex reductive group $G = K_{\mathbb C}$ with fixed maximal abelian subalgebra $\mathfrak h = \mathfrak t \oplus i \mathfrak t$.
    Further suppose that a base for the root system $R \subseteq i \mathfrak t^{*} \subseteq i\mathfrak k^{*}$ of $\mathfrak g$ relative to $\mathfrak h$ is chosen and let $i \mathfrak t_+^{*} \subseteq i \mathfrak t^{*}$ be the positive Weyl chamber.
    Also let $\Lambda \subseteq i\mathfrak t^{*}$ be the set of analytically integral elements (with respect to the torus $T \subseteq K$) and let $\Lambda_+ = \Lambda \cap i \mathfrak t_+^{*}$ be the set of dominant, analytically integral elements. 
    It is a well-known result that the coadjoint orbit of a given $\omega \in i \mathfrak k^{*}$ intersects $i\mathfrak t_+^{*}$ uniquely.
    Let $\omega_+ \in i\mathfrak t_+^{*}$ be this unique intersection point and let $h \in K$ be such that
    \begin{equation}
        \omega = \Ad^{*}(h)(\omega_+) = h \cdot \omega_+.
    \end{equation}
    An element $\omega \in i\mathfrak k^{*}$ is said to have a \defnsty{rational} coadjoint orbit if there exists a positive integer $\ell \in \mathbb N$, a group element $h \in K$, and a dominant, analytically integral element $\lambda \in \Lambda_+$ such that
    \begin{equation}
        \ell \omega = \Ad^{*}(h)(\lambda) = h \cdot \lambda.
    \end{equation}
    If $\omega \in i\mathfrak k^{*}$ has a rational coadjoint orbit, the value of $\lambda$ which satisfies the above equation is uniquely determined by taking $\ell$ to be a small as possible (in which case $\lambda = (\ell \omega)_+$).
\end{defn}

The main reason for considering $\omega \in i \mathfrak k^{*}$ with rational coadjoint orbits is the following result which generalizes the key ingredient of the deformation trick.
\begin{lem}
    \label{lem:rational_deformation_momap}
    Let everything be as in \cref{defn:rational_dense}.
    Let $\omega \in i \mathfrak k^{*}$ have a rational coadjoint orbit such that $\ell \omega = \Ad^{*}(h)(\lambda)$ holds for some positive integer $\ell \in \mathbb N$, $h \in K$ and dominant, analytically integral element $\lambda \in \Lambda_+ \subset i\mathfrak k^*$. 
    Let $\grep_{\lambda} : G \to \GL(\s V_\lambda)$ be the irreducible representation of $G = K_{\mathbb C}$ with highest weight $\lambda$ and let $v_{\lambda} \in \s V_{\lambda}$ be a highest weight vector.

    Furthermore, let $\grep : G \to \GL(\s V)$ be a representation on $\s V$ and let $v \in \s V$ be a vector.
    Then the moment map of $[v] \in \proj \s V$ equals 
    \begin{equation}
        \momap_{\grep}([v]) = \omega = \ell^{-1} \Ad^{*}(h)(\lambda),
    \end{equation}
    if and only if the moment map of $v^{\otimes \ell} \otimes \grep^{*}_{\lambda}(h)v_{\lambda}^{*}$ with respect to the representation $\grep^{\otimes \ell} \otimes \grep_\lambda^{*}$ vanishes, i.e.,
    \begin{equation}
        \momap_{\grep^{\otimes \ell} \otimes \grep_\lambda^{*}}([v^{\otimes \ell} \otimes \grep^{*}_{\lambda}(h)v_{\lambda}^{*}]) = 0.
    \end{equation}
\end{lem}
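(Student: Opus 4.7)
The plan is to compute the moment map on the right-hand side by decomposing it into three pieces using the compositional properties of moment maps established earlier: additivity across tensor products (\cref{lem:moment_map_int}), the consequent scaling behaviour under tensor powers, and $K$-equivariance (\cref{lem:momap_equivariance}). The conclusion will then follow from a direct algebraic manipulation.

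First I would apply additivity of moment maps across internal tensor products to obtain
\begin{equation}
    \momap_{\grep^{\otimes \ell} \otimes \grep_\lambda^{*}}([v^{\otimes \ell} \otimes \grep^{*}_{\lambda}(h)v_{\lambda}^{*}])
    = \momap_{\grep^{\otimes \ell}}([v^{\otimes \ell}]) + \momap_{\grep_{\lambda}^{*}}([\grep^{*}_{\lambda}(h)v_{\lambda}^{*}]).
\end{equation}
The first summand can be evaluated by iterating \cref{lem:moment_map_int} on the $\ell$-fold internal tensor product of $\grep$ with itself, which (since $\arep^{\otimes \ell}(X)$ acts as $\sum_{j=1}^{\ell} \ident \otimes \cdots \otimes \arep(X) \otimes \cdots \otimes \ident$) yields $\momap_{\grep^{\otimes \ell}}([v^{\otimes \ell}]) = \ell\,\momap_{\grep}([v])$.

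Next, I would compute $\momap_{\grep_{\lambda}^{*}}([v_{\lambda}^{*}])$. Since $\grep_{\lambda}^{*}$ is defined on $\s V_\lambda^{*}$ by $(\grep^{*}_\lambda(g) f)(w) = f(\grep_\lambda(g^{-1}) w)$, its induced Lie algebra representation satisfies $\arep_\lambda^{*}(X) = -\arep_\lambda(X)^{\mathrm T}$ (transpose acting on dual vectors). Applied to the dual basis element $v_\lambda^{*}$ paired to the highest weight vector $v_\lambda$, this shows $v_\lambda^{*}$ is a weight vector of $\grep_\lambda^{*}$ with weight $-\lambda$. By \cref{lem:momap_weight_vectors}, therefore $\momap_{\grep_\lambda^{*}}([v_\lambda^{*}]) = -\lambda$. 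Then $K$-equivariance of the moment map (\cref{lem:momap_equivariance}) gives
\begin{equation}
    \momap_{\grep_{\lambda}^{*}}([\grep^{*}_\lambda(h) v_\lambda^{*}]) = \Ad^{*}(h)\bigl(\momap_{\grep_\lambda^{*}}([v_\lambda^{*}])\bigr) = -\Ad^{*}(h)(\lambda).
\end{equation}

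Combining the two contributions yields
\begin{equation}
    \momap_{\grep^{\otimes \ell} \otimes \grep_\lambda^{*}}([v^{\otimes \ell} \otimes \grep^{*}_\lambda(h) v_\lambda^{*}]) = \ell\, \momap_{\grep}([v]) - \Ad^{*}(h)(\lambda),
\end{equation}
which vanishes if and only if $\ell\, \momap_{\grep}([v]) = \Ad^{*}(h)(\lambda) = \ell \omega$, i.e.\ $\momap_{\grep}([v]) = \omega$. The main conceptual step is identifying the weight of $v_\lambda^{*}$ under $\grep_\lambda^{*}$ so that \cref{lem:momap_weight_vectors} applies; everything else is a straightforward bookkeeping exercise using the structural lemmas proved in \cref{sec:capacities_moment_maps}.
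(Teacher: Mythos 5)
Your proof is correct and follows essentially the same route as the paper's: you use additivity of moment maps across tensor products (\cref{lem:moment_map_int}), the identification of $v_{\lambda}^{*}$ as a weight vector of weight $-\lambda$ together with \cref{lem:momap_weight_vectors}, and $K$-equivariance (\cref{lem:momap_equivariance}). If anything, your bookkeeping is a bit cleaner than the paper's: you decompose the moment map into the $\grep^{\otimes \ell}$ and $\grep_{\lambda}^{*}$ contributions \emph{first}, then apply equivariance only to the second factor — whereas the paper invokes equivariance on the full tensor-product moment map in a way that, read literally, would also rotate $v^{\otimes\ell}$ by $\grep(h)^{\otimes\ell}$; your factor-by-factor application sidesteps that issue.
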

\begin{proof}
    Since $v_{\lambda}$ is a weight vector of weight $\lambda$ with repsect to $\grep_\lambda$, its dual $v_{\lambda}^{*} \coloneqq \langle v_{\lambda}, \cdot \rangle \in \s V_{\lambda}^{*}$ is a weight vector with respect to the dual representation $\grep_{\lambda}^{*}$ with weight $-\lambda$.
    Then, using \cref{lem:moment_map_int} and \cref{lem:momap_weight_vectors}, the moment map of the representation
    \begin{equation}
        \grep^{\otimes \ell} \otimes \grep_{\lambda}^{*} : G \to \GL(\s V^{\otimes \ell} \otimes \s V_{\lambda}^{*})
    \end{equation}
    is related to the moment map of $\grep : G \to \GL(\s V)$ via
    \begin{equation}
        \momap_{\grep^{\otimes \ell} \otimes \grep_\lambda^{*}}([v^{\ell} \otimes v_{\lambda}^{*}]) = \ell \momap_{\grep}([v]) + \momap_{\grep_\lambda^{*}}([v_{\lambda}^{*}]) = \ell \momap_{\grep}([v]) - \lambda.
    \end{equation}
    Furthermore, by \cref{lem:momap_equivariance}, we have
    \begin{equation}
        \frac{1}{\ell}\momap_{\grep^{\otimes \ell} \otimes \grep_\lambda^{*}}([v^{\otimes \ell} \otimes \grep^{*}_{\lambda}(h)v_{\lambda}^{*}]) = \momap_{\grep}([v]) - \ell^{-1}\Ad^{*}(h)\lambda = \momap_{\grep}([v]) - \omega,
    \end{equation}
    which proves the claim.
\end{proof}
\begin{rem}
    \label{rem:lowest_weight_vectors}
    At this stage it is worth clarifying that if $v_{\lambda} \in \s V_{\lambda}$ is a highest weight vector of the highest weight representation $\grep : G \to \GL(\s V_{\lambda})$, then $v_{\lambda}^{*} \coloneqq \langle v_{\lambda}, \cdot \rangle \in \s V_{\lambda}^{*}$ is a weight vector of the dual representation $\grep_{\lambda}^{*} : G \to \GL(\s V_{\lambda}^{*})$, albeit with opposite weight $-\lambda$, as
    \begin{equation}
        \grep_{\lambda}^{*}(g) v_{\lambda}^{*} = \grep_{\lambda}^{*}(g) \left(\braket{v_{\lambda}, \cdot}\right) = \braket{v_{\lambda}, \grep_{\lambda}(g^{-1})\cdot} = \braket{\grep_{\lambda}(g^{-1})^{*}v_{\lambda}, \cdot} \in \s V_{\lambda}^{*}.
    \end{equation}
    Moreover, $v_{\lambda}^{*}$ is \textit{not} the highest weight vector in $\s V_{\lambda}^{*}$, but instead the \textit{lowest} weight vector in $\s V_{\lambda}^{*}$. 
    To obtain the \textit{highest} weight of the dual representation, one needs to define an involution on the weights by sending $\lambda$ to $\lambda^{*} \coloneqq - w_0(\lambda)$ where $w_0$ is the unique longest element in the Weyl group.
    Then $\lambda^{*}$ will be the highest weight in the dual representation, $\s V_{\lambda}^{*}$, and therefore representation $\s V_{\lambda^{*}}$ with highest weight $\lambda^{*}$ (dual to $\lambda$) is isomorphic to the dual representation, such that $\s V_{\lambda^{*}} \cong \s V_{\lambda}^{*}$.
    Both of these options have been used to define a deformation of the moment map, e.g, by $\lambda^{*}$ in \cite{burgisser2019towards} and by $-\lambda$ in \cite{franks2020minimal}.
\end{rem}

\subsection{Rotating \& scaling extremal weight vectors}

The aim of this section will be to describe how the action of the complexification $G = K_{\mathbb C}$ of a compact connected Lie group $K$ on a highest weight vector $v_{\lambda} \in \s H_{\lambda}$ can always be decomposed into a rotation of $v_{\lambda}$ through its $K$-orbit and then a scaling of norm by a scalar factor which depends on the weight $\lambda$.

\begin{thm}
    \label{thm:iwasawa_hwv}
    Let $K$ be a compact, connected Lie group with $G = K_{\mathbb C}$ its complexification and $T \subseteq K$ be a fixed maximal torus with Lie algebra $\mathfrak t$.
    Let $\grep_{\lambda} : G \to \GL(\s H_{\lambda})$ be an irreducible representation of $G$ with highest weight $\lambda : i \mathfrak t \to \mathbb R$ and let $v_{\lambda} \in \s H_{\lambda}$ be a highest weight vector.
    Then there exists maps $\alpha_+ : G \to i \mathfrak t$ and $\kappa_+ : G \to K$ such that the action of $G$ on $v_{\lambda}$ satisfies
    \begin{equation}
        \grep_{\lambda}(g) v_{\lambda} = e^{\lambda(\alpha_+(g))} \grep_{\lambda}(\kappa_+(g)) v_{\lambda}.
    \end{equation}
    In other words, $g$ acts on $K$-orbits of highest weight vectors by scaling by the factor $e^{\lambda(\alpha_+(g))} \in \mathbb R$, since
    \begin{equation}
        \norm{\grep_{\lambda}(g) v_{\lambda}} = e^{\lambda(\alpha_+(g))}\norm{v_{\lambda}}.
    \end{equation}
\end{thm}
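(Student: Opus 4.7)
The plan is to derive this factorization as a direct consequence of the Iwasawa decomposition (Proposition~\ref{prop:iwasawa}) combined with the fact, recorded in Theorem~\ref{thm:hwt_alg}, that a highest weight vector is annihilated by every positive root vector and is a weight vector for the Cartan subalgebra. The idea is that under Iwasawa, an arbitrary $g \in G$ factors as $g = k_g \cdot a_g \cdot n_g$ with $k_g \in K$, $a_g \in A = \exp(i \mathfrak t)$, and $n_g \in N$; the unipotent factor $n_g$ will act trivially on $v_\lambda$, the abelian factor $a_g$ will act by a positive scalar, and the compact factor $k_g$ will supply the rotation through the $K$-orbit.

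The first step is to verify that $N$ fixes $v_\lambda$. Since $N$ is connected with Lie algebra $\mathfrak n$ spanned by the root vectors $X_\alpha \in \mathfrak g_\alpha$ for $\alpha$ in the positive roots $R_+$, and since by Theorem~\ref{thm:hwt_alg} a highest weight vector satisfies $\arep_\lambda(X_\alpha) v_\lambda = 0$ for every $\alpha \in R_+$, we conclude $\grep_\lambda(\exp(X)) v_\lambda = v_\lambda$ for every $X \in \mathfrak n$, and hence $\grep_\lambda(n) v_\lambda = v_\lambda$ for every $n \in N$ by the surjectivity of $\exp$ onto the connected unipotent group $N$.

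The second step is to compute the action of $a_g$. Writing $a_g = \exp(H)$ with $H \in i \mathfrak t$ (uniquely determined by $g$ via Iwasawa), the identification of $v_\lambda$ with a weight vector of weight $\lambda$ gives $\arep_\lambda(H) v_\lambda = \lambda(H) v_\lambda$, whence
\begin{equation}
    \grep_\lambda(a_g) v_\lambda = \exp(\arep_\lambda(H)) v_\lambda = e^{\lambda(H)} v_\lambda.
\end{equation}
Combining the two steps,
\begin{equation}
    \grep_\lambda(g) v_\lambda = \grep_\lambda(k_g) \grep_\lambda(a_g) \grep_\lambda(n_g) v_\lambda = e^{\lambda(H)} \grep_\lambda(k_g) v_\lambda,
\end{equation}
so the theorem holds with $\kappa_+(g) \coloneqq k_g$ and $\alpha_+(g) \coloneqq H = \log a_g \in i\mathfrak t$, both of which are well defined because the Iwasawa decomposition is unique. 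The norm identity $\norm{\grep_\lambda(g) v_\lambda} = e^{\lambda(\alpha_+(g))} \norm{v_\lambda}$ then follows immediately from the $K$-invariance of the inner product on $\s H_\lambda$ and the fact that $\lambda(H) \in \mathbb R$ when $H \in i \mathfrak t$ (see Remark~\ref{rem:weights_complex_structure}).

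The only subtle point — the main obstacle if one were being fully rigorous — is verifying that $N$ really is the exponential image of $\mathfrak n$ (so that pointwise annihilation on the Lie algebra level lifts to pointwise triviality on the group level). This is standard for a connected, simply connected unipotent group since $\exp$ is a diffeomorphism onto $N$, but it is worth citing explicitly. Everything else in the argument reduces to bookkeeping between the local structure afforded by Theorem~\ref{thm:hwt_alg} and the global decomposition of Proposition~\ref{prop:iwasawa}.
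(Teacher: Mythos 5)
Your argument is essentially the same as the paper's: both apply the Iwasawa decomposition $g = k_g a_g n_g$, observe that $N$ fixes $v_\lambda$ because positive root vectors annihilate it, compute the scalar action of $a_g$ via the weight, and read off $\kappa_+$ and $\alpha_+$ from the decomposition. You are a bit more explicit than the paper about two points it leaves implicit (that triviality on $\mathfrak n$ lifts to $N$ via the exponential map for unipotent groups, and that the norm identity rests on $K$-invariance of the inner product together with $\lambda$ being real on $i\mathfrak t$), but the structure of the proof is the same.
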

\begin{proof}
    The proof relies on the Iwasawa decomposition of $G$ previously covered in \cref{prop:iwasawa}.
    Let $B$ be a fixed Borel subgroup containing the maximal torus $T$, and let the unique Iwasawa decomposition of $g \in G$ be $g = k_g \cdot a_g \cdot n_g$.
    Since $a_g \in A = \exp(i \mathfrak t)$ is uniquely determined by $g$, let $\alpha_+ : G \to \mathfrak a$ be the map satisfying $\exp(\alpha_+(g)) = a_g$.
    Additionally, let $\kappa_+ : G \to K$ be the map satisfying $\kappa_+(g) = k_g$.
    Now, by the definition of the highest weight vector $v_{\lambda}$, the action of positive root vector $X \in \mathfrak g_{\nu}$ with positive root $\nu$ satisfies $\arep_{\lambda}(X) v_{\lambda} = 0$. 
    Therefore, the nilpotent subgroup $N = [B,B]$ with Lie algebra $\mathfrak n$ generated by all positive roots, satisfies for all $n\in N$,
    \begin{equation}
        \label{eq:upper_unipotent_stable}
        \grep_{\lambda}(n) v_{\lambda} = v_{\lambda}.
    \end{equation}
    Furthermore, as $\alpha_+(g) \in \mathfrak a = i \mathfrak t \subseteq \mathfrak t \oplus i \mathfrak t = \mathfrak h$, and $a_g = \exp(\alpha(g))$, the highest weight vector $v_{\lambda}$ is an eigenvector of $\grep_{\lambda}(a_g)$ with eigenvalue $e^{\lambda(\alpha_+(g))} \in \mathbb R$, i.e.
    \begin{equation}
        \grep_{\lambda}(a_g)v_{\lambda} = \grep_{\lambda}(\exp(\alpha_+(g)))v_{\lambda} = e^{\arep_{\lambda} (\alpha_+(g))}v_{\lambda} = e^{\lambda(\alpha_+(g))}v_{\lambda}.
    \end{equation}
    Therefore,
    \begin{equation}
        \grep_{\lambda}(g) v_{\lambda} = \grep_{\lambda}(k_g) \grep_{\lambda}(a_g) \grep_{\lambda}(n_g) v_{\lambda} = e^{\lambda(\alpha_+(g))} \grep_{\lambda}(\kappa_+(g)) v_{\lambda}.
    \end{equation}
\end{proof}
\begin{cor}
    \label{cor:iwasawa_lwv}
    Let everything be defined as in \cref{thm:iwasawa_hwv}.
    Then there exists maps $\alpha_- : G \to i \mathfrak t$ and $\kappa_- : G \to K$ such that
    \begin{equation}
        \grep_{\lambda}^{*}(g) v_{\lambda}^{*} = e^{\lambda(\alpha_-(g))} \grep^{*}_{\lambda}(\kappa_-(g)) v_{\lambda}^{*},
    \end{equation}
    and therefore
    \begin{equation}
        \norm{\grep^{*}_{\lambda}(g) v_{\lambda}^{*}} = e^{\lambda(\alpha_-(g))}\norm{v_{\lambda}^{*}}.
    \end{equation}
\end{cor}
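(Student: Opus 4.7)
The plan is to mimic the proof of Theorem \ref{thm:iwasawa_hwv} applied to the dual representation $\grep_\lambda^*$, but with the \emph{opposite} Iwasawa decomposition, since as noted in Remark \ref{rem:lowest_weight_vectors} the vector $v_\lambda^* = \langle v_\lambda, \cdot\rangle$ is the \emph{lowest} weight vector of $\grep_\lambda^*$ (weight $-\lambda$) rather than the highest. The first step is to verify that $v_\lambda^*$ has weight $-\lambda$ under $\grep_\lambda^*$, which is a direct computation: for $H \in \mathfrak h$, $\arep_\lambda^*(H) v_\lambda^* = -v_\lambda^* \circ \arep_\lambda(H) = -\lambda(H)\,v_\lambda^*$.

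Next I would show that $v_\lambda^*$ is annihilated by every negative root vector $X \in \mathfrak g_{-\nu}$ with $\nu$ a positive root. Concretely, $\arep_\lambda^*(X) v_\lambda^* = -\langle \arep_\lambda(X)^* v_\lambda, \cdot\rangle = -\langle \arep_\lambda(X^*) v_\lambda, \cdot\rangle$, and $X^* \in \mathfrak g_{+\nu}$ is a positive root vector, which annihilates $v_\lambda$ by its defining property. Hence the opposite unipotent subgroup $N^- = \exp(\bigoplus_{\nu \in R_+}\mathfrak g_{-\nu})$ fixes $v_\lambda^*$ via $\grep_\lambda^*$, which is the analogue of equation \eqref{eq:upper_unipotent_stable} in the proof of Theorem \ref{thm:iwasawa_hwv}.

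Then I would invoke the opposite Iwasawa decomposition $G = K A N^-$, which exists because $N^-$ is conjugate to $N$ inside $G$ by a representative in $K$ of the longest Weyl group element, so $g = k_g \cdot a_g \cdot n_g^-$ uniquely for some $k_g \in K$, $a_g \in A$, $n_g^- \in N^-$. Defining $\kappa_-(g) := k_g$ and $\alpha_-(g) \in i\mathfrak t$ by $a_g = \exp(-\alpha_-(g))$ (the sign chosen to match the statement), a direct computation then gives
\begin{equation}
    \grep_\lambda^*(g) v_\lambda^* = \grep_\lambda^*(k_g)\grep_\lambda^*(a_g)\grep_\lambda^*(n_g^-) v_\lambda^* = \grep_\lambda^*(k_g)\, e^{-(-\lambda)(\alpha_-(g))} v_\lambda^* = e^{\lambda(\alpha_-(g))}\, \grep_\lambda^*(\kappa_-(g)) v_\lambda^*,
\end{equation}
using that $v_\lambda^*$ has weight $-\lambda$ in $\grep_\lambda^*$. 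The norm identity then follows because $\grep_\lambda^*(\kappa_-(g))$ is unitary with respect to the inner product on $\s H_\lambda^*$ dual to the $K$-invariant inner product on $\s H_\lambda$.

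The only real obstacle is justifying the opposite Iwasawa decomposition, but this reduces to the standard one by conjugation with a lift of $w_0 \in W$, and could be made a one-line remark. The rest of the argument is a sign-tracking exercise, essentially a mirror image of the proof of Theorem \ref{thm:iwasawa_hwv}.
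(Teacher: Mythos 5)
Your proof takes the same route as the paper: apply the Iwasawa decomposition relative to the opposite Borel subgroup $B_- = T N^-$, exploit that $v_\lambda^*$ is a lowest weight vector annihilated by $N^-$, and read off the scaling factor from the $A$-part, mirroring the proof of \cref{thm:iwasawa_hwv}. You are more explicit than the paper's terse proof about the sign flip needed in the definition of $\alpha_-$ (taking $a_g = \exp(-\alpha_-(g))$ rather than $a_g = \exp(\alpha_-(g))$), which is indeed required to reconcile the fact that $v_\lambda^*$ has weight $-\lambda$ with the $e^{+\lambda(\alpha_-(g))}$ appearing in the statement; this care is a genuine improvement over the paper's one-line argument.
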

\begin{proof}
    The proof proceeds in exactly the same way to the proof of \cref{thm:iwasawa_hwv}.
    The only difference is to take the Iwasawa decomposition of $g \in G$ with respect to the so-called opposite Borel subgroup $B_- \subset G$ with maximal unipotent subgroup $N_-$ with Lie algebra generated by all negative roots.
    The reason for this difference lies with the fact that $v_{\lambda}^*$ is a lowest weight vector for the dual representation (see \cref{rem:lowest_weight_vectors}).
\end{proof}

While the scalar factor above depends on the highest weight $\lambda \in i \mathfrak t^{*}$, it can be extended to a function which is well-defined for all elements $\omega \in i \mathfrak k^{*}$. 

\begin{defn}
    \label{defn:chi_map}
    Let $G = K_{\mathbb C}$ be a complex reductive group with maximal compact subgroup $K$ with fixed Borel subgroup $B$ and maximal torus $T = B \cap K$ with Lie algebra $\mathfrak t$ and let $i\mathfrak t_+^{*}$ be the closure of the fundamental Weyl chamber $i\mathfrak t_+^{*} \subseteq \mathfrak t^{*}$. 
    The coadjoint orbit of $\omega \in i\mathfrak k^{*}$, denoted by $K \cdot \omega = \Ad^{*}(K)(\omega) \subseteq i \mathfrak k^{*}$, intersects $i\mathfrak t_+^{*}$ at a unique point $\omega_+ \in i \mathfrak t_+^{*}$.
    Let $h \in K$ be such that $\omega = \Ad^{*}(h)(\omega_+)$.
    Furthermore, let $\alpha_- : G \to i\mathfrak t$ be the map from \cref{cor:iwasawa_lwv}.
    Then define the function,
    \begin{equation}
        \chi_{\omega} : G \to (0, \infty),
    \end{equation}
    for all $g \in G$ by
    \begin{equation}
        \chi_{\omega}(g) \coloneqq e^{\omega_+ (\alpha_-(g h))}.
    \end{equation}
\end{defn}
\begin{rem}
    \label{rem:chi_map_on_rational}
    Note that the function $\chi_{\omega} : G \to (0, \infty)$ has been explicitly defined such that if $\omega \in i \mathfrak k^{*}$ has a rational coadjoint orbit, meaning there exists an $\ell \in \mathbb N$ and dominant, analytically integral element $\lambda$ such that $\omega = \ell^{-1}\Ad^{*}(h)(\lambda)$ (and thus $\ell \omega_+ = \lambda$), then $\chi_{\omega}(g)$ can be expressed as
    \begin{equation}
        \chi_{\omega}(g) = \norm{\grep^{*}_{\lambda}(gh) v_{\lambda}^{*}}^{\frac{1}{\ell}} = e^{\ell^{-1} \lambda(\alpha_-(gh))},
    \end{equation}
    where $\grep_{\lambda} : G \to \GL(\s H_{\lambda})$ is an irreducible representation with highest weight $\lambda$ with highest weight vector unit vector $v_{\lambda}$ ($\norm{v_{\lambda}} = 1$).
\end{rem}

\subsection{Deformed capacity}
\label{sec:deformed_capacity}

In \cref{sec:deformation_cap_momap}, we saw how to transform the moment map of one representation to the moment map of another representation.
In this section, we consider happens to the capacity of a vector with respect to the same transformation of representations.

\begin{defn}
    \label{defn:deform_cap}
    Let $\grep : G \to \GL(\s H)$ be a representation of $G$ and let $v \in H$.
    Let $\omega \in i\mathfrak k^{*}$ be arbitrary.
    Define the \defnsty{$\omega$-capacity} of $v$ as
    \begin{equation}
        \capacity_{\grep}^{\omega}(v) \coloneqq \inf_{g \in G} \chi_{\omega}(g)\norm{\grep(g) v}.
    \end{equation}
    Alternatively, we say $\capacity_{\grep}^{\omega}(v)$ is the capacity of $v$ \textit{deformed} by $\omega$.
\end{defn}
\begin{prop}
    \label{prop:deform_cap_properties}
    The $\omega$-capacity of a vector $v$ as defined by \cref{defn:deform_cap} satisfies
    \begin{equation}
        \capacity_{\grep}^{\omega}(v) = \norm{v} \iff \momap_{\grep}([v]) = \omega.
    \end{equation}
    If $\omega$ has a rational coadjoint orbit, i.e. $\ell \omega = \Ad^{*}(h)(\lambda)$, then by \cref{rem:chi_map_on_rational}, we have
    \begin{align}
        \begin{split}
            \capacity_{\grep}^{\omega}(v) 
            &= \capacity_{\grep^{\otimes \ell} \otimes \grep_{\lambda}^{*}}(v^{\otimes \ell} \otimes \grep_{\lambda}^{*}(h)v_{\lambda}^{*})^{\frac{1}{\ell}}, \\
            &= \capacity_{\grep^{\otimes \ell} \otimes \grep_{\lambda}^{*}}((\grep(h^{-1})v)^{\otimes \ell} \otimes v_{\lambda}^{*})^{\frac{1}{\ell}}.
        \end{split}
    \end{align}
\end{prop}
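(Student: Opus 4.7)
The plan is to tackle the tensor-product identity first, since it is a direct manipulation, and then use it to reduce the norm-attainment equivalence to the Kempf-Ness theorem in the rational case, before extending to general $\omega$ by a convexity-plus-density argument.

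\textbf{Tensor-product identity for rational $\omega$.} By \cref{rem:chi_map_on_rational} one has $\chi_\omega(g)^\ell = \norm{\grep_\lambda^*(gh)v_\lambda^*}$, where $v_\lambda$ is chosen to be a unit vector. Multiplicativity of the norm across tensor products then gives
\[ (\chi_\omega(g)\norm{\grep(g)v})^\ell = \norm{\grep_\lambda^*(gh)v_\lambda^*}\cdot\norm{\grep(g)v}^\ell = \norm{(\grep^{\otimes\ell}\otimes\grep_\lambda^*)(g)\,(v^{\otimes\ell}\otimes\grep_\lambda^*(h)v_\lambda^*)}. \]
Taking the infimum over $g \in G$ and then the $\ell$-th root yields the first equality. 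The second equality follows from $G$-invariance of capacity: acting by $h^{-1}\in K \subseteq G$ on the argument transforms $v^{\otimes\ell}\otimes \grep_\lambda^*(h)v_\lambda^*$ into $(\grep(h^{-1})v)^{\otimes\ell}\otimes v_\lambda^*$ without changing its capacity.

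\textbf{Norm-attainment equivalence for rational $\omega$.} Since $h \in K$ makes $\grep_\lambda^*(h)$ unitary and $v_\lambda$ is a unit vector, $\norm{v^{\otimes\ell}\otimes\grep_\lambda^*(h)v_\lambda^*}=\norm{v}^\ell$. Combining the tensor identity just established with the Kempf-Ness theorem (\cref{thm:kempf_ness_theorem}) applied to $\grep^{\otimes\ell}\otimes\grep_\lambda^*$ produces
\[ \capacity_\grep^\omega(v)=\norm{v}\iff\momap_{\grep^{\otimes\ell}\otimes\grep_\lambda^*}\bigl([v^{\otimes\ell}\otimes\grep_\lambda^*(h)v_\lambda^*]\bigr)=0, \]
and \cref{lem:rational_deformation_momap} identifies the right-hand condition with $\momap_\grep([v])=\omega$.

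\textbf{Extension to general $\omega$.} Mirroring the proof of Kempf-Ness, I would study the deformed Kempf-Ness function $F(g)=\log(\chi_\omega(g)\norm{\grep(g)v})$. Since $\alpha_-(h)=0$ when $h\in K$ (the Iwasawa $A$-component of an element of $K$ is trivial), one has $\chi_\omega(e)=1$, so $F(e)=\log\norm{v}$ and hence $\capacity_\grep^\omega(v)\le\norm{v}$ already. The function $F$ is geodesically convex: on the rational locus this is clear from the identification with the ordinary Kempf-Ness function of $\grep^{\otimes\ell}\otimes\grep_\lambda^*$ in the previous step, and it extends to arbitrary $\omega$ by continuity of $\omega\mapsto\chi_\omega(g)$ together with density of rational coadjoint orbits in $i\mathfrak{k}^*$. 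Its derivative at $g=e$ along $X\in i\mathfrak{k}$ evaluates to $(\momap_\grep([v])-\omega)(X)$: the $\log\norm{\grep(g)v}$ piece contributes $\momap_\grep([v])(X)$ by \cref{rem:geodesic_convex_kempf_ness}, while a computation using $K$-equivariance of the moment map (\cref{lem:momap_equivariance}) on the rational case shows the $\log\chi_\omega$ piece contributes $-\omega(X)$, extended to all $\omega$ by the same density argument. Geodesic convexity then converts the vanishing of this derivative into the required equivalence.

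The main obstacle is the general-$\omega$ step: the Iwasawa-based definition of $\chi_\omega$ is not directly amenable to Lie-algebraic computation off the rational locus, so the argument genuinely requires establishing continuity in $\omega$ and leveraging the density of rational coadjoint orbits in $i\mathfrak{k}^*$ to transfer both the convexity and the derivative identity from the rational case to the general case.
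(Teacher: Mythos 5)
Your proof is correct and follows the same high-level skeleton as the paper's (extremely terse) argument: establish the statement for rational $\omega$ via the tensor identity and \cref{lem:rational_deformation_momap}, then extend by density of rational coadjoint orbits. What you add, and what the paper's one-line proof omits, is a genuinely useful explanation of \emph{how} the extension works. The paper simply cites ``density... and the continuity of the moment map,'' but continuity of $\momap_\grep$ alone does not obviously give both directions of the equivalence for a fixed $v$ with irrational $\momap_\grep([v])$, since one cannot approximate $\omega$ by rationals while holding $\momap_\grep([v])=\omega$. Your route — showing that $F(g)=\log(\chi_\omega(g)\norm{\grep(g)v})$ is geodesically convex with $F(e)=\log\norm{v}$ and derivative $(\momap_\grep([v])-\omega)(X)$ at $e$, so that norm attainment is equivalent to vanishing of the gradient — is exactly the structure of the Kempf-Ness argument and handles both implications cleanly. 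Your derivative computation for the $\log\chi_\omega$ term is also correct: in the rational case $\log\chi_\omega(g)=\ell^{-1}\log\norm{\grep_\lambda^*(gh)v_\lambda^*}$, whose derivative at $e$ along $X\in i\mathfrak{k}$ equals $\ell^{-1}\momap_{\grep_\lambda^*}([\grep_\lambda^*(h)v_\lambda^*])(X)=\ell^{-1}\Ad^*(h)(-\lambda)(X)=-\omega(X)$ by \cref{lem:momap_weight_vectors,lem:momap_equivariance}. Two small caveats: (i) your tensor-product identity and its use in the rational case implicitly assume $v_\lambda$ is a unit vector, which is consistent with \cref{rem:chi_map_on_rational} but worth stating; (ii) the continuity in $\omega$ of $\chi_\omega(g)$ is asserted rather than proved — it holds because $\omega\mapsto\omega_+$ is continuous and $\omega_+(\alpha_-(gh))$ is independent of the choice of $h$ with $\omega=\Ad^*(h)(\omega_+)$ (the stabilizer of $\omega_+$ in $K$ fixes the ray $[v_\lambda^*]$ up to phase), but this deserves a sentence since the whole density argument leans on it.
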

\begin{proof}
    The claim follows from \cref{lem:rational_deformation_momap}, the density of rational coadjoint orbits in $i\mathfrak k^{*}$ (\cref{defn:rational_dense}) and the continuity of the moment map.
\end{proof}
\begin{prop}
    \label{prop:easy_deformed_strong_duality}
    Let $\grep : G \to \GL(\s H)$ be a representation of $G = K_{\mathbb C}$ on $\s H$.
    Let $\lambda \in \Lambda_+$ be a dominant, analytically integral weight, let $n \in \mathbb N$ be a positive integer and let $h \in K$.
    Define $\omega = n^{-1} \Ad^{*}(h)(\lambda)$. Then for all $v \in \s H$,
    \begin{equation}
        \norm{\hwsub{\lambda}{\grep^{\otimes n}} (\grep(h^{-1})v)^{\otimes n}}^{\frac{1}{n}} \leq \capacity_{\grep}^{\omega}(v),
    \end{equation}
    where $\hwsub{\lambda}{\grep^{\otimes n}}$ is the projection operator onto the subspace of highest weight vectors in $\s H^{\otimes n}$.
\end{prop}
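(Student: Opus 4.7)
The plan is to reduce the claim to a pointwise capacity inequality for a tensor-product representation, and then prove that reduced inequality by a Cauchy--Schwarz argument on highest weight vectors. Setting $w \coloneqq \grep(h^{-1}) v$ and applying \cref{prop:deform_cap_properties} with $\ell = n$, the $n$th power of the desired inequality becomes
\[
    \norm{\hwsub{\lambda}{\grep^{\otimes n}} w^{\otimes n}} \leq \capacity_{\grep^{\otimes n} \otimes \grep_\lambda^{*}}(w^{\otimes n} \otimes v_\lambda^{*}) = \inf_{g \in G}\norm{\grep(g) w}^{n}\,\norm{\grep_\lambda^{*}(g) v_\lambda^{*}}.
\]
Since the left-hand side is independent of $g$, it suffices to establish, for each fixed $g \in G$, the pointwise bound
\[
    \norm{\hwsub{\lambda}{\grep^{\otimes n}} w^{\otimes n}} \leq \norm{\grep(g) w}^{n}\cdot \norm{\grep_\lambda^{*}(g) v_\lambda^{*}}.
\]

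For the pointwise bound, I would fix an arbitrary unit highest weight vector $x \in \hwsub{\lambda}{\grep^{\otimes n}}(\s H^{\otimes n})$. Inserting $\grep^{\otimes n}(g^{-1})\grep^{\otimes n}(g) = \ident$ in the inner product $\langle x, w^{\otimes n}\rangle$ and using the defining adjoint identity $\grep(g^{-1})^{*} = \grep((g^{-1})^{*})$ yields
\[
    \langle x, w^{\otimes n}\rangle = \langle \grep^{\otimes n}((g^{-1})^{*}) x,\ \grep^{\otimes n}(g) w^{\otimes n}\rangle,
\]
so that Cauchy--Schwarz gives $|\langle x, w^{\otimes n}\rangle| \leq \norm{\grep^{\otimes n}((g^{-1})^{*})\, x}\cdot \norm{\grep(g) w}^{n}$. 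Taking the supremum over unit $x$ recovers $\norm{\hwsub{\lambda}{\grep^{\otimes n}} w^{\otimes n}}$ on the left.

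The crucial step is to evaluate the supremum on the right. Using the $G$-equivariant isometric embedding $\iota^{\lambda}_{\grep^{\otimes n}}: \s H_\lambda \otimes \s M^{\grep^{\otimes n}}_{\lambda} \to \s H^{\otimes n}$ from \cref{defn:isotypic_subspace_inclusion_channel}, every unit highest weight vector of weight $\lambda$ is of the form $x = \iota^{\lambda}_{\grep^{\otimes n}}(v_\lambda \otimes m)$ for some unit $m \in \s M^{\grep^{\otimes n}}_\lambda$, and the intertwining property combined with the isometry property gives
\[
    \norm{\grep^{\otimes n}((g^{-1})^{*}) x} = \norm{\grep_\lambda((g^{-1})^{*}) v_\lambda}.
\]
This scaling is \emph{uniform} across the highest weight subspace, and the dual-norm identity $\norm{\grep_\lambda^{*}(g) v_\lambda^{*}} = \norm{\grep_\lambda((g^{-1})^{*}) v_\lambda}$—which follows from the canonical antilinear isometry $\s H_\lambda \to \s H_\lambda^{*}$ intertwining $\grep_\lambda$ with $\grep_\lambda^{*}$ in the appropriate sense—closes the argument. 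Infimizing over $g$ then delivers $\capacity_\grep^{\omega}(v)^n$ on the right.

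The main obstacle, more conceptual than technical, is that the tempting appeal to \cref{lem:cap_lower_bound} applied directly to the representation $\grep^{\otimes n} \otimes \grep_\lambda^{*}$ at the vector $w^{\otimes n} \otimes v_\lambda^{*}$ \emph{fails} to give the sharp bound: a direct computation shows that the projection of $w^{\otimes n} \otimes v_\lambda^{*}$ onto the trivial isotype has norm $d_\lambda^{-1/2}\norm{\hwsub{\lambda}{\grep^{\otimes n}} w^{\otimes n}}$, producing a spurious factor of $\sqrt{d_\lambda}$ with $d_\lambda \coloneqq \dim \s H_\lambda$. The Cauchy--Schwarz route above sidesteps this geometric mismatch by working directly with the uniform action of $\grep^{\otimes n}((g^{-1})^{*})$ on highest weight vectors, bypassing the invariant-subspace embedding altogether.
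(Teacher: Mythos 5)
Your proof is correct, but it follows a genuinely different route from the paper. The paper works directly with the norm $\norm{\hwsub{\lambda}{\grep^{\otimes n}} (\grep(h^{-1})v)^{\otimes n}}$: after inserting $\grep((gh)^{-1})\grep(g)$, it invokes \cref{cor:iwasawa_lwv}---the Iwasawa scaling of the dual highest-weight vector---to extract the explicit scalar $e^{\lambda(\alpha_-(gh))} = \chi_\omega(g)^n$ and a unitary $\kappa_-(\cdot)$, after which dropping the projection gives the bound. You instead pass through \cref{prop:deform_cap_properties} to re-express the target as $\capacity_{\grep^{\otimes n}\otimes\grep_\lambda^*}(w^{\otimes n}\otimes v_\lambda^*)$, and prove the pointwise bound for each $g$ by Cauchy--Schwarz together with the structural observation that $\grep^{\otimes n}((g^{-1})^*)$ scales every unit highest-weight vector of weight $\lambda$ by the same factor $\norm{\grep_\lambda((g^{-1})^*)v_\lambda}$, which you extract from the isotypic isometry $\iota^{\lambda}_{\grep^{\otimes n}}$ rather than from the Iwasawa coordinates $\alpha_-, \kappa_-$. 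The underlying structural fact (uniform scaling of the highest-weight subspace) is the same in both proofs, but your route identifies the scalar directly as a dual norm and never introduces the Iwasawa machinery, which is arguably cleaner. Your observation that a naive application of \cref{lem:cap_lower_bound} to the vector $w^{\otimes n}\otimes v_\lambda^*$ picks up a spurious $\sqrt{d_\lambda}$ (via \cref{eq:deform_schur_trick} with $k=1$) is correct and a genuine pitfall, though note this describes a third route---the paper's proof does not attempt it. One small point: you assert $\iota^{\lambda}_{\grep^{\otimes n}}$ is $G$-equivariant, whereas \cref{defn:isotypic_subspace_inclusion_channel} only states $K$-covariance. The extension to $G = K_{\mathbb C}$ is standard (the $K$-isotypic decomposition is preserved by the complexified action, so the intertwiner is automatically $G$-equivariant), but the paper does not spell it out, so a sentence to that effect would close the gap.
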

\begin{proof}
    The proof relies on the same technique from \cref{lem:cap_lower_bound} along with \cref{cor:iwasawa_lwv}.
    For all $g \in G$,
    \begin{align}
        \norm{\hwsub{\lambda}{\grep^{\otimes n}} (\grep(h^{-1}) v)^{\otimes n}}
        &= \norm{\hwsub{\lambda}{\grep^{\otimes n}} \grep((gh)^{-1})^{\otimes n} \grep(g)^{\otimes n} v^{\otimes n}},\\
        &= e^{\lambda(\alpha_{-}(gh))} \norm{\hwsub{\lambda}{\grep^{\otimes n}}\grep(\kappa_-(g)^{-1}) \grep(g)^{\otimes n} v^{\otimes n}},\\
        &= e^{\lambda(\alpha_{-}(gh))} \norm{\grep(\kappa_-(g))\hwsub{\lambda}{\grep^{\otimes n}}\grep(\kappa_-(g)^{-1}) \grep(g)^{\otimes n} v^{\otimes n}},\\
        &\leq e^{\lambda(\alpha_{-}(gh))} \norm{\grep(g)^{\otimes n} v^{\otimes n}},\\
        &= \left(e^{n^{-1}\lambda(\alpha_{-}(gh))} \norm{\grep(g) v}\right)^{n}.
    \end{align}
    Since $\chi_{\omega}(g) = e^{n^{-1}\lambda(\alpha_{-}(gh))}$ when $\omega = n^{-1} \Ad^{*}(h)(\lambda)$, it has been proven that
    \begin{align}
        \norm{\hwsub{\lambda}{\grep^{\otimes n}} (\grep(h^{-1}) v)^{\otimes n}}^{\frac{1}{n}}
        = \left(\chi_{\omega}(g) \norm{\grep(g) v}\right)^{n}.
    \end{align}
    Optimizing over all $g \in G$ yields the result.
\end{proof}
The lower bound associated to \cref{prop:easy_deformed_strong_duality}, proven next, is slightly harder to state as it relies on replacing the $\lambda$ appearing in \cref{prop:easy_deformed_strong_duality} with a dominant weight that grows proportionally with increasing $n$.
For this we require the following definition.
\begin{defn}
    \label{defn:rational_projectors}
    Let $\omega \in i \mathfrak k^{*}$ have rational coadjoint orbit, meaning there exists a dominant, analytically integral element $\lambda \in \Lambda_+ \in i \mathfrak k^{*}$, an element $h \in K$, and a positive integer $\ell \in \mathbb N$ (which may be assumed as small as possible) such that
    \begin{equation}
        \ell \omega = \Ad^{*}(h)(\lambda),
    \end{equation}
    and $\omega_+ \coloneqq \lambda/\ell \in \mathfrak k^{*}$ is dominant.
    Given a representation $\grep : G \to \GL(\s H)$ of $G = K_{\mathbb C}$ and a positive integer $n \in \mathbb N$, define $\thwsub{h}{n\omega_+}{\grep^{\otimes n}}$ to be the projection operator onto the subspace of highest weight vectors in $\s H^{\otimes n}$ with weight $n \omega_+ = n \lambda / \ell$, albeit rotated by $h \in K$:
    \begin{equation}
        \thwsub{h}{n\omega_+}{\grep^{\otimes n}} \coloneqq \grep^{\otimes n}(h) \hwsub{n\omega_+}{\grep^{\otimes n}} \grep^{\otimes n}(h^{-1}).
    \end{equation}
    Note that if $n\omega_+$ is not analytically integral for some value of $n$, then $\hwsub{n\omega_+}{\grep^{\otimes n}} = 0$ and thus $\thwsub{h}{n\omega_+}{\grep^{\otimes n}} = 0$.
\end{defn}
\begin{thm}
    \label{thm:deformed_strong_duality}
    Let everything be defined as in \cref{defn:rational_projectors}.
    Then for all vectors $v \in \s V$,
    \begin{equation}
        \limsup_{n\to\infty} \norm{\thwsub{h}{n\omega_+}{\grep^{\otimes n}} v^{\otimes n}}^{\frac{1}{n}} = \capacity_{\grep}^{\omega}(v).
    \end{equation}
\end{thm}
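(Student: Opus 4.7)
The plan is to split the equality into the two inequalities $\leq$ and $\geq$ and address them with complementary tools. The $\leq$ direction will follow almost immediately from \cref{prop:easy_deformed_strong_duality}, while the $\geq$ direction will require combining \cref{prop:deform_cap_properties} with the undeformed \cref{thm:strong_duality} and then a Schur-orthogonality identity that expresses fixed-subspace norms in terms of highest-weight-subspace norms.

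For the upper bound ($\leq$), I would apply \cref{prop:easy_deformed_strong_duality} with $n$ replaced by $\ell m$ and with the weight there replaced by $m\lambda$; the corresponding deformation parameter is then $(\ell m)^{-1}\Ad^{*}(h)(m\lambda) = \ell^{-1}\Ad^{*}(h)(\lambda) = \omega$, so the proposition yields $\norm{\hwsub{m\lambda}{\grep^{\otimes \ell m}}(\grep(h^{-1})v)^{\otimes \ell m}}^{1/(\ell m)} \leq \capacity_{\grep}^{\omega}(v)$. Using the unitarity of $\grep(h)$ for $h \in K$, the left-hand side equals $\norm{\thwsub{h}{n\omega_+}{\grep^{\otimes n}} v^{\otimes n}}^{1/n}$ for $n = \ell m$. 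For $n$ not a multiple of $\ell$, minimality of $\ell$ together with the lattice structure of analytically integral weights forces $n\omega_+ \not\in \Lambda$, so $\thwsub{h}{n\omega_+}{\grep^{\otimes n}} = 0$ and the corresponding terms vanish. Hence the $\limsup$ over all $n$ is bounded above by $\capacity_{\grep}^{\omega}(v)$.

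For the lower bound ($\geq$), \cref{prop:deform_cap_properties} gives $\capacity_{\grep}^{\omega}(v)^{\ell} = \capacity_{\grep^{\otimes \ell} \otimes \grep_{\lambda}^{*}}(v^{\otimes \ell} \otimes \grep_{\lambda}^{*}(h)v_{\lambda}^{*})$, and \cref{thm:strong_duality} applied to this auxiliary representation turns the right-hand side into $\limsup_m \norm{\fsub{(\grep^{\otimes \ell} \otimes \grep_{\lambda}^{*})^{\otimes m}} x_m}^{1/m}$, where $x_m = v^{\otimes \ell m} \otimes (\grep_{\lambda}^{*}(h)v_{\lambda}^{*})^{\otimes m}$. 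The key technical step I would establish is the identity
\begin{equation}
\norm{\fsub{(\grep^{\otimes \ell} \otimes \grep_{\lambda}^{*})^{\otimes m}} x_m}^{2} = \frac{\norm{\thwsub{h}{m\lambda}{\grep^{\otimes \ell m}} v^{\otimes \ell m}}^{2}}{\dim \s H_{m\lambda}}.
\end{equation}
To prove it, I would expand the fixed-subspace projector as a Haar integral over $K$ via \cref{lem:projection_fixed_subspace}, unfold the tensor powers inside $\langle x_m, \fsub{}x_m\rangle$, substitute $k \mapsto hkh^{-1}$ to absorb $h$ into the rotated vector $w = \grep(h^{-1})v$, and invoke Schur orthogonality with respect to an orthonormal basis of each irreducible $K$-subrepresentation of $\s H^{\otimes \ell m}$ whose first basis vector for $\s H_{m\lambda}$ is the unit highest weight vector $v_{m\lambda}$: only the $(1,1)$ matrix coefficient of the $\s H_{m\lambda}$-isotypic part survives, delivering exactly $\norm{\hwsub{m\lambda}{\grep^{\otimes \ell m}} w^{\otimes \ell m}}^{2}$ divided by $\dim \s H_{m\lambda}$. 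The Weyl dimension formula then bounds $\dim \s H_{m\lambda}$ polynomially in $m$, so $(\dim \s H_{m\lambda})^{1/m} \to 1$ and the dimension prefactor is asymptotically invisible, leaving $\capacity_{\grep}^{\omega}(v)^{\ell} = \limsup_m \norm{\thwsub{h}{m\lambda}{\grep^{\otimes \ell m}} v^{\otimes \ell m}}^{1/m}$; taking $\ell$-th roots and again discarding the vanishing off-subsequence terms yields the lower bound.

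The main obstacle I anticipate is the Schur-orthogonality identity above. Duality bookkeeping is delicate: the vector $v_{\lambda}^{*}$ is the \emph{lowest} weight vector of $\grep_{\lambda}^{*}$, but $v_{\lambda}^{\otimes m}$ is the \emph{highest} weight vector of $\grep_{\lambda}^{\otimes m}$ with weight $m\lambda$, and that isotypic component has multiplicity one in $\s H_{\lambda}^{\otimes m}$. It is precisely this multiplicity-one property, together with the correct normalization $\norm{v_{\lambda}^{\otimes m}} = 1$, that makes the tensored dual vector $(\grep_{\lambda}^{*}(h)v_{\lambda}^{*})^{\otimes m}$ select only the $\s H_{m\lambda}$-component when paired against the $K$-invariant projector on the full space $\s H^{\otimes \ell m} \otimes (\s H_{\lambda}^{\otimes m})^{*}$. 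Once this identity is secured, the remainder of the argument is a routine comparison of polynomial and exponential growth inside the $\limsup$.
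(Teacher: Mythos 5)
Your proof is correct and follows essentially the same route as the paper: both reduce the deformed statement to the undeformed \cref{thm:strong_duality} applied to the auxiliary representation $\grep^{\otimes\ell}\otimes\grep_\lambda^*$ via \cref{prop:deform_cap_properties}, convert the resulting fixed-subspace norm into the highest-weight-subspace norm via a Schur identity with a $\dim(\s V_{m\lambda})$ prefactor, and absorb that prefactor because it grows only polynomially in $m$. Your separate $\leq$ argument via \cref{prop:easy_deformed_strong_duality} is in fact redundant, since your $\geq$ chain is already a string of equalities, but your explicit observation that $\thwsub{h}{n\omega_+}{\grep^{\otimes n}}$ vanishes when $\ell \nmid n$ (by minimality of $\ell$ and closure of the weight lattice under differences) is a useful clarification of a point the paper leaves implicit.
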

\begin{proof}
    The proof relies on the original strong duality result \cref{thm:strong_duality} from \cref{sec:strong_duality}.
    From \cref{thm:strong_duality} we get
    \begin{equation}
        \capacity_{\grep^{\otimes \ell} \otimes \grep_{\lambda}^{*}}(v^{\otimes \ell} \otimes \grep_{\lambda}^{*}(h)v_{\lambda}^{*}) = \limsup_{k\to\infty} \norm{\fsub{(\grep^{\otimes \ell} \otimes \grep_{\lambda}^{*})^{\otimes k}} ((\grep(h^{-1})v)^{\otimes \ell} \otimes v_{\lambda}^{*})^{\otimes k}}^{\frac{1}{k}},
    \end{equation}
    where $\fsub{(\grep^{\otimes \ell} \otimes \grep_{\lambda}^{*})^{\otimes k}}$ is the projection operator onto the subspace of $G$-invariant vectors in $(\s H^{\otimes \ell} \otimes \s V_{\lambda}^{*})^{\otimes k}$.
    By \cref{prop:deform_cap_properties}, we can relate the $\omega$-capacity of $v$ to the capacity above to obtain
    \begin{equation}
        \capacity_{\grep}^{\omega}(v) = \limsup_{k\to\infty} \norm{\fsub{(\grep^{\otimes \ell} \otimes \grep_{\lambda}^{*})^{\otimes k}} ((\grep(h^{-1})v)^{\otimes \ell} \otimes v_{\lambda}^{*})^{\otimes k}}^{\frac{1}{k\ell}},
    \end{equation}
    
    Since $v_{\lambda}$ is a weight-vector of highest weight $\lambda$ in $\s V_{\lambda}$, $v_{\lambda}^{\otimes k}$ is a vector of highest weight $k \lambda$ in $\s V_{k\lambda} \subseteq \s V_{\lambda}^{\otimes k}$ and thus Schur's lemma implies
    \begin{equation}
        \label{eq:deform_schur_trick}
        \norm{\fsub{(\grep^{\otimes \ell} \otimes \grep_{\lambda}^{*})^{\otimes k}} ((\grep(h^{-1})v)^{\otimes \ell} \otimes v_{\lambda}^{*})^{\otimes k}} = \frac{1}{\sqrt{\dim(\s V_{k\lambda})}}\norm{\hwsub{k\lambda}{\grep^{\otimes k\ell}} (\grep(h^{-1})v)^{\otimes k\ell}}.
    \end{equation}
    Let $n = k\ell$ be such that $k \lambda = k \ell \omega_+ = n \omega_+$.
    Then, taking the appropriate limit as $n \to \infty$ yields the result because $\dim(\s V_{k\lambda})$ only grows polynomially with increasing $k$ and thus does not affect the value of the limit.
\end{proof}

Consider the asymptotics of the sequence of probabilities, $\norm{ \hwsub{\lambda}{\grep^{\otimes n}} (\grep(h^{-1})v)^{\otimes n}}^{2}$, in the statement of \cref{thm:strong_duality} (where $v$ is assumed a unit vector).
From the statement of \cref{thm:strong_duality}, we observe that this sequence of probabilities decays to zero at an exponential rate with increasing $n$ given by $I_{v}(\mu) \coloneqq -\log \capacity_{\grep}^{\mu}(v)^2$.
Moreover, by \cref{prop:deform_cap_properties}, this rate vanishes if and only if $\momap_{\grep}([v]) = \mu$.
In \cref{sec:estimating_moment_maps}, we see how to bundle together the projection operators $\grep(h)^{\otimes n} \hwsub{\lambda}{\grep^{\otimes n}} \grep(h^{-1})^{\otimes n}$ in \cref{thm:strong_duality} to obtain a positive operator valued measure which concentrates around $\momap_{\grep}([v])$ for each unit vector $v \in \s H$.

\subsection{Strong duality \& purifications}
\label{sec:purifications}

An alternative formulation of \cref{thm:strong_duality} is concerned with density operators $\rho \in \state(\s H)$ rather than vectors $v \in \s H$, and follows from recalling that every density operator admits of a purification living in a dilated vector space.
\begin{cor}
    \label{cor:strong_duality_density_operators}
    Let $\grep : G \to \GL(\s H)$ be a representation of a complex reductive group $G$. 
    Then for all density operators $\rho \in \state(\s H)$,
    \begin{equation}
        \inf_{g \in G} \Tr(\rho \grep(g^{*} g)) = \limsup_{n\to\infty} \Tr(\rho^{\otimes n} \fsub{\grep^{\otimes n}})^{\frac{1}{n}},
    \end{equation}
    where $\fsub{\grep^{\otimes n}}$ is the projective operator onto the subspace of $\grep^{\otimes n}$-invariant vectors in $\s H^{\otimes n}$.
\end{cor}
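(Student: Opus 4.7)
The plan is to reduce the density-operator statement to the vector version of strong duality (\cref{thm:strong_duality}) via the standard purification trick. First I would fix an ancillary finite-dimensional Hilbert space $\s K$ of dimension at least $\mathrm{rank}(\rho)$ and choose a unit vector $\psi_{\rho} \in \s H \otimes \s K$ purifying $\rho$, so that $\Tr_{\s K}(P_{\psi_{\rho}}) = \rho$. Next I would extend $\grep$ to a representation of $G$ on $\s H \otimes \s K$ by acting trivially on the ancilla, i.e.\ $\grep'(g) \coloneqq \grep(g) \otimes \ident_{\s K}$. This $\grep'$ is a rational representation of the complex reductive group $G$ on a finite-dimensional Hilbert space whose inner product is $K$-invariant, so it meets the hypotheses of \cref{thm:strong_duality}.

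Second, I would translate both sides of the claim into statements about $\psi_{\rho}$ under $\grep'$. For the left-hand side, using $\grep(g)^{*}\grep(g) = \grep(g^{*}g)$ together with the partial-trace identity gives
\begin{equation}
    \norm{\grep'(g)\psi_{\rho}}^{2} = \braket{\psi_{\rho}, (\grep(g^{*}g)\otimes\ident_{\s K})\psi_{\rho}} = \Tr(\rho\,\grep(g^{*}g)),
\end{equation}
so that $\capacity_{\grep'}(\psi_{\rho})^{2} = \inf_{g\in G}\Tr(\rho\,\grep(g^{*}g))$. For the right-hand side, the key observation is that the fixed subspace of the trivially extended representation factorises, namely
\begin{equation}
    (\s H \otimes \s K)^{\grep'} = \s H^{\grep} \otimes \s K,
\end{equation}
because a vector in $\s H \otimes \s K$, viewed as a linear map $\s K^{*} \to \s H$ under the canonical identification, is annihilated by $(\grep(g)\otimes\ident) - \ident$ for every $g$ iff its image lies in $\s H^{\grep}$. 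Passing to the $n$th tensor power yields $\fsub{(\grep')^{\otimes n}} = \fsub{\grep^{\otimes n}} \otimes \ident_{\s K^{\otimes n}}$, whence
\begin{equation}
    \norm{\fsub{(\grep')^{\otimes n}}\psi_{\rho}^{\otimes n}}^{2} = \braket{\psi_{\rho}^{\otimes n},(\fsub{\grep^{\otimes n}}\otimes\ident)\psi_{\rho}^{\otimes n}} = \Tr(\rho^{\otimes n}\fsub{\grep^{\otimes n}}).
\end{equation}

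Finally, I would apply \cref{thm:strong_duality} to $\psi_{\rho}$ with the representation $\grep'$ and square the asserted equality, combining it with the two identities above to obtain
\begin{equation}
    \inf_{g\in G}\Tr(\rho\,\grep(g^{*}g)) = \capacity_{\grep'}(\psi_{\rho})^{2} = \limsup_{n\to\infty}\Tr(\rho^{\otimes n}\fsub{\grep^{\otimes n}})^{\frac{1}{n}},
\end{equation}
as required. There is no serious obstacle; the only point demanding care is the factorisation $(\s H \otimes \s K)^{\grep'} = \s H^{\grep} \otimes \s K$, which can be verified either directly from the definition of the fixed subspace or, more conceptually, by applying \cref{lem:projection_fixed_subspace} to $\grep'$ and noting that the Haar integral $\int_{K}\grep'(k)\diff\mu(k)$ acts as $\fsub{\grep}\otimes\ident_{\s K}$. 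The independence of the final answer from the choice of purification is automatic, as both sides of the identity depend only on $\rho$.
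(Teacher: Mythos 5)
Your proof is correct and follows essentially the same route as the paper's: purify $\rho$ to a vector in $\s H \otimes \s K$, extend $\grep$ trivially to the ancilla, show the capacity equals $\inf_g \Tr(\rho\,\grep(g^*g))$ and the fixed-subspace projector factorises as $\fsub{\grep^{\otimes n}} \otimes \ident$, then invoke \cref{thm:strong_duality}. The only cosmetic differences are that the paper takes $\s Z \cong \s H$ and permits a non-normalized purifying vector (carrying $\norm{v}^2$ through both sides), whereas you normalize $\psi_\rho$ from the outset and allow a smaller ancilla; both are equally valid, and your explicit justification of the factorisation via \cref{lem:projection_fixed_subspace} is a welcome touch the paper leaves implicit.
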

\begin{proof}
    The proof follows from an application of \cref{thm:strong_duality} to a purification of the state $\rho \in \state(\s H)$ alongside a lifted representation of $G$ on the purifying space.
    Let $\s Z \cong \s H$ be a vector space isomorphic to $\s H$ and let $v \in \s H \otimes \s Z$ be a purification of $\rho$, i.e., a non-zero vector $v$ such that for all operators $L \in \End(\s H)$
    \begin{equation}
        \Tr_{\s H}(\rho L) = \frac{\braket{v, (L \otimes \ident_{\s Z}) v}}{\braket{v,v}} = \Tr_{\s H \otimes \s Z}[(L \otimes \ident_{\s Z}) P_{[v]}].
    \end{equation}
    Similarly, let $\grep' : G \to \GL(\s H \otimes \s Z)$ be the representation of $G$ on $\s V \otimes \s Z$ defined by
    \begin{equation}
        \grep'(g) \coloneqq \grep(g) \otimes \ident_{\s Z}.
    \end{equation}
    Then the capacity of the purifying vector $v \in \s H \otimes \s Z$ with respect $\grep'$ is
    \begin{align}
        \capacity_{\grep'}^2(v) 
        &= \inf_{g \in G}\norm{\grep'(g) v}^2,\\
        &= \inf_{g \in G}\braket{\grep'(g) v, \grep'(g) v}, \\
        &= \inf_{g \in G}\braket{ v, \grep'(g^{*}g) v}, \\
        &= \norm{v}^{2}\inf_{g \in G} \Tr(\rho \grep(g^{*}g)).
    \end{align}
    Moreover, if $\fsub{\grep^{\otimes n}} \in \End(\s H^{\otimes n})$ projects onto the subspace of $G$-invariant vectors with respect to $\grep$, then $\fsub{\grep^{\otimes n}} \otimes \ident_{\s Z}^{\otimes n} \in \End((\s H \otimes \s Z)^{\otimes n})$ projects onto the subspace of $G$-invariant vectors with respect to $\grep'$.
    Finally, as $(\fsub{\grep^{\otimes n}})^2 = \fsub{\grep^{\otimes n}}$ is projective,
    \begin{equation}
        \norm{v}^{2}\Tr(\rho^{\otimes n} \fsub{\grep^{\otimes n}}) = \braket{v^{\otimes n}, (\fsub{\grep^{\otimes n}} \otimes \ident_{\s Z}^{\otimes n}) v^{\otimes n}} = \norm{(\fsub{\grep^{\otimes n}} \otimes \ident_{\s Z}^{\otimes n}) v^{\otimes n}}^2.
    \end{equation}
    An application of \cref{thm:strong_duality} then implies the claimed result.
\end{proof}

In precisely the same way \cref{cor:strong_duality_density_operators} is a corollary from \cref{thm:strong_duality},
the following result is a corollary of \cref{thm:deformed_strong_duality}.
\begin{cor}
    \label{cor:deformed_strong_duality_density_operators}
    Let $\grep : G \to \GL(\s H)$ be a representation of $G$ on $\s H$ and suppose $\omega \in i\mathfrak k^{*}$ has a rational coadjoint orbit.
    Then for all $\rho \in \state(\s H)$,
    \begin{equation}
        \inf_{g \in G} \chi_{\omega}^{2}(g)\Tr(\grep(g^{*}g) \rho) = \limsup_{n\to\infty} \Tr(\thwsub{h}{n\omega_+}{\grep^{\otimes n}} \rho^{\otimes n})^{\frac{1}{n}},
    \end{equation}
    where $\chi_{\omega} : G \to (0, \infty)$ is given by \cref{defn:chi_map}, and $\thwsub{h}{n\omega_+}{\grep^{\otimes n}}$ is defined as in \cref{defn:rational_projectors}.
\end{cor}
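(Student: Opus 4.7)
The plan is to reduce the statement to \cref{thm:deformed_strong_duality} via the same purification trick used to deduce \cref{cor:strong_duality_density_operators} from \cref{thm:strong_duality}. Concretely, I would introduce an auxiliary Hilbert space $\s Z \cong \s H$, a purification $v \in \s H \otimes \s Z$ of $\rho$ (so that $\Tr_{\s H}(\rho L) = \|v\|^{-2}\braket{v, (L \otimes \ident_{\s Z}) v}$ for all $L \in \End(\s H)$), and the lifted representation $\grep' : G \to \GL(\s H \otimes \s Z)$ defined by $\grep'(g) \coloneqq \grep(g) \otimes \ident_{\s Z}$. Note that because $\ident_{\s Z}$ carries weight zero, the weight decomposition of $(\s H\otimes\s Z)^{\otimes n}$ under $\grep'^{\otimes n}$ is obtained from that of $\s H^{\otimes n}$ by tensoring each weight space with $\s Z^{\otimes n}$. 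In particular, the highest-weight projection defined in \cref{defn:rational_projectors} factorizes as
\begin{equation}
\thwsub{h}{n\omega_+}{(\grep')^{\otimes n}} = \thwsub{h}{n\omega_+}{\grep^{\otimes n}} \otimes \ident_{\s Z}^{\otimes n}.
\end{equation}

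Next, I would compute both sides of the asymptotic equality of \cref{thm:deformed_strong_duality} applied to the vector $v$ and the representation $\grep'$. On the left-hand side, since $\grep'(g^* g) = \grep(g^* g) \otimes \ident_{\s Z}$,
\begin{equation}
\capacity_{\grep'}^{\omega}(v)^{2} = \inf_{g\in G}\chi_{\omega}(g)^{2}\braket{v,\grep'(g^{*}g)v} = \|v\|^{2}\inf_{g\in G}\chi_{\omega}(g)^{2}\Tr(\grep(g^{*}g)\rho).
\end{equation}
On the right-hand side, using projectivity of $\thwsub{h}{n\omega_+}{\grep^{\otimes n}}$ together with the factorization above,
\begin{equation}
\bigl\|\thwsub{h}{n\omega_+}{(\grep')^{\otimes n}}\,v^{\otimes n}\bigr\|^{2} = \braket{v^{\otimes n},(\thwsub{h}{n\omega_+}{\grep^{\otimes n}}\otimes \ident_{\s Z}^{\otimes n})v^{\otimes n}} = \|v\|^{2n}\Tr\bigl(\rho^{\otimes n}\thwsub{h}{n\omega_+}{\grep^{\otimes n}}\bigr).
\end{equation}
Substituting both expressions into \cref{thm:deformed_strong_duality}, the common factors of $\|v\|^{2}$ and $\|v\|^{2/n}$ (which tends to $\|v\|^{2}$) cancel on taking the $n$-th root and the $\limsup$, yielding the claimed identity.

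The only non-routine step is confirming the factorization of $\thwsub{h}{n\omega_+}{(\grep')^{\otimes n}}$; however this is immediate once one observes that $\ident_{\s Z}$ commutes with all of $K$ (and thus with the rotation by $h$) and contributes only the zero weight, so no cross-terms between $\s H$-weights and $\s Z$-weights appear. Everything else is a bookkeeping exercise mirroring the proof of \cref{cor:strong_duality_density_operators}, and so the corollary follows.
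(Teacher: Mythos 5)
Your proof is correct and follows exactly the route the paper itself intends: the paper presents this corollary without a written proof, noting only that it follows from \cref{thm:deformed_strong_duality} "in precisely the same way" that \cref{cor:strong_duality_density_operators} follows from \cref{thm:strong_duality}, and your purification argument is precisely that. The key observation you single out—that $\thwsub{h}{n\omega_+}{(\grep')^{\otimes n}}$ factorizes as $\thwsub{h}{n\omega_+}{\grep^{\otimes n}} \otimes \ident_{\s Z}^{\otimes n}$ because $G$ acts trivially on $\s Z$—is indeed the only step not already spelled out in the proof of \cref{cor:strong_duality_density_operators}, and your justification of it is sound.
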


\section{Estimation schemes}
\label{sec:estimation_schemes}

In \cref{sec:estimating_moment_maps}, we demonstrate how to estimate the moment map of a state $\psi \in \proj \s H$ with respect to a given representation of a complex reductive group acting on $\s H$.
Afterwards, \cref{sec:defn_estimation_schemes} covers a more general framework of \textit{property} estimation schemes.

\subsection{Estimating moment maps}
\label{sec:estimating_moment_maps}

In this section, we will study the asymptotics of the measurement described by the POVM $F_{\grep^{\otimes n}}$ defined by \cref{eq:momap_proto_estimation} when the representation $\grep^{\otimes n} : G \to \GL(\s H^{\otimes n})$ is the $n$th tensor power representation of $G$ on $\s H^{\otimes n}$.
As the highest weights of the representation $\grep^{\otimes n}$ scale proportionally with respect to $n$, it becomes more useful to define a \textit{regularized} variant of the POVM $F_{\grep^{\otimes n}}$ by $R^{\grep}_n(\Delta) = F_{\grep^{\otimes n}}(n \Delta)$.

\begin{defn}
    \label{defn:regularized_povm}
    Let $\grep : G \to \GL(\s H)$ be a representation of $G = K_{\mathbb C}$ and let $n \in \mathbb N$ be a positive integer.
    The \defnsty{moment map estimation scheme} of order $n$ is the POVM
    \begin{equation}
        R^{\grep}_{n} : \borel{i\mathfrak k^*} \to \bound(\s H^{\otimes n}),
    \end{equation}
    defined implicitly for all measurable functions $g : i\mathfrak k^{*} \to \mathbb R$ and states $\sigma_n \in \state(\s H^{\otimes n})$ by
    \begin{align}
        \begin{split}
            \int_{\omega \in i\mathfrak k^{*}} g(\omega) \Tr(\diff R^{\grep}_n(\omega) \sigma_{n})
            = \sum_{\lambda \in \Lambda_+} d_{\lambda} \int_{k \in K} g\left(n^{-1}\Ad^{*}(k)(\lambda)\right) \Tr(\thwsub{k}{\lambda}{\grep^{\otimes n}}\sigma_n) \diff \mu(k).
        \end{split}
    \end{align}
\end{defn}
Whenever the POVM $R_n^{\grep}$ defined above is applied to $n$ copies of a state $\psi \in \proj \s H$, the measurement outcomes in $i\mathfrak k^{*}$ are distributed according to the probability measure, $\xi_{n}^{\psi} : \borel{i \mathfrak k^{*}} \to [0, 1]$, defined by
\begin{equation}
    \label{eq:xi_momap_est}
    \xi_n^{\psi}(\Delta) = \Tr(R_n^{\grep}(\Delta) P_{\psi}^{\otimes n}).
\end{equation}
The following result, \cite[Prop 3.24]{botero2021large}, proves that as $n$ tends to infinity, the measurement outcomes, $\omega \in i\mathfrak k^{*}$, are concentrated around the value of the moment map for $\psi$, $\momap_{\grep}(\psi) \in i \mathfrak k^{*}$.
\begin{thm}
    \label{thm:momap_ldp_upper_bound}
    For each $n \in \mathbb N$ let $\xi_n^{\psi} : \borel{i \mathfrak k^{*}} \to [0,1]$ be the probability measure in \cref{eq:xi_momap_est}.
    Then the sequence $\seq{\xi_n^{\psi}}$ converges weakly to the Dirac measure $\delta_{\momap_{\grep}(\psi)} : \borel{i \mathfrak k^*} \to [0,1]$ concentrated on $\momap_{\grep}(\psi) \in i \mathfrak k^{*}$.
    Moreover, let $I_{\psi} : i \mathfrak k^{*} \to [0, \infty]$ be the function defined for all $\omega \in i \mathfrak k^{*}$ by
    \begin{equation}
        \label{eq:rate_function}
        I_{\psi}(\omega) \coloneqq - \ln \projcapacity_{\grep}^{\omega}(\psi)^{2},
    \end{equation}
    where $\capacity_{\grep}^{\omega}(v)$ is the $\omega$-capacity of $v$ (\cref{defn:deform_cap}).
    Then the sequence of probability measures $\seq{\xi_n^{\psi}}$ satisfies the large deviation principle upper bound with rate function $I_{\psi}$. In fact, for all $\Delta \in \borel{i \mathfrak k^{*}}$,
    \begin{equation}
        \label{eq:momap_ldp_upper_bound}
        \limsup_{n\to\infty} \frac{1}{n} \log \xi_{n}^{\psi}(\Delta) \leq - \inf_{\omega \in \Delta} I_{\psi}(\omega).
    \end{equation}
\end{thm}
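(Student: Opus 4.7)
My plan is to prove the large-deviation upper bound first, and then derive weak convergence as a corollary via Lemma~\ref{lem:weak_convergence_to_dirac_measure}. The central ingredient is the finite-$n$, non-asymptotic upper bound supplied by Proposition~\ref{prop:easy_deformed_strong_duality}: crucially, this is a genuine inequality rather than a limit, so no hidden $\limsup$ prefactors appear.

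Fix a unit representative $v \in \wozero{\psi}$. For every $k \in K$ and every $\lambda \in \Lambda_+$ appearing in $\grep^{\otimes n}$, set $\omega_{k,\lambda} \coloneqq n^{-1}\Ad^{*}(k)(\lambda)$; by construction this has rational coadjoint orbit. Proposition~\ref{prop:easy_deformed_strong_duality} then yields
\begin{equation*}
    \Tr\!\bigl(\thwsub{k}{\lambda}{\grep^{\otimes n}} P_\psi^{\otimes n}\bigr) = \bigl\lVert \hwsub{\lambda}{\grep^{\otimes n}}(\grep(k^{-1})v)^{\otimes n}\bigr\rVert^{2} \leq \capacity_\grep^{\omega_{k,\lambda}}(v)^{2n} = \exp(-n I_\psi(\omega_{k,\lambda})),
\end{equation*}
where the final equality uses $\capacity_\grep^{\omega}(v) = \projcapacity_\grep^{\omega}(\psi)$ for unit $v$. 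Substituting this pointwise bound into the representation-theoretic expansion of $\xi_n^\psi$ from Definition~\ref{defn:regularized_povm} and restricting the $K$-integrand by the indicator $\omega_{k,\lambda} \in \Delta$ gives
\begin{equation*}
    \xi_n^\psi(\Delta) \leq \Bigl(\sum_{\lambda \in \Lambda_+^{(n)}} d_\lambda\Bigr) \cdot \exp\!\Bigl(-n \inf_{\omega \in \Delta} I_\psi(\omega)\Bigr),
\end{equation*}
where $\Lambda_+^{(n)}$ is the finite set of dominant weights occurring in $\grep^{\otimes n}$.

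The prefactor grows at most polynomially in $n$: weights of $\grep^{\otimes n}$ are $n$-fold sums of weights of $\grep$ and therefore lie in the $n$-scaling of a fixed bounded polytope in $i\mathfrak t^*$, so $\lvert\Lambda_+^{(n)}\rvert$ is polynomial in $n$, and each $d_\lambda = \dim(\s H_\lambda)$ is polynomial in $\lVert\lambda\rVert = O(n)$ by the Weyl dimension formula. Taking $\frac{1}{n}\log$ of the displayed inequality and sending $n \to \infty$ eliminates the polynomial prefactor, yielding the LDP upper bound \eqref{eq:momap_ldp_upper_bound}.

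For the weak-convergence claim I would invoke Lemma~\ref{lem:weak_convergence_to_dirac_measure} after verifying its two hypotheses. First, Proposition~\ref{prop:deform_cap_properties} (together with Theorem~\ref{thm:kempf_ness_theorem}) gives $\projcapacity_\grep^\omega(\psi) = 1$ if and only if $\omega = \momap_\grep(\psi)$, so $I_\psi \geq 0$ vanishes exclusively at $\momap_\grep(\psi)$. Second, $I_\psi$ has compact sublevel sets since it is lower semi-continuous and identically $+\infty$ outside the bounded convex hull of the weights of $\grep$ that contains the supports of every $\xi_n^\psi$. Lemma~\ref{lem:weak_convergence_to_dirac_measure} then gives $\xi_n^\psi \weak \delta_{\momap_\grep(\psi)}$. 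The main obstacle is not the deformed-strong-duality step, which is plug-and-play, but the representation-theoretic bookkeeping justifying the polynomial bound on $\sum_\lambda d_\lambda$ and the compact-sublevel-set property of $I_\psi$; both ultimately rest on the boundedness of the moment polytope, which is standard but requires citing additional structure from geometric invariant theory.
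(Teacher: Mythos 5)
Your proof takes the same route as the paper: the pointwise bound from Proposition~\ref{prop:easy_deformed_strong_duality}, a polynomial-in-$n$ prefactor, $\frac{1}{n}\log$ to kill it, and then Lemma~\ref{lem:weak_convergence_to_dirac_measure} for weak convergence after noting that $I_\psi$ vanishes only at $\momap_\grep(\psi)$ via Proposition~\ref{prop:deform_cap_properties}. If anything, your bookkeeping is slightly more careful than the paper's in two places. First, the prefactor: the paper substitutes the single-irrep bound $d_\lambda \leq (n+1)^{d(d-1)/2}$ straight into $\Tr(R^{\grep}_n(\Delta)P_\psi^{\otimes n}) \leq (\cdots)\sup_{\omega\in\Delta}\projcapacity^\omega(\psi)^{2n}$ without multiplying by the number of dominant weights $\lvert\Lambda_+^{(n)}\rvert$, whereas you correctly identify the prefactor as $\sum_{\lambda}d_\lambda$ and argue it is polynomial from the boundedness of the weight polytope together with the Weyl dimension formula; the discrepancy is harmless since both are polynomial, but your version is the honest one. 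Second, you explicitly flag that Lemma~\ref{lem:weak_convergence_to_dirac_measure} requires $I_\psi$ to be a genuine rate function (compact sublevel sets), which the paper applies without comment; your sketch of why this holds (lower semi-continuity of $I_\psi$ as a $\sup$ over continuous functions of $\omega$, plus $I_\psi \equiv +\infty$ outside a fixed compact set determined by the weights of $\grep$) is the right kind of argument, though you leave both assertions at the level of heuristics, as you acknowledge. In short, this is the paper's proof with a couple of gaps filled in rather than a different argument.
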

\begin{proof}
    From \cref{prop:easy_deformed_strong_duality}, we have, when $\omega = n^{-1}\Ad^{*}(k)(\lambda)$, the inequality
    \begin{equation}
        \Tr(\grep(k)^{\otimes n}\hwsub{\lambda}{\grep^{\otimes n}} \grep(k^{-1})^{\otimes n} P_{\psi}^{\otimes n})^{2} = \norm{\hwsub{\lambda}{\grep^{\otimes n}} (\grep(k^{-1})v)^{\otimes n}}^{2} \leq \projcapacity_{\grep}^{\omega}(\psi)^{2n}.
    \end{equation}
    Furthermore, the dimension $d_{\lambda} = \dim(\s H_{\lambda})$ of any irreducible representation appearing in $\s H^{\otimes n}$ is at most the number of partitions of $n$ into $d = \dim(\s H)$ parts, we have the upper bound $d_{\lambda} \leq (n+1)^{d (d-1)/2}$.
    Therefore, since the Haar measure $\mu$ on $K$ is normalized,
    \begin{align}
        \Tr(R^{\grep}_n(\Delta) P_{\psi}^{\otimes n})
        &\leq (n+1)^{d (d-1)/2}\sup_{\omega \in \Delta} \projcapacity_{\grep}^{\omega}(\psi)^{2n}.
    \end{align}
    Taking the appropriate limit yields \cref{eq:momap_ldp_upper_bound}.

    Moreover, note that $I_{\psi}(\omega)$ vanishes if and only if $\capacity_{\grep}^{\omega}(v) = 1$ if and only if $\momap_{\grep}(\psi) = \omega$ (\cref{prop:deform_cap_properties}).
    Therefore, \cref{lem:weak_convergence_to_dirac_measure} implies $\xi_{n}^{\psi}$ converges weakly to the Dirac measure concentrated at the value of $\momap_{\grep}(\psi)$.
\end{proof}

\begin{rem}
    \label{rem:momap_ldp_lower_bound}
    The authors of Ref.~\cite{franks2020minimal} claim, without explicit proof, that in addition to the large deviation upper bound provided by \cref{thm:momap_ldp_upper_bound}, that the sequence of probability measures $\seq{\xi_n^{\psi}}$ satisfies the large deviation principle \textit{lower} bound with rate function $I_{\psi} : i \mathfrak k^{*} \to [0,\infty]$, i.e., for all open subsets $O \in \borel{i \mathfrak k^{*}}$,
    \begin{equation}
        \label{eq:momap_ldp_lower_bound}
        \liminf_{n\to\infty} \frac{1}{n} \log \xi_{n}^{\psi}(O) \geq - \inf_{\omega \in O} I_{\psi}(\omega).
    \end{equation}
    In light of \cref{thm:deformed_strong_duality}, such a result seems plausible.
    In Ref.~\cite{botero2021large} however, the authors identify a number of additional assumptions which are sufficient to prove \cref{eq:momap_ldp_lower_bound}.
    For instance, if $I_{\psi} : i \mathfrak k^{*} \to [0,\infty]$ happens to be a continuous function on its domain, defined as
    \begin{equation}
        \mathrm{dom}(I_{\psi}) = \{ \omega \in i \mathfrak k^{*} \mid I_{\psi}(\omega) < \infty \},
    \end{equation}
    then the proof technique from~\cite{botero2021large} yields the lower bound of \cref{eq:momap_ldp_lower_bound}.
    The authors of Ref.~\cite{botero2021large} additionally conjecture that this continuity condition for $I_{\psi}$ indeed holds for all such rate functions, as it holds for all considered examples.
\end{rem}
\begin{exam}
    \label{exam:state_est}
    Here we return to the example from \cref{sec:sanov} for estimating the probability distribution $q = (q_1, \ldots, q_d)$ associated to the decomposition of a unit vector $v \in \mathbb C^{d}$ into the standard orthonormal basis, i.e., $q_i = \norm{P_i v}^{2}$ as in \cref{eq:q_new}.
    In this context, we have $K = \U(1)^{d}$, $G = \wozero{\mathbb C}^{d}$ and $i \mathfrak k = i \mathfrak t = \mathbb R^{d}$, and the representation $\grep : \wozero{\mathbb C}^{d} \to \GL(d)$ defined by coordinate-wise multiplication as $(\grep(g_1, \ldots, g_d) v)_{k} = g_k v_k$.
    Then the function $\alpha_{-} : \wozero{\mathbb C}^{d} \to \mathbb R^{d}$ from \cref{cor:iwasawa_lwv} satisfies
    \begin{equation}
        \alpha_{-}(g_1, \ldots, g_k) = (\log \abs{g_1}, \ldots, \log \abs{g_d}),
    \end{equation}
    such that for any $\omega \in (\mathbb R^{d})^{*}$, whose dual we identify with a $p = (p_1,\ldots, p_d) \in \mathbb R^d$, the $\omega$-capacity of $v$ satisfies
    \begin{align}
        \capacity^{\omega}_{\grep}(v)^2 
        &= \inf_{g \in \wozero{\mathbb C}^{d}} e^{-2\omega(\log \abs{g_1}, \ldots, \log \abs{g_d})} \sum_{j=1}^{d} \abs{g_j v_j}^{2}, \\
        &= \inf_{x \in \mathbb R_{\geq 0}^{d}} e^{-2\omega(x_1, \ldots, x_d)} \sum_{j=1}^{d} e^{2 x_j} \abs{v_j}^{2}, \\
        &= \inf_{x \in \mathbb R_{\geq 0}^{d}} e^{-2\braket{p, x}} \sum_{j=1}^{d} e^{2 x_j} q.
    \end{align}
    Comparing this expression with \cref{eq:KL_as_optimize}, we recover $I_{q}(p) = - \log \capacity^{\omega}_{\grep}(v)^2$ as the relative entropy between $p$ and $q$ as expected.
\end{exam}
\begin{exam}
    Here we consider the example of the quantum state estimation theorem due to \citeauthor{keyl2006quantum}~\cite[Thm. 3.2]{keyl2006quantum} (see also \cite[Ex. 3.17]{botero2021large}, and \cite[Thm. 5.7]{franks2020minimal}).
    Let $K = \U(d)$ be the group of unitary $d \times d$ matrices such that $i\mathfrak k = i\mathfrak u(d)$ is the set of all $d\times d$ Hermitian matrices.
    Then $G = K_{\mathbb C} = \GL(d, \mathbb C)$ is the group of invertible $d \times d$ complex matrices.
    An element $\omega = i\mathfrak u(d)^{*}$ then corresponds to a real-valued linear map on $d \times d$ Hermitian matrices, which can be freely identified with a (not necessarily positive or normalized) Hermitian operator $\rho \in i \mathfrak u(d)$ such that for all $A \in i \mathfrak u(d)$,
    \begin{equation}
        \omega(A) = \Tr(\rho A) \in \mathbb R.
    \end{equation}
    The coadjoint action of a unitary $U \in \U(d)$ on $\omega \in i\mathfrak u(d)^{*}$ then corresponds to conjugation of $\rho$:
    \begin{equation}
        \Ad^{*}(U)(\rho)(A) = \omega( U^{-1} A U ) = \Tr(U \rho U^{-1} A).
    \end{equation}
    Furthermore, the maximal abelian subalgebra $i \mathfrak t \cong \mathbb R^{d}$ is identified with the set of diagonal $d\times d$ matrices with real entries, and the positive Weyl chamber those matrices with sorted entries along the diagonal.
    Therefore, if $\omega$ is identified with $\Tr(\rho \cdot )$ as above, then $\omega_+$ (satisfying $\omega = \Ad^{*}(U)(\omega_+)$) is identified by
    \begin{equation}
        \omega_+( A) = \Tr( \diag(s) A )
    \end{equation}
    where $s = \spec(\rho)$ is the vector of sorted eigenvalues of $\rho$.

    Now consider the representation $\grep : \U(d) \to \U(d d')$ of $\U(d)$ defined by
    \begin{equation}
        \grep(U) = U \otimes \ident_{d'}.
    \end{equation}
    where $d' \geq d$.
    Then the moment map of a ray $\psi \in \proj (\mathbb C^{d} \otimes \mathbb C^{d'})$ then satisfies for all $A \in i \mathfrak u(d)$,
    \begin{equation}
        \momap_{\grep}(\psi)(A) = \Tr_{d, d'}(P_{\psi} (A \otimes \ident_{d})) = \Tr_{d}(\sigma A).
    \end{equation}
    where $\sigma \in \state(\mathbb C^{d})$ is the $d$-dimensional reduced density matrix $\sigma = \Tr_{d'} (P_{\psi})$.
    Meanwhile the $\omega$-capacity of the ray $\psi$ satisfies
    \begin{align}
        \projcapacity_{\grep}^{\omega}(\psi)^{2}
        &= \inf_{M \in \GL(d, \mathbb C)} e^{2\omega_+ (\alpha_{-}(M U))} \Tr(M \sigma M^{*}), \\
        &= \inf_{M \in \GL(d, \mathbb C)} e^{2\braket{s, \alpha_{-}(M)}} \Tr(M (U^{-1}\sigma U) M^{*}).
    \end{align}
    The details of this optimization have been worked out in \cite{franks2020minimal} and \cite{botero2021large} to obtain the rate function defined by \cref{eq:rate_function} of the form
    \begin{equation}
        I_{\psi}(\omega) = \sum_{j=1}^{d} s_j \log s_j - (s_j - s_{j-1}) \log \lpm_j (U^{-1} \sigma U), 
    \end{equation}
    where $\lpm_j$ is the $j$th leading principal minor, whenever the spectra $s \in \mathbb R^d$ of $\rho$ is normalized and positive (so that $\rho$ is a density operator), and otherwise $I_{\psi}(\omega) = \infty$.
    Furthermore, as $\momap_{\grep}(\psi)(A) = \Tr(\sigma A)$ and $\omega(A) = \Tr(\rho A)$, we have
    \begin{equation}
        I_{\psi}(\omega) = 0 \Longleftrightarrow \sigma = \rho.
    \end{equation}
    In \cref{chap:qmp}, specifically \cref{cor:keyl_div}, we rediscover this rate function and refer to it as the \textit{Keyl divergence} and use the notation $\keyl{\rho}{\sigma}$ in place of $I_{\psi}(\omega)$.
\end{exam}

\begin{rem}
    As a special case, one can consider applying the moment map estimation scheme from \cref{defn:regularized_povm} to $n$ copies of the uniform state,
    \begin{equation}
        \sigma_n = \left(\frac{\ident_{\s H}}{d}\right)^{\otimes n}.
    \end{equation}
    where $d = \dim(\s H)$.
    As $\ident_{\s H}$ is $K$-invariant, $\grep(k^{-1}) \ident_{\s H} \grep(k) = \ident_{\s H}$, the resulting probability measure over $i\mathfrak k^{*}$ is constant along coadjoint orbits.
    Integrating over $k \in K$ then yields a discrete measure $\nu \to \borel{\Lambda_+} \to [0,1]$ over highest weights where $\lambda \in \Lambda_+$ has the probability
    \begin{align}
        \begin{split}
            \nu(\{\lambda\}) = \frac{d_{\lambda} \Tr(\hwsub{\lambda}{\grep^{\otimes n}})}{d^n} = \frac{\dim(\s H_{\lambda}) \dim(\s M_{\lambda}^{\s H^{\otimes n}})}{\dim(\s H)^{n}}.
        \end{split}
    \end{align}
    Therefore, the large deviations results of \cref{thm:momap_ldp_upper_bound} and \cref{rem:momap_ldp_lower_bound} recover, as special cases, the result of \citeauthor{cegla1988free}~\cite{cegla1988free} on representations of $K = \SU(2)$ or more generally the result of \citeauthor{duffield1990large}~\cite{duffield1990large}.
\end{rem}


\subsection{General definition}
\label{sec:defn_estimation_schemes}

\begin{defn}
    Let $(X, \borel{X})$ be a standard Borel space and $(\state(\s H), \borel{\state(\s H)})$ the standard Borel space of density operators on a finite-dimensional complex Hilbert space $\s H$.
    A measurable function $f : \state(\s H) \to X$ is called a \defnsty{property} and elements $x \in X$ are called \defnsty{property values}.
\end{defn}

\begin{exam}
    For example, a Boolean-valued property, $p : \s S \to \{T, F\}$, of quantum states is just a proposition about quantum states.
\end{exam}

Given some property $f : \state(\s H) \to X$ of quantum states, the purpose of an $f$-estimation scheme is to produce, for each state $\rho$, an estimate for the value of $f(\rho)$. Following the framework of \citeauthor{keyl2006quantum}~\cite{keyl2006quantum}, an \textit{estimation scheme} will be modeled by a sequence of quantum measurements indexed by a positive integer $n$ which indicates the number of copies of the underlying state $\rho$ on which the measurement is performed. 
Formally, for each $n$, the associated measurement is described by a positive operator-valued measure (POVM) of the form
\begin{equation}
    E_n : \borel{X} \to \bound(\s H^{\otimes n}).
\end{equation}
Specifically, $E_n$ is $\sigma$-additive set function from the Borel $\sigma$-algebra $\borel{X}$ for $X$ to the space of bounded operators on $\s H^{\otimes n}$ (denoted $\bound(\s H^{\otimes n}$)) such that $E_n(\emptyset) = 0$, $E_n(X) = \ident_{\s H}^{\otimes n}$, and for all $\Delta \in \borel{X}$ we have $E_n(\Delta) \geq 0$.

For a given positive integer $n$, measurement $E_n : \borel{X} \to \bound(\s H^{\otimes n})$ and state $\rho \in \state$, the probability for obtaining a measurement outcome $x$ in the region $\Delta \in \borel{X}$ after performing the measurement $E_n$ on $\rho^{\otimes n}$ is expressed as
\begin{equation}
    \xi_{n}^{\rho}(\Delta) \coloneqq \xi_{n}(\Delta | \rho) \coloneqq \Tr(E_n(\Delta) \rho^{\otimes n}).
\end{equation}
In order for the outcome of such a measurement to serve as an estimate for the property value of $f(\rho)$, at the very least, one should demand that the associated sequence of probability measures $(\xi_n^{\rho})_{n \in \mathbb N}$, becomes increasingly concentrated around the true property value $f(\rho) \in X$ as $n \to \infty$.  
By concentration around the value $f(\rho) \in X$, we consider the Dirac measure $\delta_{f(\rho)} : \borel{X} \to [0,1]$, localized at $f(\rho) \in X$, defined by 
\begin{equation}
    \delta_{f(\rho)}(\Delta) =
    \begin{cases}
        1 & x \in \Delta, \\
        0 & x \not \in \Delta.
    \end{cases}
\end{equation}
By convergence in this instance, we demand that for all bounded and continuous functions, $g : X \to \mathbb R$, the expected value of $g$, taken with respect to the state-dependent measure $\xi_{n}^{\rho}$ convergences to the value of $g(f(\rho))$:
\begin{equation}
    \label{eq:expected_value_weak_conv}
    \lim_{n \to \infty} \int_X g \xi^{\rho}_{n} = g(f(\rho)).
\end{equation}
In other words, we demand that the sequence $(\xi_{n}^{\rho})_{n\in\mathbb N}$ of probability measures on $X$ \textit{converges weakly} (see \cref{sec:measure_theory} for a definition of weak convergence) to the aforementioned Dirac measure, $\delta_{f(\rho)}$ on $X$.

\begin{exam}
    For the sake of concreteness, consider briefly the task of estimating the spectrum of a qutrit quantum state, $\rho \in \state(\mathbb C^{3})$.
    In this case, the property under consideration is the function which maps a qutrit density operator $\rho$ to its three non-negative eigenvalues, $\mathrm{spec}(\rho) = (\lambda_1, \lambda_2, \lambda_3)$, which may be assumed sorted and summing to one.
    The  space of all possible property values is therefore the set $X = \Sigma_{3}^{\downarrow}$ where
    \begin{equation}
        \label{eq:qutrit_spectra}
        \Sigma_{3}^{\downarrow} \coloneqq \{ (\lambda_1, \lambda_2, \lambda_3) \in \mathbb R^3 \mid \lambda_1 \geq \lambda_2 \geq \lambda_3 \geq 0, \lambda_1 + \lambda_2 + \lambda_3 = 1 \}.
    \end{equation}
    Using the spectral estimation scheme proposed by \citeauthor{keyl2001estimating}~\cite{keyl2001estimating}, to be considered in detail in \cref{sec:examples_est_schemes}, one can perform a collective measurement on $n$ copies of the state $\rho$ and extract an estimate $(\lambda'_1, \lambda'_2, \lambda'_3) \in \Sigma_{3}^{\downarrow}$ for the spectrum of $\rho$ with probability measure $\xi_{n}^{\rho}((\lambda'_1, \lambda'_2, \lambda'_3))$. 
    As the number $n$ of copies of $\rho$ increases, the probability measure for estimates should converges to the Dirac measure concentrated at the true spectrum of $\rho$, as illustrated by \cref{fig:spectral_dists}.
\end{exam}

\begin{figure}
    \begin{center}
        \subfloat[$n=50$]{\includegraphics[scale=0.4]{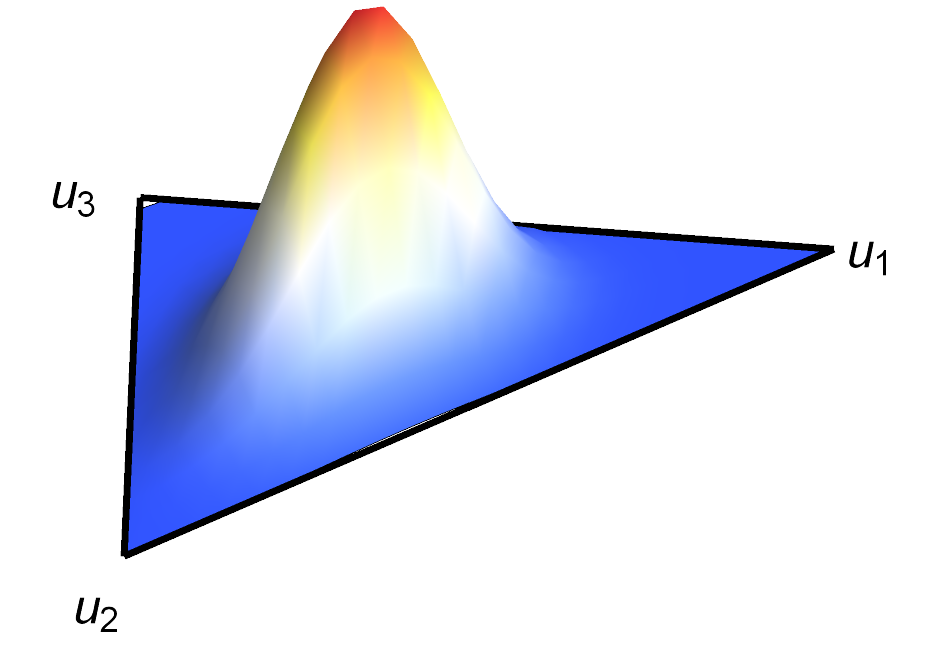}}
        \quad
        \subfloat[$n=500$]{\includegraphics[scale=0.4]{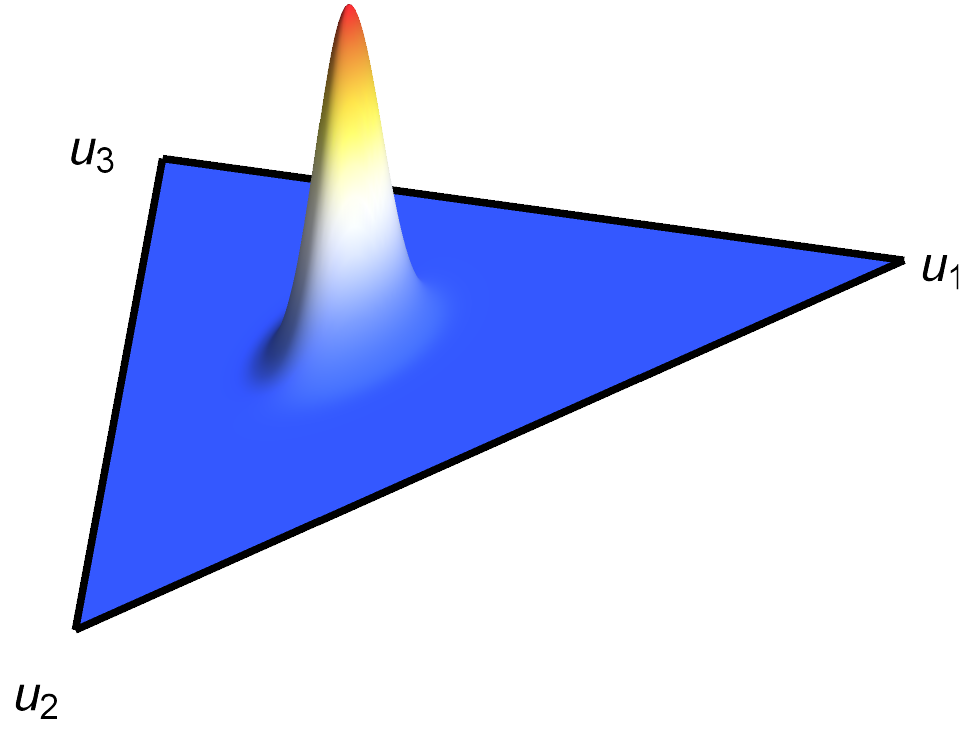}}
    \end{center}
    \caption{A visualization of the probability measure, $\xi_{n}^{\rho}((\lambda_1, \lambda_2, \lambda_3)|\rho)$, of estimates for the spectrum of a qutrit quantum state, $\rho$, with spectrum $\spec(\rho)=(0.6,0.3,0.1)$ when the spectral estimation scheme proposed by \citeauthor{keyl2001estimating}'s~\cite{keyl2001estimating} is applied to $n$ copies of $\rho$. The domain in the figures corresponds the set of all possible sorted spectra $(\lambda_1, \lambda_2, \lambda_3)$ of a qutrit (\cref{eq:qutrit_spectra}) and has extremal vertices given by $u_1 = (1,0,0)$ (pure state), $u_2 = (1/2, 1/2,0)$ and $u_3 = (1/3, 1/3, 1/3)$ (maximally mixed). This figure is a reproduction of \cite[Fig. 1]{keyl2001estimating}, but for different values of $n$.}
     \label{fig:spectral_dists}
\end{figure}

An alternative way to formalize this condition without making reference to functions on $X$ is to require, for every $\epsilon > 0$, that the probability of obtaining an estimate, $x \in X$, which is a distance of $\epsilon$ (or greater) from the true value, $f(\rho) \in X$, tends to zero as $n \to \infty$.
Stated more formally, if
\begin{equation}
    B_{\epsilon}(f(\rho)) \coloneqq \{ x \in X \mid d(x, f(\rho)) < \epsilon \}
\end{equation}
is the open ball of radius $\epsilon$ around $f(\rho)$, then one should demand
\begin{equation}
    \label{eq:error_rate_weak_conv}
    \lim_{n \to \infty} \xi_n^{\rho}(X \setminus B_{\epsilon}(f(\rho))) = 0.
\end{equation}
In other words, for any distance $\epsilon > 0$ the probability that the obtained estimate $x$ is within a distance of $\epsilon$ of the true value $f(\rho)$ can be made arbitrarily close to unity by taking $n$ sufficiently large. 
\begin{rem}
    The two conditions expressed by \cref{eq:expected_value_weak_conv} and \cref{eq:error_rate_weak_conv} are actually equivalent.
    This equivalence follows from an application of the Portmanteau theorem (see \cref{thm:portmanteau}) together with the fact that the complement, $X \setminus B_{\epsilon}(f(\rho))$, of the open ball $B_{\epsilon}(f(\rho))$ is both closed and excludes $f(\rho)$ and furthermore, every closed subset $C \subseteq X$ which excludes $f(\rho)$ is a subset of the complement $X \setminus B_{\epsilon}(f(\rho))$ for some sufficiently small $\epsilon > 0$.
\end{rem}
This brings us to our formal definition of an $f$-estimation scheme.
\begin{defn}
    Let $\seq{E_n : \borel{X} \to \bound(\s H^{\otimes n})}$ be a sequence of POVMs and let $f : \state(\s H) \to X$ be a property of quantum states. 
    If, for every state, $\rho \in \state(\s H)$, and for every bounded, continuous function $g : X \to \mathbb R$, we have
    \begin{equation}
        \label{eq:f_est_defn}
        \lim_{n \to \infty} \int_{x \in X} g(x) \Tr[\diff E_n(x) \rho^{\otimes n}] = g(f(\rho)),
    \end{equation}
    then the sequence, $\seq{E_n}$, of POVMs is said to be an \defnsty{$f$-estimation scheme}.
\end{defn}

\begin{rem}
    While our proposed definition for an $f$-estimation scheme is directly inspired by the definition of a property estimation scheme from \citeauthor{keyl2006quantum}'s paper~\cite{keyl2006quantum} on quantum state estimation, our usage of the terminology is slightly different. 
    Strictly speaking, in \cite{keyl2006quantum}, any sequence of POVMs of the form $\seq{E_n : \borel{X} \to \bound(\s H^{\otimes n})}$ is called an estimation scheme, and those sequences of POVMs which satisfy \cref{eq:f_est_defn} are said to be \textit{consistent} with $f$.
\end{rem}

\begin{defn}
    \label{defn:aspects_of_est_schemes}
    An $f$-estimation scheme $\seq{E_n : \borel{X} \to \bound(\s H^{\otimes n})}$ is said to be
    \begin{enumerate}[(i)]
        \item \defnsty{discrete} if, for each $n$, there exists a countably infinite subset $D_n \subseteq X$ such that
            \begin{equation}
                \label{eq:discrete_povm}
                \forall \Delta \in \borel{X} : E_n(\Delta) = {\sum}_{x \in D_n \cap \Delta} E_{n}(\{x\}),
            \end{equation}
        \item \defnsty{finite} if it is discrete and moreover the set $D_n$ has finite cardinality, 
            \begin{equation}
                \forall n \in \mathbb N : \abs{D_n} < \infty,
            \end{equation}
        \item \defnsty{projective} if for each $n$, $E_n$ is a projective-valued measure, i.e.
            \begin{equation}
                \forall \Delta_1, \Delta_2 \in \borel{X} : E_n(\Delta_1) E_n(\Delta_2) = E_n(\Delta_1 \cap \Delta_2).
            \end{equation}
    \end{enumerate}
\end{defn}

In addition to the measure-theoretic structure of estimation schemes, they also tend to exhibit certain symmetries. 
In particular, for each integer $n \in \mathbb N$, the tensor-permutation representation, $T_n : S_n \to \GL(\s H^{\otimes n})$, of the symmetric group acting on the $n$-fold tensor product $\s H^{\otimes n}$ (defined in \cref{exam:tensor_permutation_rep}) induces group actions of $S_n$ on $\s B(\s H^{\otimes n})$ and thus on estimation schemes themselves. 
Similarly, if $\grep : G \to \GL(\s H)$ is a representation of a group $G$ on $\s H$, then one can consider the tensor-power representation of $G$ on $\s H^{\otimes n}$ and its induced action on estimation schemes.
\begin{defn}
    \label{defn:symmetries_of_est_schemes}
    Let $\grep : G \to \GL(\s H)$ be a group representation of $G$ on $\s H$ and, for each $n \in \mathbb N$, let $T_n : S_n \to \GL(\s H^{\otimes n})$ be the tensor-permutation representation of $S_n$ on $\s H^{\otimes n}$.
    An $f$-estimation scheme $\seq{E_n : \borel{X} \to \bound(\s H^{\otimes n})}$ is said to be
    \begin{enumerate}[(i)]
        \item \defnsty{$G$-covariant} if there exists a continuous map $\alpha : G \times X \to X :: (g, x) \mapsto \alpha_{g}(x)$ such that for all $g \in G$, $\Delta \in \borel{X}$ and $n \in \mathbb N$,
            \begin{equation}
                \grep^{\otimes n}(g) E_n(\Delta) (\grep^{\otimes n}(g))^{*} = E_n(\alpha_{g}(\Delta)),
            \end{equation}
            where $\alpha_{g}(\Delta) = \{ \alpha_{g}(x) \in X \mid x \in \Delta \}$,
        \item \defnsty{$G$-invariant} if it is $G$-covariant and where the map $\alpha$ satisfies $\alpha_g(x) = x$ for all $g \in G$ and $x \in X$,
        \item \defnsty{$S_n$-invariant} if for all $n \in \mathbb N$, $\sigma \in S_n$ and $\Delta \in \borel{X}$,
            \begin{equation}
                T_{n}(\sigma) E_n(\Delta) T_n^{*}(\sigma) = E_n(\Delta).
            \end{equation}
    \end{enumerate}
\end{defn}

\subsection{Derived estimation schemes}

In \cref{sec:examples_est_schemes}, we will see that almost all examples of estimation schemes are derived from manipulating another estimation scheme.

Our first result concerning the compositionality of estimation schemes is concerned with the case of transforming an $f$-estimation scheme into an $f'$-estimation scheme when the property $f' : \state(\s H) \to X'$ is a function of the property $f : \state(\s H) \to X$.
In other words, if there exists a transition function $g : X \to X'$ such that $f' = g \circ f$, then the $f$-estimation scheme $\seq{E_n}$ which, for each state $\rho \in \state$, produces an estimate for the value of $f(\rho)$, should be capable of producing an estimate for the value of $f'(\rho) = g(f(\rho))$.
Of course, in order to ensure the appropriate convergence, if suffices to assume that $g$ is continuous.
\begin{lem}
    \label{lem:pushforward_est_scheme}
    Let $f : \state(\s H) \to X$ be a property and let $\seq{E_n : \borel{X} \to \bound(\s H^{\otimes n})}$ be an $f$-estimation scheme. 
    If $g : X \to X'$ is a continuous, measurable function, then the \defnsty{pushforward estimation scheme} of $E_n$ by $g$, denoted by
    \begin{equation}
        \seq{g_{*}E_n : \borel{X'} \to \bound(\s H^{\otimes n})}
    \end{equation}
    and defined for all $\Delta' \in \borel{X'}$ and $n \in \mathbb N$ by
    \begin{equation}
        (g_{*}E_n)(\Delta') = E_n(g^{-1}(\Delta')),
    \end{equation}
    is a $(g \circ f)$-estimation scheme.
\end{lem}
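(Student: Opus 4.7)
The plan is to reduce the claim to the weak-convergence statement that already defines an estimation scheme, and then to invoke the continuity of the pushforward operation on probability measures, which is precisely Lemma \ref{lem:pushforward_continuous}. Concretely, for each state $\rho \in \state(\s H)$ and each $n \in \mathbb N$, let $\xi_n^{\rho} : \borel{X} \to [0,1]$ denote the probability measure $\xi_n^{\rho}(\Delta) = \Tr[E_n(\Delta)\rho^{\otimes n}]$. The hypothesis that $\seq{E_n}$ is an $f$-estimation scheme is exactly that $\seq{\xi_n^{\rho}}$ converges weakly to the Dirac measure $\delta_{f(\rho)}$.

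First I would verify that $g_{*}E_n$ is in fact a POVM on $(X', \borel{X'})$. Since $g$ is measurable, $g^{-1}(\Delta') \in \borel{X}$ for every $\Delta' \in \borel{X'}$, so $(g_{*}E_n)(\Delta') \in \bound(\s H^{\otimes n})$ is a well-defined positive operator. Moreover $g^{-1}$ commutes with countable unions and sends $\emptyset$ to $\emptyset$ and $X'$ to $X$, so positivity, nullity, countable additivity, and normalization all transfer from $E_n$ to $g_{*}E_n$. Consequently, the probability measure associated to the pushforward POVM is literally the pushforward of $\xi_n^{\rho}$ under $g$, i.e.\
\begin{equation}
    \Tr[(g_{*}E_n)(\Delta')\rho^{\otimes n}] = \xi_n^{\rho}(g^{-1}(\Delta')) = (g_{*}\xi_n^{\rho})(\Delta').
\end{equation}

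Next I would invoke the continuity of the pushforward. By Lemma \ref{lem:pushforward_continuous}, because $g : X \to X'$ is continuous between Polish spaces, the induced map $g_{*} : \probs(X) \to \probs(X')$ is continuous with respect to the weak topologies. Applying this to the weakly convergent sequence $\xi_n^{\rho} \weak \delta_{f(\rho)}$ yields $g_{*}\xi_n^{\rho} \weak g_{*}\delta_{f(\rho)} = \delta_{g(f(\rho))}$, where the last identity is a direct computation from the definition of the pushforward of a Dirac measure.

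Unwinding weak convergence against an arbitrary bounded continuous test function $h \in \boundcont{X'}$ gives, as required,
\begin{equation}
    \lim_{n \to \infty} \int_{x' \in X'} h(x') \Tr[\diff(g_{*}E_n)(x') \rho^{\otimes n}] = \int_{X'} h \diff \delta_{g(f(\rho))} = h((g \circ f)(\rho)).
\end{equation}
Since this holds for every $\rho$ and every $h \in \boundcont{X'}$, $\seq{g_{*}E_n}$ is a $(g \circ f)$-estimation scheme. There is no real obstacle here; the only delicate point is making sure that $g$ is continuous (and not merely measurable) so that Lemma \ref{lem:pushforward_continuous} applies — equivalently, so that $h \circ g$ is bounded and continuous whenever $h$ is, allowing the limit in the definition of an $f$-estimation scheme to be invoked directly.
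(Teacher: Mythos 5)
Your proposal is correct and follows exactly the same route as the paper, which cites \cref{lem:pushforward_continuous} and leaves the rest to the reader; you have simply spelled out the straightforward verifications (that $g_{*}E_n$ is a POVM and that its induced probability measures are the pushforwards of $\xi_n^{\rho}$) before applying that lemma. Nothing is missing or superfluous.
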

\begin{proof}
    The proof follows directly from \cref{lem:pushforward_continuous}.
\end{proof}

One might wonder what can be said about the case where the transition function $g : X \to X'$ happens to be discontinuous. 
A common instance of such a discontinuous function\footnote{Assuming that the topology on $X$ is such that $\mathrm{in}_{R}$ is not already continuous.} is the indicator function $\mathrm{in}_{R} : X \to \{F, T\}$ which determines whether $x \in X$ belongs to the subset $R \subseteq X$:
\begin{equation}
    \mathrm{in}_{R}(x) \coloneqq
    \begin{cases}
        T & x \in R, \\
        F & x \not \in R.
    \end{cases}
\end{equation}
The composition, $\mathrm{in}_{R} \circ f$, then corresponds to the proposition that evaluates whether or not the property value $f(\rho)$ lies in $R \subseteq X$.
A natural question arises, given an $f$-estimation scheme $\seq{E_n}$, how can one construct an $(\mathrm{in}_{R} \circ f)$-estimation scheme?
One might think that if the region $R$ itself is already a Borel set, i.e. $R \in \borel{X}$, then the two-outcome POVM defined by $\{E_{n}(R), E_{n}(X \setminus R)\}$, taken for all $n$, would serve as an $(\mathrm{in}_{R} \circ f)$-estimation scheme.
Unfortunately, as the next example demonstrates, this is not always the case.
\begin{exam}
    \label{exam:fair_coin_counterexample}
    Let $\s H \cong \mathbb C^{2}$ be the Hilbert space associated to a qubit, let $P_{\psi} = \ket{\psi}\bra{\psi}$ be a rank one projection operator, let $P_{\neg \psi} = \ident - P_{\psi}$, and let $f : \state(\mathbb C^{2}) \to [0,1]$ be the property which corresponds to the probability of obtaining the outcome $P_{\psi}$ when applying the projective measurement $\{P_{\psi}, P_{\neg \psi}\}$, i.e., $f(\rho) = \Tr(\rho P_{\psi})$.
    An estimation scheme, $E_{n} : \borel{[0,1]} \to \bound((\mathbb C^{2})^{\otimes n})$, for $p = \Tr(\rho P_{\psi})$ emerges from projectively measuring $\{P_{\psi}, P_{\neg \psi}\}$ a total of $n \in \mathbb N$ times and recording the number $k \in \{0, \ldots, n\}$ of instances of the first outcome.
    More formally, for any Borel subset $\Delta \in \borel{[0,1]}$ and integer $n \in \mathbb N$, let
    \begin{equation}
        E_{n}(\Delta) = \sum_{\substack{0 \leq k \leq n \\ k/n \in \Delta}} P_{(k)}
    \end{equation}
    where $k \in \{0, 1, \ldots, k\}$ and $P_{(k)} \in \End((\mathbb C^{2})^{\otimes n})$ is the projection operator
    \begin{equation}
        P_{(k)} = \frac{1}{n!}\binom{n}{k} \sum_{\grep \in S_n} T_{\grep} (P_{\psi}^{\otimes k} \otimes P_{\neg \psi}^{\otimes (n-k)}) T_{\grep^{-1}}.
    \end{equation}
    It is not too difficult to see that this forms as estimation scheme for $p = f(\rho) = \Tr(\rho P_{\psi})$ in the sense that for all states $\rho$, 
    \begin{equation}
        \Tr(\rho P_{\psi}) = \lim_{n\to\infty} \int_{0}^{1} p \Tr(\diff E_{n}(p) \rho^{\otimes n}).
    \end{equation}
    Now consider the coarse-grained property,
    \begin{equation}
        f' \coloneqq \mathrm{in}_{\{\frac{1}{2}\}}\circ f : \state(\mathbb C^{2}) \to \{F, T\},
    \end{equation}
    which decides whether or not $\Tr(\rho P_{\psi}) = \frac{1}{2}$.
    Using the $f$-estimation scheme $\seq{E_n}$ defined above, one can consider a corresponding coarse-grained POVM 
    \begin{equation}
        E'_{n} : \borel{\{T, F\}} \to \bound((\mathbb C^{2})^{\otimes n})
    \end{equation}
    defined according to
    \begin{equation}
        E'_{n}(\{T\}) = E_{n}(\{\frac{1}{2}\}) =
        \begin{cases}
            P_{\left(\frac{n}{2}\right)} & n \text{ is even}, \\
            0 & n \text{ is odd}.
        \end{cases}
    \end{equation}
    Here it is not too difficult to see that even if the state $\sigma$ satisfies $\Tr(\sigma P_{\psi}) = \frac{1}{2}$ and thus $f'(\sigma) = T$, we observe
    \begin{equation}
        \limsup_{n \to \infty} \Tr[E'_{n}(\{F\}) \sigma^{\otimes n}] = 1,
    \end{equation}
    because it is impossible to produce an estimate for $\Tr(\sigma P_{\psi})$ of $\frac{1}{2}$ after an odd number of trials and therefore $E'_{n}$ is \textit{not} an $f'$-estimation scheme.
\end{exam}

Despite the apparent obstacle presented by \cref{exam:fair_coin_counterexample} for elegantly constructing $\mathrm{in}_{R} \circ f$-estimation schemes from $f$-estimation schemes, the following result proves it is always possible using an alternative method.
\begin{lem}
    Let $\seq{E_n}$ be an $f$-estimation scheme for the property $f : \state(\s H) \to X$ and let $R \in \borel{X}$ be closed.
    Then there exists an $(\mathrm{in}_{R} \circ f)$-estimation scheme, denoted by $\seq{E_n^R}$, of the form
    \begin{align}
        E_n^{R}(\{T\}) = E_n(\Delta_n), \quad \text{and} \quad
        E_n^{R}(\{F\}) = E_n(X \setminus \Delta_n),
    \end{align}
    where $\seq{\Delta_{n} \in \borel{X}}$ is a sequence of closed, nested subsets converging to $R$ in the sense that $\bigcap_{n \in \mathbb N} \Delta_{n} = R$.
\end{lem}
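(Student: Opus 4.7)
My approach is to take $\Delta_n$ to be a sequence of closed $a_n$-neighborhoods of $R$,
\[
    \Delta_n = \{x \in X : d(x, R) \le a_n\},
\]
for a positive sequence $a_n \searrow 0$ whose rate of decay must be chosen carefully. Since $R$ is closed in the Polish space $X$, these sets are automatically closed, nested ($\Delta_{n+1} \subseteq \Delta_n$), and satisfy $\bigcap_{n \in \mathbb N} \Delta_n = R$. For each $n$, the operators $E_n^R(\{T\}) \coloneqq E_n(\Delta_n)$ and $E_n^R(\{F\}) \coloneqq E_n(X \setminus \Delta_n)$ then form a valid two-outcome POVM on $\{T,F\}$. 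To check that $\seq{E_n^R}$ is an $(\mathrm{in}_R \circ f)$-estimation scheme, I will work with the probability measures $\xi_n^\rho(\Delta) \coloneqq \Tr(E_n(\Delta) \rho^{\otimes n})$, which converge weakly to $\delta_{f(\rho)}$ by hypothesis, and split into the two cases $f(\rho) \in R$ and $f(\rho) \notin R$.

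The easy case is $f(\rho) \notin R$. Closedness of $R$ gives $r \coloneqq d(f(\rho), R) > 0$, so for any fixed $N$ with $a_N < r/2$ the closed set $\Delta_N$ is separated from $f(\rho)$ by a positive distance. By nestedness, $\Delta_n \subseteq \Delta_N$ for $n \ge N$, and the Portmanteau upper bound for closed sets (\cref{thm:portmanteau}) yields
\[
    \limsup_{n \to \infty} \xi_n^\rho(\Delta_n) \le \limsup_{n \to \infty} \xi_n^\rho(\Delta_N) \le \delta_{f(\rho)}(\Delta_N) = 0,
\]
so $\Tr(E_n^R(\{F\}) \rho^{\otimes n}) = 1 - \xi_n^\rho(\Delta_n) \to 1$, exactly as required.

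The main obstacle lies in the case $f(\rho) \in R$, where I need $\xi_n^\rho(\Delta_n) \to 1$ for every such $\rho$. For each \emph{fixed} $m$, the open set $V_m \coloneqq \{x : d(x, R) < 1/m\}$ is a neighborhood of $f(\rho)$, so Portmanteau's lower bound gives $\lim_n \xi_n^\rho(\Delta^{(m)}) = 1$, where $\Delta^{(m)} \coloneqq \{x : d(x,R) \leq 1/m\}$. The hard part is that the rate at which $\xi_n^\rho(\Delta^{(m)})$ approaches $1$ depends on $\rho$ --- precisely the phenomenon highlighted by \cref{exam:fair_coin_counterexample} --- so a single sequence $a_n \searrow 0$ must be chosen to shrink slowly enough to accommodate every $\rho \in f^{-1}(R)$ simultaneously. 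I would resolve this through a diagonal construction: using separability of the compact metric space $\state(\s H)$, select a countable dense sequence $\{\rho_k\} \subseteq f^{-1}(R)$; pick $a_n \searrow 0$ by diagonalization so that $\xi_n^{\rho_k}(\Delta_n) \to 1$ holds for every $k$; and then transfer this convergence to an arbitrary $\rho \in f^{-1}(R)$ using the continuity of $\rho \mapsto \Tr(E_n(\Delta_n)\rho^{\otimes n})$ for each fixed $n$ together with \cref{lem:zooming_into_correct_value}, whose typicality conclusion provides the local control near each limit point needed to bridge the diagonal choice from a dense set to all of $f^{-1}(R)$. Carrying out this final transfer rigorously, or equivalently producing a uniform modulus of concentration by invoking compactness of $\state(\s H)$, is the technical core of the proof.
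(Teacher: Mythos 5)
Your overall strategy --- taking $\Delta_n$ to be closed $a_n$-neighborhoods of $R$ with $a_n \searrow 0$ --- matches the paper's (which sets $\Delta_n$ equal to the closure of $B_{q(n)}(R)$ for a non-increasing $q(n) \to 0$), and your treatment of the case $f(\rho) \notin R$ via the Portmanteau upper bound is correct and exactly what is needed.

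However, your proposal has a genuine gap precisely where you flag one, and it is worth being concrete about why your proposed repair does not close it. The difficulty you identify is real: the paper's proof declares itself ``analogous'' to \cref{lem:zooming_into_correct_value}, but that lemma constructs a Cantor sequence adapted to a \emph{single} weakly convergent sequence $\seq{\mu_n}$, whereas here a \emph{single} sequence $\seq{\Delta_n}$ must serve the entire (uncountable) family $\seq{\xi_n^\rho}_{\rho \in f^{-1}(R)}$, each member of which may concentrate at its own rate. Your diagonalization over a countable dense subset $\{\rho_k\}$ of $f^{-1}(R)$ correctly produces a sequence $a_n \searrow 0$ that works for every $\rho_k$; the problem is the transfer to an arbitrary $\rho \in f^{-1}(R)$. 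You appeal to continuity of $\rho \mapsto \Tr(E_n(\Delta_n) \rho^{\otimes n})$ ``for each fixed $n$,'' but this per-$n$ continuity is useless for controlling the \emph{limit} in $n$: the modulus of continuity degrades with $n$ (one has $\norm{\rho^{\otimes n} - \sigma^{\otimes n}}_1 \leq n \norm{\rho - \sigma}_1$, so the Lipschitz constant of $\rho \mapsto \Tr(E_n(\Delta_n)\rho^{\otimes n})$ grows linearly in $n$), which means that $\xi_n^{\rho_k}(\Delta_n) \to 1$ for a dense set $\{\rho_k\}$ does not propagate to nearby $\rho$ as $n \to \infty$. Similarly, the suggestion to ``invoke compactness of $\state(\s H)$'' to produce a uniform modulus of concentration is not by itself a valid step: pointwise convergence of continuous functions on a compact space does not yield uniform convergence without an additional ingredient such as monotonicity (Dini) or equicontinuity, neither of which is given by the definition of an $f$-estimation scheme. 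So the proposal, as written, does not complete the proof; to be fair to you, the paper's own one-line proof silently assumes the same uniformity without supplying it, so you have correctly located the load-bearing (and unproven) step rather than introduced a new error.
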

\begin{proof}
    The proof is analogous to the proof of \cref{lem:zooming_into_correct_value} and will not be repeated here.
    In the end, for $n \in \mathbb N$, the subset $\Delta_{n}$ is the closure of $B_{q(n)}(R)$ where, for $\epsilon > 0$,
    \begin{equation}
        B_{\epsilon}(R) \coloneq \{ x \in X \mid \exists r \in R, d(x,r) < \epsilon\},
    \end{equation}
    and where $q(n)$ is a non-increasing sequence with limit $\lim_{n \to \infty} q(n) = 0$.
\end{proof}

The following result is similar to \cref{lem:pushforward_est_scheme} in that it constructs a new estimation scheme from an old estimation scheme, but dissimilar in that it modifies the state space, $\state(\s H)$, of the function $f : \state(\s H) \to X$ instead of the property space, $X$.
\begin{prop}
    \label{lem:channel_est_scheme}
    Let $f : \state(\s H) \to X$ be a property and let $\seq{E_n : \borel{X} \to \bound(\s H^{\otimes n})}$ be an $f$-estimation scheme.
    Let $\s C: \bound(\s H') \to \bound(\s H)$ be a completely positive, trace preserving map with Kraus decomposition
    \begin{equation}
        \s C (\rho) = {\sum}_{k} C_{k} \rho C_{k}^{*}
    \end{equation}
    where $\{C_k : \s H' \to \s H\}_{k}$ are linear transformations satisfying ${\sum}_{k} C_{k}^{*}C_{k} = \ident_{\s H'}$.
    Then the sequence of POVMs $\seq{E'_n : \borel{X} \to \bound((\s H')^{\otimes n})}$ defined for $\Delta \in \borel{X}$ by
    \begin{equation}
        E'_n(\Delta) = {\sum}_{k_1, \ldots, k_n} (C_{k_1}\otimes \cdots \otimes C_{k_n})^{*} E_{n}(\Delta) (C_{k_1}\otimes \cdots \otimes C_{k_n})
    \end{equation}
    is a $(f \circ \s C)$-estimation scheme where $f \circ \s C : \state(\s H') \to X$.
\end{prop}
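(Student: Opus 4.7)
The plan is to exploit the fact that the construction of $E'_n$ from $E_n$ is nothing more than the Heisenberg-picture dual of the tensor-power channel $\s C^{\otimes n} : \bound((\s H')^{\otimes n}) \to \bound(\s H^{\otimes n})$. The Kraus operators of $\s C^{\otimes n}$ are precisely the tensor products $C_{k_1} \otimes \cdots \otimes C_{k_n}$ indexed by tuples $(k_1, \ldots, k_n)$, so the definition of $E'_n$ can be rewritten as $E'_n(\Delta) = (\s C^{\otimes n})^*(E_n(\Delta))$, where $(\s C^{\otimes n})^*$ denotes the Heisenberg-picture adjoint channel.

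First I would verify that each $E'_n$ is a genuine POVM. Positivity of $E'_n(\Delta)$ follows because $E_n(\Delta) \geq 0$ and conjugation by any operator preserves positivity, while the sum of positive operators is positive. Nullity follows from $E_n(\emptyset) = 0$, and countable additivity follows from the countable additivity of $E_n$ together with the continuity of the Kraus-conjugation operation. For normalization, the tensor structure of the sum together with the Kraus completeness relation gives
\begin{equation}
E'_n(X) = \sum_{k_1, \ldots, k_n}(C_{k_1}^* C_{k_1}) \otimes \cdots \otimes (C_{k_n}^* C_{k_n}) = \Bigl(\sum_{k} C_k^* C_k\Bigr)^{\!\otimes n} = \ident_{\s H'}^{\otimes n}.
\end{equation}

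Next I would compute, for any $\rho' \in \state(\s H')$ and any bounded measurable function $g : X \to \mathbb R$, the expectation value
\begin{equation}
\int_{x \in X} g(x)\, \Tr[\diff E'_n(x)\, (\rho')^{\otimes n}].
\end{equation}
Using the defining formula for $E'_n$ and the cyclicity of the trace, each term contributes
\begin{equation}
\Tr\bigl[E_n(\Delta)\, (C_{k_1} \otimes \cdots \otimes C_{k_n})\, (\rho')^{\otimes n}\, (C_{k_1}^* \otimes \cdots \otimes C_{k_n}^*)\bigr].
\end{equation}
Summing over the Kraus indices and collecting factors in each tensor slot yields $\Tr[E_n(\Delta)\, (\s C(\rho'))^{\otimes n}]$. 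Hence
\begin{equation}
\int_{x \in X} g(x)\, \Tr[\diff E'_n(x)\, (\rho')^{\otimes n}] = \int_{x \in X} g(x)\, \Tr[\diff E_n(x)\, (\s C(\rho'))^{\otimes n}].
\end{equation}

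Finally, invoking the hypothesis that $\seq{E_n}$ is an $f$-estimation scheme applied to the state $\s C(\rho') \in \state(\s H)$, the right-hand side converges as $n \to \infty$ to $g(f(\s C(\rho'))) = g((f \circ \s C)(\rho'))$ whenever $g$ is bounded and continuous. Since this holds for all $\rho' \in \state(\s H')$ and all such $g$, the sequence $\seq{E'_n}$ satisfies the defining property of an $(f \circ \s C)$-estimation scheme. There is no substantive obstacle here; the only bookkeeping point to be careful about is the identification of the Kraus decomposition of $\s C^{\otimes n}$ with the tensor products of Kraus operators of $\s C$, which is entirely standard.
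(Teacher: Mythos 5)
Your proof is correct and gives the natural formalization of the argument. The paper does not actually supply a proof of this proposition; it only follows the statement with the intuitive remark that estimating $f(\sigma)$ for $\sigma = \s C(\rho)$ is implicitly estimating $(f\circ\s C)(\rho)$. Your computation makes this precise in exactly the expected way: you recognize $E'_n(\Delta)$ as the Heisenberg adjoint $(\s C^{\otimes n})^*(E_n(\Delta))$, verify the POVM axioms (with the Kraus completeness relation giving normalization), and use cyclicity of the trace and the factorization $\s C^{\otimes n}((\rho')^{\otimes n}) = (\s C(\rho'))^{\otimes n}$ to reduce to the defining weak-convergence property of $\seq{E_n}$ applied to the state $\s C(\rho')$. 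There is no gap.

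One minor point of care worth flagging: for countable additivity you appeal to "continuity of the Kraus-conjugation operation." In the finite-dimensional setting of the paper this is fine since $\bound(\s H^{\otimes n})$ is finite-dimensional and the Kraus sum is (or can be taken) finite, so the map $Y \mapsto \sum_{k_1,\ldots,k_n}(C_{k_1}\otimes\cdots\otimes C_{k_n})^* Y (C_{k_1}\otimes\cdots\otimes C_{k_n})$ is a bounded linear operator on a finite-dimensional space and automatically commutes with countable sums converging in norm. If one wanted to state the result more generally it would be worth making the mode of convergence explicit, but as the paper restricts to finite-dimensional Hilbert spaces throughout, your remark suffices.
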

The intuition behind \cref{lem:channel_est_scheme} is simply that if one can estimate the value of $f(\sigma) \in X$ where the state $\sigma$ is the result of sending the state $\rho$ through the channel $\s C$ such that $\sigma = \s C(\rho)$, then one is implicitly estimating the value of $(f \circ \s C)(\rho)$.
Our primary application of \cref{lem:channel_est_scheme} will be to the estimation of marginal or local properties of a composite quantum state, i.e., where the channel $\s C$ is simply a partial trace operation.
\begin{exam}
    \label{exam:marginal_estimation_schemes}
    Suppose the Hilbert space $\s H$ is the tensor product of two Hilbert spaces, i.e. $\s H \cong \s A \otimes \s B$, and suppose one is interested in a property $g : \state(\s A \otimes \s B) \to X$ which is local to subsystem $\s A$, meaning 
    \begin{equation}
            g(\rho_{\s A \s B}) = f(\Tr_{\s B}(\rho_{\s A\s B})),
    \end{equation}
    for some property $f : \state(\s A) \to X$. 
    For instance, $f$ could be the expectation value of a local observable where
    \begin{equation}
        g(\rho_{\s A \s B}) = \Tr_{\s A \s B}((O_{\s A} \otimes \ident_{\s B}) \rho_{\s A \s B}) = \Tr_{\s A}(O_{\s A} \rho_{\s A}) = f(\rho_{\s A}).
    \end{equation}
    Any $f$-estimation scheme $\seq{E_n^{\s A} : \borel{X} \to \bound(\s H_{\s A}^{\otimes n})}$ can be \textit{lifted} to a $g$-estimation scheme $\seq{E_n^{\s A \s B} : \borel{X} \to \bound((\s H_{\s A} \otimes \s H_{\s B})^{\otimes n})}$ by defining, for each $n \in \mathbb N$,
    \begin{equation}
        E^{\s A \s B}_{n}(\Delta) \coloneqq E^{\s A}_{n}(\Delta) \otimes \ident_{\s B}^{\otimes n}.
    \end{equation}
\end{exam}

In addition to \cref{lem:pushforward_est_scheme}, there is another way to reuse estimation schemes by exploiting the fact that $\Tr(E_k(\Delta) \rho^{\otimes k})$ can be viewed as a polynomial in $\rho$ of degree $k$ or as a linear function in $\rho^{\otimes k}$.
\begin{prop}
    \label{lem:poly_est_scheme}
    Let $k \in \mathbb N$ be a positive integer, let $f : \state(\s H^{\otimes k}) \to X$ be a property and let $\seq{E_n : \borel{X} \to \bound((\s H^{\otimes k})^{\otimes n})}$ be an $f$-estimation scheme.
    If $g : \state(\s H) \to \state(\s H^{\otimes k})$ is the mapping sending $\rho \in \state(\s H)$ to $\rho^{\otimes k} \in \state(\s H)$, then the sequence of POVMs $\seq{E'_n : \borel{X} \to \bound(\s H^{\otimes n})}$ defined for $\Delta \in \borel{X}$ by
    \begin{equation}
        E'_n(\Delta) = E_{\floor{\frac{n}{k}}}(\Delta) \otimes \ident_{\s H}^{\otimes (n \hspace{0.5em}\mathrm{mod}\hspace{0.5em} k)},
    \end{equation}
    is a $(f \circ g)$-estimation scheme.
\end{prop}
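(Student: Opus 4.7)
The plan is to exploit the multiplicative structure of $\rho^{\otimes n}$ together with the fact that partial traces against the identity reduce to ordinary traces. For fixed $k$, write $n = mk + r$ with $m = \floor{n/k}$ and $r = n \bmod k$, and use the natural isometric isomorphism $(\s H^{\otimes k})^{\otimes m} \otimes \s H^{\otimes r} \cong \s H^{\otimes n}$ to identify the codomain of $E'_n$ with $\bound(\s H^{\otimes n})$. Under this identification, the tensor power of $\rho$ factors as $\rho^{\otimes n} = (\rho^{\otimes k})^{\otimes m} \otimes \rho^{\otimes r} = g(\rho)^{\otimes m} \otimes \rho^{\otimes r}$.

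The key computation is then immediate. For any $\Delta \in \borel{X}$, the definition of $E'_n$ and the tensor factorization of $\rho^{\otimes n}$ yield
\begin{align*}
\Tr\bigl[E'_n(\Delta)\,\rho^{\otimes n}\bigr]
&= \Tr\bigl[(E_m(\Delta) \otimes \ident_{\s H}^{\otimes r})\,(g(\rho)^{\otimes m} \otimes \rho^{\otimes r})\bigr] \\
&= \Tr\bigl[E_m(\Delta)\,g(\rho)^{\otimes m}\bigr]\cdot \Tr(\rho)^{r} = \Tr\bigl[E_m(\Delta)\,g(\rho)^{\otimes m}\bigr],
\end{align*}
where the final equality uses $\Tr(\rho) = 1$. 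Consequently, for any bounded continuous $h : X \to \mathbb R$,
\[
\int_X h(x)\,\Tr\bigl[\diff E'_n(x)\,\rho^{\otimes n}\bigr] = \int_X h(x)\,\Tr\bigl[\diff E_m(x)\,g(\rho)^{\otimes m}\bigr].
\]

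To conclude, I would note that as $n \to \infty$ with $k$ fixed, $m = \floor{n/k} \to \infty$ as well. Since $\seq{E_n}$ is by hypothesis an $f$-estimation scheme and $g(\rho) = \rho^{\otimes k} \in \state(\s H^{\otimes k})$ is a valid state of the enlarged system, applying the defining property of an $f$-estimation scheme along the subsequence indexed by $m$ gives
\[
\lim_{n \to \infty} \int_X h(x)\,\Tr\bigl[\diff E'_n(x)\,\rho^{\otimes n}\bigr] = \lim_{m \to \infty} \int_X h(x)\,\Tr\bigl[\diff E_m(x)\,g(\rho)^{\otimes m}\bigr] = h(f(g(\rho))),
\]
which is exactly \cref{eq:f_est_defn} for the property $f \circ g$. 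There is no substantive obstacle beyond the bookkeeping of tensor factors; the only subtlety is making explicit the canonical identification $(\s H^{\otimes k})^{\otimes m} \otimes \s H^{\otimes r} \cong \s H^{\otimes n}$ so that $E'_n$ is an honest POVM on $\s H^{\otimes n}$ and so that the factorization of $\rho^{\otimes n}$ matches the factorization of $E'_n(\Delta)$. Once this identification is fixed, the normalization $\Tr(\rho) = 1$ absorbs the residual $r$ tensor factors and the convergence is inherited directly from the original estimation scheme.
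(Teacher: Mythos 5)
Your proof is correct, and the paper itself provides no proof for this proposition (it is stated without an accompanying \verb|proof| environment), so there is nothing to compare against. The argument is exactly what one would expect: the identification $(\s H^{\otimes k})^{\otimes m} \otimes \s H^{\otimes r} \cong \s H^{\otimes n}$, the factorization $\rho^{\otimes n} = g(\rho)^{\otimes m} \otimes \rho^{\otimes r}$, the observation that $\Tr(\rho)^{r} = 1$ kills the residual factors, and the fact that $m = \floor{n/k} \to \infty$ as $n \to \infty$ so that the defining limit of the original estimation scheme applied to the fixed state $g(\rho) \in \state(\s H^{\otimes k})$ carries over. The only thing you might add for completeness is the trivial verification that each $E'_n$ is a POVM (positivity, nullity, countable additivity, and $E'_n(X) = \ident_{\s H}^{\otimes n}$ all follow immediately from the corresponding properties of $E_m$ and the fact that tensoring with identity preserves them), but this is routine.
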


The intended use case for \cref{lem:poly_est_scheme} is for the estimation of properties which can be seen as polynomial functions in the coefficients of the underlying states, otherwise known as \textit{multi-copy observables}~\cite{vermersch2023enhanced}.
\begin{exam}
    Consider, for instance, the purity $p(\rho) = \Tr_{\s H}(\rho^{2})$ of a density operator $\rho \in \s H$.
    It is well-known that the purity can be equivalently expressed as
    \begin{equation}
        p(\rho) = \Tr_{\s H}(\rho^{2}) = \Tr_{\s H^{\otimes 2}}(X_{\mathrm{swap}} (\rho \otimes \rho)),
    \end{equation}
    where $X_{\mathrm{swap}} = T_{2}((12)) \in \s H^{\otimes 2}$ is the unitary and Hermitian operator which permutes the two tensor factors of $\s H^{\otimes 2}$.
    Now consider the property $f : \state(\s H^{\otimes 2}) \to [-1,+1]$, defined for $\sigma \in \state(\s H^{\otimes 2})$ as the expectation value
    \begin{equation}
        f(\sigma) = \Tr(X_{\mathrm{swap}} \sigma).
    \end{equation}
    Given any scheme for estimating the expectation value of $f(\sigma)$ of some unknown state $\sigma \in \state(\s H^{\otimes 2})$, one could then estimate the purity $p(\rho)$ of $\rho$ using the correspondence $f(\rho^{\otimes 2}) = p(\rho)$.
    \Cref{lem:poly_est_scheme} provides such a recipe for performing purity estimation by setting $k = 2$ and viewing $n$ copies of the state $\rho$ as $\floor{\frac{n}{2}}$ copies of the state $\rho^{\otimes 2}$ (with potentially one copy of $\rho$ leftover) to be used to extract an estimate of $f(\rho^{\otimes 2})$.
\end{exam}

Next consider the scenario where one has access to two estimation schemes, $\seq{E_n^1}$ for the property $f_1 : \state(\s H) \to X_1$ and $\seq{E_n^2}$ for the property $f_2 : \state(\s H) \to X_2$. 
How does one estimate the composite property, $(f_1, f_2) : \state(\s H) \to X_1 \times X_2$? 
If the two estimation schemes commute, then we obtain the following result.
\begin{lem}[Commuting Estimation Schemes]
    \label{lem:commuting_est_schemes}
    If $\seq{E_n^{1}}$ is an $f_1$-estimation scheme and $\seq{E_n^{2}}$ is an $f_2$-estimation scheme, and for each $n$, $E_n^1$ and $E_n^2$ are commuting, i.e. $\forall \Delta_1 \in \borel{X_1}, \Delta_2 \in \borel{X_2}$
    \begin{equation}
         [E_n^1(\Delta_1), E_n^{2}(\Delta_2)] = 0,
    \end{equation}
    then $\seq{E_n : \borel{X_1 \times X_2} \to \bound(\s H^{\otimes n})}$, defined on $\Delta_1 \times \Delta_2 \in \borel{X_1 \times X_2}$ as
    \begin{equation*}
        E_n(\Delta_1 \times \Delta_2) \coloneqq E^1_{n}(\Delta_1) E^2_{n}(\Delta_2) = E^2_{n}(\Delta_2) E^1_{n}(\Delta_1)
    \end{equation*}
    is an $(f_1, f_2)$-estimation scheme.
\end{lem}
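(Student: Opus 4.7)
The plan is to verify, in turn, (i) that the product prescription extends to a genuine POVM on the product Borel space, and (ii) that the induced sequence of probability measures converges weakly to the Dirac measure concentrated at $(f_1(\rho), f_2(\rho))$.

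For (i), the commutativity hypothesis is used essentially once: given two commuting positive semidefinite operators $A = E_n^1(\Delta_1)$ and $B = E_n^2(\Delta_2)$, the product $AB$ equals $\sqrt{A}\,B\,\sqrt{A}$, which is manifestly positive semidefinite. Normalization is immediate since $E_n(X_1 \times X_2) = E_n^1(X_1) E_n^2(X_2) = \ident_{\s H}^{\otimes n}$, and countable additivity on rectangles follows from countable additivity of $E_n^1$ and $E_n^2$ separately. A standard Carathéodory-style extension argument then produces a unique POVM $E_n : \borel{X_1 \times X_2} \to \bound(\s H^{\otimes n})$ on the full product $\sigma$-algebra. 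I do not expect this step to cause any real difficulty.

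For (ii), fix a state $\rho$ and let $\xi_n^\rho(\Delta) \coloneqq \Tr[E_n(\Delta)\rho^{\otimes n}]$. By construction, the $X_1$-marginal of $\xi_n^\rho$ is $\Delta_1 \mapsto \Tr[E_n^1(\Delta_1)\rho^{\otimes n}]$, since $E_n(\Delta_1 \times X_2) = E_n^1(\Delta_1) E_n^2(X_2) = E_n^1(\Delta_1)$; by the hypothesis that $\seq{E_n^1}$ is an $f_1$-estimation scheme together with \cref{thm:portmanteau}, this marginal converges weakly to $\delta_{f_1(\rho)}$. An identical argument handles the $X_2$-marginal. The main step is then to upgrade joint convergence of the marginals to Dirac measures to weak convergence of $\xi_n^\rho$ itself to the product Dirac measure $\delta_{(f_1(\rho), f_2(\rho))}$. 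To this end, fix any open neighborhood $U$ of $(f_1(\rho), f_2(\rho))$ in the product topology. By definition of the product topology, $U$ contains some basic open set $B_{\epsilon_1}(f_1(\rho)) \times B_{\epsilon_2}(f_2(\rho))$, and a union bound gives
\begin{equation}
\xi_n^\rho(U) \geq 1 - \xi_n^\rho\bigl((X_1 \setminus B_{\epsilon_1}(f_1(\rho))) \times X_2\bigr) - \xi_n^\rho\bigl(X_1 \times (X_2 \setminus B_{\epsilon_2}(f_2(\rho)))\bigr).
\end{equation}
The two subtracted terms are exactly the marginal probabilities that the $f_1$- and $f_2$-estimation schemes produce an estimate outside $B_{\epsilon_i}(f_i(\rho))$, each of which tends to zero by the marginal analysis above (again invoking the equivalent formulations of weak convergence to a Dirac measure in \cref{thm:portmanteau}, as used in \cref{lem:weak_convergence_to_dirac_measure}). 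Hence $\liminf_n \xi_n^\rho(U) = 1$ for every such neighborhood $U$, which by the Portmanteau theorem is equivalent to $\xi_n^\rho \weak \delta_{(f_1(\rho), f_2(\rho))}$, i.e.\ to the defining condition \cref{eq:f_est_defn} for an $(f_1, f_2)$-estimation scheme.

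The only place where the argument has any real content is the upgrade from marginal concentration to joint concentration, and this works only because the limiting marginals are Dirac measures; the commutativity assumption enters solely to guarantee positivity of the candidate POVM and is irrelevant to the probabilistic argument. A version of this lemma without commutativity would require a sequential measurement scheme (first $E_n^1$, then a conditional measurement of $E_n^2$), but the commuting case admits the clean single-step product construction stated above.
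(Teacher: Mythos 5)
Your proof is correct and reaches the same conclusion by a closely related but genuinely streamlined route. The paper proves the concentration lower bound by an operator-level commutative union bound,
\begin{equation*}
    E_{n}(C_1\times C_2) = \sqrt{E_{n}^{2}(C_{2})}\,E_{n}^{1}(C_{1})\,\sqrt{E_{n}^{2}(C_{2})} \geq E_{n}^{2}(C_{2}) - E_{n}^{1}(X_1\setminus C_{1}),
\end{equation*}
which it then traces against $\rho^{\otimes n}$ and combines with the Portmanteau criterion on closed rectangles. You instead use the identities $E_n(\Delta_1\times X_2) = E_n^1(\Delta_1)$ and $E_n(X_1\times \Delta_2) = E_n^2(\Delta_2)$ to observe that the $X_i$-marginals of $\xi_n^\rho$ are exactly the measures produced by the individual schemes, and then apply an elementary \emph{measure-level} union bound on the complement of a basic open rectangle. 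The two routes are equivalent in substance — tracing the paper's operator inequality against $\rho^{\otimes n}$ produces your measure inequality — but yours cleanly separates the role of commutativity: once the commutativity hypothesis has guaranteed positivity of the product (via $AB = \sqrt{A}B\sqrt{A}$) and hence the existence of the POVM, the remainder of the argument is pure measure theory and would work for any joint probability measure whose marginals each concentrate to Dirac measures. This is an accurate and slightly sharper diagnosis than the paper's proof, which blends the operator algebra and the limiting argument together. You also explicitly flag the Carathéodory extension from rectangles to the full product $\sigma$-algebra — a point the paper passes over silently. The one thing worth making fully explicit, if you write this up formally, is that your lower bound $\liminf_n \xi_n^\rho(U) \geq 1$ for every open $U$ containing $(f_1(\rho),f_2(\rho))$, together with the trivial bound for open sets excluding that point, is exactly the Portmanteau condition (iv) for $\xi_n^\rho \weak \delta_{(f_1(\rho),f_2(\rho))}$; you gesture at this but could spell it out in one line.
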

\begin{proof}
    The proof makes use of the upper bounds
    \begin{align}
        E_n(\Delta_1 \times \Delta_2) &\leq E_n(\Delta_1 \times X_2) = E^{1}_n(\Delta_1), \quad\text{and}\\
        E_n(\Delta_1 \times \Delta_2) &\leq E_n(X_1 \times \Delta_2) = E^{2}_n(\Delta_2).
    \end{align}
    Therefore, for all $\rho$,
    \begin{equation}
        \Tr(E_{n}(\Delta_1\times \Delta_2)\rho^{\otimes n}) \leq \min_{i\in\{1,2\}} \Tr(E^{i}_{n}(\Delta_i)\rho^{\otimes n}).
    \end{equation}
    Consequently, if $C_1 \times C_2 \in \borel{X_1 \times X_2}$ is closed and \textit{excludes} the composite property value $(f_1, f_2)(\rho) \in X_1 \times X_2$, and thus either $f_1(\rho) \not \in C_1$ or $f_2(\rho) \not \in C_2$ (or both), then at least one of the upper bounds provided above will tend to zero as $n\to\infty$ which means
    \begin{equation}
        \label{eq:commuting_bound_1}
        \limsup_{n\to\infty}\Tr(E_{n}(C_1\times C_2)\rho^{\otimes n}) = 0.
    \end{equation}
    On the other hand, if $C_1 \times C_2 \in \borel{X_1 \times X_2}$ is closed and \textit{contains} the composite property value $(f_1, f_2)(\rho) \in X_1 \times X_2$, then we expect for each $i \in \{1,2\}$,
    \begin{equation}
        \label{eq:containing_limit_commuting}
        \limsup_{n\to\infty}\Tr(E^{i}_{n}(C_i)\rho^{\otimes n}) = 1,
    \end{equation}
    As $E_{n}^{1}$ and $E_{n}^{2}$ commute and are positive-operator-valued, we have a kind of non-projective and commutative union bound~\cite{gao2015quantum},
    \begin{align}
        \label{eq:non_proj_union_bound}
        \begin{split}
            E_{n}(C_1\times C_2) 
            &= E_{n}^{1}(C_{1})E_{n}^{2}(C_{2}), \\
            &= \sqrt{E_{n}^{2}(C_{2})}E_{n}^{1}(C_{1})\sqrt{E_{n}^{2}(C_{2})}, \\
            &=  E_{n}^{2}(C_{2}) - \sqrt{E_{n}^{2}(C_{2})}E_{n}^{1}(X_1\setminus C_{1})\sqrt{E_{n}^{2}(C_{2})}, \\
            &=  E_{n}^{2}(C_{2}) - \sqrt{E_{n}^{1}(X_1\setminus C_{1})}E_{n}^{2}(C_{2})\sqrt{E_{n}^{1}(X_1\setminus C_{1})}. \\
            &\geq  E_{n}^{2}(C_{2}) - E_{n}^{1}(X_1\setminus C_{1}).
        \end{split}
    \end{align}
    By \cref{eq:containing_limit_commuting}, $\liminf_{n\to\infty} \Tr(E_{n}^{1}(X_1\setminus C_{1})) = 0$, and therefore
    \begin{align}
        \label{eq:commuting_bound_2}
        \begin{split}
            &\limsup_{n\to\infty}\Tr(E_{n}(C_1\times C_2)\rho^{\otimes n}) \\
            &\qquad \geq \limsup_{n\to\infty} \left(\Tr(E_{n}^{2}(C_2) \rho^{\otimes n}) - \Tr(E_{n}^{1}(X_1\setminus C_{1})\rho^{\otimes n})\right) \\
        &\qquad = \limsup_{n\to\infty} \Tr(E_{n}^{2}(C_2) \rho^{\otimes n}) - \liminf_{n\to\infty}\Tr(E_{n}^{1}(X_1\setminus C_{1})\rho^{\otimes n}) \\
            &\qquad= 1-0 = 1.
        \end{split}
    \end{align}
    Together, \cref{eq:commuting_bound_1} and \cref{eq:commuting_bound_2} imply (via condition (iii) of \cref{thm:portmanteau}) that $\seq{E_{n} : \borel{X_1 \times X_2} \to \bound(\s H^{\otimes n})}$ as defined above is an estimation scheme for the composite property $(f_1, f_2)$.
\end{proof}

Now suppose the two estimation schemes, $\seq{E_n^1}$ for $f_1 : \state(\s H) \to X_1$ and $\seq{E_n^2}$ for $f_2 : \state(\s H) \to X_2$, are not commuting. How, then, does one estimate the product property, $f_1, f_2 : \state(\s H) \to X_1 \times X_2$? 
Perhaps the simplest option is to partition the $n$ independent and identically prepared copies of $\rho$ into two disjoint blocks and use the first block to estimate the value of $f_1(\rho)$ and the second block to estimate the value of $f_2(\rho)$.

\begin{lem}[Non-Commuting]
    \label{lem:non_commuting_est_schemes}
    For each $n \in \mathbb N$, let $b_n = \floor{\frac{n}{2}}$ and $b'_n = n - b_n$.
    If $\seq{E_n^{1}}$ is an $f_1$-estimation scheme and $\seq{E_n^{2}}$ is an $f_2$-estimation scheme, then
    \begin{equation*}
        E_n(\Delta_1 \times \Delta_2) \coloneqq E^1_{b_n}(\Delta_1) \otimes E^2_{b'_n}(\Delta_2)
    \end{equation*}
    is an $(f_1, f_2)$-estimation scheme.
\end{lem}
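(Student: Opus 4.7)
The plan is to exploit the fact that the proposed POVM $E_n(\Delta_1 \times \Delta_2) = E^{1}_{b_n}(\Delta_1) \otimes E^{2}_{b'_n}(\Delta_2)$ is a \emph{tensor product} of POVMs acting on disjoint tensor factors of $\s H^{\otimes n} = \s H^{\otimes b_n} \otimes \s H^{\otimes b'_n}$, and that the state $\rho^{\otimes n}$ also factorizes as $\rho^{\otimes b_n} \otimes \rho^{\otimes b'_n}$. Consequently, for every state $\rho \in \state(\s H)$, the induced probability measure
\begin{equation*}
    \xi_n^{\rho}(\Delta_1 \times \Delta_2) \coloneqq \Tr\!\left[ \left(E^1_{b_n}(\Delta_1) \otimes E^2_{b'_n}(\Delta_2)\right) \rho^{\otimes n} \right] = \xi_n^{1,\rho}(\Delta_1)\, \xi_n^{2,\rho}(\Delta_2),
\end{equation*}
is the product of the two marginal measures $\xi_n^{i,\rho}(\Delta_i) \coloneqq \Tr[E^{i}_{b^{(i)}_n}(\Delta_i)\rho^{\otimes b^{(i)}_n}]$ on $X_1$ and $X_2$. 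This factorization is the central observation that makes the argument go.

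First, I would observe that $b_n \to \infty$ and $b'_n \to \infty$ as $n \to \infty$, so the fact that $\seq{E^1_m}$ is an $f_1$-estimation scheme and $\seq{E^2_m}$ is an $f_2$-estimation scheme yields the weak convergences
\begin{equation*}
    \xi_n^{1,\rho} \weak \delta_{f_1(\rho)} \quad\text{and}\quad \xi_n^{2,\rho} \weak \delta_{f_2(\rho)}
\end{equation*}
on $X_1$ and $X_2$ respectively. Next, I would verify that the corresponding product measures converge weakly on $X_1 \times X_2$ to the product $\delta_{f_1(\rho)} \times \delta_{f_2(\rho)} = \delta_{(f_1(\rho), f_2(\rho))}$. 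The cleanest route is to fix a bounded continuous test function of the product form $h(x_1, x_2) = h_1(x_1) h_2(x_2)$ with $h_i \in \boundcont{X_i}$, use the product structure to factor the integral,
\begin{equation*}
    \int_{X_1 \times X_2} h \diff \xi_n^{\rho} = \left(\int_{X_1} h_1 \diff \xi_n^{1,\rho}\right) \left(\int_{X_2} h_2 \diff \xi_n^{2,\rho}\right),
\end{equation*}
and take the limit of each factor separately to obtain $h_1(f_1(\rho)) h_2(f_2(\rho)) = h(f_1(\rho), f_2(\rho))$. A density argument (e.g.\ invoking the same tool used in the proof of \cref{lem:weak_convergence_with_prior}, namely that linear combinations of product continuous functions are dense in $\boundcont{X_1 \times X_2}$ for the purposes of establishing weak convergence on Polish spaces) extends this to all $h \in \boundcont{X_1 \times X_2}$.

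Finally, I would translate this weak convergence back into the definition of an estimation scheme: for every $g \in \boundcont{X_1 \times X_2}$,
\begin{equation*}
    \lim_{n \to \infty} \int_{X_1 \times X_2} g \diff \xi_n^{\rho} = g(f_1(\rho), f_2(\rho)) = g((f_1, f_2)(\rho)),
\end{equation*}
which is exactly the condition for $\seq{E_n}$ to be an $(f_1, f_2)$-estimation scheme. The only real technical point is the passage from convergence on product test functions to convergence on all bounded continuous test functions on $X_1 \times X_2$, but this is a standard fact for probability measures on products of Polish spaces and is already used implicitly elsewhere in the preliminaries, so I expect no serious obstacle.
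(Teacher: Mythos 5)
Your proof is correct, but it takes a genuinely different and in fact simpler route than the paper. The paper's argument reduces the non-commuting case to \cref{lem:commuting_est_schemes}: since $E^1_{b_n}(\Delta_1)\otimes \ident^{\otimes b'_n}$ and $\ident^{\otimes b_n}\otimes E^2_{b'_n}(\Delta_2)$ act on disjoint tensor factors, they commute as operators on $\s H^{\otimes n}$, and the paper then invokes the operator-level argument from that lemma, including the non-projective union bound manipulation in \cref{eq:non_proj_union_bound}. You instead notice and exploit the stronger feature of this situation that the paper leaves implicit: because $\rho^{\otimes n} = \rho^{\otimes b_n}\otimes\rho^{\otimes b'_n}$ also factorizes across the same cut, the induced outcome distribution $\xi_n^{\rho}$ is not merely the distribution of a commuting pair of measurements but a genuine \emph{product measure} $\xi^{1,\rho}_{b_n}\times\xi^{2,\rho}_{b'_n}$ on $X_1\times X_2$. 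From there the proof is entirely measure-theoretic: weak convergence of each factor (using $b_n, b'_n\to\infty$) plus the density of product test functions in $\boundcont{X_1\times X_2}$ for Polish spaces — the same fact the paper invokes in the proof of \cref{lem:weak_convergence_with_prior} — gives weak convergence of the product to $\delta_{(f_1(\rho),f_2(\rho))}$. Your route buys elementariness and transparency (no operator manipulations at all), at the cost of not sharing machinery with the commuting lemma; the paper's route is less direct here, but it lets \cref{lem:commuting_est_schemes} do double duty, which is presumably why the paper phrases this lemma as a corollary of it. One small point the paper flags that you silently pass over is that every integer appears in the sequences $\seq{b_n}$ and $\seq{b'_n}$; in your formulation this is unnecessary, since convergence of a sequence along any index sequence tending to infinity follows from convergence of the full sequence.
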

\begin{proof}
    As the two estimation schemes have been made to act on disjoint blocks of $\s H^{\otimes n}$, the two estimation schemes effectively commute with each other,
    \begin{equation}
        [E^1_{b_n}(\Delta_1) \otimes \ident_{\s H}^{\otimes b'_n}, \ident_{\s H}^{\otimes b_n}\otimes E^2_{b_n'}(\Delta_2)] = 0.
    \end{equation}
    Therefore, the proof can be seen as a special case of \cref{lem:commuting_est_schemes}.
    The essential difference however is to notice that (i) as $n \to \infty$, the two block sizes, $b_n = \floor{\frac{n}{2}}$ and $b'_n = n - b_n$, tend to infinity, and moreover, (ii) every integer is contained in the sequences $\seq{b_n}$ and $\seq{b_n'}$.
    This secondary observation is needed to ensure that all of the limits still hold.
\end{proof}

\subsection{Example estimation schemes}
\label{sec:examples_est_schemes}

In this section we endeavour to list a wide variety of estimation schemes for various properties of quantum states.
While many of the following examples are examples of \textit{moment map} estimation schemes from \cref{sec:estimating_moment_maps}, some of them are not of this form.

\begin{exam}
    Consider a $d$-dimensional Hilbert space, $\s H \cong \mathbb C^{d}$, and let $X \in \End(\s H)$ be a Hermitian operator, $X^* = X$.
    Now consider the property, $\braket{X} : \state(\mathbb C^{d}) \to \mathbb R$, assigning to each state its $X$-expectation value,
    \begin{equation}
        \rho \mapsto \braket{X}_{\rho} = \Tr(\rho X).
    \end{equation}
    Let the spectral decomposition of $X$ be
    \begin{equation}
        X = \sum_{j=1}^{d} x_j P_j.
    \end{equation}
    Now for each $n \in \mathbb N$, imagine performing the projective measurement, $\{P_j\}_{j=1}^{d}$, $n$ times and letting $(\lambda_1, \ldots, \lambda_d) \in \mathbb N^{d}$ record the number of outcomes of each type.
    Then the collective POVM, $E_n^{X} : \borel{\mathbb R} \to \bound(\s H^{\otimes n})$, defined by
    \begin{equation}
        E_n^{X}(\Delta) = \sum_{\substack{(\lambda_1, \ldots, \lambda_d)\\\frac{1}{n}{\sum}_{j}x_j\lambda_j \in \Delta}} \frac{1}{\lambda_1!\cdots \lambda_d!} \sum_{\pi \in S_n} T_{n}(\pi)\left(P_{1}^{\otimes \lambda_1}\otimes \cdots \otimes P_{d}^{\otimes \lambda_d}\right)T_{n}(\pi^{-1}),
    \end{equation}
    is an $\braket{X}$-estimation scheme.
\end{exam}

\begin{exam}
    Our next example is the spectral estimation scheme due to \citeauthor{keyl2001estimating}~\cite{keyl2001estimating}.
    Let $\mathrm{spec} : \state(\mathbb C^{d}) \to \Delta_d^{\downarrow}$ be the function sending each $d$-dimensional density operator, $\rho$, to its sorted spectrum of non-negative eigenvalues summing to one.
    Note that the irreducible finite-dimensional representations of $\SU(d)$ are indexed by lists of $d$ non-negative integers, $(\lambda_1, \ldots, \lambda_d) \in \mathbb N^{d}$, which are non-increasing, $\lambda_1 \geq \cdots \geq \lambda_d$. 
    Furthermore, the irreducible subrepresentations of the representation $U \in \SU(d)$ to $U^{\otimes n}$ acting on $(\mathbb C^{d})^{\otimes n}$ are indexed by those $\lambda$ whose entries total $n$, i.e., $\lambda_1 + \ldots + \lambda_d = n$. 
    For each $n \in \mathbb N$, let $E_n^{\spec} : \borel{\Delta_d^{\downarrow}} \to \bound((\mathbb C^{d})^{\otimes n})$ be the POVM defined for all $B \in \borel{\Delta_d^{\downarrow}}$ by
    \begin{equation}
        E_n^{\mathrm{spec}}(B) = \sum_{\substack{\lambda \in \mathbb Y_d^{n}\\ \frac{\lambda}{n} \in B}}\Tr(\isosub{\lambda}{\s H^{\otimes n}} \rho^{\otimes n}),
    \end{equation}
    where $\isosub{\lambda}{\s H^{\otimes n}}$ is the projection operator onto the isotypic indexed by $\lambda$ appearing in $(\mathbb C^{d})^{\otimes n}$.
    The sequence $\seq{E_n^{\mathrm{spec}}}$ is a spectral estimation scheme (see \cref{fig:spectral_dists}).
    In fact, this spectral estimation scheme can be derived from the state estimation scheme from \cref{exam:state_est} by pushing forward through the spectrum map $\mathrm{spec} : \state(\mathbb C^{d}) \to \Delta_d^{\downarrow}$ defined above.
    The corresponding rate function, $I_{\rho} : \Delta_d^{\downarrow} \to [0, \infty]$, for a fixed $\rho \in \state(\mathbb C^{d})$, is for all spectra $s \in \Delta_d^{\downarrow}$ equal to the relative entropy between $s$ and the spectrum of $\rho$~\cite{keyl2001estimating,keyl2006quantum}.
\end{exam}

\begin{exam}
    Arguably the simplest example of an estimation scheme is the following estimation scheme when corresponds to the binary property,
    \begin{equation}
        \mathrm{is}_{\psi} : \state(\s H) \to \{F, T\},
    \end{equation}
    which decides whether the state $\rho \in \state(\s H)$ is equal to a particular fixed, pure quantum state $P_{\psi} \in \state(\s H)$ associated to the ray $\psi \in \proj \s H$, i.e.,
    \begin{equation}
        \mathrm{is}_{\psi}(\rho) = \begin{cases} T & \rho = P_{\psi}, \\ F & \rho \neq P_{\psi}. \end{cases}
    \end{equation}
    One method for determining the value of $\mathrm{is}_{\psi}(\rho)$ is to first perform the binary projective measurement $\{ P_{\psi}, P_{\psi^{\perp}} \}$ (where $P_{\psi^{\perp}} \coloneqq \ident_{\s H} - P_{\psi}$) independently on $n \in \mathbb N$ identical copies of the $\rho$, and then afterwards post-select on whether or not all of the outcomes obtained were $P_{\psi}$.
    The resulting $(\mathrm{is}_{\psi})$-estimation scheme, denoted by
    \begin{equation}
        E_n^{\psi} : \borel{\{F, T\}} \to \bound(\s H^{\otimes n}),
    \end{equation}
    has the form~\cite{holevo2011probabilistic}
    \begin{align}
        \begin{split}
            E^{\psi}_n(\{T\}) &= P_{\psi}^{\otimes n}, \\
            E^{\psi}_{n}(\{F\}) &= \ident_{\s H}^{\otimes n} - P_{\psi}^{\otimes n}.
        \end{split}
    \end{align}
    For each state $\rho$, we have
    \begin{equation}
        \Tr(E^{\psi}_{n}(\{T\}) \rho^{\otimes n}) = \Tr(P_{\psi} \rho)^{n} = \exp(-n I_{\rho}(T)),
    \end{equation}
    and therefore the above $(\mathrm{is}_{\psi})$-estimation scheme satisfies the large deviation principle with rate function given by
    \begin{align}
        \begin{split}
            I_{\rho}(T) &= - \ln \Tr(P_{\psi} \rho), \\
            I_{\rho}(F) &= \begin{cases} \infty & \rho \neq P_{\psi}, \\ 0 & \rho = P_{\psi}. \end{cases}
        \end{split}
    \end{align}
\end{exam}

\chapter{Realizability problems}
\label{chap:realizability}
\section{Realizability}

The purpose of this section is to formally define the notion of a realizability problem for a property (or collection of properties) of a class of quantum states.
Recall that a property of quantum states is considered to be a (measurable) function, $f : \s S \to X$, mapping each quantum state, $\rho \in \state$, to their corresponding property value $x = f(\rho) \in X$.
The problem of determining or characterizing which values the function $f$ can take when evaluated over the set of all quantum states in $\s S$ is called the \textit{realizability problem for $f$}. 

\begin{defn}
    Given a measurable function $f : \state \to X$ with $(\state, \borel{\state})$ and $(X, \borel{X})$ standard Borel spaces, define
    \begin{enumerate}[(i)]
        \item the \defnsty{inverse image} of $f$ as $f^{-1} : \borel{X} \to \borel{\s S}$ for $\Delta \in \borel{X}$ by
            \begin{equation}
                f^{-1}(\Delta) = \{ \rho \in \state \mid f(\rho) \in \Delta \},
            \end{equation}
        \item and the \defnsty{direct image} of $f$, by an abuse of notation, as $f : \borel{\s S} \to \borel{X}$ for $S \in \borel{\state}$ by
            \begin{equation}
                f(S) = \{ x \in X \mid f^{-1}(x) \cap S \neq \emptyset \}.
            \end{equation}
    \end{enumerate}
\end{defn}
The direct image, $f$, and inverse image, $f^{-1}$, of $f$ form an adjoint pair in the sense that for all $S \in \borel{\s S}$ and $\Delta \in \borel{X}$,
\begin{equation}
    \label{eq:adjoint_pair}
    f(S) \subseteq \Delta \iff S \subseteq f^{-1}(\Delta).
\end{equation}
\begin{defn}
    Given a function $f : \s S \to \s X$, the inverse image evaluated at a singleton $\{x\} \in \borel{X}$,
    \begin{equation}
        f^{-1}(\{ x \}) = \{\rho \in \s S \mid f(\rho) = x\} \subseteq \s S,
    \end{equation}
    is called the \defnsty{fiber over $x$}.
\end{defn}
\begin{rem}
    Whenever the function $f$ under consideration has codomain belonging to the set of real numbers, $X \subseteq \mathbb R$, the fiber over $x \in \mathbb R$ may also be referred to as the \textit{level-set} at $x$.
\end{rem}
\begin{defn}
    Given a property $f : \s S \to X$ and an element $x \in X$, the following conditions are equivalent:
    \begin{enumerate}[(i)]
        \item there exists a $\rho \in \s S$ such that $f(\rho) = x$,
        \item the fiber over $x$ is non-empty, i.e. $f^{-1}(\{x\}) \neq \emptyset$,
        \item $x$ is an element of the direct image $f(\s S)$ of $\s S$.
    \end{enumerate}
    If any, and thus all, of the above conditions are satisfied, the element $x$ is said to be \defnsty{realizable}. 
    Otherwise, $x$ is said to be \defnsty{unrealizable}. 
    The subset $f(\s S) \subseteq X$ of realizable elements will be referred to as the \defnsty{realizable region} of $f$.
    The \defnsty{realizability problem} for $f$ is to decide, given $x \in \s X$, whether $x$ is realizable or unrealizable.
\end{defn}
\begin{rem}
    Since the fibers of any function are necessarily disjoint,
    \begin{equation}
        x \neq y \implies f^{-1}(\{x\}) \cap f^{-1}(\{y\}) = \emptyset,
    \end{equation}
    the collection of non-empty fibers forms a \textit{partition} of $\s S$ indexed by elements of the realizable region, denoted
    \begin{equation}
        \s S / f \coloneqq \{ f^{-1}(\{x\}) \mid x \in f(\s S) \}.
    \end{equation}
    Altogether, we have the following commutative diagram of maps
    \begin{equation}
        \begin{tikzcd}
            \s S \arrow[r, "f"] \arrow[d, twoheadrightarrow, "m_f"'] & \s X \\
            \s S / f \arrow[r, equal, "\simeq"] & f(\s S) \arrow[u, hookrightarrow, "\subseteq"']
        \end{tikzcd}
    \end{equation}
    where the map $m_f : \s S \to \s S/f$ sends each $\rho \in \s S$ to the unique non-empty fiber it belongs to:
    \begin{equation}
        m_f(\rho) = f^{-1}(\{f(\rho)\}).
    \end{equation}
\end{rem}

\begin{exam}
    Consider $\s S$ to be set of density operators acting on a two-dimensional complex Hilbert space $\s H \cong \mathbb C^{2}$, which may be identified with a point in the Bloch ball, e.g., by a point $(x, y, z) \in \mathbb R^{3}$ such that $\sqrt{x^2 + y^2 + z^2} \leq 1$ via the well-known bijection
    \begin{equation}
        \rho \mapsto (\tr(\sigma_X\rho), \tr(\sigma_Y \rho), \tr(\sigma_Z \rho)) \qquad (x,y,z) \mapsto \frac{1}{2}(I + x \sigma_X + y \sigma_Y + z \sigma_Z),
    \end{equation}
    where $\sigma_X, \sigma_Y$ and $\sigma_Z$ are the usual Pauli-operators. 
    A familiar example of a function of density operators that one might consider is the $\sigma_Z$-expectation value of $\rho$, i.e. $\langle \sigma_Z \rangle (\rho) = \tr(\sigma_Z \rho)$. 
    Evidently, the set of values for $\tr(\sigma_Z \rho)$ that can be realized by density operators is equal to the closed interval $[-1, +1]$, while the corresponding fibers over $z \in [-1, +1]$, as depicted by \cref{fig:bloch_ball_fibers} (b), are closed disks of radius $\sqrt{1-z^2}$ in the $xy$-plane displaced by a height $z$. 
    For a second example, consider the function $\rho \mapsto \tr(\rho^{2})$, otherwise known as the \textit{purity} of $\rho$. Evaluated in $(x,y,z)$-coordinates, purity is equal to $(1+x^2 + y^2 + z^2)/2$ and thus the realizable region for purity is the closed interval $[1/2, 1]$. 
    The fibers of purity, parameterized by $p = \tr(\rho^{2})$, correspond to spheres of radius $\sqrt{2p-1}$ as depicted by \cref{fig:bloch_ball_fibers} (a).
    \begin{figure}
        \subfloat[{$(1 + r(x,y,z)^2)/2 \in [1/2,1]$}]{
            \includegraphics[width=0.28\textwidth]{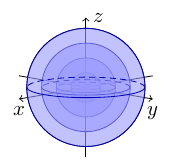}
        }
        \quad
        \subfloat[{$z \in [-1,+1]$}]{
            \includegraphics[width=0.28\textwidth]{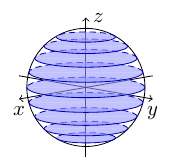}
        }
        \quad
        \subfloat[{$\arccos\left(z/r(x,y,z)\right) \in [0, \pi]$}]{
            \includegraphics[width=0.28\textwidth]{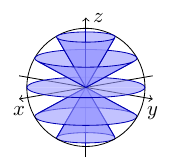}
        }
        \caption{Various functions of the coordinates $(x,y,z) \in \mathbb R^{3}$, together with a depiction of their non-empty fibers in the unit ball. Also note $r(x,y,z) = \sqrt{x^2 + y^2 + z^2}$. In subfigure (c), the origin is treated as its own fiber and must be given its own special label distinct from those in $[0, \pi]$, such as $\infty$.}
        \label{fig:bloch_ball_fibers}
    \end{figure}
\end{exam}

Before continuing, it will be important to elucidate the subtle, yet ultimately critical, distinction between between \textit{properties} and \textit{parameters}.
This claimed distinction is subtle simply because properties and parameters are often colloquially used to describe aspects or features of a system and in most settings, can be used interchangeably.
Ultimately, in the setting investigated by this thesis, these two concepts can be seen to be dual to each other and thus cannot be used interchangeably.
If the state or configuration of a system is described by some element of a space of possible states, $\s S$, then a \textit{parameterization} of the system is described by a function $g : \Theta \to \s S$ where the set $\Theta$ (typically taken to be $\mathbb R^{d}$ for some integer $d$) is called the \textit{parameter space} and its elements or coordinates called \textit{parameters}.
By comparison, the notion of a property emerges from considering a function $f : \s S \to X$ where the set $X$ is called the \textit{property space}.
Any meaningful distinction between a property and a parameter, of course, disappears when the parameterization $g : \Theta \to \s S$ happens to be invertible, in which case its inverse, $g^{-1} : \s S \to \Theta$, can be considered a property.
In precisely the same manner, if a property $f : \s S \to X$ happens to be invertible, then its inverse, $f^{-1} : X \to \s S$, can be considered as a parameterization.
The focus of this thesis will be on properties which encode or represent some partial or incomplete information about the system and thus are, by assumption, \textit{not} invertible.
From this point of view, properties are considered to be conceptually distinct from parameters on the basis that to every parameter, $\theta \in \Theta$, there exists the state $g(\theta) \in \s S$, whereas there might exist property values, $x \in X$, for which there does not exist a state possessing that property.
In other words, unlike parameters, not all property values are \textit{realizable}, and thus the problem of determining which property values are realizable becomes a non-trivial problem.

\subsection{Quantitative realizability}

Here we consider a variant of a realizability problem for the property $f : \state \to X$ which aims to essentially quantify the proportion of states $\rho \in \state$ that have the property value $f(\rho) = x$ where $x \in \Delta$ belongs to some measurable subset $\Delta \subseteq X$.
The answer to this question, of course, depends on a choice of normalized measure, $\mu : \borel{\state} \to [0,1]$, over the set of states $\rho \in \state$, of which there are numerous physically motivated options~\cite{wootters1990random,zyczkowski2001induced}.
Throughout this section, we assume $\s S = \s S(\s H)$ is the set of density operators on a complex finite-dimensional Hilbert space $\s H$ equipped with topology induced by the operator norm on $\End(\s H)$ and let the $\sigma$-algebra over $\s S$ generated by that topology be denoted by $\borel{\state}$.

\begin{rem}
    Let $f : \state \to X$ be a property of quantum states and let $\mu : \borel{\state} \to [0,1]$ be a prior probability measure over the set of states $\state$. The pushforward measure of $\mu$ through the property $f$, is defined for $\Delta \in \borel{X}$ by
    \begin{equation}
        \label{eq:prior_property_pushforward_measure}
        (f_{*} \mu) (\Delta) = (\mu \circ f^{-1})(\Delta) = \mu(f^{-1}(\Delta)).
    \end{equation}
\end{rem}

\begin{exam}
    For example, let $\s H = \mathbb C^{d}$ and let $A \in \End(\s H)$ be a Hermitian operator with non-degenerate spectra $\{\lambda_1, \ldots, \lambda_d\} \subset \mathbb R$, and let the property under consideration be the $A$-expectation value, $e_{A} : \state \to \mathbb R$ sending $\rho \in \state$ to $e_{A}(\rho) \in \mathbb R$.
    Further suppose that the prior measure, $\mu$, only has support over the set of pure states, $\proj \s H$, and when $\mu$ is restricted to its support, is equal to the normalized, $\U(d)$-invariant Haar measure over $\proj \s H$.
    Then the pushforward of $\mu$ through the $A$ expectation value map, $e_{A}$, has density equal to (\cref{thm:density_nondegen_exp_val}~\cite{zhang2022probability})
    \begin{align}
        \diff (\mu \circ e_{A}^{-1})(a) 
        &= (d-1)\sum_{i=1}^{d} \frac{(a-\lambda_i)^{d-2}\mathrm{H}(a-\lambda_i)}{\prod_{j\neq i}(\lambda_j-\lambda_i)} \diff a.
    \end{align}
    where $\mathrm{H}(a)$ is the Heaviside function.
\end{exam}

\begin{rem}
    It is important to note that the support of the pushforward measure $f_{\ast} \mu : \borel{X} \to [0,1]$ is not necessarily equal to the image of the support of $\mu : \borel{\state} \to [0,1]$ through $f : \state \to X$, i.e., $f(\supp(\mu)) \neq \supp(f_{\ast} \mu)$.
    For a simple example of this phenomena, consider the subset $\s P \subset \state$ of pure states and let $f : \state \to \{ T, F \}$ be the Boolean-valued property of whether a state is pure:
    \begin{equation}
        f(\rho) = \begin{cases}
            T & \rho \in \mathcal P, \\
            F & \rho \not \in \mathcal P.
        \end{cases}
    \end{equation}
    Now consider a measure $\mu : \borel{\state} \to [0,1]$ for which the pure states are a null-set, i.e. $\mu(\mathcal P) = 0$, but nevertheless $\mu$ has full-support on $\state$ such as the Hilbert-Schmidt measure. 
    The pushforward measure in this case satisfies
    \begin{align}
        (f_{\ast}\mu)(T) &= \mu(f^{-1}(T)) = \mu(\mathcal P) = 0, \\
        (f_{\ast}\mu)(F) &= \mu(f^{-1}(F)) = \mu(\state \setminus \mathcal P) = 1,
    \end{align}
    and thus its support is equal to $\supp(f_{\ast}\mu) = \{F\}$.
    By comparison, since the support of $\mu$ contains both pure and mixed states, we have $f(\supp(\mu)) = \{T, F\}$.
    Consequently, the pushforward measure $f_{\ast} \mu$ does not always faithfully capture the realizable region in general.
\end{rem}

While a direct computation of the pushforward measure $\mu \circ f^{-1}$ is certainly desirable, for a generic property $f$, this can be prohibitively difficult to do (see \cref{rem:multiple_observables}).
Fortunately, when provided with an $f$-estimation scheme, $\seq{E_n : \borel{X} \to \bound(X^{\otimes n})}$, it becomes possible to approximate $\mu \circ f^{-1}$.
Recall that applying the POVM $E_n$ to the state $\rho^{\otimes n}$ yields a probability distribution, defined for $\Delta \in \borel{X}$ by
\begin{equation}
    \label{eq:estimate_rho_measure}
    \xi_n^{\rho} (\Delta) \coloneqq \Tr(E_n(\Delta) \rho^{\otimes n}).
\end{equation}
When $\mu$ represents some prior measure describing the identity of $\rho \in \state$, we encounter the notion of a de Finetti state.
\begin{defn}
    Let $\mu : \borel{\state} \to [0,1]$ be a probability measure over $\state$ and $n \in \mathbb N$ and positive integer. 
    The \defnsty{$\mu$-de Finetti state} of degree $n$ is
    \begin{equation}
        D^{\mu}_{n} \coloneqq \int_{\state} \mu(\diff \rho) \rho^{\otimes n}.
    \end{equation}
\end{defn}
\begin{prop}
    \label{prop:estimation_weak_convergence}
    Let $f : \state \to X$ be a property of quantum states, let $\seq{E_n : \borel{X} \to \bound(\s H^{\otimes n})}$ be an $f$-estimation scheme, and let $\mu : \borel{\state} \to [0,1]$ be a probability measure over the set of states $\state$.
    Then the sequence of probability measures $\seq{\Xi_{n}^{\mu} : \borel{X} \to [0,1]}$, defined for all $\Delta \in \borel{X}$ and $n \in \mathbb N$ by
    \begin{equation}
        \Xi_{n}^{\mu}(\Delta) = \int_{\rho \in \state} \mu(\diff \rho) \Tr(E_n(\Delta) \rho^{\otimes n}) = \Tr(E_n(\Delta) D_n^{\mu}),
    \end{equation}
    converges weakly to the pushforward measure of $\mu$ through $f$.
    In other words, for all bounded continuous functions $g : X \to \mathbb R$,
    \begin{equation}
        \int_{x \in X} g(x) \Tr(\diff E_n(x) D_n^{\mu}) = \int_{\rho \in \state} g(f(\rho)) \diff \mu(\rho).
    \end{equation}
\end{prop}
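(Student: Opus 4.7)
The plan is to recognize that this proposition is essentially a direct corollary of \cref{cor:convergence_to_pushforward} from the preliminaries, applied with $Y = \state$ and the property $f : \state \to X$ playing the role of the measurable map against which one takes the pushforward. The only genuine work is bookkeeping: (i) packaging the measurement data as a probability kernel, (ii) invoking the defining property of an $f$-estimation scheme to obtain pointwise (in $\rho$) weak convergence to a Dirac measure, and (iii) applying the cited corollary to lift this pointwise convergence to weak convergence of the $\mu$-averaged measures.

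First I would define, for each $n \in \mathbb{N}$, the map $\xi_n : \borel{X} \times \state \to [0,1]$ by $\xi_n(\Delta \mid \rho) \coloneqq \Tr(E_n(\Delta) \rho^{\otimes n})$ and check that it is a probability kernel. For fixed $\rho$, the map $\Delta \mapsto \xi_n(\Delta \mid \rho)$ is a probability measure on $(X, \borel{X})$ because $E_n$ is a POVM and $\rho^{\otimes n}$ is a density operator, so countable additivity and normalization $\xi_n(X \mid \rho) = \Tr(\rho^{\otimes n}) = 1$ follow immediately. For fixed $\Delta \in \borel{X}$, the map $\rho \mapsto \xi_n(\Delta \mid \rho)$ is a polynomial in the entries of $\rho$ and hence continuous and Borel-measurable on $\state$ with respect to $\borel{\state}$.

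Second, I would invoke the definition of an $f$-estimation scheme (\cref{eq:f_est_defn}): for every $\rho \in \state$ and every bounded continuous $g : X \to \mathbb{R}$,
\begin{equation*}
    \lim_{n \to \infty} \int_{x \in X} g(x) \, \xi_n(\diff x \mid \rho) \;=\; g(f(\rho)) \;=\; \int_{x \in X} g(x) \, \delta_{f(\rho)}(\diff x),
\end{equation*}
which is precisely the statement, per \cref{defn:weak_convergence}, that $\xi_n^\rho \weak \delta_{f(\rho)}$ for each individual $\rho$. All the hypotheses of \cref{cor:convergence_to_pushforward} are then in place: $\state$ and $X$ are Polish (standard Borel) spaces, $\xi_n$ is a probability kernel converging pointwise-weakly to the deterministic kernel $\rho \mapsto \delta_{f(\rho)}$, and $\mu$ is a fixed probability measure on $\state$.

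Third, by \cref{cor:convergence_to_pushforward}, writing $\Xi_n^\mu(\diff x) = \int_{\rho \in \state} \xi_n(\diff x \mid \rho)\, \mu(\diff \rho)$ (the Fubini interchange of the integral and the trace is legitimate since the integrand is bounded and jointly measurable, which also identifies this measure with $\Delta \mapsto \Tr(E_n(\Delta) D_n^\mu)$), one obtains
\begin{equation*}
    \Xi_n^\mu \;\weak\; \int_{\rho \in \state} \delta_{f(\rho)}(\diff x)\, \mu(\diff \rho) \;=\; (f_\ast \mu)(\diff x),
\end{equation*}
which is the claimed weak convergence, and the displayed limit-equation in the proposition statement then follows by unpacking \cref{defn:weak_convergence} against bounded continuous test functions $g$. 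The main ``obstacle,'' if one can call it that, is really just ensuring that the joint measurability and continuity required to invoke \cref{cor:convergence_to_pushforward} are met; once those are checked, the result is a one-line consequence of machinery already established earlier in the chapter.
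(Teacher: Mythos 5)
Your proof is correct, and in fact it is more careful and more precisely sourced than the paper's one-line proof. The paper cites only \cref{lem:pushforward_continuous} (continuity of $h_*$ for continuous $h$), which by itself does not establish the claim: the sequence $\seq{\Xi_n^\mu}$ is not itself a pushforward through $f$ of a convergent sequence of measures on $\state$, so one cannot simply invoke continuity of $f_*$. The ingredient that is actually doing the work --- converting pointwise (in $\rho$) weak convergence of $\xi_n^\rho$ to $\delta_{f(\rho)}$ into weak convergence of the $\mu$-averaged measures to $f_*\mu$ --- is exactly \cref{cor:convergence_to_pushforward}, which you cite, and which itself rests on \cref{lem:weak_convergence_with_prior} and then applies \cref{lem:pushforward_continuous} to the projection $X \times \state \to X$. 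So the two arguments are built from the same chain of preliminary results, but your version makes the intermediate integration step explicit and points to the corollary that encapsulates it, whereas the paper compresses this into a citation of the tail of the chain only. The minor checks you add (that $\xi_n(\,\cdot\mid\rho)$ is a probability measure via POVM normalization, that $\rho \mapsto \Tr(E_n(\Delta)\rho^{\otimes n})$ is a polynomial hence measurable, and that $\state$ is Polish) are exactly the hypotheses the corollary requires and are worth stating.
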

\begin{proof}
    The proof follows from \cref{lem:pushforward_continuous} which states that the pushforward operation, $f_{*}$, is continuous with respect to weak topologies and thus respects the notion of weak convergence.
\end{proof}

\begin{rem}
    If the sequence of measures, $\seq{\xi_n^\rho : \borel{X} \to [0,1]}$, defined by \cref{eq:estimate_rho_measure} satisfies the large deviations principle with rate function $I_{\rho} : X \to [0,1]$, then one might expect the sequence of measures, $\seq{\Xi_n^\rho : \borel{X} \to [0,1]}$, defined by \cref{prop:estimation_weak_convergence}, to satisfy the large deviations principle with rate function $I' : X \to [0,1]$, defined for $x \in X$ by
    \begin{equation}
        I'(x) \coloneqq \inf_{\rho \in S_{\mu}} I_{\rho}(x)
    \end{equation}
    where $S_{\mu} = \supp(\mu)$ is the support of the prior measure $\mu$.
    Unfortunately, to prove $I'$ is indeed a rate function requires certain continuity assumptions of the function $\rho \mapsto I_{\rho}(x)$ which may not hold in general.
    Moreover, to prove that $\seq{\Xi_n^{\mu}}$ satisfies the large deviation principle requires a more careful analysis of the uniformity with which large deviations principles for $\seq{\xi_n^{\rho}}$ hold with respect to $\rho$. 
    See for example, \citeauthor{finnoff2002integration}'s integration theorem of sequences of probability kernels satisfying large deviation principles~\cite{finnoff2002integration}.
\end{rem}

\subsection{Inclusivity of de Finetti states}
\label{sec:inclusivity_of_de_finetti}

\begin{defn}
    \label{defn:inclusivity}
    Let $\mu : \borel{\s S} \to [0,1]$ be a probability measure with support $\supp(\mu) \subseteq \s S$ and define the function $\gamma : \mathbb N \to \mathbb R_{\geq 0}$ for each $n \in \mathbb N$ as
    \begin{equation*}
        \gamma(n) \coloneqq \inf \{c \in \mathbb R_{\geq 0} \mid \forall \sigma \in \supp(\mu),  c D^{\mu}_{n} \geq \sigma^{\otimes n}\}.
    \end{equation*}
    If $\gamma$ does not increase too quickly, i.e.,
    \begin{equation*}
        \lim_{n \to \infty}\gamma^{\frac{1}{n}}(n) = 1,
    \end{equation*}
    then $\mu$ is said to be \defnsty{inclusive} and the function $\gamma$ is terms the \defnsty{inclusivity function}.
\end{defn}

\begin{defn}
    \label{defn:uniform_symmetric_state}
    Let $n \in \mathbb N$ and let $\s H$ be a complex $d$-dimensional Hilbert space.
    Let projection operator onto the $S_n$-symmetric subspace $\mathrm{Sym}^{n}(\s H) \subset \s H^{\otimes n}$ be denoted by $\projsym_{n}$,
    \begin{equation}
        \projsym_{n} = \frac{1}{n!} \sum_{\pi \in S_n} T_{n}(\pi).
    \end{equation}
    The \textit{normalized} state over the symmetric subspace will be expressed by
    \begin{equation}
        \unisym_{n} = \frac{\projsym_{n}}{\Tr(\projsym_{n})},
    \end{equation}
\end{defn}

The uniform symmetric state defined above constitutes the motivating example of an inclusive probability measure where the prior measure $\mu$ is taken to be the $\U(d)$-invariant Fubini-Study or Haar measure on $\s S$ with support over the pure states, i.e., $\supp(\mu) = \{ \rho \in \s S | \mathrm{rank}(\rho) = 1\}$.
In this case, the corresponding inclusivity function is given by the dimension of the $n$th symmetric subspace, i.e., $\gamma(n) = \binom{n+d-1}{n}$ where $d = \dim(\s H)$.
Of course, this example can also be used to directly prove the inclusivity of the Hilbert-Schmidt measure with support over the entire set of density operators $\s S$ where the corresponding inclusivity measure is $\gamma(n) = \binom{n+d^2-1}{n}$.
Moreover, it is conjectured that essentially all commonly encountered probability measures over the space of quantum states, such as those considered in~\cite{zyczkowski2001induced}, are inclusive with respect to definition \cref{defn:inclusivity}.
Perhaps the simplest examples of inclusive probability measures over quantum states are the discrete probability measures. 
For example, consider a discrete probability measure, $\mu$, with binary support $\supp(\mu) = \{\rho, \sigma\}$ of the form
\begin{equation}
    \mu = \lambda \delta_{\rho} + (1-\lambda) \delta_{\sigma}.
\end{equation}
In this case, the inclusivity function, $\gamma(n) = \max\{\frac{1}{\lambda}, \frac{1}{1- \lambda}\}$, is independent of $n$.

\begin{thm}
    \label{thm:symmetric_inclusivity}
    Let $\s H$ be a $d$-dimensional complex Hilbert space, let $p \in \mathbb N$ be a positive integer and let
    \begin{equation}
        \seq{X_n \in \bound(\s H^{\otimes pn})}
    \end{equation}
    be a sequence of positive semidefinite operators.
    Let $\unisym_{np}$ be the uniform state on the symmetric subspace $\mathrm{Sym}^{np}(\s H) \subseteq \s H^{\otimes np}$.
    Further assume that the following limit exists for all $\psi \in \proj \s H$,
    \begin{equation}
        C(\psi) \coloneqq \limsup_{n\to\infty} \Tr(X_n P_{\psi}^{\otimes pn})^{\frac{1}{n}}.
    \end{equation}
    Then the supremum of $C(\psi)$ over all $\psi \in \proj \s H$ equals
    \begin{align}
        \sup_{\psi \in \proj \s H} C(\psi)
        &= \limsup_{n\to\infty} \Tr(X_n \unisym_{pn})^{\frac{1}{n}} \\
    \end{align}
\end{thm}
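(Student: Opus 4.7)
The key tool is a de Finetti-type identity relating $\unisym_{pn}$ to the coherent-state projectors: by applying Schur's lemma to the irreducible $\U(d)$-representation $\mathrm{Sym}^{pn}(\s H)$, one obtains
\begin{equation*}
    \unisym_{pn} = \int_{\psi \in \proj \s H} P_\psi^{\otimes pn} \, \diff \mu(\psi),
\end{equation*}
where $\mu$ denotes the unitarily invariant Fubini--Study probability measure on $\proj \s H$. This is equivalent to the operator inequality $P_\psi^{\otimes pn} \leq \dim(\mathrm{Sym}^{pn}(\s H)) \cdot \unisym_{pn}$, which is the inclusivity bound of \cref{defn:inclusivity} with polynomial-growth constant $\gamma(pn) = \binom{pn+d-1}{pn}$.

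For the direction $\sup_\psi C(\psi) \leq \limsup_n \Tr(X_n \unisym_{pn})^{1/n}$, I would combine the operator inequality with $X_n \geq 0$ to get $\Tr(X_n P_\psi^{\otimes pn}) \leq \dim(\mathrm{Sym}^{pn}(\s H)) \cdot \Tr(X_n \unisym_{pn})$. Taking $(1/n)$th roots, applying $\limsup_n$, and noting $\dim(\mathrm{Sym}^{pn}(\s H))^{1/n} \to 1$, one obtains $C(\psi) \leq \limsup_n \Tr(X_n \unisym_{pn})^{1/n}$ for every $\psi$, and the supremum over $\psi$ yields the claim.

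For the reverse direction, the de Finetti identity gives $\Tr(X_n \unisym_{pn}) = \int_\psi \Tr(X_n P_\psi^{\otimes pn}) \, \diff \mu(\psi) \leq M_n$, where $M_n := \sup_\psi \Tr(X_n P_\psi^{\otimes pn})$ is attained at some $\psi_n^* \in \proj \s H$ by continuity and compactness. Together with the inclusivity bound from the previous paragraph this implies $\limsup_n \Tr(X_n \unisym_{pn})^{1/n} = \limsup_n M_n^{1/n}$, so the task reduces to showing $\limsup_n M_n^{1/n} \leq \sup_\psi C(\psi)$. Extracting a subsequence $n_k$ along which $M_{n_k}^{1/n_k}$ converges to the $\limsup$ and $\psi_{n_k}^* \to \psi^* \in \proj \s H$ (by compactness), I would aim to prove $\limsup_k M_{n_k}^{1/n_k} \leq C(\psi^*)$, which combined with $C(\psi^*) \leq \sup_\psi C(\psi)$ concludes the argument.

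The main obstacle lies in this last step, which amounts to interchanging $\limsup_n$ with $\sup_\psi$: the projectors $P_{\psi_{n_k}^*}^{\otimes p n_k}$ are generally \emph{not} close in trace norm to $P_{\psi^*}^{\otimes pn_k}$, since the inner product $|\langle v_{\psi_{n_k}^*}, v_{\psi^*}\rangle|^{pn_k}$ can be exponentially suppressed even when $\psi_{n_k}^* \to \psi^*$. Bridging this gap requires exploiting that $f_n(\psi) := \Tr(X_n P_\psi^{\otimes pn})$ is a polynomial of controlled degree $pn$ in $v_\psi$, together with the upper semicontinuity of $C$ (as a pointwise $\limsup$ of continuous functions) at the limit maximizer $\psi^*$, to transfer the asymptotic exponential rate at the moving maximizers back to the rate at $\psi^*$.
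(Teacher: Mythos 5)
Your proof follows the same overall route as the paper's: the de Finetti/Schur's lemma identity $\unisym_{pn} = \int P_\psi^{\otimes pn}\diff\mu(\psi)$, the inclusivity bound $P_\psi^{\otimes pn}\leq\binom{pn+d-1}{pn}\unisym_{pn}$, and the resulting two-sided sandwich
\begin{equation*}
\Tr(X_n \unisym_{pn}) \;\leq\; \sup_{\psi}\Tr(X_n P_\psi^{\otimes pn}) \;\leq\; \binom{pn+d-1}{pn}\,\Tr(X_n \unisym_{pn}),
\end{equation*}
which, after taking $(1/n)$th roots, establishes $\limsup_n\Tr(X_n\unisym_{pn})^{1/n}=\limsup_n\bigl(\sup_\psi\Tr(X_n P_\psi^{\otimes pn})\bigr)^{1/n}$. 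This also handles the direction $\sup_\psi C(\psi)\leq\limsup_n\Tr(X_n\unisym_{pn})^{1/n}$, exactly as you argue.

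What you correctly flag, and the paper's proof does \emph{not} address, is the remaining step: passing $\limsup_n$ inside $\sup_\psi$, i.e., showing
\begin{equation*}
\limsup_{n\to\infty}\Bigl(\sup_{\psi}\Tr(X_n P_\psi^{\otimes pn})\Bigr)^{1/n}\;\leq\;\sup_{\psi}\,\limsup_{n\to\infty}\Tr(X_n P_\psi^{\otimes pn})^{1/n}.
\end{equation*}
The paper's proof simply writes ``taking the supremum limit as $n\to\infty$ then yields'' the two-sided bound on $\sup_\psi C(\psi)$, without any justification for this interchange, which is false for arbitrary sequences of continuous functions. Your diagnosis is accurate: a convergent subsequence of maximizers $\psi_{n_k}^*\to\psi^*$ does not immediately help, since $P_{\psi_{n_k}^*}^{\otimes pn_k}$ and $P_{\psi^*}^{\otimes pn_k}$ can be exponentially far in trace norm unless the convergence is faster than $O(n_k^{-1/2})$, and compactness alone does not supply a rate. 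Your instinct to exploit the polynomial/coherent-state structure of $\psi\mapsto\Tr(X_n P_\psi^{\otimes pn})$ (degree $2pn$ in the components of the representative unit vector) is the right idea for bridging this gap, but the argument is not carried out, and indeed the paper does not carry it out either. In short: your proof matches the paper's, and you have honestly surfaced an unjustified step that the paper leaves implicit.
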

\begin{proof}
    Since the ray $\psi^{\otimes pn}$ is a subspace of $\mathrm{Sym}^{pn}(\s H)$ and $X_n$ is positive semidefinite, we have
    \begin{align}
        \sup_{\psi \in \proj \s H} \Tr(X_n P_{\psi}^{\otimes pn})
        &\leq \Tr(X_n \projsym_{pn}).
    \end{align}
    Furthermore, since $\mathrm{Sym}^{pn}(\s H)$ supports an irreducible representation of $\SU(\s H)$, we have by Schur's lemma
    \begin{align}
        \Tr(X_n \projsym_{pn}) = \dim(\mathrm{Sym}^{pn}(\s H)) \int_{\phi \in \proj \s H} \diff \mu(\phi) \Tr(X_n P_{\phi}^{\otimes pn}),
    \end{align}
    where $\mu$ is the normalized Haar measure for $\SU(\s H)$. 
    Moreover, since $\mu$ is a probability measure,
    \begin{align}
        \label{eq:probability_upper_bound_mean}
        \int_{\phi \in \proj \s H} \diff \mu(\phi) \Tr(X_n P_{\phi}^{\otimes pn})
        \leq \sup_{\psi \in \proj \s H} \Tr(X_n P_{\psi}^{\otimes pn}).
    \end{align}
    Finally, since $\dim(\mathrm{Sym}^{pn}(\s H)) = \binom{pn + d - 1}{pn}$ grows at polynomial rate with increasing $n$,
    \begin{equation}
        \label{eq:dimsym_irrelavent}
        \lim_{n \to \infty} \dim(\mathrm{Sym}^{pn}(\s H))^{\frac{1}{pn}} = \lim_{n \to \infty} \dim(\mathrm{Sym}^{pn}(\s H))^{-\frac{1}{pn}} = 1.
    \end{equation}
    Taking the supremum limit as $n \to \infty$ then yields
    \begin{equation}
        \sup_{\psi \in \proj \s H} C(\psi) \leq \limsup_{n\to\infty} \Tr(X_n \unisym_{pn})^{\frac{1}{n}} \leq \sup_{\psi \in \proj \s H} C(\psi),
    \end{equation}
    which proves the claim.
\end{proof}

\subsection{Moment polytopes}
\label{sec:moment_polytope}

Consider a representation $\grep : G \to \GL(\s H)$ with moment map $\momap_\grep : \proj \s H \to i \mathfrak k^{*}$.
For which values of $\omega = i \mathfrak k^{*}$ does there exist a ray $\psi \in \proj \s H$ such that $\momap_\grep (\psi) = \omega$?

To answer this question, our strategy is to make use of the correspondence, provided by \cref{prop:deform_cap_properties}, between the $\omega$-capacity of a ray and the moment map of that ray equaling $\omega$.
In this setting, for any given $\psi$, we have
\begin{equation}
    \projcapacity_{\grep}^{\omega}(\psi) = 1 \quad \Longleftrightarrow \quad \momap_{\grep}(\psi) = \omega,
\end{equation}
which, in turn, implies
\begin{equation}
    \sup_{\psi \in \proj \s H} \projcapacity_{\grep}^{\omega}(\psi) = 1 \quad \Longleftrightarrow \quad \exists \psi \in \proj \s H :  \momap_{\grep}(\psi) = \omega.
\end{equation}
Therefore, the problem of determining which $\omega$ belong to the image of the moment map $\momap_{\grep}(\proj \s H)$, is equivalent to the problem of determining the maximum value of $\projcapacity_{\grep}^{\omega}(\psi)$.
Fortunately, the deformed strong duality result of \cref{thm:deformed_strong_duality} offers an avenue for computing $\projcapacity_{\grep}^{\omega}(\psi)$ at least when $\omega \in i \mathfrak k^{*}$ has a rational coadjoint orbit (where $\omega = \Ad^{*}(h)(\omega_+)$ and $\ell \omega_+$ is dominant, analytically integral for some $\ell$).
Given an $\omega \in i \mathfrak k^{*}$ of this form, \cref{thm:deformed_strong_duality} yields,
\begin{equation}
    \projcapacity_{\grep}^{\omega}(\psi) = \limsup_{n\to\infty} \norm{\ratsub{\omega}{\grep^{\otimes n}} P_{\psi}^{\otimes n}}^{\frac{1}{n}},
\end{equation}
where we have defined $\ratsub{\omega}{\grep^{\otimes n}}$ as the projection operator onto the subspace of highest weight vectors for $\grep^{\otimes n}$ with weight $n \omega_+$ rotated by $h \in K$, in accordance with \cref{defn:rational_projectors},
\begin{equation}
    \label{eq:Q_defn}
    \ratsub{\omega}{\grep^{\otimes n}} = \grep^{\otimes n}(h)\hwsub{n\omega_+}{\grep^{\otimes n}} \grep^{\otimes n}(h^{-1}).
\end{equation}
In order to compute the supremum over all $\psi \in \proj \s H$ we can make use of \cref{thm:symmetric_inclusivity} (for the special case where $p = 1$ and $X_n = \ratsub{\omega}{\grep^{\otimes n}}$) to obtain the following result.
\begin{prop}
    \label{prop:realizability_deformed_single}
    Let $\momap_{\grep} : \proj \s H \to i \mathfrak k^*$ be the moment map for a representation $\grep : G \to \GL(\s H)$, and let $\omega \in i \mathfrak k^{*}$ have a rational coadjoint orbit. Then
    \begin{equation}
        \limsup_{n\to\infty} \Tr(\ratsub{\omega}{\grep^{\otimes n}} \unisym_n)^{\frac{1}{n}} = \sup_{\psi \in \proj \s H} \projcapacity_{\grep}^{\omega}(\psi),
    \end{equation}
    where $\unisym_n$ is the uniform state on the $S_n$-symmetric subspace $\mathrm{Sym}^{n}(\s H) \subseteq \s H^{\otimes n}$.
    Therefore, there exists $\psi \in \proj \s H$ such that $\momap_{\grep}(\psi) = \omega$ if and only if
    \begin{equation}
        \label{eq:asymptotic_momap_realizability}
        \limsup_{n\to\infty} \Tr(\ratsub{\omega}{\grep^{\otimes n}} \unisym_n)^{\frac{1}{n}} = 1.
    \end{equation}
\end{prop}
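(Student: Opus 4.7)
The plan is to assemble the claim from two ingredients already developed: the deformed strong duality formula of \cref{thm:deformed_strong_duality}, which computes the $\omega$-deformed projective capacity of a fixed pure state as a semiclassical limit of measurement likelihoods, and the symmetric inclusivity theorem \cref{thm:symmetric_inclusivity}, which converts a supremum over rays into an evaluation on the uniform symmetric state $\unisym_n$. Both ingredients apply because the rationality hypothesis on the coadjoint orbit of $\omega$ is exactly what allows us to identify $\ratsub{\omega}{\grep^{\otimes n}}$ with the deformed projector $\thwsub{h}{n\omega_+}{\grep^{\otimes n}}$ of \cref{defn:rational_projectors}.

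First I would establish the pointwise semiclassical identity. Fix $\psi \in \proj \s H$ with unit representative $v \in \wozero{\psi}$. Because $\ratsub{\omega}{\grep^{\otimes n}}$ is a projection,
\[
\Tr(\ratsub{\omega}{\grep^{\otimes n}} P_{\psi}^{\otimes n}) = \|\ratsub{\omega}{\grep^{\otimes n}} v^{\otimes n}\|^{2},
\]
so \cref{thm:deformed_strong_duality} yields for each fixed $\psi$ the existence of the limsup
\[
C(\psi) \coloneqq \limsup_{n \to \infty} \Tr(\ratsub{\omega}{\grep^{\otimes n}} P_{\psi}^{\otimes n})^{\frac{1}{n}} = \projcapacity_{\grep}^{\omega}(\psi)^{2},
\]
in the convention under which the capacity carries a square, as in \cref{cor:deformed_strong_duality_density_operators}; the presence or absence of this square does not affect the eventual realizability statement. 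With $C(\psi)$ existing pointwise and with $X_n = \ratsub{\omega}{\grep^{\otimes n}}$ positive semidefinite, \cref{thm:symmetric_inclusivity} applies at $p = 1$ and gives
\[
\sup_{\psi \in \proj \s H} \projcapacity_{\grep}^{\omega}(\psi)^{2} = \sup_{\psi \in \proj \s H} C(\psi) = \limsup_{n\to\infty} \Tr(\ratsub{\omega}{\grep^{\otimes n}} \unisym_{n})^{\frac{1}{n}},
\]
which is the first claimed equation (up to the aforementioned square-root convention).

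The realizability equivalence then follows by combining \cref{prop:deform_cap_properties}, which asserts $\projcapacity_{\grep}^{\omega}(\psi) = 1$ if and only if $\momap_{\grep}(\psi) = \omega$, with the trivial upper bound $\Tr(\ratsub{\omega}{\grep^{\otimes n}} \unisym_{n}) \leq 1$ coming from the projective nature of $\ratsub{\omega}{\grep^{\otimes n}}$ and the normalization of $\unisym_{n}$. The $n$-th root limsup therefore lies in $[0,1]$, and attains the value $1$ precisely when the supremum $\sup_{\psi} \projcapacity_{\grep}^{\omega}(\psi)$ reaches $1$, which by the Kempf--Ness characterization happens exactly when some $\psi$ realizes $\omega$ as its moment map value.

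The main obstacle I anticipate is purely a bookkeeping one: rationality only guarantees that $\ell \omega_{+}$ is dominant and analytically integral, so $\ratsub{\omega}{\grep^{\otimes n}} = 0$ whenever $\ell \nmid n$. One must verify that extracting the limsup along the sparse subsequence $n = k\ell$ coincides with the full limsup, which is automatic since omitted terms contribute zero and hence cannot raise the limit superior. Apart from this, one should double-check that the hypotheses of \cref{thm:symmetric_inclusivity} are satisfied in the (possibly degenerate) case $\projcapacity_{\grep}^{\omega}(\psi) = 0$ for all $\psi$, where both sides of the proposition vanish and the identity becomes trivial.
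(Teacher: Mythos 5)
Your argument assembles the same two ingredients as the paper's own preceding discussion — \cref{thm:deformed_strong_duality} applied pointwise to get $C(\psi) = \projcapacity_{\grep}^{\omega}(\psi)^{2}$, followed by \cref{thm:symmetric_inclusivity} at $p=1$ with $X_n = \ratsub{\omega}{\grep^{\otimes n}}$ — and is correct, including the handling of the sparse subsequence where $n\omega_+$ fails to be integral. Your observation that the right-hand side of the first display should carry a square of the projective capacity (matching \cref{cor:deformed_strong_duality_density_operators}, and harmless for the realizability biconditional since both quantities lie in $[0,1]$) accurately flags a minor inconsistency in the statement as printed.
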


In other words, the asymptotics of the quantity $\Tr(\ratsub{\omega}{\grep^{\otimes n}} \unisym_n)$ completely determines the realizability of $\omega \in i \mathfrak k^{*}$ as the moment map of some state $\psi \in \proj \s H$.
Using the symmetries of operator $\unisym_n$, one can simplify the result of \cref{prop:realizability_deformed_single}.
\begin{cor}
    \label{cor:only_fundamental}
    There exists $\psi \in \proj \s H$ such that $\momap_{\grep}(\psi) = \omega$ if and only if
    \begin{equation}
        \limsup_{n\to\infty} \Tr(\hwsub{n\omega_+}{\grep^{\otimes n}} \unisym_n)^{\frac{1}{n}} = 1,
    \end{equation}
    where $\hwsub{n\omega_+}{\grep^{\otimes n}}$ is the projection operator onto the subspace of highest weight vectors for $\grep^{\otimes n}$ with weight $n \omega_+$. In other words, the realizability of a rational $\omega$ depends only on where its coadjoint intersects the fundamental Weyl chamber, $\omega_+ \in C_+$.
\end{cor}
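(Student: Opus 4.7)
The plan is to derive this as a direct consequence of \cref{prop:realizability_deformed_single} by showing the two traces appearing in the two statements are actually equal for every $n$, namely
\begin{equation}
    \Tr(\ratsub{\omega}{\grep^{\otimes n}} \unisym_n) = \Tr(\hwsub{n\omega_+}{\grep^{\otimes n}} \unisym_n).
\end{equation}
Once this equality is established, taking $n$th roots and then $\limsup_{n\to\infty}$ transports the realizability criterion of \cref{prop:realizability_deformed_single} verbatim into the asserted form.

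To prove the equality of traces, I would unfold the definition of $\ratsub{\omega}{\grep^{\otimes n}}$ from \cref{eq:Q_defn}, namely $\ratsub{\omega}{\grep^{\otimes n}} = \grep^{\otimes n}(h)\hwsub{n\omega_+}{\grep^{\otimes n}} \grep^{\otimes n}(h^{-1})$, and then use cyclicity of the trace to write
\begin{equation}
    \Tr(\ratsub{\omega}{\grep^{\otimes n}} \unisym_n) = \Tr\bigl(\hwsub{n\omega_+}{\grep^{\otimes n}}\, \grep^{\otimes n}(h^{-1})\, \unisym_n\, \grep^{\otimes n}(h)\bigr).
\end{equation}
The key step is then to observe that $\unisym_n$ commutes with $\grep^{\otimes n}(h)$ for every $h \in K$. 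This holds because $\unisym_n$ is proportional to the projector $\projsym_n$ onto the symmetric subspace $\mathrm{Sym}^n(\s H) \subseteq \s H^{\otimes n}$, and the symmetric subspace is invariant under $\grep^{\otimes n}(h) = \grep(h)^{\otimes n}$ since this operator is the $n$th tensor power of a single linear map on $\s H$ and therefore commutes with every permutation operator $T_n(\pi)$ that defines $\projsym_n = \frac{1}{n!}\sum_{\pi} T_n(\pi)$. Equivalently, one can appeal to Schur's lemma applied to the irreducible representation of $\U(\s H)$ on $\mathrm{Sym}^n(\s H)$ together with the fact that $\grep(h) \in \U(\s H)$ since $K$ acts unitarily through $\grep$.

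Substituting this commutation into the displayed expression immediately yields $\Tr(\ratsub{\omega}{\grep^{\otimes n}} \unisym_n) = \Tr(\hwsub{n\omega_+}{\grep^{\otimes n}} \unisym_n)$, and the corollary follows. There is no real obstacle here: the entire content is the symmetry argument above, which is essentially a one-line consequence of the universal $\U(\s H)$-invariance of the uniform state on the symmetric subspace. The conceptual takeaway worth highlighting after the proof is that the coadjoint element $h$ rotating $\omega_+$ to $\omega$ becomes asymptotically invisible to the detector $\unisym_n$, which is why realizability of any $\omega$ in a coadjoint orbit reduces to realizability of its unique intersection $\omega_+$ with the closed fundamental Weyl chamber.
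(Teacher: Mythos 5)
Your proposal is correct and mirrors the paper's proof exactly: both reduce the claim to \cref{prop:realizability_deformed_single} by establishing $[\grep^{\otimes n}(h), \unisym_n]=0$ for $h\in K$, invoking cyclicity of the trace, and concluding $\Tr(\ratsub{\omega}{\grep^{\otimes n}} \unisym_n)=\Tr(\hwsub{n\omega_+}{\grep^{\otimes n}} \unisym_n)$. Your primary justification for the commutation (that $\grep(h)^{\otimes n}$ commutes with every $T_n(\pi)$ and hence with $\projsym_n$ directly, no unitarity needed) is slightly more elementary than the paper's appeal to Schur's lemma and irreducibility of $\mathrm{Sym}^n(\s H)$ under $\U(\s H)$, which you also offer as an alternative, but this is a cosmetic variation rather than a different route.
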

\begin{proof}
    Since symmetric subspace supports an irreducible representation of $\U(\s H)$, $\grep$ restricted to the compact subgroup $K$ is a unitary representation, we have for all $k \in K$,
    \begin{equation}
        \label{eq:commuting_rep_unisym}
        [\grep^{\otimes n}(k), \unisym_n] = 0,
    \end{equation}
    and therefore by \cref{eq:Q_defn} we conclude
    \begin{equation}
        \Tr(\ratsub{\omega}{\grep^{\otimes n}} \unisym_n) = \Tr(\grep^{\otimes n}(h)\hwsub{n\omega_+}{\grep^{\otimes n}} \grep^{\otimes n}(h^{-1}) \unisym_n) = \Tr(\hwsub{n\omega_+}{\grep^{\otimes n}} \unisym_n).
    \end{equation}
    The result then follows from \cref{eq:asymptotic_momap_realizability} from \cref{cor:only_fundamental}.
\end{proof}
Note that the statement of \cref{cor:only_fundamental} could have been anticipated in light of the fact that the moment map is $K$-equivariant (\cref{lem:momap_equivariance}).
Specifically, the realizability of $\omega$ implies the realizability of $\Ad^{*}(k)(\omega)$ because
\begin{equation}
    \momap_{\grep}(\psi) = \omega \implies \momap_{\grep}(k \cdot \psi) = \Ad^{*}(k)(\omega).
\end{equation}
An even more useful simplification arises from noticing the actions of $S_n$ and $\grep^{\otimes n}$ on $\s H^{\otimes n}$ commute with each other.
Therefore, $[\hwsub{n\omega_+}{\grep^{\otimes n}}, \projsym_n] = 0$, and therefore, we have
\begin{equation}
    \Tr(\hwsub{n\omega_+}{\grep^{\vee n}}) = \Tr(\projsym_n \hwsub{n\omega_+}{\grep^{\otimes n}}),
\end{equation}
where $\grep^{\vee n}$ is the $n$th symmetric power representation of $G$ on $\mathrm{Sym}^{n}(\s H)$ from \cref{exam:nth_tensor_power} and $\hwsub{n\omega_+}{\grep^{\vee n}}$ is the projection operator onto subspace of highest weight vectors with weight $n \omega_+$ which are also $S_n$-symmetric (otherwise known as \textit{covariants} of weight $n\omega_+$~\cite{walter2013entanglement}).
From this observation, one obtains a slightly more surprising result.
\begin{thm}
    \label{thm:moment_polytope}
    Let $\Delta_{\grep}$ denote the intersection of the image of the moment map with the positive Weyl chamber $i \mathfrak t_+^{*}$,
    \begin{equation}
        \Delta_{\grep} \coloneqq \{ \omega_+ \in i \mathfrak t_+^{*} \mid \exists \psi : \momap_{\grep}(\psi) = \omega_+ \}.
    \end{equation}
    Then $\Delta_{\grep}$ is a convex polytope, known as the \defnsty{moment polytope} for $\grep$.
\end{thm}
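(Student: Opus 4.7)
The plan is to first invoke \cref{cor:only_fundamental} to reduce realizability of a rational $\omega_+$ to a dimension condition on covariant subspaces of symmetric tensor powers, then exploit the multiplicative structure of highest weight vectors to obtain a semigroup, and finally to appeal to finite generation from classical invariant theory to conclude polyhedrality. The extension from rationals to all of $\Delta_{\grep}$ is then handled by a closure/density argument, where the main technical obstacle will lie.

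For rational $\omega_+ \in i\mathfrak t_+^{*}$ (meaning $\ell \omega_+ \in \Lambda_+$ for some $\ell \in \mathbb N$), the commutation $[\hwsub{n\omega_+}{\grep^{\otimes n}}, \projsym_n] = 0$ noted just before the theorem statement combined with \cref{cor:only_fundamental} shows that $\omega_+ \in \Delta_{\grep}$ if and only if
\begin{equation}
    \limsup_{n \to \infty} \left( \frac{\dim \hwsub{n\omega_+}{\grep^{\vee n}}}{\dim \mathrm{Sym}^n(\s H)} \right)^{1/n} = 1.
\end{equation}
Since both the numerator $\dim \hwsub{n\omega_+}{\grep^{\vee n}}$ and the denominator $\dim \mathrm{Sym}^n(\s H) = \binom{n+d-1}{n}$ grow at most polynomially in $n$, this $\limsup$ equals $1$ precisely when $\dim \hwsub{n\omega_+}{\grep^{\vee n}} \geq 1$ for some $n \in \ell \mathbb N$. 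In other words, a rational $\omega_+$ is realizable if and only if the $K$-irrep of highest weight $n\omega_+$ appears inside $\mathrm{Sym}^n(\s H)$ for some positive integer $n$.

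Next, I would define
\begin{equation}
    \mathcal S_{\grep} \coloneqq \{ (n, \lambda) \in \mathbb N \times \Lambda_+ : \dim \hwsub{\lambda}{\grep^{\vee n}} > 0 \},
\end{equation}
and establish that it is a subsemigroup of $\mathbb N \times \Lambda_+$ under componentwise addition. The symmetric algebra $\bigoplus_{n \geq 0} \mathrm{Sym}^n(\s H)$ is a graded commutative $\mathbb C$-algebra with $G$-equivariant multiplication, and as an integral domain the product of two nonzero elements is nonzero. Hence, given highest weight vectors $v_i \in \mathrm{Sym}^{n_i}(\s H)$ of weight $\lambda_i$ for $i \in \{1,2\}$, the product $v_1 \cdot v_2 \in \mathrm{Sym}^{n_1 + n_2}(\s H)$ is a nonzero highest weight vector of weight $\lambda_1 + \lambda_2$. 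This semigroup property immediately shows that the set of realizable rationals $\Delta_{\grep}^{\circ} \coloneqq \Delta_{\grep} \cap (\mathbb Q \cdot \Lambda_+)$ is closed under rational convex combinations: if $(n_i, n_i \omega_+^{(i)}) \in \mathcal S_{\grep}$ then $(n_1 + n_2, n_1 \omega_+^{(1)} + n_2 \omega_+^{(2)}) \in \mathcal S_{\grep}$, which yields $\tfrac{n_1}{n_1 + n_2} \omega_+^{(1)} + \tfrac{n_2}{n_1 + n_2} \omega_+^{(2)} \in \Delta_{\grep}^{\circ}$. For polyhedrality, I would invoke Hilbert's finiteness theorem: the $\mathbb C$-subalgebra of $\bigoplus_n \mathrm{Sym}^n(\s H)$ spanned by highest weight vectors — equivalently, the ring of invariants under a maximal unipotent subgroup $N \subset G$ — is finitely generated. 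This promotes $\mathcal S_{\grep}$ to a finitely generated semigroup with some generators $(n_1, \lambda_1), \ldots, (n_r, \lambda_r)$, and a standard computation identifies $\{ \lambda/n : (n, \lambda) \in \mathcal S_{\grep} \}$ with the convex hull of $\{\lambda_i/n_i\}_{i=1}^{r}$, which is a rational convex polytope $P \subseteq i \mathfrak t_+^{*}$.

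It remains to show that $\Delta_{\grep} = P$. One inclusion is straightforward: because $\momap_{\grep}$ is continuous on the compact space $\proj \s H$, the set $\Delta_{\grep}$ is closed, and therefore $P = \overline{\Delta_{\grep}^{\circ}} \subseteq \Delta_{\grep}$. The main obstacle is the reverse inclusion: given an arbitrary $\omega_+ \in \Delta_{\grep}$, one must exhibit a sequence of realizable rationals converging to $\omega_+$. The natural strategy is to approximate $\omega_+$ by rationals $\omega_+^{(k)}$ and argue via continuity of the deformed capacity $\omega \mapsto \projcapacity_{\grep}^{\omega}(\psi)$ — established through \cref{thm:deformed_strong_duality} — together with \cref{prop:deform_cap_properties} to force $\omega_+^{(k)} \in \Delta_{\grep}^{\circ}$ for sufficiently large $k$. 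The technical subtlety here is the continuity issue for the rate function flagged in \cref{rem:momap_ldp_lower_bound}; if one wishes to sidestep this, the cleanest alternative is to appeal to the classical convexity results of Atiyah, Guillemin-Sternberg, Mumford, and Kirwan, which guarantee directly that $\Delta_{\grep}$ coincides with the closure of its rational interior.
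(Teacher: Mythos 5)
Your proposal matches the paper's proof in all of its essential moves: you pass through \cref{cor:only_fundamental} to reduce realizability of a rational $\omega_+$ to non-vanishing of the covariant dimension $\dim \hwsub{n\omega_+}{\grep^{\vee n}}$, you exploit the graded-algebra multiplication of $N$-invariants (products of highest weight vectors in $\bigoplus_n \mathrm{Sym}^n(\s H)$ are non-zero highest weight vectors with added weights) to obtain closure under convex combinations, and you invoke finite generation of the covariant algebra to conclude polyhedrality. The paper does precisely this, citing finite generation to references; your reformulation in terms of a subsemigroup $\mathcal S_\grep$ of $\mathbb N \times \Lambda_+$ and Hilbert's finiteness theorem for $N$-invariants is the same fact in slightly more structured clothing. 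The one place where you go beyond the paper is that you explicitly flag the step from the realizable \emph{rational} points to all of $\Delta_\grep$: the paper's argument proves convexity only among rational $\alpha_+, \beta_+$ and is silent on why no additional irrational realizable points can lie outside the rational-generated polytope $P$, whereas you correctly identify the reverse inclusion $\Delta_\grep \subseteq P$ as the remaining technical obstacle and indicate two reasonable resolutions (a continuity argument for the deformed capacity, subject to the caveat in \cref{rem:momap_ldp_lower_bound}, or a direct appeal to the Atiyah/Guillemin--Sternberg/Kirwan/Mumford convexity theorems). On that point your write-up is more careful than the one in the paper; neither text actually closes that gap, but you at least name it.

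One minor remark on ordering within your argument: the sentence asserting that the $\limsup$ equals $1$ ``precisely when $\dim \hwsub{n\omega_+}{\grep^{\vee n}} \geq 1$ for some $n \in \ell\mathbb N$'' secretly relies on the semigroup property (a single non-vanishing covariant replicates along multiples of $n$, giving a subsequence where the normalized ratio has $n$th root tending to $1$), which you only establish in the following paragraph. It would be cleaner to introduce the semigroup structure first, or to explicitly note that the forward direction of this equivalence is deferred.
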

\begin{proof}
    To prove $\Delta_{\grep}$ is a convex set, let $\alpha_+, \beta_+ \in \Delta_{\grep}$ be rational such that there exists $\psi, \psi' \in \proj \s H$ satisfying
    \begin{equation}
        \label{eq:alpha_beta_realized}
        \momap_{\grep}(\psi) = \alpha_+, \qquad \momap_{\grep}(\psi') = \beta_+.
    \end{equation}
    Then for each $a, b \in \mathbb N$ consider the projection operators $\hwsub{a \alpha_+}{\grep^{\vee a}}$ and $\hwsub{b \beta_+}{\grep^{\vee b}}$.
    From \cref{cor:only_fundamental}, \cref{eq:alpha_beta_realized} and \cref{eq:dimsym_irrelavent}, we conclude
    \begin{align}
        \label{eq:alpha_beta_asym}
        \begin{split}
            \limsup_{a\to\infty} \Tr(\hwsub{a \alpha_+}{\grep^{\vee a}})^{\frac{1}{a}} &= 1, \\
            \limsup_{b\to\infty} \Tr(\hwsub{b \beta_+}{\grep^{\vee b}})^{\frac{1}{b}} &= 1.
        \end{split}
    \end{align}
    More importantly, since $\hwsub{a \alpha_+}{\grep^{\vee a}}$ and $\hwsub{b \beta_+}{\grep^{\vee b}}$ are projection operators, their trace must be a non-negative integer, and therefore, \cref{eq:alpha_beta_asym} implies that there exists arbitrarily large $a, b \in \mathbb N$ such that $\Tr(\hwsub{a \alpha_+}{\grep^{\vee a}}) \geq 1$ and $\Tr(\hwsub{b \alpha_+}{\grep^{\vee b}}) \geq 1$.
    Given a covariant of degree $a$ with weight $a \alpha_+$ and a covariant of degree $b$ with weight $b \beta_+$, the symmetrization of their tensor product is necessarily a non-zero covariant of degree $a+b$ with weight $a \alpha_+ + b \beta_+$ and therefore,
    there exists arbitrarily large $a, b \in \mathbb N$ such that
    \begin{equation}
        \label{eq:covariants_non_empty}
        \Tr(\hwsub{a \alpha_+ + b \beta_+}{\grep^{\vee (a+b)}}) \geq 1,
    \end{equation}
    and similarly, for all $n \in \mathbb N$,
    \begin{equation}
        \label{eq:covariants_non_empty_powers}
        \Tr(\hwsub{n a \alpha_+ + n b \beta_+}{\grep^{\vee n(a+b)}}) \geq 1.
    \end{equation}
    Let $\gamma_+$ be the convex combination of $\alpha_+$ and $\beta_+$,
    \begin{equation}
        \gamma_+ = \frac{a \alpha_+ + b \beta_+}{a+b}.
    \end{equation}
    Then \cref{eq:covariants_non_empty_powers} implies
    \begin{align}
        \limsup_{n\to\infty} \Tr(\hwsub{n\gamma_+}{\grep^{\otimes n}})^{\frac{1}{n}} = 1,
    \end{align}
    which means, by \cref{cor:only_fundamental}, that $\gamma_+ = \momap_{\grep}(\psi'')$ is the moment map of some $\psi'' \in \proj \s H$ and thus $\gamma_+ \in \Delta_{\grep}$.
    To further prove that $\Delta_{\grep}$ is a polytope requires a result which states the algebra of symmetric highest weight vectors, viewed as covariant polynomial functions in $\s H$, is finitely generated~\cite{walter2014multipartite,brion1987image}.
\end{proof}

\begin{rem}
    The moment polytope for $\grep$ provides a solution to the realizability problem for a given moment map.
    In particular, $\omega \in i \mathfrak k^{*}$ is realizable as the moment map of some ray $\psi \in \proj \s H$ if and only if the coadjoint orbit of $\omega$, denoted by $\Ad^{*}(K)(\omega) \subseteq i \mathfrak k^{*}$, which intersects the positive Weyl chamber uniquely at $\omega_+ \in i \mathfrak t_+^*$, is such that $\omega_+$ lies inside the moment polytope for $\grep$:
    \begin{equation}
        \exists \psi \in \proj \s H : \momap_{\grep}(\psi) = \omega \Longleftrightarrow \Ad^{*}(K)(\omega) \cap \Delta_{\grep} \neq \emptyset \Longleftrightarrow \omega_+ \in \Delta_{\grep}.
    \end{equation}
\end{rem}

\subsection{Beyond moment polytopes}
\label{sec:beyond_polytopes}

The purpose of this section is to consider realizability problems which go beyond the realizability problem associated to a single moment map.
Our approach is to consider the generalization of \cref{prop:realizability_deformed_single} to the case where one is interested in characterizing the image of a moment map $\momap_{\grep_{(p)}} : \proj (\s H^{\otimes p}) \to i \mathfrak k^{*}$ with respect to a representations of the form $\grep_{(p)} : G \to \GL(\s H^{\otimes p})$.

\begin{thm}
    \label{thm:poly_realizability}
    Let $p \in \mathbb N$ be a positive integer and let $\grep_{(p)} : G \to \GL(\s H^{\otimes p})$ be a representation of a complex reductive group $G$ on $\s H^{\otimes p}$.
    Let $\omega = \Ad^{*}(h)(\omega_+) \in i \mathfrak k^{*}$ have rational coadjoint orbit and let $\ratsub{\omega}{\grep^{\otimes n}}$ be the projection operator onto the subspace $(\s H^{\otimes p})^{\otimes n}$ defined by
    \begin{equation}
        \ratsub{\omega}{\grep^{\otimes n}} = \grep_{(p)}^{\otimes n}(h) \hwsub{n\omega_+}{\grep_{(p)}^{\otimes n}}\grep_{(p)}^{\otimes n}(h^{-1})
    \end{equation}
    and where $\hwsub{n\omega_+}{\grep_{(p)}^{\otimes n}}$ is the projection operator onto the highest weight subspace of $(\s H^{\otimes p})^{\otimes n}$ with weight $n \omega_+$.
    Then there exists a ray $\psi \in \proj \s H$ such that
    \begin{equation}
        \momap_{\grep_{(p)}}(\psi^{\otimes p}) = \omega \in i \mathfrak k^{*},
    \end{equation}
    if and only if
    \begin{equation}
        \limsup_{n\to\infty} \Tr(\ratsub{\omega}{\grep^{\otimes n}} \unisym_{pn})^{\frac{1}{n}}= 1.
    \end{equation}
\end{thm}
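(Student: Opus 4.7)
The plan is to combine three tools already available: the equivalence between the moment map equaling $\omega$ and the $\omega$-capacity attaining its maximum (\cref{prop:deform_cap_properties}), the deformed strong duality for density operators (\cref{cor:deformed_strong_duality_density_operators}), and the symmetric-inclusivity theorem (\cref{thm:symmetric_inclusivity}). The proof then becomes almost a matter of assembly: the first tool converts the realizability statement into a statement about $\omega$-capacities, the second converts each individual $\omega$-capacity into an asymptotic expression involving $\ratsub{\omega}{\grep_{(p)}^{\otimes n}}$ evaluated on $\psi^{\otimes pn}$, and the third performs the supremum over $\psi \in \proj \s H$ by replacing $P_{\psi}^{\otimes pn}$ with the maximally mixed state on $\mathrm{Sym}^{pn}(\s H)$.

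More concretely, I would first observe that $\momap_{\grep_{(p)}}(\psi^{\otimes p}) = \omega$ is equivalent, via \cref{prop:deform_cap_properties}, to $\projcapacity_{\grep_{(p)}}^{\omega}(\psi^{\otimes p}) = 1$, where the value $1$ is the trivial upper bound (taking $g = e$ together with $\chi_{\omega}(e)=1$, since $h \in K$ has trivial $A$-factor in its Iwasawa decomposition). Next I would apply \cref{cor:deformed_strong_duality_density_operators} to the pure state $\rho = P_{\psi^{\otimes p}} = P_{\psi}^{\otimes p} \in \state(\s H^{\otimes p})$, with respect to the representation $\grep_{(p)}$, to obtain
\begin{equation}
\projcapacity_{\grep_{(p)}}^{\omega}(\psi^{\otimes p})^{2} = \limsup_{n\to\infty} \Tr\!\left(\ratsub{\omega}{\grep_{(p)}^{\otimes n}}\, P_{\psi}^{\otimes pn}\right)^{\!\frac{1}{n}}.
\end{equation}
This identifies the left-hand side with the quantity $C(\psi)$ appearing in \cref{thm:symmetric_inclusivity} when specialized to $X_{n} \coloneqq \ratsub{\omega}{\grep_{(p)}^{\otimes n}}$, which is a sequence of positive semidefinite operators on $\s H^{\otimes pn}$ as required.

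Applying \cref{thm:symmetric_inclusivity} then yields
\begin{equation}
\sup_{\psi \in \proj \s H} \projcapacity_{\grep_{(p)}}^{\omega}(\psi^{\otimes p})^{2} = \limsup_{n\to\infty} \Tr\!\left(\ratsub{\omega}{\grep_{(p)}^{\otimes n}}\, \unisym_{pn}\right)^{\!\frac{1}{n}}.
\end{equation}
To close the argument, I would note that $\projcapacity_{\grep_{(p)}}^{\omega}(\psi^{\otimes p}) \leq 1$ for every $\psi$, so the supremum on the left equals $1$ if and only if it is \emph{attained} by some $\psi$. Attainment follows because $\proj \s H$ is compact and $\psi \mapsto \projcapacity_{\grep_{(p)}}^{\omega}(\psi^{\otimes p})$ is upper semicontinuous (being an infimum of continuous functions $g \mapsto \chi_{\omega}(g)\|\grep_{(p)}(g)\psi^{\otimes p}\|$ over $g \in G$). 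Combining this with the equivalence from the first step gives the claimed biconditional.

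The main obstacle, and the only place where any real care is needed, is justifying that the supremum of $\omega$-capacities over $\psi$ is attained rather than merely approached; without this, the asymptotic limit being $1$ would only give a sequence $\psi_{k}$ of approximate realizers rather than a genuine $\psi$. The upper semicontinuity argument sketched above handles this, and I do not expect any further subtleties --- the remainder of the proof is a direct chaining of the cited results.
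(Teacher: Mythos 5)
Your proof is correct and follows essentially the same route as the paper: convert realizability into maximality of the $\omega$-capacity via \cref{prop:deform_cap_properties}, apply deformed strong duality (you use the density-operator form \cref{cor:deformed_strong_duality_density_operators}, the paper uses \cref{thm:deformed_strong_duality} directly on $\Psi = \psi^{\otimes p}$ — these are interchangeable for a pure state), then take the supremum over $\psi$ with \cref{thm:symmetric_inclusivity}. The one place you go beyond the paper is worth noting: the paper's proof simply asserts that $\sup_{\psi}\projcapacity_{\grep_{(p)}}^{\omega}(\psi^{\otimes p})=1$ is equivalent to realizability, silently treating the supremum as attained, whereas you explicitly justify attainment by observing that $\psi \mapsto \projcapacity_{\grep_{(p)}}^{\omega}(\psi^{\otimes p})$ is an infimum over $g\in G$ of continuous functions of $\psi$, hence upper semicontinuous, hence attains its supremum on the compact space $\proj \s H$. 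That gap-filling step is genuinely needed — without it, $\sup=1$ would only yield an approximating sequence of rays, not a realizer — so your write-up is the more complete of the two. (You also correctly carry the square $\projcapacity^{2}$ through the strong-duality step, which the paper's displayed equation drops, though this does not affect the conclusion since $\projcapacity=1\Leftrightarrow\projcapacity^{2}=1$.)
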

\begin{proof}
    From \cref{thm:deformed_strong_duality}, we have for all $\Psi \in \proj \s H^{\otimes p}$,
    \begin{equation}
        \projcapacity_{\grep_{(p)}}^{\omega}(\Psi) = \limsup_{n\to\infty} \Tr(\ratsub{\omega}{\grep^{\otimes n}} \Psi^{\otimes n})^{\frac{1}{n}}.
    \end{equation}
    Which implies for all $\psi \in \proj \s H$, 
    \begin{equation}
        \projcapacity_{\grep_{(p)}}^{\omega}(\psi^{\otimes p}) = \limsup_{n\to\infty} \Tr(\ratsub{\omega}{\grep^{\otimes n}} \psi^{\otimes p})^{\frac{1}{n}}.
    \end{equation}
    Now applying \cref{thm:symmetric_inclusivity} to the case where $X_n = \ratsub{\omega}{\grep^{\otimes n}}$ yields
    \begin{equation}
        \sup_{\psi\in\proj \s H} \projcapacity_{\grep_{(p)}}^{\omega}(\psi^{\otimes p}) = \limsup_{n\to\infty} \Tr(\ratsub{\omega}{\grep^{\otimes n}} \unisym_{pn})^{\frac{1}{n}}.
    \end{equation}
    But $\sup_{\psi\in\proj \s H} \projcapacity_{\grep_{(p)}}^{\omega}(\psi^{\otimes p}) = 1$ if and only if there exists a $\psi \in \proj \s H$ such that $\momap_{\grep_{(p)}}(\psi^{\otimes p}) = \omega$, which concludes the proof.
\end{proof}
\begin{rem}
    In the special case where $p = 1$, \cref{thm:poly_realizability} reduces to \cref{prop:realizability_deformed_single}.
    Unlike \cref{prop:realizability_deformed_single}, however, the $S_{np}$-symmetric subspace is not necessarily invariant under the action of $\grep_{(p)}^{\otimes n}$ and therefore we generically have
    \begin{equation}
        [\grep_{(p)}^{\otimes n}(k), \unisym_{np}] \neq 0.
    \end{equation}
    Because of this lack of symmetry, we do not have a generalization of \cref{cor:only_fundamental} for $p > 1$ and therefore no generalization of \cref{thm:moment_polytope}.
\end{rem}

Our main application of \cref{thm:poly_realizability} is to consider the case where $\psi \in \proj \s H$
\begin{cor}
    \label{thm:birealizable_occasional}
    Consider two complex reductive groups, $G_1$ and $G_2$, with representations acting on the same Hilbert space,
    \begin{equation}
        \grep_1 : G_1 \to \GL(\s H), \qquad \grep_2 : G_2 \to \GL(\s H),
    \end{equation}
    and assume $\omega_1 \in i \mathfrak k_1^*$ and $\omega_2 \in i \mathfrak k_2^*$ have rational coadjoint orbits.
    Then there exists a ray $\psi \in \proj \s H$ satisfying
    \begin{equation}
        \momap_{\grep_1}(\psi) = \omega_1, \qquad \momap_{\grep_2}(\psi) = \omega_2,
    \end{equation}
    if and only if
    \begin{equation}
        \limsup_{n\to\infty} \Tr( \unisym_{2n} ( \ratsub{\omega_1}{\grep_1^{\otimes n}} \otimes \ratsub{\omega_2}{\grep_2^{\otimes n}}))^{\frac{1}{2n}}= 1.
    \end{equation}
\end{cor}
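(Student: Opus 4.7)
The plan is to reduce the statement to Theorem~\ref{thm:poly_realizability} with $p=2$, applied to the external tensor product representation of the composite group $G = G_1 \times G_2$ acting on $\s H \otimes \s H$. Concretely, let $\grep_{(2)} \coloneqq \grep_1 \boxtimes \grep_2 : G_1 \times G_2 \to \GL(\s H \otimes \s H)$ as in Definition~\ref{defn:external_tensor_product_rep}. The maximal compact subgroup of $G_1 \times G_2$ is $K_1 \times K_2$, with $i(\mathfrak k_1 \oplus \mathfrak k_2)^* \cong i\mathfrak k_1^* \oplus i\mathfrak k_2^*$, and the element $\omega_1 \oplus \omega_2$ has a rational coadjoint orbit whenever both $\omega_1$ and $\omega_2$ do: take a common denominator $\ell = \mathrm{lcm}(\ell_1,\ell_2)$ of the rationality witnesses for each factor so that $\ell(\omega_{1,+} \oplus \omega_{2,+})$ is dominant and analytically integral for $K_1 \times K_2$.

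The first step is to use the additivity of moment maps across external tensor products, Lemma~\ref{lem:moment_map_ext}, to rephrase the joint realizability condition as a single-moment-map condition for $\grep_{(2)}$. Specializing $\momap_{\grep_1 \boxtimes \grep_2}([v_1 \otimes v_2]) = \momap_{\grep_1}([v_1]) \oplus \momap_{\grep_2}([v_2])$ to product states $\psi^{\otimes 2}$ yields that $\momap_{\grep_1}(\psi) = \omega_1$ and $\momap_{\grep_2}(\psi) = \omega_2$ hold simultaneously if and only if $\momap_{\grep_{(2)}}(\psi^{\otimes 2}) = \omega_1 \oplus \omega_2$.

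The second step is to invoke Theorem~\ref{thm:poly_realizability} with $p=2$, with the data above, which converts the existence of such a $\psi$ into the asymptotic condition
\begin{equation}
\limsup_{n\to\infty} \Tr\bigl(\ratsub{\omega_1 \oplus \omega_2}{\grep_{(2)}^{\otimes n}}\, \unisym_{2n}\bigr)^{1/n} = 1.
\end{equation}
The third step is to identify the deformed highest-weight projector for $\grep_{(2)}^{\otimes n}$ with the tensor product appearing in the corollary. Under the canonical shuffle isomorphism $(\s H \otimes \s H)^{\otimes n} \cong \s H^{\otimes n} \otimes \s H^{\otimes n}$, the representation $(\grep_1 \boxtimes \grep_2)^{\otimes n}$ is intertwined with $\grep_1^{\otimes n} \boxtimes \grep_2^{\otimes n}$. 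Since $G_1$ and $G_2$ act on disjoint tensor factors of $\s H^{\otimes n} \otimes \s H^{\otimes n}$, the joint weight spaces factorize: the highest weight subspace for $\grep_1^{\otimes n} \boxtimes \grep_2^{\otimes n}$ with weight $n\omega_{1,+} \oplus n\omega_{2,+}$ decomposes as $\hwsub{n\omega_{1,+}}{\grep_1^{\otimes n}} \otimes \hwsub{n\omega_{2,+}}{\grep_2^{\otimes n}}$, and conjugation by $(h_1, h_2)$ acts factorwise, giving $\ratsub{\omega_1 \oplus \omega_2}{\grep_{(2)}^{\otimes n}} = \ratsub{\omega_1}{\grep_1^{\otimes n}} \otimes \ratsub{\omega_2}{\grep_2^{\otimes n}}$ up to the shuffle.

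Finally, since $\unisym_{2n}$ projects onto the fully $S_{2n}$-symmetric subspace of $\s H^{\otimes 2n}$ it is invariant under any permutation of its $2n$ tensor factors, so the trace is unaffected by the shuffle identification; and since the condition $\limsup a_n^{1/n} = 1$ is equivalent to $\limsup a_n^{1/(2n)} = 1$ for non-negative $a_n$, the exponent can be rewritten as $1/(2n)$ to match the statement. The main bookkeeping obstacle is tracking the shuffle isomorphism in step three and verifying that it genuinely intertwines the two descriptions of the representation; beyond this, the corollary is a direct specialization of Theorem~\ref{thm:poly_realizability} combined with the additivity of moment maps over external products.
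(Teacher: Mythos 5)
Your proposal follows the paper's own proof exactly: reduce to \cref{thm:poly_realizability} with $p=2$ via the external tensor product $\grep_1 \boxtimes \grep_2$ of the product group $G_1 \times G_2$, use \cref{lem:moment_map_ext} for additivity of moment maps, and observe that $\ratsub{\omega_1\oplus\omega_2}{(\grep_1\boxtimes\grep_2)^{\otimes n}}$ factorizes as $\ratsub{\omega_1}{\grep_1^{\otimes n}} \otimes \ratsub{\omega_2}{\grep_2^{\otimes n}}$. You fill in a few details the paper leaves implicit (rationality of $\omega_1 \oplus \omega_2$, the shuffle isomorphism, and the $1/n$ versus $1/(2n)$ exponent), all of which check out.
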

\begin{proof}
    The proof follows from \cref{thm:poly_realizability} by setting $p = 2$, $G \coloneqq G_1 \times G_2$ and $\grep_{(2)} = \grep_1 \boxtimes \grep_2$, where
    \begin{equation}
        (\grep_1 \boxtimes \grep_2)(g_1, g_2) = \grep_1(g_1) \boxtimes \grep_2(g_2).
    \end{equation}
    Then by \cref{lem:moment_map_ext}, we have
    \begin{equation}
        \momap_{\grep_1 \boxtimes \grep_2}(\psi^{\otimes 2}) = \momap_{\grep_1}(\psi) \oplus \momap_{\grep_2}(\psi)
    \end{equation}
    and moreover for any $n \in \mathbb N$,
    \begin{equation}
        \ratsub{\omega_1\oplus \omega_2}{(\grep_1\boxtimes\grep_2)^{\otimes n}} = \ratsub{\omega_1}{\grep_1^{\otimes n}} \otimes \ratsub{\omega_2}{\grep_2^{\otimes n}}.
    \end{equation}
\end{proof}
The proof technique from \cref{cor:joint_realizable_occasional} trivially generalizes to all $p \in \mathbb N$.
\begin{cor}
    \label{cor:joint_realizable_occasional}
    Let $[p] = \{1, \ldots, p\}$ be an index set and for each index $j \in [p]$, let $\grep_j : G_j \to \GL(\s H)$ be a representation of a complex reductive group $G_j$, let $\momap_{\grep_j} : \proj \s H \to (i \mathfrak k_j)^{*}$ be the moment map for $\grep_j$, and let $\omega_j \in i \mathfrak k_j^{*}$ have a rational coadjoint orbit.
    Then there exists a ray $\psi \in \proj \s H$, simultaneously satisfying
    \begin{equation}
        \forall j \in [p] : \momap_{\grep_j}(\psi) = \omega_j,
    \end{equation}
    if and only if
    \begin{equation}
        \limsup_{n\to\infty} \Tr(\unisym_{np} ( \ratsub{\omega_1}{\grep_{1}^{\otimes n}} \otimes \cdots \otimes \ratsub{\omega_p}{\grep_{p}^{\otimes n}}))^{\frac{1}{np}}= 1.
    \end{equation}
\end{cor}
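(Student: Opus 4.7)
The plan is to apply Theorem \ref{thm:poly_realizability} with $p$ playing the same role, by assembling the individual groups and representations into a single package. Specifically, I would set $G \coloneqq G_1 \times \cdots \times G_p$ with Lie algebra $\mathfrak g = \mathfrak g_1 \oplus \cdots \oplus \mathfrak g_p$ and maximal compact subgroup $K = K_1 \times \cdots \times K_p$, and take $\grep_{(p)} \coloneqq \grep_1 \boxtimes \cdots \boxtimes \grep_p : G \to \GL(\s H^{\otimes p})$ to be the external tensor product of the individual representations, so that $(\grep_1 \boxtimes \cdots \boxtimes \grep_p)(g_1, \ldots, g_p) = \grep_1(g_1) \otimes \cdots \otimes \grep_p(g_p)$. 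The candidate element of $i\mathfrak k^{*} = \bigoplus_j i\mathfrak k_j^{*}$ is then $\omega \coloneqq \omega_1 \oplus \cdots \oplus \omega_p$.

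Next, I would verify the three ingredients needed to invoke Theorem \ref{thm:poly_realizability}. First, by iterating Lemma \ref{lem:moment_map_ext} one obtains that the moment map of the external tensor product evaluated on a product ray decomposes additively, so that for any $\psi \in \proj \s H$,
\begin{equation}
\momap_{\grep_{(p)}}(\psi^{\otimes p}) = \momap_{\grep_1}(\psi) \oplus \cdots \oplus \momap_{\grep_p}(\psi),
\end{equation}
and hence $\momap_{\grep_{(p)}}(\psi^{\otimes p}) = \omega$ if and only if $\momap_{\grep_j}(\psi) = \omega_j$ for every $j \in [p]$. Second, since each $\omega_j$ has a rational coadjoint orbit with $\ell_j \omega_j = \Ad^{*}(h_j)(\lambda_j)$ for some positive integer $\ell_j$, group element $h_j \in K_j$, and dominant analytically integral $\lambda_j \in \Lambda_{+,j}$, one can choose a common multiple $\ell$ and verify that $\omega$ itself has a rational coadjoint orbit in $i\mathfrak k^{*}$ with representative $h = (h_1, \ldots, h_p) \in K$ and dominant analytically integral element $(\ell/\ell_1)\lambda_1 \oplus \cdots \oplus (\ell/\ell_p)\lambda_p$ in the corresponding positive Weyl chamber of $K$.

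Third, and this is the main computational step, I would show that the projector $\ratsub{\omega}{\grep_{(p)}^{\otimes n}}$ built in \cref{defn:rational_projectors} factorizes across the $p$ tensor slots as
\begin{equation}
\ratsub{\omega_1 \oplus \cdots \oplus \omega_p}{(\grep_1 \boxtimes \cdots \boxtimes \grep_p)^{\otimes n}} = \ratsub{\omega_1}{\grep_1^{\otimes n}} \otimes \cdots \otimes \ratsub{\omega_p}{\grep_p^{\otimes n}}.
\end{equation}
This follows because the highest weight subspace for a product group with respect to a product Borel subgroup is exactly the tensor product of the individual highest weight subspaces (the defining annihilation condition \cref{eq:pos_roots_annhilate} decouples across the direct sum of Lie algebras), and the rotation by $h = (h_1, \ldots, h_p)$ acts factor-wise. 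With this factorization in hand, Theorem \ref{thm:poly_realizability} directly yields that the joint realizability of $(\omega_1, \ldots, \omega_p)$ is equivalent to
\begin{equation}
\limsup_{n\to\infty} \Tr\bigl(\unisym_{np} ( \ratsub{\omega_1}{\grep_{1}^{\otimes n}} \otimes \cdots \otimes \ratsub{\omega_p}{\grep_{p}^{\otimes n}})\bigr)^{\frac{1}{np}} = 1,
\end{equation}
as claimed.

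The main obstacle I anticipate is the bookkeeping around the common denominator $\ell$ needed to make all $\ell_j \omega_j$ simultaneously analytically integral, and the careful identification of the fundamental Weyl chamber of the product group $K$ with the product of the individual fundamental Weyl chambers so that the notation $\omega_+$ used in \cref{defn:rational_projectors} lines up with $\omega_{1,+} \oplus \cdots \oplus \omega_{p,+}$. Everything else is a routine generalization of the $p = 2$ argument in \cref{thm:birealizable_occasional}.
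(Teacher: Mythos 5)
Your proposal is correct and follows the same route as the paper. The paper literally disposes of this corollary with the remark that the $p=2$ argument (in \cref{thm:birealizable_occasional}: take $G = G_1 \times G_2$, $\grep_{(2)} = \grep_1 \boxtimes \grep_2$, invoke \cref{lem:moment_map_ext} and the factorization of $\ratsub{\omega_1\oplus\omega_2}{(\grep_1\boxtimes\grep_2)^{\otimes n}}$, then apply \cref{thm:poly_realizability}) ``trivially generalizes,'' which is exactly what you carry out. Your second step---passing to a common multiple $\ell = \mathrm{lcm}(\ell_1,\ldots,\ell_p)$ so that $\omega_1\oplus\cdots\oplus\omega_p$ itself has a rational coadjoint orbit with representative $h=(h_1,\ldots,h_p)$ and dominant integral element $(\ell/\ell_1)\lambda_1\oplus\cdots\oplus(\ell/\ell_p)\lambda_p$---is a hypothesis of \cref{thm:poly_realizability} that the paper leaves implicit, so it is a welcome bit of extra care rather than a deviation.
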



\subsection{The biriffle formula}
\label{sec:biriffle}

In \cref{sec:beyond_polytopes}, it was shown that the joint realizability of a given collection of properties of quantum states is entirely characterized by the asymptotics of quantities of the form 
\begin{equation}
    \label{eq:tuple_estimation_general_form}
    p_n(X_1, \ldots, X_k) \coloneqq \Tr(\unisym_{nk}(X_1 \otimes X_2 \otimes \cdots \otimes X_k))
\end{equation}
where $(X_1, \ldots, X_k)$ is a $k$-tuple of positive semidefinite operators in $\End(\s H^{\otimes n})$ and $\unisym_{nk}$ is the density operator on $\End(\s H^{\otimes nk})$ associated to the maximally mixed state over the symmetric subspace $\mathrm{Sym}^{nk}(\s H) \subseteq \s H^{\otimes nk}$.
The purpose of this section is to describe a strategy for calculating the quantity appearing in \cref{eq:tuple_estimation_general_form} by relating the state $\unisym_{nk}$ to a sum over the symmetric group $S_{nk}$, as outlined in \cref{sec:inclusivity_of_de_finetti}.

Our primary observation is to note that the density operator $\unisym_{nk} \in \End(\s H^{\otimes nk})$, which has the form
\begin{equation}
    \unisym_{nk} = \frac{P_{\mathrm{Sym}^{nk}(\s H)}}{\dim(\mathrm{Sym}^{nk}(\s H))},
\end{equation}
is fixed by the action $T_{nk} : S_{nk} \to \GL(\s H^{\otimes nk})$ of the symmetric group $S_{nk}$ on $\s H^{\otimes nk}$ meaning $T_{nk}(\pi) \unisym_{nk} =\unisym_{nk} = \unisym_{nk} T_{nk}(\pi)$ for all $\pi \in S_{nk}$.

Therefore if one replaces each argument operator $X_i \in \End(\s H^{\otimes n})$ with the $S_n$-fixed operator $\tilde X_i \coloneqq P_{\mathrm{Sym}^{n}(\s H)}X_i P_{\mathrm{Sym}^{n}(\s H)}$, the value of $p_n(X_1, \ldots, X_k)$ is unchanged.
Consequently, it will henceforth be assumed that each $X_i$ in $p_n(X_1, \ldots, X_k)$ is fixed by $S_n$ and therefore $p_n(X_1, \ldots, X_k)$ can be expressed as a sum over the double cosets $S_{n}^{\times k} \backslash S_{nk} / S_n^{\times k}$ where $S_n^{\times k}$ is viewed as a subset of $S_{nk}$:
\begin{equation}
    S_n^{\times k} = S_n \times \stackrel{k}{\cdots} \times S_n \subseteq S_{nk}.
\end{equation}
While there are numerous injective group homomorphisms from $S_n^{\times k}$ to $S_{nk}$, corresponding to the ways of partitioning the indices $\{1, \ldots, nk\}$ into $k$ parts of equal size $n$, for the sake of concreteness, let the $j$-th component permutation $\pi_j \in S_n$ in $(\pi_1, \ldots \pi_k) \in S_n^{\times k}$ permute the $j$-th contiguous block of indices, i.e. $\{ (j-1)n+1, \ldots, jn+1 \}$. 
The double cosets of $S_n^{\times k} \backslash S_{nk} / S_n^{\times k}$ then correspond to the equivalence classes generated by the equivalence relation on $S_{nk}$ defined by
\begin{equation}
    \pi \sim \pi' \quad \Longleftrightarrow \quad \exists \sigma, \sigma' \in S_n^{\times k} \subseteq S_{nk} : \pi = \sigma \circ \pi \circ \sigma'.
\end{equation}
It can be shown that the double cosets $S_{n}^{\times k} \backslash S_{nk} / S_{n}^{\times k}$ are in bijection with the set of $k \times k$ non-negative integer matrices,
\begin{equation}
    \label{eq:matrix_form_biriffle}
    \ell =
    \begin{pmatrix}
        \ell_{11} & \ell_{12} & \cdots & \ell_{1k} \\
        \ell_{21} & \ell_{22} & \cdots & \ell_{2k} \\
        \vdots & \vdots & \ddots & \vdots \\
        \ell_{k1} & \ell_{k2} & \cdots & \ell_{kk}
    \end{pmatrix},
\end{equation}
whose row- and column- sums are equal to $n$ (see \cite[Theorem 2.2]{jones1996combinatorial} or \cite{ryba2019permutation}) meaning:
\begin{equation}
    \forall 1 \leq i, j \leq k: \quad \sum_{i'=1}^{k}\ell_{i' j} = \sum_{j'=1}^{k} \ell_{i j'} = n.
\end{equation}
Equivalently, the double cosets $S_{n}^{\times k} \backslash S_{nk} / S_{n}^{\times k}$ are isomorphic to the set of directed multigraphs over $k$ vertices such that the indegree and outdegree of each vertex is $n$.

Furthermore, for each double coset $\ell \in S_{n}^{\times k} \backslash S_{nk} / S_{n}^{\times k}$, there is a unique representative permutation belonging to $\ell$, which we denote by $b_{\ell} \in S_{nk}$, such that both $b_{\ell}$ and its inverse $b_{\ell}^{-1}$ preserve the order of indices within each block of size $n$, meaning
\begin{align}
    \begin{split}
        &\forall j \in \{1, \ldots, k\} : (j-1)n+1 \leq i_1 < i_2 \leq jn+1 \\
        &\quad \implies (b_{\ell}(i_1) < b_{\ell}(i_2)) \cap (b_{\ell}^{-1}(i_1) < b_{\ell}^{-1}(i_2)).
    \end{split}
\end{align}
In other words, both $b_{\ell} \in S_{nk}$ and its inverse, $b_{\ell}^{-1}$ can be considered \textit{riffle shuffles} of $k$ packs of $n$ cards.
It is for this reason that we will refer to the forthcoming formula for $p_n(X_1, \ldots, X_k)$ as the \textit{biriffle formula}.

The cardinality $\card{\ell}$ of the double coset $\ell \in S_n^{\times k} \backslash S_{nk} / S_n^{\times k}$ can also be calculated using the orbit-stabilizer theorem for double cosets:
\begin{equation}
    \label{eq:cardinality_of_double_coset}
    \card{\ell} = \frac{\abs{S_n^{\times k} \times S_n^{\times k}}}{\abs{(S_n^{\times k} \times S_n^{\times k})_{b_{\ell}}}} = \frac{(n!)^{2k}}{\prod_{1 \leq i, j \leq k}\ell_{ij}!}
\end{equation}
Next we apply the identity $P_{\mathrm{Sym}^{nk}(\s H)} = \frac{1}{(nk)!}\sum_{\pi \in S_{nk}} T_{nk}(\pi)$ to obtain
\begin{equation}
    p_n(X_1, \ldots, X_k) = \frac{1}{(nk)!\dim(\mathrm{Sym}^{nk}(\s H))}\sum_{\pi \in S_{nk}}\Tr(T_{nk}(\pi) (X_1 \otimes \cdots \otimes X_k)).
\end{equation}
Since $X_1\otimes \cdots \otimes X_k$ is fixed by $S_n^{\times k}$, the value of the summand for $\pi \in S_{nk}$ equals the value of the summand for $\pi' \in S_{nk}$ whenever $\pi$ and $\pi'$ belong to the same double coset $\ell \in S_{n}^{\times k} \backslash S_{nk} / S_{n}^{\times k}$. 
Altogether, we obtain the following result.
\begin{thm}
    \label{thm:biriffle_formula}
    Let $p_n(X_1, \ldots, X_k)$ be defined as in \cref{eq:tuple_estimation_general_form} and assume that $X_i \in \End(\s H^{\otimes n})$ is fixed by left and right action of $S_n$ (via $T_{n} : S_n \to \GL(\s H^{\otimes n})$) on $X_i$ for all $i \in [k]$.
    Then the \defnsty{biriffle formula} for $p_n(X_1, \ldots, X_k)$ is
    \begin{equation}
        p_n(X_1, \ldots, X_k)
        =\frac{(d-1)!}{(nk+d-1)!}\sum_{\ell \in S_n^{\times k} \backslash S_{nk} / S_n^{\times k}} \abs{\ell} B_{\ell}(X_1, \ldots, X_k)
    \end{equation}
    where $B_{\ell}(X_1, \ldots, X_k)$, called the \defnsty{biriffle coupling}, is the scalar defined by
    \begin{equation}
        \label{eq:coset_coupling}
        B_{\ell}(X_1, \ldots, X_k) = \Tr(T_{nk}(b_{\ell}) (X_1 \otimes \cdots \otimes X_k)).
    \end{equation}
\end{thm}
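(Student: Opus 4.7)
The plan is to directly expand $\unisym_{nk}$ as an average over the tensor--permutation representation of $S_{nk}$, and then to partition the resulting sum over $S_{nk}$ according to the double cosets in $S_n^{\times k}\backslash S_{nk}/S_n^{\times k}$. Since $P_{\mathrm{Sym}^{nk}(\s H)} = \frac{1}{(nk)!}\sum_{\pi\in S_{nk}} T_{nk}(\pi)$ and $\dim(\mathrm{Sym}^{nk}(\s H)) = \binom{nk+d-1}{d-1} = \frac{(nk+d-1)!}{(nk)!(d-1)!}$, the definition of $p_n$ becomes
\begin{equation*}
p_n(X_1,\ldots,X_k) = \frac{(d-1)!}{(nk+d-1)!}\sum_{\pi\in S_{nk}} \Tr\bigl(T_{nk}(\pi)(X_1\otimes\cdots\otimes X_k)\bigr).
\end{equation*}
So the only real task is to re-organize this sum in accordance with the claimed formula.

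Next I would show that the summand $\Tr(T_{nk}(\pi)(X_1\otimes\cdots\otimes X_k))$ depends on $\pi$ only through its double coset in $S_n^{\times k}\backslash S_{nk}/S_n^{\times k}$. This follows from the imposed $S_n$-bi-invariance of each $X_i$: the assumption $T_n(\tau) X_i = X_i = X_i T_n(\tau)$ for all $\tau\in S_n$ lifts to $S_n^{\times k}$-bi-invariance of $X_1\otimes\cdots\otimes X_k$ on $\s H^{\otimes nk}$ through the embedding $S_n^{\times k}\hookrightarrow S_{nk}$ described in the text. For any $\sigma,\sigma'\in S_n^{\times k}$, the homomorphism property of $T_{nk}$ plus cyclicity of the trace then yields
\begin{equation*}
\Tr\bigl(T_{nk}(\sigma\pi\sigma')(X_1\otimes\cdots\otimes X_k)\bigr) = \Tr\bigl(T_{nk}(\pi) T_{nk}(\sigma')(X_1\otimes\cdots\otimes X_k) T_{nk}(\sigma)\bigr) = \Tr\bigl(T_{nk}(\pi)(X_1\otimes\cdots\otimes X_k)\bigr),
\end{equation*}
which is exactly the desired constancy on double cosets.

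Having established this, the sum collapses into one term per double coset $\ell\in S_n^{\times k}\backslash S_{nk}/S_n^{\times k}$, weighted by the cardinality $|\ell|$. Choosing the canonical biriffle representative $b_\ell\in S_{nk}$ described in the paragraphs preceding the theorem yields
\begin{equation*}
\sum_{\pi\in S_{nk}} \Tr\bigl(T_{nk}(\pi)(X_1\otimes\cdots\otimes X_k)\bigr) = \sum_{\ell} |\ell|\,\Tr\bigl(T_{nk}(b_\ell)(X_1\otimes\cdots\otimes X_k)\bigr) = \sum_{\ell} |\ell|\,B_\ell(X_1,\ldots,X_k),
\end{equation*}
and substitution into the first display gives precisely the stated biriffle formula.

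The only point that requires genuine care, rather than direct bookkeeping, is the claim that the double cosets in question are indexed by the nonnegative integer matrices with row and column sums equal to $n$ (as in \eqref{eq:matrix_form_biriffle}) and that $b_\ell$ is a well-defined canonical representative; this is the combinatorial content invoked from \cite{jones1996combinatorial,ryba2019permutation}, and it is also what justifies the cardinality count $|\ell| = (n!)^{2k}/\prod_{ij}\ell_{ij}!$ via orbit-stabilizer. I would simply cite these facts rather than reprove them, since they are classical and are already quoted in the lead-up to the theorem. The rest of the argument is an essentially mechanical combination of the averaging identity for $P_{\mathrm{Sym}^{nk}(\s H)}$, cyclicity of the trace, and the $S_n^{\times k}$-bi-invariance assumption on the $X_i$.
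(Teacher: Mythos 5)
Your proposal is correct and follows essentially the same route as the paper: expand $\unisym_{nk}$ via the averaging identity for $P_{\mathrm{Sym}^{nk}(\s H)}$, observe that $S_n^{\times k}$-bi-invariance of $X_1\otimes\cdots\otimes X_k$ (together with cyclicity of trace) makes the summand constant on double cosets, and collapse the sum accordingly. Your prefactor bookkeeping ($\frac{1}{(nk)!\dim(\mathrm{Sym}^{nk}(\s H))}=\frac{(d-1)!}{(nk+d-1)!}$) is also the same computation the paper relies on.
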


To better understand the nature of the biriffle coupling $B_{\ell}(X_1, \ldots, X_k)$, note that $b_{\ell}$, and in particular $T_{nk}(b_{\ell})$, can be interpreted, for entry $\ell_{ij}$ in the matrix representation of $\ell$ (\cref{eq:matrix_form_biriffle}), as a partial contraction of $\ell_{ij}$ tensor factors in the codomain of $X_i$ with $\ell_{ij}$ of the tensor factors in the domain of $X_j$.
In other words, $B_{\ell}(X_1, \ldots, X_k)$ can be interpreted as a kind of trace over $X_1 \otimes \cdots \otimes X_k$ twisted by the coset $\ell$. 
Since each $X_i$ is implicitly assumed to be fixed by the tensor-permutation action of $S_n$ on $\End(\s H^{\otimes n})$ this partial contraction can be performed over the symmetric subspace $\mathrm{Sym}^{\ell_{ij}}(\s H)$ of $\s H^{\otimes \ell_{ij}}$.

To illustrate, consider that for any $k$-tuple of non-negative integers $(q_1, \ldots, q_k)$ with total $Q = \sum_{j=1}^{k} q_j$, we have a $\GL(\s H)$-equivariant isometry representing the subspace inclusion relation between symmetric subspaces:
\begin{equation}
    \iota_{q_1, \ldots, q_k} : \mathrm{Sym}^{Q}(\s H) \xhookrightarrow{} {\bigotimes}_{j=1}^{k} \mathrm{Sym}^{q_j}(\s H).
\end{equation}
This isometry and its dual, denoted by $\iota^{*}_{q_1, \ldots, q_k}$, can be represented using a diagrammatic shorthand (in the usual style of string diagrams, e.g. \cite{selinger2012finite, wood2011tensor, cvitanovic2008group}).
\begin{align}
    \incstr{sym_decouple.pdf} &\cong \iota_{q_1, \ldots, q_k} : \mathrm{Sym}^{Q}(\s H) \xhookrightarrow{} {\bigotimes}_{j=1}^{k}\mathrm{Sym}^{q_j}(\s H) \\
    \incstr{sym_decouple_dagger.pdf} &\cong \iota^{*}_{q_1, \ldots, q_m} : {\bigotimes}_{j=1}^{k}\mathrm{Sym}^{q_j}(\s H)  \twoheadrightarrow{} \mathrm{Sym}^{Q}(\s H).
\end{align}
Next let $\tilde X_i$ denote the operator of the form 
\begin{equation}
    \tilde X_i : 
    {\bigotimes}_{j=1}^{k}\mathrm{Sym}^{\ell_{ji}}(\s H) \to
    {\bigotimes}_{j'=1}^{k}\mathrm{Sym}^{\ell_{ij'}}(\s H)
\end{equation}
defined by
\begin{equation}
    \tilde X_{i} = \iota^{*}_{\ell_{i1}, \ldots, \ell_{ik}} \circ  X_i  \circ \iota_{\ell_{1i}, \ldots, \ell_{ki}}.
\end{equation}
Then the coset coupling $B_{\ell}(X_1, \ldots, X_k)$ can be expressed diagrammatically, for small $k$ as follows.

If $k = 1$, then $\ell = (n)$ and thus
\begin{equation}
    B_{(n)}(X_1) = \incstr{coset_coupling_k_1.pdf}.
\end{equation}

If $k = 2$ then $\ell = \scriptstyle{\begin{pmatrix} a & n - a \\ n - a & a \end{pmatrix}}$ is parmeterized by the integer $a \in \{0, 1,\ldots, n\}$ value and
\begin{equation}
    \label{eq:coset_coupling_k_2}
    B_{\scriptstyle{\begin{pmatrix} a & n - a \\ n - a & a \end{pmatrix}}}(X_1, X_2) = \incstr{coset_coupling_k_2.pdf}.
\end{equation}

If $k = 3$ then,
\begin{equation}
    B_{\ell}(X_1, X_2, X_3) = \incstr{coset_coupling_k_3.pdf}.
\end{equation}

In the special case where $k=2$, the biriffle formula for $p_n(X_1, X_2)$ leads to the following useful inequality on $p_n(X_1, X_2)$.
\begin{lem}
    \label{lem:bibiriffle_bound}
    Let $X_1, X_2$ be positive semidefinite operators on $\s H^{\otimes n}$ fixed by the tensor-permutation action $T_n : S_n \to \GL(\s H^{\otimes n})$. Then
    \begin{equation}
        p_n(X_1, X_2) = \Tr(\unisym_{2n} (X_1 \otimes X_2)) \geq \frac{\Tr(\unisym_{n}(X_1 X_2))}{(2n)^{d-1}} \geq 0.
    \end{equation}
\end{lem}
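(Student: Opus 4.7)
The plan is to apply the biriffle formula (\cref{thm:biriffle_formula}) with $k = 2$ and then to lower bound the resulting sum by retaining only two of the $n+1$ double-coset contributions.

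For $k=2$ the double cosets $S_n^{\times 2}\backslash S_{2n}/S_n^{\times 2}$ are indexed by $a \in \{0, 1, \ldots, n\}$; by \cref{eq:cardinality_of_double_coset} each has cardinality $|\ell_a| = \binom{n}{a}^{2}(n!)^{2}$, and the corresponding biriffle coupling is
\begin{equation}
    B_{\ell_a}(X_1, X_2) = \Tr\bigl[\mathrm{Tr}_{a}(X_1)\, \mathrm{Tr}_{a}(X_2)\bigr],
\end{equation}
where $\mathrm{Tr}_a$ is the partial trace over any choice of $a$ of the $n$ factors (well-defined by the $S_n$-symmetry of the $X_i$). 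Since $X_i \geq 0$, the partial traces $\mathrm{Tr}_a(X_i)$ remain positive semidefinite, so every $B_{\ell_a} \geq 0$.

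The next step is to retain only the two extreme terms: $a=0$, giving $B_{\ell_0} = \Tr(X_1 X_2)$, and $a=n$, giving $B_{\ell_n} = \Tr(X_1)\Tr(X_2)$, each weighted by $(n!)^2$. Discarding the intermediate non-negative contributions and then applying the elementary inequality $\Tr(X_1)\Tr(X_2) \geq \Tr(X_1 X_2)$ for PSD operators (which follows from $\Tr(X_1 X_2) \leq \|X_1\|_{\mathrm{op}} \Tr(X_2) \leq \Tr(X_1)\Tr(X_2)$), and finally rewriting $\Tr(X_1 X_2) = \binom{n+d-1}{d-1}\,\Tr(\unisym_n X_1 X_2)$, I obtain
\begin{equation}
    p_n(X_1, X_2) \;\geq\; \frac{2\,\Tr(\unisym_n X_1 X_2)}{\binom{2n+d-1}{d-1}}.
\end{equation}
The rightmost inequality $\Tr(\unisym_n X_1 X_2) \geq 0$ is automatic, since $X_1, X_2$ may be replaced by $P_{\mathrm{Sym}^n\s H} X_i P_{\mathrm{Sym}^n\s H}$ without changing the quantity and then $\Tr(\unisym_n X_1 X_2) = \|X_2^{1/2} X_1^{1/2}\|_F^2/\dim(\mathrm{Sym}^n\s H)$.

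The final task is a purely combinatorial bound: $\binom{2n+d-1}{d-1} \leq 2\,(2n)^{d-1}$ for all $n \geq 1$ and $d \geq 1$. Writing the binomial as $\frac{1}{(d-1)!}\prod_{j=1}^{d-1}(2n+j)$ and using the crude estimate $(2n+j)/(2n) \leq (j+2)/2$ valid for $n \geq 1$, this reduces to $d(d+1) \leq 2^{d+1}$, a standard induction. I expect this final combinatorial bookkeeping to be the main technical wrinkle: the $a=0$ term alone only delivers the weaker bound with $\binom{2n+d-1}{d-1}$ in the denominator, which can exceed $(2n)^{d-1}$ (already at $d=2$ one has $\binom{2n+1}{1} = 2n+1 > 2n$), so both extreme biriffle contributions together with the PSD slack $\Tr(X_1)\Tr(X_2) \geq \Tr(X_1 X_2)$ are essential to reach the stated constant $1/(2n)^{d-1}$.
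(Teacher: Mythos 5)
Your formula for the biriffle coupling, $B_{\ell_a}(X_1,X_2)=\Tr[\mathrm{Tr}_a(X_1)\,\mathrm{Tr}_a(X_2)]$, and the observation that each $B_{\ell_a}\geq 0$ are fine, but the combinatorics that follows contains an error, and fixing it exposes a deeper problem with the strategy of keeping only the two extreme double cosets. The biriffle prefactor is $(d-1)!/(2n+d-1)!$, and the two retained cosets each have cardinality $(n!)^2$, so after discarding the rest and using $\Tr(X_1)\Tr(X_2)\geq\Tr(X_1X_2)=\binom{n+d-1}{n}\Tr(\unisym_n X_1X_2)$, you actually get
\begin{equation}
    p_n(X_1,X_2)\;\geq\;\frac{2\,(d-1)!\,(n!)^2\,\binom{n+d-1}{n}}{(2n+d-1)!}\,\Tr(\unisym_n X_1X_2)
    \;=\;\frac{2\,n!\,(n+d-1)!}{(2n+d-1)!}\,\Tr(\unisym_n X_1X_2),
\end{equation}
not the $\frac{2}{\binom{2n+d-1}{d-1}}$ you wrote; you have silently dropped a factor of $\binom{2n}{n}/\binom{n+d-1}{n}$. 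This matters: the correct two-term constant $2\,n!\,(n+d-1)!/(2n+d-1)! = 2/\binom{2n+d-1}{n}$ decays like $4^{-n}$ for fixed $d$, which is exponentially smaller than the claimed $(2n)^{-(d-1)}$. It already fails at $d=2$, $n=2$: the two-term bound gives the constant $2\cdot 2!\cdot 3!/5!=1/5$, but the lemma demands $1/(2n)^{d-1}=1/4$, and $1/5<1/4$.

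The approach in the paper is qualitatively different and this difference is essential. Rather than discarding intermediate cosets, it lower-bounds \emph{every} coupling $B_a$ (for all $a\in\{0,\ldots,n\}$) by the same quantity, using positivity to restrict the partial trace to a subspace and the monotonicity $\dim(\mathrm{Sym}^a\s H)\leq\dim(\mathrm{Sym}^n\s H)$ for $a\leq n$. With a uniform lower bound on $B_a$, the sum $\sum_a\binom{n}{a}^2=\binom{2n}{n}$ cancels exactly against the $\binom{2n}{n}^{-1}=(n!)^2/(2n)!$ hidden in the biriffle prefactor, so the exponential-in-$n$ factor disappears and only the polynomial $\binom{2n+d-1}{2n}$ remains. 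That cancellation is the whole point of the lemma; without it, no amount of combinatorial massaging of the two extreme terms can recover a polynomial rate. Your final estimate $\binom{2n+d-1}{d-1}\leq 2(2n)^{d-1}$ is itself correct, but it is being applied to an intermediate inequality that does not hold.
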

\begin{proof}
    Using \cref{thm:biriffle_formula} and \cref{eq:coset_coupling_k_2}, we have
    \begin{equation}
        p_n(X_1, X_2) = \Tr(\unisym_{2n} (X_1 \otimes X_2)) = \binom{2n+d-1}{2n}^{-1} \frac{(n!)^2}{(2n)!}\sum_{a=0}^{n} \binom{n}{a}^2 \incstr{binary_noncommute_step_1.pdf}
    \end{equation}
    where the combinatoric coefficient arises from the decomposition $S_{2n}$ into double cosets $S_n^{\times 2} \backslash S_{2n} / S_{n}^{\times 2}$ and where \cref{eq:cardinality_of_double_coset} yields
    \begin{equation}
        \sum_{\ell \in S_n^{\times 2} \backslash S_{2n} / S_{n}^{\times 2}} \abs{\ell} = \sum_{a=0}^{n} \frac{(n!)^{4}}{(a!)^2((n-a)!)^2} = (n!)^{2}\sum_{a=0}^{n} \binom{n}{a}^2 = (2n)!.
    \end{equation}
    By string diagrammatic manipulations, the factor involving $X_1$ and $X_2$ in the summand above can be written in terms of the transpose $X_2^{T}$ of $X_2$:
    \begin{align}
        \incstr{binary_noncommute_step_1.pdf}
        = \incstr{binary_noncommute_step_2.pdf}
        = \incstr{binary_noncommute_step_3.pdf}.
    \end{align}
    By assumption, $X_1$ and $X_2$ are postive semidefinite operators and thus $X \otimes X_2^{T}$ is also positive semidefinite. 
    Viewing the above formula as the trace of a positive semidefinite operator acting on $\mathrm{Sym}^{a}(\s H)^{*} \otimes \mathrm{Sym}^{a}(\s H)$, one may lower-bound its value by the trace over any subspace. 
    Therefore,
    \begin{align}
        \incstr{binary_noncommute_step_3.pdf}
        \geq \frac{1}{\incstr{binary_noncommute_denominator_a}} \incstr{binary_noncommute_step_4.pdf},
        = \frac{1}{\incstr{binary_noncommute_denominator_a}} \incstr{binary_noncommute_step_5.pdf},
        = \frac{1}{\incstr{binary_noncommute_denominator_a}} \incstr{binary_noncommute_step_6.pdf}.
    \end{align}
    Finally, since $a \leq n$, $\dim(\mathrm{Sym}^{a}(\s H)) \leq \dim(\mathrm{Sym}^{n}(\s H))= \binom{n + d - 1}{n}$, we obtain
    \begin{equation}
        \binom{2n+d-1}{2n} p_n(X_1, X_2) \geq \binom{n+d-1}{n} \Tr(\tilde X_1 \tilde X_2) = \Tr(\unisym_{n}(X_1 X_2)),
    \end{equation}
    and since $\binom{2n+d-1}{2n} \leq (2n)^{d-1}$, the claim holds.
\end{proof}
Unfortunately, a result analogous to \cref{lem:bibiriffle_bound} for $k \geq 3$ seems unlikely to hold;
for $k = 3$, the quantity $B_{\ell}(X_1, X_2, X_3)$ may not be real, let alone positive.

\section{Examples}

\subsection{Absolutely maximally entangled states}
\label{sec:absolutely_maximally_entangled}

In this section we cover the problem of determining the existence of so-called absolutely maximally entangled states previously discussed in the introduction. 
Absolutely maximally entangled states appear in a variety of contexts in quantum information theory, including quantum error correction and quantum secret sharing~\cite{huber2017quantum,scott2004multipartite,helwig2013absolutely}.

To begin we consider a fairly general setup.
Let $[p] = \{1, \ldots, p\}$ be a set of $p \in \mathbb N$ indices and let $\s H \cong \bigotimes_{i \in [p]} \s H_{i}$ be a finite-dimensional $p$-partite Hilbert space.
Let the dimension of the local Hilbert spaces be expressed as $d_{i} = \dim(\s H_{i})$, while the dimension of the whole Hilbert space is expressed as $D = \dim(\s H) = \prod_{i\in[p]} d_i$.
Now fix a subset $S \subseteq [p]$ of indices, and consider the $\card{S}$-partite subsystem $\s H_S \coloneqq \bigotimes_{i\in S} \s H_{i}$ with dimension $d_S = \dim(\s H_S)$, alongside a complementary subsystem $\s H_{\neg S} \coloneqq \bigotimes_{j \in [p] \setminus S} \s H_{j}$ with dimension $d_{\neg S} = \dim(\s H_{\neg S})$ such that $D = d_{S} d_{\neg S}$.
Given a pure quantum state, described by a ray $\psi \in \proj \s H$ over the whole Hilbert space, the quantum state describing the subsystem $S$ can be obtained from $\psi$ by applying the partial trace over the complementary subsystem $\s H_{\neg S}$:
\begin{equation}
    \rho_{S}^{\psi} = \Tr_{\s H_{\neg S}} (P_{\psi}).
\end{equation}
Alternatively, one can obtain the reduced state $\rho_{S}^{\psi}$ by first considering a unitary representation, $\grep_{S} : \SU(d_{S}) \to \SU(D)$ of $\SU(d_{S})$ acting locally on the subsystem $\s H_S$ by sending the $d_{S}$-dimensional unitary $U \in \SU(d_{S})$ to the $D$-dimensional unitary,
\begin{equation}
    \label{eq:local_action}
    \grep_{S}(U) \coloneqq U \otimes \ident_{\s H_{\neg S}}.
\end{equation}
Then, the moment map of the representation $\grep_{S}$ is a function of the form
\begin{equation}
    \momap_{S} : \proj \s H \to (i\mathfrak{su}(d))^{*}.
\end{equation}
Using the moment map, we note a correspondence between $\rho_{S}^{\psi}$ and $\momap_{S}(\psi)$, captured by the following expression, which holds for all traceless Hermitian operators $X \in i\mathfrak{su}(d)$:
\begin{equation}
    \Tr(\rho_{S}^{\psi} X) = \momap_{S}(\psi)(X).
\end{equation}
In other words, the moment map for the representation $\grep_{S}$ applied to $\psi$ determines the reduced state of $\psi$ on subsystem $S$ up to normalization.

Now a pure state, $\psi \in \proj \s H$, is said to be \textit{$S$-uniform} if its reduced state onto subsystem $\s H_{S}$ is proportional to the identity operator, meaning
\begin{equation}
    \rho_{S}^{\psi} = \frac{\ident_{\s H_{S}}}{d_{S}},
\end{equation}
or equivalently, the moment map associated to the representation $\grep_S$ vanishes, i.e., 
\begin{equation}
    \momap_{S}(\psi) = 0.
\end{equation}
A pure state, $\psi \in \proj \s H$, is said to \textit{absolutely maximally entangled}, if for all subsystems $S \subseteq [p]$ of size 
\begin{equation}
    \abs{S} \leq \left\lfloor\frac{D}{2}\right\rfloor,
\end{equation}
the state $\psi$ is $S$-uniform.

\begin{exam}
    For example, consider the case where $p = 2$, $d_1 = d_2 = 2$, and $D=4$.
    In this case, the familiar two-qubit Bell state $\psi \in \proj (\mathbb C^{2} \otimes \mathbb C^{2})$ associated to the unit vector
    \begin{equation}
        v \coloneqq \frac{1}{\sqrt{2}}\left(e_0 \otimes e_0 + e_1 \otimes e_1\right) \in \mathbb C^{2} \otimes \mathbb C^{2},
    \end{equation}
    is absolutely maximally entangled because both of its reduced states (on either the first or second subsystem) are proportional to identity matrix on $\mathbb C^{2}$, or equivalently, for all $2$-dimensional traceless Hermitian operators $X \in i \mathfrak {su}(2)$,
    \begin{align}
        \begin{split}
            \Tr(\rho_1 X) &= \momap_1(\psi)(X) = \braket{v, (X \otimes \ident) v} = 0, \\
            \Tr(\rho_2 X) &= \momap_2(\psi)(X) = \braket{v, (\ident \otimes X) v} = 0.
        \end{split}
    \end{align}
\end{exam}

Using the strong-duality result from \cref{sec:strong_duality}, we see that $\psi \in \proj \s H$ is $S$-uniform, i.e. $\momap_{S}(\psi) = 0$, if and only if
\begin{equation}
    \limsup_{n\to\infty} \Tr(P_{\psi}^{\otimes n} \fsub{\grep_S^{\otimes n}})^{\frac{1}{n}} = 1,
\end{equation}
where, for each $n \in \mathbb N$, $\fsub{\grep_S^{\otimes n}}$ is the projection operator onto the subspace of vectors in $(\s H_S \otimes \s H_{\neg S})^{\otimes n}$ which are invariant under the action of $\SU(d_S)$ through the representation $\grep_S$.

Combining this observation with the results from \cref{sec:beyond_polytopes}, we obtain the following result which may be used to determine, for any given collection of subsystems $S_1, \ldots, S_m \subseteq [p]$, whether or not there exists a state which is $S_j$-uniform for all $j \in \{1, \ldots, m\}$.
\begin{cor}
    \label{cor:absolutely_maximally_entangled}
    Let $S_1, \ldots, S_m \subseteq [p]$ be a collection of subsystems indexed by $j \in [m] \in \{1, \ldots, m\}$.
    Then, for each $n \in \mathbb N$ and $j \in [m]$, let $\fsub{\grep_{S_j}^{\otimes n}}$ be the aforementioned projection operator onto the subspace of vectors in $(\mathbb C^{D})^{\otimes n}$ invariant under the action of $\grep_{S_j}^{\otimes n} : \SU(d_j) \to \SU(D^{n})$ defined by
    \begin{equation}
        \grep_{S_j}(U_{S_j}) \coloneqq U_{S_j}^{\otimes n} \otimes \ident_{\neg S_j}^{\otimes n}.
    \end{equation}
    Then there exists a ray $\psi \in \proj (\mathbb C^{D})$ which is $S_j$-uniform for all $j \in [m]$ if and only if
    \begin{equation}
        \limsup_{n \to \infty} \Tr(\unisym_{nm}( \fsub{\grep_{S_1}^{\otimes n}} \otimes \cdots \otimes \fsub{\grep_{S_m}^{\otimes n}} )) = 1
    \end{equation}
    where $\unisym_{nm}$ is defined in \cref{defn:uniform_symmetric_state}.
\end{cor}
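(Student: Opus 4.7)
The plan is to recognize this statement as a direct specialization of the joint-realizability criterion established in Corollary~\ref{cor:joint_realizable_occasional}, applied to the $m$-tuple of moment maps $(\momap_{S_1}, \ldots, \momap_{S_m})$ associated with the local unitary representations $\grep_{S_j} : \SU(d_{S_j}) \to \SU(D)$, all evaluated at the specific value $\omega_j = 0 \in i\mathfrak{su}(d_{S_j})^{*}$. First I would establish the conceptual dictionary: being $S_j$-uniform means $\rho_{S_j}^{\psi} = \ident_{\s H_{S_j}}/d_{S_j}$, which in turn is equivalent to $\momap_{S_j}(\psi) = 0$ for every $j \in [m]$, since $\momap_{S_j}(\psi)(X) = \Tr(\rho_{S_j}^{\psi} X)$ for all traceless Hermitian $X \in i\mathfrak{su}(d_{S_j})$. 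Thus the existence of an $\psi \in \proj(\mathbb{C}^D)$ that is simultaneously $S_j$-uniform for every $j$ is exactly the joint-realizability problem for the $m$-tuple of moment-map values $(0, \ldots, 0)$.

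Next I would verify that each $\omega_j = 0$ has a rational coadjoint orbit in the sense of Definition~\ref{defn:rational_dense}: indeed, $0$ is trivially dominant and analytically integral (choose $\ell = 1$, $h = e$, $\lambda = 0$), so the hypotheses of Corollary~\ref{cor:joint_realizable_occasional} apply. One then applies the corollary verbatim with $k = m$ (number of properties/moment maps) to obtain
\begin{equation}
    \exists \psi : \forall j, \, \momap_{\grep_{S_j}}(\psi) = 0 \iff \limsup_{n \to \infty} \Tr\bigl(\unisym_{nm}\, (\ratsub{0}{\grep_{S_1}^{\otimes n}} \otimes \cdots \otimes \ratsub{0}{\grep_{S_m}^{\otimes n}})\bigr)^{\frac{1}{nm}} = 1.
\end{equation}

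The crucial identification remaining is that $\ratsub{0}{\grep_{S_j}^{\otimes n}}$ coincides with $\fsub{\grep_{S_j}^{\otimes n}}$, the projector onto the $\grep_{S_j}$-invariant subspace of $(\mathbb{C}^D)^{\otimes n}$. Since $\ell = 1$ and $h = e$, Definition~\ref{defn:rational_projectors} reduces $\ratsub{0}{\grep_{S_j}^{\otimes n}}$ to $\hwsub{0}{\grep_{S_j}^{\otimes n}}$, the projector onto highest-weight vectors with weight zero. For the semisimple group $\SU(d_{S_j})$, a weight-zero highest-weight vector is annihilated by every positive root vector and is also a zero-eigenvector of the Cartan subalgebra; since its $K$-orbit is a one-dimensional trivial subrepresentation, such a vector is precisely a $\grep_{S_j}$-invariant vector, and conversely every invariant vector is automatically a weight-zero highest-weight vector. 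Hence $\hwsub{0}{\grep_{S_j}^{\otimes n}} = \fsub{\grep_{S_j}^{\otimes n}}$.

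Finally, to go from the $\tfrac{1}{nm}$-root form in Corollary~\ref{cor:joint_realizable_occasional} to the unrooted version stated in Corollary~\ref{cor:absolutely_maximally_entangled}, I would observe that the traced quantity lies in $[0,1]$ (it is the expectation value of a product of commuting projectors, or can be bounded by such), so the $\limsup$ of its $(nm)$-th root equals one iff the $\limsup$ of the quantity itself equals one; the only step that requires care is ensuring the sequence does not decay merely sub-exponentially to zero, but the equivalence of the two conditions is standard given the supermultiplicativity furnished by Remark~\ref{rem:fixed_graded} applied to each tensor factor. The main obstacle I anticipate is the last bookkeeping step — making sure the exponent $\tfrac{1}{nm}$ versus no exponent is handled rigorously — together with pinning down the identification $\hwsub{0}{} = \fsub{}$ for the semisimple group in question; both are essentially conceptual rather than computational.
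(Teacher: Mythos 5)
Your overall strategy is correct and matches the paper's implicit derivation: reduce $S_j$-uniformity to vanishing of $\momap_{\grep_{S_j}}$, observe that $0$ trivially has a rational coadjoint orbit, identify $\ratsub{0}{\grep_{S_j}^{\otimes n}}$ with the projector $\fsub{\grep_{S_j}^{\otimes n}}$ onto $\grep_{S_j}$-invariant vectors (correct for semisimple $\SU(d_{S_j})$: a highest-weight vector of weight zero generates a trivial subrepresentation, and conversely), and then apply \cref{cor:joint_realizable_occasional} with $\omega_j = 0$. This is exactly what the paper sketches when it says the corollary follows by ``combining this observation with the results from \cref{sec:beyond_polytopes}.''

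However, your treatment of the ``missing exponent'' in the final step is unsound, and the gap is real. You claim that for $a_n \in [0,1]$, $\limsup a_n^{1/(nm)} = 1$ is equivalent to $\limsup a_n = 1$; this fails in one direction (take $a_n = 1/n$, which satisfies the first but not the second). Your appeal to supermultiplicativity does not close the gap: a supermultiplicative sequence in $[0,1]$ can tend to zero (e.g.\ $a_n = 2^{-n}$), and Fekete's lemma controls only the exponential rate, which is precisely what the $(nm)$-th root already isolates. In fact the unrooted statement is genuinely false. Writing $\Tr(\unisym_{nm}(\fsub{\grep_{S_1}^{\otimes n}} \otimes \cdots \otimes \fsub{\grep_{S_m}^{\otimes n}})) = \int \diff\mu(\psi)\,\Tr\bigl(P_\psi^{\otimes nm}(\fsub{\grep_{S_1}^{\otimes n}} \otimes \cdots \otimes \fsub{\grep_{S_m}^{\otimes n}})\bigr)$ via \cref{eq:haar_sym}, the integrand decays exponentially in $n$ for $\mu$-almost every $\psi$ (those with $\momap_{\grep_{S_j}}(\psi) \neq 0$ for some $j$), and decays polynomially even on the exceptional $\mu$-null set by the occasionality bound of \cref{cor:useful_occasionality}; the integral therefore tends to zero whenever the constraint set has measure zero, regardless of whether a solution exists. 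What \cref{cor:joint_realizable_occasional} actually yields --- and what your application of it correctly produces --- is the criterion with the $\tfrac{1}{nm}$ root. That is the statement one can actually prove, and the paper's displayed equation appears to have dropped the exponent; you should flag this rather than argue it away.
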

This result thus constitutes a first step toward, at least partially, generalizing the result of \citeauthor{bryan2018existence}~\cite{bryan2018existence} for the complete solution to the existence or non-existence of \textit{locally} maximally entangled states (i.e., the case where $\card{S_j} = 1$ for all $j$) as a function of local dimensions.

\subsection{Quantum marginals}
\label{sec:quantum_marginals_early_example}

The quantum marginals problem can be understood as a non-uniform generalization of the problem considered in \cref{sec:absolutely_maximally_entangled}.
In fact, by using the marginal representations defined by \cref{eq:local_action}, we observe that \cref{cor:joint_realizable_occasional} already provides an asymptotic solution to the quantum marginals problem, at least for marginal states with rational spectra.
In \cref{chap:qmp}, we will revisit the quantum marginals problem in greater detail and will encounter an alternative, but ultimately related, strategy for asymptotically characterizing which collections of marginals are jointly realizable.

\chapter{Quantum marginal problems}
\label{chap:qmp}
The purpose of this chapter is to explore a class of realizability problems known as quantum marginal problems.
From the perspective of \cref{chap:realizability}, and for the purposes of this chapter, a quantum marginal problem can be understood as a realizability problem where the properties under investigation are \textit{localized} in the sense that they pertain to subsystems of a multipartite Hilbert space.
Although a solution to the quantum marginal problem was already presented in \cref{sec:quantum_marginals_early_example}, this chapter provides an alternative, albeit related, solution in the form of \cref{thm:main} which has the advantage of being much easier to articulate.
The contents of this chapter are taken directly from my paper on the quantum marginals problem~\cite{fraser2022sufficient}, with only minor modifications.

\section{Introduction}

The quantum marginal problem (QMP) is interested in characterizing the space of reduced/marginal states of a multipartite quantum state, and is widely regarded as being an important, albeit challenging problem to solve.

Connections and applications of the QMP to other topics in quantum theory (and beyond) include multipartite entanglement and separability~\cite{coffman2000distributed,walter2013entanglement}, quantum error correction~\cite{ocko2011quantum,huber2017quantum,yu2021complete}, entropic constraints~\cite{majenz2018constraints,christandl2018recoupling,kim2020entropy,osborne2008entropic}, other forms of quantum realizability problems~\cite{heinosaari2016invitation,haapasalo2021quantum,doherty2008quantum}, the asymptotics of representations of the symmetric groups~\cite{daftuar2005quantum,christandl2006spectra,christandl2018recoupling}, the asymptotic restriction problem for tensors~\cite{christandl2018universal}, random matrix theory~\cite{collins2021projections,christandl2014eigenvalue}, and many-body physics~\cite{coleman2001reduced,schilling2015quantum}.

The quantum marginal problem, and its associated terminology, is derived from an analogous problem in probability theory called the \textit{classical} marginal problem. The classical marginal problem is concerned with characterizing the relationships between the various marginal distributions of a joint, multivariate distribution~\cite{fritz2012entropic,vorob1962consistent,malvestuto1988existence}. As a joint probability distribution and its marginals can always be faithfully represented by the eigenvalues of a joint quantum state and its marginals using a product eigenbasis, the quantum marginal problem subsumes the classical marginal problem, and consequently, any of its applications~\cite{fritz2012entropic,liang2011specker,abramsky2011sheaf,fraser2018causal}.

One of the earliest formulations of the problem dates back to the early 1960s, when, for the purposes of simplifying calculations involving the atomic and molecular structure, quantum chemists became interested in characterizing the possible reduced density matrices of a system of $N$ interacting fermions~\cite{coulson1960present,coleman1963structure}. This version of the problem, referred to as the $N$-representability problem, has a long history~\cite{coleman2000reduced,coleman2001reduced,lude2013functional,borland1972conditions,ruskai2007connecting,klyachko2009pauli} that continues to evolve~\cite{mazziotti2012structure,mazziotti2012significant,klyachko2006quantum,castillo2021effective}.

Now the QMP comes in a variety of flavours which can be broadly organized by considering any additional assumptions or constraints that are imposed on either i) the properties of the joint state, and/or ii) the properties of the set of candidate density operators~\cite{tyc2015quantum}.

When focusing on the joint state, specializations of the QMP exist where the joint state is assumed to be fermionic~\cite{coleman2000reduced,schilling2013pinning}, bosonic~\cite{wei2010interacting}, Gaussian~\cite{eisert2008gaussian,vlach2015quantum}, separable~\cite{navascues2021entanglement}, or having symmetric eigenvectors~\cite{aloy2020quantum}. Generally speaking, the QMP is difficult in the sense that it is a QMA-complete problem~\cite{liu2006consistency,liu2007quantum,wei2010interacting,bookatz2012qma}. For the purposes of this section, the only restriction imposed on joint states will be that they live in a finite-dimensional Hilbert space, with a finite and fixed number of subsystems.

Regarding the list of candidate density operators, e.g., $(\rho_{AB}, \rho_{BC}, \rho_{AC})$, the list of subsystems they correspond to, e.g., $(AB, BC, AC)$, is known as the \textit{marginal scenario}, while the individual elements, e.g., $AB$, $BC$, or $AC$, will be referred to as \textit{marginal contexts}. A key consideration for understanding previous work on the QMP is whether the marginal contexts are disjoint. When the marginal contexts are disjoint, a complete solution to the QMP is known~\cite{klyachko2004quantum,klyachko2006general}. For a given specification of Hilbert space dimensions, this solution takes the form of a finite list of linear inequality constraints on the spectra of the candidate density operators. These solutions furthermore recover earlier results pertaining to low dimensional Hilbert-spaces~\cite{higuchi2003one,higuchi2003qutrit,bravyi2003requirements,han2005compatibility}.

In contrast, when the marginal scenario involves overlapping marginal contexts, existing results are comparatively more sporadic and typically weaker, being only applicable to low-dimensional systems, small numbers of parties, or only yielding necessary but insufficient constraints~\cite{chen2014symmetric,carlen2013extension,butterley2006compatibility,hall2007compatibility,chen2016detecting}. One promising approach, developed in~\cite{christandl2018recoupling} for relating marginal spectra to the recoupling theory of the symmetric group, appears limited to situations where the marginal contexts do not overlap too much.

Whenever a candidate set of density operators is explicitly given, one strategy to decide their realizability is to use convex optimization techniques, e.g., semidefinite programming~\cite{vandenberghe1996semidefinite}. If the joint state is not necessarily pure, realizability can be decided with a single semidefinite program~\cite{hall2007compatibility}. Additionally, when the joint state is assumed pure, realizability can still be decided by an infinite hierarchy of semidefinite programs~\cite{yu2021complete}. In either case, analytic inequality constraints that serve as witnesses for unrealizability can be extracted from the outputs of such semidefinite programs~\cite{hall2007compatibility}.

The objective of this chapter is to improve our understanding of the QMP, in particular for the case of overlapping marginal contexts, by i) deriving inequality constraints that are necessarily satisfied by all realizable collections of density operators, and ii) proving that if a collection of density operators satisfies these inequalities, then they are realizable. This chapter begins by formally defining the QMP and then reformulating it from a different perspective. The primary advantage of this reformulation of the QMP is that it exposes an implicit symmetry of the problem which is helpful in deriving our main result.


\section{A Reformulation of the QMP}

This section introduces some notation and terminology that is sufficient to formally define both the QMP and an equivalent reformulation that is better suited for the techniques developed in subsequent sections.

First and foremost, every Hilbert space considered will be complex, finite-dimensional, and labeled by some subscript $X$, e.g., $\s H_{X}$. For each labeled Hilbert space, $\s H_{X}$, we will implicitly assume there exists some canonical orthonormal basis such that $\s H_{X} \cong \mathbb C^{d_X}$ where $d_X = \dim(\s H_{X})$. The corresponding set of linear operators, density operators (states), and pure states of $\s H_{X}$ are respectively denoted $\s L(\s H_{X})$, $\dens(\s H_{X})$, and $\proj(\s H_{X})$. The identity operator on $\s H_{X}$ is denoted $\ident_{X}$.

Given a list of labels, $S = (X_1, \ldots, X_k)$, which we identify with the concatenated string, $S \simeq X_1\cdots X_k$, the composite Hilbert space $\s H_{X_1} \otimes \cdots \otimes \s H_{X_k}$, will be labeled by $S$ itself and thus denoted $\s H_{S}$. For instance, if $S = ABC$, then $\s H_{ABC} = \s H_{A} \otimes \s H_{B} \otimes \s H_{C}$. Additionally, for any positive integer $n$ and label $X$, the list of labels consisting of $n$ copies of $X$ will be abbreviated by
\begin{equation}
    nX \coloneqq (X, \stackrel{n}{\ldots}, X) \simeq X\stackrel{n}{\cdots}X
\end{equation}
Using this notational convention, the $n$th tensor-power of a Hilbert space $\s H_{X}$ can be written as $\s H_{X}^{\otimes n} = \s H_{nX} = \s H_{X\stackrel{n}{\cdots}X}$.

Associated to any instance of the QMP, is a \textit{joint} (or \textit{global}) Hilbert space $\s H_{J}$ where $J = (X_1, \ldots, X_p)$ is a given finite list of labels called the \textit{joint context}. Every non-empty sublist $S$ of $J$ will be called a \textit{marginal context}. For each $S \subseteq J$, the partial trace from $\s H_{J}$ onto $\s H_{S}$ will be denoted by
\begin{equation}
    \label{eq:partial_trace}
    \Tr_{J \setminus S} : \s L(\s H_{J}) \to \s L(\s H_{S}).
\end{equation}
A finite, non-empty tuple of marginal contexts,
\begin{equation}
    \s M = (S_1, \ldots, S_m),
\end{equation}
is called a \textit{marginal scenario}. The cardinality of a marginal scenario will always be denoted by $m = \abs{\s M}$.
\begin{prob}[QMP]
    \label{prob:qmp}
    Given a marginal scenario, $\s M = (S_1, \ldots, S_m)$, and list of states $(\rho_{S_1}, \ldots, \rho_{S_m})$ where $\rho_{S_i} \in \dens(\s H_{S_i})$ for each $i \in \{1, \ldots, m\}$, decide if there exists a joint pure state $\psi_{J} \in \proj(\s H_J)$ such that
    \begin{equation}
        \label{eq:qmp}
        \forall S_i \in \s M : \rho_{S_i} = \Tr_{J\setminus S_i}(\psi_{J}).
    \end{equation}
\end{prob}
The assumption above of purity for the joint state is made without loss of generality as shown in~\cref{sec:pure_vs_mixed_qmp}.

Whenever such a pure state $\psi_{J}$ exists, the $m$-tuple of states $(\rho_{S_1}, \ldots, \rho_{S_m})$ is said to be \textit{realizable} and otherwise they are \textit{unrealizable}.

For the sake of brevity, unless otherwise specified, a joint Hilbert space $\s H_{J}$ with joint context $J$ will always be implicitly given together with a particular marginal scenario $\s M = (S_1, \ldots, S_m)$ of length $m$.

Our first step is to reinterpret the $m$ linear constraints imposed on the joint state $\psi_{J}$ by \cref{eq:qmp} as a single linear constraint on the $m$th tensor-power state, $\psi_{J}^{\otimes m}$. Specifically, the $m$-tuple of states $(\rho_{S_1}, \ldots, \rho_{S_m})$ satisfies \cref{eq:qmp} if and only if
\begin{equation}
    \label{eq:parallel}
    \begin{split}
        \rho_{S_1} \otimes \cdots \otimes \rho_{S_m} 
        &= \Tr_{J \setminus S_1}(\psi_{J}) \otimes \cdots \otimes \Tr_{J \setminus S_m}(\psi_{J}), \\
        &= (\Tr_{J \setminus S_1} \otimes \cdots \otimes \Tr_{J \setminus S_m})(\psi_{J}^{\otimes m}).
    \end{split}
\end{equation}
This observation motivates the following definitions.
\begin{defn}
    The \textit{Hilbert space on $\s M$}, denoted $\s H_{\s M}$, is
    \begin{equation}
        \s H_{\s M} \coloneqq \s H_{S_1} \otimes \cdots \otimes \s H_{S_m}.
    \end{equation}
    The \textit{partial trace from $mJ$ onto $\s M$}, denoted $\Tr_{mJ\setminus \s M}$, is
    \begin{equation}
        \Tr_{mJ\setminus \s M} \coloneqq \Tr_{J\setminus S_1} \otimes \cdots \otimes \Tr_{J\setminus S_m}.
    \end{equation}
\end{defn}
Note that the partial trace from $mJ$ onto $\s M$, $\Tr_{mJ\setminus \s M}$, is simply the partial trace operation mapping elements of $\s L(\s H_{mJ}) = \s L(\s H_{J}^{\otimes m})$ to elements of $\s L(\s H_{\s M}) = \s L(\s H_{S_1} \otimes \cdots \otimes \s H_{S_m})$.
\begin{defn}
    An \textit{$\s M$-product state} is any state, $\rho_{\s M} \in \dens(\s H_{\s M})$, of the form
    \begin{equation}
        \rho_{\s M} = \rho_{S_1} \otimes \cdots \otimes \rho_{S_m}
    \end{equation}
    where each component, $\rho_{S_i}$, is a state in $\dens(\s H_{S_i})$.
\end{defn}
Since there is a bijection between $m$-tuples of states $(\rho_{S_1}, \ldots, \rho_{S_m})$ and $\s M$-product states $\rho_{S_1} \otimes \cdots \otimes \rho_{S_m}$, the QMP can be equivalently restated entirely in terms of $\rho_{\s M}$.
\begin{prob}
    \label{prob:cqmp}
    Given an $\s M$-product state, $\rho_{\s M}$, determine whether or not there exists a pure state $\psi_{J}$ such that
    \begin{equation}
        \label{eq:cqmp}
        \rho_{\s M} = \Tr_{mJ\setminus \s M}(\psi_{J}^{\otimes m}).
    \end{equation}
\end{prob}
The equivalence between \cref{prob:qmp} and \cref{prob:cqmp} follows from \cref{eq:parallel}; moreover, whenever such a pure state $\psi_{J}$ exists in either formulation of the QMP, it satisfies both \cref{eq:qmp} and \cref{eq:cqmp}. Pursuant to this equivalence, an $\s M$-product state, $\rho_{S_1} \otimes \cdots \otimes \rho_{S_m}$, is said to be \textit{realizable} whenever the $m$-tuple of states $(\rho_{S_1}, \ldots, \rho_{S_m})$ is realizable (and \textit{unrealizable} otherwise). The set of all realizable $\s M$-product states will be denoted $\s C_{\s M}$.

\section{Necessary and Sufficient Inequality Constraints}

In this section, we construct inequalities that are necessarily satisfied by \textit{all} realizable $\s M$-product states, $\rho_{\s M}$.
These inequalities, therefore, can be used to answer the QMP in the negative; if an $\s M$-product state violates any of the forthcoming inequalities, then it is unrealizable.
In addition, it will be shown that if an $\s M$-product state, $\rho_{\s M}$, satisfies all of the forthcoming inequalities, then it must be realizable.

These inequalities emerge from considering the permutation symmetry of the $k$th tensor power, $\psi_J^{\otimes k}$, of a pure state $\psi_J \in \proj(\s H_{J})$. For each $k \in \mathbb N$, let $\sym_{k}$ be the symmetric group on $k$ symbols, and let $T_{J} : \sym_{k} \to \s L(\s H_{J}^{\otimes k})$ be the representation of $\sym_k$ acting on $\s H_{J}^{\otimes k}$ by permutation of its $k$ factors.
\begin{defn}
    \label{defn:symsub}
    The $k$th symmetric subspace $\symsub^{k}\s H_{J} \subseteq \s H_{J}^{\otimes k}$ is defined as
    \begin{equation}
        \symsub^{k}\s H_{J} = \{ \ket{\phi} \in \s H_{J}^{\otimes k} \mid \forall \pi \in \sym_k, T_{J}(\pi) \ket{\phi} = \ket{\phi} \}.
    \end{equation}
    The orthogonal projection operator onto $\symsub^{k}\s H_{J}$ will be denoted by $\Pi_{J}^{(k)}$.
\end{defn}
Given any vector $\ket{\psi_{J}} \in \s H_{J}$, it is straightforward to verify that $\ket{\psi_{J}}^{\otimes k}$ is an element of the $k$th symmetric subspace $\symsub^{k}\s H_{J} \subseteq \s H_{J}^{\otimes k}$.
\begin{prop}
    \label{prop:copy_sym}
    Let $\psi_{J} = \ket{\psi_J}\bra{\psi_J}$ be a pure state, let $k \in \mathbb N$. Then
    \begin{equation}
        \label{eq:kth_inclusion}
        \psi_{J}^{\otimes k} \leq \Pi_{J}^{(k)},
    \end{equation}
    where $\Pi_{J}^{(k)}$ is defined in \cref{defn:symsub}.
\end{prop}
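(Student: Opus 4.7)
The proof is essentially immediate once one unpacks the definitions, so my plan consists of a single short argument. The strategy is to verify that the rank-one projector $\psi_J^{\otimes k}$ has its range contained in the symmetric subspace $\symsub^{k}\s H_{J}$, and then invoke the standard operator inequality for projectors onto nested subspaces.

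First, I would observe that the vector $\ket{\psi_J}^{\otimes k} \in \s H_J^{\otimes k}$ lies in the symmetric subspace. This is because, for every permutation $\pi \in \sym_k$, the representation $T_J(\pi)$ acts by permuting the $k$ tensor factors, and all $k$ factors of $\ket{\psi_J}^{\otimes k}$ are identical; hence $T_J(\pi)\ket{\psi_J}^{\otimes k} = \ket{\psi_J}^{\otimes k}$, which matches the defining condition of $\symsub^{k}\s H_{J}$ in \cref{defn:symsub}. Consequently, $\Pi_J^{(k)} \ket{\psi_J}^{\otimes k} = \ket{\psi_J}^{\otimes k}$.

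Next, I would note that $\psi_J^{\otimes k} = (\ket{\psi_J}\bra{\psi_J})^{\otimes k} = \ket{\psi_J}^{\otimes k}\bra{\psi_J}^{\otimes k}$ is itself the rank-one orthogonal projector onto the one-dimensional subspace $\mathbb{C}\ket{\psi_J}^{\otimes k}$, which, by the previous paragraph, is contained in the range of $\Pi_J^{(k)}$. Since $\Pi_J^{(k)} - \psi_J^{\otimes k}$ is then the orthogonal projector onto the orthogonal complement of $\mathbb{C}\ket{\psi_J}^{\otimes k}$ inside $\symsub^{k}\s H_{J}$, it is positive semidefinite, yielding $\psi_J^{\otimes k} \leq \Pi_J^{(k)}$.

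There is no real obstacle here: the only subtlety is making sure that the ``obvious'' inequality between two projectors whose ranges are nested is handled carefully, which can be done either by the explicit decomposition $\Pi_J^{(k)} = \psi_J^{\otimes k} + (\Pi_J^{(k)} - \psi_J^{\otimes k})$ with the second summand shown positive, or by invoking a general lemma that $\text{range}(P) \subseteq \text{range}(Q)$ implies $P \leq Q$ for orthogonal projectors $P,Q$.
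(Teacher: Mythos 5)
Your proof is correct and follows exactly the route the paper has in mind: the paper itself does not give a formal proof but simply remarks, immediately before the proposition, that $\ket{\psi_J}^{\otimes k}$ is straightforwardly seen to lie in $\symsub^{k}\s H_J$, from which the operator inequality between nested projectors is immediate. Your write-up fills in the routine details (invariance under every $T_J(\pi)$, and positivity of $\Pi_J^{(k)} - \psi_J^{\otimes k}$ via nested ranges) correctly.
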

Throughout this chapter, an inequality $A \geq B$ between Hermitian operators $A$ and $B$ always indicates that $A - B$ is positive semidefinite, i.e. $A - B \geq 0$. See \cite[Section V]{bhatia1997matrix}.

By comparing \cref{eq:cqmp} with \cref{eq:kth_inclusion}, and recalling that partial traces are positive channels, it becomes possible to \textit{eliminate} $\psi_{J}$ from \cref{eq:cqmp}. For example, when $k=m$, the partial trace $\Tr_{mJ \setminus \s M}$ applied to \cref{eq:kth_inclusion} yields $\Tr_{mJ\setminus\s M}(\psi_{J}^{\otimes m}) \leq \Tr_{mJ\setminus\s M}(\Pi_{J}^{(m)})$ and thus we obtain the following corollary.
\begin{cor}
    \label{cor:n1}
    If $\rho_{\s M}$ is a realizable $\s M$-product state, then
    \begin{equation}
        \label{eq:n1}
        \rho_{\s M} \leq \Tr_{mJ\setminus\s M}(\Pi_{J}^{(m)}).
    \end{equation}
\end{cor}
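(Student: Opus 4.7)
The proof proposal is essentially already sketched in the paragraph preceding the corollary, so the plan is short. The plan is to take the operator inequality from \cref{prop:copy_sym} at the specific value $k = m$, namely
\begin{equation}
    \psi_{J}^{\otimes m} \leq \Pi_{J}^{(m)},
\end{equation}
and apply the partial trace $\Tr_{mJ\setminus\s M}$ to both sides. Since $\rho_{\s M}$ is assumed realizable, \cref{eq:cqmp} gives $\rho_{\s M} = \Tr_{mJ\setminus \s M}(\psi_{J}^{\otimes m})$ for some pure $\psi_{J} \in \proj(\s H_{J})$, and so the left-hand side becomes exactly $\rho_{\s M}$ while the right-hand side becomes $\Tr_{mJ\setminus\s M}(\Pi_{J}^{(m)})$.

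The single nontrivial input is the monotonicity of $\Tr_{mJ\setminus\s M}$ with respect to the positive semidefinite ordering, i.e.\ that partial trace preserves operator inequalities $A \leq B \implies \Tr_{mJ\setminus\s M}(A) \leq \Tr_{mJ\setminus\s M}(B)$. This is a standard fact since the partial trace is completely positive; if I were writing the proof out in full I would either cite it directly or briefly justify it by writing $\Tr_{mJ\setminus\s M}$ as a sum $\sum_{i} (\ident_{\s M} \otimes \bra{i})( \cdot )(\ident_{\s M} \otimes \ket{i})$ with $\{\ket{i}\}$ an orthonormal basis of the traced-out system, from which positivity preservation is immediate.

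I do not anticipate any real obstacle here: the interesting content of the result lies in \cref{prop:copy_sym} (which is a consequence of the tautology that $\ket{\psi_J}^{\otimes m}$ is permutation-invariant, hence lies in $\symsub^{m}\s H_{J}$) together with the reformulation of the QMP as \cref{prob:cqmp}. The corollary itself is then a one-line consequence obtained by pushing the symmetric-subspace dominance through the partial trace. The only mild subtlety worth flagging in the writeup is that the inequality must be applied \emph{before} taking the partial trace, because otherwise one would need to know something about $\psi_J$ itself rather than just $\rho_{\s M}$; the whole point is that the right-hand side $\Tr_{mJ\setminus\s M}(\Pi_{J}^{(m)})$ is a fixed operator depending only on the scenario $\s M$ and the joint Hilbert space $\s H_{J}$, and hence gives a state-independent necessary condition on realizable $\s M$-product states.
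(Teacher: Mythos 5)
Your proposal is correct and is exactly the argument the paper itself gives in the paragraph immediately preceding \cref{cor:n1}: apply \cref{prop:copy_sym} at $k=m$, push through the positive map $\Tr_{mJ\setminus\s M}$, and invoke \cref{eq:cqmp} to identify the left-hand side with $\rho_{\s M}$. No gaps; the added remark about the order of operations is a fair observation but not a point of divergence from the paper.
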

In \cref{sec:n1}, it is shown that the utility of this constraint is quite sensitive to the marginal scenario under consideration. For certain marginal scenarios, \cref{eq:n1} happens to be satisfied by all $\s M$-product states, and thus is useless for the purposes of the QMP. Nevertheless, for other marginal scenarios, \cref{eq:n1} happens to be violated by some $\s M$-product states, and thus it serves as a non-trivial condition for the realizability of an $\s M$-product state $\rho_{\s M}$.

Analogous reasoning can be used to construct stronger inequality constraints for the QMP. When $k$ is a multiple of $m$, $k=nm$, one can apply the $n$th tensor power of $\Tr_{mJ\setminus\s M}$ to both sides of \cref{eq:kth_inclusion}. While it is clear from the preceding discussion that this will yield inequality constraints necessarily satisfied by all $\s M$-product states, we will additionally show that their satisfaction for all $n \in \mathbb N$ is \textit{sufficient} to conclude that $\rho_{\s M}$ is a realizable $\s M$-product state.
\begin{thm}
    \label{thm:main}
    An $\s M$-product state, $\rho_{\s M}$, is realizable if and only if for all $n \in \mathbb N$,
    \begin{equation}
        \label{eq:nth_order}
        \rho_{\s M}^{\otimes n} \leq \Tr_{mJ\setminus\s M}^{\otimes n}(\Pi^{(nm)}_{J}).
    \end{equation}
\end{thm}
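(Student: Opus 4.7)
The forward direction is essentially immediate. If $\rho_{\s M} = \Tr_{mJ\setminus \s M}(\psi_J^{\otimes m})$ for some pure $\psi_J \in \proj(\s H_J)$, then since $\Tr_{mJ\setminus \s M}^{\otimes n}$ is a tensor product of partial traces, $\rho_{\s M}^{\otimes n} = \Tr_{mJ\setminus \s M}^{\otimes n}(\psi_J^{\otimes nm})$. By \cref{prop:copy_sym} we have $\psi_J^{\otimes nm} \leq \Pi_J^{(nm)}$, and applying the positive map $\Tr_{mJ\setminus \s M}^{\otimes n}$ to both sides preserves the operator inequality, yielding \cref{eq:nth_order}.

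For the converse, assume \cref{eq:nth_order} holds for every $n$ and aim to produce a pure $\psi_J$ realizing $\rho_{\s M}$. First I would use Schur's lemma (applied to the irreducible $\GL(\s H_J)$-representation on $\symsub^{nm}\s H_J$) to rewrite the projector as $\Pi_J^{(nm)} = c_n \int \psi_J^{\otimes nm} \diff \mu(\psi_J)$, where $\mu$ is the $\U(\s H_J)$-invariant probability measure on $\proj(\s H_J)$ and $c_n = \binom{nm+d_J-1}{nm}$ is polynomial in $n$. Writing $\sigma_{\s M}(\psi_J) \coloneqq \Tr_{mJ\setminus\s M}(\psi_J^{\otimes m})$, which is a realizable $\s M$-product state for every $\psi_J$, the hypothesis becomes
\begin{equation}
    \rho_{\s M}^{\otimes n} \leq c_n \int \sigma_{\s M}(\psi_J)^{\otimes n} \diff \mu(\psi_J).
\end{equation}
So the right-hand side, up to a polynomial prefactor, is a convex mixture of $n$th tensor powers of realizable $\s M$-product states, and $\rho_{\s M}^{\otimes n}$ is operator-dominated by it.

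I would then test this operator inequality against a sequence of POVM elements that asymptotically single out $\rho_{\s M}$. Concretely, take $F_n^U$ to be the POVM element of a Keyl-type estimation scheme for $\s M$-product states (built, as in \cref{exam:marginal_estimation_schemes}, by tensoring Keyl state estimators across the marginal factors) corresponding to the event that the estimate lies in a neighborhood $U$ of $\rho_{\s M}$. These satisfy $\Tr(F_n^U \rho_{\s M}^{\otimes n}) \to 1$ while $\Tr(F_n^U \sigma_{\s M}^{\otimes n}) \leq e^{-n I_U(\sigma_{\s M})}$ for a rate function $I_U$ that is continuous in $\sigma_{\s M}$ and strictly positive off of $U$. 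Tracing the displayed inequality against $F_n^U$ gives
\begin{equation}
    \Tr(F_n^U \rho_{\s M}^{\otimes n}) \leq c_n \int e^{-n I_U(\sigma_{\s M}(\psi_J))} \diff \mu(\psi_J).
\end{equation}
If $\rho_{\s M}$ is not realizable, then $\sigma_{\s M}(\proj(\s H_J))$ is a compact subset of the $\s M$-product states that avoids $\rho_{\s M}$; choosing $U$ small enough that $\overline{U}$ is disjoint from this image, continuity and compactness yield a strictly positive infimum of $I_U \circ \sigma_{\s M}$ over $\proj(\s H_J)$. The right-hand side then decays exponentially in $n$ while the left-hand side tends to $1$, a contradiction. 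Hence some $\psi_J$ must realize $\rho_{\s M}$.

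The main obstacle is constructing the POVMs $F_n^U$ with uniform exponential decay on a compact family of $\s M$-product states and identifying a rate function whose zero set is precisely $\{\rho_{\s M}\}$; this is a large-deviations statement for collective state estimation. I expect it to follow from Keyl's spectral/state-estimation theorem applied marginal-by-marginal, with the Keyl divergence $\keyl{\rho}{\sigma}$ playing the role of the single-factor rate function, summed over the marginal factors. A subsidiary technical point is verifying that the polynomial prefactor $c_n$ is indeed swamped by the exponential decay uniformly in $\psi_J$; this reduces to verifying a positive lower bound on the rate function on the compact set $\sigma_{\s M}(\proj(\s H_J))$ once unrealizability is assumed.
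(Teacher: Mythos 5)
Your forward direction is exactly the paper's argument. Your reverse direction also follows the paper's strategy in its two essential ingredients — the Schur's-lemma/de~Finetti representation of $\Tr_{mJ\setminus\s M}^{\otimes n}(\Pi_J^{(nm)})$ as a polynomially-weighted mixture of $\sigma_{\s M}^{\otimes n}$ over $\s C_{\s M}$, plus a Keyl-type state-discrimination estimate — but the way you deploy the Keyl machinery differs from the paper in two respects that affect how much residual work is left.

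First, you test the operator inequality against a \emph{neighborhood} POVM element $F_n^U$ ("estimate lands in $U$"), which forces you to prove both a typicality statement $\Tr(F_n^U\rho_{\s M}^{\otimes n})\to 1$ and a uniform exponential decay $\Tr(F_n^U\sigma_{\s M}^{\otimes n})\le e^{-nI_U(\sigma_{\s M})}$, and then to control the infimum of $I_U\circ\sigma_{\s M}$ over the compact image. The paper instead uses a single rank-one-per-block \emph{point} estimator $E_n=\Phi_{\lambda^n}^U$ tailored to $\mathrm{spec}(\rho_{\s M})$ and works directly with the \emph{ratio} $\sup_{\sigma_{\s M}\in\s C_{\s M}}\Tr(E_n\sigma_{\s M}^{\otimes n})/\Tr(E_n\rho_{\s M}^{\otimes n})$, for which \cref{thm:constructive_state_discrim} gives a fully explicit, non-asymptotic upper bound $D(s)\exp(-(n-\tbinom{d_{\s M}+1}{2}+1)\infkeyl{\rho_{\s M}})$. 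Pairing this with the trivial lower bound $\binom{nm+d_J-1}{nm}^{-1}$ coming from the de~Finetti representation (your inequality traced against $E_n$, which only needs $\Tr(E_n\rho_{\s M}^{\otimes n})\ne 0$) lets one squeeze $\infkeyl{\rho_{\s M}}\to 0$ without ever proving typicality or worrying about uniformity of an LDP over the compact set — the $n$-uniformity is baked into the explicit ratio bound, and vanishing of $\infkeyl{\rho_{\s M}}$ then forces $\rho_{\s M}\in\s C_{\s M}$ by compactness and \cref{cor:keyl_div_from_R}.

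Second, you propose building the Keyl scheme marginal-by-marginal across the $\s M$-factors, whereas the paper treats $\rho_{\s M}$ as a single density operator on $\s H_{\s M}$ and applies Keyl estimation globally. Either version has a rate vanishing exactly at $\sigma_{\s M}=\rho_{\s M}$, so either can be made to work; but the global version is what the paper's \cref{sec:keyl_exclusive} is set up for, and it avoids the need to argue that a sum of per-factor Keyl rates still has the right zero set and the right uniformity properties after the de~Finetti averaging. The subsidiary technical points you flag (polynomial prefactor swamped by the exponential; positive lower bound on the rate over the compact image) are precisely what the paper's explicit construction in \cref{prop:crit_approx} and \cref{thm:constructive_state_discrim} is engineered to deliver without appeal to an abstract LDP continuity assumption.
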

Note that $\Tr_{mJ\setminus\s M}^{\otimes n}$ is the partial trace operation taking elements of $\s L(\s H_{J}^{\otimes nm})$ to elements of $\s L(\s H_{\s M}^{\otimes n})$.

To prove \cref{thm:main}, we first need the following lemma which states that the upper-bound in \cref{eq:nth_order} represents the expected value of $\sigma_{\s M}^{\otimes n}$, up to normalization, when $\sigma_{\s M}$ is sampled according to a probability measure, $\nu_{\s M}$, whose support is precisely the set of realizable $\s M$-product states, denoted $\s C_{\s M}$.
\begin{lem}
    \label{lem:de_finetti_realizable}
    There exists a probability measure, $\nu_{\s M}$, over $\dens(\s H_{\s M})$, with support $\s C_{\s M}$, such that for all $n \in \mathbb N$,
    \begin{equation}
        \Tr_{mJ\setminus\s M}^{\otimes n}(\Pi^{(nm)}_J) = \tbinom{nm+d_{J}-1}{nm} \int_{\s C_{\s M}} \hspace{-1em}\nu_{\s M}(\diff \sigma_{\s M}) \sigma_{\s M}^{\otimes n},
    \end{equation}
    where $d_J = \dim(\s H_J)$.
\end{lem}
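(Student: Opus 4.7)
The plan is to derive the formula by combining two standard ingredients: the well-known identity expressing the symmetric subspace projector as an integral over pure tensor powers, and the compositionality of the partial trace with respect to tensor powers. Concretely, I would begin by invoking the representation
\begin{equation}
    \Pi_J^{(nm)} = \tbinom{nm+d_J-1}{nm} \int_{\mathbb P \s H_J} \mu(\diff \psi_J) \, \psi_J^{\otimes nm},
\end{equation}
where $\mu$ denotes the unique unitarily invariant (Fubini--Study) probability measure on pure states $\psi_J \in \mathbb P \s H_J$. This is the standard fact that, up to the prefactor $\binom{nm+d_J-1}{nm} = \dim \symsub^{nm}\s H_J$, the uniform measure over pure states averages to the maximally mixed state on the symmetric subspace.

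Next I would apply the channel $\Tr_{mJ\setminus\s M}^{\otimes n}$ to both sides. The key observation is that, after rewriting $\psi_J^{\otimes nm} = (\psi_J^{\otimes m})^{\otimes n}$, the map $\Tr_{mJ\setminus\s M}^{\otimes n}$ factors across the $n$ blocks, yielding
\begin{equation}
    \Tr_{mJ\setminus\s M}^{\otimes n}(\psi_J^{\otimes nm}) = \bigl(\Tr_{mJ\setminus\s M}(\psi_J^{\otimes m})\bigr)^{\otimes n}.
\end{equation}
By the reformulation (\ref{eq:cqmp}) of the QMP, the operator $\Phi(\psi_J) \coloneqq \Tr_{mJ\setminus\s M}(\psi_J^{\otimes m})$ is precisely the realizable $\s M$-product state arising from the joint pure state $\psi_J$, so $\Phi : \mathbb P \s H_J \to \dens(\s H_{\s M})$ takes values in $\s C_{\s M}$.

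Finally, I would define $\nu_{\s M} \coloneqq \Phi_{\ast}\mu$ as the pushforward of $\mu$ along $\Phi$. Exchanging the integral with the channel (permissible since we are on finite-dimensional spaces) gives exactly the claimed identity. To verify the support condition, note that $\Phi$ is continuous and $\mathbb P \s H_J$ is compact, so $\Phi(\mathbb P \s H_J)$ is a compact, hence closed, subset of $\dens(\s H_{\s M})$; moreover it coincides with $\s C_{\s M}$ by definition of realizability. Since $\mu$ has full support on $\mathbb P \s H_J$, the pushforward $\nu_{\s M}$ has support equal to $\Phi(\mathbb P \s H_J) = \s C_{\s M}$.

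I do not anticipate a substantive obstacle here; the only minor subtlety is confirming that the support of $\Phi_{\ast}\mu$ is exactly $\s C_{\s M}$ rather than merely contained in it, which follows from compactness of $\mathbb P \s H_J$ and continuity of $\Phi$. Once this lemma is established, the subsequent proof of \cref{thm:main} should follow by combining \cref{prop:copy_sym} (for necessity) with an argument that saturation of the family of inequalities \eqref{eq:nth_order} for all $n$ forces $\rho_{\s M}$ into the support $\s C_{\s M}$ of $\nu_{\s M}$ (for sufficiency), exploiting the de Finetti-type structure just derived.
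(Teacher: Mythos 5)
Your proposal is correct and follows essentially the same route as the paper's proof: invoke the Schur-lemma integral representation of the symmetric projector, factor $\psi_J^{\otimes nm}=(\psi_J^{\otimes m})^{\otimes n}$, push the channel through the integral, and define $\nu_{\s M}$ as the pushforward of the Fubini--Study measure through $\psi_J\mapsto\Tr_{mJ\setminus\s M}(\psi_J^{\otimes m})$. The paper spells out the support verification in slightly more detail (arguing separately that points outside $\s C_{\s M}$ have a measure-zero neighborhood and that every open set meeting $\s C_{\s M}$ has positive measure), but your compressed argument via compactness of $\proj(\s H_J)$, continuity of $\Phi$, and full support of $\mu$ is an equivalent and valid way to reach the same conclusion.
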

\begin{proof}
    See \cref{sec:proof_finetti_lemma}.
\end{proof}
Now consider a measurement effect, $E_{n}$, acting on $\s H_{\s M}^{\otimes n}$, i.e. a Hermitian operator $E_n \in \s L(\s H_{\s M}^{\otimes n})$ such that $0 \leq E_n \leq \ident_{\s M}^{\otimes n}$. If \cref{eq:nth_order} is satisfied by some state $\rho_{\s M} \in \dens(\s H_{\s M})$, and $\Tr(E_n\rho_{\s M}^{\otimes n}) \neq 0$, then \cref{lem:de_finetti_realizable} implies
\begin{equation}
    \label{eq:prob_ratio_lower}
    \sup_{\sigma_{\s M} \in \s C_{\s M}}\frac{\Tr(E_n\sigma_{\s M}^{\otimes n})}{\Tr(E_n\rho_{\s M}^{\otimes n})} \geq \tbinom{nm+d_{J}-1}{nm}^{-1}.
\end{equation}
Therefore, to show that every $\s M$-product state, $\rho_{\s M}$, eventually violates \cref{eq:nth_order}, and thus prove \cref{thm:main}, it suffices to prove the existence of a sequence of measurement effects, $n \mapsto E_n$, such that the above ratio of probabilities, as a function of increasing $n$, approaches zero faster than $\tbinom{nm+d_{J}-1}{nm}^{-1}$, thus violating \cref{eq:prob_ratio_lower}.

The particular problem of finding a sequence of measurements such that $\Tr(E_n\rho_{\s M}^{\otimes n})$ stays reasonably large while simultaneously minimizing $\Tr(E_n\sigma_{\s M}^{\otimes n})$ for all $\sigma_{\s M}$ distinct from $\rho_{\s M}$ is related to the problems of asymmetric quantum state discrimination and quantum hypothesis testing~\cite{hiai1991proper,ogawa2005strong,pereira2022analytical,hayashi2001asymptotics,hayashi2002two}. Broadly speaking, existing results in these fields are sufficiently strong to establish the claimed violation of \cref{eq:prob_ratio_lower} for each unrealizable $\rho_{\s M} \not \in \s C_{\s M}$. 

Our specific approach, largely inspired by applications of the spectral estimation technique~\cite{keyl2001estimating} to the spectral QMP~\cite{christandl2006spectra,christandl2007nonzero,christandl2018recoupling}, relies on ideas developed by \citeauthor{keyl2006quantum}~\cite{keyl2006quantum} for the purposes of quantum state estimation.
In the interest of being constructive and non-asymptotic, in \cref{sec:keyl_exclusive} it is shown how to construct, for each $\rho_{\s M}$, an explicit sequence of projection operators, $n \mapsto E_n$, such that for all $n \in \mathbb N$,
\begin{equation}
    \label{eq:unrealizability_measure}
    \sup_{\sigma_{\s M} \in \s C_{\s M}}\frac{\Tr(E_{n} \sigma_{\s M}^{\otimes n})}{\Tr(E_{n} \rho_{\s M}^{\otimes n})} \leq \exp( - (n-d_{\s M}^2) \infkeyl{\rho_{\s M}} + c(\rho_{\s M}))
\end{equation}
where $c(\rho_{\s M}) \geq 0$ and $\infkeyl{\rho_{\s M}} \geq 0$ are quantities independent of $n$ (but dependent on $\rho_{\s M}$) and $d_{\s M} = \dim(\s H_{\s M})$. Additionally, it is shown that the exponential rate, $\infkeyl{\rho_{\s M}}$, vanishes if and only if $\rho_{\s M}$ is realizable and thus its positivity can serve as a witness of the unrealizability of $\rho_{\s M}$.

\begin{proof}[Proof of \cref{thm:main}]
    The discussion preceding \cref{thm:main} has already established the ``only if'' portion of \cref{thm:main}: applying $\Tr_{mJ\setminus\s M}^{\otimes n}$ to \cref{eq:kth_inclusion} when $k = nm$ yields
    \begin{align}
        \rho_{\s M}^{\otimes n}
        = \Tr_{mJ\setminus\s M}^{\otimes n}(\psi_J^{\otimes nm})
        \leq \Tr_{mJ\setminus\s M}^{\otimes n}(\Pi^{(nm)}_{J}).
        \label{eq:nth_disco}
    \end{align}
    Therefore, all that remains is to prove the ``if'' portion of \cref{thm:main}.
    Suppose $\rho_{\s M} \in \dens(\s H_{\s M})$ is a state that satisfies \cref{eq:nth_order} for some particular value of $n$. By combining \cref{eq:unrealizability_measure} with \cref{eq:prob_ratio_lower}, we conclude
    \begin{align}
        \label{eq:unrealizability_upper_bound}
        \infkeyl{\rho_{\s M}}
        &\leq \frac{\ln \tbinom{nm + d_{J} - 1}{nm} + c(\rho_{\s M})}{n-d_{\s M}^{2}}.
    \end{align}
    Since $\tbinom{nm + d_{J} - 1}{nm} \in O(n^{d_{J}-1})$ is polynomial of degree $d_{J}-1$ in $n$, the upper bound above approaches zero in the limit as $n \to \infty$. For finite $n$, the inequality in \cref{eq:unrealizability_upper_bound} merely implies that $\infkeyl{\rho_{\s M}}$ must be small. For the purposes of \cref{thm:main}, if a state $\rho_{\s M}$ satisfies \cref{eq:nth_order} for all $n$, \cref{eq:unrealizability_upper_bound} implies that $\infkeyl{\rho_{\s M}} = 0$ and thus $\rho_{\s M} \in \s C_{\s M}$ must be a realizable $\s M$-product state.
\end{proof}

\section{Conclusion}

This chapter makes progress toward an analytic solution to the quantum marginal problem (QMP) by constructing a countably infinite family of necessary operator inequalities whose satisfaction by a given tuple of density operators is sufficient to conclude their realizability.
The primary advantage of this approach is its generality: for any finite Hilbert space dimension(s) and any number of subsystems with arbitrary overlap, the corresponding family of necessary inequalities is shown to be sufficient.
The results of this chapter, therefore, constitute the first solution to the QMP for overlapping marginal contexts that is free of existential quantifiers.
However, the characterization of realizable density operators produced by this approach is not finite, and thus inherently more challenging to compute.
Evidently, further insights will be required to produce a finite set of necessary and sufficient conditions for the overlapping QMP.

\section{Supporting results}

\subsection{Schur-Weyl Decompositions}
\label{sec:schur_weyl}

Given any representation of the symmetric group $\sym_n$ over a finite-dimensional complex space, such as the aforementioned tensor-permutation representation $T : \sym_n \to \s L((\mathbb C^d)^{\otimes n})$, Maschke's Theorem guarantees the representation is \textit{completely reducible} and therefore decomposes into irreducible subrepresentations~\cite[Theorem 1.5.3]{sagan2013symmetric}. Furthermore, the complete set of non-isomorphic irreducible representations of $\sym_n$ is isomorphic to the set of conjugacy classes of $\sym_n$~\cite[Proposition 1.10.1]{sagan2013symmetric} which itself is isomorphic to the set of partitions of $n$.
\begin{defn}
    A \textit{partition of $n$}, $\lambda = (\lambda_1, \ldots, \lambda_\ell)$, is a sequence of non-increasing ($\lambda_i \geq \lambda_{i+1}$) positive integers ($\lambda_i \in \mathbb N$) whose total sum is $n$ (${\sum}_i \lambda_i = n$). The length of $\lambda$ is denoted by $\ell = \ell(\lambda)$. The \textit{set of all partitions of $n$} will be denoted $\yf_n$.
\end{defn}
For each partition $\lambda \in \yf_n$, let $\lilspecht{\lambda} : \sym_n \to \specht{\lambda}$ denote the corresponding irreducible representation of $\sym_n$, otherwise known as the Specht module for $\lambda$~\cite[Section 2.3]{sagan2013symmetric}. Using this notation, the Maschke decomposition of the tensor-permutation representation $T : \sym_n \to \s L((\mathbb C^d)^{\otimes n})$ yields a decomposition of $(\mathbb C^d)^{\otimes n}$,
\begin{align}
    (\mathbb C^{d})^{\otimes n} \cong {\bigoplus}_{\lambda \in \yf_n} \specht{\lambda} \otimes \schur{\lambda}^d,
    \label{eq:maschkes_theorem}
\end{align}
where the $\schur{\lambda}^{d}$ denotes the \textit{multiplicity space}, whose dimension counts the number of isomorphic copies of $\specht{\lambda}$ in $(\mathbb C^{d})^{\otimes n}$. It is also worth noting that $\dim(\schur{\lambda}^d) > 0$ if and only if $\ell(\lambda) \leq d$~\cite{sagan2013symmetric} and therefore the above summands over $\lambda$ are implicitly restricted to the subset $\yf^d_n \subseteq \yf_n$ of partitions of $n$ with length at most $d$. Another result, referred to as Schur-Weyl duality \cite[Chapter 9]{procesi2007lie}, implies that the multiplicity space $\schur{\lambda}^d$ itself supports an irreducible representation of $\GL(d)$, denoted $\lilschur{\lambda} : \GL(d) \to \s L(\schur{\lambda}^d)$. Let $\ket{\phi_{\lambda}} \in \schur{\lambda}^{d}$ denote the unique highest weight vector of $\schur{\lambda}^{d}$ characterized by the property that
\begin{equation}
    \lilschur{\lambda}(\diag(x_1, \ldots, x_d)) \ket{\phi_{\lambda}} = \prod_{i=1}^{d} x_i^{\lambda_i} \ket{\phi_{\lambda}}
\end{equation}
for all $\diag(x_1, \ldots, x_d) \in \mathrm{GL}(d)$.
Furthermore, for each partition $\lambda \in \yf^d_n$, let
\begin{align}
    \label{eq:iota_iso}
    \iota_{\lambda} : \specht{\lambda} \otimes \schur{\lambda}^d \xhookrightarrow{} (\mathbb C^d)^{\otimes n}
\end{align}
be the $\sym_n \times \GL(d)$-intertwining isometry from the isotypic subspace $\specht{\lambda} \otimes \schur{\lambda}^d$ into $(\mathbb C^d)^{\otimes n}$ associated to $\lambda$. Furthermore, let
\begin{equation}
    \Pi_{d}^{\lambda} \coloneqq (\iota_{\lambda})(\iota_{\lambda})^{\dagger}
\end{equation}
denote the corresponding orthogonal projection operator acting on $(\mathbb C^d)^{\otimes n}$.
\begin{prop}
    \label{prop:decomp_sym_operator}
    Let $Q \in \s L((\mathbb C^d)^{\otimes n})$ be an $\sym_n$-invariant operator in the sense that
    \begin{equation}
        \forall g \in \sym_n : T(g) Q T^{\dagger}(g) = Q.
    \end{equation}
    Then $Q$ admits of the following decomposition,
    \begin{equation}
        \label{eq:Q_decomp}
        Q = \bigoplus_{\lambda \in \yf^d_n} \ident_{\specht{\lambda}} \otimes \tau_{\lambda}(Q),
    \end{equation}
    where the $\lambda$-component of $Q$, $\tau_{\lambda}(Q) \in \s L(\schur{\lambda}^{d})$, is defined as
    \begin{equation}
        \tau_\lambda(Q) = \frac{\Tr_{\specht{\lambda}}(\iota_{\lambda}^{\dagger} Q \iota_\lambda)}{\dim(\specht{\lambda})}.
    \end{equation}
\end{prop}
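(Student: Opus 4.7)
The plan is to recognize this proposition as a standard consequence of Schur's lemma applied to the Schur--Weyl decomposition of $(\mathbb C^{d})^{\otimes n}$, and then to verify that the specific formula for $\tau_\lambda(Q)$ is correct by a direct partial trace computation. I will first use the isometries $\iota_\lambda$ from \cref{eq:iota_iso} to conjugate $Q$ into block form. Writing $\ident = \sum_{\lambda \in \yf_n^d} \iota_\lambda \iota_\lambda^{\dagger}$, one obtains
\begin{equation}
    Q = \sum_{\lambda,\mu \in \yf_n^d} \iota_\lambda \, Q_{\lambda,\mu} \, \iota_\mu^{\dagger}, \qquad Q_{\lambda,\mu} \coloneqq \iota_\lambda^{\dagger} Q \iota_\mu \in \s L(\specht{\mu} \otimes \schur{\mu}^d, \specht{\lambda} \otimes \schur{\lambda}^d).
\end{equation}
The $\sym_n$-invariance of $Q$ translates, via the intertwining property of each $\iota_\lambda$ (which is $\sym_n \times \GL(d)$-equivariant), into the statement that $Q_{\lambda,\mu}$ intertwines $\lilspecht{\mu} \otimes \ident_{\schur{\mu}^d}$ and $\lilspecht{\lambda} \otimes \ident_{\schur{\lambda}^d}$.

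Next I would apply Schur's lemma twice. First, when $\lambda \neq \mu$, the irreducible $\sym_n$-representations $\specht{\lambda}$ and $\specht{\mu}$ are non-isomorphic, so $\Hom_{\sym_n}(\specht{\mu},\specht{\lambda}) = 0$ and hence $Q_{\lambda,\mu} = 0$. Second, when $\lambda = \mu$, Schur's lemma combined with the tensor structure forces $Q_{\lambda,\lambda}$ to lie in $\ident_{\specht{\lambda}} \otimes \End(\schur{\lambda}^d)$; that is, there exists a unique operator $A_\lambda \in \s L(\schur{\lambda}^d)$ such that $Q_{\lambda,\lambda} = \ident_{\specht{\lambda}} \otimes A_\lambda$. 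This already establishes the direct-sum decomposition claimed in \cref{eq:Q_decomp}, modulo the identification of $A_\lambda$ with $\tau_\lambda(Q)$.

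The final step is to check that $A_\lambda = \Tr_{\specht{\lambda}}(\iota_\lambda^{\dagger} Q \iota_\lambda)/\dim(\specht{\lambda})$. This is an easy calculation: take the partial trace over the $\specht{\lambda}$ factor of $\iota_\lambda^{\dagger} Q \iota_\lambda = \ident_{\specht{\lambda}} \otimes A_\lambda$, which yields $\dim(\specht{\lambda})\, A_\lambda$, and then divide by $\dim(\specht{\lambda})$.

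I do not expect any real obstacle here; the only subtle point is to make sure Schur's lemma is applied in the correct form for the tensor product $\specht{\lambda} \otimes \schur{\lambda}^d$, namely that an $\sym_n$-equivariant operator on a space of the form $V \otimes W$ with $V$ irreducible must be of the form $\ident_V \otimes B$ for some $B$ on the multiplicity space $W$. This is a textbook consequence of Schur's lemma (one views the operator as an element of $\End(V) \otimes \End(W)$ and notes that the $\sym_n$-invariant part of $\End(V)$ is $\mathbb C \cdot \ident_V$), and so the entire proof reduces to careful bookkeeping of the Schur--Weyl decomposition rather than any substantive new argument.
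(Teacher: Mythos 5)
Your proof is correct and is exactly the argument the paper has in mind; the paper's own proof is a one-line appeal to Schur's lemma, and your write-up simply fills in the standard details (vanishing of off-diagonal blocks for non-isomorphic Specht modules, and the $\ident_{\specht{\lambda}}\otimes A_\lambda$ form of the diagonal blocks) followed by the partial-trace identification of $A_\lambda$.
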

\begin{proof}
    This follows from an application of Schur's lemma~\cite[Theorem 4.29]{hall2015lie}.
\end{proof}

\begin{defn}
    \label{defn:twirled}
    Let $\lilschur{\lambda} : \GL(d) \to \schur{\lambda}^d$ be the irreducible representation of $\GL(d)$ with highest weight vector $\ket{\phi_\lambda} \in \schur{\lambda}^d$ where $\lambda \in \yf_n^d$. For each unitary operator $U \in \U(d) \subseteq \GL(d)$, let the \textit{twirled highest weight vector} be defined as $\ket{\phi^{U}_\lambda} \coloneqq \lilschur{\lambda}(U) \ket{\phi_\lambda} \in \schur{\lambda}^{d}$.
\end{defn}
Consider the quantity $\bra{\phi_\lambda^{U}} \tau_\lambda(\rho^{\otimes n}) \ket{\phi_{\lambda}^{U}}$, which depends only on $\lambda \in \yf_n^d$ and $U^{\dagger} \rho U \in \dens(\mathbb C^d)$.
In \cref{sec:keyl_exclusive}, specifically \cref{prop:hwv_lpm}, we will show that this quantity can be extended to a formula that remains well-defined even when $\lambda$ is permitted to be a non-increasing sequence of \textit{non-negative real numbers}.

\subsection{Spectra \& Partitions}
\label{sec:spectra_partitions}

The purpose of this subsection is to develop a connection between i) partitions $\lambda \in \yf_n^d$ with length at most $d$, and ii) the possible eigenvalues of density operators $\rho \in \dens(\mathbb C^d)$.
\begin{defn}
    A subset $C \subseteq \mathbb R^d$ is called a \textit{convex cone} if it is closed under
    \begin{enumerate}[i)]
        \item addition: for any $x, y \in C$, $x+y \in C$, and
        \item multiplication: for any $x \in C$, and $a \geq 0$, $a x \in C$.
    \end{enumerate}
\end{defn}
Two convex cones that are relevant here will be the cone of non-negative real numbers,
\begin{equation}
    \mathbb R_{\geq 0}^{d} = \{ (x_1, \ldots, x_d) \in \mathbb R^d \mid \forall i : x_i \geq 0\},
\end{equation}
and the subset of non-increasing non-negative real numbers,
\begin{equation}
    \mathbb R_{\geq 0}^{d;\downarrow} = \{ (x_1, \ldots, x_d) \in \mathbb R^d \mid x_1 \geq \cdots \geq x_d \geq 0\}.
\end{equation}
While there is a natural surjective map from $\mathbb R_{\geq 0}^{d}$ to $\mathbb R_{\geq 0}^{d;\downarrow}$ which sorts the elements of $(x_1, \ldots, x_d)$ in a non-increasing order, there is also a \textit{bijective} linear map $\gamma : \mathbb R_{\geq 0}^{d} \to \mathbb R_{\geq 0}^{d;\downarrow}$ which takes partial sums. Specifically, $\gamma$ maps $y = (y_1, \ldots, y_{d}) \in \mathbb R_{\geq 0}^{d}$ to $\gamma(y) = (\gamma_1(y), \ldots, \gamma_d(y))$ where
\begin{equation}
    \label{eq:cumm}
    \gamma_i(y) = y_i + y_{i+1} + \cdots + y_d.
\end{equation}
The inverse of $\gamma$, henceforth denoted $\fd : \mathbb R_{\geq 0}^{d;\downarrow} \to \mathbb R_{\geq 0}^{d}$, takes finite differences; specifically, $\fd$ maps $x = (x_1, \ldots, x_{d}) \in \mathbb R_{\geq 0}^{d;\downarrow}$ to $\fd(x) = (\fd_1(x), \ldots, \fd_d(x))$, where
\begin{equation}
    \label{eq:diffs}
    \fd_i(x) = \fd_i(x_1, \ldots, x_{k}) = \begin{cases} x_i - x_{i+1} & 1 \leq i < d \\ x_d & i = k \end{cases}.
\end{equation}
\begin{defn}
    For each $x = (x_1, \ldots, x_d) \in \mathbb R_{\geq 0}^{d}$, the \textit{size of $x$}, $\abs{x}$, is the sum of its elements
    \begin{equation}
        \abs{x} = x_1 + \cdots + x_d,
    \end{equation}
    If $x$ is not equal to all-zero $d$-tuple, $x \neq (0, \ldots, 0)$, then $\abs{x} > 0$, and the \textit{normalization of $x$} is defined as
    \begin{equation}
        \frac{x}{\abs{x}} = \left(\frac{x_1}{\abs{x}}, \ldots, \frac{x_d}{\abs{x}}\right).
    \end{equation}
\end{defn}
Two subsets of $\mathbb R_{\geq 0}^{d;\downarrow}$ will be crucial to the results of \cref{sec:keyl_exclusive}. The first subset was already discussed in \cref{sec:schur_weyl}, namely partitions of $n$ with length at most $d$: $\yf^d_n \subseteq \mathbb R_{\geq 0}^{d;\downarrow}$. If the length, $\ell$, of $\lambda = (\lambda_1, \ldots, \lambda_\ell)$ is strictly less than $d$, then it can be viewed as a element of  $\mathbb R_{\geq 0}^{d;\downarrow}$ by padding $\lambda$ with $d-\ell$ zeros, i.e. $\lambda \cong (\lambda_1, \ldots, \lambda_\ell, 0, \ldots, 0)$. The second subset corresponds to the set of possible eigenvalues, or spectra, of density operators $\dens(\mathbb C^d)$.
\begin{defn}
    The set of \textit{spectra}, or sorted probability distributions, is
    \begin{equation}
        \spectra^{d} = \{ s \in \mathbb R_{\geq 0}^{d;\downarrow} \mid  {\sum}_{i=1}^{d} s_i = \abs{s} = 1 \}.
    \end{equation}
\end{defn}
While the normalization of any partition, $\lambda \in \yf^d_n$, is a spectrum, $\frac{\lambda}{n} \in \spectra^d$, multiplying a spectrum, $s \in \spectra^d$, by $n \in \mathbb N$ does not necessarily produce a partition because the entries of $ns$, $(ns_1, \ldots, ns_d)$, may not be integer-valued. Nevertheless, $ns \in \mathbb R_{\geq 0}^{d;\downarrow}$ can always be approximated by a partition, $\lambda \in \yf_n^d$, so that $\abs{\lambda_i - ns_i} \leq 1$ for all $i \in \{1, \ldots, d\}$ \footnote{An explicit scheme for accomplishing such an approximation is to let $t = n - {\sum}_{i}\lfloor n s_i \rfloor$ and define $\lambda_i = \lfloor n s_i \rfloor + 1$ whenever $i \leq t$ and $\lambda_i = \lfloor n s_i \rfloor$ whenever $i > t$.}. In \cref{sec:keyl_exclusive}, it will be useful to consider approximating $n s$ with a partition, $\lambda$, in a different manner, where i) degeneracies of $s$ are preserved, i.e., $\delta_i(s) = 0 \implies \delta_i(\lambda) = 0$, and ii) non-degeneracies of $s$ are adequately represented, e.g., $\delta_i(\lambda) \geq \delta_i(ns)$. The next lemma shows that this can always be accomplished by partitions, $\lambda$, whose size is approximately $n$.

\begin{prop}
    \label{prop:crit_approx}
    Let $s = (s_1, \ldots, s_k) \in \spectra_{d}$ be a spectrum and $n \in \mathbb N$. 
    Let 
    \begin{equation}
        \lambda = (\lambda_1, \ldots, \lambda_d) \in \mathbb N_{\geq 0}^{d;\downarrow}
    \end{equation} 
    be defined by
    \begin{equation}
        \lambda_i = \lceil n (s_i - s_{i+1}) \rceil + \cdots + \lceil n (s_{d-1} - s_{d}) \rceil + \lceil n s_d \rceil,
    \end{equation}
    such that $\fd_i(\lambda) = \lceil \fd_i(ns) \rceil$ holds. Then $\lambda$ is a partition of size $\abs{\lambda}$ where 
    \begin{equation}
        \label{eq:crit_approx_3}
        n \leq \abs{\lambda} \leq n + \tbinom{d+1}{2} - 1.
    \end{equation}
\end{prop}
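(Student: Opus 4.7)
The plan is to establish the three assertions in sequence: (i) $\lambda$ consists of non-negative integers, (ii) $\lambda$ is non-increasing, and (iii) the claimed bracketing of $|\lambda|$. Parts (i) and (ii) are essentially built into the definition, so I expect the only real work to be a careful bookkeeping argument for (iii).

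First I would rewrite the defining formula compactly as
\begin{equation}
    \lambda_i = \sum_{j=i}^{d} \lceil n \fd_j(s) \rceil,
\end{equation}
where $\fd_j(s) = s_j - s_{j+1}$ for $j < d$ and $\fd_d(s) = s_d$. Since $s \in \spectra^d$ is non-increasing and non-negative, every $\fd_j(s) \geq 0$, so $\lceil n \fd_j(s) \rceil$ is a non-negative integer; hence each $\lambda_i$ is a non-negative integer, and $\lambda_i - \lambda_{i+1} = \lceil n \fd_i(s) \rceil \geq 0$, so $\lambda$ is non-increasing. This gives (i) and (ii).

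For (iii), I would swap the order of summation to get
\begin{equation}
    |\lambda| = \sum_{i=1}^{d}\sum_{j=i}^{d} \lceil n \fd_j(s) \rceil = \sum_{j=1}^{d} j \lceil n \fd_j(s) \rceil.
\end{equation}
The key identity is the telescoping relation $\sum_{j=1}^{d} j \fd_j(s) = \sum_{i=1}^{d} s_i = 1$, which follows from the inverse-of-$\gamma$ construction of \cref{eq:diffs}. Multiplying by $n$ gives $\sum_j j \cdot n \fd_j(s) = n$. Since $\lceil x \rceil \geq x$, we immediately obtain $|\lambda| \geq n$, the lower bound.

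For the upper bound I would exploit that $\lceil x \rceil < x + 1$ holds for every real $x$ (strict, both for integer and non-integer $x$), so
\begin{equation}
    |\lambda| - n = \sum_{j=1}^{d} j \bigl(\lceil n \fd_j(s) \rceil - n \fd_j(s)\bigr) < \sum_{j=1}^{d} j = \tbinom{d+1}{2}.
\end{equation}
The main subtle point, and the only step requiring care, is upgrading this strict real inequality to the integer bound $|\lambda| - n \leq \binom{d+1}{2} - 1$. This is achieved by noting that $|\lambda|$ and $n$ are both integers, so $|\lambda| - n$ is an integer strictly less than $\binom{d+1}{2}$, hence at most $\binom{d+1}{2} - 1$. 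Combining both bounds yields \cref{eq:crit_approx_3}.
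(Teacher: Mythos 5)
Your proof is correct, and it takes a genuinely different route from the paper's. The paper argues by reverse induction on $i$, establishing the pointwise estimate $0 \leq \lambda_i - n s_i < d - i + 1$ for every $i$ and then summing over $i$ to get the size bound. You instead bypass the pointwise bounds entirely: you write $\lambda_i = \sum_{j \geq i} \lceil n\fd_j(s)\rceil$, swap the order of the double sum to get $|\lambda| = \sum_{j=1}^{d} j\lceil n\fd_j(s)\rceil$, and invoke the telescoping identity $\sum_j j\fd_j(s) = |s| = 1$ (which is just the statement that $\gamma \circ \fd = \mathrm{id}$ evaluated and summed). From there, $\lceil x \rceil \geq x$ gives the lower bound and $\lceil x \rceil < x+1$ gives the strict upper bound, which you correctly tighten by one using integrality of $|\lambda| - n$. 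Your argument is shorter and structurally cleaner; the paper's argument is more verbose but delivers the extra information that $0 \leq \lambda_i - ns_i < d-i+1$ holds componentwise, which the statement of \cref{prop:crit_approx} doesn't actually need. For the result as stated, your route is preferable.
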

\begin{proof}
    First note that for all $1 \leq i \leq d$,
    \begin{equation}
        \epsilon_i \coloneqq \lceil \fd_i(ns) \rceil - \fd_i(ns),
    \end{equation}
    is upper and lower bounded by $0 \leq \epsilon_i < 1$.
    From this observation, it will be shown that $\lambda$ approximates $ns$, specifically,
    \begin{equation}
        \label{eq:crit_approx_2}
        0 \leq \lambda_i - n s_i < d - i + 1,
    \end{equation}
    To prove \cref{eq:crit_approx_2}, we use (reverse) induction starting from the base case of $i = d$.
    Since $\lambda_d = \fd_d(\lambda) = \lceil \fd_d(ns) \rceil = \lceil n s_d \rceil = n s_d + \epsilon_d$, we have $\lambda_d - n s_d = \epsilon_d$ and thus \cref{eq:crit_approx_2} holds when $i = d$. Then, assuming \cref{eq:crit_approx_2} holds for $i = j+1$, we prove it holds for $i = j$. Since
    \begin{align}
        \fd_j(\lambda)
        &= \lambda_{j} - \lambda_{j+1} = \lceil \fd_j(ns) \rceil \\
        &= \fd_j(ns) + \epsilon_j = n s_j - n s_{j+1} + \epsilon_j,
    \end{align}
    we conclude that $\lambda_{j} - n s_j = \lambda_{j+1} - n s_{j+1} + \epsilon_j$ and thus $0 \leq \lambda_{j} - ns_{j} < d - j + 1$ which is \cref{eq:crit_approx_2} for $i = j$.
    Finally, \cref{eq:crit_approx_3} follows from \cref{eq:crit_approx_2} by summing over all $i$:
    \begin{equation}
        0 \leq \abs{\lambda} - n \abs{s} < \sum_{i=1}^{d} (d-i+1) = \tbinom{d+1}{2}.
    \end{equation}
    Since $\abs{\lambda}$ is necessarily an integer, $\abs{s} = 1$, and the upper bound above is strict, \cref{eq:crit_approx_2} holds.
\end{proof}
The bounds proven above are also tight for every $d$: if $s = \tbinom{d+1}{2}^{-1}(d, d-1, \ldots, 1)$, then $n = 1$ or $n = \tbinom{d+1}{2}$ yields $\lambda = (d, d-1, \ldots, 1)$ with size $\abs{\lambda} = \tbinom{d+1}{2}$ which achieves the upper bound when $n = 1$ and the lower bound when $n = \tbinom{d+1}{2}$.

\subsection{Proof of \texorpdfstring{\cref{lem:de_finetti_realizable}}{a realizability lemma}}
\label{sec:proof_finetti_lemma}

\begin{proof}[Proof of \cref{lem:de_finetti_realizable}]
    Let $d_{J} = \dim(\s H_J)$, let $\mu_{J}$ be the $U(d_{J})$-invariant Haar probability measure over the space of pure states $\proj(\s H_J)$,
    For any $k \in \mathbb N$, the orthogonal projection operator, $\Pi_J^{(k)}$, onto the symmetric subspace, $\symsub^{k} \s H_J$, is proportional to the expected value of $\psi_J^{\otimes k}$ when $\psi_J$ is sampled according to the probability measure $\mu_{J}$:
    \begin{equation}
        \label{eq:haar_sym}
        \Pi_{J}^{(k)} = \tbinom{k+d_{J}-1}{k}\int_{\proj(\s H_{J})} \mu_{J}(\diff \psi_{J}) \psi_{J}^{\otimes k},
    \end{equation}
    where the normalization factor is simply $\Tr[\Pi_{J}^{(k)}] = \tbinom{k+d_{J}-1}{k}$. The proof of \cref{eq:haar_sym} follows from Schur's lemma (see \cite[Proposition 6]{harrow2013church}).

    Next, define the map $\tau_{\s M} : \proj(\s H_J) \to \dens(\s H_{\s M})$ by
    \begin{equation}
        \tau_{\s M}(\psi_J) = \Tr_{mJ\setminus\s M}(\psi_J^{\otimes m}).
    \end{equation}
    Let $\nu_{\s M}$ be the push-forward measure of $\mu_{J}$ through $\tau_{\s M}$, i.e. $\nu_{\s M} = \mu_{J} \circ \tau_{\s M}^{-1}$.

    Next note that the coefficients $\tau_{\s M}(\psi_J)$ are homogeneous polynomials of degree $m$ in the coefficients of $\psi_{J}$, and thus $\tau_{\s M}$ is continuous and measurable. Additionally, by construction, the image of $\tau_{\s M}$ is precisely the set of realizable $\s M$-product states $\s C_{\s M}$. Therefore, since $\proj(\s H_J)$ is compact (as $\s H_J$ is finite-dimensional), $\s C_{\s M}$ is also compact (and thus closed).

    Moreover, the support of the pushforward measure, $\nu_{\s M} = \mu_J \circ \tau_{\s M}^{-1}$ is equal to $\s C_{\s M}$. This is because, by the closure of $\s C_{\s M}$, $\rho_{\s M} \not \in \s C_{\s M}$ implies there exists an open set, $O$ containing $\rho_{\s M}$, such that $O \cap \s C_{\s M} = \emptyset$ which implies $\nu_{\s M}(O) = \mu_J(\tau_{\s M}^{-1}(O)) = \mu_J(\emptyset) = 0$, i.e. $\rho_{\s M}$ is not in support of $\nu_{\s M}$. Moreover, if $\sigma_{\s M} \in \s C_{\s M}$ and $O'$ is any open set containing $\sigma_{\s M}$, $\tau_{\s M}^{-1}(O')$ is non-empty and open (by continuity of $\tau_{\s M}$) in $\proj(\s H_J)$ and thus $\nu_{\s M}(O') = \mu_J(\tau_{\s M}^{-1}(O')) > 0$. Therefore, because $\nu_{\s M}(O') > 0$ for all open sets containing $\sigma_{\s M}$, $\sigma_{\s M}$ is in the support $\nu_{\s M}$.

    Finally, using \cref{eq:haar_sym}, linearity of $\Tr_{mJ\setminus\s M}$, and a change of variables,
    \begin{align}
        &\Tr_{mJ\setminus\s M}^{\otimes n}(\Pi^{(nm)}_J) \nonumber \\
        &\quad\propto \int_{\proj(\s H_J)} \mu_{J}(\diff \psi_J) (\Tr_{mJ\setminus\s M}(\psi_J^{\otimes m}))^{\otimes n}, \\
        &\quad= \int_{\proj(\s H_J)} \mu_{J}(\diff \psi_J) (\tau(\psi_J))^{\otimes n}, \\
        &\quad= \int_{\s C_{\s M}} \nu_{\s M}(\diff \sigma_{\s M}) \sigma_{\s M}^{\otimes n}.
    \end{align}
\end{proof}

\subsection{Keyl Divergence \& State Discrimination}
\label{sec:keyl_exclusive}

The purpose of this subsection is to prove \cref{thm:constructive_state_discrim} which can be interpreted as an explicit strategy for asymmetric quantum state discrimination. While \cref{thm:constructive_state_discrim} is exclusively used by this paper in the proof of \cref{thm:main}, it may be of independent interest. Many of the results of this subsection come directly from \citeauthor{keyl2006quantum}'s work on a large-deviation-theoretic approach to quantum state estimation~\cite{keyl2006quantum}. The only additional insight not taken from \cite{keyl2006quantum} is the use of \cref{prop:crit_approx} in the proof of \cref{thm:constructive_state_discrim}. Both \cref{sec:schur_weyl} and \cref{sec:spectra_partitions} are considered prerequisites for this subsection.

\begin{defn}
    Let $x \in \mathbb R_{\geq 0}^{d;\downarrow}$ and let $\rho \in \dens(\mathbb C^{d})$, define
    \begin{equation}
        \label{eq:gpf}
        \Delta_{x}(\rho) = {\prod}_{i=1}^{d} \lpm_{i}(\rho)^{\fd_i(x)},
    \end{equation}
    where $\lpm_i(\rho)$ is the $i$th (leading) principal minor of $\rho$, i.e. the determinant of the upper-left $i\times i$-submatrix of $\rho$ with respect to some fixed, computational basis $\{ \ket{0}, \ldots, \ket{d}\}$. \footnote{If it happens that $\lpm_i(\rho) = 0$ and $\fd_i(x) = 0$ for some index $i$, then the indeterminant expression $0^0$ is taken to be equal to $1$.}
\end{defn}
The function defined in \cref{eq:gpf} is also referred to as the \textit{generalized power function}~\cite[Notation 4.1]{odonnell2016efficient}. Note however, in \cite{odonnell2016efficient}, $x$ is restricted to be a partition with length at most $d$, while $\rho$ is permitted to be any $d \times d$ complex-valued matrix.

\begin{prop}
    Let $s = (s_1, \ldots, s_d) \in \spectra^d$ be the spectrum of $\sigma \in \dens(\mathbb C^d)$. For all $x = (x_1, \ldots, x_d) \in \mathbb R_{\geq 0}^{k; \downarrow}$,
    \begin{equation}
        \Delta_{x}(\sigma) \leq \Delta_{x}(\diag(s_1, \ldots, s_d)) = {\prod}_{i=1}^{d} s_i^{x_i},
    \end{equation}
    with equality holding if and only if $\sigma = \diag(s_1,\ldots, s_d)$.
\end{prop}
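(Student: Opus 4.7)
The plan is to reduce the inequality to a standard multiplicative variant of the Ky Fan / Fischer inequality for leading principal minors of a positive semidefinite matrix, combined with the fact that all the exponents $\delta_i(x)$ are non-negative.

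First I would verify the stated equality $\Delta_x(\diag(s_1,\dots,s_d))=\prod_i s_i^{x_i}$ by direct computation: for the diagonal matrix, $\lpm_i(\diag(s))=s_1\cdots s_i$, so
\begin{equation}
\Delta_x(\diag(s))=\prod_{i=1}^d (s_1\cdots s_i)^{\delta_i(x)}=\prod_{j=1}^d s_j^{\sum_{i\ge j}\delta_i(x)}=\prod_{j=1}^d s_j^{x_j},
\end{equation}
the last equality being the telescoping identity $x_j=\sum_{i\ge j}\delta_i(x)$ already implicit in the inverse relationship between $\gamma$ and $\delta$ of \cref{sec:spectra_partitions}.

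Next, the key analytic input is the estimate
\begin{equation}
\lpm_i(\sigma)\;\le\;s_1 s_2\cdots s_i
\end{equation}
valid for any positive semidefinite $\sigma$ with sorted spectrum $s_1\ge\cdots\ge s_d$. This is a standard consequence of the multiplicative Ky Fan principle, which states that $s_1\cdots s_i=\max_{V}\det(P_V\sigma P_V|_V)$ as $V$ ranges over $i$-dimensional subspaces of $\mathbb C^d$; specializing $V$ to the span of the first $i$ computational basis vectors gives the claim. (Equivalently, one can cite Cauchy interlacing, which gives that the eigenvalues of the upper-left $i\times i$ block are majorized from above by $s_1,\ldots,s_i$.) Since $x\in\mathbb R_{\ge 0}^{d;\downarrow}$ forces $\delta_i(x)\ge 0$ for every $i$, I may raise the displayed inequality to the $\delta_i(x)$-th power and multiply over $i$ without spoiling the direction of the inequality, obtaining
\begin{equation}
\Delta_x(\sigma)=\prod_{i=1}^d \lpm_i(\sigma)^{\delta_i(x)}\le \prod_{i=1}^d (s_1\cdots s_i)^{\delta_i(x)}=\prod_{j=1}^d s_j^{x_j},
\end{equation}
which is the desired bound.

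For the equality case, I would argue that equality in the overall product requires $\lpm_i(\sigma)=s_1\cdots s_i$ for every index $i$ with $\delta_i(x)>0$. The equality condition in the multiplicative Ky Fan bound forces the upper-left $i\times i$ block to have eigenvalues exactly $\{s_1,\ldots,s_i\}$, which in turn (by the equality case of Cauchy interlacing) forces $\sigma$ to be block-diagonal between the first $i$ and last $d-i$ standard basis vectors. Applying this simultaneously for all indices where $\delta_i(x)>0$ collapses $\sigma$ to a diagonal matrix whose diagonal is the sorted spectrum $(s_1,\ldots,s_d)$, up to the usual freedom within degenerate eigenspaces. The only real subtlety is that, strictly speaking, the ``only if'' direction is vacuous on those indices where $\delta_i(x)=0$ (so one should read the proposition as asserting the implication in its natural generic sense, for $x$ sufficiently non-degenerate); I expect this bookkeeping around the equality case to be the main obstacle to a completely watertight statement, rather than the inequality itself, which is a short chain of well-known facts.
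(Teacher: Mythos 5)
Your proof is essentially the paper's proof: the paper also establishes $\lpm_i(\sigma)\le s_1\cdots s_i$ via Cauchy's interlacing theorem, then raises each factor to the non-negative power $\fd_i(x)$ and multiplies. Your mention of the multiplicative Ky Fan variational principle is a legitimate alternative phrasing of the same input; it does not change the argument's structure. The telescoping verification of the equality $\Delta_x(\diag(s))=\prod_j s_j^{x_j}$ is also the same.

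Where you go beyond the paper is the equality case, and you are right to flag it. The proposition as stated claims ``equality holding if and only if $\sigma=\diag(s_1,\ldots,s_d)$'' for all $x\in\mathbb R^{d;\downarrow}_{\ge 0}$, but this is genuinely false for degenerate $x$. For example, take $x=(1,1,\ldots,1)$: then $\fd_i(x)=0$ for $i<d$, so $\Delta_x(\sigma)=\lpm_d(\sigma)=\det\sigma=\prod_i s_i$ holds for every $\sigma$ with spectrum $s$, diagonal or not. The paper's proof sidesteps this by only asserting that equality ``for all $i$'' in the minor inequalities forces $\sigma=\diag(s)$, which is a correct statement but does not establish the proposition's ``only if'' when some $\fd_i(x)=0$. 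Your observation that equality in the product only pins down $\lpm_i(\sigma)$ at indices with $\fd_i(x)>0$ is exactly the right diagnosis. One should also note that this imprecision propagates into \cref{cor:keyl_div}, whose ``$\keyl{\rho}{\sigma}=0$ if and only if $\sigma=\rho$'' inherits the same failure when $\rho$ has degenerate spectrum (e.g.\ when $\rho$ is maximally mixed, $\keyl{\rho}{\sigma}=0$ iff $\det\sigma=d^{-d}$, which is not $\sigma=\rho$); a fully rigorous treatment of the equality case in the degenerate-spectrum regime requires a bit more care than either the paper or this proposition provides, even though the underlying inequality and the asymptotic rate-function conclusions remain sound.
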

\begin{proof}
    Consider any $i \in \{1, \ldots, d\}$. Let $(s^{(i)}_{1}, \ldots, s^{(i)}_i)$ with $s^{(i)}_1 \geq \cdots \geq s^{(i)}_i$ denote the eigenvalues of the $i\times i$ leading principal submatrix of $\sigma$ so that $\lpm_i(\sigma) = s^{(i)}_1 \cdots s^{(i)}_i$.
    According to Cauchy's interlacing theorem (see \cite{fisk2005very} or \cite[Thm. 4.3.17]{horn1985matrix} noting the reversed ordering of labels), for all $1 < i \leq d$,
    \begin{equation}
        s^{(i)}_{1} \geq s^{(i-1)}_1 \geq s^{(i)}_{2} \geq \cdots \geq s^{(i)}_{i-1} \geq s^{(i-1)}_{i-1} \geq s^{(i)}_{i}.
    \end{equation}
    Therefore, for any $k$ and $i$ such that $1 \leq k \leq i \leq d$, $s_k = s^{(d)}_k \geq s^{(i)}_k \geq s^{(k)}_k$. Therefore, for all $i \in \{1, \ldots, d\}$,
    \begin{equation}
        \label{eq:maximimal_principal_minor}
        \lpm_i(\sigma) \leq s_1 \cdots s_i,
    \end{equation}
    with equality holding (for all $i$) only if $\sigma = \diag(s_1,\ldots, s_d)$.
\end{proof}
\begin{prop}
    \label{prop:shannon}
    Let $s = (s_1, \ldots, s_d) \in \spectra^d \subseteq \mathbb R_{\geq 0}^{d;\downarrow}$. Then
    \begin{equation}
        \Delta_s(\mathrm{diag}(s)) = {\prod}_{i=1}^{d} s_i^{s_i} = \exp(- H(s)) > 0.
    \end{equation}
    where $H(s) = -{\sum}_{i=1}^{d} s_i \ln s_i$ is the Shannon entropy of $s$.
\end{prop}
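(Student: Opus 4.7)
The plan is a direct computation from the definition of $\Delta_x(\rho)$ in \cref{eq:gpf}, exploiting the fact that $\rho = \mathrm{diag}(s)$ makes every leading principal minor a simple product of the first few $s_i$'s, combined with the duality between the finite-difference map $\fd$ (\cref{eq:diffs}) and the partial-sum map $\gamma$ (\cref{eq:cumm}).

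First I would observe that, since $\mathrm{diag}(s)$ is diagonal in the computational basis used to define $\lpm_i$, its $i$th leading principal minor is just $\lpm_i(\mathrm{diag}(s)) = s_1 s_2 \cdots s_i$. Substituting into \cref{eq:gpf} gives
\begin{equation}
    \Delta_s(\mathrm{diag}(s)) = \prod_{i=1}^{d}(s_1 s_2 \cdots s_i)^{\fd_i(s)} = \prod_{j=1}^{d} s_j^{\sum_{i=j}^{d} \fd_i(s)},
\end{equation}
where the second equality simply regroups factors by the index $j$ of $s_j$.

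Next I would recognize the inner exponent: by \cref{eq:cumm} and the fact that $\gamma = \fd^{-1}$, one has $\sum_{i=j}^{d}\fd_i(s) = \gamma_j(\fd(s)) = s_j$. Hence
\begin{equation}
    \Delta_s(\mathrm{diag}(s)) = \prod_{j=1}^{d} s_j^{s_j} = \exp\!\left(\sum_{j=1}^{d} s_j \ln s_j\right) = \exp(-H(s)),
\end{equation}
which is the claimed identity. The only mild subtlety is handling entries with $s_j = 0$, but the convention $0^0 = 1$ built into the definition of $\Delta_x$ makes those factors equal to $1$, consistently with the usual convention $0 \ln 0 = 0$ in $H(s)$.

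Finally, strict positivity is immediate: each factor $s_j^{s_j}$ is either $1$ (when $s_j = 0$) or a positive real (when $0 < s_j \leq 1$), so the product is strictly positive. The only real obstacle is essentially bookkeeping, namely carrying out the reindexing $\sum_{i=j}^{d}\fd_i(s) = s_j$ cleanly; there is no analytic content beyond that.
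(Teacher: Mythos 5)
Your proof is correct and is the direct computation one would expect; the paper itself states \cref{prop:shannon} without proof, treating it as an elementary consequence of \cref{eq:gpf}, \cref{eq:diffs}, and \cref{eq:cumm}. The key step—regrouping $\prod_i(s_1\cdots s_i)^{\fd_i(s)}$ by the index $j$ of each $s_j$ and recognizing the resulting exponent as the telescoping sum $\gamma_j(\fd(s)) = s_j$—is exactly right, and the remarks about the $0^0 = 1$ and $0\ln 0 = 0$ conventions correctly handle the case of zero eigenvalues.
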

\begin{cor}
    \label{cor:keyl_div}
    Let $\rho \in \dens(\mathbb C^{d})$ have spectrum $s = (s_1, \ldots, s_d) \in \spectra^d$ and let $U \in \U(d)$ be a unitary such that $\rho = U \diag(s_1, \ldots, s_d) U^{\dagger}$ and let $\sigma \in \dens(\mathbb C^{d})$. Then
    \begin{equation}
        \frac{\Delta_{s}(U^{\dagger} \sigma U)}{\Delta_{s}(\diag(s))} = \exp(-\keyl{\rho}{\sigma})
    \end{equation}
    where $\keyl{\rho}{\sigma}$ is defined as
    \begin{equation}
        \keyl{\rho}{\sigma} = {\sum}_{i=1}^{d} s_i \ln s_i - \fd_i(s) \ln \lpm_i(U^{\dagger} \sigma U),
    \end{equation}
    where $\keyl{\rho}{\sigma} \in [0, \infty]$ and $\keyl{\rho}{\sigma} = 0$ if and only if $\sigma = \rho$.
\end{cor}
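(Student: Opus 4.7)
The plan is to unpack the definitions to verify the identity, then establish non-negativity by composing two majorization-style inequalities, and finally handle the equality case.

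\textbf{The identity.} The equation $\Delta_s(U^\dagger \sigma U)/\Delta_s(\diag(s)) = \exp(-\keyl{\rho}{\sigma})$ is a direct bookkeeping consequence of $\Delta_s(\diag(s)) = \prod_i s_i^{s_i} = \exp(-H(s))$ from \cref{prop:shannon}. Substituting the definition $\ln \Delta_s(U^\dagger \sigma U) = \sum_i \fd_i(s) \ln \lpm_i(U^\dagger \sigma U)$ into the logarithm of the ratio yields exactly $-\keyl{\rho}{\sigma}$ after cancelling $-H(s)$ with $\sum_i s_i \ln s_i$.

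\textbf{Non-negativity.} Let $t \in \spectra^d$ denote the spectrum of $\sigma$ (and hence of $U^\dagger \sigma U$). I would first apply the preceding minor-bound proposition to $U^\dagger \sigma U$ with weight $x = s$, obtaining $\Delta_s(U^\dagger \sigma U) \leq \Delta_s(\diag(t))$. I would then compare $\Delta_s(\diag(t))$ with $\Delta_s(\diag(s))$ by swapping the order of summation in the logarithm of the ratio; the coefficient of $\ln s_j - \ln t_j$ equals $\sum_{i \geq j} \fd_i(s) = s_j$, so
\begin{equation*}
\ln \frac{\Delta_s(\diag(s))}{\Delta_s(\diag(t))} = \sum_j s_j (\ln s_j - \ln t_j) = \kl{s}{t} \geq 0,
\end{equation*}
where the final inequality is Gibbs'. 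Chaining these two bounds yields $\keyl{\rho}{\sigma} \geq \kl{s}{t} \geq 0$, with the value $+\infty$ allowed when either some $\lpm_i(U^\dagger \sigma U)$ vanishes against $\fd_i(s) > 0$, or some $t_j$ vanishes against $s_j > 0$.

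\textbf{Equality case and main obstacle.} For $\keyl{\rho}{\sigma} = 0$, both inequalities above must be tight. Gibbs' inequality forces $t = s$, so $\sigma$ shares a spectrum with $\rho$. Saturation of the minor bound, combined with the equality clause of the preceding proposition, forces $U^\dagger \sigma U = \diag(s)$, and hence $\sigma = \rho$. The converse $\sigma = \rho \Longrightarrow \keyl{\rho}{\sigma} = 0$ is a direct substitution, using the same telescoping $\sum_{i \geq j} \fd_i(s) = s_j$. The delicate point will be the ``only if'' part of the equality case when $\rho$ has degenerate or vanishing eigenvalues, because $\fd_i(s) = 0$ at those indices renders the minor inequality vacuous there. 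To handle this, I would first use the KL-equality $t = s$ to pin the spectrum, then apply Cauchy interlacing block-by-block within each set of equal eigenvalues of $\rho$: the forced minor equalities at indices bordering each degenerate block, together with the known spectrum $t = s$, collapse each block to a scalar multiple of the identity, so that $U^\dagger \sigma U = \diag(s)$ is recovered up to a unitary acting within a degenerate eigenspace of $\rho$, which leaves $\rho = U \diag(s) U^\dagger$ unchanged.
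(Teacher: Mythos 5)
Your proposal is correct, and since the paper states \cref{cor:keyl_div} without giving an explicit proof, what you have written is a legitimate filling-in of the implicit argument. The identity part is indeed mechanical bookkeeping from \cref{prop:shannon} and the definition of $\Delta_s$. Your chaining $\Delta_s(U^\dagger\sigma U) \leq \Delta_s(\diag(t)) \leq \Delta_s(\diag(s))$, which yields $\keyl{\rho}{\sigma} \geq \kl{s}{t} \geq 0$, is the right way to get non-negativity; the intermediate comparison via the telescoping sum $\sum_{i\geq j}\fd_i(s) = s_j$ is exactly how $\Delta_s(\diag(t))$ turns into the exponential of $-\sum_j s_j \ln t_j$. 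This also produces a stronger bound than stated (a classical-relative-entropy lower bound), which is good to have explicitly.

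Your most valuable observation is the one you flag as delicate: the ``if and only if'' in the equality clause of the preceding minor-bound proposition is an over-statement when $x$ has repeated or trailing-zero entries, because then $\fd_i(x) = 0$ kills the corresponding factor and that interlacing equality is no longer forced. For instance, if $\rho$ is maximally mixed, every exponent but the last one vanishes and the only forced minor equality is $\det(\eta) = \det(\diag(s))$, which holds for all $\eta$ of that spectrum; the proposition alone cannot conclude. So one genuinely needs both tightnesses jointly: $\kl{s}{t}=0$ pins $t=s$, and then the block-by-block argument closes the gap. Your inductive block argument works cleanly for the nonzero degenerate blocks — within each block the forced determinant at the block's right border plus Cauchy interlacing forces all block eigenvalues equal, and the variational characterization then decouples that block. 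The only case it doesn't quite reach is a trailing block of \emph{vanishing} eigenvalues, where there is no right-bordering index with $\fd_i(s) > 0$; there you should instead observe that, once the nonzero part of $\eta$ has been pinned to $\diag(s)$ on the support of $\rho$, the remaining principal submatrix must have zero trace and be positive semidefinite (being a principal submatrix of $\eta\geq 0$), hence is zero, and the off-diagonal coupling to a zero diagonal entry of a psd matrix must also vanish. With that supplement your equality argument is complete.
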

Notice that if $\rho$ and $\sigma$ are simultaneously diagonalized by $U$ so that 
\begin{equation}
    U^{\dagger} \sigma U = \diag(t_1, \ldots, t_d),
\end{equation}
then the quantity $\keyl{\rho}{\sigma}$ simplifies to the classical relative entropy, $\kl{s}{t} = \sum_{i=1}^{d} s_i (\ln s_i - \ln t_i)$,
also known as Kullback-Liebler divergence~\cite{kullback1997information}. Also note that, in general, $\keyl{\rho}{\sigma}$ does not equal the quantum relative entropy $\qrl{\rho}{\sigma} = \Tr(\rho (\ln \rho - \ln \sigma))$, but is nevertheless bounded by $\keyl{\rho}{\sigma} \leq \qrl{\rho}{\sigma}$~\cite{keyl2006quantum}. For these reasons, we refer to the quantity $\keyl{\rho}{\sigma}$ as \textit{Keyl-divergence}.

\begin{prop}
    \label{prop:hwv_lpm}
    Let $\lambda \in \yf_n^d$ be a partition of $n \in \mathbb N$ and let $\ket{\phi_\lambda^U} \in \schur{\lambda}^d$ be the twirled highest weight vector for $U \in \U(d)$ (see \cref{defn:twirled}). Then for all $\sigma \in \dens(\mathbb C^d)$,
    \begin{equation}
        \bra{\phi^U_\lambda} \tau_{\lambda}(\sigma^{\otimes \abs{\lambda}}) \ket{\phi^U_\lambda} = \Delta_{\lambda}(U^{\dagger}\sigma U).
    \end{equation}
\end{prop}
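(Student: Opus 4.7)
The plan is to reduce the identity to a pure representation-theoretic calculation for $\lilschur{\lambda}$ on matrices and then verify it by means of the Gauss (LDU) decomposition. First, I would use Schur--Weyl duality together with \cref{prop:decomp_sym_operator} to compute $\tau_\lambda(\sigma^{\otimes n})$ explicitly. Since $\sigma^{\otimes n}$ is $\sym_n$-invariant, its restriction to the isotypic component decomposes as $\iota_\lambda^\dagger \sigma^{\otimes n} \iota_\lambda = \ident_{\specht{\lambda}} \otimes \lilschur{\lambda}(\sigma)$, where $\lilschur{\lambda}$ is polynomial in the matrix entries of its argument and so extends uniquely from $\GL(d)$ to all of $\s L(\mathbb C^d)$. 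The identity holds on $\GL(d)$ by Schur--Weyl duality and extends to arbitrary $\sigma \in \s L(\mathbb C^d)$ by polynomial continuity. Tracing over $\specht{\lambda}$ and dividing by $\dim(\specht{\lambda})$ yields $\tau_\lambda(\sigma^{\otimes n}) = \lilschur{\lambda}(\sigma)$.

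Second, substituting $\ket{\phi_\lambda^U} = \lilschur{\lambda}(U)\ket{\phi_\lambda}$ and using multiplicativity of $\lilschur{\lambda}$ together with unitarity of $\lilschur{\lambda}(U)$ (which follows from unitarity of $U \in \U(d)$ and the fact that $\schur{\lambda}^d$ carries a unitary representation of $\U(d)$), I get
\begin{equation}
\bra{\phi_\lambda^U} \tau_\lambda(\sigma^{\otimes n}) \ket{\phi_\lambda^U} = \bra{\phi_\lambda} \lilschur{\lambda}(U)^\dagger \lilschur{\lambda}(\sigma) \lilschur{\lambda}(U) \ket{\phi_\lambda} = \bra{\phi_\lambda} \lilschur{\lambda}(U^\dagger \sigma U) \ket{\phi_\lambda}.
\end{equation}
So the task reduces to showing $\bra{\phi_\lambda}\lilschur{\lambda}(M)\ket{\phi_\lambda} = \Delta_\lambda(M)$ for every $M \in \s L(\mathbb C^d)$.

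Third, I would verify this identity on the Zariski-dense open subset of matrices whose leading principal minors all satisfy $\lpm_i(M) \neq 0$, on which $M$ admits a unique Gauss decomposition $M = L D N$ with $L$ lower unitriangular, $D = \diag(d_1,\ldots,d_d)$ diagonal, and $N$ upper unitriangular. A standard computation gives $\lpm_i(M) = d_1 \cdots d_i$. Since $\ket{\phi_\lambda}$ is a highest weight vector, it is annihilated by the positive root vectors and therefore fixed by $N$, so $\lilschur{\lambda}(N)\ket{\phi_\lambda} = \ket{\phi_\lambda}$; by the highest weight characterization, $\lilschur{\lambda}(D)\ket{\phi_\lambda} = \prod_i d_i^{\lambda_i}\ket{\phi_\lambda}$; and by the orthogonal weight-space decomposition of $\schur{\lambda}^d$, the vector $\lilschur{\lambda}(L)\ket{\phi_\lambda}$ equals $\ket{\phi_\lambda}$ plus contributions of strictly lower weight (since $L - \ident$ generates negative-root operators), giving $\bra{\phi_\lambda}\lilschur{\lambda}(L)\ket{\phi_\lambda} = 1$. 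Combining these,
\begin{equation}
\bra{\phi_\lambda}\lilschur{\lambda}(M)\ket{\phi_\lambda} = \prod_{i=1}^d d_i^{\lambda_i} = \prod_{j=1}^d \left(\prod_{i \leq j} d_i\right)^{\!\fd_j(\lambda)} = \prod_{j=1}^d \lpm_j(M)^{\fd_j(\lambda)} = \Delta_\lambda(M),
\end{equation}
where the middle equality uses the telescoping identity $\lambda_i = \sum_{j \geq i} \fd_j(\lambda)$. Finally, since both $M \mapsto \bra{\phi_\lambda}\lilschur{\lambda}(M)\ket{\phi_\lambda}$ and $M \mapsto \Delta_\lambda(M)$ are polynomials in the entries of $M$ (the latter because each $\lpm_i$ is polynomial and $\fd_j(\lambda) \in \mathbb N_{\geq 0}$), and they agree on a Zariski-dense subset, they agree on all of $\s L(\mathbb C^d)$. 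Specializing to $M = U^\dagger \sigma U$ concludes the proof.

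The main obstacle I anticipate is justifying cleanly the claim that $\bra{\phi_\lambda}\lilschur{\lambda}(L)\ket{\phi_\lambda} = 1$. This relies on (i) the orthogonality of distinct weight spaces of $\schur{\lambda}^d$ with respect to the $\U(d)$-invariant inner product defining $\bra{\phi_\lambda}$, and (ii) the fact that $L \in N_-$ acts as a sum of the identity plus operators built from negative root vectors, which strictly lower the weight when applied to $\ket{\phi_\lambda}$. Both facts are standard for polynomial representations of $\GL(d)$, but spelling them out precisely requires a brief excursion into the structure of weight-space decompositions.
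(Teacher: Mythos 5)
Your proof is correct, and it does something the paper does not: the paper's ``proof'' of \cref{prop:hwv_lpm} is simply a citation to Keyl's Eqs.~(141) and (151) with a pointer to Zhelobenko, whereas you give a self-contained derivation. The route you take --- Schur--Weyl duality to identify $\tau_\lambda(\sigma^{\otimes n}) = \lilschur{\lambda}(\sigma)$, conjugating the twirled highest-weight vector back to $\ket{\phi_\lambda}$, then the Gauss (LDU) factorization on the Zariski-dense cell where all $\lpm_i(M) \neq 0$, finishing by polynomial density --- is in substance the classical computation behind the cited references, so you are unpacking the citation rather than finding a genuinely new argument, but this is precisely what the paper delegates and you have it right.

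Two remarks that could tighten the write-up. First, in Step~1 the cleanest order of operations is: invoke \cref{prop:decomp_sym_operator} to conclude that $\iota_\lambda^\dagger \sigma^{\otimes n} \iota_\lambda$ has the block form $\ident_{\specht{\lambda}} \otimes M_\sigma$ for \emph{every} $\sigma \in \s L(\mathbb C^d)$ (because $\sigma^{\otimes n}$ always commutes with the $\sym_n$-action, invertible or not), and only then identify $M_\sigma = \lilschur{\lambda}(\sigma)$ by agreement on $\GL(d)$ and polynomiality; your phrasing suggests you establish the block form on $\GL(d)$ first, which works but is a detour. Second, the obstacle you flag --- showing $\bra{\phi_\lambda}\lilschur{\lambda}(L)\ket{\phi_\lambda} = 1$ for lower unitriangular $L$ --- can be sidestepped entirely: the $\U(d)$-invariant inner product on $\schur{\lambda}^d$ makes $\lilschur{\lambda}$ a $*$-representation in the sense that $\lilschur{\lambda}(M)^\dagger = \lilschur{\lambda}(M^\dagger)$ (the Lie algebra rep sends $\mathfrak{u}(d)$ to anti-Hermitian operators, hence $M \mapsto M^\dagger$ to the operator adjoint, and this exponentiates). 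Since $L^\dagger$ is upper unitriangular, $\lilschur{\lambda}(L^\dagger)\ket{\phi_\lambda} = \ket{\phi_\lambda}$ and so $\bra{\phi_\lambda}\lilschur{\lambda}(L) = \bigl(\lilschur{\lambda}(L^\dagger)\ket{\phi_\lambda}\bigr)^\dagger = \bra{\phi_\lambda}$. This avoids any appeal to weight-space orthogonality, though your route is also valid provided you note that $L = \exp(Z)$ for $Z$ strictly lower triangular, so the weight-$\lambda$ component of $\lilschur{\lambda}(L)\ket{\phi_\lambda}$ is exactly $\ket{\phi_\lambda}$.
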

\begin{proof}
    This result is noted by \citeauthor{keyl2006quantum} as \cite[Eqs. (141) \& (151)]{keyl2006quantum} with reference to \cite[Sec. 49]{zhelobenko1973compact}.
\end{proof}
Henceforth, define the projection operator $\Phi_{\lambda}^{U} \in \s L(H^{\otimes \abs{\lambda}})$ by
\begin{equation}
    \Phi_{\lambda}^{U} \coloneqq \iota_{\lambda} (\ket{\phi_{\lambda}^{U}} \bra{\phi^{U}_{\lambda}} \otimes \ident_{\specht{\lambda}}) \iota_{\lambda}^{\dagger},
\end{equation}
so that,
\begin{equation}
    \label{eq:defn_of_big_phi}
    \Tr(\Phi_{\lambda}^{U} \rho^{\otimes \abs{\lambda}}) = \dim(\specht{\lambda})\Delta_{\lambda}(U^{\dagger} \rho U).
\end{equation}
\begin{cor}
    \label{cor:rational_state_discrim}
    Let $\rho, \sigma \in \dens(\mathbb C^d)$ and let $U \in \U(d)$ diagonalize $\rho$, 
    \begin{equation}
        \rho = U\diag(s_1, \ldots, s_d)U^{\dagger}.
    \end{equation}
    If $\rho$ has \textit{rational} spectra $s = (s_1, \ldots, s_d) \in \spectra^{d}$, i.e. there exists a $q \in \mathbb N$ such that $qs \in \yf_{q}^{d}$ is a partition of $q$, then for all $n \in \mathbb N$,
    \begin{equation}
        \label{eq:rational_state_discrim}
        \frac{\Tr(\Phi^{U}_{nqs} \sigma^{\otimes nq})}{\Tr(\Phi^{U}_{nqs}\rho^{\otimes nq})} = \exp (-nq\keyl{\rho}{\sigma}).
    \end{equation}
\end{cor}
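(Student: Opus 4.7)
The plan is to apply the machinery established immediately before the statement, namely equation~\eqref{eq:defn_of_big_phi} together with Corollary~\ref{cor:keyl_div}, and reduce the whole computation to a scaling identity for the generalized power function $\Delta_x$. Concretely, the rationality hypothesis on the spectrum of $\rho$ ensures that $nqs$ is a bona fide partition of $nq$ for every $n\in\mathbb N$, so $\Phi^U_{nqs}$ is well-defined as an operator on $\s H^{\otimes nq}$ and equation~\eqref{eq:defn_of_big_phi} applies to both $\rho^{\otimes nq}$ and $\sigma^{\otimes nq}$.

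The first step is therefore to rewrite numerator and denominator using \eqref{eq:defn_of_big_phi}:
\begin{align*}
    \Tr(\Phi^U_{nqs}\sigma^{\otimes nq}) &= \dim(\specht{nqs})\,\Delta_{nqs}(U^{\dagger}\sigma U), \\
    \Tr(\Phi^U_{nqs}\rho^{\otimes nq}) &= \dim(\specht{nqs})\,\Delta_{nqs}(U^{\dagger}\rho U) = \dim(\specht{nqs})\,\Delta_{nqs}(\diag(s)),
\end{align*}
where in the denominator I use the hypothesis that $U$ diagonalizes $\rho$. The Specht-dimension prefactors cancel immediately upon taking the ratio, eliminating the most cumbersome representation-theoretic factor and leaving a purely combinatorial expression involving only $\Delta$.

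The second step is to exploit homogeneity of $\Delta_x$ in its subscript. Since the defining formula \eqref{eq:gpf} expresses $\Delta_x(\cdot)$ as a product of principal minors raised to the exponents $\fd_i(x)$, and the finite-difference operator $\fd_i$ is linear, we have $\fd_i(nqs)=nq\,\fd_i(s)$ and hence $\Delta_{nqs}(\tau) = \Delta_s(\tau)^{nq}$ for any $\tau\in\dens(\mathbb C^d)$. Applying this to both $\tau = U^{\dagger}\sigma U$ and $\tau=\diag(s)$ reduces the ratio to
\begin{equation*}
    \frac{\Tr(\Phi^U_{nqs}\sigma^{\otimes nq})}{\Tr(\Phi^U_{nqs}\rho^{\otimes nq})} = \left(\frac{\Delta_s(U^{\dagger}\sigma U)}{\Delta_s(\diag(s))}\right)^{nq}.
\end{equation*}

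The third and final step is to recognize the base of this exponent as exactly the quantity computed in Corollary~\ref{cor:keyl_div}, which gives $\Delta_s(U^{\dagger}\sigma U)/\Delta_s(\diag(s)) = \exp(-\keyl{\rho}{\sigma})$. Raising to the power $nq$ yields the claim. There is essentially no obstacle here: the only thing one must be careful about is the edge case where some $\lpm_i(U^{\dagger}\sigma U) = 0$ while $\fd_i(s) > 0$, in which case both the numerator of the ratio and $\exp(-\keyl{\rho}{\sigma})$ vanish (with $\keyl{\rho}{\sigma}=\infty$) and the identity still holds under the $0^0 = 1$ convention stipulated after \eqref{eq:gpf}.
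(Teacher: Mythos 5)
Your proof is correct and follows essentially the same route as the paper's: apply \eqref{eq:defn_of_big_phi} to both numerator and denominator, cancel the $\dim(\specht{nqs})$ factor, and invoke \cref{cor:keyl_div}. You helpfully make explicit the homogeneity step $\Delta_{nqs}(\tau)=\Delta_s(\tau)^{nq}$ (which the paper leaves implicit in its terse ``follows from'' phrasing); the only minor imprecision is in your closing remark, where the $0^0=1$ convention actually governs the case $\lpm_i=0$ with $\fd_i(s)=0$ rather than $\fd_i(s)>0$, but this does not affect the correctness of the argument.
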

\begin{proof}
    The proof follows from \cref{eq:defn_of_big_phi} and \cref{cor:keyl_div}. When \cref{eq:defn_of_big_phi} is applied to the numerator and denominator on the left-hand-side of \cref{eq:rational_state_discrim}, the common factor of $\dim(\specht{nqs}) > 0$ cancels out.
\end{proof}
A result similar to \cref{cor:rational_state_discrim} holds for arbitrary states $\rho \in \dens(\mathbb C^d)$, e.g., for states that do not have rational spectra.
\begin{thm}
    \label{thm:constructive_state_discrim}
    Let $\rho, \sigma \in \dens(\mathbb C^d)$, let $s = (s_1, \ldots, s_d) \in \spectra^{d}$ be the spectrum of $\rho$, and let $U \in \U(d)$ diagonalize $\rho$, such that $\rho = U\diag(s_1, \ldots, s_d)U^{\dagger}$.
    Then there exists a sequence, $n \mapsto \lambda^{n} \in \yf_{n}^{d}$ of partitions such that for all $n \in \mathbb N$,
    \begin{equation}
        \label{eq:constructive_state_discrim}
        \frac{\Tr(\Phi^{U}_{\lambda^{n}}\sigma^{\otimes n})}{\Tr(\Phi^{U}_{\lambda^{n}}\rho^{\otimes n})} \leq D(s)\exp (-(n-\tbinom{d+1}{2}+1) \keyl{\rho}{\sigma}).
    \end{equation}
    where $D(s)$ is a constant depending only on $s$.
\end{thm}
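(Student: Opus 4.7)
The plan is to reduce to an essentially rational calculation by approximating $ns$ with an integer partition while preserving the multiplicative structure of the generalized power function. The guiding observation is that the identity
\begin{equation*}
    \frac{\Delta_{ts}(U^\dagger \sigma U)}{\Delta_{ts}(\diag s)} = \exp(-t\,\keyl{\rho}{\sigma})
\end{equation*}
follows directly from the definition of $\Delta_\bullet$ together with \cref{cor:keyl_div}, and therefore holds for every real $t \geq 0$, not only integer $t$ with $ts \in \yf_t^d$. The remaining task is thus to construct, for each $n$, a partition $\lambda^n \in \yf_n^d$ whose $\fd$-coordinates track those of a suitable rescaling $ms$ of $s$ with $m$ close to $n$.

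Concretely, for $n \geq \tbinom{d+1}{2}$, I would set $m = n - \tbinom{d+1}{2} + 1$ and apply \cref{prop:crit_approx} to the pair $(s, m)$ to produce a partition $\mu$ with $\fd_i(\mu) = \lceil \fd_i(ms) \rceil$ and $m \leq |\mu| \leq n$. I would then define
\begin{equation*}
    \lambda^n \coloneqq (\mu_1 + (n - |\mu|),\, \mu_2,\, \ldots,\, \mu_d) \in \yf_n^d,
\end{equation*}
obtained from $\mu$ by absorbing the remaining $n - |\mu|$ boxes into the first row. Since $\lambda^n$ and $\mu$ agree in every $\fd_i$ except $\fd_1$, which increases by $n - |\mu|$, the definition of the generalized power function immediately yields the two factorizations
\begin{equation*}
    \Delta_{\lambda^n}(\tau) = \Delta_{\mu}(\tau)\,\lpm_1(\tau)^{n-|\mu|}, \qquad \Delta_{\mu}(\tau) = \Delta_{ms}(\tau)\prod_{i=1}^{d}\lpm_i(\tau)^{\epsilon_i},
\end{equation*}
where $\epsilon_i \coloneqq \fd_i(\mu) - m\fd_i(s) \in [0,1)$. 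Substituting $\tau \in \{U^\dagger \sigma U,\, \diag s\}$, forming the ratio, and invoking \eqref{eq:defn_of_big_phi} (in which $\dim(\specht{\lambda^n})$ cancels) produces
\begin{equation*}
    \frac{\Tr(\Phi^U_{\lambda^n}\sigma^{\otimes n})}{\Tr(\Phi^U_{\lambda^n}\rho^{\otimes n})} = e^{-m\,\keyl{\rho}{\sigma}}\prod_{i=1}^{d}\left[\frac{\lpm_i(U^\dagger \sigma U)}{\lpm_i(\diag s)}\right]^{\epsilon_i}\left[\frac{\lpm_1(U^\dagger \sigma U)}{s_1}\right]^{n - |\mu|}.
\end{equation*}

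The last step is to absorb the two correction factors into $D(s)$. Since $\sigma$ is a density operator, Cauchy interlacing (as in \eqref{eq:maximimal_principal_minor}) gives $\lpm_i(U^\dagger \sigma U) \leq 1$ for every $i$; combined with $\epsilon_i < 1$ and $n - |\mu| \leq \tbinom{d+1}{2} - 1$, each correction factor is dominated by a power of $1/\lpm_i(\diag s)$ or $1/s_1$ depending only on $s$, so one may take $D(s) = s_1^{-(\tbinom{d+1}{2} - 1)} \prod_{i \,:\, s_i > 0} \lpm_i(\diag s)^{-1}$. Two edge cases deserve a brief remark: (i) if $s_i = 0$ then $\fd_i(s) = 0$ forces $\epsilon_i = 0$, and the potentially singular $\lpm_i(\diag s)^{-\epsilon_i}$ collapses to $1$ under the convention $0^0 = 1$; (ii) if $\keyl{\rho}{\sigma} = \infty$ the left-hand side vanishes and the bound is trivial. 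The finitely many small values $n < \tbinom{d+1}{2}$ are handled by choosing $\lambda^n = (n, 0, \ldots, 0)$, whose LHS is at most $s_1^{-n}$, and enlarging $D(s)$ to absorb them. The main obstacle is ensuring that the integer-rounding step preserves the multiplicative structure of $\Delta_\bullet$; this is precisely what forces the padding to be confined to $\fd_1$, and it is fortunate that $\lpm_1$ is uniformly bounded by $1$, since only it appears raised to the (still bounded) exponent $n - |\mu|$.
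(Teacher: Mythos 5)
Your proof is correct and follows essentially the same strategy as the paper's: approximate $ns$ by a partition $\mu$ whose $\fd$-coordinates are $\lceil\fd_i(ms)\rceil$, pad the remainder into the first row, and control the resulting correction to $\Delta_{ms}$ by a spectrum-dependent constant. The two minor points of divergence are worth noting. First, you fix the rescaling parameter $m = n - \tbinom{d+1}{2}+1$ \emph{a priori} and invoke \cref{prop:crit_approx} once; the paper instead defines a whole family $\mu^{k}$ indexed by $k$, establishes the two-sided sandwich $\Delta_{ks}(\eta)\Delta_{\mu^{1}}(\eta) \le \Delta_{\mu^{k}}(\eta) \le \Delta_{ks}(\eta)$, and then searches for the largest $k$ with $|\mu^{k}| \in [n - \tbinom{d+1}{2}+1, n]$. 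Your explicit choice is cleaner and sidesteps the monotonicity/existence bookkeeping for $k$. Second, you factor $\Delta_{\lambda^{n}}$ into $\Delta_{ms}$ times exact correction exponents $\epsilon_i \in [0,1)$ and bound afterward, where the paper works with the sandwich inequality and obtains the slightly tighter constant $\Delta_{\mu^{1}}(\diag s)^{-1} = \prod_i \lpm_i(\diag s)^{-\lceil \fd_i(s)\rceil}$; your $\prod_{i : s_i > 0} \lpm_i(\diag s)^{-1}$ is a looser but equally valid $D(s)$, since $\epsilon_i > 0$ forces $\fd_i(s) > 0$ and hence $s_i > 0$ (but not conversely, so you occasionally include unnecessary factors $\ge 1$). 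Your edge-case handling (degenerate spectrum via $0^0 = 1$, infinite Keyl divergence, and the finitely many small $n$ absorbed into $D(s)$ using $\lambda^n = (n,0,\ldots,0)$ so that the ratio is at most $s_1^{-n} \le s_1^{-(\tbinom{d+1}{2}-1)}$) is sound and matches what the paper does implicitly or explicitly.
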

\begin{proof}
    The proof relies on an explicit construction of a sequence, $n \mapsto \lambda^{n}$, that satisfies the claim.
    For each non-negative integer $k \in \mathbb N_{\geq 0}$, let $\mu^{k}$ be the partition characterized by $\delta_i(\mu^k) = \lceil\fd_i(ks)\rceil$ (see \cref{prop:crit_approx}). Then for any state $\eta \in \dens(\mathbb C^d)$, we claim
    \begin{equation}
        \label{eq:eta_bounds}
        \Delta_{ks}(\eta) \Delta_{\mu^1}(\eta) \leq \Delta_{\mu^k}(\eta) \leq \Delta_{ks}(\eta).
    \end{equation}
    To see the upper bound, note that for all $i \in \{1, \ldots, d\}$, $\fd_i(\mu^k) \geq \fd_i(k s) > 0$ so $\lpm_i(\eta)^{\fd_{i}(\mu^k)} \leq \lpm_i(\eta)^{\fd_{i}(ks)}$ since $\lpm_i(\eta) < 1$ (noting \cref{eq:maximimal_principal_minor}).
    For the lower bound, note that $\fd_i(\mu^k) = \lceil \fd_i(ks) \rceil \leq \fd_i(ks) + \lceil \fd_i(s) \rceil = \fd_i(ks) + \fd_i(\mu^1)$, so $\Delta_{\mu^k}(\eta) \geq \Delta_{ks}(\eta) \Delta_{\mu^1}(\eta)$ holds.
    Next, apply \cref{cor:keyl_div} and \cref{eq:eta_bounds} (the upper bound when $\eta = U^{\dagger} \sigma U$, and the lower bound when $\eta = U^{\dagger} \rho U = \diag(s_1, \ldots s_d)$) to obtain
    \begin{equation}
        \label{eq:constructive_state_discrim_k}
        \frac{\Tr(\Phi_{\mu^{k}}^{U}\sigma^{\otimes \abs{\mu^k}})}{\Tr(\Phi_{\mu^{k}}^{U}\rho^{\otimes \abs{\mu^k}})} \leq \frac{\exp(-k \keyl{\rho}{\sigma})}{\Delta_{\mu^{1}}(\diag(s_1, \ldots s_d))}.
    \end{equation}
    Now, notice that \cref{eq:constructive_state_discrim_k} is almost in the form of \cref{eq:constructive_state_discrim}.
    The main obstacle remaining is simply that the size of $\mu^{k}$ needs to be decoupled from the spectra of $\rho$.
    Fortunately, \cref{prop:crit_approx} guarantees that $\abs{\mu^{k}}$ is, at least, approximately equal to $k$ because $k \leq \abs{\mu^{k}} \leq k + \tbinom{d+1}{2} - 1$. Moreover, since $\abs{\mu^{k+1}} \geq \abs{\mu^{k}}$, there always exists at least one value of $k$ such that $\mu^{k}$ has size approximately $n$ for any $n \in \mathbb N$; specifically, there exists a $k \in \mathbb N$ such that
    \begin{equation}
        \label{eq:optimal_k}
        n - \tbinom{d+1}{2} + 1 \leq \abs{\mu^{k}} \leq n.
    \end{equation}
    Now simply define $\lambda^{n} \in \yf_{n}^{d}$ by
    \begin{equation}
        \lambda^{n} = (\mu^{k}_1 + n - \abs{\mu^{k}}, \mu^{k}_{2}, \ldots, \mu_{d}^{k}),
    \end{equation}
    where $k$ is the largest such that $\mu_k$ satisfies \cref{eq:optimal_k}. Note that when $n$ is small ($n < \tbinom{d+1}{2} - 1$), is entirely possible for $k = 0$ and $\mu_0 = (0,0,\ldots,0)$, in which case, $\lambda^n = (n, 0, \ldots, 0)$.
    This definition ensures
    \begin{align}
        \Tr(\Phi^{U}_{\lambda^{n}} \rho^{\otimes n}) &= s_1^{n-\abs{\mu^{k}}} \Tr(\Phi^{U}_{\mu^k} \rho^{\otimes \abs{\mu^{k}}}), \\
        \Tr(\Phi^{U}_{\lambda^{n}} \sigma^{\otimes n}) &\leq \Tr(\Phi^{U}_{\mu^k} \sigma^{\otimes \abs{\mu^{k}}}).
    \end{align}
    Therefore, from \cref{eq:constructive_state_discrim_k,eq:optimal_k}, we conclude \cref{eq:constructive_state_discrim} where $D(s)$ is the constant
    \begin{align}
        D(s)
        &= s_1^{1-\tbinom{d+1}{2}} \left(\Delta_{\mu^{1}}(\diag(s_1, \ldots s_d))\right)^{-1} \\
        &= s_1^{1-\tbinom{d+1}{2}}\prod_{i=1}^{d} (s_1s_2 \cdots s_i)^{-\lceil \fd_i(s) \rceil}.
    \end{align}
\end{proof}
\begin{rem}
    \label{rem:state_discrim}
    To derive the inequality in \cref{eq:unrealizability_measure} from the result of \cref{thm:constructive_state_discrim}, note that $\tbinom{d+1}{2}-1 \leq d^2$ and substitute
    \begin{enumerate}[i)]
        \item $E_n = \Phi_{\lambda^{n}}^{U}$,
        \item $\infkeyl{\rho_{\s M}} = \inf_{\sigma_{\s M} \in \s C_{\s M}} \keyl{\rho_{\s M}}{\sigma_{\s M}}$,
        \item $c(\rho_{\s M}) = \ln D(\mathrm{spec}(\rho_{\s M}))$, and
        \item $d = d_{\s M} = \dim(\s H_{\s M})$.
    \end{enumerate}
    The claim that $\infkeyl{\rho_{\s M}}$ vanishes if and only if $\rho_{\s M} \in \s C_{\s M}$ follows from the compactness of $\s C_{\s M}$ and the following corollary.
\end{rem}
\begin{cor}
    \label{cor:keyl_div_from_R}
    Let $\s C \subseteq \dens(\mathbb C^d)$ be compact. Define
    \begin{equation}
        \infkeyl{\rho} \coloneqq \inf_{\sigma \in \s C}\keyl{\rho}{\sigma}.
    \end{equation}
    Then $\infkeyl{\rho} = 0$ if and only if $\rho \in \s C$.
\end{cor}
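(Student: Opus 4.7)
The plan is to establish this corollary by combining the non-negativity and vanishing characterization of the Keyl divergence (already supplied by \cref{cor:keyl_div}) with the compactness hypothesis on $\s C$ and the (lower semi-)continuity of the map $\sigma \mapsto \keyl{\rho}{\sigma}$.

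The reverse implication is immediate: if $\rho \in \s C$, then taking $\sigma = \rho$ as a witness in the infimum gives $\infkeyl{\rho} \leq \keyl{\rho}{\rho} = 0$, and since \cref{cor:keyl_div} states that $\keyl{\rho}{\sigma} \in [0,\infty]$, we conclude $\infkeyl{\rho} = 0$.

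For the forward implication, I would argue by showing the infimum is attained. First, choose a minimizing sequence $\{\sigma_n\}_{n \in \mathbb N} \subseteq \s C$ with $\keyl{\rho}{\sigma_n} \to 0$. By compactness of $\s C$, some subsequence $\sigma_{n_k}$ converges to a limit $\sigma^{\star} \in \s C$. Next, I would appeal to lower semi-continuity of $\sigma \mapsto \keyl{\rho}{\sigma}$: writing
\begin{equation}
    \keyl{\rho}{\sigma} = \sum_{i=1}^{d} s_i \ln s_i - \fd_i(s) \ln \lpm_i(U^{\dagger} \sigma U),
\end{equation}
each principal minor $\lpm_i(U^{\dagger}\sigma U)$ is a polynomial (hence continuous) function of $\sigma$, and the map $t \mapsto -\ln t$ is continuous on $(0,\infty)$ and tends to $+\infty$ as $t \to 0^+$, so it is lower semi-continuous on $[0,\infty)$. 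Since the $\fd_i(s)$ are non-negative, each summand $-\fd_i(s) \ln \lpm_i(U^{\dagger}\sigma U)$ is lower semi-continuous in $\sigma$, and so is their sum. Consequently,
\begin{equation}
    \keyl{\rho}{\sigma^{\star}} \leq \liminf_{k \to \infty} \keyl{\rho}{\sigma_{n_k}} = 0,
\end{equation}
and combined with $\keyl{\rho}{\sigma^{\star}} \geq 0$, we obtain $\keyl{\rho}{\sigma^{\star}} = 0$. Applying the equality characterization in \cref{cor:keyl_div} yields $\sigma^{\star} = \rho$, so $\rho = \sigma^{\star} \in \s C$.

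The main technical subtlety is handling the potential divergence $\keyl{\rho}{\sigma} = +\infty$ which occurs when $\lpm_i(U^{\dagger}\sigma U) = 0$ for some index $i$ with $\fd_i(s) > 0$. Fortunately for us this only makes the functional \emph{larger}, so lower semi-continuity (rather than full continuity) is enough to push the limit through the infimum, and no delicate regularization is required.
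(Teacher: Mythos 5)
Your proof is correct and uses essentially the same idea as the paper: compactness plus an extreme-value-type argument, then invoking the equality characterization in \cref{cor:keyl_div} to force $\sigma^{\star} = \rho$. The only cosmetic difference is that the paper works with the generalized power function $\sigma \mapsto \Delta_s(U^{\dagger}\sigma U) \in [0,1]$, which is genuinely continuous (and bounded) so the plain extreme value theorem for continuous functions applies to its supremum, whereas you work directly with $\sigma \mapsto \keyl{\rho}{\sigma}$, which can take the value $+\infty$ and so is only lower semi-continuous; you correctly compensate by using the fact that a lower semi-continuous function on a compact set attains its infimum. Both formulations are the same argument up to taking $-\ln$, and your handling of the $\fd_i(s) = 0$ versus $\fd_i(s) > 0$ cases (and of the potential divergence at $\lpm_i = 0$) is sound.
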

\begin{proof}
    If $\rho \in \s C$, then \cref{cor:keyl_div} implies $\infkeyl{\rho} = \keyl{\rho}{\rho} = 0$. Otherwise if $\rho \not \in \s C$, then consider, for each fixed $x \in \mathbb R_{\geq 0}^{d;\downarrow}$ and $U \in \U(d)$, that the function $\sigma \mapsto \Delta_x(U^{\dagger} \sigma U) \in [0,1]$ is continuous as $\lpm_i(U^{\dagger} \sigma U)$ is a polynomial in the coefficients of $\sigma$. The compactness of $\s C$ guarantees (using the extreme value theorem) the supremum is attained by some $\sigma_{x}^{U} \in \s C$:
    \begin{equation}
        \label{eq:least_divergent_sigma}
        \Delta_x(U^{\dagger} \sigma_x^{U} U) = \sup_{\sigma \in \s C} \Delta_x(U^{\dagger} \sigma U).
    \end{equation}
    Since $\infkeyl{\rho} = \keyl{\rho}{\sigma_x^{U}}$ and $\rho \neq \sigma_x^{U}$, we conclude, from \cref{cor:keyl_div}, that $\infkeyl{\rho} \neq 0$.
\end{proof}

\section{Special Cases}

\subsection{Pure vs. Full QMP}
\label{sec:pure_vs_mixed_qmp}
One might wonder why the version of the QMP considered in this paper (\cref{prob:qmp}) seems to be exclusively interested in the existence of joint states that are \textit{pure}, $\psi_J \in \proj(\s H_{J})$, instead of the more general \textit{density operator}, $\rho_{J} \in \dens(\s H_{J})$. In order to distinguish between these two types of QMP, the former is sometimes called the \textit{pure} QMP, while the latter is sometimes called the \textit{mixed} QMP. Of these two variants, the mixed QMP is arguably a much closer analogy to the classical marginals problem~\cite{fritz2012entropic}.

The distinction between these two variants is strongest when the marginal scenario under consideration, $\s M = (S_1, \ldots, S_m)$, has disjoint marginal contexts, i.e. $S_i \cap S_j = \emptyset$ for all $i \neq j$, or equivalently $\s H_{\s M} = \s H_{S_1}\otimes \cdots \otimes \s H_{S_m} \cong \s H_J$. Under the assumption of disjoint marginal contexts, the mixed QMP becomes trivial; every collection of density operators $(\rho_{S_1}, \ldots, \rho_{S_m})$ are the $\s M$-marginals of the density operator $\rho_J = \rho_{S_1} \otimes \cdots \otimes \rho_{S_m}$. On the other hand, under this assumption, the pure QMP remains non-trivial.

At the level of generality considered in this paper, wherein the marginal scenario $\s M = (S_1, \ldots, S_m)$ is permitted to contain overlapping marginal contexts, e.g., $S_i \cap S_j \neq \emptyset$, the distinction becomes less important because the marginals of a mixed state can equivalently be viewed as the marginals of any of its purifications. Specifically, there exists a density operator $\sigma_{J} \in \dens(\s H_{J})$ with marginals $\rho_{S} = \Tr_{J\setminus S}(\sigma_{J})$ for all $S \in \s M$ if and only if there exists a joint pure state $\psi_{JJ'} \in \proj(\s H_{J} \otimes \s H_{J'})$ (where $\s H_{J'} \cong \s H_{J}$) such that $(\Tr_{J\setminus S}\otimes \Tr_{J'})(\sigma_{J}) = \rho_{S}$ for all $S \in \s M$. Consequently, the techniques developed in this paper, which directly apply to the pure QMP, can also be applied to any instance of the mixed QMP without substantial modification.

\subsection{Diagrammatics}
\label{sec:diagrammatics}

The purpose of this subsection is to briefly introduce some diagrammatic notation that will be useful for performing a few calculations in \cref{sec:n1}. The particular notations involving symmetrization and antisymmetrization used here (\cref{eq:sym_2_X} and onward), are taken from \citeauthor{cvitanovic2008group}'s excellent textbook~\cite{cvitanovic2008group} on diagrammatic calculations of invariants of Lie groups, and are essentially the same those used by \citeauthor{penrose1971applications}~\cite{penrose1971applications}. For a categorical justification of this notation, see~\cite{selinger2012finite}. For further applications within quantum theory, see~\cite{coecke2010compositional,wood2011tensor,biamonte2017tensor}.

The essential idea is to depict linear operators, $L : \s H_X \to \s H_Y$, by pictures with corresponding inputs and outputs:
\begin{equation}
    \strdia{generic_l_x_y}.
\end{equation}
Of course, two linear operators can be combined in at least three different ways; specifically, by addition $+$, tensor product $\otimes$, and sequential composition $\circ$. These operations are depicted respectively as
\begin{align}
    \strdia{composition_addition_LHS} &= \strdia{composition_addition_RHSone} + \strdia{composition_addition_RHStwo},\\
    \strdia{composition_tensor_LHS} &= \strdia{composition_tensor_RHS}, \\
    \strdia{composition_circ_LHS} &= \strdia{composition_circ_RHS}.
\end{align}
Important special cases of this notation include the identity operator $\ident_X : \s H_X \to \s H_X$,
\begin{equation}
    \strdia{identity_X_X_explicit} = \strdia{identity_X_X},
\end{equation}
and vectors $\ket{\psi_X} : \mathbb C \to  \s H_{X}$ (and their conjugates $\bra{\psi_X} : \s H_{X} \to \mathbb C$) as
\begin{equation}
    \strdia{generic_ket}, \quad \text{and} \quad \strdia{generic_bra}.
\end{equation}
This notation is especially elegant for depicting two concepts frequently encountered in this paper: the partial trace and permutations.

First, given a bipartite operator $L : \s H_{X} \otimes \s H_{Y} \to \s H_{X} \otimes \s H_{Y}$, the partial trace $\Tr_{Y}$ over $Y$, is depicted as
\begin{equation}
    \strdia{generic_partial_trace_verbose} = \strdia{generic_partial_trace}.
\end{equation}
The trace over the identity operator $\ident_{X} : \s H_{X} \to \s H_{X}$, which is equal to the dimension of $\s H_X$, is therefore depicted as a closed loop
\begin{equation}
    d_X = \dim(\s H_X) = \strdia{identity_trace_explicit} = \strdia{identity_trace} = \strdia{dimension_loop}.
\end{equation}
Second, the tensor permutation representation, $T_{X} : \sym_{m} \to \s L(\s H_{X}^{\otimes m})$, of the symmetric group, $\sym_{m}$, has elements depicted naturally as follows. When $m=2$, $\sym_{2} = \{ e, (12) \}$, and the identity $T_{X}(e)$ and swap $T_{X}((12))$ are depicted respectively by
\begin{equation}
    \strdia{Stwo_e_X}, \quad \text{and} \quad \strdia{Stwo_onetwo_X}.
\end{equation}
Analogously, for $m=3$, the $3!=6$ permutations in $\sym_{3}$ are depicted by
\begin{align}
    \begin{split}
        \strdia{Sthree_e_X}, &\qquad \strdia{Sthree_onetwo_X}, \qquad \strdia{Sthree_twothree_X},\\
        \strdia{Sthree_onethree_X}, &\qquad \strdia{Sthree_onetwothree_X}, \qquad \strdia{Sthree_threetwoone_X}.
    \end{split}
\end{align}
The orthogonal projection operator, $\Pi^{(2)}_{X}$, onto the symmetric subspace $\symsub^{2}\s H_{X} \subseteq \s H^{\otimes 2}_X$, referred to as the \textit{symmetrization operator}, is given the following unique notation:
\begin{equation}
    \label{eq:sym_2_X}
    \strdia{symmetrization_two_X}
    \coloneqq
    \strdia{symmetrization_two_X_explicit}
    =
    \frac{1}{2}\strdia{Stwo_e_X}
    +
    \frac{1}{2}\strdia{Stwo_onetwo_X}.
\end{equation}
Similarly, the orthogonal projection operator onto the antisymmetric subspace $\antisymsub^{2}\s H_{X} \subseteq \s H^{\otimes 2}$, denoted by $\Pi_{X}^{(1,1)}$ and referred to as the \textit{antisymmetrization operator}, is depicted in a complementary manner:
\begin{equation}
    \strdia{antisymmetrization_two_X}
    \coloneqq
    \strdia{antisymmetrization_two_X_explicit}
    =
    \frac{1}{2}\strdia{Stwo_e_X}
    -
    \frac{1}{2}\strdia{Stwo_onetwo_X}.
\end{equation}
Generalizing this notation for the orthogonal projection operators onto the symmetric and antisymmetric subspaces of $\s H_X^{\otimes m}$ for $m > 2$ can be done recursively as follows. For the sake of clarity, the Hilbert space label, $X$, can often be omitted without introducing ambiguity.
\begin{align}
    \strdia{symmetrization_n}
    &=
    \frac{1}{m} \left(
        \strdia{symmetrization_n_recursive_term_one}
        + (m-1)
        \strdia{symmetrization_n_recursive_term_two}
    \right),\\
    \strdia{antisymmetrization_n}
    &=
    \frac{1}{m} \left(
        \strdia{antisymmetrization_n_recursive_term_one}
        - (m-1)
        \strdia{antisymmetrization_n_recursive_term_two}
    \right).
\end{align}
Finally, in order to generalize the above symmetrization and antisymmetrization notation to the case of multipartite Hilbert spaces, e.g., $\s H_{XY} = \s H_X \otimes \s H_Y$, we introduce the following notational definition for the joint symmetrization $\Pi^{(2)}_{XY}$:
\begin{equation}
    \label{eq:parallel_sym_XY}
    \strdia{parallel_symmetrization_two_X_Y}
    =
    \frac{1}{2}
    \strdia{parallel_e_X_Y}
    +
    \frac{1}{2}
    \strdia{parallel_onetwo_X_Y}.
\end{equation}

\subsection{The degree one case}
\label{sec:n1}

This subsection explores the strength of the constraint imposed by \cref{eq:nth_order} for the special case when $n=1$ (equivalently \cref{eq:n1}) for the purposes of detecting unrealizable $\s M$-product states. For marginal scenarios involving disjoint marginal contexts, it will be shown that \cref{eq:n1} happens to be satisfied by \textit{all} $\s M$-product states, and therefore is useless for the QMP. For at least some marginal scenarios involving non-disjoint marginal contexts, it will shown that \cref{eq:n1} is already capable of witnessing the unrealizability of certain $\s M$-product states. Finally, it is shown that for some (admittedly degenerate) marginal scenarios, the constraint imposed by \cref{eq:n1} is also sufficient for the corresponding QMP.

Throughout this subsection, unipartite subsystems are labeled alphabetically, e.g., $A$, $B$, $C\ldots$, and their respective dimensions denoted by lower-case letters, e.g., $a = d_A$, $b = d_B$, etc.

Consider the marginal scenario $\s M = (A, B)$ ($m=2$) for the joint context $J = AB$. In this scenario, the QMP is already fully solved: $\rho_{A}$ and $\rho_{B}$ are the marginals of some pure state $\psi_{AB}$ if and only if $\spec(\rho_{A}) = \spec(\rho_{B})$ (see \cref{sec:bipartite}). To what extent, if any, does \cref{eq:n1} reproduce this known solution? If $\rho_{A}$ and $\rho_{B}$ are the marginals of some pure state $\psi_{AB}$, \cref{eq:n1} implies
\begin{equation}
    \label{eq:n1_A_B}
    \rho_{A} \otimes \rho_{B} \leq (\Tr_{B}\otimes \Tr_{A})(\Pi^{(2)}_{AB}).
\end{equation}
To calculate the right-hand side of the above inequality, it will be convenient to use the diagrammatic notation introduced in \cref{sec:diagrammatics}. Specifically, $\Pi^{(2)}_{AB}$ can be depicted using \cref{eq:parallel_sym_XY} (with $X,Y$ substituted by $A,B$), and thus $(\Tr_{B}\otimes \Tr_{A})(\Pi^{(2)}_{AB})$ is equal to
\begin{align}
    \strdia{parallel_sym_AB_partial}
    &=
    \frac{1}{2}
    \strdia{parallel_e_AB_partial}
    +
    \frac{1}{2}
    \strdia{parallel_onetwo_AB_partial},
    \\
    &=
    \frac{ab}{2}
    \strdia{parallel_AB_ident}
    +
    \frac{1}{2}
    \strdia{parallel_AB_ident},\\
    &= \frac{1+ab}{2} \strdia{parallel_AB_ident}.
\end{align}
Therefore, \cref{eq:n1_A_B} is equivalent to $\rho_{A} \otimes \rho_{B} \leq \frac{1}{2}(1+ab) \ident_{A} \otimes \ident_{B}$, i.e.
\begin{equation}
    \strdia{tensor_rA_rB} \leq \frac{1+ab}{2} \strdia{parallel_AB_ident_label_adjust},
\end{equation}
which is an inequality satisfied by all $(A, B)$-product states $\rho_{A} \otimes \rho_{B}$ because $\rho_X \leq \ident_X$ already holds for all $\rho_X \in \dens(\s H_X)$ and $ab \geq 1$. In fact, it is not too difficult to show that when $\s M = (X_1, \ldots, X_k)$ contains disjoint contexts, i.e. $X_i \cap X_j = \emptyset$ for $i \neq j$, the inequality in \cref{eq:n1} is always trivial because $\Tr_{mJ\setminus\s M}(T_{J}(\pi)) \geq \ident_{\s M}$ holds for all $\pi \in \sym_k$ and thus $\Tr_{mJ\setminus\s M}(\Pi^{(m)}_{J}) \geq \ident_{\s M}$ also. Fortunately, the same is not necessarily true for non-disjoint marginal scenarios.

For an example of a non-trivial instance of \cref{eq:n1}, consider the marginal scenario $\s M = (AB, AC, BC)$ for the joint context $J = ABC$. In this scenario, the aforementioned operator inequality becomes
\begin{equation}
    \label{eq:AB_AC_BC_ex}
    \rho_{AB} \otimes \rho_{AC} \otimes \rho_{BC} \leq (\Tr_{C} \otimes \Tr_{B} \otimes \Tr_{A})(\Pi_{ABC}^{(3)}).
\end{equation}
The projector $\Pi_{ABC}^{(3)}$ onto $\symsub^{3}(\s H_{A} \otimes \s H_{B} \otimes \s H_{C})$ can be expressed as
\begin{align}
    &\strdia{parallel_sym_ABC}=
    \frac{1}{3!}
    \bigg[
    \strdia{ABC_e}
    +\strdia{ABC_onetwo}+ \\
    &\strdia{ABC_twothree}
    +\strdia{ABC_onethree}
    +\strdia{ABC_onetwothree}
    +\strdia{ABC_threetwoone}
    \bigg]. \nonumber
\end{align}
Therefore, $(\Tr_{C} \otimes \Tr_{B} \otimes \Tr_{A})(\Pi^{(3)}_{ABC})$ becomes
\begin{align}
    &\strdia{parallel_sym_ABC_partial} =
    \frac{1}{3!}
    \bigg[
    abc\strdia{ABC_eee} + \\
    &+c\strdia{ABC_eeonetwo}
    +a\strdia{ABC_onetwoee}
    +b\strdia{ABC_eonetwoe}
    +2\strdia{ABC_onetwoonetwoonetwo}
    \bigg]. \nonumber
\end{align}
To show that \cref{eq:AB_AC_BC_ex} is a non-trivial constraint, we consider consider the case of three qubits, i.e. $a=b=c=2$. For a given pair of qubits, the unique antisymmetric pure state (also called the singlet state), $\Phi = \ket{\Phi}\bra{\Phi} \in \proj(\mathbb C^2 \otimes \mathbb C^2)$, can be identified with $\ket{\Phi} = \frac{1}{\sqrt{2}} (\ket{0 1} - \ket{1 0})$ and depicted as follows
\begin{equation}
    \strdia{levi_civita_two_named} = \strdia{levi_civita_two}, \quad \text{s.t.} \quad \strdia{levi_civita_two_swap} = - \strdia{levi_civita_two}, \quad \strdia{levi_civita_two_norm} = 1.
\end{equation}
When applied to $(\Tr_{C} \otimes \Tr_{B} \otimes \Tr_{A})(\Pi^{(3)}_{ABC})$ (assuming $a = b = c = 2$) we obtain the identity
\begin{equation}
    \strdia{parallel_sym_ABC_partial_levi} = \frac{1}{3!}(abc-a-b-c-2) = 0,
\end{equation}
which, when combined with \cref{eq:AB_AC_BC_ex} proves that the $(AB, AC, BC)$ marginals of a three-qubit pure state $\psi_{ABC}$ always satisfy
\begin{equation}
    \label{eq:AB_AC_BC_equality}
    \strdia{AB_AC_BC_equal_three_qubits} = 0.
\end{equation}
An example of an unrealizable triple of states $(\rho_{AB}, \rho_{AC}, \rho_{BC})$ whose unrealizability is witnessed by the above equality constraint is the triple of anti-correlated states, $\rho_{AB} = \rho_{AC} = \rho_{BC} = \frac{1}{2}(\ket{01}\bra{01}+\ket{10}\bra{10})$, where the left-hand side evaluates to $2^{-5}$. Other examples includes the triple of singlets $\rho_{AB} = \rho_{AC} = \rho_{BC} = \Phi$ (with value $2^{-4}$), or the triple of maximally mixed states $\rho_{AB} = \rho_{AC} = \rho_{BC} = \frac{I}{2} \otimes \frac{I}{2}$ (with value $2^{-6}$). An example of an inconsistent triple of states for which \cref{eq:AB_AC_BC_equality} happens to be satisfied is $\rho_{AB} = \rho_{BC} = \ket{00}\bra{00}$ and $\rho_{AC} = \ket{11}\bra{11}$.

To conclude, consider the rather non-standard marginal scenario $\s M = (X, X)$ for the joint context $J = X$. Taken literally, the QMP for this marginal scenario is to determine, for any given pair of states $\rho_X, \sigma_X \in \dens(\s H_X)$, whether or not there exists a pure state $\psi_X \in \proj(\s H_X)$ such that $\rho_X = \psi_X$ and $\sigma_X = \psi_X$. This marginal scenario can be regarded as ``non-standard'' for at least two reasons: (i) the marginal context $X$ is repeated twice in $\s M$, and (ii) since $X$ is not a proper subset of $X$, $\rho_X$ and $\sigma_X$ are not proper marginals of $\psi_X$. Taken together, the QMP for this scenario has a simple solution: $\rho_X$ and $\sigma_X$ are realizable if and only if they are both pure states and equal to each other. Nevertheless, in this scenario \cref{eq:n1} is a valid constraint; in particular, it simplifies to $\rho_{X} \otimes \sigma_X \leq \Pi^{(2)}_{X}$, or diagrammatically
\begin{equation}
    \label{eq:XX_n1}
    \strdia{tensor_rX_sX} \leq \strdia{symmetrization_two_X}.
\end{equation}
The above inequality implies that $\Tr(\sigma_X \rho_X) = 1$ since
\begin{equation}
    0 \leq \strdia{tensor_rX_sX_antisym} \leq \strdia{sym_antisym_trace} = 0,
\end{equation}
and
\begin{align}
    \strdia{tensor_rX_sX_antisym}
    &= \frac{1}{2} \strdia{tensor_rX_sX_antisym_e} - \frac{1}{2}\strdia{tensor_rX_sX_antisym_onetwo}, \\
    &= \frac{1}{2} \left(1 - \strdia{tensor_rX_sX_trace}\right).
\end{align}
Since $\Tr(\sigma_X \rho_X) = 1$ holds if and only if $\sigma_X = \rho_X = \psi_X$ for some pure state $\psi_X$, we see that \cref{eq:n1}, which becomes \cref{eq:XX_n1}, is both necessary and \textit{sufficient} for the $\s M = (X, X)$ instance of the QMP.

\subsection{The Bipartite QMP}
\label{sec:bipartite}

This subsection considers the bipartite marginal scenario, $\s M = (A,B)$, for the joint context $J = AB$. For this scenario, the QMP is fully solved and admits of a simple solution: $\rho_{A}$ and $\rho_{B}$ are the marginals of a joint pure state $\psi_{AB} \in \proj(\s H_{AB})$ if and only if they have the same spectrum~\cite{tyc2015quantum,klyachko2004quantum}. A natural question arises: how does \cref{thm:main} recover this well-known result?

To answer this question, first let $a = \dim(\s H_A)$, $b = \dim(\s H_B)$, and let $\ell = \min(a,b)$. For this scenario, \cref{eq:nth_order} becomes
\begin{equation}
    \label{eq:bipartite_nth_order}
    (\rho_{A} \otimes \rho_{B})^{\otimes n} \leq (\Tr_{B}\otimes\Tr_{A})^{\otimes n}(\Pi_{AB}^{(2n)}).
\end{equation}
Now let $s_{A} \in \spectra^{a}$ and $s_{B} \in \spectra^{b}$ be the spectra of $\rho_{A}$ and $\rho_{B}$. Using the results of \cref{sec:keyl_exclusive} (or essentially the spectral estimation theorem~\cite{keyl2001estimating,christandl2006spectra}), together with \cref{eq:haar_sym}, it is possible to show that the exponential factor in \cref{eq:unrealizability_measure}, $\Omega(\rho_{A}\otimes\rho_{B})$, depends only on $r_A$ and $r_B$ and is equal to:
\begin{align}
    \Omega(\rho_{A} \otimes \rho_{B}) = \inf_{r \in \spectra^{\ell}}(\kl{s_{A}}{r}+\kl{s_{B}}{r}),
\end{align}
where $\kl{p}{q}$ is the relative entropy $\kl{p}{q} = \sum_{i} p_i (\ln p_i - \ln q_i)$. Since $\kl{p}{q}$ only vanishes if $p = q$, $\Omega(\rho_{A} \otimes \rho_{B})$ only vanishes if $s_{A} = s_{B}$. Therefore, we conclude that $\rho_{A}$ and $\rho_{B}$ are the $(A,B)$-marginals of a pure state $\psi_{AB} \in \proj(\mathbb C^{a} \otimes \mathbb C^{b})$ if and only if they have equal spectra.
Additionally, using Pinsker's inequality~\cite{reid2009generalised}, $\norm{p-q}_1^{2} \leq 2\kl{p}{q}$, and the triangle inequality for $\norm{\cdot}_1$, we obtain:
\begin{align}
    \norm{s_A - s_B}_1^2
    &\leq 3(\norm{s_A - r}_1^2 + \norm{s_B - r}_1^2)\\
    &\leq 6 (\kl{s_A}{r}+\kl{s_B}{r}).
\end{align}
Therefore, $\Omega(\rho_{A} \otimes \rho_{B}) \geq \norm{s_A - s_B}_1^2/6$.

A more direct consequence of \cref{eq:bipartite_nth_order} is the following proposition.
\begin{prop}
    Let $n \in \mathbb N$ and let $\alpha \in \yf_{n}^{a}$ and $\beta \in \yf_{n}^{b}$ be partitions. If $\rho_{A}\otimes \rho_{B}$ satisfies \cref{eq:bipartite_nth_order}, then
    \begin{equation}
        \label{eq:bipartite_result}
        s_{\alpha}(r_{A})s_{\beta}(r_{B}) \leq \sum_{\lambda \in \yf_{2n}^{\ell}} c^{\lambda}_{\alpha\beta} \frac{\dim(\schur{\lambda}^{a})\dim(\schur{\lambda}^{b})}{\dim(\specht{\lambda})},
    \end{equation}
    where $r_{A}$ and $r_{B}$ are the spectra of $\rho_{A}$ and $\rho_{B}$ respectively. Additionally, $s_{\alpha}$ and $s_{\beta}$ are Schur functions~\cite{sra2016inequalities,sagan2013symmetric} and $c^{\lambda}_{\alpha\beta}$ is the Littlewood-Richardson coefficient~\cite{littlewood1934group,fulton2000eigenvalues,pak2019largest}.
\end{prop}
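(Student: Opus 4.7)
The plan is to take the trace of both sides of the operator inequality in \cref{eq:bipartite_nth_order} against the positive semidefinite operator $\Pi_A^\alpha \otimes \Pi_B^\beta$, where $\Pi_A^\alpha$ and $\Pi_B^\beta$ are the isotypic projections appearing in the Schur--Weyl decompositions of $\s H_A^{\otimes n}$ and $\s H_B^{\otimes n}$, and then identify the two resulting scalars with the two sides of \cref{eq:bipartite_result}. Since both $\Pi_A^\alpha$ and $\Pi_B^\beta$ are positive semidefinite, the operator ordering survives, and dividing out the common positive factor $\dim(\specht{\alpha})\dim(\specht{\beta})$ at the end will yield the claimed inequality.

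For the left-hand side, the standard identity $\Tr[\Pi_A^\alpha \rho_A^{\otimes n}] = \dim(\specht{\alpha})\, s_\alpha(r_A)$ (a consequence of Schur--Weyl duality, which identifies $\lilschur{\alpha}(\rho_A)$ on the $\alpha$-isotypic block and evaluates its trace using the Weyl character formula) produces, after tensoring the $A$ and $B$ contributions, the scalar $\dim(\specht{\alpha})\dim(\specht{\beta})\, s_\alpha(r_A) s_\beta(r_B)$. For the right-hand side, I would use the Haar integral representation $\Pi_{AB}^{(2n)} = \binom{2n+ab-1}{2n}\int d\mu(\psi)\, \psi_{AB}^{\otimes 2n}$ and push the partial trace inside the integral, turning the problem into computing $\int d\mu(\psi)\, s_\alpha(r_\psi)\, s_\beta(r_\psi)$, where $r_\psi$ is the common Schmidt spectrum of the two marginals of the bipartite pure state $\psi_{AB}$. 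The Littlewood--Richardson expansion $s_\alpha(r) s_\beta(r) = \sum_{\lambda \in \yf_{2n}^\ell} c^{\lambda}_{\alpha\beta} s_\lambda(r)$ (truncated to $\ell(\lambda)\leq \ell$ because $r$ has at most $\ell$ nonzero parts) reduces everything to computing $\int d\mu(\psi)\, s_\lambda(r_\psi)$ for a single partition $\lambda \vdash 2n$.

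The key computational step, and the one most likely to require care, is the evaluation of this single integral. Rewriting $s_\lambda(r_\psi) = \dim(\specht{\lambda})^{-1}\Tr[\Pi_A^\lambda (\Tr_B \psi)^{\otimes 2n}]$ and using partial-trace adjunction reduces the question to computing $\Tr[(\Pi_A^\lambda \otimes \ident_B^{\otimes 2n})\, \Pi_{AB}^{(2n)}]$. Here I would invoke the $GL(a)\times GL(b)$-decomposition $\mathrm{Sym}^{2n}(\s H_A \otimes \s H_B) \cong \bigoplus_{\mu\in\yf_{2n}^\ell} \schur{\mu}^a \otimes \schur{\mu}^b$ (the Cauchy identity, realized inside $\s H_A^{\otimes 2n}\otimes \s H_B^{\otimes 2n}$ through the unique $S_{2n}$-invariant vector in $\specht{\mu}\otimes \specht{\mu}$). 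Composing the $A$-isotypic projector with this decomposition kills all summands with $\mu \neq \lambda$ and leaves the projection onto the single block $\schur{\lambda}^a \otimes \schur{\lambda}^b$, so the trace is simply $\dim(\schur{\lambda}^a)\dim(\schur{\lambda}^b)$. Consequently $\int d\mu(\psi)\, s_\lambda(r_\psi) = \binom{2n+ab-1}{2n}^{-1} \dim(\schur{\lambda}^a)\dim(\schur{\lambda}^b)/\dim(\specht{\lambda})$.

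Assembling the pieces, the right-hand side of the main inequality becomes
\begin{equation*}
\dim(\specht{\alpha})\dim(\specht{\beta}) \sum_{\lambda \in \yf_{2n}^\ell} c^{\lambda}_{\alpha\beta}\, \frac{\dim(\schur{\lambda}^a)\dim(\schur{\lambda}^b)}{\dim(\specht{\lambda})},
\end{equation*}
so dividing through by the strictly positive factor $\dim(\specht{\alpha})\dim(\specht{\beta})$ produces \cref{eq:bipartite_result}. The main obstacle is really just the careful bookkeeping in the evaluation of $\Tr[(\Pi_A^\lambda\otimes \ident)\Pi_{AB}^{(2n)}]$ via the Cauchy decomposition of $\mathrm{Sym}^{2n}(\s H_A \otimes \s H_B)$; once this identity is in hand, the rest is algebraic manipulation using Schur--Weyl duality and the Littlewood--Richardson rule.
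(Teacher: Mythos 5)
Your proposal is correct and reaches the same final identity by essentially the same mechanism (testing the operator inequality against $\Pi_A^\alpha\otimes\Pi_B^\beta$ and invoking Schur--Weyl together with $\GL(a)\times\GL(b)$-duality on $\symsub^{2n}(\s H_A\otimes\s H_B)$), but the intermediate bookkeeping differs from the paper in a way worth noting. The paper computes $\Tr\bigl[(\Pi_A^\alpha\otimes\Pi_B^\beta)\,\Tr_{mJ\setminus\s M}^{\otimes n}(\Pi_{AB}^{(2n)})\bigr]$ directly from the Howe decomposition, and the Littlewood--Richardson coefficient $c^\lambda_{\alpha\beta}$ appears as the \emph{branching multiplicity} of $\specht{\alpha}\otimes\specht{\beta}$ inside $\specht{\lambda}$ upon restricting $S_{2n}$ to $S_n\times S_n$. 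You instead detour through the Haar integral representation $\Pi_{AB}^{(2n)}\propto\int\diff\mu(\psi)\,\psi^{\otimes 2n}$, push the partial traces inside, reduce to $\int\diff\mu(\psi)\,s_\alpha(r_\psi)s_\beta(r_\psi)$, and then apply the Littlewood--Richardson rule as the \emph{Schur-polynomial product expansion} $s_\alpha s_\beta=\sum_\lambda c^\lambda_{\alpha\beta}s_\lambda$; the remaining single integral $\int\diff\mu(\psi)\,s_\lambda(r_\psi)$ is then evaluated via Howe duality, and you correctly observe that the binomial normalisation cancels. The two routes are equivalent because the branching and product-expansion readings of $c^\lambda_{\alpha\beta}$ coincide, but yours has the small pedagogical advantage of making the role of the Schmidt spectrum $r_\psi$ visible and connecting directly to the probabilistic (``expected Schur polynomial'') interpretation, at the cost of a detour through the de Finetti-type integral that the paper's direct trace computation avoids. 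Both are valid; the one subtlety in yours --- that $r_\psi$ has at most $\ell=\min(a,b)$ nonzero parts so the LR sum is automatically truncated to $\yf_{2n}^\ell$ --- is correctly flagged, and your evaluation $\Tr[(\Pi_A^\lambda\otimes\ident)\Pi_{AB}^{(2n)}]=\dim(\schur{\lambda}^a)\dim(\schur{\lambda}^b)$ is the same Cauchy-identity computation the paper relies on.
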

\begin{proof}
    One of the most powerful tools for decomposing bipartite Hilbert spaces, specifically the symmetric subspace of a bipartite system $\symsub^{k}(\s H_{A} \otimes \s H_{B}) \cong V^{ab}_{(k)}$, is known as $\mathrm{GL}(a)\times\mathrm{GL}(b)$-duality~\cite{howe1987gl} (see also~\cite[Eq. (2.25)]{walter2014multipartite}):
    \begin{equation}
        V^{ab}_{(k)} \cong \bigoplus_{\lambda \in \yf_{k}^{\ell}} \schur{\lambda}^{a} \otimes \schur{\lambda}^{b},
    \end{equation}
    where $\ell = \min(a,b)$.
    Using this result, and applying $\Pi_{A}^{\alpha} \otimes \Pi_{B}^{\beta}$ to the right-hand-side of \cref{eq:nth_order}, we obtain
    \begin{align}
        \begin{split}
            &\Tr_{AB}^{\otimes n}\{(\Pi_{A}^{\alpha} \otimes \Pi_{B}^{\beta})(\Tr_{B}^{\otimes n}\otimes\Tr_{A}^{\otimes n})(\Pi_{AB}^{(2n)})\}\\
            &\quad=\dim(\specht{\alpha}) \dim(\specht{\beta}) \hspace{-0.5em}\sum_{\lambda \in \yf^{\ell}_{2n}}\hspace{-0.5em}c^{\lambda}_{\alpha\beta}\frac{\dim(\schur{\lambda}^{a})\dim(\schur{\lambda}^{b})}{\dim(\specht{\lambda})},
        \end{split}
    \end{align}
    where $c^{\lambda}_{\alpha\beta}$ counts the multiplicity of the $\sym_n\times \sym_n$ irreducible representation space $\specht{\alpha} \otimes \specht{\beta}$ inside $\specht{\lambda}$ under the restriction of $\sym_{2n}$ to $\sym_{n} \times \sym_{n}$. By comparison, applying $\Pi_{A}^{\alpha} \otimes \Pi_{B}^{\beta}$ to the left-hand-side of \cref{eq:nth_order} yields
    \begin{align}
        &\Tr((\Pi_{A}^{\alpha} \otimes \Pi_{B}^{\beta})(\rho_{A}^{\otimes n} \otimes \rho_{B}^{\otimes n}))\\
        &\quad=\Tr(\Pi_{A}^{\alpha}\rho_{A}^{\otimes n})\Tr(\Pi_{B}^{\beta}\rho_{B}^{\otimes n}),\\
        &\quad= s_{\alpha}(r_{A})\dim(\specht{\alpha})s_{\beta}(r_{B})\dim(\specht{\beta}).
    \end{align}
    Therefore, \cref{eq:bipartite_nth_order} implies \cref{eq:bipartite_result} and thus the claim holds.
\end{proof}

\subsection{Fermionic \& Bosonic QMP}
\label{sec:ferm_bos}

Our sufficient family of necessary inequality constraints can be modified to handle the fermionic and bosonic variants of the QMP. Recall that a state describing a system of $p$ fermions (resp. bosons) with $f$ internal degrees of freedom, is typically modeled by an element of the antisymmetric subspace $\antisymsub^{p} \mathbb C^{f}$ (resp. the symmetric subspace $\symsub^{p}\mathbb C^{f}$). Since $\antisymsub^{p}\mathbb C^{f}$ (resp. $\symsub^{p}\mathbb C^{f}$) can be viewed as a subspace of a $p$-partite composite Hilbert space $(\mathbb C^f)^{\otimes p}$, and $\symsub^{n}\antisymsub^{p}\mathbb C^{f}$ (resp. $\symsub^{n}\symsub^{p}\mathbb C^{f}$) serves as the respresentation space for an irreducible representation of $SU(\antisymsub^{p}\mathbb C^{f}) \cong SU(\dim(\antisymsub^{p}\mathbb C^{f})) \cong SU(\tbinom{f}{p})$ (resp. $SU(\tbinom{p+f-1}{p})$), the analogue of \cref{eq:haar_sym} holds and thus an analogue of \cref{lem:de_finetti_realizable} also holds. Altogether, a generalization of \cref{thm:main} holds:
\begin{cor}
    Let $\s H_{V} \subseteq \s H_{J}$ be a subspace of a joint Hilbert space, $\s H_{J}$. An $\s M$-product state, $\rho_{\s M} = \rho_{S_1} \otimes \cdots \otimes \rho_{S_m}$, is realizable by a joint pure state $\psi_{V} \in \s H_V \subseteq \s H_{J}$ in the subspace $\s H_{V}$ if and only if for all $n \in \mathbb N$,
    \begin{equation}
        \rho_{\s M}^{\otimes n} \leq \Tr_{m J\setminus \s M}^{\otimes n}(\Pi^{(nm)}_{V}),
    \end{equation}
    where $\Pi^{(nm)}_{V}$ is the projection operator onto the $nm$-symmetric subspace $\symsub^{nm}\s H_V \subseteq \s H_J^{\otimes nm}$.
\end{cor}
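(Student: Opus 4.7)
The proof would proceed by essentially repeating the argument structure of Theorem~\ref{thm:main}, with the joint Hilbert space $\s H_J$ systematically replaced by the subspace $\s H_V$ wherever the ``joint pure state'' assumption enters. The forward implication is immediate: if $\psi_V \in \proj(\s H_V)$ realizes $\rho_{\s M}$, then $\ket{\psi_V}^{\otimes nm} \in \symsub^{nm}\s H_V$ for every $n$, so by the subspace version of Proposition~\ref{prop:copy_sym} we have $\psi_V^{\otimes nm} \leq \Pi^{(nm)}_V$, and applying the completely positive map $\Tr^{\otimes n}_{mJ\setminus\s M}$ yields the claimed operator inequality.

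For the converse, the main ingredient to be established is the analogue of Lemma~\ref{lem:de_finetti_realizable} with $J$ replaced by $V$. The key point is that $\symsub^{nm}\s H_V$ carries an irreducible representation of $\mathrm{SU}(\s H_V)$, so Schur's lemma again yields
\begin{equation}
    \Pi^{(nm)}_V = \tbinom{nm + d_V - 1}{nm} \int_{\proj(\s H_V)} \mu_V(\diff \psi_V)\, \psi_V^{\otimes nm},
\end{equation}
where $\mu_V$ is the $\U(\s H_V)$-invariant Haar measure on $\proj \s H_V$ and $d_V = \dim(\s H_V)$. Pushing this through $\Tr^{\otimes n}_{mJ\setminus\s M}$ and applying a change of variables along the continuous map $\tau_{\s M}: \proj(\s H_V) \to \dens(\s H_{\s M})$ defined by $\tau_{\s M}(\psi_V) = \Tr_{mJ\setminus\s M}(\psi_V^{\otimes m})$ then writes the right-hand side of the claimed inequality as $\tbinom{nm+d_V-1}{nm}$ times the expectation of $\sigma_{\s M}^{\otimes n}$ under a probability measure $\nu_{\s M}^V$ whose support is the compact set $\s C_{\s M}^V \coloneqq \tau_{\s M}(\proj \s H_V)$ of states realizable within $\s H_V$. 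Compactness of $\s C_{\s M}^V$ follows exactly as before from compactness of $\proj \s H_V$ and continuity of $\tau_{\s M}$ (whose coordinates are polynomials of degree $m$ in the coefficients of $\psi_V$).

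With this de~Finetti-type integral representation in hand, the argument of Theorem~\ref{thm:main} transfers verbatim. Given any $\rho_{\s M}$ satisfying the hypothesis for all $n$, one inserts the explicit Keyl-type projector $E_n = \Phi^U_{\lambda^n}$ from Theorem~\ref{thm:constructive_state_discrim} to obtain, via \eqref{eq:prob_ratio_lower} and \eqref{eq:unrealizability_measure},
\begin{equation}
    \inf_{\sigma_{\s M} \in \s C_{\s M}^V} \keyl{\rho_{\s M}}{\sigma_{\s M}} \leq \frac{\ln \binom{nm+d_V-1}{nm} + c(\rho_{\s M})}{n - d_{\s M}^2},
\end{equation}
whose right-hand side tends to zero as $n\to\infty$ because the binomial grows only polynomially in $n$. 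Corollary~\ref{cor:keyl_div_from_R} applied to the compact set $\s C_{\s M}^V$ then forces $\rho_{\s M} \in \s C_{\s M}^V$, i.e.\ $\rho_{\s M}$ is realizable by some pure state in $\s H_V$.

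I expect the only non-routine step to be checking that the constructive state discrimination apparatus of Section~\ref{sec:keyl_exclusive} remains applicable without modification: nothing in the construction of $\Phi^U_{\lambda^n}$ depends on the structure of the target set of states, so one merely needs to verify that $\s C_{\s M}^V$ is compact (handled above) in order for Remark~\ref{rem:state_discrim} to apply. The representation-theoretic substitution $\s H_J \leadsto \s H_V$ is clean precisely because irreducibility of $\symsub^{k}$ under $\mathrm{SU}$ holds for any finite-dimensional Hilbert space, so no new representation-theoretic input is needed; the fermionic and bosonic cases mentioned in the text are the instantiations $\s H_V = \antisymsub^p \mathbb C^f$ and $\s H_V = \symsub^p \mathbb C^f$ of this general statement.
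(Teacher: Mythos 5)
Your proof is correct and follows the same route the paper takes: the only substantive step is observing that $\symsub^{nm}\s H_V$ is an irreducible $\SU(\s H_V)$-module, so the Schur-lemma identity \cref{eq:haar_sym} and hence the de~Finetti-type \cref{lem:de_finetti_realizable} carry over with $J$ replaced by $V$, after which the Keyl-divergence argument of \cref{thm:main} transfers unchanged. The paper states this more compactly in \cref{sec:ferm_bos} but the content is identical.
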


\subsection{Counting Solutions to the QMP}
\label{sec:ortho_sol}

Whenever a given $\s M$-product state, $\rho_{\s M} = \rho_{S_1}\otimes \cdots \otimes \rho_{S_m}$, is shown to be realizable, a natural follow-up problem is to determine whether or not the joint state, $\psi_{J}$, satisfying \cref{eq:cqmp} is unique. For the bipartite marginal scenario, $\s M = (\s A, \s B)$, if the common spectrum of $\rho_{A}$ and $\rho_{B}$ is $s = (s_1, \ldots, s_r, 0, \ldots, 0)$, with positive values distinct, i.e. $s_1 > \cdots > s_r > 0$, then the \textit{unique} pure state, $\psi_{AB} \in \proj(\s H_J)$, satisfying \cref{eq:cqmp} is $\psi_{AB} = \ket{\psi_{AB}}\bra{\psi_{AB}}$ where
\begin{equation}
    \ket{\psi_{AB}} = \sum_{i=1}^{r} \sqrt{s_i} \ket{\phi_{A}^{(i)}} \otimes \ket{\phi_{B}^{(j)}},
\end{equation}
where $\{\phi_{A}^{(i)}\}_{i=1}^{r}$ and $\{\phi_{B}^{(i)}\}_{i=1}^{r}$ are the eigenvectors of $\rho_{A}$ and $\rho_{B}$. If, however, the common spectrum, $(s_1, \ldots, s_r, 0, \ldots, 0)$, is degenerate in the sense that some of its values are identical, then the solution to \cref{eq:cqmp} may not be unique. A familiar example of this phenomenon, for the two-qubit Hilbert space $\s H_J \cong \mathbb C^{2} \otimes \mathbb C^{2}$, are the four Bell states all sharing the same pair of maximally-mixed, single-qubit marginals, $(\frac{\ident}{2}, \frac{\ident}{2})$.

The following result generalizes \cref{eq:nth_order} by considering the possibility that an $\s M$-product may be realizable by multiple, orthogonal, joint states.
\begin{cor}
    Let $\rho_{\s M} = \rho_{S_1} \otimes \cdots \otimes \rho_{S_m}$ be an $\s M$-product state and $\{ \psi_{J}^{(1)}, \ldots, \psi_{J}^{(v)}\}$ be a set of joint pure states, satisfying i) for all $1 \leq j,k \leq v$,
    \begin{equation}
        \Tr(\psi_{J}^{(j)}\psi_{J}^{(k)}) = |\braket{\psi_{J}^{(j)}|{\psi_{J}^{(k)}}}|^{2} = \delta_{j,k},
    \end{equation}
    and ii) for all $1 \leq i \leq m$, and $1 \leq k \leq v$,
    \begin{equation}
        \rho_{S_i} = \Tr_{J\setminus S_i}(\psi_{J}^{(k)}).
    \end{equation}
    Then, the following inequality holds:
    \begin{equation}
        \label{eq:ortho_ineq}
        v^{nm}\rho_{\s M}^{\otimes n} \leq \sum_{\lambda \in \yf_{nm}^{v}}\Tr_{mJ\setminus\s M}^{\otimes n}(\Pi_{J}^{\lambda}).
    \end{equation}
\end{cor}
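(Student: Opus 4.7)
The plan is to generalize the pure-state approach underlying \cref{thm:main}, replacing the single pure state $\psi_J$ by the maximally mixed state on the $v$-dimensional subspace spanned by $\{\psi_J^{(1)},\ldots,\psi_J^{(v)}\}$. Let $\s W \subseteq \s H_J$ be this span, which is $v$-dimensional because of the orthogonality condition (i), and let $\omega_{\s W} \coloneqq \tfrac{1}{v}\sum_{k=1}^{v}\psi_J^{(k)}$ be the maximally mixed state on $\s W$. Condition (ii) says that each $\psi_J^{(k)}$ has the same list of $\s M$-marginals $(\rho_{S_1},\ldots,\rho_{S_m})$, so by linearity $\omega_{\s W}$ does as well, i.e. $\Tr_{J\setminus S_i}(\omega_{\s W}) = \rho_{S_i}$ for every $i$. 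Taking the $n$th power and applying $\Tr_{mJ\setminus \s M}^{\otimes n}$ to $\omega_{\s W}^{\otimes nm}$ therefore yields $\rho_{\s M}^{\otimes n}$, in direct analogy with \cref{eq:nth_disco}.

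The next step is to bound $\omega_{\s W}^{\otimes nm}$ from above by a projection operator. Since $\omega_{\s W}$ has spectrum $(1/v, \ldots, 1/v)$, its $nm$-th tensor power $\omega_{\s W}^{\otimes nm}$ has operator norm $v^{-nm}$ and is supported on $\s W^{\otimes nm}$, so
\begin{equation}
    \omega_{\s W}^{\otimes nm} \leq v^{-nm}\, P_{\s W^{\otimes nm}},
\end{equation}
where $P_{\s W^{\otimes nm}}$ is the orthogonal projector onto $\s W^{\otimes nm} \subseteq \s H_J^{\otimes nm}$.

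The heart of the argument is to decompose $P_{\s W^{\otimes nm}}$ using Schur-Weyl duality. Because $\s W$ is $v$-dimensional, the tensor power $\s W^{\otimes nm}$ decomposes under the joint action of $\sym_{nm}\times \GL(\s W)$ as $\bigoplus_{\lambda \in \yf_{nm}^{v}} \specht{\lambda}\otimes \schur{\lambda}^{v}$, with the sum restricted to partitions of length at most $v$. Since the inclusion $\s W \hookrightarrow \s H_J$ induces an $\sym_{nm}$-equivariant isometric embedding of $\s W^{\otimes nm}$ into $\s H_J^{\otimes nm}$, the $\lambda$-isotypic projector of $\s W^{\otimes nm}$ is dominated by $\Pi_J^{\lambda}$, yielding
\begin{equation}
    P_{\s W^{\otimes nm}} \;\leq\; \sum_{\lambda \in \yf_{nm}^{v}} \Pi_J^{\lambda}.
\end{equation}
The only subtlety here, which I would verify carefully, is that the identification of the $\lambda$-isotypic block of $\s W^{\otimes nm}$ as a subspace of the corresponding block of $\s H_J^{\otimes nm}$ is indeed an isometric inclusion; this follows from Schur's lemma applied to $\sym_{nm}$-equivariant maps between isotypic components, together with the unitarity of the embedding $\s W \hookrightarrow \s H_J$.

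To conclude, I would combine the three operator inequalities and apply $\Tr_{mJ\setminus \s M}^{\otimes n}$, which is a positive (completely positive) map and therefore preserves $\leq$. This gives
\begin{equation}
    \rho_{\s M}^{\otimes n} \;=\; \Tr_{mJ\setminus \s M}^{\otimes n}(\omega_{\s W}^{\otimes nm}) \;\leq\; v^{-nm}\sum_{\lambda \in \yf_{nm}^{v}} \Tr_{mJ\setminus \s M}^{\otimes n}(\Pi_J^{\lambda}),
\end{equation}
which rearranges into \cref{eq:ortho_ineq}. As a consistency check, note that when $v=1$ the only admissible partition is $\lambda = (nm)$, so $\Pi_J^{\lambda} = \Pi_J^{(nm)}$ and the inequality collapses to the $n$-th order constraint of \cref{thm:main}. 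The main obstacle is purely notational rather than conceptual: ensuring the identification of isotypic blocks across the inclusion $\s W^{\otimes nm}\hookrightarrow \s H_J^{\otimes nm}$ is made precise, which can be done via the intertwiners $\iota_{\lambda}$ introduced in \cref{sec:schur_weyl}.
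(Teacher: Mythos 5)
Your argument is correct and follows the paper's approach essentially verbatim, just with a harmless change of normalization: where you work with $\omega_{\s W} = \tfrac{1}{v}\sum_k \psi_J^{(k)}$, the paper works with $P_V = \sum_k \psi_J^{(k)} = v\,\omega_{\s W}$, and your bound $\omega_{\s W}^{\otimes nm} \leq v^{-nm} P_{\s W^{\otimes nm}}$ is in fact an equality since $\omega_{\s W} = P_V/v$. The key step $P_{\s W^{\otimes nm}} \leq \sum_{\lambda\in\yf_{nm}^{v}}\Pi_J^{\lambda}$, which you justify via Schur--Weyl duality for the $v$-dimensional space $\s W$, is exactly what the paper derives by noting that $P_V^{\otimes nm}$ commutes with the $\sym_{nm}$ action, annihilates isotypic components with $\ell(\lambda)>v$, and satisfies $\Pi_J^{\lambda}P_V^{\otimes nm}\Pi_J^{\lambda}\leq\Pi_J^{\lambda}$ because $P_V\leq\ident_J$.
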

\begin{proof}
    Let $P_V$ be the orthogonal projection operator onto the subspace of $\s H_J$ spanned by $\{\psi_{J}^{(k)}\}_{k=1}^{v}$, i.e.,
    \begin{equation}
        P_V = \sum_{k=1}^{v} \psi_{J}^{(k)}.
    \end{equation}
    Since each $\psi_{J}^{(k)}$ has marginals $(\rho_{S_1}, \ldots, \rho_{S_m})$, we conclude
    \begin{equation}
        \label{eq:ortho_equality}
        \Tr_{mJ\setminus\s M}^{\otimes n}(P_V^{\otimes nm}) = v^{nm} \rho_{\s M}^{\otimes n}.
    \end{equation}
    Furthermore, $P_V^{\otimes nm}$ commutes with $T_J(\pi)$ for all $\pi \in \sym_{nm}$ and thus commutes with $\Pi_J^{\lambda}$ for every $\lambda \in \yf_{nm}^{d_{J}}$. In fact, we obtain
    \begin{equation}
        P_V^{\otimes nm} = \sum_{\lambda \in \yf_{nm}^{v}} \Pi_{J}^{\lambda}P_V^{\otimes nm} \Pi_{J}^{\lambda} \leq \sum_{\lambda \in \yf_{nm}^{v}}\Pi_{J}^{\lambda},
    \end{equation}
    because i) $\Pi_{J}^{\lambda} P_{V}^{\otimes nm} = 0$ for all $\lambda$ with length $\ell(\lambda) > v$, and ii) $P_V \leq \ident_{J}$. Applying $\Tr_{mJ\setminus \s M}^{\otimes n}$ yields \cref{eq:ortho_ineq}.
\end{proof}
Note that \cref{eq:ortho_ineq} is equivalent to \cref{eq:nth_order} if $v = 1$. Also note that while \cref{eq:ortho_ineq} is necessary for the existence of $v$ orthogonal solutions to the QMP, satisfying \cref{eq:ortho_ineq} for all $n$ is generally insufficient (for $v > 1$) to conclude that $v$ orthogonal solutions to the QMP exist. For example, if $v = d_{J}$, \cref{eq:ortho_ineq} simplifies to
\begin{equation}
    \rho_{\s M}^{\otimes n} \leq \left(\frac{\ident_{\s M}}{d_{\s M}}\right)^{\otimes n}.
\end{equation}
The above constraint is evidently satisfied for all $n \in \mathbb N$, if and only if $\rho_{\s M}$ is the maximally-mixed $\s M$-product state, i.e., $\rho_{\s M} = \ident_{\s M}/d_{\s M}$. However, such states are generally unrealizable~\cite{klyachko2002coherent}, e.g., it can be shown that $(\frac{\ident_{A}}{a}, \frac{\ident_{B}}{b})$ are not the $(A, B)$-marginals of \textit{any} pure state, $\psi_{AB}$, if $a \neq b$.

\chapter*{Concluding remarks} 
\addcontentsline{toc}{chapter}{Concluding remarks} 
This thesis considered the notion of a \textit{quantum realizability problem}, which can be understood as any kind of constraint satisfiability problem involving collections of constraints placed on a quantum state.
Throughout the thesis, these constraints are freely interpreted as candidate \textit{properties} the states of a quantum system may or may not possess.
If the constraints happen to be satisfiable by some quantum state, then the corresponding property values are said to be \textit{realizable}, otherwise they are \textit{unrealizable}.

This thesis makes progress toward a general method for tackling quantum realizability problems based upon insights from quantum property estimation and quantum tomography.
In particular, our main contribution is to demonstrate that the realizability of a given collection of property values is encoded in the asymptotics of a corresponding sequence of probabilities indexed by a positive integer $n$. These probabilities, in turn, can be interpreted as the expected likelihood for $n$-copies of a random quantum state to behave in a manner which is, in a certain sense, indicative of quantum states exhibiting those property values.
In the end, this method produces a hierarchy of necessary conditions, in the form a probabilistic bounds, for the realizability of a given collection of property values which convergences to sufficiency as the number of copies, $n$, tends to infinity.

Special emphasis is placed upon quantum realizability problems wherein each property under consideration corresponds to the moment map, $\momap$, of a representation of a complex reductive group, $G$, acting on an underlying complex finite-dimensional Hilbert space, $\s V$.
In this setting, the moment map evaluated on the subspace spanned by a given unit vector, $v$, denoted by $\momap([v])$, serves two purposes: (i) as a non-commutative gradient of the logarithm of the norm of a point in the \textit{$G$-orbit} of $v$, and (ii) as the map assigning to each Hermitian operator in the Lie algebra of $G$, the corresponding \textit{expectation value} with respect to the pure quantum state associated to the vector $v$.
It is this bridge between the statistical moments of Hermitian observables and the algebreogeometric features of group orbits which gives rise to the vast majority of the results presented in this thesis.

Specifically, this thesis builds upon two recent and intimately related results.
First and foremost is the \textit{strong-duality} result (see \cref{sec:strong_duality}, \cref{thm:strong_duality}) due to \citeauthor{franks2020minimal}~\cite{franks2020minimal} for the problem of non-commutative optimization theory~\cite{burgisser2019towards}.
The second result upon which this thesis relies is the generalization of \citeauthor{keyl2006quantum}'s large deviation theory approach to quantum state estimation theory~\cite{keyl2006quantum} to handle the estimation of arbitrary moment maps developed by \citeauthor{botero2021large}~\cite{botero2021large} (see \cref{sec:estimating_moment_maps}).

These results are ultimately derivable from the \textit{deformed} strong-duality theorem (see \cref{sec:deformed_capacity}, \cref{thm:deformed_strong_duality} or \cite{franks2020minimal}) which, for the purposes of this thesis, provides a correspondence between (i) the value, $\omega$, of the moment map of the ray spanned by the unit vector $v$, and (ii) the rate of exponential decay of the norm, $\norm{Q^{\omega}_n v^{\otimes n}}$ of the $n$th tensor power $v^{\otimes n}$ when projected onto a corresponding sequence of subspaces (by the projection operator $Q^{\omega}_n$) as $n$ tends to infinity.
The square of the latter of these two quantities can, via the Born rule, be interpreted as the probability for a particular outcome to occur when a collective, covariant measurement is performed on $n$ copies of the quantum state associated to $v$. 
Altogether, these results enable one to distinguish between those quantum states, $v$, whose moment map equals $\omega$ from those quantum states whose moment map does not equal $\omega$; if $\momap([v]) \neq \omega$, then the probability $\norm{Q^{\omega}_{n} v^{\otimes n}}^{2}$ decays to zero at an exponential rate with increasing $n$, otherwise $\momap([v]) = \omega$ and the probability $\norm{Q^{\omega}_{n} v^{\otimes n}}^{2}$ does not decay at an exponential rate with increasing $n$.
These insights are then used in \cref{chap:realizability} to derive novel solutions to quantum realizability problems involving a finite, by otherwise arbitrary, collection of moment maps.

Perhaps the most significant application for these new tools was to the quantum marginal problem, where in \cref{chap:qmp}, it was shown how to construct a sufficient hierarchy of necessary operator inequalities for the realizability of any given collection of quantum marginals.
Existing attempts to analytically solve the quantum marginal problem have also produced necessary and sufficient conditions, albeit only for marginal scenarios involving disjoint marginal contexts~\cite{klyachko2006quantum}, marginal scenarios with a small degree of overlap~\cite{christandl2018recoupling} or additionally only for low-dimensional Hilbert spaces~\cite{chen2014symmetric}.
While numerical methods for solving quantum marginal problems using semidefinite programming already exist~\cite{hall2007compatibility,yu2021complete}, the results presented in \cref{chap:qmp} constitute a significant step toward a fully \textit{analytic} solution to the quantum marginal problem.

Beyond the immediate application of these techniques to instances of quantum realizability problems which remain unsolved, such as the existence of absolutely maximally entangled states for large composite quantum systems~\cite{helwig2013absolutely} or the allocation of von Neumann entropies for more than three subsystems~\cite{pippenger2003inequalities,linden2005new}, there exists numerous promising areas for follow-up research which could build upon the estimation-theoretic themes presented by this thesis, or at the very least, could seek to address or alleaviate some of its apparent limitations.

For instance, perhaps the most broadly applicable and powerful result of this thesis, \cref{cor:joint_realizable_occasional}, suggests the need to calculate the asymptotics of quantities of the form appearing in \cref{sec:biriffle}, \cref{eq:tuple_estimation_general_form} as $n$ tends to infinity.
Despite the myriad of symmetries which may be exploited to simplify calculations (see \cref{sec:biriffle}), a direct calculation for large $n$ remains both numerically and analytically intractable.
To address these difficulties, future research could either seek to characterize which terms in the biriffle expansion (\cref{thm:biriffle_formula}) dominate in the limit of large $n$, or otherwise seek to derive non-trivial bounds analogous to the bounds from \cref{lem:bibiriffle_bound}. 
Alternatively, using techniques from geometric group theory applied to the symmetric group it appears possible to explore various asymptotic regimes wherein Hilbert space dimensions tend to infinity (analogously to Ref.~\cite{dartois2020joint}).

Another limitation of this approach which future research could seek to address is the assumption that the underlying Hilbert space dimension is both finite and known a prior.
From the perspective of performing quantum estimation schemes experimentally, the assumption that the dimensionality is known a prior is difficult to justify.
Moreover, the assumption of finite dimensionality was crucial in establishing the inclusivity lower-bounds used in \cref{sec:inclusivity_of_de_finetti}, and would prove challenging to generalize under the assumption of infinite dimensional Hilbert spaces.
Nevertheless, it may be possible to reinterpret some of the results of this thesis as providing \textit{witnesses} of large dimensionality which could be experimentally tested.

Future research could also seek to better understand how the convergent hierarchies of necessary inequalities for realizability presented here are related to the convergent hierarchies of semidefinite programs used to solve similar feasibility and optimization programs involving constraints placed on quantum states~\cite{yu2021complete,bhardwaj2021noncommutative,navascues2008convergent}.
At the very least, these two approaches appear superficially related; while the mathematical machinery of this thesis relies on a method for \textit{non-commutative optimization} in the sense of \citeauthor{burgisser2019towards} wherein the domain of optimization is a non-abelian group~\cite{burgisser2019towards}, the convergent hierarchies of semidefinite programs can also be understood as a method for solving \textit{non-commutative optimization} problems, albeit in a different sense, wherein the domain of optimization is a collection of not-necessarily-commuting variables~\cite{pironio2010convergent}.
Although the degree to which these two approaches are connected remains unclear, resolving these connections would likely be a fruitful area of research. 

Finally, it appears possible to use the results of this thesis to also tackle \textit{approximate} realizability problems.
Specifically, by using the bound presented in \cref{prop:easy_deformed_strong_duality} for fixed degree $n$ along with a \cref{eq:probability_upper_bound_mean} appearing in the proof of \cref{thm:symmetric_inclusivity}, it becomes possible to place a lower-bound on the supremum of the deformed capacity, which when combined with the quantitative strenghthing of the Kempf-Ness theorem obtained by \citeauthor{burgisser2019towards}~\cite[Thm. 1.17]{burgisser2019towards}, it appears possible to place an upper-bound on how \textit{close} a given moment map value is to being realizable.
By analyzing this series of bounds in rigorous detail (and generalizing to the case of multiple moment maps as in \cref{sec:beyond_polytopes}), it seems plausible that one could place an upper bound on the minimal degree needed to decide a given approximate realizability problem as a function of the approximation parameter, $\varepsilon$.
Of course, future research is needed to verify the details of this proposal and to determine if the minimal degrees needed to decide an $\varepsilon$-realizability problem in this manner grows sufficiently slowly with descreasing $\varepsilon$ as to be practical.

\cleardoublepage
\phantomsection
\addcontentsline{toc}{chapter}{Bibliography}
\emergencystretch=3em 
\printbibliography

\cleardoublepage
\appendix
\chapter*{Appendices}
\renewcommand\thesection{A.\arabic{section}} 
\addcontentsline{toc}{chapter}{Appendices} 
\section{Misc. Results}
\begin{lem}
    \label{lem:lower_level_sets_equiv}
    Let $X$ be a topological space and let $f : X \to [-\infty, \infty]$ be a function. Let $L_f(c) = \{ x \in X \mid f(x) \leq c \}$ be a lower level set of $f$. Then $f$ is lower-semicontinuous if and only if $L_{f}(c)$ is closed for all $c$.
\end{lem}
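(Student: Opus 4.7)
The plan is to prove this standard equivalence by passing through the characterization of lower-semicontinuity in terms of strict upper level sets. Recall that $f : X \to [-\infty, \infty]$ is \emph{lower-semicontinuous} if for every $x_0 \in X$ and every $\epsilon > 0$ there is a neighborhood $V$ of $x_0$ such that $f(x) > f(x_0) - \epsilon$ whenever $x \in V$ (with the standard conventions for extended reals when $f(x_0) = \pm\infty$). The key observation is that the lower level sets $L_f(c) = \{x : f(x) \leq c\}$ and the strict upper level sets $U_f(c) = \{x : f(x) > c\}$ are complements in $X$, so $L_f(c)$ is closed if and only if $U_f(c)$ is open.

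For the forward implication, I would assume $f$ is lower-semicontinuous and fix $c \in [-\infty, \infty]$. I would then show $U_f(c)$ is open by taking an arbitrary $x_0 \in U_f(c)$, noting that $f(x_0) > c$, and setting $\epsilon = f(x_0) - c > 0$ (with a separate easy argument if $f(x_0) = +\infty$). Applying lower-semicontinuity at $x_0$ with this $\epsilon$ yields a neighborhood $V$ of $x_0$ contained in $U_f(c)$, establishing openness.

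For the reverse implication, I would assume every $L_f(c)$ is closed and fix $x_0 \in X$ together with $\epsilon > 0$. If $f(x_0) = -\infty$ the defining condition is vacuous, so I may assume $f(x_0) > -\infty$. Setting $c = f(x_0) - \epsilon$, the point $x_0$ lies in the open set $X \setminus L_f(c)$, which is a neighborhood of $x_0$ on which $f$ exceeds $c = f(x_0) - \epsilon$; this is exactly lower-semicontinuity at $x_0$.

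I do not expect any serious obstacle here: the content of the lemma is really the observation $L_f(c)^{\mathsf{c}} = U_f(c)$ combined with the neighborhood-basis characterization of lower-semicontinuity. The only minor care needed is in handling the extended real values $\pm\infty$, which is dispatched by treating those cases separately when choosing $\epsilon$ or $c$.
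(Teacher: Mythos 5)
Your proposal is correct and follows essentially the same route as the paper's proof: identify the complement of $L_f(c)$ with the strict upper level set, use the neighborhood characterization of lower-semicontinuity to show that complement is open, and run the argument in reverse for the converse. Your version is a touch more careful about the $\pm\infty$ edge cases (and parameterizes the characterization via $\epsilon$ rather than an auxiliary threshold $y < f(x_0)$), but these are cosmetic differences, not a different argument.
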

\begin{proof}
    To prove the ``only if'' portion of the proof, note that a function $f : X \to [-\infty, \infty]$ is lower-semicontinuous if and only if whenever $f(x) > y$, there exists an open neighborhood $N \subseteq X$ of $x$ such that $\forall n \in N : f(n) > y$. Since $x \not \in L_{f}(c)$ if and only if $f(x) > c$ we conclude that there exists an open neighborhood $N$ of $x$ such that $\forall n \in N : f(n) > c$ and thus $N \subseteq X \setminus L_{f}(c)$. Therefore, every $x \in X \setminus L_{f}(c)$ is an interior point of $X \setminus L_{f}(c)$ and thus $L_{f}(c)$ is closed.
    To prove the ``if'' portion of the proof, note that if $X \setminus L_{f}(c)$ is open for all $c$ then for every $x \in X \setminus L_{f}(c)$ there exists an open neighborhood $N$ of $x$ contained in $X \setminus L_{f}(c)$ such that $\forall n \in N$, $f(n) > c$ and thus $f$ is lower-semicontinuous.
\end{proof}

\section{Invariant post-selection}
\label{sec:postselection}

At the core of many of the theorems and results of this thesis is a deceptively simple yet powerful technique for placing universal upper bounds on the probability, $\Tr(\rho E)$, where the effect $E \in \bound(\s V)$ is $K$-invariant ($k \cdot E = \grep(k)^{*} E \grep(k) = E$ for all $k \in K$) for generic states $\rho$ in terms of the probability assigned to $E$ by a special state $\tau$.
As will be shown in this section, this powerful technique relies upon viewing the state $\rho$ as the post-measurement state obtained after witnessing a particular outcome when measuring a purification of the special state $\tau$.
It is from perspective that this this technique was termed the \textit{post-selection technique} by~\citeauthor{christandl2009postselection} when it was first introduced in \cite{christandl2009postselection} and \cite{renner2010simplifying}.
Although the post-selection technique was only explicitly demonstrated in the context where $G$ was the symmetric group $S_{n}$ and $\s V$ was the $n$-fold tensor product space, the original authors accurately concluded that their concept could be extended to the more general setting presented here.

We begin by considering the following steering lemma which expresses every quantum state, $\rho$, as the resulting steered state after obtaining a outcome dual to $\rho$ when measuring a portion of a maximally entangled state. 
\begin{prop}
    \label{prop:steering}
    Let $\s H$ be a finite-dimensional complex Hilbert space with dimension $d = \dim(\s H)$ and let $\s Z \simeq \mathbb C^{d}$ be another finite-dimensional complex Hilbert space isomorphic to $\s H$ through the identification of an orthonormal basis $\{ \ket{1}_{\s H}, \ldots, \ket{d}_{\s H}\}$ for $\s H$ with an orthonormal basis $\{ \ket{1}_{\s Z}, \ldots, \ket{d}_{\s Z}\}$ for $\s Z$.
    Then any quantum state $\rho_{\s H} \in \s S(\s H)$ on $\s H$ can be expressed as
    \begin{equation}
        \rho_{\s H} = d \Tr_{\s Z}[ (\ident_{\s H} \otimes D^{\rho}_{\s Z})(\ket{\psi}\bra{\psi})_{\s H \otimes \s Z} ]
    \end{equation}
    where $\ket{\psi}_{\s H \otimes \s Z}$ is the pure state
    \begin{equation}
        \ket{\psi}_{\s H \otimes \s Z} =  \frac{1}{\sqrt{d}} \sum_{i=1}^{d} \ket{i}_{\s H} \otimes \ket{i}_{\s Z},
    \end{equation}
    and $D^{\rho}_{\s Z} \in \bound(\s Z)$ is the effect dual to $\rho$ defined by
    \begin{equation}
        D^{\rho}_{\s Z} = \sum_{i,j=1}^{d} (\ket{i}\bra{j})_{\s Z} \Tr[\rho_{\s H}(\ket{j}\bra{i})_{\s H}].
    \end{equation}
\end{prop}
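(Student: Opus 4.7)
The plan is to verify the identity by direct computation in the distinguished computational basis $\{\ket{i}\}$ shared between $\s H$ and $\s Z$. First I would expand the maximally entangled projector as
\begin{equation}
    \ket{\psi}\bra{\psi}_{\s H\otimes \s Z} \;=\; \frac{1}{d}\sum_{i,j=1}^{d}(\ket{i}\bra{j})_{\s H}\otimes(\ket{i}\bra{j})_{\s Z},
\end{equation}
which is immediate from the definition of $\ket\psi$. Next I would act with $\ident_{\s H}\otimes D^\rho_{\s Z}$ and apply the partial trace over $\s Z$ term by term, using the elementary identity $\Tr_{\s Z}[D^\rho(\ket{i}\bra{j})_{\s Z}] = \bra{j}D^\rho\ket{i}$. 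This reduces the right-hand side of the claimed formula to
\begin{equation}
    d\,\Tr_{\s Z}\!\bigl[(\ident_{\s H}\otimes D^\rho_{\s Z})\ket{\psi}\bra{\psi}\bigr] \;=\; \sum_{i,j=1}^{d}(\ket{i}\bra{j})_{\s H}\,\bra{j}D^\rho\ket{i}_{\s Z}.
\end{equation}

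The second step is to substitute the defining expression for $D^\rho_{\s Z}$ and compute the coefficient $\bra{j}D^\rho\ket{i}$ using orthonormality of the basis: only the term indexed by the appropriate $(i,j)$ pair survives, so the coefficient collapses to $\Tr_{\s H}[\rho_{\s H}(\ket{i}\bra{j})_{\s H}]$ (up to the conventional swap of indices built into the definition of the dual effect). By cyclicity of the trace this equals the corresponding matrix element of $\rho_{\s H}$ in the chosen basis, and reassembling the outer sum $\sum_{i,j}(\ket i\bra j)_{\s H}\cdot\bra{i}\rho\ket{j}_{\s H}$ recovers $\rho_{\s H}$ itself by the completeness relation $\sum_{i}\ket{i}\bra{i}_{\s H} = \ident_{\s H}$.

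I expect no substantive obstacle here: the argument is purely computational and is a standard incarnation of the Choi-Jamio\l kowski correspondence evaluated on the maximally entangled state. The only care required is in tracking the basis labels between the two isomorphic copies $\s H$ and $\s Z$ so that the ``transpose'' implicitly induced by partial tracing against $\ket\psi\bra\psi$ cancels the matching dual structure built into $D^\rho_{\s Z}$, leaving $\rho_{\s H}$ in its original form. An alternative, essentially equivalent route is to pair both sides against an arbitrary observable $A \in \bound(\s H)$ and verify $\Tr_{\s H}(\rho_{\s H} A) = d\,\bra\psi (A\otimes D^\rho)\ket\psi$ by expanding $\ket\psi$ in the shared basis; this makes the cancellation fully transparent and avoids any ambiguity in how the partial trace is expressed.
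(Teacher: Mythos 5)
Your approach --- a direct expansion in the shared basis --- is the right one (the paper in fact states this proposition without proof), but there is an index slip in your final step that happens to mask a genuine transpose subtlety in the statement.

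You correctly reduce the right-hand side to $\sum_{i,j}(\ket{i}\bra{j})_{\s H}\,\bra{j}D^{\rho}\ket{i}$, and you correctly compute
\begin{equation}
    \bra{j}D^{\rho}\ket{i} \;=\; \Tr[\rho\,\ket{i}\bra{j}].
\end{equation}
However, $\Tr[\rho\,\ket{i}\bra{j}] = \bra{j}\rho\ket{i}$, not $\bra{i}\rho\ket{j}$ as you substitute in the next line. Reassembling honestly gives
\begin{equation}
    \sum_{i,j}(\ket{i}\bra{j})_{\s H}\,\bra{j}\rho\ket{i} \;=\; \rho^{T},
\end{equation}
the transpose of $\rho$ in the chosen basis, not $\rho$. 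This is just the standard ``transpose trick'' $\Tr_{\s Z}[(\ident_{\s H}\otimes A)\ketbra{\psi}] = \tfrac{1}{d}A^{T}$ for the maximally entangled state $\ket{\psi}=\tfrac{1}{\sqrt d}\sum_i\ket{ii}$, and with the paper's definition of $D^{\rho}$ one finds $D^{\rho}=\rho$ (same matrix entries under the $\s H\cong\s Z$ identification), so the formula as written produces $\rho^{T}$, not $\rho$.

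The discrepancy is only resolved if $D^{\rho}$ is the \emph{transpose} of $\rho$, i.e.\ if the dual effect were defined by $D^{\rho}_{\s Z}=\sum_{i,j}(\ket{i}\bra{j})_{\s Z}\,\Tr[\rho_{\s H}(\ket{i}\bra{j})_{\s H}]$ (indices not swapped). In short: your computation method is sound, but you should carry the indices through to the end and observe that it forces a transpose somewhere --- either in $D^{\rho}$ or on the left-hand side. As you write the proof now, the final ``reassembling'' line silently swaps $i\leftrightarrow j$ and thereby conceals this issue; that is the gap to fix.
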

The next result essentially extends \cref{prop:steering} to the setting where $\rho$ is a $K$-covariant quantum state.
It can be seen as a generalization of \cite[Lem. 2]{christandl2009postselection} or equivalently \cite[Lem. 3]{renner2010simplifying}.
\begin{cor}
    \label{cor:covariant_steering}
    Let $\grep : K \to \U(\s V)$ be a unitary representation of a compact group $K$ on a finite-dimensional complex Hilbert space $\s V
    $.
    Let $\s V^{*}$ denote the dual vector space of $\s V$ and let $\grep^{*} : K \to \U(\s V^{*})$ be the dual representation of $K$ on $\s V^{*}$.
    Let $\rho_{\s V} \in \state(\s V)$ be a $K$-covariant quantum state in the sense that for all $k \in K$, $\grep(k) \rho_{\s V} \grep(k^{-1}) = \rho_{\s V}$.
    Then there exists an ancillary Hilbert space $\s Z$, state $\tau_{\s V \otimes \s Z} \in \state(\s V \otimes \s Z)$ and effect $D^{\rho}_{\s Z} \in \bound(\s Z)$ satisfying
    \begin{equation}
        \rho_{\s V} = \dim((\s V \otimes \s V^{*})^{K}) \Tr_{\s Z}[( \ident_{\s V} \otimes D^{\rho}_{\s Z} ) \tau_{\s V \otimes \s Z}].
    \end{equation}
\end{cor}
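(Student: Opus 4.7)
The plan is to combine the steering identity of Proposition A.2.1 with the $K$-covariance of $\rho_{\s V}$ and the geometry of the invariant subspace $(\s V \otimes \s V^{*})^{K}$ to produce a $\rho$-independent ancillary state.

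First, I would apply Proposition A.2.1 with the maximally entangled vector $|\psi\rangle = \frac{1}{\sqrt{d}}\sum_{i}|i\rangle_{\s V}\otimes|i\rangle_{\s V^{*}}$ to write every $\rho_{\s V}$ as $\rho_{\s V} = d\Tr_{\s V^{*}}[(\ident \otimes d\rho^{T})|\psi\rangle\langle\psi|]$. Next, I would exploit the $K$-covariance of $\rho$ (so that $\rho = \int_{K}\grep(k)\rho\grep(k)^{-1}\diff\mu(k)$) together with the standard identity $(\grep(k)\otimes \ident)|\psi\rangle = (\ident \otimes \grep^{*}(k)^{-1})|\psi\rangle$, which transports the group averaging from the $\s V$-side onto the $\s V^{*}$-side. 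This yields a representation of $\rho$ in which the effect has been replaced by its $K$-twirl, which is a $K$-covariant element of $\bound(\s V^{*})$. At this stage the ancillary state is $\rho$-independent, but the overall normalization is $d$ rather than $\dim((\s V\otimes\s V^{*})^{K})$.

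To obtain the claimed prefactor, I would replace $|\psi\rangle\langle\psi|$ by a state naturally supported on the invariant subspace. By the analog of Lemma 2.3.5, the projector $\Pi^{K} = \int_{K}\grep(k)\otimes\grep^{*}(k)\,\diff\mu(k)$ onto $(\s V \otimes \s V^{*})^{K}$ has trace $d_{K} := \dim((\s V\otimes\s V^{*})^{K})$, and under the Choi--Jamio{\l}kowski identification $|A\rangle\rangle \leftrightarrow A$ the invariant vectors are in bijection with the commutant $\End(\s V)^{K}$, which contains $\rho$. Letting $\{E_{\alpha}\}_{\alpha=1}^{d_{K}}$ be an orthonormal basis of $\End(\s V)^{K}$ in the Hilbert--Schmidt inner product, I would introduce an auxiliary space $\s Z'' \cong \mathbb{C}^{d_{K}}$ with basis $\{|\alpha\rangle\}$, take $\s Z = \s V^{*}\otimes \s Z''$, and define
\begin{equation}
    \tau = |\Psi\rangle\langle\Psi|, \qquad |\Psi\rangle = \frac{1}{\sqrt{d_{K}}}\sum_{\alpha=1}^{d_{K}}|E_{\alpha}\rangle\rangle\otimes|\alpha\rangle.
\end{equation}
This is a unit-norm state whose reduced density operator on $\s V\otimes\s V^{*}$ equals $\Pi^{K}/d_{K}$, and $\s Z''$ serves as a label system that resolves the multiplicity structure of the commutant which, as one checks, is precisely what prevented the naive choice $\tau=\Pi^{K}/d_{K}$ from hitting every covariant state.

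For each covariant $\rho$, with expansion $\rho = \sum_{\alpha}c_{\alpha}^{\rho}E_{\alpha}$ where $c_{\alpha}^{\rho} = \Tr(E_{\alpha}^{*}\rho)$, the effect $D^{\rho}_{\s Z}$ would be constructed as a positive operator on $\s Z = \s V^{*}\otimes \s Z''$ with block entries tailored to the Choi--Jamio{\l}kowski structure of $\tau$, so that the partial trace $\Tr_{\s Z}[(\ident \otimes D^{\rho})\tau]$ precisely reconstructs $\rho/d_{K}$. The main obstacle will be verifying that $D^{\rho}$ is a genuine effect: positivity should follow from assembling $D^{\rho}$ as a Choi matrix of a positive rank-one contribution per $\rho$ and using the positivity of $\rho$, while the upper bound $D^{\rho}\leq \ident_{\s Z}$ requires the factor $d_{K}$ to offset the potentially large operator norm of $\rho^{T}$ in the same way that the factor $d$ renders $\rho^{T}/d$ an effect in the unsymmetric steering lemma. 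Checking that this normalization is compatible across all covariant $\rho$ uniformly is the crux and will use Schur's lemma applied to the isotypic decomposition of $\grep$.
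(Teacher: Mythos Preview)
Your construction in the third paragraph is the paper's construction: purify the maximally mixed state on $(\s V\otimes\s V^{*})^{K}$ against a register $\s Z''\cong\mathbb{C}^{d_K}$ (your $|\Psi\rangle$ is exactly the paper's $|\psi\rangle_{\s V\otimes\s V^{*}\otimes\s Z}$), then steer with an effect on that register. The only packaging difference is that you keep $\s V^{*}$ inside the ancilla, $\s Z=\s V^{*}\otimes\s Z''$, and let $\tau$ be the pure $|\Psi\rangle\langle\Psi|$, whereas the paper takes $\s Z=\mathbb{C}^{\gamma}$ and traces out $\s V^{*}$ to form a mixed $\tau_{\s V\otimes\s Z}$. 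These are equivalent choices. Your second paragraph --- steering first on all of $\s V$ with prefactor $d$ and then trying to whittle it down --- is a detour the paper skips entirely: it simply invokes Proposition~A.2.1 directly with $\s H$ replaced by $(\s V\otimes\s V^{*})^{K}$, identifying $\rho_{\s V}\in\End(\s V)^{K}\cong(\s V\otimes\s V^{*})^{K}$ and reading off $D^{\rho}_{\s Z}$ from the formula there.

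One correction to your analogy in the last paragraph: in Proposition~A.2.1 the effect is $D^{\rho}=\rho^{T}$, which is already an effect because $\rho$ is a density operator; the factor $d$ sits in front of the steering identity, not inside $D$. So ``the factor $d$ renders $\rho^{T}/d$ an effect'' misreads the unsymmetric lemma. The paper's proof is just as terse as yours about checking $0\leq D^{\rho}\leq\ident$ here, so your caution is reasonable --- but the verification via the isotypic decomposition is routine once one writes $\End(\s V)^{K}\cong\bigoplus_{\lambda}\ident_{\lambda}\otimes\Mat_{m_{\lambda}}$ and chooses $D^{\rho}$ block-diagonal.
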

\begin{proof}
    Consider the subspace of $K$-invariant vectors in $\s V \otimes \s V^{*}$ defined by
    \begin{equation}
        (\s V \otimes \s V^{*})^{K} \coloneqq \{ w \in \s V \otimes \s V^{*} \mid \forall k \in K : (\grep(k) \otimes \grep^{*}(k)) w = w \},
    \end{equation}
    and let its dimension be denoted by 
    \begin{equation}
        \gamma \coloneqq \dim((\s V \otimes \s V^{*})^{K}).
    \end{equation}
    Let $\tau_{\s V \otimes \s V^{*}} \in \s S(\s V \otimes \s V^{*})$ be the quantum state proportional to the projection operator $\Pi_{\s V \otimes \s V^{*}}^{K}$ onto the fixed subspace $(\s V \otimes \s V^{*})^{K}$:
    \begin{equation}
        \tau_{\s V\otimes \s V^{*}} = \gamma^{-1}\Pi_{\s V \otimes \s V^{*}}^{K}.
    \end{equation}
    Additionally, let $\tau_{\s V} \in \s S(\s V)$ be partial trace (over $\s V^{*}$) of $\tau_{\s V\otimes \s V^{*}}$:
    \begin{equation}
        \tau_{\s V} = \Tr_{\s V^{*}}[\tau_{\s V\otimes \s V^{*}}] = \gamma^{-1}\Tr_{\s V^{*}}[\Pi_{\s V \otimes \s V^{*}}^{K}].
    \end{equation}
    Now let $\{\ket{1}_{(\s V \otimes \s V^{*})^{K}}, \ldots, \ket{\gamma}_{(\s V \otimes \s V^{*})^{K}}\}$ be an arbitrary orthonormal basis for the subspace $(\s V \otimes \s V^{*})^{K} \subset \s V\otimes \s V^{*}$ and fix a purifying space $\s Z \simeq \mathbb C^{\gamma}$ isomorphic to $(\s V \otimes \s V^{*})^{K}$ and consider the unit vector
    \begin{equation}
        \ket{\psi}_{\s V \otimes \s V^{*} \otimes \s Z} \coloneqq \frac{1}{\sqrt{\gamma}} \sum_{i=1}^{\gamma} \ket{i}_{(\s V \otimes \s V^{*})^{K}} \otimes \ket{i}_{\s Z}.
    \end{equation} 
    Viewing $\ket{\psi}_{\s V \otimes \s V^{*} \otimes \s Z}$ as a unit vector in $\s V \otimes \s V^{*} \otimes \s Z$, it becomes a purification of $\tau_{\s V \otimes \s V^{*}}$ and thus its sensible to define
    \begin{equation}
        \tau_{\s V \otimes \s V^{*} \otimes \s Z} \coloneqq (\ketbra{\psi})_{\s V \otimes \s V^{*} \otimes \s Z}.
    \end{equation}
    Finally let
    \begin{equation}
        \tau_{\s V \otimes \s Z} \coloneqq \Tr_{\s V^{*}}[\tau_{\s V \otimes \s V^{*} \otimes \s Z}].
    \end{equation}
    Then for any $\rho_{\s V} \in \End(\s V)^{K} \cong (\s V \otimes \s V^{*})^{K}$, the effect $D^{\rho}_{\s Z} \in \bound(\s Z)$ considered in \cref{prop:steering} (where $\s H$ is replaced by $(\s V \otimes \s V^{*})^{K}$) satisfies
    \begin{equation}
        \rho_{\s V} = \gamma \Tr_{\s Z}[( \ident_{\s V} \otimes D^{\rho}_{\s Z} ) \tau_{\s V \otimes \s Z}],
    \end{equation}
    which proves the claim.
\end{proof}
\begin{thm}
    \label{thm:covariant_postselection}
    Let $\grep : K \to \U(\s V)$ be a unitary representation of a compact group on a finite-dimensional complex Hilbert space $\s V$.
    Let $E \in \bound(\s V)$ be an effect, $0_{\s V} \leq E \leq \ident_{\s V}$, that is $K$-covariant in the sense that 
    \begin{equation}
        \forall k \in K \quad : \quad \grep(k) E \grep(k^{-1}) = E.
    \end{equation}
    Then there exists a state $\tau$ such that
    \begin{equation}
        \sup_{\rho \in \s S(\s V)} \Tr[E \rho] \leq \dim((\s V\otimes \s V^{*})^{K}) \Tr[E \tau],
    \end{equation}
    where $(\s V\otimes \s V^{*})^{K}$ is the subspace of $K$-invariant vectors in $\s V \otimes \s V^{*}$.
\end{thm}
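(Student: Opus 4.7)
The plan is to reduce to the $K$-invariant setting by twirling, and then apply the steering-type decomposition of \cref{cor:covariant_steering} to bound the probability by a state-independent quantity. Let $\gamma \coloneqq \dim((\s V \otimes \s V^{*})^{K})$ and let $\mu$ be the normalized Haar measure on $K$.

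First I would exploit the $K$-invariance of $E$. For any $\rho \in \s S(\s V)$, define the twirled state
\begin{equation}
    \bar\rho \coloneqq \int_{K} \grep(k) \rho \grep(k^{-1}) \diff\mu(k),
\end{equation}
which is manifestly $K$-invariant. Since $\grep(k) E \grep(k^{-1}) = E$ for all $k \in K$, the cyclicity of the trace together with the unitarity of $\grep(k)$ gives $\Tr[E \grep(k) \rho \grep(k^{-1})] = \Tr[E \rho]$, so integrating yields $\Tr[E \rho] = \Tr[E \bar\rho]$. This is the crucial step that allows us to work only with $K$-invariant states, which are precisely the inputs that \cref{cor:covariant_steering} is designed to handle.

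Next I would apply \cref{cor:covariant_steering} to $\bar\rho$, obtaining an ancillary space $\s Z$, a fixed state $\tau_{\s V \otimes \s Z} \in \s S(\s V \otimes \s Z)$, and a dual effect $D^{\bar\rho}_{\s Z} \in \bound(\s Z)$ such that
\begin{equation}
    \bar\rho = \gamma \Tr_{\s Z}\bigl[(\ident_{\s V} \otimes D^{\bar\rho}_{\s Z})\,\tau_{\s V \otimes \s Z}\bigr].
\end{equation}
Substituting into $\Tr[E \bar\rho]$ and moving $E$ into the partial trace yields
\begin{equation}
    \Tr[E \bar\rho] = \gamma \,\Tr\bigl[(E \otimes D^{\bar\rho}_{\s Z})\,\tau_{\s V \otimes \s Z}\bigr].
\end{equation}
Crucially, the state $\tau_{\s V \otimes \s Z}$ depends only on the representation $\grep$ and not on $\rho$.

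The final step is to upper-bound the dual effect by the identity. Since $D^{\bar\rho}_{\s Z}$ is a genuine effect, $0 \leq D^{\bar\rho}_{\s Z} \leq \ident_{\s Z}$, and since $E \geq 0$, the operator inequality $E \otimes D^{\bar\rho}_{\s Z} \leq E \otimes \ident_{\s Z}$ holds, whence
\begin{equation}
    \Tr\bigl[(E \otimes D^{\bar\rho}_{\s Z})\tau_{\s V \otimes \s Z}\bigr] \leq \Tr\bigl[(E \otimes \ident_{\s Z})\tau_{\s V \otimes \s Z}\bigr] = \Tr[E \tau]
\end{equation}
where $\tau \coloneqq \Tr_{\s Z}[\tau_{\s V \otimes \s Z}]$ is a fixed state on $\s V$ independent of $\rho$. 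Taking the supremum over $\rho \in \s S(\s V)$ gives the claimed bound.

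The main subtlety I anticipate is verifying that $D^{\bar\rho}_{\s Z}$ really satisfies $D^{\bar\rho}_{\s Z} \leq \ident_{\s Z}$, since this is what the word ``effect'' is doing in \cref{prop:steering} and \cref{cor:covariant_steering}. Tracing through the Choi--Jamio{\l}kowski-style construction, the operator $D^{\bar\rho}_{\s Z}$ is essentially the representation of $\bar\rho$ as a density operator on the $\gamma$-dimensional space $(\s V \otimes \s V^{*})^{K}$ (transported to $\s Z$ via the fixed orthonormal basis), so its positivity comes from that of $\bar\rho$ and its normalization $\Tr(D^{\bar\rho}_{\s Z}) = 1$ follows from $\Tr(\bar\rho) = 1$; together these force its eigenvalues into $[0,1]$ and thus $D^{\bar\rho}_{\s Z} \leq \ident_{\s Z}$. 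Once this point is secured, the rest of the argument is a direct chaining of the three ingredients above.
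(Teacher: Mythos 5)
Your proposal is correct and follows exactly the same three-step route as the paper's proof: twirl $\rho$ to a $K$-covariant state using the covariance of $E$, invoke \cref{cor:covariant_steering} to express that state as a post-selection on a fixed purification $\tau_{\s V\otimes\s Z}$, and then discard the dual operator $D^{\bar\rho}_{\s Z}$ via $D^{\bar\rho}_{\s Z}\leq\ident_{\s Z}$. Your closing paragraph about verifying $D^{\bar\rho}_{\s Z}\leq\ident_{\s Z}$ is a reasonable supplement that the paper simply takes for granted from the word ``effect'' in \cref{cor:covariant_steering}; note, though, that your claim $\Tr(D^{\bar\rho}_{\s Z})=1$ is not quite automatic (the vectorization of a mixed $\bar\rho$ is not a unit vector in $(\s V\otimes\s V^{*})^{K}$, so the identification of $D^{\bar\rho}_{\s Z}$ with ``$\bar\rho$ as a density operator on $\s Z$'' needs care), but this does not affect the validity of the theorem's proof, which—like the paper's—uses the effect property of $D^{\bar\rho}_{\s Z}$ as granted by the corollary.
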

\begin{proof}
    By the $K$-covariance of $E$, for any state $\rho \in \state(\s V)$, the value of the probability $\Tr(E \rho)$ is always equal to the probability associated to a $K$-covariant state $\tilde \rho$ because
    \begin{align}
        \begin{split}
            \Tr[E \rho] 
            &= \int_{K} \diff \mu(k) \Tr[\grep(k) E \grep(k^{-1}) \rho], \\
            &= \Tr[E \int_{G} \diff \mu(k) \grep(k^{-1}) \rho \grep(k)], \\
            &= \Tr[E \tilde \rho].
        \end{split}
    \end{align}
    Therefore, by an application of \cref{cor:covariant_steering}, there exists a state $\tau_{\s V \otimes \s Z}$ such that
    \begin{equation}
        \Tr[E \tilde \rho] = \gamma \Tr[(E \otimes D^{\tilde \rho}_{\s Z} ) \tau_{\s V \otimes \s Z}]
    \end{equation}
    where $\gamma = \dim((\s V\otimes \s V^{*})^{K})$.
    The upper-bound on $\Tr[E \tilde \rho]$ then emerges by noting that $D^{\tilde \rho}_{\s Z} \leq \ident_{\s Z}$ and thus
    \begin{equation}
        \Tr[E \tilde \rho] \leq \gamma \Tr[(E \otimes \ident_{\s Z} ) \tau_{\s V \otimes \s Z}] = \gamma \Tr[E \tau_{\s V}].
    \end{equation}
\end{proof}

\section{Expected expectation values}
\label{sec:zhangs_expectation_value}

The purpose of this section is to calculate the probability density of the probability measure obtained by pushing forward the uniform probability measure over the unit sphere, $S^{2d-1}$, of a $d$-dimensional complex Hilbert space through the natural map into the projective space $\proj(\mathbb C^{d})$ and then through the expectation value map $\psi \mapsto \Tr(P_{\psi} X)$ of a single observable $X \in \End(\s H)$ satisfying $X^{*} = X$.

It is important to note that the results obtained in this section, namely \cref{thm:density_nondegen_exp_val} and \cref{thm:density_degen_exp_val}, are not new. 
These results were obtained previously by \citeauthor{venuti2013probability}~\cite[Eq. (17)]{venuti2013probability} and later by \citeauthor{zhang2022probability}~\cite[Prop. 4]{zhang2022probability}.
These results can also be interpreted as a special case of~\cite[Thm. 4.1]{christandl2014eigenvalue}.
The particular Laplace transform techniques used to derive this probability density are due to \citeauthor{zhang2022probability}~\cite{zhang2022probability}.

Our reasoning for including the forthcoming derivation in this thesis is three-fold.
The first reason is simply that the exact formula obtained here serves as a point of comparison to the asymptotic approximations obtained in \cref{sec:moment_polytope} and \cref{sec:beyond_polytopes}.
The second reason is simply to highlight, from the perspective of residue theory, the significant role played by the degeneracies in the spectrum of the observable $X$.
Finally, the third reason is to expose some of the challenges that arise when trying to calculate the \textit{joint} probability density associated to the expectation values of a tuple of non-commuting observables (see \cref{rem:multiple_observables}).

The starting point is to consider the natural $\U(d)$-invariant probability measure over the sphere of unit vectors $S^{2d-1} \simeq \{ v \in \mathbb C^{d} \mid \norm{v}_2 = 1 \}$ (\cref{defn:unit_sphere}), and its push-forward to a measure over the projective space $\mathbb P (\mathbb C^d) \simeq \mathbb C P^{d-1}$ via the map $v \mapsto [v] = \mathbb C v$.
The surface area of the unit sphere $S^{2d-1}$ is known to be equal to $2\pi^d/\Gamma(d)$ where $\Gamma$ is the \textit{Gamma function} which, when restricted to positive integers, has the form $\Gamma(d) = (d-1)!$.
Therefore, the uniform measure on the unit sphere $S^{2d-1} \subset \mathbb R^{2d}$ can be expressed as a probability density relative to the Lebesgue measure on $\mathbb R^{2d}$ by
\begin{equation}
    \label{eq:uniform_measure_on_sphere}
    \diff \nu (v) = \frac{\Gamma(d)}{2 \pi^{d}} \delta(1 - \norm{v}_{2}) \prod_{i=1}^{d} \diff x_i \diff y_i,
\end{equation}
where $\delta(1 - \norm{v}_{2})$ enforces normalization and where the coordinates $x_i$ and $y_i$ are related to $v$ via $z_j = x_i + i y_i = \langle e_i, v \rangle$ for some chosen orthonormal basis $\{e_1, \ldots, e_d\}$ of $\mathbb C^{d}$.
Furthermore note that $\delta(1 - \norm{v}_{2}) = 2\delta(1 - \norm{v}^{2}_{2})$.

The pushforward of the measure $\nu$ on the unit sphere $S^{2d-1}$ through the aforementioned map $v \mapsto \mathbb C v$ yields a measure on $\mathbb P (\mathbb C^d)$, denoted by $\mu$, and can be defined implicitly for all measurable functions $g : \mathbb P (\mathbb C^{d}) \to \mathbb R$ by the equation
\begin{equation}
    \int_{\psi \in \mathbb P (\mathbb C^d)} \diff \mu (\psi) g(\psi) = \int_{v \in S^{2d-1}} \diff \nu(v) g([v]).
\end{equation}
Therefore if $X$ is a Hermitian operator on $\s H \simeq \mathbb C^{d}$, then the pushforward of $\mu$ through the expectation value map $\psi \mapsto \Gamma_{\psi}(X)$ has density $f_{X}(x)$ defined by
\begin{equation}
    f_{X}(x) = \int_{\psi \in \mathbb P(\mathbb C^d)} \delta(x - \Gamma_{\psi}(X)) \diff \mu(\psi) = \int_{v \in S^{2d - 1}} \delta(x - \langle v, X v \rangle) \diff \nu(v).
\end{equation}
Next let the eigenvalues of $X$ be denoted by $\{\lambda_1, \ldots, \lambda_d\} \subset \mathbb R$ and then non-uniquely construct an orthonormal basis $\{e_1, \ldots, e_d\}$ for $\mathbb C^d$ consisting of eigenvectors of $X$ such that $X e_j = \lambda_j e_j$ holds for all $j \in [d]$.
Using this basis in \cref{eq:uniform_measure_on_sphere} with coordinates $z_j = x_j + i y_j = r_j e^{i \theta_j} = \braket{e_j, v}$ simplifies the expression $\braket{v, Xv}$ above substantially.
In particular, by switching to polar coordinates (with change of measure $\diff x_j \diff y_j = r_j \diff r_j \diff \theta_j$), then integrating over the polar angles $\theta_j$, and then finally changing variables to the non-negative quantity $p_j = r_j^2 = \abs{\langle e_j, v\rangle}^2$ produces
\begin{equation}
    \label{eq:integral_over_simplex}
    f_{X}(x) = \Gamma(d)\int_{p \in \mathbb R_{\geq 0}^{d}} \delta(x - \sum_{i=1}^{d} \lambda_i p_i)\delta(1 - \sum_{i=1}^{d} p_i) \prod_{i=1}^{d} \diff p_i.
\end{equation}
This integral expression for $f_{X}(x)$ admits of a straightforward interpretation as an integral over the standard probability simplex, i.e., where the coordinates $(p_1, \ldots, p_d)$ form a probability distribution and $\sum_{i=1}^{d} \lambda_i p_i = x$ expresses the constraint that $x \in \mathbb R$ must be a convex combination of the spectra of $X$.

From here, one can directly compute the density $f_{X}(x)$ by calculating its characteristic function via Fourier methods and then using Gaussian integration as in~\cite{venuti2013probability}, or equivalently by calculating its moment generating function via Laplace methods and then using the inverse Laplace transform as in~\cite{zhang2022probability}.
As previously mentioned, we will adopt the Laplace method used by \citeauthor{zhang2022probability}~\cite{zhang2022probability}.

Recall that the (bilateral) Laplace transform and inverse Laplace transforms (also known as the Bromwich or Fourier-Mellin integration) can be respectively written as
\begin{align}
    \label{eq:laplace_transform}
    F(s) &= \laplace\{f\}(s) = \int_{-\infty}^{\infty} f(t) e^{-st} \diff t, \\
    \label{eq:inverse_laplace_transform}
    f(t) &= \laplace^{-1}\{F\}(t) = \frac{1}{2\pi i} \lim_{T \to \infty} \int_{\sigma - i T}^{\sigma + i T} F(s) e^{st} \diff s.
\end{align}
Note that the latter integral expression for $\laplace^{-1}\{F\}(t)$ is calculated by means of residue theory.
Specifically, the integral may be evaluated along any line segment from $\sigma - i T$ to $\sigma + i T$ in the complex plane with constant real part $\sigma$ such that the integral expression for Laplace transform evaluated at $\sigma$, namely $F(\sigma)$ above, actually converges.
For our purposes, this condition amounts to ensuring that the value of $\sigma$ is larger than all of the real parts of all poles of $F(s)$ and such that $F(s)$ is bounded along this line segment.

Returning to \cref{eq:integral_over_simplex}, the Laplace transform for $f_{X}(x)$ can be interpreted as a moment generating function (with opposite sign) for the random variable with density $f_X(x)$. 
Specifically, the Laplace transform of $f_{X}(x)$ as a function of $\chi$ is
\begin{equation}
    \laplace\{f_{X}\}(\chi) = \Gamma(d)\int_{p \in \mathbb R_{\geq 0}^{d}} \exp(-\chi \sum_{i=1}^{d} \lambda_i p_i)\delta(1 - \sum_{i=1}^{d} p_i) \prod_{i=1}^{d} \diff p_i.
\end{equation}
Following \cite{zhang2022probability}, we renormalize the $p_i$-coordinates such that ${\sum}_{i} p_i = t$ and define the function $r(t)$ by
\begin{equation}
    r(t) \coloneqq \Gamma(d)\int_{p \in \mathbb R_{\geq 0}^{d}} \exp(-\chi \sum_{i=1}^{d} \lambda_i p_i)\delta(t - \sum_{i=1}^{d} p_i) \prod_{i=1}^{d} \diff p_i.
\end{equation}
such that $r(1) = \laplace\{f_{X}\}(\chi)$.
Applying a second Laplace transform to the function $r(t)$ sending $t \mapsto \tau$, produces
\begin{equation}
    \laplace\{r\}(\tau) = \Gamma(d)\prod_{i=1}^{d} \int_{p_i \in \mathbb R_{\geq 0}} \exp(-(\chi \lambda_i + \tau) p_i) \diff p_i.
\end{equation}
As this integration is now separable with respect to $p_i$-coordinates, one obtains (assuming $\chi \lambda_i + \tau > 0$ for the sake of convergence),
\begin{equation}
    \laplace\{r\}(\tau) = \Gamma(d) \prod_{i=1}^{d}\frac{1}{\chi \lambda_i+\tau}.
    \label{eq:only_simple_poles}
\end{equation}
In order to apply the inverse Laplace transform, \cref{eq:inverse_laplace_transform}, one must identify the poles in $\laplace\{r\}(\tau)$ as complex function of $\tau \in \mathbb C$.
It is at this stage that one can invoke the simplifying assumption that $X$ has non-degenerate spectra meaning $\lambda_i \neq \lambda_j$ for all $i \neq j$. 
Under this assumption, the function $\laplace\{r\}(\tau)$ has only \textit{simple} poles whenever $\tau = -\chi \lambda_i$ for some $i \in [d]$. 
Therefore by the residue theorem (and closing the contour of integration to the left at $-\infty$), the inverse Laplace transform produces
\begin{align}
    r(t)
    &= \Gamma(d)\sum_{i=1}^{d} \mathrm{Res}\left(\exp(\tau t)\prod_{j=1}^{d}\frac{1}{\chi \lambda_j+ \tau}, x \mapsto -\chi \lambda_i \right), \\
    &= \Gamma(d)\sum_{i=1}^{d} \frac{\exp(-\chi \lambda_i t)}{\chi^{d-1}\prod_{j\neq i}(\lambda_j-\lambda_i)}.
\end{align}
The Laplace transform of $f_{X}(x)$ is recovered when the normalization parameter $t$, is set back to $t = 1$, i.e. $\laplace\{f_{X}\}(\chi) = r(1)$.
Therefore, $f_{X}(x)$ can be recovered by another application of \cref{eq:inverse_laplace_transform}.
Now $\laplace\{f_{X}\}(\chi)$ has a single pole of order $d-1$ at $\chi = 0$, so we obtain by an analogous calculation (which contour of integration closed to the right) that
\begin{align}
    f_{X}(x)
    &= \Gamma(d)\sum_{i=1}^{d} \frac{\mathrm{H}(x-\lambda_i)}{\prod_{j\neq i}(\lambda_j-\lambda_i)} \mathrm{Res}\left(\frac{\exp(\chi (x-\lambda_i))}{\chi^{d-1}}, \chi \mapsto 0\right),
\end{align}
where the factor of $\mathrm{H}(x-\lambda_i)$ (with $\mathrm{H}$ the Heaviside step function) arises because if $x-\lambda_i$ were negative, the resulting contour of integration excludes the only pole at $\chi = 0$, and thus the integral vanishes.
Moreover, the value of the residue itself is simply
\begin{align}
    \mathrm{Res}\left(\frac{\exp(\chi(x-\lambda_i))}{\chi^{d-1}}, \chi\mapsto 0\right) 
    &= \frac{1}{(d-2)!}\lim_{\chi \to 0} \partial_{\chi}^{(d-2)}\exp(\chi(x-\lambda_i)), \\
    &= \frac{1}{(d-2)!}(\chi-\lambda_i)^{d-2}.
\end{align}

In conclusion, the above analysis proves the following theorem which is equivalent to \cite[Prop. 4]{zhang2022probability} (or under the substitution $2 \mathrm{H}(t) = \mathrm{sign}(t) + 1$ to \cite[Eq. (17)]{venuti2013probability}).
\begin{thm}
    \label{thm:density_nondegen_exp_val}
    Let $X$ be a Hermitian operator on a $d$-dimensional Hilbert space $\s H$ with non-degenerate spectra $\lambda = \{\lambda_1, \ldots, \lambda_d\}$.
    Then the probability density, $f_X(x)$, of the pushfoward of the uniform measure on $\mathbb P \s H$ through the expectation-value map $\psi \mapsto \Tr(P_{\psi} X)$ is given by a piece-wise polynomial function of $x$:
    \begin{align}
        f_{X}(x) 
        &= (d-1)\sum_{i=1}^{d} \frac{(x-\lambda_i)^{d-2}\mathrm{H}(x-\lambda_i)}{\prod_{j\neq i}(\lambda_j-\lambda_i)}.
    \end{align}
\end{thm}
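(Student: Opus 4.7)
The plan is to compute $f_X(x)$ by first reducing the integration over the unit sphere to an integration over the standard probability simplex, and then by applying the Laplace transform twice to separate the constraints and exploit residue calculus. Beginning from the $\U(d)$-invariant uniform measure on $S^{2d-1}\subset\mathbb C^d$, I would fix an orthonormal eigenbasis $\{e_1,\ldots,e_d\}$ of $X$ with eigenvalues $\lambda_1,\ldots,\lambda_d$, pass to polar coordinates $z_j=r_je^{i\theta_j}$, integrate out the phases, and change variables to $p_j=r_j^2=|\langle e_j,v\rangle|^2$. This yields the simplex representation
\begin{equation}
f_X(x)=\Gamma(d)\int_{p\in\mathbb R_{\geq 0}^d}\delta\!\left(x-{\textstyle\sum}_i\lambda_i p_i\right)\delta\!\left(1-{\textstyle\sum}_i p_i\right)\prod_{i=1}^d\diff p_i.
\end{equation}

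Next I would apply the Laplace transform in $x\mapsto\chi$, deform normalization by introducing an auxiliary parameter $t$ so that ${\sum}_i p_i=t$, define $r(t)$ as the corresponding moment generating quantity with $r(1)=\mathcal L\{f_X\}(\chi)$, and then apply a second Laplace transform in $t\mapsto\tau$. The crucial benefit is that under this double transform the constraints decouple, producing a fully separable product of one-dimensional exponential integrals that evaluates to
\begin{equation}
\mathcal L\{r\}(\tau)=\Gamma(d)\prod_{i=1}^d\frac{1}{\chi\lambda_i+\tau}.
\end{equation}

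The first inversion is then carried out by residue calculus in $\tau$: under the non-degeneracy hypothesis $\lambda_i\neq\lambda_j$ for $i\neq j$, every pole $\tau=-\chi\lambda_i$ of the product is simple, so closing the Bromwich contour to the left yields an explicit sum of $d$ simple exponentials. Setting $t=1$ recovers $\mathcal L\{f_X\}(\chi)$ as a sum of terms of the form $\chi^{-(d-1)}\exp(-\chi\lambda_i)$ weighted by the Vandermonde-type denominators $\prod_{j\neq i}(\lambda_j-\lambda_i)$. The second inversion, in $\chi$, is performed by closing the contour to the right and extracting the residue at the single pole $\chi=0$ of order $d-1$, which by the standard formula gives $\frac{1}{(d-2)!}(x-\lambda_i)^{d-2}$.

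The main technical subtlety, and what I expect to be the chief obstacle, is bookkeeping the contour orientations so that the Heaviside factor $\mathrm H(x-\lambda_i)$ appears correctly: whenever $x-\lambda_i<0$, the required contour closure excludes the pole at $\chi=0$ and the residue vanishes, while for $x-\lambda_i>0$ it is captured. Combining these pieces yields $f_X(x)=(d-1)\sum_i(x-\lambda_i)^{d-2}\mathrm H(x-\lambda_i)/\prod_{j\neq i}(\lambda_j-\lambda_i)$, after absorbing $\Gamma(d)/(d-2)!=d-1$. The non-degeneracy assumption is essential precisely to ensure all $\tau$-poles remain simple; any coincidence $\lambda_i=\lambda_j$ would merge them into a higher-order pole and introduce logarithmic derivatives, which is exactly the regime covered by the companion result on degenerate spectra.
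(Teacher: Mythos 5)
Your proposal follows essentially the same approach as the paper's derivation (which itself follows the Laplace-transform method of \citeauthor{zhang2022probability}): the reduction to a simplex integral after integrating out phases, the double Laplace transform in $x\mapsto\chi$ and then $t\mapsto\tau$ to decouple the constraints, the residue computation in $\tau$ with simple poles guaranteed by non-degeneracy, and the second inversion in $\chi$ with the order-$(d-1)$ pole at the origin and the Heaviside factor coming from contour orientation all match step for step. The constant $\Gamma(d)/(d-2)!=d-1$ and the role of non-degeneracy in keeping the $\tau$-poles simple are likewise correctly identified.
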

Unfortunately, \cref{thm:density_nondegen_exp_val} is limited to the special case of non-degenerate spectra, which excludes examples where $X$ is a projector operator or when $X$ is a local operator having the form $L \otimes I$.

Nevertheless, generalizing \cref{thm:density_nondegen_exp_val} to cover the case of degenerate spectra, while tedious, is fairly straight-forward.
Specifically, the only modification that needs to be made in the above derivation is to recognize that degeneracies in the spectra of $X$ give rise to non-simple poles appearing in \cref{eq:only_simple_poles}.
The proof of the following generalization of \cref{thm:density_nondegen_exp_val} can be found in \cite[Sec. 3 \& App. B]{venuti2013probability}.
\begin{thm}
    \label{thm:density_degen_exp_val}
    Let $X$ be a Hermitian operator on a $d$-dimensional complex Hilbert space $\s H$ with $\ell$ distinct eigenvalues $\{\lambda_1, \ldots, \lambda_\ell\}$ where the eigenvalue $\lambda_j$ has multiplicity (or degeneracy) $d_j$ where $1 \leq d_j \leq d$ and $\sum_{j=1}^{\ell} d_j = d$. Then the probability density $f_{X}(x)$, as described in \cref{thm:density_nondegen_exp_val}, has the form
    \begin{align}
        \begin{split}
            f_{X}(x) 
            &= \Gamma(d)\sum_{k=1}^{\ell} \sum_{M_k=0}^{d_k-1} \frac{(\lambda_k-x)^{d+M_k-d_k-1}\mathrm{sign}(\lambda_k-x)(-1)^{M_k}}{2(d+M_k-d_k-1)!(d_k-1-M_k)!}\times \\
            & \times \sum_{\substack{\{m_j\}_{j=1}^{\ell}\\\sum_{j\neq k}m_j=M_k}}\prod_{j\neq k} \binom{d_j+m_j-1}{m_j} \frac{1}{(\lambda_k-\lambda_j)^{d_j+m_j}}.
        \end{split}
    \end{align}
\end{thm}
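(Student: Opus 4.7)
The plan is to follow the same Laplace-transform strategy used in the non-degenerate case, but replace the residue computation at simple poles with a residue computation at higher-order poles whose orders are dictated by the spectral multiplicities $d_k$. All the set-up through \cref{eq:integral_over_simplex} and the double Laplace transform is unchanged because it does not rely on spectral non-degeneracy. The first concrete modification appears in \cref{eq:only_simple_poles}: grouping together the $d_k$ identical factors corresponding to each distinct eigenvalue $\lambda_k$ yields
\begin{equation}
    \laplace\{r\}(\tau) = \Gamma(d) \prod_{k=1}^{\ell}\frac{1}{(\chi\lambda_k + \tau)^{d_k}},
\end{equation}
so that as a function of $\tau$ we now have $\ell$ poles of orders $d_1,\ldots,d_\ell$ at $\tau = -\chi\lambda_k$ rather than $d$ simple poles.

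Next I would compute $r(t)$ by applying the residue theorem to the Bromwich integral \cref{eq:inverse_laplace_transform}. The residue at the order-$d_k$ pole $\tau = -\chi\lambda_k$ is
\begin{equation}
    \frac{1}{(d_k-1)!} \lim_{\tau\to -\chi\lambda_k} \frac{\partial^{\,d_k-1}}{\partial \tau^{\,d_k-1}} \left[ \exp(\tau t) \prod_{j\neq k}(\chi\lambda_j + \tau)^{-d_j} \right].
\end{equation}
Using the general Leibniz rule to distribute the $(d_k-1)$ derivatives between the exponential factor and the remaining rational factor, and then evaluating at $\tau = -\chi\lambda_k$, produces a sum over weak compositions $\{m_j\}_{j\neq k}$ of a nonnegative integer $M_k \leq d_k - 1$ (coming from how many derivatives land on the rational part versus the exponential). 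Each such term carries a product of binomial coefficients $\binom{d_j + m_j - 1}{m_j}$ and factors $(\lambda_k - \lambda_j)^{-(d_j+m_j)}$, which is precisely the combinatorial structure appearing in the statement. I would then collect everything to write $r(t)$ as a finite linear combination of terms of the form $t^{d_k-1-M_k}\,\chi^{-(d-d_k+M_k)}\,e^{-\chi\lambda_k t}$.

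Setting $t=1$ recovers $\laplace\{f_X\}(\chi)$, and the second inverse Laplace transform now requires the residue at $\chi = 0$ of the function $\chi^{-(d-d_k+M_k)}\,e^{\chi(x-\lambda_k)}$ — a pole whose order depends on both $k$ and $M_k$. As in the non-degenerate case, Jordan-type considerations force the contour to be closed to the left when $x > \lambda_k$ and to the right when $x < \lambda_k$, which is the origin of the $\mathrm{sign}(\lambda_k - x)$ factor (with a corresponding sign $(-1)^{M_k}$ arising from the orientation and from the parity of the exponent). Evaluating the residue at $\chi = 0$ by Taylor expanding the exponential yields the factor $(\lambda_k - x)^{d + M_k - d_k - 1}/(d + M_k - d_k - 1)!$, and combining everything with the prefactor $\Gamma(d)/[2(d_k-1-M_k)!]$ reproduces the claimed formula.

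The only genuine obstacle I expect is the bookkeeping in the Leibniz-rule expansion, together with verifying that the signs line up: one must check that the $(-1)^{M_k}$ in the statement is consistent both with the $(-1)$'s coming from differentiating $(\chi\lambda_j + \tau)^{-d_j}$ at $\tau = -\chi\lambda_k$ (where the base becomes $\chi(\lambda_j - \lambda_k)$) and with the orientation of the second Bromwich contour. Everything else — exchange of the two Laplace inversions, convergence of the contour integrals for $\sigma$ sufficiently large, and the reduction to the non-degenerate theorem when all $d_k = 1$ (where only $M_k = 0$ survives) — is routine. As a final sanity check I would specialize to $\ell = d$ with all $d_k = 1$ and verify that the double sum collapses to the single sum in \cref{thm:density_nondegen_exp_val}.
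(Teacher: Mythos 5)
The paper does not actually prove this theorem: immediately before the statement it defers entirely to \cite[Sec.~3 \& App.~B]{venuti2013probability}, whose derivation uses the characteristic function and Fourier/Gaussian integration rather than the Laplace method. Your proposal, which extends the paper's own Laplace/Bromwich derivation of \cref{thm:density_nondegen_exp_val} by replacing the simple-pole residue computation with a higher-order residue computation at the order-$d_k$ poles of $\laplace\{r\}(\tau)$, is therefore a genuinely different route --- and arguably a more coherent one, since it keeps the degenerate case on the same footing as the non-degenerate case. The Leibniz bookkeeping you outline is correct: the $M_k$-th $\tau$-derivative of $\prod_{j\neq k}(\chi\lambda_j+\tau)^{-d_j}$ evaluated at $\tau=-\chi\lambda_k$ does produce the sum over weak compositions $\{m_j\}$ with $\sum_{j\neq k}m_j=M_k$, the binomial factors $\binom{d_j+m_j-1}{m_j}$, the denominators $(\lambda_k-\lambda_j)^{d_j+m_j}$, the $\chi^{-(d-d_k+M_k)}$ prefactor, and, as you anticipated, the $(-1)^{M_k}$ which comes from the $\prod_j(-1)^{m_j}$ generated by differentiating the rational factors. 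Combining $\frac{1}{(d_k-1)!}\binom{d_k-1}{M_k}M_k!=\frac{1}{(d_k-1-M_k)!}$ gives exactly the factorial structure of the claim.

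The one place where your account is slightly off is the origin of the $\mathrm{sign}(\lambda_k-x)$ factor. The second Bromwich inversion of $\chi^{-(d-d_k+M_k)}e^{\chi(x-\lambda_k)}$ produces a \emph{Heaviside} factor $\mathrm{H}(x-\lambda_k)$, not a sign: closing to the left when $x>\lambda_k$ picks up the pole at $\chi=0$, and closing to the right when $x<\lambda_k$ encloses no poles and gives zero. Passing from the Heaviside form to the $\frac12\,\mathrm{sign}$ form (which also explains the factor of $2$ in the denominator) requires the additional observation that the ``un-thresholded'' polynomial contributions sum identically to zero. This is the degenerate analogue of the Lagrange-interpolation identity that makes the non-degenerate statements with $\mathrm{H}$ and with $\frac12(\mathrm{sign}+1)$ agree --- the paper notes this in passing only for the non-degenerate case via the substitution $2\mathrm{H}(t)=\mathrm{sign}(t)+1$. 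One clean way to justify it here: for $x>\max_k\lambda_k$ every Heaviside factor is $1$ yet $f_X$ vanishes, so the total polynomial (degree $d-2$ in $x$) is zero on a half-line and hence identically zero. Also note that the two apparent extra signs $(-1)^{d-d_k+M_k}$ --- one from rewriting $(x-\lambda_k)^{p-1}\mathrm{sign}(x-\lambda_k)$ as $(\lambda_k-x)^{p-1}\mathrm{sign}(\lambda_k-x)$, the other from rewriting $(\lambda_j-\lambda_k)^{-(d_j+m_j)}$ as $(\lambda_k-\lambda_j)^{-(d_j+m_j)}$ --- cancel exactly, so contour orientation contributes nothing further; the $(-1)^{M_k}$ in the statement is entirely accounted for by the derivatives, contrary to your parenthetical suggestion. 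With these two points fixed, the argument goes through.
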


\begin{rem}
    \label{rem:multiple_observables}
    One might wonder how to generalize the above derivation to handle the case of multiple observables.
    For instance, if $A$ and $B$ are two Hermitian observables, then the joint probability density $f_{A,B}(a,b)$ induced by pushing forward the uniform measure $\mathbb P \s H$ through the map $\psi \mapsto ( \Tr(P_{\psi} A), \Tr(P_{\psi} B))$ is
    \begin{align}
        f_{A,B}(a,b) 
        = \int_{\psi \in \mathbb P(\mathbb C^d)} \delta(a - \Tr(P_{\psi} A))\delta(b- \Tr(P_{\psi} B))\diff \mu(\psi).
    \end{align}
    Interestingly one can relate the joint probability density $f_{A,B}(a,b)$ for the pair of observables $A$ and $B$ to the probability density $f_{\alpha A + \beta B}$ associated to the linear combination $\alpha A + \beta B$. 
    To accomplish this, one utilizes the linearity of $\Tr(P_{\psi} \cdot)$ and the Laplace transform of the Dirac distributions $\delta(a - \Tr(P_{\psi} A))$ and $\delta(b- \Tr(P_{\psi} B))$ to obtain the joint Laplace transform of $f_{A,B}$:
    \begin{equation}
        \laplace\{f_{A,B}\}(\alpha,\beta) 
        = \int_{\psi \in \mathbb P(\mathbb C^d)} \exp(-\Tr(P_{\psi}(\alpha A + \beta B))\diff \mu(\psi).
    \end{equation}
    However, in order to calculate the above expression using the techniques introduced in this section, it appears necessary to derive an expression for the eigenvalues of $\alpha A + \beta B$ as a function of the pair $(\alpha, \beta) \in \mathbb R^{2}$. 
    If the operators $A$ and $B$ commute, $[A, B] = 0$, then this problem becomes straightforward as the eigenvalues of linear combinations of $A$ and $B$ are simply linear combinations of the eigenvalues of $A$ and $B$ respectively.
    If however, $A$ and $B$ do not commute, $[A, B] \neq 0$, then this problem becomes far too difficult, except in the case of low-dimensional Hilbert spaces where relatively simple formulas for roots of characteristic polynomials actually exist.
\end{rem}

\begin{exam}
    \label{exam:two_lin_indep_traceless_qubit_observables}
    The joint probability density for a pair of qubit expectation values, $A$ and $B$, in the sense of \cref{rem:multiple_observables}, can be found in \cite[Prop. 5]{zhang2022probability}. 
    Assuming $A$ and $B$ are both traceless and linearly independent qubit observables, the expression in \cite[Prop. 5]{zhang2022probability} simplifies to
    \begin{equation}
        f_{A,B}(a,b) = \frac{\mathrm{H}(1 - \omega_{A,B}(a,b))}{2 \pi\sqrt{\det(T_{A,B})(1 - \omega^2_{A,B}(a,b)) }},
    \end{equation}
    where $T_{A,B}$ is the invertible matrix
    \begin{equation}
        T_{A,B} = \begin{pmatrix}
            \braket{A,A}_{\text{HS}} & \braket{A,B}_{\text{HS}} \\
            \braket{B,A}_{\text{HS}} & \braket{B,B}_{\text{HS}}
        \end{pmatrix}
    \end{equation}
    and where $\omega_{A,B}(a,b) = \sqrt{(a,b)T_{A,B}^{-1}(a,b)^{T}}$.
    This formula was used previously to derive the joint density found in \cref{eq:XZ_realizable_density}.
\end{exam}

\section{A quantum de Finetti theorem}

The following theorem, originally \cite[Thm. II.2]{christandl2007one}, demonstrates that the marginals of states belonging to the irreducible representation space $\s H_{\mu + \nu}$ of $\U(d)$ (when viewed as bipartite states with respect to the inclusion of $\s H_{\mu + \nu}$ in $\s H_{\mu} \otimes \s H_{\nu}$) are well-approximated by convex combinations of coherent states.

\begin{lem}
    \label{lem:covariant_POVM}
    Let $\lambda$ be a highest weight in the irreducible representation space $\s H_{\lambda}$ of the compact Lie group $G$ containing the highest-weight vector $v_{\lambda} \in \s H_{\lambda}$.
    For each $g \in G$, let the twirled highest-weight vector be $v_{\lambda}^{g} = \grep_{\lambda}(g) v_{\lambda} \in \s H_{\lambda}$, and define the rank-one quasi-projection operator $E_{\lambda}^{g}$ by
    \begin{equation}
        E_{\lambda}^{g} \coloneqq \dim{\s H_{\lambda}} \ketbra{v_{\lambda}^{g}}.
    \end{equation}
    Then $E_{\lambda}^{g}$ is the density of a positive-operator valued measure with respect to the $G$-invariant Haar measure $\diff g$, i.e.
    \begin{equation}
        \int_{g \in \U(d)} \diff g E^{g}_{\lambda} = \ident_{\s H_{\lambda}}.
    \end{equation}
\end{lem}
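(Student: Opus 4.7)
The plan is to exploit Schur's lemma together with the left-invariance of the Haar measure on $G$. Let me write
\begin{equation}
    A_{\lambda} \coloneqq \int_{g \in G} \diff g \, E_{\lambda}^{g} = \dim(\s H_{\lambda}) \int_{g \in G} \diff g \, \grep_{\lambda}(g) \ketbra{v_{\lambda}} \grep_{\lambda}(g)^{*}.
\end{equation}
First I would verify that $A_{\lambda}$ commutes with every $\grep_{\lambda}(h)$ for $h \in G$. By a change of variables $g \mapsto h^{-1} g$ (which leaves $\diff g$ invariant since the Haar measure is bi-invariant on the compact group $G$), one finds $\grep_{\lambda}(h) A_{\lambda} \grep_{\lambda}(h)^{*} = A_{\lambda}$, equivalently $A_{\lambda} \grep_{\lambda}(h) = \grep_{\lambda}(h) A_{\lambda}$.

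Next, since $\grep_{\lambda}$ is an irreducible representation of $G$ on the complex Hilbert space $\s H_{\lambda}$, Schur's lemma (\cref{lem:schurs_lemma}) forces $A_{\lambda} = \alpha \ident_{\s H_{\lambda}}$ for some scalar $\alpha \in \mathbb C$. To pin down $\alpha$, I would compute the trace of both sides:
\begin{equation}
    \alpha \dim(\s H_{\lambda}) = \Tr(A_{\lambda}) = \dim(\s H_{\lambda}) \int_{g \in G} \diff g \, \Tr(\ketbra{v_{\lambda}^{g}}) = \dim(\s H_{\lambda}) \int_{g \in G} \diff g,
\end{equation}
using that $\grep_{\lambda}(g)$ is unitary (as $G$ is compact and $v_{\lambda}$ can be taken with unit norm, so $\norm{v_{\lambda}^{g}} = 1$). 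Since $\diff g$ is the normalized Haar measure, $\int_{G} \diff g = 1$, yielding $\alpha = 1$ and thus $A_{\lambda} = \ident_{\s H_{\lambda}}$.

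There is no real obstacle here beyond being careful about conventions. The only subtlety worth flagging is the implicit normalization of $v_{\lambda}$ to a unit vector and the normalization of the Haar measure $\diff g$ to total mass one: both are standing conventions in the surrounding text (see \cref{lem:projection_fixed_subspace} and \cref{exam:hw_covariant_povm}), so the prefactor $\dim(\s H_{\lambda})$ in the definition of $E_{\lambda}^{g}$ is precisely the right one to make the operator density integrate to $\ident_{\s H_{\lambda}}$. Positivity of $E_{\lambda}^{g}$ is automatic since each summand is a non-negative multiple of a rank-one projector.
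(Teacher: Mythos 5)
Your proof is correct and follows the same strategy the paper uses for the closely analogous \cref{lem:covariant_povm} in the main text: invoke Schur's lemma to deduce that the integrated operator is a scalar multiple of the identity, then fix the scalar by a trace computation using the normalization of the Haar measure and the unit norm of the twirled vector. You also correctly note the positivity of the density and the role of the $\dim(\s H_{\lambda})$ prefactor, which the paper establishes in the same way.
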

\begin{lem}
    \label{lem:mult_one_product_hwv}
    Let everything be defined as in \cref{lem:covariant_POVM} and let $G = \U(d)$ where the weights $\lambda$ can be identified with integer partitions with at most $d$ parts or equivalently Young diagrams with at most $d$ rows.

    If $\mu$ and $\nu$ are integer partitions, then the multiplicity of the representation $\s H_{\mu+\nu}$ inside the inner product representation $\s H_{\mu} \otimes \s H_{\nu}$ is exactly one and highest weight vector $v_{\mu+\nu}$ in $\s H_{\mu+\nu}$ can be identified with the product $v_{\mu} \otimes v_{\nu}$.
    In other words,
    \begin{equation}
        E_{\mu + \nu}^{g} = \frac{\dim(\s H_{\mu+\nu})}{\dim(\s H_{\mu})\dim(\s H_{\nu})} E_{\mu}^{g} \otimes E_{\nu}^{g}.
    \end{equation}
\end{lem}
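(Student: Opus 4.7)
The plan is to establish the lemma by first showing that $v_\mu\otimes v_\nu$ is, up to scalar, the unique highest weight vector of weight $\mu+\nu$ in $\s H_\mu \otimes \s H_\nu$, and then to translate this identification into the claimed operator identity between the POVM densities. Throughout, I will work at the Lie algebra level, using the weight decomposition of $\s H_\mu$ and $\s H_\nu$ with respect to the standard Cartan subalgebra $\mathfrak h \subset \mathfrak{gl}(d,\mathbb C)$ of diagonal matrices, and the positive root vectors corresponding to a fixed choice of Borel subgroup (e.g.\ upper triangular matrices), as set up in \cref{sec:highest_weights}.

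First, I would verify that $v_\mu\otimes v_\nu$ is a highest weight vector with weight $\mu+\nu$. Since the induced Lie algebra representation on the tensor product is $\arep_\mu \otimes \ident + \ident \otimes \arep_\nu$, any $H\in\mathfrak h$ acts on $v_\mu\otimes v_\nu$ by the scalar $\mu(H)+\nu(H)$, so the weight is $\mu+\nu$; similarly any positive root vector $X\in\mathfrak g_\alpha$ annihilates $v_\mu\otimes v_\nu$ because it annihilates both $v_\mu$ and $v_\nu$. This already ensures that the subrepresentation generated by $v_\mu\otimes v_\nu$ inside $\s H_\mu\otimes\s H_\nu$ is (isomorphic to) $\s H_{\mu+\nu}$ by \cref{thm:hwt_alg}, so that $\s H_{\mu+\nu}$ appears with multiplicity at least one.

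Next, I would prove that this multiplicity is exactly one by analysing the $(\mu+\nu)$-weight space of $\s H_\mu\otimes\s H_\nu$. Every weight of $\s H_\mu$ (resp.\ $\s H_\nu$) is dominated by $\mu$ (resp.\ $\nu$) in the partial order generated by the positive roots, and every weight of the tensor product is a sum of a weight from each factor. The only way such a sum can equal $\mu+\nu$ is if the two summands are exactly $\mu$ and $\nu$ respectively; since the weight spaces for $\mu$ in $\s H_\mu$ and for $\nu$ in $\s H_\nu$ are each one-dimensional (being the highest weight spaces of irreducible representations, with uniqueness guaranteed by \cref{thm:hwt_alg}), the $(\mu+\nu)$-weight space of $\s H_\mu\otimes\s H_\nu$ is also one-dimensional, spanned by $v_\mu\otimes v_\nu$. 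Since every copy of $\s H_{\mu+\nu}$ in the decomposition must contribute an independent highest weight vector to this space, the multiplicity is exactly one. This is the step I expect to be the technical core of the argument, though it is standard representation theory; the main thing to be careful about is to phrase the partial order argument cleanly rather than invoking the full machinery of the Littlewood--Richardson rule (which would also give the result, but is overkill).

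Finally, I would translate the identification $v_{\mu+\nu} = v_\mu\otimes v_\nu$ (choosing unit norm representatives, which is consistent as $\norm{v_\mu\otimes v_\nu} = \norm{v_\mu}\norm{v_\nu}=1$) into the operator statement. For any $g\in G$, the twirled vector satisfies
\begin{equation}
v^{g}_{\mu+\nu} = \grep_{\mu+\nu}(g)\,v_{\mu+\nu} = \grep_\mu(g)v_\mu \otimes \grep_\nu(g)v_\nu = v^{g}_\mu \otimes v^{g}_\nu,
\end{equation}
where in the middle equality I use that the restriction of $\grep_\mu\otimes\grep_\nu$ to the invariant subspace spanned by $v_\mu\otimes v_\nu$ coincides with $\grep_{\mu+\nu}$, by Schur's lemma (\cref{lem:schurs_lemma}) applied to the inclusion $\s H_{\mu+\nu}\hookrightarrow \s H_\mu\otimes\s H_\nu$. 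Taking rank-one projectors and multiplying by the appropriate dimensions then yields
\begin{equation}
E^{g}_{\mu+\nu} \;=\; \dim(\s H_{\mu+\nu})\,\ketbra{v^{g}_{\mu+\nu}} \;=\; \frac{\dim(\s H_{\mu+\nu})}{\dim(\s H_\mu)\dim(\s H_\nu)}\, E^{g}_\mu \otimes E^{g}_\nu,
\end{equation}
which is the claimed formula.
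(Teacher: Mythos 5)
Your proof is correct, and it fills in an argument that the paper itself leaves implicit (the lemma is stated without a proof). Your approach is the standard one: show $v_\mu\otimes v_\nu$ is a highest weight vector of weight $\mu+\nu$, then observe that the $(\mu+\nu)$-weight space of $\s H_\mu\otimes\s H_\nu$ is one-dimensional because every weight of $\s H_\mu$ is dominated by $\mu$ and every weight of $\s H_\nu$ is dominated by $\nu$, forcing the summands to be exactly $\mu$ and $\nu$; combined with the fact that no weight in the tensor product exceeds $\mu+\nu$, this makes every vector in that weight space a highest weight vector and pins the multiplicity to one. This is precisely the construction usually called the \emph{Cartan product}, and avoiding the Littlewood--Richardson machinery here is the right instinct.

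One small point of cleanliness: in the final step, what you actually need is not Schur's lemma but merely \emph{equivariance} of the inclusion $\s H_{\mu+\nu}\hookrightarrow\s H_\mu\otimes\s H_\nu$. Since the inclusion intertwines $\grep_{\mu+\nu}$ with $\grep_\mu\otimes\grep_\nu$ and sends $v_{\mu+\nu}$ to a unit scalar multiple of $v_\mu\otimes v_\nu$, it sends $\grep_{\mu+\nu}(g)v_{\mu+\nu}$ to the same scalar times $\grep_\mu(g)v_\mu\otimes\grep_\nu(g)v_\nu$; the ambient phase is irrelevant once you pass to rank-one projectors. Schur's lemma would be needed if you were trying to argue such an intertwiner exists or is unique, but here you already have the explicit inclusion, so equivariance alone suffices.
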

\begin{thm}
    Let everything be as in~\cref{lem:mult_one_product_hwv} and let $\rho_{\mu,\nu} \in \s S(\s H_{\mu + \nu}) \subseteq \s S(\s H_{\mu} \otimes \s H_{\nu})$ be a bipartite quantum state and let $\rho_{\mu} \in \s S(\s H_{\mu})$ be the partial trace over $\s H_{\nu}$ of $\rho_{\mu, \nu}$.
    Then there exists a probability measure, $m$, over $\U(g)$ and a convex combination of coherent states $C_{m}^{\mu} \in \s S(\s H_{\mu})$, such that 
    \begin{equation}
        \label{eq:marginal_norm_de_finetti}
        \norm{\rho_{\mu} - C_{m}^{\mu}} \leq 4\epsilon,
    \end{equation}
    where $\norm{\cdot}$ is the trace norm ($\norm{X} = \Tr(\sqrt{X^{*} X})$), and
    \begin{equation}
        \epsilon = 1 - \frac{\dim(\s H_{\nu})}{\dim(\s H_{\mu + \nu})}.
    \end{equation}
\end{thm}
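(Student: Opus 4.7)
The plan is to construct the probability measure $m$ directly from the covariant POVM of Lemma \ref{lem:covariant_POVM} applied to the representation $\s H_{\mu+\nu}$, and then reduce the desired trace-norm bound on the marginals to a trace-norm bound on the joint states via the data-processing inequality. Concretely, I would take
\[ \diff m(g) \coloneqq \Tr(E_{\mu+\nu}^{g} \rho_{\mu,\nu})\, \diff g, \]
which is a probability measure on $\U(d)$ by Lemma \ref{lem:covariant_POVM} since $\rho_{\mu,\nu}$ has unit trace. The first step is the observation that, under the identification $v_{\mu+\nu} = v_\mu \otimes v_\nu$ of Lemma \ref{lem:mult_one_product_hwv} combined with the $G$-equivariance $\grep_{\mu+\nu}(g)|_{\s H_{\mu+\nu}} = (\grep_\mu(g) \otimes \grep_\nu(g))|_{\s H_{\mu+\nu}}$, the twirled highest-weight vector factorises as $v_{\mu+\nu}^g = v_\mu^g \otimes v_\nu^g$ and hence $P_{\psi_{\mu+\nu}^g} = P_{\psi_\mu^g} \otimes P_{\psi_\nu^g}$ under the embedding into $\s H_\mu \otimes \s H_\nu$. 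Defining the auxiliary state $\tilde\rho \coloneqq \int \diff m(g)\, P_{\psi_{\mu+\nu}^g}$ on $\s H_{\mu+\nu}$, we obtain $\Tr_\nu \tilde\rho = C_m^\mu$, and the data-processing inequality for the trace norm under the partial trace channel yields
\[ \norm{\rho_\mu - C_m^\mu} \leq \norm{\rho_{\mu,\nu} - \tilde\rho}. \]

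The second step — the core of the proof — is to bound $\norm{\rho_{\mu,\nu} - \tilde\rho}$ by $4\epsilon$. I would try to establish the operator inequality $\tilde\rho \geq (1-\epsilon) \rho_{\mu,\nu}$ on $\s H_{\mu+\nu}$. Applying Lemma \ref{lem:mult_one_product_hwv} twice rewrites $\tilde\rho$ as
\[ \tilde\rho = \frac{\dim(\s H_{\mu+\nu})}{\dim(\s H_\mu)^{2}\dim(\s H_\nu)^{2}} \int \diff g\,(E_\mu^{g} \otimes E_\nu^{g})\, \rho_{\mu,\nu}\, (E_\mu^{g} \otimes E_\nu^{g}), \]
while Schur's lemma applied to the diagonal $G$-action on $\s H_{\mu+\nu}^{\otimes 2}$ determines the second moment $\int \diff g\, P_{\psi_{\mu+\nu}^{g}}^{\otimes 2}$ as an explicit linear combination of projectors onto isotypic subspaces, with coefficients controlled by the dimension ratios appearing in $\epsilon$. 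Given the operator inequality, the trace-norm bound follows by a standard rearrangement: writing $\tilde\rho - \rho_{\mu,\nu} = (\tilde\rho - (1-\epsilon)\rho_{\mu,\nu}) - \epsilon \rho_{\mu,\nu}$, the first summand is positive semidefinite with trace $\epsilon$, the second has trace norm $\epsilon$, and the triangle inequality gives $\norm{\tilde\rho - \rho_{\mu,\nu}} \leq 2\epsilon$ in the pure-state case; extension to arbitrary $\rho_{\mu,\nu}$ via purification (working in $\s H_{\mu+\nu} \otimes \s H_{\mu+\nu}$) and Fuchs–van de Graaf style fidelity arguments introduces at most a further factor of two, giving the claimed $4\epsilon$.

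The main obstacle I expect is precisely Step 2: pinning down the representation-theoretic constant in the operator inequality $\tilde\rho \geq (1-\epsilon)\rho_{\mu,\nu}$. The factor $\dim(\s H_\nu)/\dim(\s H_{\mu+\nu})$ should emerge as the eigenvalue of a certain Schur-lemma-produced operator on the largest isotypic component of $\s H_{\mu+\nu}^{\otimes 2}$ under the diagonal action of $G$, but identifying which isotypic component dominates, and verifying that the remaining components contribute errors bounded by $\epsilon$ rather than merely by some multiple of $\epsilon$, requires a careful decomposition. An alternative route would be to work purely with fidelities: use the Cauchy–Schwarz inequality to relate $\bra{v_\mu^g \otimes v_\nu^g}\rho_{\mu,\nu}\ket{v_\mu^g \otimes v_\nu^g}$ to the purity of $\rho_{\mu,\nu}$ via the identity $\int \diff g\, (E_\mu^g \otimes E_\nu^g) = \frac{\dim(\s H_\mu)\dim(\s H_\nu)}{\dim(\s H_{\mu+\nu})}\, \ident_{\s H_{\mu+\nu}}$, which is itself a direct consequence of Schur's lemma applied to the $G$-invariant integral restricted to the irreducible subspace $\s H_{\mu+\nu} \subseteq \s H_\mu \otimes \s H_\nu$. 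Either route ultimately reduces to the same Schur-lemma computation, which is the technical heart of the argument.
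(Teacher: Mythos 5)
Your setup is correct — the measure $\diff m(g) = \Tr(E_{\mu+\nu}^{g}\rho_{\mu,\nu})\diff g$ and the factorisation $v_{\mu+\nu}^{g} = v_{\mu}^{g}\otimes v_{\nu}^{g}$, with $\Tr_{\nu}\tilde\rho = C_{m}^{\mu}$, match the paper's construction. But the reduction via data-processing, $\norm{\rho_{\mu} - C_{m}^{\mu}} \leq \norm{\rho_{\mu,\nu} - \tilde\rho}$, is a dead end, because the right-hand side is simply not small. Note that $\tilde\rho = \dim(\s H_{\mu+\nu})\int \diff g\, P_{\psi_{\mu+\nu}^{g}}\,\rho_{\mu,\nu}\,P_{\psi_{\mu+\nu}^{g}}$ is the pinching of $\rho_{\mu,\nu}$ by the covariant rank-one POVM, and pinching can scramble a pure joint state almost completely even when $\epsilon$ is tiny. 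Concretely, take $G = \U(2)$, $\mu = (1)$, $\nu = (N)$, so $\epsilon = 1/(N+2) \to 0$, and take $\rho_{\mu,\nu}$ to be the mid-weight pure state $\ket{j,0}\bra{j,0}$ in $\s H_{\mu+\nu} \cong \mathrm{Sym}^{N+1}(\mathbb C^{2})$. Every spin-coherent state has overlap $|\braket{\psi^{g}, (j,0)}|^{2} = O(N^{-1/2})$ with $\ket{j,0}$, which forces $\braket{(j,0), \tilde\rho\,(j,0)} = O(N^{-1/2}) \to 0$ and hence $\norm{\rho_{\mu,\nu} - \tilde\rho} \to 2$. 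The same counterexample kills the proposed operator inequality $\tilde\rho \geq (1-\epsilon)\rho_{\mu,\nu}$: it would imply $\braket{(j,0), \tilde\rho\, (j,0)} \geq 1-\epsilon$, which fails. No Schur-lemma refinement will rescue this, because the obstruction is physical — the joint state retains entanglement and coherence that the joint separable ensemble $\tilde\rho$ cannot represent, and only the $\mu$-marginal forgets enough to be approximable.

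The paper's proof never passes through the joint state. It conditions on the measurement $\{E_{\nu}^{g}\}$ applied to the $\nu$ side, producing a posterior marginal $\rho_{\mu}^{g} = w_{g}^{-1}\Tr_{\nu}[(\ident_{\mu}\otimes E_{\nu}^{g})\rho_{\mu,\nu}]$ with probability density $w_{g}$, and then shows directly on $\s H_{\mu}$ that the defect $\rho_{\mu}^{g} - P_{\mu}^{g}\rho_{\mu}^{g}P_{\mu}^{g}$ integrates to something of trace norm $O(\epsilon)$. The key algebraic fact it uses — and which your route has no access to — is that $\int\diff g\,w_{g}\,P_{\mu}^{g}\rho_{\mu}^{g} = (1-\epsilon)\rho_{\mu}$, a relation intrinsic to the marginal that has no analogue at the level of the joint state. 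To make your approach work you would have to prove a statement about $\tilde\rho$ that is false; the correct move is to give up on the joint state entirely and work with the posterior marginals $\rho_{\mu}^{g}$.
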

\begin{proof}
    Here we present the same proof technique used in~\cite[Thm. II.2]{christandl2007one}.
    For ease of notation, let $P_{\lambda}^{g} = \ketbra{v_{\lambda}^{g}}$ be the rank-one projection operator satisfying $E_{\lambda}^{g} = \dim(\s H_{\lambda}) P_{\lambda}^{g}$.
    Now define $w_{g}$ by
    \begin{equation}
        w_{g} = \Tr[(\ident_{\mu} \otimes E_{\nu}^{g}) \rho_{\mu,\nu}]
    \end{equation}
    and the residual state $\rho_{\mu}^{g} \in \s S(\s H_{\mu})$ by
    \begin{equation}
        \rho_{\mu}^{g} = w_{g}^{-1}\Tr_{\nu}[(\ident_{\mu} \otimes E_{\nu}^{g}) \rho_{\mu,\nu}]
    \end{equation}
    such that
    \begin{equation}
        \rho_{\mu} = \int_{\U(d)} \diff g w_{g} \rho_{\mu}^{g}.
    \end{equation}
    Also, let the probability measure $m$ on $\U(d)$ be defined such that 
    \begin{align}
        C_{m}^{\mu}
        &= \int_{\U(d)} \diff m(g) P_{\mu}^{g},  \\
        &= \int_{\U(d)} \Tr[E_{\mu+\nu}^{g} \rho_{\mu,\nu}] P_{\mu}^{g} \diff g,\\
        &= \frac{\dim(\s H_{\mu+\nu})}{\dim(\s H_{\nu})} \int_{\U(d)} \Tr[(P_{\mu}^{g} \otimes E_{\nu}^{g}) \rho_{\mu,\nu}] P_{\mu}^{g} \diff g, \\
        &= \frac{\dim(\s H_{\mu+\nu})}{\dim(\s H_{\nu})} \int_{\U(d)} w_{g} \Tr[P_{\mu}^{g} \rho_{\mu}^{g}] P_{\mu}^{g} \diff g, \\
        &= \frac{\dim(\s H_{\mu+\nu})}{\dim(\s H_{\nu})} \int_{\U(d)} w_{g} P_{\mu}^{g} \rho_{\mu}^{g} P_{\mu}^{g} \diff g,
    \end{align}
    where the last equality follows because $P_{\mu}^{g}$ is rank-one.
    Now using the above integral expressions for $\rho_{\mu}$ and $C_{m}^{\mu}$ along with the definition of $\epsilon$, the difference $\rho_{\mu} - C_{m}^{\mu}$ can be expressed as
    \begin{align}
        \rho_{\mu} - C_{m}^{\mu}
        = \rho_{\mu} - \frac{\dim(\s H_{\nu})}{\dim(\s H_{\mu+\nu})} C_{m}^{\mu} - \epsilon C_{m}^{\mu}
        = \int_{\U(d)} \diff g w_{g} \left( \rho_{\mu}^{g} - P_{\mu}^{g} \rho^{g}_{\mu} P_{\mu}^{g} \right) - \epsilon C_{m}^{\mu}.
    \end{align}
    At this stage, note that the operator $\rho_{\mu}^{g} - P_{\mu}^{g} \rho^{g}_{\mu} P_{\mu}^{g}$ need not be positive semidefinite.
    Nevertheless, by letting the maximal projector orthogonal to $P_{\mu}^{g}$ be denoted by $P_{\mu}^{\neg g} = \ident_{\mu} - P_{\mu}^{g}$, one obtains
    \begin{equation}
        \rho_{\mu}^{g} - P_{\mu}^{g} \rho^{g}_{\mu} P_{\mu}^{g} = \rho_{\mu}^{g} P_{\mu}^{\neg g} + P_{\mu}^{\neg g} \rho_{\mu}^{g} - P_{\mu}^{\neg g} \rho^{g}_{\mu} P_{\mu}^{\neg g}.
    \end{equation}
    Then, by the definition of $w_{g} \rho_{\mu}^{g}$ given above,
    \begin{equation} 
        \int_{\U(d)} \diff g w_{g} P_{\mu}^{g} \rho_{\mu}^{g} = \int_{\U(d)} \diff g \Tr_{\nu}[(P_{\mu}^{g} \otimes E_{\nu}^{g}) \rho_{\mu,\nu}] = \frac{\dim(\s H_{\nu})}{\dim(\s H_{\mu + \nu})} \rho_{\mu},
    \end{equation}
    which, in turn, implies the contribution of the $P_{\mu}^{\neg g} \rho_{\mu}^{g}$ term is of order $\epsilon$:
    \begin{equation}
        \int_{\U(d)} \diff g w_{g} P_{\mu}^{\neg g} \rho_{\mu}^{g} = \epsilon \rho_{\mu}.
    \end{equation}
    In a similar manner, one can also prove the contribution from the $\rho_{\mu}^{g}P_{\mu}^{\neg g}$ is the same, as
    \begin{equation}
        \int_{\U(d)} \diff g w_{g} \rho_{\mu}^{g} P_{\mu}^{\neg g} = \epsilon \rho_{\mu}.
    \end{equation}
    Finally, using the convexity of the trace norm and cyclicity of the trace yields
    \begin{equation}
        \norm{\int_{\U(d)} \diff g w_{g} P_{\mu}^{\neg g} \rho_{\mu}^{g} P_{\mu}^{\neg g}} \leq \int_{\U(d)} \diff g w_{g} \Tr(P_{\mu}^{\neg g} \rho_{\mu}^{g} P_{\mu}^{\neg g}) = \epsilon \Tr(\rho_{\mu}).
    \end{equation}
    Altogether,
    \begin{equation}
        \norm{\rho_{\mu} - C_{m}^{\mu}} \leq 3 \epsilon \Tr(\rho_{\mu}) + \epsilon \Tr(C_{m}^{\mu}) = 4 \epsilon.
    \end{equation}
\end{proof}

\section{The Laplace principle}
\label{sec:laplace_principle}

An alternative, yet ultimately equivalent approach, to the subject of large deviation theory is to consider it as a generalization of Laplace's method for approximating integrals of exponentials of bounded continuous functions $g : X \to \mathbb R$ over an interval $X = [a,b] \subset \mathbb R$.
Recall that Laplace's method in this setting yields the expression
\begin{equation}
    \lim_{n \to \infty} \frac{1}{n} \ln \int_{a}^{b} \diff x \exp(n g(x)) = \max_{a \leq x \leq b} g(x).
\end{equation}
The following condition generalizes this idea to the setting where (i) the domain of integration, $X$, is more general, and (ii) the measure over $X$ used in the integration depends on the value of $n$.
\begin{defn}[Laplace Principle]
    \label{defn:laplace_principle}
    A sequence of probability measures $(\mu_n : \borel{X} \to [0,1])_{n \in \mathbb N}$ on a standard Borel space $(X, \borel{X})$ satisfies the \defnsty{Laplace principle} with rate function $I : X \to [0, \infty]$ if for all bounded continuous functions $g \in \boundcont{X}$,
    \begin{equation}
        \lim_{n \to \infty} \frac{1}{n} \ln \int_{x \in X} \mu_{n}(\diff x) \exp(n g(x)) = \sup_{x \in X} (g(x) - I(x)).
    \end{equation}
\end{defn}
That the large deviation principle implies the Laplace principle (with the same rate function) is known as \citeauthor{varadhan1966asymptotic}'s integral lemma~\cite{varadhan1966asymptotic} (see also \cite[Thm. 1.2.1]{dupuis2011weak} or \cite[Sec. 4.3]{dembo2010large}). 
Also note that a converse to \citeauthor{varadhan1966asymptotic}'s lemma also holds~\cite[Thm. 1.2.3]{dupuis2011weak}.
In order to see the connection between these two approaches more carefully, we will first need to better understand how the rate function is directly related to the asymptotics of the sequence of probability measures $\seq{\mu_n}$.
\begin{defn}
    \label{defn:gcgf}
    Let $(X, \borel{X})$ be a standard Borel space, let $\mu : \borel{X} \to [0,1]$ be a probability measure, and let $g \in X \to \mathbb R$ be a measurable function.
    Define the \defnsty{(generalized) cumulant generating function} for $\mu$, denoted by $\Lambda$, as
    \begin{equation}
        \Lambda(g) \coloneqq \ln \int_{X} \mu(\diff x) \exp(g(x)).
    \end{equation}
    If $\seq{\mu_n : \borel{X} \to [0,1]}$ is a sequence of probability measure on $X$, then define the \defnsty{(generalized) regularized\footnote{Instead of the adjective \textit{regularized}, some authors prefer the adjective \textit{scaled} or \textit{upper-limiting}~\cite{dembo2010large, touchette2011basic}.} cumulant generating function} as
    \begin{equation}
        \Lambda_{n}(g) \coloneqq \frac{1}{n}\ln \int_{X} \mu_n(\diff x) \exp(n g(x)),
    \end{equation}
    and let $\Lambda_{\infty}(g)$ be the result of the following limit, if it exists:
    \begin{equation}
        \label{eq:asymptotic_cumulant}
        \Lambda_{\infty}(g) = \lim_{n \to \infty} \Lambda_{n}(g).
    \end{equation}
\end{defn}
\begin{rem}
    The function $\Lambda$ defined above is referred as the \textit{generalized} cumulant generating function simply because if $g_1 : X \to \mathbb R$ and $g_2 : X \to \mathbb R$ are measurable functions corresponding to independent random variables, $X_1$ and $X_2$, meaning for all regions $\Delta_1, \Delta_2 \in \borel{\mathbb R}$,
    \begin{equation}
        ((g_1,g_2)_{*} \mu) (\Delta_1, \Delta_2) = ((g_1)_{*} \mu) (\Delta_1)((g_2)_{*} \mu) (\Delta_2),
    \end{equation}
    or perhaps expressed more commonly as,
    \begin{equation}
        p(X_1 \in \Delta_1, X_2 \in \Delta_2) = p(X_1 \in \Delta_2)p(X_1 \in \Delta_2),
    \end{equation}
    then $\Lambda$ is \textit{cumulative} in the sense that
    \begin{equation}
        \Lambda(g_1 + g_2) = \Lambda(g_1) + \Lambda(g_2).
    \end{equation}
    Moreover, if the special case where $X$ happens to be an vector space and the measurable function $g : X \to \mathbb R$ is linear, i.e.
    \begin{equation}
        g(x) = \langle\lambda, x\rangle
    \end{equation}
    for some $\lambda$, then the cumulant function $\Lambda(\lambda) \coloneqq \Lambda(\langle \lambda, \cdot \rangle)$ becomes the usual cumulant generating function for the probability measure $\mu$ (see \cite[Pg. 26]{dembo2010large}).
\end{rem}
The next result, known as Bryc's theorem, expresses conditions under which it becomes possible to prove that a sequence probability measures satisfies the large deviation principle, and moreover what the rate function must be equal to~\cite[Thm. 4.4.2]{dembo2010large}.
\begin{thm}[Bryc's theorem]
    \label{thm:bryc}
    Let $(X, \borel{X})$ be a standard Borel space.
    Let $\seq{\mu_{n} : \borel{X} \to [0,1]}$ be a sequence of probability measures that is exponentially tight (\cref{defn:exp_tight}).
    If for all bounded, continuous functions, $g \in X \to \mathbb R \in \boundcont{X}$, the limit $\Lambda_{\infty}(g)$ of the regularized cumulant generating function (see \cref{eq:asymptotic_cumulant}) for $\seq{\mu_n}$ exists, then $\seq{\mu_n}$ satisfies the large deviation principle with rate function
    \begin{equation}
        I(x) \coloneqq \sup_{g \in \boundcont{X}} ( g(x) - \Lambda_{\infty}(g) ).
    \end{equation}
    Furthermore, for every $g \in \boundcont{X}$,
    \begin{equation}
        \label{eq:bryc_lft}
        \Lambda_{\infty}(g) = \sup_{x \in X} ( g(x) - I(x) ).
    \end{equation}
\end{thm}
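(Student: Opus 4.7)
The proposal is to verify that $I$ is a rate function, then establish the LDP upper and lower bounds separately, and finally derive the Laplace-type identity (\ref{eq:bryc_lft}) as an immediate consequence of the LDP combined with Varadhan's integral lemma (which already connects \Cref{defn:large_deviation_principle} and \Cref{defn:laplace_principle} in this appendix). Throughout, exponential tightness is the handle that lets us control mass outside compact sets, while the existence of $\Lambda_{\infty}(g)$ for every $g \in \boundcont{X}$ is what feeds the Chebyshev-style estimates.

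\textbf{Step 1: $I$ is a rate function.} Non-negativity is immediate by choosing $g \equiv 0$, since $\Lambda_{n}(0) = 0$ for all $n$ and hence $\Lambda_{\infty}(0) = 0$, so $I(x) \geq 0 - 0 = 0$. Lower-semicontinuity follows because $I$ is a pointwise supremum of the continuous functions $x \mapsto g(x) - \Lambda_{\infty}(g)$. For compact level sets, given $c \in [0,\infty)$ pick via exponential tightness a compact $K_{c+1}$ with $\limsup_{n} \tfrac{1}{n}\ln \mu_{n}(X \setminus K_{c+1}) < -(c+1)$. Approximating the indicator of $X \setminus K_{c+1}$ from above by a bounded continuous $g$ that vanishes on $K_{c+1}$ and equals a large constant $M$ off a neighbourhood, a short computation shows $I(x) > c$ for every $x \notin K_{c+1}$, so $L_{I}(c) \subseteq K_{c+1}$ and is closed by lower-semicontinuity, hence compact.

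\textbf{Step 2: LDP upper bound.} By exponential tightness, it suffices to prove the upper bound on compact sets $K$. For any $g \in \boundcont{X}$ and any closed $\Delta \subseteq X$, Chebyshev's inequality yields
\begin{equation}
    \mu_{n}(\Delta) \;\leq\; \exp\!\bigl(-n \inf_{y \in \Delta} g(y)\bigr) \int_{X} \mu_{n}(\diff x)\exp(n g(x)),
\end{equation}
so taking $\tfrac{1}{n}\ln$ and $n \to \infty$ gives $\limsup_{n} \tfrac{1}{n}\ln \mu_{n}(\Delta) \leq -\inf_{y \in \Delta}(g(y) - \Lambda_{\infty}(g))$. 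For a compact $K$, given $\delta > 0$, cover $K$ by finitely many open neighbourhoods $U_{x_1},\dots,U_{x_r}$ on each of which a chosen $g_{x_i} \in \boundcont{X}$ realises $\inf_{y \in U_{x_i}}(g_{x_i}(y) - \Lambda_{\infty}(g_{x_i})) \geq \min\{I(x_i), 1/\delta\} - \delta$; applying the Chebyshev bound on each $U_{x_i}$ and using $\mu_{n}(K) \leq \sum_{i} \mu_{n}(U_{x_i})$ together with the standard ``maximum over finitely many terms'' large-deviation identity gives $\limsup_{n} \tfrac{1}{n}\ln \mu_{n}(K) \leq -\inf_{x \in K} I(x) + \delta$. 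Letting $\delta \to 0$ completes the upper bound.

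\textbf{Step 3: LDP lower bound (main obstacle).} This is where the real work lies because we must convert the \emph{existence} of $\Lambda_{\infty}$ into concentration statements of $\mu_{n}$ near a designated point. Fix an open $O$ and $x \in O$; assume $I(x) < \infty$ (else there is nothing to show). The plan is a change-of-measure argument: for $\delta > 0$ choose $g_{\delta} \in \boundcont{X}$ with $g_{\delta}(x) - \Lambda_{\infty}(g_{\delta}) \geq I(x) - \delta$, define the tilted probability measures
\begin{equation}
    \tilde\mu_{n}(\diff y) \;\coloneqq\; \frac{\exp(n g_{\delta}(y))}{\int \exp(n g_{\delta})\diff\mu_{n}}\mu_{n}(\diff y),
\end{equation}
and show $\tilde\mu_{n}$ concentrates near $x$. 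Concretely, the tilted family is itself exponentially tight (because the denominator has known asymptotics $n^{-1}\ln \to \Lambda_{\infty}(g_{\delta})$) and one applies the upper bound of Step 2, in a slightly generalised form, to the closed set $X \setminus U$ for a small open neighbourhood $U$ of $x$ contained in $O$; the tilt is arranged so that $\tilde\mu_{n}(X \setminus U) \to 0$. Unwinding the definition of $\tilde\mu_{n}$ yields $\liminf_{n} \tfrac{1}{n}\ln \mu_{n}(U) \geq \Lambda_{\infty}(g_{\delta}) - \sup_{U} g_{\delta}$, which using continuity of $g_{\delta}$ and $U \ni x$ gives $\geq -I(x) + 2\delta$; letting $U \downarrow \{x\}$ and $\delta \to 0$ finishes the lower bound. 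The subtle point, and the reason this step is hardest, is ensuring that the tilt really does push mass onto a prescribed neighbourhood of $x$ without an a priori LDP for $\tilde\mu_{n}$; this is accomplished by bootstrapping from Step 2 applied to the tilted sequence.

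\textbf{Step 4: Equation (\ref{eq:bryc_lft}).} With the LDP in hand, Varadhan's integral lemma (the forward direction of the equivalence between \Cref{defn:large_deviation_principle} and \Cref{defn:laplace_principle}) immediately gives, for every $g \in \boundcont{X}$, the identity $\Lambda_{\infty}(g) = \sup_{x \in X}(g(x) - I(x))$, which is precisely (\ref{eq:bryc_lft}). As a sanity check, the $\geq$ direction is transparent from the definition of $I$ (take the supremum inside), and the $\leq$ direction is the genuine content supplied by the LDP upper bound together with exponential tightness.
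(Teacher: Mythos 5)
The paper provides no proof of this theorem; it merely cites Dembo--Zeitouni, Dupuis--Ellis, and Bryc's original paper, so your sketch must stand on its own merits. Steps 1, 2, and 4 are essentially the standard argument and are sound, modulo a small cosmetic fix in Step 1 (you should choose the Urysohn function $g$ so that $g(x)$ itself equals the large constant $M = c+1$ for the particular $x \notin K_{c+1}$ under consideration, not merely ``off a neighbourhood'' of $K_{c+1}$; otherwise a point just outside $K_{c+1}$ sees only a small value of $g$).

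Step 3 has a genuine gap, and it is precisely at the sentence ``the tilt is arranged so that $\tilde\mu_n(X \setminus U) \to 0$.'' To apply your Step~2 upper bound to the tilted sequence $\seq{\tilde\mu_n}$ and conclude that $\tilde\mu_n(X\setminus U) \to 0$, you would need the tilted rate function
\begin{equation*}
    \tilde I(y) \;=\; I(y) - \bigl(g_{\delta}(y) - \Lambda_{\infty}(g_{\delta})\bigr)
\end{equation*}
to have a strictly positive infimum over $X \setminus U$. But your choice of $g_{\delta}$ only guarantees $\tilde I(x) \leq \delta$; it says nothing about whether $\tilde I$ stays bounded away from zero off $U$. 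In general there may be points $y \notin U$ with $\tilde I(y)$ arbitrarily small (or zero), in which case the tilted measure concentrates elsewhere and the chain of inequalities that needs $\liminf_{n}\frac{1}{n}\ln\tilde\mu_{n}(U) = 0$ breaks down. Bootstrapping from Step~2 applied to the tilted sequence does not by itself force concentration at $x$, and that is exactly the hard content of the lower bound.

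The standard proof (Dembo--Zeitouni, Lemma~4.4.6) sidesteps tilting entirely and is substantially simpler: for $x$ in the open set $O$ and a constant $c > 0$, choose by Urysohn a continuous $g_{c}: X \to [-c,0]$ with $g_{c}(x) = 0$ and $g_{c} \equiv -c$ on $X \setminus O$. Splitting the integral over $O$ and its complement gives
\begin{equation*}
    \Lambda_{n}(g_{c}) \;\leq\; \frac{1}{n}\ln\bigl(\mu_{n}(O) + e^{-nc}\bigr),
\end{equation*}
and, since the limit $\Lambda_{\infty}(g_{c})$ exists, passing to the subsequence along which $\frac{1}{n}\ln\mu_{n}(O)$ attains its $\liminf =: -L$ yields $\Lambda_{\infty}(g_{c}) \leq -\min(L, c)$. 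Hence $I(x) \geq g_{c}(x) - \Lambda_{\infty}(g_{c}) \geq \min(L, c)$, and letting $c \to \infty$ gives $L \leq I(x)$, i.e., the LDP lower bound. This uses crucially that $\Lambda_{\infty}$ is a genuine \emph{limit}, not just a limsup, which is also where your Step~3 would have to lean if repaired. You should replace the tilting heuristic with this direct Urysohn argument.
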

\begin{proof}
    For a proof, see \cite[Thm. 4.4.2]{dembo2010large}, \cite[Thm. 1.3.8]{dupuis2011weak}, or the \citeauthor{bryc1990large}'s original paper~\cite{bryc1990large}.
\end{proof}
In light of \cref{thm:bryc}, specifically \cref{eq:bryc_lft}, it makes sense to formalize the following transformation of rate functions.
\begin{defn}
    \label{defn:glft}
    Let $X$ be a set, let $I : X \to [0, \infty]$ be a function and let $g \in \boundcont{X}$ be a bounded continuous real-valued function on $X$.
    We define the \defnsty{(generalized) Fenchel-Legendre transform} of $I$ by
    \begin{equation}
        I^{*}(g) \coloneqq \sup_{x \in X} (g(x) - I(x)).
    \end{equation}
\end{defn}
\begin{rem}
    If the set $X$ is a vector space and the function $g$ is taken to be a linear function on $X$, i.e. $g(x) = \langle \lambda, x \rangle$, then the map sending $\lambda$ to
    \begin{equation}
        I^{*}(\langle \lambda, \cdot \rangle) = \sup_{x \in X} (\langle \lambda, x \rangle - I(x)),
    \end{equation}
    is better known as the \textit{Fenchel-Legendre transform} of $I$~\cite{dembo2010large}.
    When the function $I$ is additionally convex, then $I^{\ast}$ is known as the \textit{Legendre} transform of $I$.
\end{rem}
\begin{rem}
    Given definitions for the regularized cumulant generating function (\cref{defn:gcgf}) and the generalized Fenchel-Legendre transform (\cref{defn:glft}), we see that a sequence of probability measures $\seq{\mu_n : \borel{X} \to [0,1]}$ satisfies the Laplace principle with rate function $I : X \to [0, \infty]$ if and only if the limit of the regularized cumulant generating function for $\seq{\mu_n}$ exists and equals the Fenchel-Legendre transform of $I$ for all bounded continuous real-valued functions on $X$:
    \begin{equation}
        \forall g \in \boundcont{X} : \Lambda_{\infty}(g) = \lim_{n \to \infty} \Lambda_{n}(g) = I^{*}(g).
    \end{equation}
\end{rem}

\section{Asymptotic Born rule}

The following result can be interpreted as saying that if you measure an unknown quantum source $n$ times with binary projective measurement $\{P, I - P\}$ and obtain the outcome associated to $P$ each time, then in the limit of large $n$, the probability of obtaining outcome $Q$ after applying the projective measurement $\{Q, I-Q\}$ is given by the Born rule for the density operator $\rho_P$ corresponding to the normalization of $P$.
\begin{lem}
    \label{lem:large_sequence_of_projectors}
    Let $P, Q \in \End(\mathbb C^d)$ be Hermitian operators where $P^2 = P$, let $\rho_P = P / \Tr(P)$ be a density operator associated to $P$ and let $n \in \mathbb N$.
    Then
    \begin{equation}
        \Tr(\rho_P Q) = \lim_{n \to \infty} \frac{\Tr(\unisym_{n+1}(P^{\otimes n} \otimes Q))}{\Tr(\unisym_{n}(P^{\otimes n}))}.
    \end{equation}
\end{lem}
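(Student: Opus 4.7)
The starting point for my proposal is to reduce both the numerator and denominator to integrals over the projective space $\proj(\mathbb C^{d})$, exploiting the well-known identity
\begin{equation}
    \projsym_{n} = \dim(\mathrm{Sym}^{n}(\mathbb C^{d})) \int_{\psi \in \proj(\mathbb C^{d})} \diff \mu(\psi)\, P_{\psi}^{\otimes n},
\end{equation}
where $\mu$ is the $\U(d)$-invariant Haar probability measure on $\proj(\mathbb C^{d})$ (this is the same identity used in the proof of \cref{lem:de_finetti_realizable}). Substituting into both $\Tr(\unisym_{n}(P^{\otimes n}))$ and $\Tr(\unisym_{n+1}(P^{\otimes n} \otimes Q))$, the normalization built into $\unisym_{n} = \projsym_{n}/\dim(\mathrm{Sym}^{n}(\mathbb C^{d}))$ cancels the dimensional prefactor, so that the ratio simplifies to
\begin{equation}
    \frac{\Tr(\unisym_{n+1}(P^{\otimes n} \otimes Q))}{\Tr(\unisym_{n}(P^{\otimes n}))} = \frac{\int \diff \mu(\psi)\, \braket{\psi, P\psi}^{n} \braket{\psi, Q\psi}}{\int \diff \mu(\psi)\, \braket{\psi, P\psi}^{n}}.
\end{equation}
This identifies the right-hand side as the expected value of $\psi \mapsto \braket{\psi, Q\psi}$ with respect to the tilted probability measure with density proportional to $\braket{\psi, P\psi}^{n}$.

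The second step is a Laplace-type concentration argument. Because $P$ is a projection, $0 \leq \braket{\psi, P\psi} \leq 1$ with equality if and only if $\psi \in \proj(V)$, where $V = \mathrm{image}(P) \subseteq \mathbb C^{d}$ has dimension $r = \Tr(P)$. Writing $\mathbb C^{d} = V \oplus V^{\perp}$, I would parameterize a generic unit vector as $v = \cos\theta\, \alpha + \sin\theta\, \beta$ with $\alpha \in V$, $\beta \in V^{\perp}$ unit vectors and $\theta \in [0, \pi/2]$, so that $\braket{\psi, P\psi} = \cos^{2}\theta$. The Haar measure $\mu$ then factors as the product of uniform measures on $\proj(V)$ and $\proj(V^{\perp})$ together with a Jacobian proportional to $\sin^{2(d-r)-1}\theta \cos^{2r-1}\theta \diff \theta$ (a standard computation using the isomorphism between the sphere and the associated join).

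The integrand $\cos^{2n}\theta$ forces concentration at $\theta = 0$ as $n \to \infty$: after integrating out $\theta$ and $\beta$ (both yield $n$-independent Gamma-function factors that cancel between numerator and denominator), what remains of the ratio is
\begin{equation}
    \int_{\alpha \in \proj(V)} \diff \mu_{V}(\alpha)\, \braket{\alpha, Q \alpha} + O(n^{-1}),
\end{equation}
where $\mu_{V}$ is the Haar probability measure on $\proj(V)$. The leading term equals $\Tr(P Q P)/\Tr(P) = \Tr(P Q)/\Tr(P) = \Tr(\rho_{P} Q)$, because $P$ acts as the identity on $V$.

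The only real work is bookkeeping the concentration argument cleanly; I would either carry it out by the change of variables above (so that the remaining $\theta$-integrals have explicit closed forms in terms of Beta functions) or, more slickly, apply dominated convergence after rescaling $\theta \mapsto \theta/\sqrt{n}$ and exchanging limit with integration. The main obstacle, such as it is, will be simply separating the cross terms involving $\braket{\psi, Q\psi} = \cos^{2}\theta \braket{\alpha, Q\alpha} + \sin^{2}\theta \braket{\beta, Q\beta} + \sin\theta\cos\theta(\braket{\alpha, Q\beta} + \mathrm{c.c.})$ and verifying that only the first contributes in the limit — the other two pick up extra factors of $\sin^{2}\theta$ or $\sin\theta\cos\theta$ that suppress them by at least $O(n^{-1})$ relative to the leading term.
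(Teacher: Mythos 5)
Your proof is correct, but it takes a genuinely different route from the paper's. The paper argues algebraically: it notes $\Tr(\unisym_{n}(P^{\otimes n})) = \Tr(\unisym_{n+1}(P^{\otimes n}\otimes \ident))$ and evaluates $\Tr(\unisym_{n+1}(P^{\otimes n}\otimes Q))$ in closed form (by counting symmetric-subspace dimensions), which yields the exact finite-$n$ expression
\begin{equation*}
    \frac{\Tr(\unisym_{n+1}(P^{\otimes n}\otimes Q))}{\Tr(\unisym_{n}(P^{\otimes n}))} = \frac{\Tr(Q) + n\,\Tr(\rho_P Q)}{d+n},
\end{equation*}
from which the limit is read off. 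Your route --- rewriting both traces as Haar integrals over $\proj(\mathbb C^d)$, passing to polar coordinates adapted to $\mathbb C^d = V \oplus V^{\perp}$, and invoking concentration of $\cos^{2n}\theta$ at $\theta = 0$ --- is also sound, and had you pushed the Beta integrals in $\theta$ to completion rather than stopping at an $O(n^{-1})$ estimate, you would recover exactly the same rational expression. Two remarks that would sharpen your write-up. First, the cross terms $\sin\theta\cos\theta\bigl(\braket{\alpha,Q\beta}+\braket{\beta,Q\alpha}\bigr)$ do not merely contribute at subleading order: they vanish \emph{identically} after integrating $\beta$ over $\proj(V^{\perp})$, since the measure is invariant under $\beta \mapsto -\beta$; only the $\sin^2\theta\braket{\beta,Q\beta}$ term contributes a genuine $O(n^{-1})$ correction, and it accounts for the $\Tr(Q) - \Tr(PQ)$ piece in the exact formula. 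Second, the phrase ``$n$-independent Gamma-function factors'' is loose --- the $\theta$-integrals produce $n$-dependent ratios of Gamma functions; it is the \emph{ratio} that converges. You also tacitly need $P \neq 0$ (for $\rho_P$ to exist) and $P \neq \ident$ (for $V^{\perp}$ to be nontrivial); the latter degenerate case is trivially true and should be dispatched separately. The trade-off between the two proofs: the paper's algebraic route exposes the exact finite-$n$ formula, which it exploits in the following remark to derive a $(d+n)^{-1}$ lower bound; your Laplace argument is more conceptual and generalizes more readily, but discards that finite-$n$ information unless you carry out the Beta integrals explicitly.
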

\begin{proof}
    A direct calculation yields
    \begin{equation}
        \Tr(\unisym_{n+1}(P^{\otimes n} \otimes Q)) = \frac{(d-1)!}{(n+d)!} \frac{(n+d_P-1)!}{(d_P - 1)!} \left[ \Tr(Q) + n \Tr(\rho_P Q)  \right].
    \end{equation}
    As a sanity check, note when $n = 0$, $\Tr(\unisym_{1}(Q)) = \frac{1}{d}\Tr(Q)$, and when $Q = I$, 
    \begin{equation}
        \Tr(\unisym_{n+1}(P^{\otimes n} \otimes I)) = \frac{(d-1)!}{(n+d)!} \frac{(n+d_P-1)!}{(d_P - 1)!}(d+n) = \frac{\binom{n+d_P-1}{n}}{\binom{n+d-1}{n}},
    \end{equation}
    as expected.
    In any case, since $\Tr(\unisym_{n}(P^{\otimes n})) = \Tr(\unisym_{n+1}(P^{\otimes n} \otimes I))$, we conclude
    \begin{equation}
        \label{eq:exact_probability_conditioned_on_repeated_projectors}
        \frac{\Tr(\unisym_{n+1}(P^{\otimes n} \otimes Q))}{\Tr(\unisym_{n}(P^{\otimes n}))} = \frac{\Tr(Q) + n \Tr(\rho_P Q) }{d+n}.
    \end{equation}
    In the limit as $n \to \infty$, only the $\Tr(\rho_P Q)$ remains and thus the claim holds.
\end{proof}
\begin{rem}
    Note that if the operator $Q$ in \cref{lem:large_sequence_of_projectors} is a non-zero projective effect (meaning $0 < Q \leq I$ and $Q^{2} = Q$), then the expression in \cref{eq:exact_probability_conditioned_on_repeated_projectors} is minimized when $Q$ is both rank one and orthogonal to the density operator $\rho_{P}$ in which case the minimum probability assignable to the effect $Q$ conditioned on $n$ repeated observations of the effect $P$ is given by $(d+n)^{-1}$, i.e.
    \begin{equation}
        \frac{\Tr(\unisym_{n+1}(P^{\otimes n} \otimes Q))}{\Tr(\unisym_{n}(P^{\otimes n}))} \geq \frac{1}{d+n}.
    \end{equation}
    This bound is consistent with the claim made by \citeauthor{blume2010optimal} that the minimum probability that can be assigned to an element of a $d$-outcome measurement after $n$ trials is $(d+n)^{-1}$~\cite[Eq. (2)]{blume2010optimal}.
\end{rem}

\end{document}